\documentclass[%
aps,
twocolumn,
superscriptaddress,
amsmath,amssymb,
pra,
floatfix,
longbibliography,
10pt
]{revtex4-2}
\usepackage{url}
\usepackage{graphicx}
\usepackage{placeins}
\usepackage{dcolumn}
\usepackage{bm}
\usepackage{amsmath,amsfonts,amssymb}
\usepackage{mathtools}
\usepackage[table]{xcolor}

\usepackage{silence}
\usepackage{physics}

\usepackage{dsfont}
\usepackage{tikz}
\usepackage{pgfplots}
\usepackage{enumerate}
\usepackage{subfigure}
\usepackage{bbold}
\usepackage{makecell}
\usepackage{array}
\usepackage[table]{xcolor}
\usepackage{colortbl}
\usepackage{xcolor}
\usepackage{tabularray}
\usepackage{amssymb}
\usepackage{ytableau}

\usepackage{tikz-cd}
\usepackage{tikz-qtree}
\usetikzlibrary{shadows.blur, shadings, arrows.meta, tikzmark, positioning, matrix}

\usepackage[colorlinks, linkcolor=red, anchorcolor=green, citecolor=blue]{hyperref}

\usepackage{algorithmic}
\usepackage[ruled,linesnumbered]{algorithm2e}

\SetCommentSty{mycommfont}

\makeatletter
\def\pre@bibdata{MTv02Notes}
\newcommand{\thickhline}{%
    \noalign {\ifnum 0=`}\fi \hrule height 1pt
    \futurelet \reserved@a \@xhline
}
\newcolumntype{"}{@{\hskip\tabcolsep\vrule width 1pt\hskip\tabcolsep}}
\makeatother

\makeatletter

\newcommand{\AppendixTocStart}{%
  \let\l@part\AppendixTocSavedPart
  \let\l@section\AppendixTocSavedSection
  \let\l@subsection\AppendixTocSavedSubsection
  \let\l@subsubsection\AppendixTocSavedSubsubsection
}

\newcommand{\appendixtableofcontents}{%
  \begingroup

  \setcounter{tocdepth}{2}%

  \let\AppendixTocSavedPart\l@part
  \let\AppendixTocSavedSection\l@section
  \let\AppendixTocSavedSubsection\l@subsection
  \let\AppendixTocSavedSubsubsection\l@subsubsection

  \let\l@part\@gobbletwo
  \let\l@section\@gobbletwo
  \let\l@subsection\@gobbletwo
  \let\l@subsubsection\@gobbletwo

  \let\appendix\relax

  \begin{center}
    {\small\bfseries CONTENTS}
  \end{center}

  \vspace{0.5cm}

  \@starttoc{toc}%

  \endgroup
}

\makeatother

\usepackage{multirow}
\usepackage{tabu}
\usepackage{textcomp,booktabs}
\usepackage{colortbl}
\usepackage[T1]{fontenc}
\definecolor{mygray}{gray}{0.9}
\definecolor{mypink}{rgb}{0.99,0.91,0.95}
\definecolor{mycyan}{cmyk}{0.3,0,0,0}

\newcommand{\mB}{\mathcal{B}}

\newcommand{\g}{\mathfrak{g}}

\newcommand{\mO}{\mathcal{O}}

\newcommand{\mH}{\mathcal{H}}

\newcommand{\mS}{\mathcal{S}}
\newcommand{\mU}{\mathcal{U}}
\newcommand{\T}{\mathbf{T}}
\newcommand{\0}{\mathbb{0}}
\newcommand{\1}{\mathbb{1}}

\makeatletter
\newcommand*{\rom}[1]{\expandafter\@slowromancap\romannumeral #1@}
\makeatother

\makeatletter
\newcommand*{\Relbarfill@}{\arrowfill@\Relbar\Relbar\Relbar}
\newcommand*{\xeq}[2][]{\ext@arrow 0055\Relbarfill@{#1}{#2}}
\makeatother

\definecolor{mGreen}{RGB}{193, 225, 193}
\definecolor{mRed}{RGB}{255, 204, 203}
\definecolor{mBlue}{RGB}{176, 224, 230}  
\definecolor{mOrange}{RGB}{255, 218, 185} 
\definecolor{mPurple}{HTML}{E5D1FA}
\definecolor{gridColor}{RGB}{210, 210, 210}

\usepackage{amsthm}
\usepackage{tcolorbox}
\tcbuselibrary{theorems}
\newtheorem{thm}{Theorem}
\newtheorem{lem}{Lemma}
\newtheorem{cor}{Corollary}

\newtcbtheorem[number within=section, use counter*=thm]{mythm}{Theorem}%
{colback=red!5,colframe=red!35!red,fonttitle=\bfseries}{thm}

\newtcbtheorem[number within=section, use counter*=thm]{mylem}{Lemma}%
{colback=green!5,colframe=green!35!black,fonttitle=\bfseries}{lem}

\newtcbtheorem[number within=section, use counter*=thm]{mydef}{Definition}%
{colback=yellow!5,colframe=yellow!70!black,fonttitle=\bfseries}{def}

\newtcbtheorem[number within=section, use counter*=thm]{mycor}{Corollary}%
{colback=blue!5,colframe=blue!70!black,fonttitle=\bfseries}{cor}

\newtcbtheorem[number within=section, use counter*=thm]{myprop}{Proposition}%
{colback=pink!5,colframe=pink!70!black,fonttitle=\bfseries}{prop}

\pgfplotsset{compat=1.17}

\begin{document}

\title{Lie Algebraic Uncertainty Relations}


\author{Xiaoli Hu}
\email{xiaolihumath@163.com}
\affiliation{School of Artificial Intelligence, Jianghan University, Wuhan, Hubei 430056, China}

\author{Naihuan Jing}
\email{jing@ncsu.edu}
\affiliation{Department of Mathematics, North Carolina State University, Raleigh, NC 27695, USA}

\author{Jian Zhang}
\email{jzhang@ccnu.edu.cn}
\affiliation{School of Mathematics and Statistics, Central China Normal University, Wuhan, Hubei 430079, China}

\author{Yunlong Xiao}
\email{mathxiao123@gmail.com}
\affiliation{Institute of Advanced Intelligence and Computing (IAIC), Agency for Science, Technology and Research (A*STAR), 1 Fusionopolis Way, \#16-16 Connexis, Singapore 138632, Republic of Singapore}




\begin{abstract}
Uncertainty relations impose fundamental constraints on quantum observables, yet optimal bounds are needed to determine their incompatibility through the minimum attainable uncertainty.
Here we introduce a covariance-matrix formulation from which infinitely many uncertainty relations follow, encompassing Hermitian observables and non-Hermitian operators and recovering the variance sum as a special case.
Within this family, a basis-independent uncertainty for simple and semisimple Lie algebras admits an exact solution: 
its optimal state-independent bound is an explicit pairing of root system with the highest weight of the representation.
The minimum is attained by a highest-weight state, replacing a search over quantum states with a calculation in representation theory.
A complementary result establishes that uncertainty is constant within each weight space, yielding a conservation law.
The same commutation relations can therefore support distinct uncertainty limits, depending on their realization on the state space.
By rendering these limits analytically accessible, the framework turns representation theory into a tool for quantifying quantum incompatibility and may help establish the boundary between attainable and impossible tasks in quantum theory.
\end{abstract}


\maketitle


\noindent \textbf{Introduction}--Quantum mechanics differs from our everyday experience in a fundamental way: 
some physical observables cannot be measured with arbitrary precision at the same time~\cite{Heisenberg1927}. 
Position and momentum embody this uncertainty principle: 
narrowing one distribution necessarily broadens the other~\cite{Kennard1927,weyl1928gruppentheorie}.  
Over nearly a century~\cite{conlon2026100pmdeltatyearsquantum}, the principle has been recast in complementary forms that explore different aspects of uncertainty. 
Variance-based relations capture the spread of measurement outcomes around their mean values~\cite{PhysRev.34.163,PhysRevA.64.052103,PhysRevLett.113.260401,PhysRevA.100.032118}, entropic formulations quantify the uncertainty of measurement outcome distributions~\cite{Bialynicki1975,PhysRevLett.50.631,PhysRevD.35.3070,PhysRevLett.60.1103,Wehner_2010,PhysRevA.93.042125,RevModPhys.89.015002,PhysRevA.98.042305}, while majorization compares such distributions directly without mapping them to a single scalar~\cite{PhysRevA.84.052117,Pucha_a_2013,PhysRevLett.111.230401,Xiao_2020,Yuan2023}. 
Their scope has also expanded to scenarios assisted by quantum memory~\cite{PhysRevLett.103.020402,Berta2010,Xiao_2016}, temporal settings involving interventions~\cite{PhysRevResearch.3.023077,PhysRevLett.130.240201}, and related constraints in quantum thermodynamics~\cite{PhysRevLett.114.158101,PhysRevLett.123.090604,PhysRevLett.125.140602,PhysRevLett.125.260604,PhysRevLett.127.160601}. 
What began with a few canonical observables has grown into a general framework for delineating fundamental limits across quantum theory.

An uncertainty relation identifies a constraint; 
its optimal bound determines the exact limit attainable by quantum states.
This distinction is essential to its physical meaning: 
a non-optimal bound can underestimate the uncertainty that incompatible observables necessarily retain.
For position and momentum, this limit is half the reduced Planck constant, $\hbar/2$; 
for angular momentum, the minimum uncertainty depends on the spin~\cite{PhysRevLett.18.182,Dammeier_2015}.
For general operator sets, however, determining the optimal state-independent bound requires a demanding minimization over all quantum states.
Existing approaches characterize the allowed trade-offs through lower convex hulls of variances~\cite{PhysRevLett.119.170404}, outer approximations to sets of measurement probability distributions~\cite{xiao2019complementaryinformationprinciplequantum}, and semidefinite programming hierarchies based on anticommutativity graphs ~\cite{PhysRevA.107.062211,PhysRevLett.132.200202}.
These methods can constrain the attainable region with increasing accuracy, but need not yield its exact boundary in closed form.
The resulting challenge is to identify which features of the operators determine the uncertainty, and whether that limit can be calculated directly from their structure.

Here we resolve this question for operator sets with Lie algebraic structure~\cite{Humphreys1972}.
Lie algebras encode the commutation relations underlying quantum symmetries and dynamics, making them a physical setting in which to connect noncommutativity to its quantum uncertainty.
Our starting point is a covariance-based formulation that encompasses Hermitian observables and non-Hermitian operators and generates an infinite family of uncertainty relations, including the usual variance sum.
For bases of simple and semisimple Lie algebras, a basis-independent measure admits an optimal state-independent bound determined in closed form by the root system and the highest weight.
Representation theory thus supplies both the irreducible limit and the states that attain it, resolving the minimization over all quantum states.
Within each weight space, uncertainty is constant, turning an inequality into a conservation law.
These results reveal how the commutation structure and its realization on the state space jointly determine quantum uncertainty.
As Lie algebras arise naturally throughout quantum information, computation, and technologies, our results may offer an algebraic lens to reveal further intrinsic constraints on quantum information processing.


\noindent \textbf{Covariance Formulation}--The statistics obtained by measuring a set of observables carry information about both individual uncertainties and pairwise correlations.
Variances quantify the former but do not capture the latter.
Here we introduce a new formulation of quantum uncertainty that brings both contributions together through the covariance matrix,
treating observables and non-Hermitian operators on equal footing.
For a set of operators $\mathbf X:=\{X_i\}_i$ and a quantum state $\psi$, define $\mathrm{Cov}_\psi(X_i,X_j)\coloneqq\langle X_i^\dagger X_j\rangle_\psi-\langle X_i^\dagger\rangle_\psi\langle X_j\rangle_\psi$.
The corresponding covariance matrix, $\mathrm{Cov}_\psi(\mathbf X)\coloneqq(\mathrm{Cov}_\psi(X_i,X_j))_{i,j}$, is positive semidefinite.
It allows us to introduce a universal form of quantum uncertainty $\mU_\psi(\mathbf X)$,
\begin{align}\label{eq:MT_Cov_Uncertainty}
    \mU_\psi(\mathbf X)
    \coloneqq
    \Tr[M\cdot\mathrm{Cov}_\psi(\mathbf X)],
\end{align}
where $M$ is a positive semidefinite matrix that weights the covariance entries and ensures $\mU(\mathbf X)\geqslant0$.
The universality of this construction lies in the freedom to choose $M$, generating an infinite family of uncertainty formulations within a single framework. 
In particular, the identity matrix recovers the sum of generalized variances, reducing to the familiar sum of variances~\cite{PhysRevA.68.032103,PhysRevA.74.010301,PhysRevA.84.022107,Dammeier_2015,PhysRevA.100.032118}, while diagonal matrices with nonnegative entries lead to weighted uncertainty relations~\cite{Xiao2016}.


\noindent \textbf{Lie Algebras}--Commutation relations constrain quantum uncertainty, yet the attainable bounds also depend on how the operators act on the state space.
Lie theory separates these two ingredients: 
the algebra encodes the commutation structure, whereas its representation specifies its realization.
This distinction, central to quantum mechanics~\cite{Hall2013,Hall2015,woit2017quantum,Sakurai_Napolitano_2020}, quantum control~\cite{PhysRevA.63.032308,d2021introduction} and quantum machine learning~\cite{Fontana2024,Ragone2024}, underlies our formulation.
For operator sets that form a basis of a Lie algebra, choosing $M$ in Eq.~\eqref{eq:MT_Cov_Uncertainty} as the inverse Gram matrix yields a basis-independent measure of quantum uncertainty.
We derive its optimal bound from the algebra and its representation, recovering familiar angular momentum uncertainty relations for $\mathfrak{su}(2)$ and extending them to general Lie algebras.
Formally, a Lie algebra $\g$ is a real or complex vector space equipped with a bilinear, antisymmetric bracket $[\cdot,\cdot]$ satisfying the Jacobi identity, $[X,[Y,Z]]+[Y,[Z,X]]+[Z,[X,Y]]=0$.
For matrix Lie algebras, the bracket is the commutator, $[X,Y]\coloneqq XY-YX$, with closure requiring $[X,Y]\in\g$ for all $X,Y\in\g$.
A representation $\pi:\g\to\operatorname{End}(V)$ realizes the algebra on a state space $V$ while preserving the bracket, $\pi([X,Y])=[\pi(X),\pi(Y)]$.

Root systems reveal the internal structure of a complex semisimple Lie algebra.
A Cartan subalgebra $\mathfrak h$ is a maximal commuting subalgebra whose elements are simultaneously diagonalizable in the adjoint representation.
It determines the root space decomposition $\g=\mathfrak h\oplus(\bigoplus_{\alpha\in\Phi}\g_\alpha)$, where $[h,X_\alpha]=\alpha(h)X_\alpha$ for $h\in\mathfrak h$ and $X_\alpha\in\g_\alpha$.
Each root $\alpha\in\mathfrak h^*: \mathfrak h\to \mathbb C$ specifies how the corresponding generators transform under the Cartan subalgebra.
Choosing positive roots $\Phi^+$ selects simple roots $\Pi\coloneqq\{\alpha_i\}_{i=1}^{r}$, with $r\coloneqq\dim\mathfrak h$, such that every positive root is a nonnegative integer combination of the simple roots.
Their geometry follows from the Killing form $\kappa_{\mathrm{Kill}}(X,Y)\coloneqq\Tr[\operatorname{ad}_X\operatorname{ad}_Y]$, where $\operatorname{ad}_X(Y)\coloneqq[X,Y]$.
Its normalized version $\kappa$ induces an inner product $\langle\cdot,\cdot\rangle$ on the real span of the roots, with long roots of squared length two.
This geometry is recorded by the Cartan matrix $C\coloneqq(2\langle\alpha_i,\alpha_j\rangle/\langle\alpha_j,\alpha_j\rangle)_{ij}$ and the vector of squared simple-root lengths $\vec{\alpha}\coloneqq(\langle\alpha_1,\alpha_1\rangle,\ldots,\langle\alpha_r,\alpha_r\rangle)^{\T}$.

\begin{table*}[t]
\centering

\begin{tblr}{
    width = \textwidth,
    colspec = {c || l c || X[l]},
    cells = {m},
    column{2} = {c},
    row{1} = {font=\bfseries, c},
    rowsep = 6pt,
    colsep = 6pt,
    hline{1,Z} = {1}{-}{0.5pt},
    hline{1,Z} = {2}{-}{0.5pt},
    hline{2,6} = {-}{0.4pt},
}

Type
&
Lie Algebra
&
Dynkin Diagram
&
Optimal Bound
\\

$A_n$
&
$\mathfrak{sl}(n+1,\mathbb C)$
&
\hspace*{-1.5em} \raisebox{-0.4ex}{\includegraphics[scale=0.6]{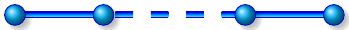}}
&
$\displaystyle\sum_{i=1}^{n} i(n+1-i)\lambda_i$
\\

$B_n$
&
$\mathfrak{so}(2n+1,\mathbb C)$
&
\hspace*{-1.5em}\raisebox{-1ex}{\includegraphics[scale=0.6]{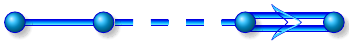}}
&
$\displaystyle\sum_{i=1}^{n-1} i(2n-i)\lambda_i
+\frac{n^2}{2}\lambda_n$
\\

$C_n$
&
$\mathfrak{sp}(2n,\mathbb C)$
&
\hspace*{-1.5em} \raisebox{-1ex}{\includegraphics[scale=0.6]{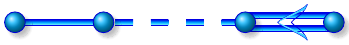}} 
&
$\displaystyle\sum_{i=1}^{n-1}
\frac{i(2n+1-i)}{2}\lambda_i
+\frac{n(n+1)}{2}\lambda_n$
\\

$D_n$
&
$\mathfrak{so}(2n,\mathbb C)$
&
\hspace*{-1.5em} \raisebox{-4.3ex}{\includegraphics[scale=0.6]{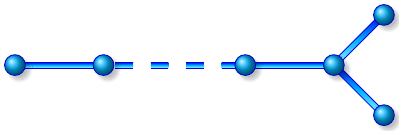}}  
&
$\displaystyle\sum_{i=1}^{n-2} i(2n-i-1)\lambda_i
+\frac{n(n-1)}{2}
\left(\lambda_{n-1}+\lambda_n\right)$
\\

$G_2$
&
$\mathfrak g_2$
&
\hspace*{-1.5em} \raisebox{-1ex}{\includegraphics[scale=0.75]{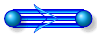}} 
&
$6\lambda_1+\frac{10}{3}\lambda_2$
\\

$F_4$
&
$\mathfrak f_4$
&
\hspace*{-1.5em} \raisebox{-1ex}{\includegraphics[scale=0.6]{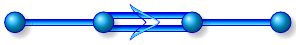}}
&
$16\lambda_1+30\lambda_2+21\lambda_3+11\lambda_4$
\\

$E_6$
&
$\mathfrak e_6$
&
\hspace*{-1.7em} \raisebox{-2.3ex}{\includegraphics[scale=0.6]{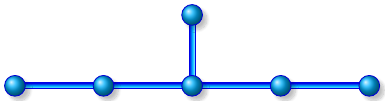}}
&
$16\lambda_1+22\lambda_2+30\lambda_3
+42\lambda_4+30\lambda_5+16\lambda_6$
\\

$E_7$
&
$\mathfrak e_7$
&
\hspace*{-1.9em}\raisebox{-2.3ex}{\includegraphics[scale=0.6]{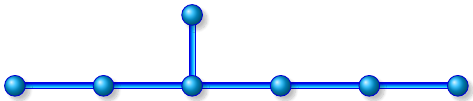}}
&
$34\lambda_1+49\lambda_2+66\lambda_3+96\lambda_4
+75\lambda_5+52\lambda_6+27\lambda_7$
\\

$E_8$
&
$\mathfrak e_8$
&
\hspace*{-2.1em}\raisebox{-2.3ex}{\includegraphics[scale=0.6]{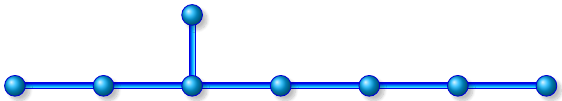}}
&
$92\lambda_1+136\lambda_2+182\lambda_3+270\lambda_4
+220\lambda_5+168\lambda_6+114\lambda_7+58\lambda_8$
\\

\end{tblr}

\caption{
\textbf{Lie Algebraic Uncertainty Relations.}
Classical and exceptional simple Lie algebras, their Dynkin diagrams, and the optimal state-independent bound of Thm.~\ref{thm:Simple_Bound}.
For $B_n$, $C_n$, $F_4$, and $G_2$, arrows point from long roots to short roots.
The rank satisfies $n\geqslant2$ for $B_n$ and $C_n$, and $n\geqslant4$ for $D_n$.
}

\label{tab:Optimal_Bound}

\end{table*}

Within a finite dimensional representation, weights describe the corresponding organization of the state space.
The commuting Cartan generators resolve $V$ into simultaneous eigenspaces, $V=\bigoplus_\mu V_\mu$, where $\pi(h)|\psi_\mu\rangle=\mu(h)|\psi_\mu\rangle$ for every $h\in\mathfrak h$ defines the weight $\mu$.
Root operators connect these spaces through $\pi(X_\alpha)V_\mu\subseteq V_{\mu+\alpha}$, linking the algebraic root decomposition to transitions between weight sectors.
To label the representations, we introduce the fundamental weights $\Omega\coloneqq\{\omega_i\}_{i=1}^{r}$, dual to the simple coroots $\alpha_j^\vee\coloneqq2\alpha_j/\langle\alpha_j,\alpha_j\rangle$ and satisfying $\langle\omega_i,\alpha_j^\vee\rangle=\delta_{ij}$.
Every irreducible representation is uniquely specified, up to equivalence, by a highest weight $\Lambda=\sum_i\lambda_i\omega_i$, with nonnegative integer Dynkin labels $\lambda_i$.
Its highest weight is annihilated by all positive-root operators, whereas repeated application of negative-root operators generates the full representation.
The Dynkin coefficient vector $\vec{\lambda}\coloneqq(\lambda_1,\ldots,\lambda_r)^{\T}$ thus labels the irreducible representation, while $\mu$ identifies a weight space within it.

Connecting this structure to quantum uncertainty requires a common geometric description of the operator bases.
Three related choices enter our analysis: 
a Chevalley basis $\mathbf X_{\mathrm{Ch}}$ of $\g$, a basis $\mathbf X_{\mathrm{aH}}$ of its compact real form $\g_{\mathbb{R}}$ represented by anti-Hermitian operators, and the observable basis $\mathbf X_{\mathrm{ob}}$ obtained by multiplying each anti-Hermitian generator by $\sqrt{-1}$.
For $\mathfrak{sl}(2,\mathbb C)$, the simplest example, its compact real form is $\mathfrak{su}(2)$, and the three corresponding bases are
\begin{align}
\mathbf X_{\mathrm{Ch}}
&=
\left\{
\begin{pmatrix}
    1&0\\0&-1
\end{pmatrix},
\begin{pmatrix}
    0&1\\0&0
\end{pmatrix}, 
\begin{pmatrix}
    0&0\\1&0
\end{pmatrix}
\right\},\notag\\
\mathbf X_{\mathrm{aH}}
&=
\left\{
\begin{pmatrix}
    -\mathrm{i}&0\\0&\mathrm{i}
\end{pmatrix},
\begin{pmatrix}
    0&-\mathrm{i}\\-\mathrm{i}&0
\end{pmatrix}, 
\begin{pmatrix}
    0&-1\\1&0
\end{pmatrix}
\right\},\notag\\
\mathbf X_{\mathrm{ob}}
&=
\left\{
\begin{pmatrix}
    1&0\\0&-1
\end{pmatrix},
\begin{pmatrix}
    0&1\\1&0
\end{pmatrix}, 
\begin{pmatrix}
    0&-\mathrm{i}\\ \mathrm{i}&0
\end{pmatrix}
\right\}.\notag
\end{align}
The observable basis thus consists of the Pauli operators, equivalently twice the spin-$1/2$ angular momentum operators $\{J_z, J_x, J_y\}$ in units of $\hbar=1$.
For general Lie algebras, the construction is presented in the Appendices.

The normalized Killing form $\kappa$ defines the quadratic Casimir element $C_2\coloneqq\sum_i X_iX^i$, where $X_i\in\mathbf X_{\mathrm{aH}}$ and $\{X^i\}_i$ is the dual basis satisfying $\kappa(X_i,X^j)=\delta_{ij}$.
To define the Gram matrix, we equip $\g$ with the positive-definite inner product $(\cdot,\cdot)_{\g}$ induced by $\kappa$, as constructed in the Appendices.
For any operator family $\mathbf X=\{X_i\}_i\subset\g$, its Gram matrix is $\mathrm G(\mathbf X)\coloneqq((X_i,X_j)_{\g})_{ij}$.
In particular, $\mathbf X$ may be chosen as the Chevalley, anti-Hermitian or observable basis introduced above.
Each choice yields an invertible Gram matrix whose inverse supplies the weighting matrix $M=\mathrm G(\mathbf X)^{-1}$ in Eq.~\eqref{eq:MT_Cov_Uncertainty}, giving the same quantum uncertainty.


\noindent \textbf{Uncertainty Relations}--To identify the intrinsic connection between quantum uncertainty and Lie algebraic structure, we seek a formulation independent of the operator basis.
Choosing $M=\mathrm G(\mathbf X)^{-1}$ in Eq.~\eqref{eq:MT_Cov_Uncertainty} for a basis $\mathbf X$ of $\g$ defines the Lie algebraic uncertainty
\begin{align}\label{eq:MT_Lie_Uncertainty}
    \mU_\psi(\mathbf X)
    \coloneqq
    \Tr[\mathrm G(\mathbf X)^{-1}\cdot\mathrm{Cov}_\psi(\mathbf X)].
\end{align}
Although the Gram and covariance matrices each depend on the basis, their contraction is invariant under any invertible basis transformation. 

\begin{lem}[\bf{Basis Invariance}]\label{lem:Basis_Invariance}
Let $\mathbf X$ and $\mathbf Y$ be bases of $\g$ related by $\mathbf Y=N\mathbf X$, with $N$ invertible.
With the state, representation and inner product fixed, the uncertainty defined in Eq.~\eqref{eq:MT_Lie_Uncertainty} satisfies $\mU_\psi(\mathbf X)=\mU_\psi(\mathbf Y)$.
\end{lem}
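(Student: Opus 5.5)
We compute how the covariance matrix and the Gram matrix each transform under $\mathbf Y=N\mathbf X$, i.e. $Y_i=\sum_k N_{ik}X_k$, and then check that the two transformations cancel inside the trace in Eq.~\eqref{eq:MT_Lie_Uncertainty}. Since the underlying field may be complex, we first fix the convention: $(\cdot,\cdot)_{\g}$ is conjugate-linear in the first slot and linear in the second, matching $\mathrm{Cov}_\psi(X_i,X_j)=\langle X_i^\dagger X_j\rangle_\psi-\langle X_i^\dagger\rangle_\psi\langle X_j\rangle_\psi$. Here the dagger conjugates the scalar coefficients. If the paper's inner product is instead real-bilinear on a real form, $N$ is real and all conjugations below simply drop.

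\textbf{Step 1 (covariance).} The representation $\pi$ is linear, so
\begin{align}
\mathrm{Cov}_\psi(Y_i,Y_j)=\sum_{k,l}\overline{N_{ik}}N_{jl}\,\mathrm{Cov}_\psi(X_k,X_l).
\end{align}
This holds because $Y_i^\dagger=\sum_k\overline{N_{ik}}X_k^\dagger$ enters both the quadratic term and the product of means sesquilinearly. In matrix form,
\begin{align}
\mathrm{Cov}_\psi(\mathbf Y)=\bar N\,\mathrm{Cov}_\psi(\mathbf X)\,N^{\T}.
\end{align}

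\textbf{Step 2 (Gram matrix).} By sesquilinearity of $(\cdot,\cdot)_{\g}$ in the same convention,
\begin{align}
\mathrm G(\mathbf Y)=\bar N\,\mathrm G(\mathbf X)\,N^{\T}.
\end{align}
Since $N$ is invertible, $\bar N$ and $N^{\T}$ are invertible as well, which gives
\begin{align}
\mathrm G(\mathbf Y)^{-1}=(N^{\T})^{-1}\mathrm G(\mathbf X)^{-1}\bar N^{-1}.
\end{align}

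\textbf{Step 3 (cancellation).} Substituting both transformations and using cyclicity of the trace,
\begin{align}
\mU_\psi(\mathbf Y)&=\Tr\big[(N^{\T})^{-1}\mathrm G(\mathbf X)^{-1}\bar N^{-1}\bar N\,\mathrm{Cov}_\psi(\mathbf X)N^{\T}\big]\notag\\
&=\Tr\big[\mathrm G(\mathbf X)^{-1}\mathrm{Cov}_\psi(\mathbf X)\big]=\mU_\psi(\mathbf X).
\end{align}
The same computation, read conceptually, says that $\mU_\psi$ is the contraction of the sesquilinear form $(A,B)\mapsto\mathrm{Cov}_\psi(\pi(A),\pi(B))$ with the inverse of the form $(\cdot,\cdot)_{\g}$ on $\g$. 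Such a contraction is a trace of an endomorphism of $\g$ and therefore carries no dependence on the basis.

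The main point requiring care is that the conjugation conventions of $\mathrm{Cov}_\psi$ and of $(\cdot,\cdot)_{\g}$ must agree, so that both matrices pick up exactly the same factors $\bar N(\cdot)N^{\T}$. If the positive-definite inner product constructed in the Appendices were linear in the first slot, the two transformations would not cancel. We would then redefine $\mathrm G$ (or take its transpose) to match, or restrict to real $N$ on $\g_{\mathbb R}$. Everything else in the argument is routine linear algebra.
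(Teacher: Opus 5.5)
Your proof is correct and follows the paper's argument essentially verbatim. Both $\mathrm{G}$ and $\mathrm{Cov}_\psi$ transform as $\bar N(\cdot)N^{\T}$, so $\mathrm{G}(\mathbf Y)^{-1}\mathrm{Cov}_\psi(\mathbf Y)$ is conjugate to $\mathrm{G}(\mathbf X)^{-1}\mathrm{Cov}_\psi(\mathbf X)$ by $N^{\T}$ and the trace is unchanged; your convention worry is already settled in the paper, which defines $(x,y)_{\g}\coloneqq-\kappa(\tau(x),y)$ as conjugate-linear in the first slot, matching the covariance.
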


The resulting uncertainty is therefore a property of the state and representation, independent of the generators used to express it.
We henceforth denote it by $\mU_\psi(\g)$.
Restricting the state to a weight space reveals a further constraint.
Within an irreducible representation of highest weight $\Lambda$, every state $\psi_{\mu}$ in $V_{\Lambda,\mu}$ has the same uncertainty, determined entirely by the quadratic Casimir eigenvalue and the weight $\mu$.
Quantum uncertainty thus obeys a conservation law within each weight space.

\begin{thm}[\bf{Uncertainty Conservation}]\label{thm:Uncertainty_Conservation}
Within the weight space $V_{\Lambda,\mu}$ of an irreducible representation of $\g$ with highest weight $\Lambda$, the Lie algebraic uncertainty is constant: 
every state $\ket{\psi_\mu}\in V_{\Lambda,\mu}$ satisfies
\begin{align}\label{eq:MT_Uncertainty_Conservation}
    \mU_{\psi_\mu}(\g)
    =
    \langle C_2\rangle_{\psi_\mu}
    -
    \langle\mu,\mu\rangle,
\end{align}
where $\langle C_2\rangle_{\psi_\mu}$ is the expectation value of the quadratic Casimir element, and $\langle\mu,\mu\rangle$ is the squared norm of the weight $\mu$.
\end{thm}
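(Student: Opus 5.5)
The plan is to use Lemma~\ref{lem:Basis_Invariance} to choose the most convenient basis, namely an orthonormal basis $\{X_i\}$ of $\g$ with respect to $(\cdot,\cdot)_{\g}$, adapted to the Cartan decomposition. In such a basis $\mathrm G(\mathbf X)=\1$, so the uncertainty reduces to the sum of generalized variances,
\begin{align}
\mU_\psi(\g)=\sum_i\Big(\langle \pi(X_i)^\dagger\pi(X_i)\rangle_\psi-\big|\langle \pi(X_i)\rangle_\psi\big|^2\Big).\notag
\end{align}
I will take the observable-type basis: an orthonormal basis $\{H_k\}_{k=1}^r$ of the real Cartan part $\mathrm i\mathfrak h_{\mathbb R}$, together with, for each $\alpha\in\Phi^+$, the Hermitian combinations $(E_\alpha+E_{-\alpha})$ and $-\mathrm i(E_\alpha-E_{-\alpha})$, suitably normalized. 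Equivalently, up to a unitary change of basis within each root pair, I can use the normalized root vectors $E_{\pm\alpha}$ themselves. In that case $E_\alpha^\dagger=E_{-\alpha}$ in the unitary representation. The next step is to check that this basis is orthonormal and that $\sum_i \pi(X_i)^\dagger\pi(X_i)$ equals $\pi(C_2)$ with the sign convention of the paper. The factor $\sqrt{-1}$ converts the anti-Hermitian sum $\sum X_iX^i$ into a sum of squares of Hermitian operators, so the first term becomes $\langle C_2\rangle_{\psi_\mu}$. On an irreducible representation this is the Casimir eigenvalue.

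The second step is to evaluate the mean terms for $\ket{\psi_\mu}\in V_{\Lambda,\mu}$. Each root operator maps $V_\mu$ into $V_{\mu\pm\alpha}$, which is orthogonal to $V_\mu$ because weight spaces of distinct weights are orthogonal in a unitary representation. Hence $\langle \psi_\mu|\pi(E_{\pm\alpha})|\psi_\mu\rangle=0$. The Hermitian combinations of root vectors therefore also have zero mean, so only the Cartan generators contribute. For these, $\pi(H_k)\ket{\psi_\mu}=\mu(H_k)\ket{\psi_\mu}$ exactly, so $\langle H_k\rangle=\mu(H_k)$, which is real. Summing over an orthonormal basis of the Cartan part gives $\sum_k\mu(H_k)^2=\langle\mu,\mu\rangle$, by the definition of the inner product on $\mathfrak h^*$ induced from $\kappa$ via the identification $\mathfrak h\cong\mathfrak h^*$.

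Combining the two steps yields $\mU_{\psi_\mu}(\g)=\langle C_2\rangle_{\psi_\mu}-\langle\mu,\mu\rangle$, independent of the particular vector in $V_{\Lambda,\mu}$. The conclusion holds even when $\dim V_{\Lambda,\mu}>1$, since neither term depends on the choice of $\psi_\mu$ beyond its weight.

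The main obstacle will be the normalization bookkeeping. I need to confirm that the positive-definite inner product $(\cdot,\cdot)_{\g}$ constructed in the Appendices, restricted to the observable basis, is exactly the one for which $\sum_i X_i^\dagger X_i=C_2$, with the same normalization of $\kappa$ in which long roots have squared length two. I also need to confirm that the induced norm on the Cartan part matches $\langle\cdot,\cdot\rangle$ on weights. I would check this first in the $\mathfrak{su}(2)$ example with Pauli operators, and then argue in general from the compact real form using $\kappa(X,Y)=-(X,Y)_{\g}$ on $\g_{\mathbb R}$.
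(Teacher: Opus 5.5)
Your proposal is correct and follows essentially the same route as the paper: use basis invariance to pass to an orthogonal basis adapted to the Cartan decomposition, write the uncertainty as $\langle C_2\rangle_{\psi_\mu}$ minus the squared means, kill the root-sector means by orthogonality of distinct weight spaces, and identify the Cartan means with $\langle\mu,\mu\rangle$ through the $\kappa$-dual-basis contraction; the only differences are cosmetic (unit-norm Hermitian generators instead of the paper's norm-$2$ anti-Hermitian ones with the prefactor $\tfrac12$). One small notational slip: in the paper's conventions the Hermitian Cartan generators span $\mathfrak h_{\mathbb R}$, whereas $\mathrm i\mathfrak h_{\mathbb R}$ is the compact (anti-Hermitian) Cartan on which $\mu$ takes imaginary values, but this is harmless since only $|\mu(H_k)|^2$ enters.
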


Root systems classify complex semisimple Lie algebras and are encoded by Dynkin diagrams (see Tab.~\ref{tab:Optimal_Bound}), which we use as a shorthand in our uncertainty notation.
For $\mathfrak{su}(n+1)$, the diagram is of type $A_n$, a chain of $n$ nodes connected by single bonds, and we write
\begin{align}
    \mU_{\psi_\mu}(\raisebox{-2.8ex}{\includegraphics[scale=0.8]{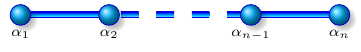}})
    \coloneqq
    \mU_{\psi_\mu}(\mathfrak{su}(n+1)).\notag
\end{align}
Angular momentum provides a direct illustration of Thm.~\ref{thm:Uncertainty_Conservation} in the simplest case, i.e., $A_1$.
In the spin-$j$ representation of $\mathfrak{su}(2)$, the weight states are $\ket{j,m}$, where $j$ is the total angular momentum quantum number and $m\in\{-j,\ldots,j\}$ is the magnetic quantum number, defined by $J_z\ket{j,m}=m\ket{j,m}$ in units of $\hbar=1$.
Under our normalization, the quadratic Casimir expectation is $2j(j+1)$ and the squared weight length is $2m^2$, giving 
\begin{align}
    \mU_{\ket{j,m}}( \raisebox{-0.6ex}{\includegraphics[scale=0.8]{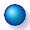}} )
    =
    &2\left(
    \Delta_{\ket{j,m}}^{2}(J_x)+
    \Delta_{\ket{j,m}}^{2}(J_y)+
    \Delta_{\ket{j,m}}^{2}(J_z)
    \right)\notag\\
    =
    &2j(j+1)-2m^2.\notag
\end{align}
Fixing the magnetic quantum number therefore fixes the angular momentum uncertainty exactly.
Among these weight states, the uncertainty is smallest at the extremal weights $m=\pm j$, where it equals $2j$ and saturates the familiar angular momentum uncertainty bound.
This example gives a physical interpretation of the two terms in Thm.~\ref{thm:Uncertainty_Conservation}: 
the representation fixes the Casimir contribution, while the weight determines how much is subtracted from it.

\begin{table}[t]
\centering

\begin{tblr}{
    width = \columnwidth,
    colspec = {c || X[l]},
    cells = {m},
    column{1} = {c},
    row{1} = {font=\bfseries, c},
    rowsep = 5pt,
    colsep = 5pt,
    hline{1,Z} = {1}{-}{0.5pt},
    hline{1,Z} = {2}{-}{0.5pt},
    hline{2} = {-}{0.4pt},
}

Type
&
Defining Representation
\\

$A_n$
&
$\mU_{\psi}^{\mathrm{def}}(\raisebox{-0.4ex}{\includegraphics[scale=0.57]{DK_A_n.pdf}})\geqslant n$
\\

$B_n$
&
$\mU_{\psi}^{\mathrm{def}}( \raisebox{-1ex}{\includegraphics[scale=0.57]{DK_B_n.pdf}} )\geqslant 2n-1$
\\

$C_n$
&
$\mU_{\psi}^{\mathrm{def}}( \raisebox{-1ex}{\includegraphics[scale=0.57]{DK_C_n.pdf}} )\geqslant n$
\\

$D_n$
&
$\mU_{\psi}^{\mathrm{def}}( \raisebox{-4ex}{\includegraphics[scale=0.57]{DK_D_n.pdf}} )\geqslant 2(n-1)$
\\

\end{tblr}

\caption{
\textbf{Defining Representation Bounds.}
Explicit forms of the optimal state-independent bound in Thm.~\ref{thm:Simple_Bound} for the defining representations of the classical Lie algebras $A_n$, $B_n$, $C_n$, and $D_n$.
}

\label{tab:Def_Rep}

\end{table}

A second application of Thm.~\ref{thm:Uncertainty_Conservation} concerns $\mathfrak{su}(3)$, whose defining representation encodes $\mathrm{SU}(3)$-symmetry on the qutrit space $\mathbb C^3$.
The algebra acts through traceless anti-Hermitian $3\times3$ matrices; 
their Hermitian counterparts span all traceless observables of a three-level quantum system and form the Gell-Mann matrices~\cite{Gell-Mann}:
\begin{align}
G_1 &=
\begin{pmatrix}
0&1&0\\
1&0&0\\
0&0&0
\end{pmatrix},
&
G_2 &=
\begin{pmatrix}
0&-\mathrm{i}&0\\
\mathrm{i}&0&0\\
0&0&0
\end{pmatrix},
\notag\\
G_3 &=
\begin{pmatrix}
1&0&0\\
0&-1&0\\
0&0&0
\end{pmatrix},
&
G_4 &=
\begin{pmatrix}
0&0&1\\
0&0&0\\
1&0&0
\end{pmatrix},
\notag\\
G_5 &=
\begin{pmatrix}
0&0&-\mathrm{i}\\
0&0&0\\
\mathrm{i}&0&0
\end{pmatrix},
&
G_6 &=
\begin{pmatrix}
0&0&0\\
0&0&1\\
0&1&0
\end{pmatrix},
\notag\\
G_7 &=
\begin{pmatrix}
0&0&0\\
0&0&-\mathrm{i}\\
0&\mathrm{i}&0
\end{pmatrix},
&
G_8 &=
\frac{1}{\sqrt{3}}
\begin{pmatrix}
1&0&0\\
0&1&0\\
0&0&-2
\end{pmatrix}.\notag
\end{align}
This representation has highest weight $\Lambda=\omega_1$ and quadratic Casimir expectation $\langle C_2\rangle=8/3$.
Its three weights, $\mu_1=\omega_1$, $\mu_2=-\omega_1+\omega_2$ and $\mu_3=-\omega_2$, follow by successively subtracting the simple roots from the highest weight.
Since $\langle\omega_1,\omega_1\rangle=\langle\omega_2,\omega_2\rangle=2/3$ and $\langle\omega_1,\omega_2\rangle=1/3$, all three satisfy $\langle\mu_i,\mu_i\rangle=2/3$.
Thm.~\ref{thm:Uncertainty_Conservation} then yields
\begin{align}
    \mU_{\psi_{\mu_i}}^{\mathrm{def}}(\raisebox{-0.8ex}{\includegraphics[scale=0.8]{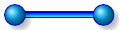}})
    =
    \sum_{j=1}^{8}\Delta_{\psi_{\mu_i}}^{2}(G_j)
    =
    2,
    \quad\forall i\in\{1, 2, 3\}.\notag
\end{align}
Thus, the three weight states have identical uncertainty despite their distinct eigenvalues under the commuting Cartan generators.

Removing the restriction to a fixed weight space, we obtain the optimal state-independent uncertainty bound for every simple Lie algebra and every finite dimensional representation, as established in the following theorem.

\begin{thm}[\bf{Simple Bound}]\label{thm:Simple_Bound}
A complex simple Lie algebra $\g$ of rank $r$, acting in a finite-dimensional irreducible representation with highest weight $\Lambda=\sum_{i=1}^{r}\lambda_i\omega_i$, satisfies
\begin{align}\label{eq:MT_Optimal_Uncertainty}
    \mU_\psi(\g)
    \geqslant
    \vec{\lambda}^{\T}\cdot\vec{\xi},
\end{align}
for every quantum state $\psi$ in the representation space.
Here $\vec{\lambda}$ is the Dynkin coefficient vector and $\vec{\xi}\coloneqq C^{-1}\vec{\alpha}$, where $C$ is the Cartan matrix and $\vec{\alpha}$ collects the squared lengths of the simple roots.
The bound is optimal and is attained by a highest-weight state.
\end{thm}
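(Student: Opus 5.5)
By Lemma~\ref{lem:Basis_Invariance}, I am free to evaluate $\mU_\psi(\g)$ in an orthonormal basis $\{Y_a\}$ of $\g_{\mathbb R}$ with respect to $(\cdot,\cdot)_{\g}$, taking observable (Hermitian) generators $\pi(Y_a)$. In that basis the Gram matrix is the identity, and the uncertainty becomes a variance sum:
\begin{align}
\mU_\psi(\g)=\sum_a\langle Y_a^2\rangle_\psi-\sum_a\langle Y_a\rangle_\psi^2=\langle C_2\rangle-\sum_a\langle Y_a\rangle_\psi^2.\notag
\end{align}
Since the representation is irreducible, $C_2$ acts as the scalar $\langle\Lambda,\Lambda+2\rho\rangle$, where $\rho=\sum_i\omega_i$. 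Consistency with Theorem~\ref{thm:Uncertainty_Conservation} supports this: on a weight state the off-Cartan expectations vanish and the Cartan part contributes $\langle\mu,\mu\rangle$. The problem therefore reduces to maximizing the ``moment vector'' norm $\|m(\psi)\|^2:=\sum_a\langle Y_a\rangle_\psi^2$ over states, where $m(\psi)\in\g_{\mathbb R}^*\cong i\g_{\mathbb R}$.

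\textbf{Upper bound on the moment norm.} First, $m$ is $G$-equivariant, $m(U\psi)=\mathrm{Ad}^*_U m(\psi)$, and the norm is Ad-invariant. Every element of $i\g_{\mathbb R}$ is $G$-conjugate into $i\mathfrak h_{\mathbb R}$, so I may assume $m(\psi)$ lies in the Cartan directions after rotating $\psi$. For $H\in i\mathfrak h_{\mathbb R}$ I then have
\begin{align}
\langle\psi|\pi(H)|\psi\rangle=\sum_\mu\|P_\mu\psi\|^2\mu(H)\le\max_\mu\mu(H).\notag
\end{align}
Taking $H$ along $m(\psi)$ and conjugating so that it is dominant gives $\|m(\psi)\|\le\max_\mu\langle\mu,\hat m\rangle\|\,\cdot$, and the maximum over weights of a dominant functional is attained at $\Lambda$. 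Hence $\|m(\psi)\|\le\|\Lambda\|$. Equivalently, $m(\psi)$ lies in the convex hull of the Weyl orbit of $\Lambda$. This yields
\begin{align}
\mU_\psi(\g)\ge\langle\Lambda,\Lambda+2\rho\rangle-\langle\Lambda,\Lambda\rangle=2\langle\Lambda,\rho\rangle.\notag
\end{align}
Equality holds for the highest-weight vector: by Theorem~\ref{thm:Uncertainty_Conservation} with $\mu=\Lambda$, its uncertainty is exactly $2\langle\Lambda,\rho\rangle$.

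\textbf{Matching the closed form.} It remains to show $2\langle\Lambda,\rho\rangle=\vec\lambda^{\T}C^{-1}\vec\alpha$. I write $\Lambda=\sum_i\lambda_i\omega_i$ and use $2\langle\omega_i,\rho\rangle=\sum_j 2\langle\omega_i,\omega_j\rangle$. Because $\omega_i=\sum_k (C^{-1})_{ik}\alpha_k$ (with the Cartan-matrix convention of the paper, possibly transposed), I get $\langle\omega_i,\omega_j\rangle=(C^{-1})_{ij}\langle\alpha_j,\alpha_j\rangle/2$. Summing over $j$ gives $2\langle\omega_i,\rho\rangle=\sum_j (C^{-1})_{ij}\langle\alpha_j,\alpha_j\rangle=(C^{-1}\vec\alpha)_i$, which is the claimed formula. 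I will check the index convention against one entry of Table~\ref{tab:Optimal_Bound}: for $A_1$ it gives $2j$ with $\lambda=2j$, and for $A_2$ it gives $2$.

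\textbf{Main obstacle.} The delicate step is the bound $\|m(\psi)\|\le\|\Lambda\|$. It requires conjugating the moment into the Cartan subalgebra via the compact group and correctly identifying the normalization of $(\cdot,\cdot)_{\g}$ with the inner product on weights. I would handle it by writing everything in a basis adapted to $\mathfrak h$ and the root vectors, and checking the normalization through the $\mathfrak{su}(2)$ example.
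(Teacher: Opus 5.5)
Your proposal is correct and follows essentially the same route as the paper: you decompose $\mU_\psi(\g)=\langle C_2\rangle-\|M_\psi\|^2$, bound $\|M_\psi\|^2\leqslant\langle\Lambda,\Lambda\rangle$ by compact-group conjugation into the Cartan subalgebra plus the weight-spectrum estimate, obtain saturation at the highest-weight vector from the weight-space identity, and convert $2\langle\Lambda,\delta\rangle=\vec{\lambda}^{\T}C^{-1}\vec{\alpha}$ via $\delta=\sum_i\omega_i$ and $\langle\omega_i,\omega_j\rangle=\tfrac12 C^{-1}_{ij}\langle\alpha_j,\alpha_j\rangle$ (the paper's convention gives $\alpha_i=\sum_j C_{ij}\omega_j$, so no transpose is needed). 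The differences are minor: you rotate the state so its moment lies in $\mathfrak h$ and finish by dominance (or by $m\in\mathrm{conv}(W\Lambda)$), while the paper conjugates the test element $X$ and uses $\langle\mu,\mu\rangle\leqslant\langle\Lambda,\Lambda\rangle$; also, you quote the Casimir eigenvalue $\langle\Lambda,\Lambda+2\delta\rangle$ as standard (Theorem~\ref{thm:Uncertainty_Conservation} alone does not fix it), whereas the paper derives it from the Chevalley form of $C_2$ acting on the highest-weight vector.
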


Bound in Eq.~\eqref{eq:MT_Optimal_Uncertainty} separates the algebraic structure encoded by the commutation relations, through $\vec{\xi}$, from its realization in a given representation, through $\vec{\lambda}$. 
Together, these determine the minimum quantum uncertainty over the full irreducible representation space, extending the state-independent bound of Ref.~\cite{deGuise2018} beyond states restricted to a fixed weight space. 
This establishes a representation-wide limit imposed by noncommutativity, with the optimal bounds for the different Lie algebra types listed in Tab.~\ref{tab:Optimal_Bound}.

\begin{table}[t]
\centering

\begin{tblr}{
    width = \columnwidth,
    colspec = {c || X[l]},
    cells = {m},
    column{1} = {c},
    row{1} = {font=\bfseries, c},
    rowsep = 5pt,
    colsep = 5pt,
    hline{1,Z} = {1}{-}{0.5pt},
    hline{1,Z} = {2}{-}{0.5pt},
    hline{2,6} = {-}{0.4pt},
}

Type
&
Adjoint Representation
\\

$A_n$
&
$\mU_{\psi}^{\mathrm{ad}}(\raisebox{-0.4ex}{\includegraphics[scale=0.57]{DK_A_n.pdf}})\geqslant 2n$
\\

$B_n$
&
$\mU_{\psi}^{\mathrm{ad}}( \raisebox{-1ex}{\includegraphics[scale=0.57]{DK_B_n.pdf}} )\geqslant 4(n-1)$
\\

$C_n$
&
$\mU_{\psi}^{\mathrm{ad}}( \raisebox{-1ex}{\includegraphics[scale=0.57]{DK_C_n.pdf}} )\geqslant 2n$
\\

$D_n$
&
$\mU_{\psi}^{\mathrm{ad}}( \raisebox{-4ex}{\includegraphics[scale=0.57]{DK_D_n.pdf}} )\geqslant 4n-6$
\\

$G_2$
&
$\mU_{\psi}^{\mathrm{ad}}( \raisebox{-1ex}{\includegraphics[scale=0.7]{DK_G_2.pdf}} )\geqslant 6$
\\

$F_4$
&
$\mU_{\psi}^{\mathrm{ad}}( \raisebox{-1ex}{\includegraphics[scale=0.57]{DK_F_4.pdf}} )\geqslant 16$
\\

$E_6$
&
$\mU_{\psi}^{\mathrm{ad}}( \raisebox{-0.4ex}{\includegraphics[scale=0.57]{DK_E_6.pdf}} )\geqslant 22$
\\

$E_7$
&
$\mU_{\psi}^{\mathrm{ad}}( \raisebox{-0.4ex}{\includegraphics[scale=0.57]{DK_E_7.pdf}} )\geqslant 34$
\\

$E_8$
&
$\mU_{\psi}^{\mathrm{ad}}( \raisebox{-0.4ex}{\includegraphics[scale=0.57]{DK_E_8.pdf}} )\geqslant 58$
\\

\end{tblr}

\caption{
\textbf{Adjoint Representation Bounds.}
Explicit optimal state-independent bounds of Thm.~\ref{thm:Simple_Bound} for the adjoint representations of all simple Lie algebras.
}

\label{tab:Adj_Rep}

\end{table}

Defining and adjoint representations connect the optimal bound to complementary settings in quantum physics. 
For the classical families, the defining representation gives the standard matrix action on the underlying state space, linking the abstract algebraic structure to finite dimensional quantum observables. 
The corresponding bounds, collected in Tab.~\ref{tab:Def_Rep}, make the two ingredients of Eq.~\eqref{eq:MT_Optimal_Uncertainty} explicit: 
the root system fixes $\vec{\xi}$, while the representation enters through the fundamental highest weight $\vec{\lambda}=(1,0,\ldots,0)^{\T}$. 
By contrast, the adjoint representation describes the action on the algebra itself, a perspective central to recent investigations of trainability and barren plateaus in parameterized quantum circuits~\cite{Fontana2024,Ragone2024}. 
Observables supported on the dynamical Lie algebra remain closed under Heisenberg evolution, enabling trainability and barren plateaus to be characterized through this action. Table~\ref{tab:Adj_Rep} presents the optimal uncertainty bounds for the adjoint representations of all simple Lie algebras.

For the classical families, the defining representation gives the standard matrix action on the underlying space, linking the abstract algebraic structure to finite-dimensional quantum observables. 
The corresponding bounds, collected in Tab.~\ref{tab:Def_Rep}, make the two ingredients of Eq.~\eqref{eq:MT_Optimal_Uncertainty} explicit: the root system fixes $\vec{\xi}$, while the representation enters through the fundamental highest weight $\vec{\lambda}=(1,0,\ldots,0)^{\T}$. 
The adjoint representation instead describes the algebra's action on its own generators. Its relevance to parameterized quantum circuits arises from the closure of observables in the dynamical Lie algebra under Heisenberg evolution, allowing algebraic approaches to trainability and barren plateaus. Specializing Thm.~\ref{thm:Simple_Bound} to the adjoint highest weight yields the corresponding optimal uncertainty bounds for all simple Lie algebras.

Semisimplicity introduces no additional optimization beyond the simple case. 
Indeed, a semisimple Lie algebra decomposes into mutually orthogonal simple ideals,
while every finite-dimensional irreducible representation factorizes accordingly into
a tensor product of irreducible representations of these ideals. 
The Lie algebraic uncertainty is therefore additive across the simple components, even
when the quantum state is entangled across the corresponding representation factors.
Consequently, the optimal bound for a semisimple Lie algebra follows directly by summing
the optimal bounds of its simple constituents. 
Combining this additivity with Thm.~\ref{thm:Simple_Bound} gives the following corollary.

\begin{cor}[\bf{Semisimple Bound}]
\label{cor:Semisimple_Bound}
A complex semisimple Lie algebra
$\g=\bigoplus_{k=1}^{s}\g_k$
with finite dimensional irreducible representation
$V(\Lambda)\simeq\bigotimes_{k=1}^{s}V_k(\Lambda_k)$
inherits its optimal uncertainty bound from $\g_k$.
Writing $\vec{\lambda}^{(k)}$ for the Dynkin coefficient vector of $\Lambda_k$
and $\vec{\xi}^{(k)}$ for the corresponding algebraic coefficient vector associated with $\g_k$,
every quantum state $\psi$ obeys
\begin{align}
\mU_{\psi}(\g)
\geqslant
\sum_{k=1}^{s}
\bigl(\vec{\lambda}^{(k)}\bigr)^{\T}\vec{\xi}^{(k)}.
\end{align}
Equality is attained by a tensor product of highest-weight states, one for each $\g_k$, establishing the optimality of the bound.
\end{cor}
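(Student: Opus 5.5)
We prove the corollary by decomposing $\mU_\psi(\g)$ into simple components, then applying Thm.~\ref{thm:Simple_Bound} to each.

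\emph{Choice of basis.} By Lem.~\ref{lem:Basis_Invariance} we may pick any basis. Take $\mathbf X=\bigcup_k\mathbf X^{(k)}$, where $\mathbf X^{(k)}$ is a basis of the simple ideal $\g_k$. The ideals are mutually orthogonal with respect to $(\cdot,\cdot)_{\g}$: the Killing form vanishes between distinct ideals, since $\operatorname{ad}_X\operatorname{ad}_Y$ maps into $\g_k\cap\g_l=0$. We fix the normalization of $\kappa$ on each $\g_k$ separately, so that long roots of $\g_k$ have squared length two. This is the normalization under which $\vec\xi^{(k)}$ is defined, and it is the convention the corollary implicitly uses. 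With this choice $\mathrm G(\mathbf X)$ is block diagonal, and so is its inverse.

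\emph{Additivity.} The trace in Eq.~\eqref{eq:MT_Lie_Uncertainty} then involves only the diagonal blocks of $\mathrm{Cov}_\psi(\mathbf X)$, which gives
\begin{align}
\mU_\psi(\g)=\sum_{k=1}^s\Tr\bigl[\mathrm G(\mathbf X^{(k)})^{-1}\mathrm{Cov}_\psi(\mathbf X^{(k)})\bigr].\notag
\end{align}
In the irreducible representation $V(\Lambda)\simeq\bigotimes_k V_k(\Lambda_k)$, an element $X\in\g_k$ acts as $\pi_k(X)$ on the $k$-th factor and as the identity elsewhere. Hence the $k$-th summand equals the Lie algebraic uncertainty of $\g_k$ in $V_k(\Lambda_k)$, evaluated on the reduced state $\rho_k=\Tr_{\neq k}\psi$. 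This holds because every expectation value $\langle \pi_k(X_i)^\dagger\pi_k(X_j)\rangle$ and $\langle\pi_k(X_j)\rangle$ depends only on $\rho_k$. This step is exactly the claim that additivity survives entanglement.

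\emph{Mixed reduced states (main obstacle).} Thm.~\ref{thm:Simple_Bound} is stated for quantum states $\psi$, and $\rho_k$ is generally mixed. We handle this with concavity. Write $\rho_k=\sum_a p_a\psi_a$. The map $\rho\mapsto\Tr[M\,\mathrm{Cov}_\rho]$ is the linear term $\Tr[M\langle X_i^\dagger X_j\rangle_\rho]$ minus the term $\sum_{ij}M_{ij}\overline{\langle X_i\rangle}\langle X_j\rangle$. The latter is a positive semidefinite Hermitian form in the vector $(\langle X_j\rangle_\rho)_j$, so it is convex in $\rho$ because $M\succeq0$. The uncertainty is therefore concave in $\rho$, and
\begin{align}
\mU_{\rho_k}(\g_k)\geqslant\sum_a p_a\,\mU_{\psi_a}(\g_k)\geqslant(\vec\lambda^{(k)})^{\T}\vec\xi^{(k)}.\notag
\end{align}
Here the last inequality is Thm.~\ref{thm:Simple_Bound} applied to each pure $\psi_a\in V_k(\Lambda_k)$. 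Summing over $k$ yields the stated inequality.

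\emph{Optimality.} For $\psi=\bigotimes_k\psi_{\Lambda_k}$, each reduced state $\rho_k=\psi_{\Lambda_k}$ is a highest-weight state of $\g_k$. By Thm.~\ref{thm:Simple_Bound} it attains $(\vec\lambda^{(k)})^{\T}\vec\xi^{(k)}$, so the sum is attained. The state $\psi$ is annihilated by all positive-root operators of $\g$, so it is the highest-weight vector of $V(\Lambda)$.
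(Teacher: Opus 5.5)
Your proof is correct and follows the paper's own route: each simple ideal's contribution depends only on the reduced state on $V_k(\Lambda_k)$, concavity of the variance handles mixed marginals, Thm.~\ref{thm:Simple_Bound} bounds each term, and a product of highest-weight states attains the sum. The differences are minor: you derive additivity from the Gram-matrix definition using a block-diagonal basis adapted to the ideals, whereas the paper builds it in through the orthonormal anti-Hermitian basis of each factor; and in your concavity step the subtracted form should carry $M_{ji}$ rather than $M_{ij}$, which is harmless since $M^{\T}=\overline{M}\succeq0$.
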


Corollary~\ref{cor:Semisimple_Bound} reduces semisimple uncertainty to a factor-wise calculation:
each simple component contributes an optimal bound fixed by its representation, and the global optimum is their sum, with no further optimization.
For
$\mathfrak{su}(3)\oplus\mathfrak{so}(8)\oplus\mathfrak e_6$, of types $A_2$, $D_4$, and $E_6$, the adjoint representations contribute $4$, $10$, and $22$ (see Tab.~\ref{tab:Adj_Rep}), respectively, yielding
\begin{widetext}
\begin{align}
\mU_{\psi}^{\mathrm{ad}}
(
\mathfrak{su}(3)\oplus\mathfrak{so}(8)\oplus\mathfrak e_6
)
\geqslant 
\min_{\psi_1}
\mU_{\psi_1}^{\mathrm{ad}}(\raisebox{-0.4ex}{\includegraphics[scale=0.57]{DK_A_2.pdf}})
+
\min_{\psi_2}
\mU_{\psi_2}^{\mathrm{ad}}( \raisebox{-3.6ex}{\includegraphics[scale=0.57]{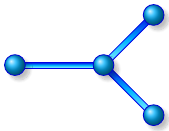}}  )
+
\min_{\psi_3}
\mU_{\psi_3}^{\mathrm{ad}}( \raisebox{-0.4ex}{\includegraphics[scale=0.57]{DK_E_6.pdf}} )
=36.
\end{align}
Replacing the adjoint representations of $A_2$, $D_4$ and $E_6$ by their defining representations gives optimal bounds of $2$, $6$ and $16$ (see Tab.~\ref{tab:Def_Rep}), respectively. 
These imply the optimal uncertainty relation for their semisimple direct sum:
\begin{align}
\mU_{\psi}^{\mathrm{def}}
(
\mathfrak{su}(3)\oplus\mathfrak{so}(8)\oplus\mathfrak e_6
)
\geqslant 
\min_{\psi_1}
\mU_{\psi_1}^{\mathrm{def}}(\raisebox{-0.4ex}{\includegraphics[scale=0.57]{DK_A_2.pdf}})
+
\min_{\psi_2}
\mU_{\psi_2}^{\mathrm{def}}( \raisebox{-3.6ex}{\includegraphics[scale=0.57]{DK_D_4.pdf}}  )
+
\min_{\psi_3}
\mU_{\psi_3}^{\mathrm{def}}( \raisebox{-0.4ex}{\includegraphics[scale=0.57]{DK_E_6.pdf}} )
=24.
\end{align}
\end{widetext}

The contrast between these optimal bounds identifies representation as an essential ingredient of quantum uncertainty: 
identical commutation relations can impose different irreducible bounds on the incompatible observables. 
Minimum uncertainty therefore reflects both the algebraic structure and its realization on the system.


\noindent \textbf{Discussions}--Finding an uncertainty relation is often far easier than determining its sharp quantum limit.
Yet only an optimal bound quantifies the incompatibility associated with a given set of observables.
Since the earliest uncertainty relations, closed-form optimal state-independent bounds have been available only in a comparatively small number of highly structured settings, such as canonical position-momentum pairs and angular momentum.
Beyond these cases, identifying the minimum uncertainty generally requires a nontrivial optimization, whose complexity rapidly increases with the dimension and number of operators.
Here, this optimization is replaced entirely by algebra.
Whenever the operators form a basis of a Lie algebra, the optimal state-independent uncertainty follows directly from the root structure of the algebra and the highest weight of its representation.
The covariance formulation makes this result independent of the chosen operator basis and applies equally to Hermitian and non-Hermitian generators, while restriction to a fixed weight space sharpens the uncertainty inequality into a conservation law.
Across the full irreducible representation, specifying the Lie algebra generated by the operators and its representation fixes the optimal state-independent uncertainty exactly, without optimization, establishing Lie algebra and representation theory as a direct route to the fundamental limits imposed by quantum incompatibility.

Moving to continuous-variable systems poses a fundamental question: 
does the same algebraic picture persist beyond finite-dimensional quantum systems?
This question is especially relevant to quantum optics, where continuous variables provide the natural setting for bosonic degrees of freedom.
Lie algebraic methods already play a central role elsewhere in quantum technologies: 
dynamical Lie algebras characterize controllability and accessible evolutions in quantum control, while in quantum machine learning they organize expressibility and the scaling of gradient underlying trainability~\cite{Fontana2024,Ragone2024}.
The recurrence of the same mathematical structure in these settings and in the uncertainty limits derived here raises the possibility of a deeper physical connection.
In particular, operator incompatibility may constrain optimization landscapes, linking fundamental quantum uncertainty to the trainability of parameterized quantum circuits.
Uncovering such connections could reveal a common origin of intrinsic limitations across quantum foundations, quantum control and quantum machine learning, pointing towards a more unified algebraic understanding of quantum systems.


\section*{Acknowledgments}
Y. X. thanks Bing Yu and Xiaoyong Yang for valuable discussions, and gratefully acknowledges Yuxiang Yang for his hospitality and insightful feedback during Y. X.'s visit to the University of Hong Kong in September 2026.
This research is supported by A*STAR under its Career Development Fund (C243512002).
X. H. acknowledges support from the China Scholarship Council and the Research Fund of Jianghan University (Grant No.~2023JCYJ08). 
N. J. is funded by the Simons Foundation (Grant No.~MP-TSM-00002518).
J. Z. acknowledges financial support from the Natural Science Foundation of China (Grant No.~12571026).


\section*{AI USAGE}
ChatGPT was used to refine the writing, assist with the preparation of Dynkin diagrams, and support calculations for the entanglement detection analysis in the Appendices; 
all intellectual contributions -- the conceptual framework, main ideas, and proofs -- are entirely the authors' own, developed without artificial intelligence, or whatever future generations may call superintelligence.


\bibliography{Bib}


\clearpage
\onecolumngrid
\appendix

\section*{Appendices}
These appendices develop the mathematical framework underlying the results in the main text, from the structure of Lie algebras to explicit uncertainty bounds and their applications. 
Appendix~\ref{sec:Lie_Theory} reviews the necessary Lie theory, including root decompositions, the Killing form, observable bases and highest weight representations.
Appendix~\ref{sec:Lie_Algebraic_Quantum_Uncertainty} establishes a basis independent formulation of quantum uncertainty relation, derives its conservation within weight spaces and its optimal lower bound for arbitrary states, and expresses the bound in terms of Dynkin coefficients and the Cartan matrix. 
Appendix~\ref{sec:Simple_Lie_Algebraic_Quantum_Uncertainty} gives explicit bounds for all classical and exceptional simple Lie algebras. 
Appendix~\ref{sec:entanglement} extends the framework to semisimple algebras and develops an entanglement criterion from local uncertainty bounds, demonstrating the detection of entanglement in a state with positive partial transpose.
Figure~\ref{fig:Outline} outlines the structure of the appendices.

\addtocontents{toc}{\protect\AppendixTocStart}
\appendixtableofcontents

\begin{figure}[t]
    \centering
    \includegraphics[width=1\columnwidth]{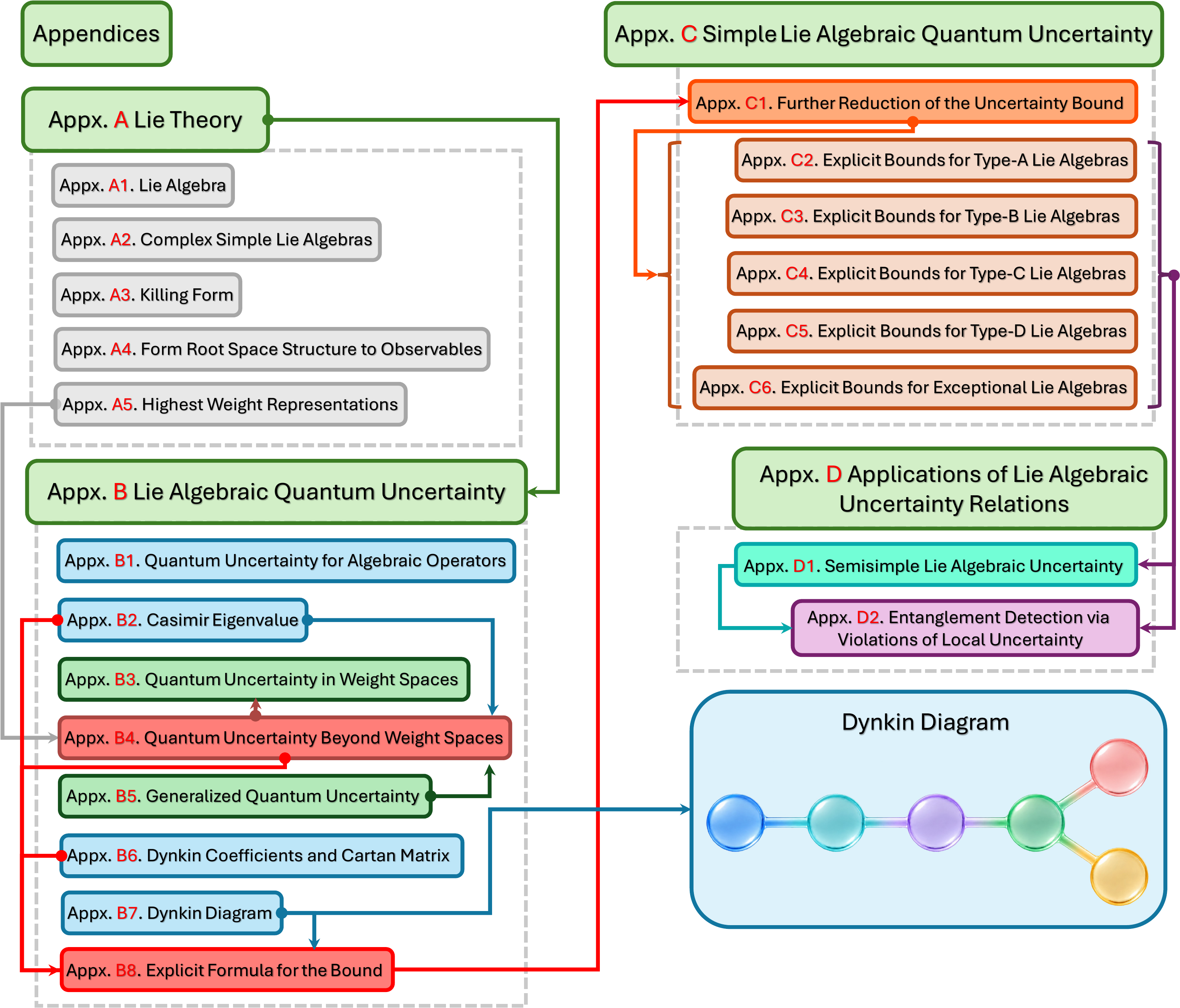}
    \caption{
    \textbf{Structure of Appendices.}
    Organization and logical dependencies of the appendices.
    Appendix~\ref{sec:Lie_Theory} introduces the Lie theory underlying the uncertainty framework developed in Appendix~\ref{sec:Lie_Algebraic_Quantum_Uncertainty}, including exact uncertainty relations within weight spaces and optimal bounds for arbitrary states. 
    Appendix~\ref{sec:Simple_Lie_Algebraic_Quantum_Uncertainty} evaluates these bounds for all classical and exceptional simple Lie algebras. Appendix~\ref{sec:entanglement} extends the framework to semisimple algebras and derives collective uncertainty criteria for entanglement detection. Arrows indicate connections between the mathematical ingredients and results.
    }
    \label{fig:Outline}
\end{figure}


\section{Lie Theory}\label{sec:Lie_Theory}
The derivation of optimal Lie algebraic uncertainty relations rests on representation-theoretic structures, which we review first. 
We begin with the basic notion of a Lie algebra, and provide the algebraic language in Subsec.~\ref{subsec:Lie_Algebra}. 
In Subsec.~\ref{subsec:Classification_Lie}, we then specialize to complex simple Lie algebras and recall their classification. 
The Killing form is introduced in Subsec.~\ref{subsec:Killing_form} as the canonical invariant bilinear form, allowing roots, weights and Cartan generators to be compared through an intrinsic geometry. 
This geometry is then connected to physics by passing from the root-space decomposition to observables in Subsec.~\ref{subsec:Root_Space_Observables}. 
Finally, in Subsec.~\ref{subsec:Highest_Weight_Representations}, we review highest weight representations, where the highest weight provides the geometric structure that later determines both the optimal uncertainty bounds and the states that attain them. 


\subsection{Lie Algebra}\label{subsec:Lie_Algebra}

Continuous symmetries and non-commutative structures in quantum theory are naturally encoded in the language of Lie algebras~\cite{woit2017quantum}.
Beyond their foundational role in describing symmetry generators and dynamical transformations, Lie algebras provide the intrinsic algebraic framework for characterizing incompatibility relations among observables, thereby offering a structural route to the formulation of uncertainty relations.
The discussions that follow therefore rely heavily on several fundamental concepts and structural features of Lie algebras, together with the associated notation and mathematical tools introduced in this section.
Unless otherwise specified, our treatment follows the standard theory of semisimple Lie algebras and their representations in the classical textbook of Humphreys~\cite{Humphreys1972}.
Meanwhile, all Lie algebras and representations considered here are finite dimensional, and the ground field is $\mathbb R$ or $\mathbb C$. Root and highest-weight theory will be formulated for complex semisimple Lie algebras.
We begin by recalling the definition of a Lie algebra.

\begin{mydef}{Lie Algebra}{Lie_Algebra}
    A Lie algebra $\g$ over a field $\mathbb{F}= \mathbb{R}$ or  $\mathbb{C}$ is a vector space equipped with a bilinear operation $[\cdot,\cdot]:\g\times\g\to\g$ referred to as the Lie bracket, which satisfies the following properties:
    \begin{itemize}
      \item \textbf{Anti-symmetry}. $[x,y]=-[y,x]$,
      \item \textbf{Jacobi Identity}. $[x,[y,z]]+[y,[z,x]]+[z,[x,y]]=0$,
    \end{itemize}
    for all $x$, $y$, and $z\in\g$.
\end{mydef}

Arising as the infinitesimal counterparts of continuous symmetry groups, Lie algebras form one of the foundational mathematical languages of modern theoretical physics.
Canonical examples include $\mathfrak{su}(2)$, which underlies angular momentum and spin, and $\mathfrak{su}(3)$, which captures the color symmetry structure of quantum chromodynamics.
More broadly, through Noether's theorem, continuous symmetries represented by Lie groups and their associated Lie algebras are fundamentally connected to conserved quantities, furnishing a unifying structural framework that permeates quantum mechanics, quantum field theory, condensed matter physics, and quantum information theory.
Among the simplest and most fundamental examples is the special linear Lie algebra $\mathfrak{sl}(2,\mathbb C)$, defined by
\begin{align}\label{eq:sl_2_C}
    \mathfrak{sl}(2,\mathbb C)\coloneqq
    \left\{
    X\in M_2(\mathbb C)
    \;\middle|\;
    \Tr[X]=0
    \right\}.
\end{align}
Here, $M_2(\mathbb C)$ denotes the set of all $2\times2$ matrices with complex entries, and $\mathfrak{sl}(2,\mathbb C)$ consists of those matrices with vanishing trace.

A complex Lie algebra encodes the underlying algebraic structure, including its roots, weights and representations, but does not by itself specify the generators relevant to a physical realization. 
This additional structure is supplied by a real form, namely a real Lie subalgebra whose complexification recovers the original algebra. 
Different real forms of the same complex Lie algebra can exhibit markedly different properties, most notably compactness or non-compactness, leading to distinct representation-theoretic and physical interpretations. 
In quantum mechanics, the compact real form is particularly natural because its finite-dimensional representations can be chosen unitary, with its anti-Hermitian generators corresponding, after multiplication by $\mathrm{i}$, to Hermitian observables. 
The complex Lie algebra $\g$ is therefore well suited to describing the abstract root and representation structure, whereas the compact real form $\g_{\mathbb R}$ provides the natural setting for connecting this structure to unitary symmetries and physical quantities such as quantum uncertainty.
For example, the compact real form of $\mathfrak{sl}(2,\mathbb C)$ is the special unitary Lie algebra $\mathfrak{su}(2)$, given by
\begin{align}\label{eq:su_2}
    \mathfrak{su}(2)
    \coloneqq
    \left\{
    X\in M_2(\mathbb C)
    \;\middle|\;
    X^\dagger=-X,
    \ \Tr[X]=0
    \right\}.
\end{align}
A canonical basis is provided by the anti-Hermitian matrices
\begin{align}\label{eq:T_xyz}
    T_x\coloneqq-\frac{\mathrm{i}}{2}\sigma_x
    =
    -\frac{\mathrm{i}}{2}
    \begin{pmatrix}
    0 & 1 \\
    1 & 0
    \end{pmatrix},
    \quad
    T_y\coloneqq-\frac{\mathrm{i}}{2}\sigma_y
    =
    \frac{1}{2}
    \begin{pmatrix}
    0 & -1 \\
    1 & 0
    \end{pmatrix},
    \quad
    T_z\coloneqq-\frac{\mathrm{i}}{2}\sigma_z
    =
    -\frac{\mathrm{i}}{2}
    \begin{pmatrix}
    1 & 0 \\
    0 & -1
    \end{pmatrix},
\end{align}
such that
\begin{align}\label{eq:su_2_T_xyz}
\mathfrak{su}(2)=\mathrm{span}\{T_x, T_y, T_z\}.
\end{align}
These generators obey the commutation relations
\begin{align}
    [T_x,T_y]=T_z,
    \qquad
    [T_y,T_z]=T_x,
    \qquad
    [T_z,T_x]=T_y,
\end{align}
with $\sigma_x$, $\sigma_y$, $\sigma_z$ denoting the Pauli matrices given by
\begin{align}
\sigma_x \coloneqq
\begin{pmatrix}
0 & 1 \\
1 & 0
\end{pmatrix},
\qquad
\sigma_y \coloneqq
\begin{pmatrix}
0 & -\mathrm{i} \\
\mathrm{i} & 0
\end{pmatrix},
\qquad
\sigma_z \coloneqq
\begin{pmatrix}
1 & 0 \\
0 & -1
\end{pmatrix}.
\end{align}
The physical spin operators are obtained by restoring the angular momentum scale,
\begin{align}\label{eq:su2_J_xyz}
    J_i
    \coloneqq
    \frac{\hbar}{2}\sigma_i,
    \quad
    i\in\{x,y,z\}.
\end{align}
In units with $\hbar=1$, one has $J_i=\sigma_i/2=\mathrm{i}T_i$. 
Here and throughout, $\mathrm{i}$ denotes the imaginary unit, $\mathrm{i}^2=-1$, whereas italic $i$ is an index.
With this normalization, the Pauli relation
\begin{align}
    [\sigma_i,\sigma_j]=
    2\mathrm{i}\epsilon_{ijk}\sigma_k
\end{align}
is precisely the spin-$1/2$ realization of the angular momentum commutation rule
\begin{align}
    [J_i,J_j]=
    \mathrm{i}\epsilon_{ijk}J_k.
\end{align}
Here, $\epsilon_{ijk}$ denotes the Levi-Civita symbol, the totally antisymmetric tensor that encodes the cyclic structure of angular momentum.
The connection between $\mathfrak{su}(2)$ and $\mathfrak{sl}(2,\mathbb C)$ is expressed through the complexification isomorphism:
\begin{align}\label{eq:Complexification_su_2}
    \mathfrak{su}(2)_{\mathbb C}
    \coloneqq
    \mathfrak{su}(2)\otimes_{\mathbb R}\mathbb C
    \cong
    \mathfrak{sl}(2,\mathbb C).
\end{align}

With the basic structure of Lie algebras in place, we now turn to the internal substructures that govern their decomposition. 
The first is an ideal, the Lie algebraic counterpart of a normal subgroup in group theory.
An ideal specifies a subspace stable under the adjoint action, thereby providing the fundamental mechanism through which the internal structure of a Lie algebra can be analyzed and decomposed.
Indeed, the classification into simple, semisimple, and solvable Lie algebras is formulated precisely in terms of the existence and arrangement of ideals, making them indispensable to the structural and representation-theoretic study of Lie algebras.

\begin{mydef}{Ideals}{Ideal}
A subspace $I\subseteq\g$ is called an ideal of the Lie algebra $\g$ if it is invariant under the adjoint action of $\g$, namely, if
\begin{align}
\mathrm{ad}_x[y]\coloneqq[x,y]\in I
\end{align}
for all $x\in\g$ and $y\in I$.
In compact form, this condition is expressed as
\begin{align}
[\g,I]\subseteq I.
\end{align}
Thus, ideals characterize those subspaces preserved by the intrinsic adjoint structure of the Lie algebra.
\end{mydef}

The Lie bracket (see Def.~\ref{def:Lie_Algebra}) induces a natural hierarchy of nested subalgebras, giving rise to the derived series, which characterizes the intrinsic non-commutative structure of the Lie algebra.
\begin{align}
    \g^{(0)}&\coloneqq\g,\label{eq:derived_algebra_g_0}\\
    \g^{(1)}&\coloneqq[\g,\g]
    =
    \mathrm{span}
    \left\{
    [x,y]
    \mid
    x,y\in\g
    \right\},\label{eq:derived_algebra_g_1}\\
    &\cdots\\
    \g^{(k+1)}&\coloneqq[\g^{(k)},\g^{(k)}].
\end{align}
The Lie algebra $\g$ is said to be abelian if the derived algebra $\g^{(1)}$ (see Eq.~\eqref{eq:derived_algebra_g_1}) vanishes, i.e., $\g^{(1)}=0$.
Equivalently, all elements of $\g$ commute pairwise, so that the algebra carries no intrinsic non-commutative structure.
The derived series captures how non-commutativity propagates through successive commutators, providing the foundation for the notion of solvability.

\begin{mydef}{Solvable Lie algebras}{Solvable}
A Lie algebra $\g$ is called solvable if its derived series terminates after finitely many steps, namely, if there exists an integer $m\geqslant 1$ such that
\begin{align}
\g^{(m)}=0.
\end{align}
\end{mydef}

A fundamental structural property is that the sum of two solvable ideals $I_1$ and $I_2$ of a Lie algebra $\g$ is itself solvable.
This stability under addition ensures the existence of a unique maximal solvable ideal, called the radical of $\g$, denoted by $\mathrm{Rad}(\g)$.
The radical therefore encapsulates the entire solvable sector inherent to the algebra and thereby provides the natural starting point for the definition of semisimple Lie algebras.

\begin{mydef}{Simple Lie algebra}{Simple_Lie_Algebra}
A Lie algebra $\g$ is called simple if it is non-abelian and has no nontrivial ideals; that is, its only ideals are ${0}$ and $\g$ itself.
\end{mydef}

The notion of simplicity identifies Lie algebras that cannot be decomposed further through ideals. 
Semisimplicity is the natural extension of this idea: 
it allows several simple components to appear simultaneously, while excluding abelian or solvable parts. 
Thus, semisimple Lie algebras are precisely those that can be built from simple Lie algebras as independent summands.

\begin{mydef}{Semisimple Lie algebra}{Semisimple_Lie_Algebra}
A Lie algebra
$\g$
is called semisimple if it decomposes into a direct sum of simple Lie algebras,
\begin{align}
    \g
    =
    \bigoplus_k
    \g_k ,
\end{align}
where each
$\g_k$
is simple.
\end{mydef}

A convenient way to test whether a Lie algebra $\g$ is semisimple is to examine its radical. The Lie algebra $\g$ is semisimple if and only if its radical is trivial, $\mathrm{Rad}(\g)=0$. 
In other words, $\g$ contains no nontrivial solvable ideals (see Def.~\ref{def:Solvable}).
Semisimple Lie algebras exhibit a highly symmetric internal structure, including Cartan subalgebras and root-space decompositions, root systems, invariant bilinear forms, and highest weight representation theory. These structures provide the algebraic and geometric framework underlying the Lie algebraic uncertainty relations developed in this work.

Since the construction of Lie algebraic uncertainty relations in this work relies fundamentally on representation theory, we now turn to the notion of a representation of a Lie algebra.

\begin{mydef}{Representation of a Lie Algebra}{repn_Lie_alg}
Let $\g$ be a Lie algebra over a field $\mathbb F$, with $V$ a vector space over $\mathbb F$. 
A representation of $\g$ on $V$ is a linear, bracket-preserving map
\begin{align}
    \pi:\g\to\mathfrak{gl}(V),
\end{align}
where $\mathfrak{gl}(V)$ is the Lie algebra of linear endomorphisms of $V$; equivalently,
\begin{align}
    \pi\left([x,y]\right)
    =
    [\pi(x),\pi(y)]
\end{align}
for all $x,y\in\g$.
\end{mydef}
For a real Lie algebra acting on a complex vector space, the representation map is real-linear and its values are complex-linear endomorphisms.
The simplest example is the defining representation $\mathrm{def}$, arising from the standard matrix realization of a Lie algebra. 
When $\g$ is realized as a matrix Lie algebra acting on a vector space $V$, its defining representation is the natural action
\begin{align}\label{eq:def_rep}
    \pi_{\mathrm{def}}(x)\coloneqq x,
\end{align}
for all $x\in\g$. 
Thus, the notion of a defining representation refers to a particular matrix realization of $\g$, rather than to the abstract Lie algebra itself.

By contrast, every Lie algebra carries a canonical representation on itself: 
the adjoint representation $\mathrm{ad}$. 
For each $x\in\g$, define
\begin{align}\label{eq:ad_rep}
    \mathrm{ad}_x(y)\coloneqq[x,y].
\end{align}
The Jacobi identity (see Def.~\ref{def:Lie_Algebra}) implies
\begin{align}
    [\mathrm{ad}_x,\mathrm{ad}_y]
    =
    \mathrm{ad}_{[x,y]},
\end{align}
and hence the map 
\begin{align}
    \mathrm{ad}:\g\to\mathrm{End}(\g)
\end{align}
defines a representation of $\g$ on its underlying vector space.
This internal representation singles out an important class of subalgebras. A subalgebra $\mathfrak h\subseteq\g$ is called toral if each operator $\mathrm{ad}_h$, with $h\in\mathfrak h$, is semisimple, equivalently diagonalizable over $\mathbb C$. 
A maximal toral subalgebra is one that is maximal among toral subalgebras with respect to inclusion.

For semisimple Lie algebras over $\mathbb C$, maximal toral subalgebras are exactly Cartan subalgebras. 
Their elements, often called Cartan generators in physics, form a maximal toral subalgebra of the algebra. 
This sector provides the simultaneous eigenspace structure that organizes the Lie algebra into its root-space decomposition. 
The typical example is $\mathfrak{su}(2)$ (see Eq.~\eqref{eq:su_2}): 
the Cartan subalgebra is $\mathbb R(-\mathrm{i}J_z)=\mathrm {span}_{\mathbb R} \{-\mathrm{i}J_z\}$; its complexification is the Cartan subalgebra $\mathbb C J_z$ of $\mathfrak{sl}(2,\mathbb C)$  (with $\hbar=1$).

\begin{mydef}{Roots}{Roots}
Let $\g$ be a complex semisimple Lie algebra with Cartan subalgebra $\mathfrak h$, and let $\mathfrak h^*$ denote the dual space of $\mathfrak h$. The adjoint action of $\mathfrak h$ decomposes $\g$ into simultaneous eigenspaces. For each $\alpha\in\mathfrak h^*$, define
\begin{align}
    \g_\alpha
    = \left\{x\in\g
    \;\middle|\;
    \mathrm{ad}_h(x)=[h,x]=\alpha(h)x,
    \ \forall h\in\mathfrak h
    \right\}.
\end{align}
A nonzero functional $\alpha\in\mathfrak h^*$ for which $\g_\alpha\neq\{0\}$ is called a root, and the collection of all such roots is denoted by $\Phi\subset\mathfrak h^*$. 
This gives the canonical root space decomposition
\begin{align}\label{eq:Root_Space_Decomposition}
    \g
    =
    \underbrace{\mathfrak h}_{\text{Cartan Subalgebra}}
    \oplus
    \underbrace{
    \bigoplus_{\alpha\in\Phi}
    \g_\alpha
    }_{\text{Root Space}}.
\end{align}
\end{mydef}

The root-space decomposition, once a set of positive roots $\Phi^+$ has been chosen, separates the generators of $\g$ into three sectors: 
(\romannumeral1) the Cartan subalgebra $\mathfrak h$, whose elements commute with one another, 
(\romannumeral2) the positive-root spaces $\g_{\alpha}$ with $\alpha\in\Phi^+$, and (\romannumeral3) the negative-root spaces $\g_{-\alpha}$ with $\alpha\in\Phi^+$. 
The latter two sectors play the role of raising and lowering operators relative to the Cartan subalgebra. 
This structure will be central to our derivation of Lie algebraic uncertainty relations. 
To build intuition, we first consider the decomposition of $\mathfrak{sl}(2,\mathbb C)$, the complexification of $\mathfrak{su}(2)$ (see Eq.~\eqref{eq:su_2}), where the raising and lowering operators are defined as follows:
\begin{align}
\label{eq:example_su21}
    J_{+}
    &\coloneqq 
    J_x+ \mathrm{i}J_y
    =
    \frac12 \begin{pmatrix}
    0 & 1 \\
    1 & 0
    \end{pmatrix}
    +
    \mathrm{i}\cdot
    \frac12 \begin{pmatrix}
    0 & -\mathrm{i} \\
    \mathrm{i} & 0
    \end{pmatrix}
    =
    \begin{pmatrix}
    0 & 1 \\
    0 & 0
    \end{pmatrix},\\
    J_{-}
    &\coloneqq 
    J_x- \mathrm{i}J_y
    =
    \frac12 \begin{pmatrix}
    0 & 1 \\
    1 & 0
    \end{pmatrix}
    -
    \mathrm{i}\cdot
    \frac12 \begin{pmatrix}
    0 & -\mathrm{i} \\
    \mathrm{i} & 0
    \end{pmatrix}
    =
    \begin{pmatrix}
    0 & 0 \\
    1 & 0
    \end{pmatrix},
\end{align}
where $J_x$ and $J_y$ are defined in Eq.~\eqref{eq:su2_J_xyz}.
The corresponding root system is given by
\begin{align}\label{eq:sl_2_root_system}
    \Phi=\{\alpha,-\alpha\},
\end{align}
with
\begin{align}
    \alpha(J_z)=1.
\end{align}
The root space decomposition then takes the form
\begin{align}
    \mathfrak{sl}(2,\mathbb C)
    =
    \mathfrak h
    \oplus
    \g_\alpha
    \oplus
    \g_{-\alpha},
\end{align}
where the three spaces are given by
\begin{alignat}{2}
    &\text{Cartan Subalgebra}\qquad
    &\mathfrak h
    &=
    \mathrm{span}\{J_z\},\\
    &\text{Positive Root Space}\qquad
    &\g_\alpha
    &=
    \mathrm{span}\{J_+\},\\
    &\text{Negative Root Space}\qquad
    &\g_{-\alpha}
    &=
    \mathrm{span}\{J_{-}\}.
\end{alignat}

So far, we have illustrated the structure using the spin-$1/2$ representation of $\mathfrak{su}(2)$. 
The same picture extends to the irreducible spin-$j$ representation, where the ladder of weights becomes fully visible. 
Let $\pi_j$ denote the representation map that realizes the abstract generators of $\mathfrak{sl}(2, \mathbb C)$ as linear operators on the spin-$j$ representation space $V_j$,
\begin{align}
    \pi_j:\mathfrak{sl}(2,\mathbb C)
    \to 
    \mathrm{End}(V_j),
\end{align}
with $\dim V_j=2j+1$.
Under this representation, the Cartan generator $J_z$ is then represented by the operator $\pi_j(J_z)$, whose eigenvalues are given by
\begin{align}
    m\in\left\{-j,-j+1,\cdots,j-1,j\right\}.
\end{align}
These eigenvalues are the possible values of the angular momentum along the $z$ direction. 
Physically, $m$ is the magnetic quantum number; 
in representation-theoretic language, the same number is called a weight. 
Each weight labels the subspace of states on which $J_z$ takes that fixed value.
This subspace is called the corresponding weight space, defined explicitly by
\begin{align}\label{eq:V_j_m}
    V_{j,m} &\coloneqq \left\{ \ket{\psi} \in V_j \,\,\middle|\,\, \pi_j(J_z) \ket{\psi} = m \ket{\psi} \right\}.
\end{align}
For $\mathfrak{su}(2)$, each such weight space is one-dimensional and is spanned by the standard angular momentum state $\ket{j,m}$,
\begin{align}
    V_{j,m}=\mathrm{span}_{\mathbb C}
    \{\ket{j,m}\}.
\end{align}
Thus the full representation space decomposes as
\begin{align}
    V_{j}
    =
    \bigoplus_{m=-j}^{j}
    V_{j,m}.
\end{align}

For example, in the spin-$1/2$ representation, the space is two-dimensional, namely $\dim V_{1/2}=2\cdot(1/2)+1=2$, and the two possible weights are $\pm1/2$.
Accordingly,
\begin{align}
    V_{\tfrac{1}{2}}
    =
    V_{\tfrac{1}{2},\tfrac{1}{2}}
    \oplus
    V_{\tfrac{1}{2},-\tfrac{1}{2}}
\end{align}
where
\begin{align}
    V_{\tfrac{1}{2},\tfrac{1}{2}}=\mathrm{span}_{\mathbb C}\{\ket{\tfrac{1}{2},\tfrac{1}{2}}\},
\end{align}
and
\begin{align}
    V_{\tfrac{1}{2},-\tfrac{1}{2}}=\mathrm{span}_{\mathbb C}\{\ket{\tfrac{1}{2},-\tfrac{1}{2}}\}.
\end{align}
In the computational basis, these two states may be written as
\begin{align}
    \ket{\tfrac{1}{2},\tfrac{1}{2}}=
    \begin{pmatrix}
    1\\
    0
    \end{pmatrix},
\end{align}
and
\begin{align}
    \ket{\tfrac{1}{2},-\tfrac{1}{2}}=
    \begin{pmatrix}
    0\\
    1
    \end{pmatrix}.
\end{align}
Meanwhile, we have
\begin{align}
    \pi_{\tfrac{1}{2}}(J_z)
    =
    \frac{1}{2}
    \begin{pmatrix}
    1&0\\
    0&-1
    \end{pmatrix}.
\end{align}
Thus the decomposition into weight spaces is simply the decomposition of a two-dimensional vector into its upper and lower spin components.

As a second example, the spin-$1$ representation is three-dimensional, i.e., $\dim V_1=2\cdot1+1=3$. 
The possible weights are
\begin{align}
    \left\{-1,0,1\right\},
\end{align}
so that
\begin{align}
    V_1
    =
    V_{1, 1}
    \oplus
    V_{1, 0}
    \oplus
    V_{1, -1},
\end{align}
with
\begin{align}
    V_{1, 1}&=
    \mathrm{span}\{\ket{1,1}\},\\
    V_{1, 0}&=
    \mathrm{span}\{\ket{1,0}\},\\
    V_{1, -1}&=
    \mathrm{span}\{\ket{1,-1}\}.
\end{align}

The Cartan generator $J_z$ acts diagonally on the weight basis:
\begin{align}
    J_z\ket{1,m}
    =
    m\ket{1,m},
    \quad
    m\in\{-1,0,1\}.
\end{align}
Its matrix representation is therefore given by
\begin{align}
    \pi_1(J_z)
    =
    \begin{pmatrix}
    1&0&0\\
    0&0&0\\
    0&0&-1
    \end{pmatrix}.
\end{align}

The ladder operators act as
\begin{align}
    J_\pm\ket{1,m}
    =
    \sqrt{(1\mp m)(1\pm m+1)}
    \ket{1,m\pm1},
\end{align}
and therefore connect adjacent weight spaces. 
In particular,
\begin{align}
    J_+\ket{1,-1}
    &=
    \sqrt2\ket{1,0},
    \\
    J_+\ket{1,0}
    &=
    \sqrt2\ket{1,1},
    \\
    J_+\ket{1,1}
    &=
    0.
\end{align}
This means that $J_+=J_x+ \mathrm{i}J_y$ moves states upward along
\begin{align}
    \ket{1,-1}\to \ket{1,0}\to \ket{1,1}.
\end{align}
In the ordered weight basis $\{\ket{1,1}, \ket{1,0}, \ket{1,-1} \}$, $J_+$ therefore takes the matrix form
\begin{align}
    \pi_1(J_+)
    =
    \begin{pmatrix}
    0&\sqrt2&0\\
    0&0&\sqrt2\\
    0&0&0
    \end{pmatrix}.
\end{align}
Conversely, the lowering operator acts as
\begin{align}
    J_-\ket{1,1}
    &=
    \sqrt2\ket{1,0},
    \\
    J_-\ket{1,0}
    &=
    \sqrt2\ket{1,-1},
    \\
    J_-\ket{1,-1}
    &=
    0.
\end{align}
The operator $J_-=J_x- \mathrm{i}J_y$ therefore moves through the weight spaces in the opposite direction to $J_+$.
\begin{align}
    \ket{1,1}\to \ket{1,0}\to \ket{1,-1}.
\end{align}
In the ordered weight basis, its matrix representation is
\begin{align}
    \pi_1(J_-)
    =
    \begin{pmatrix}
    0&0&0\\
    \sqrt2&0&0\\
    0&\sqrt2&0
    \end{pmatrix}.
\end{align}

To conclude, the weight structure records the spectrum of the Cartan generator $J_z$, while the root structure describes how the operators $J_+$ and $J_-$ move between neighboring weight spaces. 
In this familiar angular momentum setting, roots and weights therefore provide a precise algebraic language for the organization of quantum states.

It remains to single out the elementary building blocks of a positive root system: 
the simple roots.

\begin{mydef}{Simple Roots}{Simple_Roots}
Let $\Phi$ be the root system of a complex semisimple Lie algebra $\g$, and fix a choice of positive roots $\Phi^+$. 
A root $\alpha\in\Phi^+$ is called simple if it cannot be decomposed as a sum of two positive roots, i.e.,
\begin{align}
    \alpha\neq \beta+\gamma,
\end{align}
for all $\beta,\gamma\in\Phi^+$.
The set of all simple roots is denoted by
\begin{align}\label{eq:simple_roots_set}
    \Pi=\{\alpha_1,\ldots,\alpha_r\},
\end{align}
where $r=\mathrm{rank}(\g)=\dim\mathfrak h$. 
Equivalently, $\Pi$ is the unique linearly independent subset of $\Phi^+$ such that every positive root is a non-negative integer linear combination of elements of $\Pi$.
\end{mydef}

The simplest examples are provided by $\mathfrak{sl}(2,\mathbb C)$ and $\mathfrak{sl}(3,\mathbb C)$. 
For $\mathfrak{sl}(2,\mathbb C)$, whose root system is $\Phi=\{\alpha,-\alpha\}$ (see Eq.~\eqref{eq:sl_2_root_system}), there is only one positive root $\alpha$, and hence one simple root $\alpha$, namely 
\begin{align}
    \Pi=\Phi^{+}=\{\alpha\}.
\end{align}
For $\mathfrak{sl}(3,\mathbb C)$, whose root system is $\Phi=\{\pm\alpha_1,\pm\alpha_2,\pm(\alpha_1+\alpha_2)\}$ with $\Phi^+=\{\alpha_1, \alpha_2, \alpha_1+\alpha_2\}$, one may choose $\alpha_1$ and $\alpha_2$ as simple roots, i.e.,
\begin{align}
    \Pi=\{\alpha_1, \alpha_2\}\subsetneq\Phi^+.
\end{align}
The remaining positive root is $\alpha_1+\alpha_2$, which is not simple because it decomposes as a sum of two positive roots.


\subsection{Complex Simple Lie Algebras}
\label{subsec:Classification_Lie}

In this subsection, we briefly review the classification of complex simple Lie algebras, focusing only on the structural elements needed in what follows. 
Every complex semisimple Lie algebra decomposes as a direct sum of simple Lie algebras, each characterized by its root system, or equivalently by its Dynkin diagram.
The allowed connected Dynkin diagrams are highly constrained, leaving 
the four infinite classical families $A_n$, $B_n$, $C_n$ and $D_n$, and the five exceptional cases $E_6$, $E_7$, $E_8$, $F_4$ and $G_2$. 
These nine building blocks provide a unified algebraic language for organizing generators, representations and commutation relations, and will serve as the foundation for formulating uncertainty relations beyond the elementary $\mathfrak{su}(2)$ setting (see Eq.~\eqref{eq:su_2}).

We begin our discussion of simple Lie algebras with the classical families, starting with the special linear Lie algebra.
This family is denoted by type $A_n$ and is defined as follows.

\begin{mydef}{Special Linear Lie Algebra}{A_n}
The special linear Lie algebra of order $n$ over the field $\mathbb C$, denoted by
$\mathfrak{sl}_n(\mathbb C)$ or $\mathfrak{sl}(n,\mathbb C)$, is the Lie algebra consisting of all
$n\times n$ matrices with entries in $\mathbb C$ and vanishing trace. 
Explicitly,
\begin{align}
    \mathfrak{sl}(n,\mathbb C)
    \coloneqq
    \left\{
    X\in M_n(\mathbb C)
    \;\middle|\;
    \Tr[X]=0
    \right\},
\end{align}
equipped with the Lie bracket given by the matrix commutator; that is
\begin{align}
    [X,Y]=XY-YX.
\end{align}
\end{mydef}

The algebra $\mathfrak{sl}(n,\mathbb{C})$ has rank $n-1$ and corresponds to type $A_{n-1}$.
The next classical family is given by the odd-dimensional special orthogonal Lie algebras. 
We denote this family by type $B_n$ and define it as

\begin{mydef}{Odd-Dimensional Special Orthogonal Lie Algebra $B_n$}{B_n}
The odd-dimensional special orthogonal Lie algebra is the Lie algebra associated with the special orthogonal group in odd dimension.
Over the field $\mathbb C$, it is usually denoted by $\mathfrak{so}_{2n+1}(\mathbb C)$ or $\mathfrak{so}(2n+1,\mathbb C)$ and is defined as the set of all $(2n+1)\times(2n+1)$ matrices that infinitesimally preserve a non-degenerate symmetric bilinear form. 
Equivalently, it can be written as
\begin{align}
\mathfrak{so}(2n+1,\mathbb C)
=
\left\{
X\in M_{2n+1}(\mathbb C)
\;\middle|\;
X^{T}S+SX=0
\right\},
\end{align}
where $S$ is the matrix representing the chosen symmetric bilinear form
\begin{align}
S=
\begin{bmatrix}
1 & 0 & 0\\
0 & 0 & I_n\\
0 & I_n & 0
\end{bmatrix}.
\end{align}
The Lie bracket is the usual matrix commutator.
\end{mydef}

The symplectic family, corresponding to type $C_n$, is realized by the Lie algebra below

\begin{mydef}{Symplectic Lie Algebra $C_n$}{C_n}
The symplectic Lie algebra is the Lie algebra consisting of all linear transformations that infinitesimally preserve a symplectic form.
Over the field $\mathbb C$, the symplectic Lie algebra is usually denoted by $\mathfrak{sp}_{2n}(\mathbb C)$ or $\mathfrak{sp}(2n,\mathbb C)$.
It is defined as
\begin{align}
\mathfrak{sp}(2n,\mathbb C)
=
\left\{
X\in M_{2n}(\mathbb C)
\;\middle|\;
X^{T}S+SX=0
\right\},
\end{align}
where $S$ represents the standard symplectic form
\begin{align}
S=
\begin{bmatrix}
0 & I_n\\
-I_n & 0
\end{bmatrix}.
\end{align}
\end{mydef}

The remaining classical family is the even-dimensional special orthogonal Lie algebra, corresponding to type $D_n$, which is given by

\begin{mydef}{Even-Dimensional Special Orthogonal Lie Algebra $D_n$}{D_n}
The even-dimensional special orthogonal Lie algebra is the Lie algebra associated with the special orthogonal group in even dimension.
Over the field $\mathbb C$, it is usually denoted by $\mathfrak{so}_{2n}(\mathbb C)$ or $\mathfrak{so}(2n,\mathbb C)$ and is defined as 
\begin{align}
\mathfrak{so}(2n,\mathbb C)
=
\left\{
X\in M_{2n}(\mathbb C)
\;\middle|\;
X^{T}S+SX=0
\right\},
\end{align}
where $S$ stands for the following symmetric bilinear form
\begin{align}
S=
\begin{bmatrix}
0 & I_n\\
I_n & 0
\end{bmatrix}.
\end{align}
\end{mydef}

The four classical families, together with the five exceptional Lie algebras $E_6$, $E_7$, $E_8$, $F_4$ and $G_2$, give the complete classification of simple Lie algebras over $\mathbb C$. 
The corresponding structural data are summarized in Tab.~\ref{tab:Simple_Lie_Algebra}.

\begin{table}[ht]
    \centering
    \begin{tblr}{
      width = \textwidth,
      colspec = {l || c | l | X[l] | X[c]},
      row{1} = {bg=gray!50, font=\bfseries, c},
      row{2} = {bg=gray!20, font=\bfseries, c},
      hlines,                                   
      vlines,                                   
      cells = {m},                             
      cell{3-6}{1-4} = {bg=mGreen},
      cell{7-11}{1-4} = {bg=mRed},
    }
      \SetCell[c=5]{c} Complex Simple Lie Algebras & & & & \\
      Category & Type & Lie Algebra & Standard Name & Dynkin Diagram \\
      \SetCell[r=4]{c} Classical & 
      $A_n$ & 
      $\mathfrak{sl}(n+1, \mathbb{C})$ & 
      Special Linear Lie Algebra (See Def.~\ref{def:A_n}) & 
        \raisebox{-0.4ex}{\includegraphics[scale=0.6]{DK_A_n.pdf}} \\
                & $B_n$ & $\mathfrak{so}(2n+1, \mathbb{C})$ & Odd-Dimensional Special Orthogonal Lie Algebra (See Def.~\ref{def:B_n}) & 
        \raisebox{-1ex}{\includegraphics[scale=0.6]{DK_B_n.pdf}}  \\
                & $C_n$ & $\mathfrak{sp}(2n, \mathbb{C})$ & Symplectic Lie Algebra (See Def.~\ref{def:C_n}) & 
         \raisebox{-1.4ex}{\includegraphics[scale=0.6]{DK_C_n.pdf}}  \\[1.2ex]
                & $D_n$ & $\mathfrak{so}(2n, \mathbb{C})$ & Even-Dimensional Special Orthogonal Lie Algebra (See Def.~\ref{def:D_n}) & 
         \raisebox{-4.5ex}{\includegraphics[scale=0.6]{DK_D_n.pdf}} \\
      \SetCell[r=5]{c} Exceptional & $G_2$ & $\mathfrak{g}_2$ & Exceptional Lie algebra of type $G_2$ & 
            \raisebox{-1.2ex}{\includegraphics[scale=0.75]{DK_G_2.pdf}} \\[1.2ex]
                & $F_4$ & $\mathfrak{f}_4$ & Exceptional Lie algebra of type $F_4$ & \raisebox{-1.4ex}{\includegraphics[scale=0.6]{DK_F_4.pdf}}
          \\[1.2ex]
                & $E_6$ & $\mathfrak{e}_6$ & Exceptional Lie algebra of type $E_6$ & 
         \raisebox{-3ex}{\includegraphics[scale=0.6]{DK_E_6.pdf}}\\
                & $E_7$ & $\mathfrak{e}_7$ & Exceptional Lie algebra of type $E_7$ & \raisebox{-3ex}{\includegraphics[scale=0.6]{DK_E_7.pdf}} \\
                & $E_8$ & $\mathfrak{e}_8$ & Exceptional Lie algebra of type $E_8$ & \raisebox{-3ex}{\includegraphics[scale=0.6]{DK_E_8.pdf}} \\
    \end{tblr}
    \caption{\textbf{Classification of Complex Simple Lie Algebras}. 
The table lists the four infinite classical families, $A_n$, $B_n$, $C_n$ and $D_n$, together with the five exceptional types, $G_2$, $F_4$, $E_6$, $E_7$ and $E_8$. 
The classical ranges used here are
$A_n$ ($n\geqslant1$), $B_n$ ($n\geqslant2$),
$C_n$ ($n\geqslant2$), and $D_n$ ($n\geqslant4$),
with the low-rank identification $B_1\simeq C_1\simeq A_1$, $B_2\simeq C_2$, $D_3\simeq A_3$, and $D_2\simeq A_1\oplus A_1$, the latter being semisimple rather than simple.
For each type, we give its conventional Lie algebra realization, standard name, and Dynkin diagram.
The classical families arise from special linear, orthogonal and symplectic algebras, whereas the exceptional cases occur as isolated structures not belonging to any infinite series.}
    \label{tab:Simple_Lie_Algebra}
\end{table}


\subsection{Killing Form}
\label{subsec:Killing_form}

The Killing form provides a central bridge between the algebraic structure of a Lie algebra and the geometry naturally associated with its symmetries. 
Algebraically, it is a canonical bilinear form defined entirely through the adjoint action $\mathrm{ad}$ (see Eq.~\eqref{eq:ad_rep}), and therefore reflects the internal commutation structure of the Lie algebra itself. 
Its non-degeneracy characterizes semisimplicity, while its restriction to a Cartan subalgebra induces the natural inner product on the root space. 
It is therefore fundamental to the classification of simple Lie algebras, the geometry of root systems, and the construction of weights and Weyl vectors. 
In physics, the same structure gives a natural metric on compact Lie algebras, fixes generator normalizations and the corresponding Casimir operators. 
The Killing form is thus not merely a technical definition, but the canonical geometric object through which algebraic symmetries acquire lengths and invariant inner products. 
We begin with its formal definition.

\begin{mydef}{Killing Form}{Killing_Form}
For a finite-dimensional Lie algebra $\g$ over $\mathbb C$, the Killing form $\kappa_{\mathrm{Kill}}$ is the symmetric bilinear form 
\begin{align}
   \kappa_{\mathrm{Kill}}: \g\times\g\to\mathbb C,
\end{align}
defined by
\begin{align} \label{eq:Killing_form}
    \kappa_{\mathrm{Kill}}(x,y) \coloneqq 
    \Tr[\mathrm{ad}_x \mathrm{ad}_y], \qquad x,y\in\g
\end{align}
where $\mathrm{ad}$ denotes the adjoint representation specified in Eq.~\eqref{eq:ad_rep}. 
\end{mydef}

Equation~\eqref{eq:Killing_form} gives the conventional, unnormalized Killing form $\kappa_{\mathrm{Kill}}$. 
Its direct use in Lie algebraic uncertainty relations introduces algebra dependent normalization factors. 
To maintain a uniform convention across different Lie algebras, we henceforth let $\kappa$ denote the normalized invariant symmetric bilinear form specified below.
Let
\begin{align}
    \g=\bigoplus_{s}\g^{(s)}
\end{align}
be the decomposition of the complex semisimple Lie algebra $\g$ into simple ideals $\{\g^{(s)}\}_{s}$. 
Each simple component $\g^{(s)}$ is associated with a dual Coxeter number $h_s^\vee$, determined by its Lie type. 
For the classical and exceptional algebras (see Tab.~\ref{tab:Simple_Lie_Algebra}), these values are given in Tab.~\ref{tab:Dual_Coxeter_Numbers} 

\begin{table}[ht]
    \centering
    \begin{tblr}{
      width = \textwidth,
      colspec = {l || c | c | c | c | c | c | c | c | c},
      row{1} = {bg=gray!50, font=\bfseries, c}, 
      row{2} = {font=\bfseries, c}, 
      column{1} = {bg=gray!20, font=\bfseries, l}, 
      hlines,                                   
      vlines,                                   
      cells = {m},                             
      cell{2-5}{2-5} = {bg=mGreen},
      cell{2-5}{6-10} = {bg=mRed},
    }
      \SetCell[c=10]{c} Dual Coxeter Numbers & & & & & & & & & \\
      Category & \SetCell[c=4]{c} Classical & & & & \SetCell[c=5]{c} Exceptional & & & & \\
      Type & $A_n$ & $B_n$ & $C_n$ & $D_n$ & $G_2$ & $F_4$ & $E_6$ & $E_7$ & $E_8$ \\
      Dual Coxeter & $n+1$ & $2n-1$ & $n+1$ & $2n-2$ & $4$ & $9$ & $12$ & $18$ & $30$ \\
      Dynkin & 
      \raisebox{-0.5ex}{\includegraphics[scale=0.2]{DK_A_n.pdf}} & 
      \raisebox{-0.5ex}{\includegraphics[scale=0.2]{DK_B_n.pdf}}  &  
      \raisebox{-0.5ex}{\includegraphics[scale=0.2]{DK_C_n.pdf}}  & 
      \raisebox{-1.4ex}{\includegraphics[scale=0.2]{DK_D_n.pdf}}  &  
      \raisebox{-0.5ex}{\includegraphics[scale=0.2]{DK_G_2.pdf}}  & 
      \raisebox{-0.5ex}{\includegraphics[scale=0.2]{DK_F_4.pdf}} & 
      \raisebox{-0.5ex}{\includegraphics[scale=0.2]{DK_E_6.pdf}} & 
      \raisebox{-0.5ex}{\includegraphics[scale=0.2]{DK_E_7.pdf}} & 
      \raisebox{-0.5ex}{\includegraphics[scale=0.2]{DK_E_8.pdf}} \\
    \end{tblr}
    \caption{\textbf{Dual Coxeter Numbers $h^\vee$}. 
    The four classical families and five exceptional types are shown together with their dual Coxeter numbers and Dynkin diagrams.}
    \label{tab:Dual_Coxeter_Numbers}
\end{table}

\noindent
Here $n$ is the rank, with the standard ranges $n\geqslant1$ for $A_n$, $n\geqslant2$ for $B_n$ and $C_n$, and $n\geqslant4$ for $D_n$. 
Since $\mathfrak{sl}(n,\mathbb C)$ is of type $A_{n-1}$, its dual Coxeter number is $h^\vee=n$ (see Tab.~\ref{tab:Dual_Coxeter_Numbers}).

For each simple ideal $\g^{(s)}$, we set
\begin{align}\label{eq:normal_Killing}
    \kappa(x,y)
    \coloneqq
    \frac{1}{2h_{s}^{\vee}}
    \Tr_{\g^{(s)}}\!\left[\mathrm{ad}_x \mathrm{ad}_y\right],
    \qquad x,y\in\g^{(s)}.
\end{align}
Distinct simple ideals are mutually orthogonal under $\kappa$. 
This normalization ensures that the induced bilinear form on the dual Cartan space assigns squared length $2$ to every long root. 
Unless stated otherwise, all dual bases, Gram matrices and Casimir elements introduced subsequently are defined with respect to $\kappa$ (see Eq.~\eqref{eq:normal_Killing}).

The normalized Killing form $\kappa(\cdot,\cdot)$ (see Eq.~\eqref{eq:normal_Killing}) provides the natural metric structure supporting the theory.
In particular, it induces a positive-definite inner product on the compact real form $\g_{\mathbb R}$ of the Lie algebra $\g$ after a change of sign, and an invariant bilinear form on the dual of a Cartan subalgebra $\mathfrak h$. 
A real Lie algebra $\g_{\mathbb R}\subset \g$ is a real form if its complexification is $\g$, that is,
\begin{align}\label{eq:complexification}
    \g_{\mathbb R}\otimes_{\mathbb R}\mathbb C\cong\g.
\end{align}
It is called compact if the restriction of the Killing form to $\g_{\mathbb R}$ is negative definite. 
Typical examples include
$    \mathfrak{su}(n)
    \subset
    \mathfrak{sl}(n,\mathbb C),
    \mathfrak{so}(n)
    \subset
    \mathfrak{so}(n,\mathbb C)$,
and $\mathfrak{sp}(n)
    \subset
    \mathfrak{sp}(2n,\mathbb C)$.
Here the terminology ``real'' refers to the underlying vector space structure rather than to the matrix entries themselves.
For example, $\mathfrak{su}(2)$ is a real Lie algebra spanned over $\mathbb R$ by generators $\{T_x, T_y, T_z\}$ whose matrix entries may be complex, as in Eq.~\eqref{eq:T_xyz}.
Consequently, compactness implies that the Killing form $\kappa(\cdot,\cdot)$ (see Eq.~\eqref{eq:normal_Killing}) is negative definite on $\g_{\mathbb R}$.
Consequently, $-\kappa(\cdot,\cdot)$ defines a positive-definite inner product, equipping the compact real form $\g_{\mathbb R}$ with a natural Euclidean structure.

\begin{mydef}{Inner Product on $\g_{\mathbb R}$}{Inner_product_alg}
Let $\g$ be a complex semisimple Lie algebra with compact real form $\g_{\mathbb R}$, and let $\kappa$ be the normalized invariant bilinear form introduced in Eq.~\eqref{eq:normal_Killing}. 
Since $\kappa$ is negative definite on $\g_{\mathbb R}$, it defines a positive-definite inner product by
\begin{align}\label{eq:Inner_product_algebra}
    (x,y)\coloneqq
    -\kappa(x,y), \qquad x, y \in \g_{\mathbb R}.
\end{align}
\end{mydef}

To relate this inner product to an operator realization of $\g_{\mathbb R}$, we consider a finite-dimensional unitary representation
\begin{align}\label{eq:pizero}
\pi_0:\g_{\mathbb R}\to \mathfrak{gl}(V_0)
\end{align}
where $V_0$ is a finite-dimensional complex Hilbert space.
Such a representation naturally induces an invariant symmetric bilinear form,
\begin{align}\label{inner_product_repn}
    (x,y)_{\pi_0}
    \coloneqq
    -\Tr_{V_0}\bigl(\pi_0(x)\pi_0(y)\bigr),
    \qquad
    x,y\in\g_{\mathbb R}.
\end{align}
This form is positive semidefinite in general and is positive definite if and only if $\pi_0$ is faithful.

If $\g_{\mathbb R}$ is simple and $\pi_0$ is nontrivial, then $\pi_0$ is faithful. 
By the uniqueness, up to scaling, of invariant symmetric bilinear forms on a compact simple Lie algebra, there exists a constant $c_{\pi_0}>0$ such that
\begin{align}\label{eq:inner_repn}
    (x,y)_{\pi_0}
    =
    c_{\pi_0} (x,y), \qquad x,y\in\g_{\mathbb R},
\end{align}
where $(\cdot,\cdot)$ is the inner product defined in 
Eq.~\eqref{eq:Inner_product_algebra}.
More generally, the compact real form decomposes as
\begin{align}
\g_{\mathbb R}
=
\bigoplus_{s=1}^{m}\g_{\mathbb R}^{(s)},
\qquad
\g_{\mathbb R}^{(s)}
\coloneqq
\g_{\mathbb R}\cap\g^{(s)}.
\end{align}
Distinct simple ideals are orthogonal with respect to both forms.
Thus, writing
$x=\sum_{s=1}^{m}x_s$ and $y=\sum_{s=1}^{m}y_s$,
with $x_s,y_s\in\g_{\mathbb R}^{(s)}$, we have
\begin{align}
(x,y)_{\pi_0}
=
\sum_{s=1}^{m}c_{\pi_0,s}(x_s,y_s).
\end{align}
Here $c_{\pi_0,s}\geqslant 0$, with $c_{\pi_0,s}>0$ if and only if $\pi_0|_{\g_{\mathbb R}^{(s)}}$ is nontrivial.

For the remainder of this work, we fix a faithful reference representation $\pi_0$, so that $c_{\pi_0,s}>0$ for every $s$.
The inner product $(x,y)=-\kappa(x,y)$ on $\g_{\mathbb R}$
defined in Eq.~\eqref{eq:Inner_product_algebra} can then be
expressed as
\begin{align}
  (x,y)
=
\sum_{s=1}^{m}
\frac{1}{c_{\pi_0,s}}(x_s,y_s)_{\pi_0}.
\end{align}
This inner product remains fixed when representations on other
spaces $V(\Lambda)$ are considered; it is not redefined using
the trace on $V(\Lambda)$.
For complex generators, we use the unique Hermitian extension
of this inner product to $\g$, defined in
Eq.~\eqref{eq:inner_product_g}.

Furthermore, the inner product $(\cdot,\cdot)$ defined in Eq.~\eqref{eq:Inner_product_algebra} satisfies the following invariance property:

\begin{mylem}{Invariance of the Inner Product}{Invariance_Inner_Product}
The inner product $(\cdot,\cdot)$ defined in Eq.~\eqref{eq:Inner_product_algebra} satisfies
\begin{align}
    ([x,y],z)=(x,[y,z])
\end{align}
for all $x,y,z\in\g_{\mathbb R}$.
\end{mylem}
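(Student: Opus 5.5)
The plan is to reduce the invariance of $(\cdot,\cdot)=-\kappa(\cdot,\cdot)$ to the invariance of the trace form of a representation. The only ingredients are cyclicity of the trace and the fact that a representation preserves brackets. Since $(x,y)=-\kappa(x,y)$, it suffices to prove $\kappa([x,y],z)=\kappa(x,[y,z])$ for all $x,y,z\in\g$. The identity then restricts to $\g_{\mathbb R}$ and the overall sign cancels on both sides.

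First I would decompose into simple ideals. Write $x=\sum_s x_s$, and similarly for $y$ and $z$, with $x_s,y_s,z_s\in\g^{(s)}$. Distinct simple ideals commute, because $[\g^{(s)},\g^{(t)}]\subseteq\g^{(s)}\cap\g^{(t)}=0$ for $s\neq t$. Hence $[x,y]=\sum_s[x_s,y_s]$ with $[x_s,y_s]\in\g^{(s)}$, and likewise $[y,z]=\sum_s[y_s,z_s]$. Because distinct ideals are orthogonal under $\kappa$, both sides of the identity split as sums over $s$. It is therefore enough to verify the identity within a single simple ideal $\g^{(s)}$, where $\kappa$ equals $\frac{1}{2h_s^\vee}\Tr_{\g^{(s)}}[\mathrm{ad}_x\mathrm{ad}_y]$.

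On $\g^{(s)}$, the Jacobi identity gives $\mathrm{ad}_{[x,y]}=[\mathrm{ad}_x,\mathrm{ad}_y]$. Applying this to both sides yields
\begin{align}
\Tr[\mathrm{ad}_{[x,y]}\mathrm{ad}_z]
=\Tr[\mathrm{ad}_x\mathrm{ad}_y\mathrm{ad}_z]-\Tr[\mathrm{ad}_y\mathrm{ad}_x\mathrm{ad}_z],
\end{align}
\begin{align}
\Tr[\mathrm{ad}_x\mathrm{ad}_{[y,z]}]
=\Tr[\mathrm{ad}_x\mathrm{ad}_y\mathrm{ad}_z]-\Tr[\mathrm{ad}_x\mathrm{ad}_z\mathrm{ad}_y].
\end{align}
By cyclicity of the trace, $\Tr[\mathrm{ad}_y\mathrm{ad}_x\mathrm{ad}_z]=\Tr[\mathrm{ad}_x\mathrm{ad}_z\mathrm{ad}_y]$, so the two right-hand sides coincide. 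Multiplying by the positive constant $\frac{1}{2h_s^\vee}$ and by $-1$ preserves the equality, which proves the claim.

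Alternatively, one can use the reference representation $\pi_0$, since $(x,y)=\sum_s c_{\pi_0,s}^{-1}(x_s,y_s)_{\pi_0}$. The same cyclicity argument applied to $\Tr_{V_0}(\pi_0(\cdot)\pi_0(\cdot))$, using $\pi_0([x,y])=[\pi_0(x),\pi_0(y)]$, gives the result. The only point needing care is the bookkeeping over simple ideals in the semisimple case. The key fact there is that the bracket preserves each ideal and that distinct ideals are orthogonal, and once this is noted there is no genuine obstacle.
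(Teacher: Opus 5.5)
Your proof is correct and complete. The paper states this lemma without any proof, implicitly relying on the standard ad-invariance of trace forms (as in Humphreys), and your cyclicity-of-trace argument is exactly that standard argument. The ideal-by-ideal reduction you include is genuinely needed rather than mere care: $\kappa$ is rescaled by a different factor $1/(2h_s^\vee)$ on each simple ideal, so it need not be a multiple of the full Killing form, and invariance follows only because each piece is invariant and brackets split over the ideals. Your alternative route via the trace form of $\pi_0$ works equally well for the same reason.
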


Before introducing the induced bilinear form $\langle\cdot,\cdot\rangle$ on the dual space $\mathfrak h^*$ of the Cartan subalgebra, we first recall a simple but important Lemma (see Lem.~\ref{lem:Killing_Form_Identification}).
It identifies each linear functional on $\mathfrak h$ with a unique element of $\mathfrak h$.

\begin{mylem}{Killing Form Identification}{Killing_Form_Identification}
Let $\g$ be a complex semisimple Lie algebra with Cartan subalgebra $\mathfrak h$, and let $\kappa$ be the normalized invariant bilinear form in Eq.~\eqref{eq:normal_Killing} on $\g$ whose restriction to $\mathfrak h$ is nondegenerate. 
Then each functional
$\alpha\in\mathfrak h^*$ determines a unique element
$t_\alpha\in\mathfrak h$ such that
\begin{align}\label{eq:roots_action_KF}
    \alpha(h)
    =
    \kappa(t_\alpha,h),
    \qquad
    \forall\,h\in\mathfrak h .
\end{align}
\end{mylem}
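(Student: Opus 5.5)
This is the standard Riesz representation argument. I would work with the restriction $\kappa|_{\mathfrak h\times\mathfrak h}$ and the linear map
\begin{align}
\Theta:\mathfrak h\to\mathfrak h^*,\qquad \Theta(t)\coloneqq\kappa(t,\cdot)|_{\mathfrak h}.
\end{align}
The claim is exactly that $\Theta$ is a bijection: $t_\alpha$ is then $\Theta^{-1}(\alpha)$.

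\emph{Linearity and injectivity.} Since $\kappa$ is bilinear, $\Theta$ is linear. Suppose $\Theta(t)=0$, i.e. $\kappa(t,h)=0$ for all $h\in\mathfrak h$. Nondegeneracy of $\kappa$ restricted to $\mathfrak h$ then forces $t=0$, so $\Theta$ is injective.

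\emph{Surjectivity.} Because $\dim\mathfrak h=\dim\mathfrak h^*=r<\infty$, an injective linear map between these spaces is also surjective. Hence every $\alpha\in\mathfrak h^*$ equals $\Theta(t_\alpha)$ for exactly one $t_\alpha\in\mathfrak h$, which is Eq.~\eqref{eq:roots_action_KF}. Uniqueness is simply injectivity again: if $\kappa(t-t',h)=0$ for all $h$, then $t=t'$.

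\emph{Justifying nondegeneracy on $\mathfrak h$.} The statement takes this as a hypothesis, but I would add a short remark that it holds automatically, so the lemma is not vacuous.
\begin{itemize}
\item $\kappa$ is a nonzero multiple of the Killing form on each simple ideal, by Eq.~\eqref{eq:normal_Killing}. Distinct ideals are orthogonal under $\kappa$, so $\kappa$ is nondegenerate on $\g$ by Cartan's criterion.
\item Invariance of $\kappa$ gives
\begin{align}
\alpha(h)\,\kappa(x_\alpha,y)=\kappa([h,x_\alpha],y)=-\kappa(x_\alpha,[h,y]).
\end{align}
From this, $\kappa(\g_\alpha,\g_\beta)=0$ unless $\alpha+\beta=0$. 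In particular $\mathfrak h=\g_0$ is $\kappa$-orthogonal to every root space.
\item Now take $h\in\mathfrak h$ with $\kappa(h,\mathfrak h)=0$. By the previous point $h$ is also orthogonal to every root space, so $\kappa(h,\g)=0$ by the root space decomposition \eqref{eq:Root_Space_Decomposition}. Nondegeneracy on $\g$ gives $h=0$.
\end{itemize}

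I expect no real obstacle. The only mildly delicate point is the orthogonality $\kappa(\g_\alpha,\g_\beta)=0$ for $\alpha+\beta\neq0$, which follows from invariance of $\kappa$ as shown above.
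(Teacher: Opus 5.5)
Your proof is correct. The paper gives no proof of its own and simply recalls this as a standard fact from Humphreys. Your argument is the standard one: a linear map $\mathfrak h\to\mathfrak h^*$ that is injective between spaces of equal finite dimension is bijective, and your added check that nondegeneracy on $\mathfrak h$ follows from Cartan's criterion plus the orthogonality $\kappa(\g_\alpha,\g_\beta)=0$ for $\alpha+\beta\neq0$ is also the usual textbook route.
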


Using Lem.~\ref{lem:Killing_Form_Identification}, we can now define the induced bilinear form on the dual space $\mathfrak h^*$, the space of linear functionals on the Cartan subalgebra $\mathfrak h$.

\begin{mydef}{Induced Bilinear Form and Root-Space Inner Product}{Inner_Product_Root_Space}
For elements $\alpha,\beta\in\mathfrak h^*$, we define the induced bilinear form by
\begin{align} \label{eq:Inner_root}
\langle\alpha,\beta\rangle
\coloneqq
\kappa(t_\alpha,t_\beta)
\end{align}
where $t_\alpha,t_\beta\in\mathfrak h$ are the unique elements associated with $\alpha$ and $\beta$ (see Lem.~\ref{lem:Killing_Form_Identification}) through the normalized invariant bilinear form in Eq.~\eqref{eq:normal_Killing}.
\end{mydef}

The form in Eq.~\eqref{eq:Inner_root} is complex bilinear on $\mathfrak h^*$. Its restriction to the real root space $E=\mathrm{Span}_{\mathbb R} \Phi$ is positive definite. Throughout the geometric and uncertainty arguments, norms and inequalities refer to this real space, which contains all weights of finite-dimensional representations.

The root system $\Phi\subset\mathfrak h^*$ inherits the inner product induced by the Killing form.
For each root $\alpha\in\Phi$, the associated coroot is defined by
\begin{align}\label{eq:coroots}
    h_\alpha
    \coloneqq
    \frac{2\,t_\alpha}
    {\langle\alpha,\alpha\rangle}.
\end{align}
The resulting root-coroot pairing is therefore given by
\begin{align}\label{eq:root_coroot_pairing}
    \mu(h_\alpha)
    =
    \kappa(t_\mu,h_\alpha)
    =
    \frac{2}
    {\langle\alpha,\alpha\rangle}
    \kappa(t_\mu,t_\alpha)
    =
    \frac{
    2\langle\mu,\alpha\rangle
    }{
    \langle\alpha,\alpha\rangle},
    \qquad
    \mu\in\mathfrak h^*.
\end{align}

Given a basis of the Cartan subalgebra $\mathfrak h$ (see Def.~\ref{def:Roots}) and the normalized invariant form $\kappa(\cdot,\cdot)$ introduced in Eq.~\eqref{eq:normal_Killing}, we can define the corresponding dual basis.

\begin{mydef}{Dual Basis of the Cartan Subalgebra $\mathfrak h$}{Dual_Basis_Cartan_Subalgebra}
Let $\mathfrak h$ be a Cartan subalgebra equipped with the normalized invariant bilinear form $\kappa(\cdot,\cdot)$ defined in Eq.~\eqref{eq:normal_Killing}.
Given a basis $\{h_i\}_{i=1}^r$ of $\mathfrak h$, its $\kappa$-dual basis $\{h^i\}_{i=1}^r$ is defined by
\begin{align}
    \kappa(h_i,h^j)
    =
    \delta_{ij}.
\end{align}
\end{mydef}

For $\alpha,\beta\in\mathfrak h^*$, the induced bilinear form $\langle\alpha,\beta\rangle$ (see Def.~\ref{def:Inner_Product_Root_Space}) can be expressed in terms of a basis of the Cartan subalgebra and its $\kappa$-dual basis (see Def.~\ref{def:Dual_Basis_Cartan_Subalgebra}). 
This representation is captured by the following lemma.

\begin{mylem}{Dual Basis Expression of the Induced Bilinear Form}{Dual_Basis_Contraction}
For $\alpha,\beta\in\mathfrak h^*$, the induced bilinear form $\langle\alpha,\beta\rangle$ admits the following representation
\begin{align}\label{eq:dual_basis_general}
    \langle\alpha,\beta\rangle
    =
    \sum_{i=1}^r
    \alpha(h_i)\beta(h^i).
\end{align}
Here $\{h_i\}_{i=1}^r$ is a Cartan basis
and $\{h^i\}_{i=1}^r$ is $\kappa$-dual basis (see Def.~\ref{def:Dual_Basis_Cartan_Subalgebra}),  with $\kappa$ normalized as in Eq.~\eqref{eq:normal_Killing}.
On the real root space, setting $\alpha=\beta\in E$, Eq.~\eqref{eq:dual_basis_general} induces the associated squared norm
\begin{align}
\label{eq:dual_basis_norm}
    \|\alpha\|^2
    \coloneqq
    \langle\alpha,\alpha\rangle
    =
    \sum_{i=1}^r
    \alpha(h_i)\alpha(h^i).
\end{align}
\end{mylem}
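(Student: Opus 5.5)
We need to prove the dual basis expression $\langle\alpha,\beta\rangle=\sum_i\alpha(h_i)\beta(h^i)$. The idea is to expand $t_\alpha$ and $t_\beta$ in the two mutually dual bases and then contract, using only Lem.~\ref{lem:Killing_Form_Identification} and Def.~\ref{def:Dual_Basis_Cartan_Subalgebra}.

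First, we use the nondegeneracy of $\kappa$ on $\mathfrak h$ to guarantee that the dual basis $\{h^i\}$ exists and is unique. The matrix $K_{ij}=\kappa(h_i,h_j)$ is invertible, so we may set $h^j=\sum_k (K^{-1})_{kj}h_k$. With this choice, $\kappa(h_i,h^j)=\delta_{ij}$ by symmetry of $\kappa$.

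Next, we expand an arbitrary element $t\in\mathfrak h$ as
\begin{align}
t=\sum_i \kappa(t,h^i)\,h_i .
\end{align}
To verify this, we pair both sides with each $h^j$ and invoke nondegeneracy. Applying the expansion to $t=t_\alpha$ and using Eq.~\eqref{eq:roots_action_KF} gives
\begin{align}
t_\alpha=\sum_i \alpha(h^i)\,h_i .
\end{align}
Then, by Def.~\ref{def:Inner_Product_Root_Space} and bilinearity,
\begin{align}
\langle\alpha,\beta\rangle=\kappa(t_\alpha,t_\beta)=\sum_i\alpha(h^i)\,\kappa(h_i,t_\beta)=\sum_i \alpha(h^i)\,\beta(h_i).
\end{align}
The last step uses Eq.~\eqref{eq:roots_action_KF} together with the symmetry of $\kappa$. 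Finally, the symmetry of $\langle\cdot,\cdot\rangle$, inherited from that of $\kappa$, lets us exchange the roles of $\alpha$ and $\beta$, which yields exactly Eq.~\eqref{eq:dual_basis_general}.

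The norm statement Eq.~\eqref{eq:dual_basis_norm} is the specialization $\alpha=\beta\in E$. The only subtle point, and the one we expect to be the main obstacle, is justifying that this quantity is a genuine nonnegative squared norm, since $\langle\cdot,\cdot\rangle$ is only complex bilinear on $\mathfrak h^*$. For this we appeal to the stated fact that the restriction of $\langle\cdot,\cdot\rangle$ to $E=\mathrm{Span}_{\mathbb R}\Phi$ is real and positive definite. This follows from the standard result (Humphreys) that $\langle\alpha,\beta\rangle\in\mathbb Q$ for roots and that $\langle\gamma,\gamma\rangle=\sum_{\beta\in\Phi}\langle\gamma,\beta\rangle^2$ up to the positive normalization factor $1/(2h^\vee)$. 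The identity itself, however, is independent of the choice of basis and requires no positivity.
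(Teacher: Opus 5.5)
Your argument is correct and is essentially the paper's: both rest on the expansion $t=\sum_i\kappa(t,h^i)\,h_i$ over the $\kappa$-dual pair of bases. The paper applies it to $t_\beta$, so the stated order $\alpha(h_i)\beta(h^i)$ comes out directly, whereas you apply it to $t_\alpha$ and then use symmetry; your explicit checks of the dual basis, of the expansion, and of positivity on $E$ are sound additions that the paper leaves implicit.
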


\begin{proof}
Using the Killing form identification of Lem.~\ref{lem:Killing_Form_Identification}, we have
\begin{align}
    \alpha(h)&=\kappa(t_\alpha,h),\\
    \beta(h)&=\kappa(t_\beta,h),
\end{align}
for every $h\in\mathfrak h$.
This allows the right-hand side of Eq.~\eqref{eq:dual_basis_general} to be expressed as
\begin{align}\label{eq:DBP_RHS}
    \sum_{i=1}^r\alpha(h_i)\beta(h^i)
    =
    \sum_{i=1}^r\kappa(t_\alpha,h_i)\kappa(t_\beta,h^i)
    =
    \kappa\left(t_\alpha,\sum_{i=1}^r\kappa(t_\beta,h^i)h_i\right).
\end{align}
Since the $\kappa$-dual basis $\{h^i\}_{i=1}^r$ also forms a basis of the Cartan subalgebra $\mathfrak h$, with expansion coefficients determined by the Killing form, we further have
\begin{align}
    t_{\beta}=\sum_{i=1}^{r}\kappa(t_\beta,h^i)h_i.
\end{align}
Substituting this relation into Eq.~\eqref{eq:DBP_RHS}, we arrive at
\begin{align}
    \sum_{i=1}^r\alpha(h_i)\beta(h^i)
    =
    \kappa\left(t_\alpha,t_\beta\right)
    =
    \langle\alpha,\beta\rangle,
\end{align}
which completes the proof.
\end{proof}

The final concept we introduce is the Weyl vector $\delta$, which plays a central role in formulating state-independent bounds for Lie algebraic uncertainty relations.

\begin{mydef}{Weyl Vector}{Weyl_Vector}
Let $\g$ be a complex semisimple Lie algebra, and fix a choice of positive roots $\Phi^+$. 
The Weyl vector is defined as the half-sum of all positive roots:
\begin{align}\label{eq:weyl_vector1}
\delta 
\coloneqq 
\frac12 \sum_{\alpha \in \Phi^+} \alpha \in \mathfrak h^*.
\end{align}
\end{mydef}

By linearity of the invariant-form  identification, the Weyl vector satisfies
\begin{align}
    \sum_{\alpha\in \Phi^{+}}t_{\alpha}=t_{2\delta}=2t_{\delta}
\end{align}
or equivalently,
\begin{align}
 \sum_{\alpha\in \Phi^{+}}\frac{\langle \alpha,\alpha\rangle}{2}h_{\alpha}=2t_{\delta}
\end{align}
The Weyl vector $\delta$ in Eq.~\eqref{eq:weyl_vector1} encodes the geometry of the representation and determines the optimal bound appearing in our uncertainty relations.


\subsection{From Root Space Structure to Observables}
\label{subsec:Root_Space_Observables}

This subsection introduces the sequence of bases that connects the algebraic structure of a semisimple Lie algebra $\g$ to the operator language used in physics. 
We begin with the Chevalley basis, which is adapted to the root space decomposition and makes the underlying algebraic relations explicit. 
We then pass to a real anti-Hermitian basis, suited to the compact real form of the algebra, and finally arrive at the Hermitian basis used in physics, whose elements are interpreted as observables.

The natural starting point is the Chevalley basis, whose generators reflect the root space decomposition of a complex semisimple Lie algebra $\g$. 
As we shall see, this basis forms the foundation for constructing the anti-Hermitian generators of the compact real form $\g_{\mathbb R}$.

\begin{mydef}{Chevalley Basis}{Chevalley_Basis}
Let $\g$ be a complex semisimple Lie algebra of rank $r$, with Cartan subalgebra $\mathfrak h$
and root system $\Phi$.
After fixing a set of positive roots $\Phi^+\subset\Phi$,
the associated Chevalley basis takes the form
\begin{align}
    \{ h_i,
    e_\alpha,
    f_\alpha
    \mid
    i=1,\dots,r,\,
    \alpha\in\Phi^+\},
\end{align}
Here the rank is $r=\dim\mathfrak h$, the elements $h_i=h_{\alpha_i}$ for the simple roots $\alpha_i$ form a basis of $\mathfrak h$, while $e_{\alpha}$  and $f_{\alpha}$ span the root spaces associated with $\alpha$ and $-\alpha$ respectively.
In particular, they satisfy
\begin{align}
    [h_i,e_\alpha]= \alpha(h_i)e_\alpha,
    \quad     
    [h_i,f_\alpha]= -\alpha(h_i)f_\alpha,
    \quad 
    [e_\alpha,f_\alpha]=h_\alpha,
\end{align}
where $h_\alpha\in\mathfrak h$
denotes the coroot associated with
$\alpha$ (see Eq.~\eqref{eq:coroots}).
\end{mydef}

The Cartan generators $\{h_i\}_{i=1}^{r}$
span the Cartan sector of the Lie algebra, whereas $e_\alpha$ and $f_\alpha$ are root vectors associated with the positive and negative roots, respectively. 
We choose the Cartan subalgebra and Chevalley basis compatibly with the fixed compact real form $\g_{\mathbb R}$.
Let $\tau$ denote the conjugation with respect to $\g_{\mathbb R}$, so that
\begin{align}\label{eq:tau_conjugation}
    \tau(h_i)=-h_i,\qquad \tau(e_\alpha)=-f_\alpha.
\end{align}
After extending a finite-dimensional unitary representation of 
$\g_{\mathbb R}$ complex-linearly to $\g$, the represented operators satisfy
\begin{equation}
e_\alpha^\dagger = f_\alpha,\quad f_\alpha^\dagger = e_\alpha,\quad 
h_i^\dagger = h_i,
\end{equation}
where we suppress the representation map in operator expressions.
As a simple example, consider $\mathfrak{sl}(2,\mathbb C)$ (see Eq.~\eqref{eq:sl_2_C}), whose Chevalley basis is given by
\begin{align}\label{sl2_chevellry_basis}
    h
    &=
    \begin{pmatrix}
        1 & 0\\
        0 & -1
    \end{pmatrix}
    =\sigma_z,\\
    e_\alpha
    &=
    \begin{pmatrix}
        0 & 1\\
        0 & 0
    \end{pmatrix}
    =\frac{1}{2}
    \left(\sigma_x+ \mathrm{i}\sigma_y\right),
    \\
    f_\alpha
    &=
    \begin{pmatrix}
        0 & 0\\
        1 & 0
    \end{pmatrix}
    =\frac{1}{2}
    \left(\sigma_x- \mathrm{i}\sigma_y\right).
\end{align}
These generators obey
\begin{align}\label{eq:chevalley_relations}   
     [h,e_\alpha]=2e_\alpha,\quad
    [h,f_\alpha]=-2f_\alpha,\quad
    [e_\alpha,f_\alpha]=h.
\end{align}
Thus $e_\alpha$ and $f_\alpha$ span the positive and negative root spaces, while $h$ spans the Cartan subalgebra. 
This example gives the basic model of the root space decomposition and serves as the prototype for the semisimple structure.

Passing to the compact real form $\g_{\mathbb R}$ (see Eq.~\eqref{eq:complexification}), the Chevalley basis may be recombined into an anti-Hermitian basis. 
The Cartan generators, however, are not generally orthogonal under the chosen invariant inner product and therefore require an additional orthogonalization step. 
To formulate this construction systematically, we first introduce the Cartan Gram matrix $G$ associated with the Cartan generators $\{h_i\}_{i=1}^{r}$ (see Def.~\ref{def:Chevalley_Basis}).

\begin{mydef}{Cartan Gram Matrix}{Cartan_Gram_Matrix}
Let $\g$ be a complex semisimple Lie algebra with Cartan subalgebra $\mathfrak h$, and let
$\{h_1,\ldots,h_r\}$ be the generators of $\mathfrak h$ (see Def.~\ref{def:Chevalley_Basis}), with
$r=\mathrm{rank}(\g)$. 
Since the invariant bilinear form $\kappa$ is positive definite when restricted to the real span $\mathfrak{h}_{\mathbb R} = \mathrm{span}_{\mathbb R}\{h_1,\ldots,h_r\}$, the associated Cartan Gram matrix 
\begin{align}
    G\coloneqq (G_{ij})_{1\leqslant i,j\leqslant r},
\end{align}
with
\begin{align}\label{eq:Cartan_Gram_Matrix}
G_{ij}
\coloneqq
\kappa(h_i,h_j),
\end{align}
 is real, symmetric, and positive definite.
\end{mydef}

Choose a real invertible matrix $U=(U_{ij})_{1\leqslant i,j \leqslant r}\in \mathrm{GL}_r(\mathbb R)$
satisfying
\begin{align}\label{eq:Cartan_orthonormalization_matrix}
UGU^{\T}
=
2\1_r.
\end{align}
Since the Cartan Gram matrix $G$ is positive definite (see Def.~\ref{def:Cartan_Gram_Matrix}), such a matrix always exists. 
One convenient choice is
\begin{align}\label{eq:one_choice_of_U}
U
=
\sqrt{2}\,G^{-\frac{1}{2}}.
\end{align}
The choice of $U$ is, however, not unique.
Indeed, if $U$ is one solution of Eq.~\eqref{eq:Cartan_orthonormalization_matrix}, then every matrix obtained by left multiplication with an orthogonal transformation $O\in \mathrm{O}(r)$, i.e., $OU$ is also a solution.

Using $U$, we define new Cartan generators by
\begin{align}\label{eq:orthogonalized_H_i}
H_i
\coloneqq
-\mathrm{i}\sum_{j=1}^{r}U_{ij}h_j,
\quad
i=1,\ldots,r.
\end{align}
These generators form an orthogonal anti-Hermitian basis of the compact Cartan subalgebra.
Let $(\cdot,\cdot)$ denote the positive invariant inner product on the compact real form $\g_{\mathbb R}$ (see Eq.~\eqref{eq:complexification}), $(x,y)=-\kappa(x,y)$, as defined in Eq.~\eqref{eq:Inner_product_algebra}. 
Then
\begin{align}
(H_i,H_j)
&=
-\kappa\left(
-\mathrm{i}\sum_{k=1}^{r}U_{ik}h_k,
-\mathrm{i}\sum_{l=1}^{r}U_{jl}h_l
\right)
\\
&=
\sum_{k,l=1}^{r}
U_{ik}U_{jl}\,\kappa(h_k,h_l)
\\
&=
(UGU^{\T})_{ij}
\\
&=
2\delta_{ij},
\label{eq:Cartan_Ha_orthonormal_proof}
\end{align}
where the last equality follows from
Eq.~\eqref{eq:Cartan_orthonormalization_matrix}.

With the orthogonalized Cartan generators $H_i$ introduced in Eq.~\eqref{eq:orthogonalized_H_i}, all ingredients are now in place to construct an anti-Hermitian basis of the compact real form $\g_{\mathbb R}$ (see Eq.~\eqref{eq:complexification}). The root-length factors normalize all root directions to squared norm 2.

\begin{mydef}{Anti-Hermitian Basis}{Anti_Hermitian_Basis}
Let $\g$ be a complex semisimple Lie algebra with compact real form $\g_{\mathbb R}$, and let $\{h_i,e_\alpha,f_\alpha\}$
be a Chevalley basis of $\g$
(see Def.~\ref{def:Chevalley_Basis}).
The anti-Hermitian basis associated with this Chevalley basis is defined by
\begin{align}
    \{H_i, E_\alpha, F_\alpha
    \mid
    i=1,\dots,r,\,
    \alpha\in\Phi^+\},
\end{align}
where the Cartan generators $\{H_i\}_{i=1}^{r}$ are given by Eq.~\eqref{eq:orthogonalized_H_i}, and the generators of the root spaces are
\begin{align}
E_\alpha
&\coloneqq
-\mathrm{i}\sqrt{\frac{\langle \alpha,\alpha\rangle}{2}}(e_\alpha+f_\alpha),
\label{eq:anti_hermitian_root_generators_E}
\\
F_\alpha
&\coloneqq
-\sqrt{\frac{\langle \alpha,\alpha\rangle}{2}} (e_\alpha-f_\alpha)
.\label{eq:anti_hermitian_root_generators_F}
\end{align}
\end{mydef}

With respect to the compact invariant inner product $(\cdot,\cdot)$ introduced in Eq.~\eqref{eq:Inner_product_algebra}, the anti-Hermitian generators $\{H_i,E_\alpha,F_\alpha\}$ (see Def.~\ref{def:Anti_Hermitian_Basis}) form an orthogonal basis of $\g_{\mathbb R}$ satisfying
\begin{align}
(H_i,H_j)
&=
2\delta_{ij},\label{eq:Inner}\\
(E_\alpha,E_\beta)
&=
2\delta_{\alpha\beta},\label{eq:Inner2}\\
(F_\alpha,F_\beta)
&=
2\delta_{\alpha\beta},\label{eq:Inner3}
\end{align}
while all mixed inner products vanish.

A useful property of the generators $\{H_i, E_{\alpha}, F_{\alpha}\}$ is that all their elements are anti-Hermitian. We use the term anti-Hermitian, as is common in the physics literature; in the mathematical literature, the same property is often called skew-Hermitian. To avoid ambiguity, we recall that an operator $X$ is anti-Hermitian if $X^{\dagger}=-X$. We now verify this property explicitly for the generators above.

\begin{mylem}{Anti-Hermitian Property}{Anti_Hermitian}
The generators
$\{H_i,E_\alpha,F_\alpha\}$ (see Def.~\ref{def:Anti_Hermitian_Basis})
form an anti-Hermitian basis of the compact real form
$\g_{\mathbb R}$.
Explicitly, each generator satisfies the condition
\begin{align}
    H_i^\dagger
    =-H_i, 
    \quad
    E_\alpha^\dagger
    =-E_\alpha,
    \quad
    F_\alpha^\dagger
    =-F_\alpha.
\end{align}
\end{mylem}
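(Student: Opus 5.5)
The plan is to compute the adjoints directly from the operator relations stated just before Def.~\ref{def:Anti_Hermitian_Basis}. In any finite-dimensional unitary representation of $\g_{\mathbb R}$, extended complex-linearly to $\g$, these read $e_\alpha^\dagger=f_\alpha$, $f_\alpha^\dagger=e_\alpha$ and $h_i^\dagger=h_i$. They follow from the compatibility of the Chevalley basis with the conjugation $\tau$ in Eq.~\eqref{eq:tau_conjugation}. Indeed, a unitary representation sends $\g_{\mathbb R}$ to anti-Hermitian operators, so on the complexification one has $\pi(\tau(x))=-\pi(x)^\dagger$. Taking $\tau(h_i)=-h_i$ and $\tau(e_\alpha)=-f_\alpha$ then gives the three relations.

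The checks are as follows, using antilinearity of the adjoint, $(cX)^\dagger=\bar c X^\dagger$.

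For the Cartan generators, $U$ is real by Eq.~\eqref{eq:Cartan_orthonormalization_matrix}. Hence
\begin{align}
H_i^\dagger=\overline{(-\mathrm{i})}\sum_j U_{ij}h_j^\dagger=\mathrm{i}\sum_j U_{ij}h_j=-H_i .
\end{align}

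For the root generators, the prefactor $\sqrt{\langle\alpha,\alpha\rangle/2}$ is a positive real number, because $\langle\cdot,\cdot\rangle$ is positive definite on the real root space $E$. Therefore
\begin{align}
E_\alpha^\dagger=\mathrm{i}\sqrt{\tfrac{\langle\alpha,\alpha\rangle}{2}}\,(f_\alpha+e_\alpha)=-E_\alpha ,
\end{align}
and
\begin{align}
F_\alpha^\dagger=-\sqrt{\tfrac{\langle\alpha,\alpha\rangle}{2}}\,(f_\alpha-e_\alpha)=-F_\alpha .
\end{align}

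It remains to show that these elements form a basis of $\g_{\mathbb R}$. The set $\{H_i,E_\alpha,F_\alpha\}$ has $r+2|\Phi^+|=\dim_{\mathbb C}\g$ elements. Its complex span contains $h_j$, because $U$ is invertible, and it also contains $e_\alpha,f_\alpha$ as combinations of $E_\alpha$ and $F_\alpha$. So the set is a complex basis of $\g$.

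Each element also lies in $\g_{\mathbb R}$, i.e.\ is fixed by $\tau$. Since $\tau$ is antilinear,
\begin{align}
\tau(E_\alpha)=\mathrm{i}c(-f_\alpha-e_\alpha)=E_\alpha,
\end{align}
where $c=\sqrt{\langle\alpha,\alpha\rangle/2}$. The same computation gives $\tau(F_\alpha)=F_\alpha$ and $\tau(H_i)=H_i$. A complex basis of $\g$ consisting of $\tau$-fixed vectors is a real basis of the fixed-point set $\g_{\mathbb R}$. This proves the basis claim.

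The only step that needs care is the justification of the adjoint relations from $\tau$. Specifically, one must check that antilinear extension is compatible with the identity $\pi(\tau x)=-\pi(x)^\dagger$. I would verify this first for $x\in\g_{\mathbb R}$, where it holds by unitarity, and then extend it to $x+\mathrm{i}y$ by antilinearity of both sides. Everything else is a direct sign computation.
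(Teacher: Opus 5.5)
Your proposal is correct and follows the paper's proof: both compute the adjoints directly from $h_i^\dagger=h_i$, $e_\alpha^\dagger=f_\alpha$ and $f_\alpha^\dagger=e_\alpha$, using that $U$ and $\sqrt{\langle\alpha,\alpha\rangle/2}$ are real. You also derive these adjoint relations from $\pi(\tau(x))=-\pi(x)^\dagger$ and prove the basis claim via $\tau$-fixed vectors; both additions are sound and fill in steps the paper only asserts.
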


\begin{proof}
Using
\begin{align}
    h_i^\dagger=h_i,
    \quad
    e_\alpha^\dagger=f_\alpha,
    \quad
    f_\alpha^\dagger=e_\alpha,
\end{align}
together with the fact that $U_{ij}\in\mathbb R$, the anti-Hermiticity of the generators follows directly. 
For the Cartan generators $\{H_i\}_{i=1}^{r}$ (see Eq.~\eqref{eq:orthogonalized_H_i}),
\begin{align}
    H_i^\dagger
    &=
    \left(\sum_{j=1}^{r} -\mathrm{i}U_{ij}h_j\right)^\dagger
    =
    \sum_{j=1}^{r} \mathrm{i}U_{ij}h_j
    =
    -H_i,
\end{align}
Here $U\in\mathrm{GL}_r(\mathbb R)$ denotes the real invertible matrix introduced in Eq.~\eqref{eq:Cartan_orthonormalization_matrix}.
Similarly, for each positive root $\alpha\in\Phi^+$,
\begin{align}
   E_\alpha^\dagger
    &=
    \bigl(-\mathrm{i}\sqrt{\langle\alpha,\alpha\rangle/2}(e_\alpha+f_\alpha)\bigr)^\dagger
    =
    \mathrm{i}\sqrt{\langle\alpha,\alpha\rangle/2}(f_\alpha+e_\alpha)
    =
    -E_\alpha ,\\
    F_\alpha^\dagger
    &=
    \big(-\sqrt{\langle\alpha,\alpha\rangle/2}(e_\alpha-f_\alpha)\big)^\dagger
    =
    \sqrt{\langle\alpha,\alpha\rangle/2}(e_\alpha-f_\alpha)
    =
    -F_\alpha.
\end{align}
Thus all generators are anti-Hermitian, completing the proof.
\end{proof}

We now arrive at the standard convention of quantum mechanics. 
Multiplying each anti-Hermitian generator (see Def.~\ref{def:Anti_Hermitian_Basis}) by the imaginary unit produces the Hermitian basis
\vspace{1.5cm}
\begin{align}
\label{eq:hermitian_base}
    \{ 
    \tikzmarknode{i1}{\mathrm{i}}H_{\tikzmarknode{idx}{i}}, 
    \tikzmarknode{i2}{\mathrm{i}}E_\alpha, 
    \tikzmarknode{i3}{\mathrm{i}}F_\alpha 
    \},
\end{align}
\vspace{1.0cm} 

\begin{tikzpicture}[overlay, remember picture, >=stealth, cyan!70!black, thick]
    \node (imaginary_text) [above=1.0cm of i2, text=black, font=\small\sffamily] {imaginary unit};
    
    \draw[->] ([yshift=3pt]i1.north) to [bend left=10] ([xshift=-6pt]imaginary_text.south);
    \draw[->] ([yshift=3pt]i2.north) -- (imaginary_text.south);
    \draw[->] ([yshift=3pt]i3.north) to [bend right=10] ([xshift=6pt]imaginary_text.south);

    \node (index_text) [below=0.7cm of idx, text=black, font=\small\sffamily] {index};
    \draw[->] ([yshift=-2pt]idx.south) -- (index_text.north);
\end{tikzpicture}

\noindent
whose elements are naturally identified with quantum observables. 
This final step bridges the abstract Lie algebraic framework and the operator formalism of quantum mechanics. 
The complete chain of basis transformations is summarized below.
\begin{align}
    \underbrace{\{h_i, e_{\alpha}, f_{\alpha}\}}_{\text{Chevalley Basis}}
    \longrightarrow
    \underbrace{\{H_i,E_\alpha,F_\alpha\}}_{\text{Anti-Hermitian Basis}}
    \longrightarrow
    \underbrace{\{\mathrm{i}H_i,\mathrm{i}E_\alpha,\mathrm{i}F_\alpha\}}_{\text{Observable Basis}}.
\end{align}

This subsection concludes with representative examples that make the abstract basis construction explicit in concrete Lie algebras.
The simplest case is the type-A Lie algebra $\mathfrak{sl}(2,\mathbb C)$ (see Eq.~\eqref{eq:sl_2_C}), for which a Chevalley basis $\{h, e_{\alpha}, f_{\alpha}\}$
(see Def.~\ref{def:Chevalley_Basis}) can be chosen as
\begin{align}
\left\{h=\begin{pmatrix}
    1&0\\0&-1
\end{pmatrix},
e_\alpha=\begin{pmatrix}
    0&1\\0&0
\end{pmatrix}, 
f_\alpha=\begin{pmatrix}
    0&0\\1&0
\end{pmatrix}
\right\}.
\end{align}
In this case, the Cartan Gram matrix reduces to $G=(2)$ (see Def.~\ref{def:Cartan_Gram_Matrix}), so the orthogonalization is trivial and one may simply choose $U= (1)$ (see Eq.~\eqref{eq:Cartan_orthonormalization_matrix}).
The corresponding anti-Hermitian basis
$\{H,E_\alpha,F_\alpha\}$ (see Def.~\ref{def:Anti_Hermitian_Basis})
for the compact real form
$\mathfrak{su}(2)$ (see Eq.~\eqref{eq:su_2})
is obtained by
\begin{align}
H
&=
-\mathrm{i}h
=
\begin{pmatrix}
-\mathrm{i}&0\\
0&\mathrm{i}
\end{pmatrix},
\\
E_\alpha
&=
-\mathrm{i}(e_\alpha+f_\alpha)
=
\begin{pmatrix}
0&-\mathrm{i}\\
-\mathrm{i}&0
\end{pmatrix},\\
F_\alpha
&=
-(e_\alpha-f_\alpha)
=
\begin{pmatrix}
0&-1\\
1&0
\end{pmatrix}.
\end{align}
Thus, the corresponding anti-Hermitian basis (see Def.~\ref{def:Anti_Hermitian_Basis}) takes the form
\begin{align}
\left\{
H=\begin{pmatrix}
    -\mathrm{i}&0\\0&\mathrm{i}
\end{pmatrix},
E_\alpha=\begin{pmatrix}
    0&-\mathrm{i}\\-\mathrm{i}&0
\end{pmatrix}, 
F_\alpha=\begin{pmatrix}
    0&-1\\1&0
\end{pmatrix}
\right\}.
\end{align}
Their normalization is
\begin{align}
-\Tr[H^2]
=
-\Tr[E_\alpha^2]
=
-\Tr[F_\alpha^2]  
=
2,
\end{align}
while all mixed inner products vanish.

The passage to the quantum mechanical convention is completed by multiplying each anti-Hermitian generator by the imaginary unit $\mathrm{i}$, thereby obtaining the corresponding Hermitian basis of observables,
\begin{align}
\mathrm{i}H
&=
h
=
\begin{pmatrix}
1&0\\
0&-1
\end{pmatrix}
= 2J_z,
\\
\mathrm{i}E_\alpha
&=
e_\alpha+f_\alpha
=
\begin{pmatrix}
0&1\\
1&0
\end{pmatrix}
= 2J_x,\\
\mathrm{i}F_\alpha
&=
\mathrm{i}(f_\alpha-e_\alpha)
=
\begin{pmatrix}
0&-\mathrm{i}\\
\mathrm{i}&0
\end{pmatrix}
= 2J_y.
\end{align}
Thus, the resulting Hermitian basis of observables is
\begin{align}
\left\{
\mathrm{i}H=\begin{pmatrix}
    1&0\\0&-1
\end{pmatrix},
\mathrm{i}E_\alpha=\begin{pmatrix}
    0&1\\1&0
\end{pmatrix}, 
\mathrm{i}F_\alpha=\begin{pmatrix}
    0&-\mathrm{i}\\ \mathrm{i}&0
\end{pmatrix}
\right\}.
\end{align}
Equivalently, up to the overall normalization factor of $2$, this construction recovers the familiar spin-$1/2$ angular momentum.

For $\mathfrak{sl}(2,\mathbb C)$ and its compact real form $\mathfrak{su}(2)$, the Chevalley (see Def.~\ref{def:Chevalley_Basis}), anti-Hermitian (see Def.~\ref{def:Anti_Hermitian_Basis}) and Hermitian bases are therefore linked by the transformation chain
\begin{align}\label{eq:su2_three_basis}
\underbrace{\{h,e_\alpha,f_\alpha\}}_{\text{Chevalley Basis}}
\longrightarrow
\underbrace{\{H,E_\alpha,F_\alpha\}}_{\text{Anti-Hermitian Basis}}
\longrightarrow
\underbrace{\{\mathrm{i}H,\mathrm{i}E_\alpha,\mathrm{i}F_\alpha\}}_{\text{Hermitian Basis}}
=
\underbrace{\{2J_z,2J_x,2J_y\}\, .}_{\text{Angular Momentum}}
\end{align}

The next example moves to the rank two type-A algebra $\mathfrak{sl}(3,\mathbb C)$ and its compact real form $\mathfrak{su}(3)$.
Denoting by $\mathbf E_{ij}$ the $3\times3$ matrix unit with a single nonzero entry in position $(i,j)$, a standard Chevalley basis (see Def.~\ref{def:Chevalley_Basis})
\begin{align}
    \{h_1, h_2, e_{\alpha_1}, e_{\alpha_2}, e_{\alpha_1+\alpha_2}, f_{\alpha_1}, f_{\alpha_2}, f_{\alpha_1+\alpha_2}\}
\end{align}
of $\mathfrak{sl}(3,\mathbb C)$ takes the form
\begin{gather}
h_1
=
 E_{11}- E_{22}
=
\begin{pmatrix}
1&0&0\\
0&-1&0\\
0&0&0
\end{pmatrix},
\quad
h_2
=
 E_{22}- E_{33}
=
\begin{pmatrix}
0&0&0\\
0&1&0\\
0&0&-1
\end{pmatrix},\\
e_{\alpha_1}
=
 E_{12}
=
\begin{pmatrix}
0&1&0\\
0&0&0\\
0&0&0
\end{pmatrix},
\quad
e_{\alpha_2}
=
 E_{23}
=
\begin{pmatrix}
0&0&0\\
0&0&1\\
0&0&0
\end{pmatrix},
\quad
e_{\alpha_1+\alpha_2}
=
 E_{13}
=
\begin{pmatrix}
0&0&1\\
0&0&0\\
0&0&0
\end{pmatrix},\\
f_{\alpha_1}
=
E_{21}
=
\begin{pmatrix}
0&0&0\\
1&0&0\\
0&0&0
\end{pmatrix},
\quad
f_{\alpha_2}
=
 E_{32}
=
\begin{pmatrix}
0&0&0\\
0&0&0\\
0&1&0
\end{pmatrix},
\quad
f_{\alpha_1+\alpha_2}
=
E_{31}
=
\begin{pmatrix}
0&0&0\\
0&0&0\\
1&0&0
\end{pmatrix}.
\label{eq:sl3_chevalley_basis}
\end{gather}
The corresponding set of positive roots is $\Phi^+=\{\alpha_1,\alpha_2,\alpha_1+\alpha_2\}$.

For this algebra, one convenient choice of U satisfying Eq.~\eqref{eq:Cartan_orthonormalization_matrix} is
\begin{align}
    U=
    \begin{pmatrix}
    1&0\\
    \frac{\sqrt{3}}{3}&\frac{2\sqrt{3}}{3}
    \end{pmatrix},
\end{align}
from which the corresponding anti-Hermitian Cartan generators follow as
\begin{align}\label{eq:su3_anti_hermitian_cartan_chevalley}
H_1
=
-\mathrm{i}h_1
=
\begin{pmatrix}
-\mathrm{i}&0&0\\
0&\mathrm{i}&0\\
0&0&0
\end{pmatrix},
\quad
H_2
=
-\frac{\sqrt{3}}{3}\mathrm{i}(h_1+2h_2)
=
-\frac{\sqrt{3}}{3}\mathrm{i}
\begin{pmatrix}
1 & 0 & 0 \\
0 & 1 & 0 \\
0 & 0 & -2
\end{pmatrix}.
\end{align}
The root space sector is obtained analogously, giving the anti-Hermitian generators
\begin{gather}
E_{\alpha_1}
=
-\mathrm{i}(e_{\alpha_1}+f_{\alpha_1})
=
\begin{pmatrix}
0&-\mathrm{i}&0\\
-\mathrm{i}&0&0\\
0&0&0
\end{pmatrix},
\quad
E_{\alpha_2}
=
-\mathrm{i}(e_{\alpha_2}+f_{\alpha_2})
=
\begin{pmatrix}
0&0&0\\
0&0&-\mathrm{i}\\
0&-\mathrm{i}&0
\end{pmatrix},\\
E_{\alpha_1+\alpha_2}
=
-\mathrm{i}(e_{\alpha_1+\alpha_2}
+
f_{\alpha_1+\alpha_2})
=
\begin{pmatrix}
0&0&-\mathrm{i}\\
0&0&0\\
-\mathrm{i}&0&0
\end{pmatrix}\\
F_{\alpha_1}
=
-(e_{\alpha_1}-f_{\alpha_1})
=
\begin{pmatrix}
0&-1&0\\
1&0&0\\
0&0&0
\end{pmatrix},
\quad
F_{\alpha_2}
=
-(e_{\alpha_2}-f_{\alpha_2})
=
\begin{pmatrix}
0&0&0\\
0&0&-1\\
0&1&0
\end{pmatrix},\\
F_{\alpha_1+\alpha_2}
=
-(e_{\alpha_1+\alpha_2}
-
f_{\alpha_1+\alpha_2})
=
\begin{pmatrix}
0&0&-1\\
0&0&0\\
1&0&0
\end{pmatrix}.
\label{eq:su3_anti_hermitian_root_chevalley}
\end{gather}
These eight generators are anti-Hermitian and mutually orthogonal with respect to the chosen invariant inner product. 
They therefore constitute the basis (see Def.~\ref{def:Anti_Hermitian_Basis})
\begin{align}
\left\{
H_1,H_2,
E_{\alpha_1},
E_{\alpha_2}, E_{\alpha_1+\alpha_2}, F_{\alpha_1}, F_{\alpha_2}, F_{\alpha_1+\alpha_2}
\right\},
\end{align}
of the compact real form $\mathfrak{su}(3)$ induced by the Chevalley basis above (see Def.~\ref{def:Chevalley_Basis}).

The corresponding Hermitian basis 
\begin{align}
\left\{
\mathrm{i}H_1, \mathrm{i}H_2, \mathrm{i}E_{\alpha_1}, \mathrm{i}E_{\alpha_2}, \mathrm{i}E_{\alpha_1+\alpha_2}, \mathrm{i}F_{\alpha_1}, \mathrm{i}F_{\alpha_2}, \mathrm{i}F_{\alpha_1+\alpha_2} \right\}
\end{align}
is
\begin{gather}
\mathrm{i}H_1
=
\begin{pmatrix}
1&0&0\\
0&-1&0\\
0&0&0
\end{pmatrix},
\quad
\mathrm{i}H_2
=
\frac{\sqrt{3}}{3}
\begin{pmatrix}
1 & 0 & 0 \\
0 & 1 & 0 \\
0 & 0 & -2
\end{pmatrix},\label{eq:Gell_Cartan}\\
\mathrm{i}E_{\alpha_1}
=
\begin{pmatrix}
0&1&0\\
1&0&0\\
0&0&0
\end{pmatrix},
\quad
\mathrm{i}E_{\alpha_2}
=
\begin{pmatrix}
0&0&0\\
0&0&1\\
0&1&0
\end{pmatrix},
\quad
\mathrm{i}E_{\alpha_1+\alpha_2}
=
\begin{pmatrix}
0&0&1\\
0&0&0\\
1&0&0
\end{pmatrix},\label{eq:Gell_E}\\
\mathrm{i}F_{\alpha_1}
=
\begin{pmatrix}
0&-\mathrm{i}&0\\
\mathrm{i}&0&0\\
0&0&0
\end{pmatrix},
\quad
\mathrm{i}F_{\alpha_2}
=
\begin{pmatrix}
0&0&0\\
0&0&-\mathrm{i}\\
0&\mathrm{i}&0
\end{pmatrix},
\quad
\mathrm{i}F_{\alpha_1+\alpha_2}
=
\begin{pmatrix}
0&0&-\mathrm{i}\\
0&0&0\\
\mathrm{i}&0&0
\end{pmatrix}.
\label{eq:su3_hermitian_chevalley_basis}
\end{gather}
These generators are precisely the Gell-Mann matrices~\cite{Gell-Mann}, up to a permutation of their ordering:

\begin{align}
     \raisebox{-8ex}{\includegraphics[scale=1]{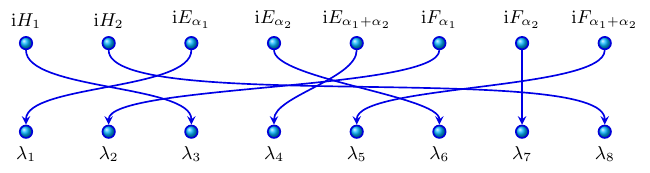}} \, .
\end{align}

This identification shows that the familiar $\mathfrak{su}(3)$ observables emerge naturally from the Chevalley basis (see Def.~\ref{def:Chevalley_Basis}), with the root space decomposition providing the underlying algebraic structure. 
The complete correspondence is summarized by the following sequence of basis transformations:
\begin{gather}\label{eq:su3_three_basis}
\underbrace{\{h_1,h_2, e_{\alpha_1},e_{\alpha_2},e_{\alpha_1+\alpha_2},f_{\alpha_1}, f_{\alpha_2}, f_{\alpha_1+\alpha_2}\}}_{\text{Chevalley Basis}} \\
\downarrow \notag \\
\underbrace{\{H_1,H_2,
E_{\alpha_1},E_{\alpha_2}, E_{\alpha_1+\alpha_2}, F_{\alpha_1}, F_{\alpha_2}, F_{\alpha_1+\alpha_2}\}}_{\text{Anti-Hermitian Basis}}  \\
\downarrow \notag\\
\underbrace{\{\mathrm{i}H_1, \mathrm{i}H_2, \mathrm{i}E_{\alpha_1}, \mathrm{i}E_{\alpha_2}, \mathrm{i}E_{\alpha_1+\alpha_2}, \mathrm{i}F_{\alpha_1}, \mathrm{i}F_{\alpha_2}, \mathrm{i}F_{\alpha_1+\alpha_2}\}
}_{\text{Hermitian Basis}}\\
\rotatebox[origin=c]{90}{$=$} \notag\\
\underbrace{\{\lambda_3, \lambda_8, \lambda_1, \lambda_6, \lambda_4, \lambda_2, \lambda_7, \lambda_5\}}_{
\text{Gell-Mann Matrices}}\label{eq:Gell_mann}.
\end{gather}

These examples make explicit how the abstract root space decomposition is converted into the operator language of quantum mechanics. 
Starting from a Chevalley basis (see Def.~\ref{def:Chevalley_Basis}), orthogonalization of the Cartan generators and recombination of opposite root spaces produce an orthogonal anti-Hermitian basis (see Def.~\ref{def:Anti_Hermitian_Basis}) of the compact real form, and hence, after multiplication by $\mathrm{i}$, a corresponding orthogonal set of Hermitian observables. 
For $\mathfrak{su}(2)$ and $\mathfrak{su}(3)$, this construction reproduces the familiar Pauli and Gell-Mann operators, up to normalization, ordering and the choice of orthogonal Cartan basis. 
This correspondence will allow the root geometry and representation-theoretic structure of the Lie algebra to be carried directly into the formulation of uncertainty relations.


\subsection{Highest Weight Representations}
\label{subsec:Highest_Weight_Representations}

Highest weight representations provide the core organizing principle for the finite-dimensional representation theory of semisimple Lie algebras. 
Once a Cartan subalgebra $\mathfrak h$ and a choice of positive roots $\Phi^+$ have been fixed, a representation decomposes into weight spaces, and the root generators act as raising and lowering operators between them. 
A highest weight vector is a  nonzero weight vector annihilated by all raising operators; 
from this single vector, the entire irreducible representation is generated by successive applications of lowering operators. 
In this way, the highest weight serves as a compact label for the representation and determines its full weight space structure, reducing the classification of finite-dimensional irreducible representations to the classification of highest weights. 
The allowed highest weights are precisely the dominant integral weights.
In this section, we introduce the weight space decomposition and the notion of highest weight within this general framework.

Let $\pi:\g\to \mathrm{End}(V)$ be a finite-dimensional representation of a complex semisimple Lie algebra $\g$ (see Def.~\ref{def:repn_Lie_alg}) with Cartan subalgebra $\mathfrak h$. 
In a finite-dimensional representation of a complex semisimple Lie algebra, the Cartan generators act by commuting semisimple operators and are therefore simultaneously diagonalizable.
This leads to the following weight-space decomposition theorem.

\begin{mythm}{Weight Space Decomposition}{Weight_Space_Decomposition}
Every finite-dimensional representation $V$ decomposes into weight spaces
\begin{align}
    V
    =
    \bigoplus_{\mu\in\mathfrak h^*}
    V_\mu,
\end{align}
where
\begin{align}
    V_\mu
    \coloneqq
    \{
    \psi\in V
    \mid
    \pi(h)\ket{\psi}
    =
    \mu(h)\ket{\psi},
    \ \forall\,h\in\mathfrak h
    \}
\end{align}
is the simultaneous eigenspace of the Cartan subalgebra $\mathfrak h$ with eigenvalue functional $\mu\in\mathfrak h^*$.
Whenever $V_\mu\neq\{0\}$, the corresponding functional $\mu$ is called a weight of the representation.
\end{mythm}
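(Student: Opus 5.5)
The Weight Space Decomposition theorem is standard, and I would prove it from two facts about a finite-dimensional representation $\pi:\g\to\mathrm{End}(V)$ of a complex semisimple Lie algebra. First, the operators $\pi(h)$, $h\in\mathfrak h$, commute pairwise. Second, each $\pi(h)$ is diagonalizable. A commuting family of diagonalizable operators on a finite-dimensional space is simultaneously diagonalizable. Once we have a common eigenbasis, its vectors have simultaneous eigenvalues $\pi(h)v=\mu(h)v$. Grouping the basis vectors by their eigenvalue functional gives $V=\bigoplus_\mu V_\mu$.

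The commutativity is immediate. Since $\mathfrak h$ is abelian, $[\pi(h),\pi(h')]=\pi([h,h'])=0$ by Def.~\ref{def:repn_Lie_alg}.

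It remains to show that $\mu$ is a linear functional. For a joint eigenvector $v\neq0$, linearity of $\pi$ gives
\[
\pi(ah+bh')v=a\pi(h)v+b\pi(h')v=(a\mu(h)+b\mu(h'))v .
\]
Hence $\mu(ah+bh')=a\mu(h)+b\mu(h')$, so $\mu\in\mathfrak h^*$. The sum over the distinct $\mu$ is direct, because eigenvectors belonging to distinct eigenvalue functionals are linearly independent. To see this, pick $h$ separating finitely many distinct functionals; the usual distinct-eigenvalue argument for the single operator $\pi(h)$ then applies.

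The main obstacle is the diagonalizability of $\pi(h)$ in an arbitrary finite-dimensional representation, since this fact only comes free in the adjoint representation. There are two routes.

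\emph{Unitary route.} In the paper's setting the representation of the compact real form $\g_{\mathbb R}$ is unitary, so $\pi(\mathrm{i}h)$ is anti-Hermitian for $h$ in the real Cartan span. The operators $\pi(h_i)$ are therefore Hermitian, as recorded after Eq.~\eqref{eq:tau_conjugation}, and a Hermitian operator is diagonalizable by the spectral theorem. Because $\mathfrak h$ is the complex span of the $h_i$, simultaneous diagonalization of the commuting Hermitian $\pi(h_i)$ diagonalizes every $\pi(h)$.

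\emph{Purely algebraic route.} Use preservation of the Jordan decomposition, as in Humphreys. Each $h\in\mathfrak h$ has $\mathrm{ad}_h$ semisimple. The abstract Jordan decomposition of a semisimple Lie algebra is respected by every finite-dimensional representation, so $\pi(h)$ is semisimple, i.e.\ diagonalizable over $\mathbb C$.

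I would present the unitary argument as the main proof, since it is self-contained given the compact real form. I would cite the Jordan-decomposition theorem for the general case.
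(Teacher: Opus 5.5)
Your proposal is correct and follows the paper's approach. The paper justifies the theorem with a single remark: the Cartan generators act by commuting semisimple operators and are therefore simultaneously diagonalizable. Your Jordan-decomposition route is that remark made precise, and you additionally supply the linearity of $\mu$ and the directness of the sum. Note that your unitary route covers an arbitrary finite-dimensional $V$ only after Weyl's unitarian trick supplies a $\g_{\mathbb R}$-invariant inner product, so citing the Jordan decomposition for the general case, as you do, is what completes the argument.
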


The root generators act as raising and lowering operators with respect to the weight-space decomposition. More precisely, if $|\psi_{\mu}\rangle\in V_\mu$,
then
\begin{align}
    \pi(e_\alpha)\ket{\psi_{\mu}}
    \in
    V_{\mu+\alpha},
    \quad
    \pi(f_\alpha)\ket{\psi_{\mu}}
    \in
    V_{\mu-\alpha}.
\end{align}
Thus the root system determines how the representation is connected internally, with each root specifying an allowed transition between weight sectors. 

\begin{mydef}{Highest Weight Vector}{Highest_Weight_Vector}
A nonzero vector $\ket{\psi_{\Lambda}}\in V$ is called a highest weight vector of weight $\Lambda\in\mathfrak h^*$ if it satisfies the following conditions:
\begin{alignat}{2}
    \pi(h)\ket{\psi_{\Lambda}}
    &= \Lambda(h)\ket{\psi_{\Lambda}},
    &&\qquad \forall\,h\in\mathfrak h,
    \\
    \pi(e_\alpha)\ket{\psi_{\Lambda}}
    &= 0,
    &&\qquad \forall\,\alpha\in\Phi^+.
\end{alignat}
\end{mydef}

\noindent
An irreducible representation generated by a highest weight vector $\ket{\psi_{\Lambda}}$ is denoted by $V(\Lambda)$ and is called the highest weight representation with highest weight $\Lambda$.

To illustrate the abstract highest weight construction in a familiar setting, we begin with $\mathfrak{sl}(2, \mathbb C)$.
Its spin-$j$ irreducible representation admits the standard weight basis
\begin{align}
\{\ket{j,m} \,|\, m=-j,-j+1,\ldots,j-1,j\}.
\end{align}
The Cartan generator acts diagonally,
\begin{align}
h\ket{j,m}
=
2m\ket{j,m},
\end{align}
so that $\ket{j,m}$ is a weight vector with weight $\mu_m$ satisfying
\begin{align}
    \mu_m(h)=2m.
\end{align}
The positive root generator $e_\alpha$ is identified with the angular momentum raising operator $J_+$,
\begin{align}
e_\alpha\ket{j,m}
=
J_+\ket{j,m}
=
\sqrt{(j-m)(j+m+1)}\,\ket{j,m+1}.
\end{align}
At the upper end of the weight string,
\begin{align}
e_\alpha\ket{j,j}
=
J_+\ket{j,j}
=
0,
\end{align}
which identifies $\ket{j,j}$ as the highest weight vector. 
The corresponding highest weight $\Lambda$ is therefore specified by
\begin{align}
\Lambda(h)=2j.
\end{align}
Thus, the spin-$j$ irreducible representation is the highest weight representation
\begin{align}
    V_j\coloneqq V(\Lambda)=V(2j),
\end{align}
with
\begin{align}
\dim V_j=2j+1.
\end{align}
Here $V(2j)$ uses the eigenvalue of the Chevalley generator $h=2J_z$ as the highest-weight label.

For finite-dimensional representations of complex semisimple Lie algebras, irreducible representations are uniquely determined by their highest weights.
Every weight $\mu$ of $V(\Lambda)$ is obtained from the highest weight by subtracting a nonnegative integer combination of positive roots:
\begin{align}
    \mu
    =
    \Lambda
    -
    \sum_{\alpha\in\Phi^+}
    m_\alpha\alpha,
    \quad
    m_\alpha\in\mathbb Z_{\geqslant0}.
\end{align}
This structure reflects the successive action of the lowering operators
$f_\alpha$ in the Chevalley basis (see Def.~\ref{def:Chevalley_Basis}).
Thus the highest weight $\Lambda$ fixes the full weight space structure of the representation.
A useful property of the highest weight is stated below.

\begin{mylem}{Extremal Characterization of the Highest Weight}{Extremal_Characterization_Highest_Weight}
Let $V(\Lambda)$ be the finite-dimensional irreducible highest weight representation of $\g$ associated with the positive root system $\Phi^+$, and let $\Lambda$ be its highest weight.
Then every weight $\mu\in V(\Lambda)$ satisfies
\begin{align}
\label{eq:norm_inequality}
    \langle\mu,\mu\rangle 
    \leqslant
    \langle\Lambda,\Lambda\rangle,
\end{align}
where $\langle\cdot,\cdot\rangle$ denotes the inner product on the real root space $E$ (see Def.~\ref{def:Inner_Product_Root_Space}).
Moreover, equality holds precisely for the extremal weights, namely those lying in the Weyl group orbit of $\Lambda$.
\end{mylem}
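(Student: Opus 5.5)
The plan is to use the Weyl group together with the dominant-weight cone. First, the weights of $V(\Lambda)$ form a set $\Pi(\Lambda)$ stable under the Weyl group $W$. This holds because each simple reflection $s_i$ is realized on $V(\Lambda)$ by the $\mathfrak{sl}(2)$-triple $\{h_i,e_{\alpha_i},f_{\alpha_i}\}$: a weight space $V_\mu$ is carried onto $V_{s_i\mu}$, by $\mathfrak{sl}(2)$ string symmetry. Since $W$ acts by isometries of $\langle\cdot,\cdot\rangle$, it suffices to prove Eq.~\eqref{eq:norm_inequality} for dominant weights $\mu$, i.e. those with $\langle\mu,\alpha_i^\vee\rangle\geqslant0$ for all $i$. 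Every $W$-orbit in $E$ meets the closed fundamental chamber, so any weight has a dominant $W$-conjugate of the same norm.

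Next, take a dominant weight $\mu$ of $V(\Lambda)$. From the structure recalled just before the lemma, $\Lambda-\mu=\nu:=\sum_i k_i\alpha_i$ with $k_i\in\mathbb Z_{\geqslant0}$. Then
\begin{align}
\langle\Lambda,\Lambda\rangle-\langle\mu,\mu\rangle
=\langle\Lambda-\mu,\Lambda+\mu\rangle
=\langle\nu,\Lambda\rangle+\langle\nu,\mu\rangle .
\end{align}
Both $\Lambda$ and $\mu$ are dominant, and $\langle\alpha_i,\lambda\rangle=\tfrac{\langle\alpha_i,\alpha_i\rangle}{2}\langle\lambda,\alpha_i^\vee\rangle\geqslant0$ for dominant $\lambda$. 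Hence each term is a nonnegative combination of nonnegative numbers, and the difference is $\geqslant0$. This gives the inequality for dominant weights, and therefore for all weights by the $W$-reduction.

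For the equality case, suppose $\mu$ is dominant and $\langle\mu,\mu\rangle=\langle\Lambda,\Lambda\rangle$. Then $\langle\nu,\Lambda+\mu\rangle=0$. Write $\Lambda+\mu$ as dominant with Dynkin labels $\lambda_i+m_i$. If $\nu\neq0$, some $k_i>0$, and then the vanishing forces $\lambda_i=m_i=0$ on the whole support of $\nu$. I expect this to be the main obstacle, and the plan for it is as follows. Let $S=\{i:k_i>0\}$ and take $j\in S$. Then
\begin{align}
\langle\mu,\alpha_j^\vee\rangle=\langle\Lambda,\alpha_j^\vee\rangle-\sum_{i\in S}k_i\langle\alpha_i,\alpha_j^\vee\rangle=-\langle\nu,\alpha_j^\vee\rangle .
\end{align}
Since $\langle\mu,\alpha_j^\vee\rangle=m_j=0$, this gives $\langle\nu,\alpha_j^\vee\rangle=0$ for all $j\in S$. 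But then $\langle\nu,\nu\rangle=\sum_{j\in S}k_j\langle\nu,\alpha_j\rangle=0$, so $\nu=0$, a contradiction. Hence $\mu=\Lambda$. A general equality weight is therefore $W$-conjugate to $\Lambda$.

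Conversely, every $w\Lambda$ is a weight, by the $W$-stability established in the first step, and it has the same norm. So equality holds exactly on the orbit $W\Lambda$. One technical point needs care. The required $W$-stability needs only that $s_i$ preserves weight multiplicities, which follows from $\mathfrak{sl}(2)$-theory applied to $\bigoplus_{t\in\mathbb Z}V_{\mu+t\alpha_i}$. I would cite this from Humphreys rather than reprove it.
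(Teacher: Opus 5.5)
Your proof is correct, but it takes a different route from the paper. The paper gives no argument of its own and simply appeals to the weight-polytope description in \cite{Hall2015}. Granted that every weight of $V(\Lambda)$ lies in the convex hull of $W\Lambda$, the lemma follows from strict convexity of $\mu\mapsto\langle\mu,\mu\rangle$. All vertices $w\Lambda$ have norm $\langle\Lambda,\Lambda\rangle$, and a convex combination whose contributing vertices are not all equal has strictly smaller norm, so equality forces $\mu\in W\Lambda$.

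You avoid the convex-hull theorem, which is a stronger statement than the lemma needs. Your argument uses only three ingredients: the $W$-stability of the weight set; the fact, recalled before the lemma, that $\Lambda-\mu$ lies in the nonnegative integer span of the positive roots; and the nonnegativity of $\langle\alpha_i,\lambda\rangle$ for dominant $\lambda$. The polytope route is shorter once the citation is accepted, and it matches the picture the paper invokes, in which the maximal-norm weights are exactly the vertices of the weight polytope. Your route is self-contained apart from the standard $\mathfrak{sl}(2)$ fact you plan to cite.

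The step you flag as the main obstacle is actually immediate. From $\langle\nu,\Lambda\rangle+\langle\nu,\mu\rangle=0$ with both terms nonnegative, both vanish. Then $\langle\nu,\nu\rangle=\langle\nu,\Lambda-\mu\rangle=\langle\nu,\Lambda\rangle-\langle\nu,\mu\rangle=0$, so $\nu=0$ without introducing the support $S$ or the coroot pairings.
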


This is a standard consequence of the weight-polytope
description; see \cite{Hall2015}.
The choice of $\Phi^+$ singles out $\Lambda$ as the unique
dominant weight among the weights of maximal norm.
This extremal property gives the state-independent uncertainty bound below.
The compact group orbit of a highest weight vector consists of states attaining that bound.


\section{Lie Algebraic Quantum Uncertainty}
\label{sec:Lie_Algebraic_Quantum_Uncertainty}

Uncertainty is not only a property of individual observables, but a structural feature of the algebra that organizes them. 
This section develops a general framework for Lie algebraic uncertainty relations, showing how representation theory turns quantum uncertainties into an intrinsic algebraic quantity. 
In Subsec.~\ref{subsec:Quantum_Uncertainty_Algebraic_Operators}, we introduce the definition of Lie algebraic quantum uncertainty for a compact semisimple Lie algebra. 
In Subsec.~\ref{subsec:Casimir_Eigenvalue}, we show that the quadratic Casimir element determines the state-independent contribution to this uncertainty, thereby isolating the part fixed entirely by the representation. 
In Subsec.~\ref{subsec:Quantum_Uncertainty_Weight_Spaces}, we specialize the framework to weight spaces of an irreducible representation and recover the angular momentum uncertainty relation as the simplest nontrivial instance, where the spin quantum number and magnetic quantum number together determine the uncertainty in each weight state. 
In Subsec.~\ref{subsec:Quantum_Uncertainty_Beyond_Weight_Spaces}, we extend the analysis from weight states to arbitrary states in the full representation space by introducing the moment element, whose norm captures the state-dependent contribution and reveals the underlying structure of the bound. 
In Subsec.~\ref{subsec:Three_Bases_One_Uncertainty}, we prove that the quantum uncertainty is independent of the chosen presentation of the algebra: 
the Chevalley basis, the anti-Hermitian basis of the compact real form, and the corresponding Hermitian observable basis all lead to the same quantity. 
In Subsec.~\ref{subsec:Dynkin_Cartan}, we introduce Dynkin coefficients and the Cartan matrix, which provide intrinsic coordinates for the highest weight and the root system geometry.
In Subsec.~\ref{subsec:Dynkin_Diagram}, we clarify the correspondence between the Dynkin diagram and the Cartan matrix, whose entries encode the diagram's bonds and arrow orientations. 
Finally, in Subsec.~\ref{subsubsec:exceptional_Formula_Bound}, these ingredients are combined to obtain an explicit and computable formula for the optimal Lie algebraic uncertainty bound.


\subsection{Quantum Uncertainty for Algebraic Operators}
\label{subsec:Quantum_Uncertainty_Algebraic_Operators}

For a physical observable, represented by a Hermitian operator, uncertainty is usually quantified by its variance, leading to the Robertson uncertainty relation in quantum mechanics~\cite{PhysRev.34.163}.
The situation is less immediate for non-Hermitian operators, such as the anti-Hermitian generators (see Def.~\ref{def:Anti_Hermitian_Basis}) forming a basis of the compact real form of a Lie algebra $\g_{\mathbb R}$ (see Eq.~\eqref{eq:complexification}). 
In that case, one must first clarify what it means to assign an uncertainty to an abstract algebraic generator, and how this notion relates to the usual quantum-mechanical variance of observables. 
This subsection addresses this point by extending the variance from Hermitian observables to anti-Hermitian operators, thereby connecting the algebraic structure of $\g_{\mathbb R}$ with measurable quantum uncertainty.

Let $\pi:\g_{\mathbb R}\to\mathfrak{gl}(V)$ be a finite-dimensional unitary representation of the compact real Lie algebra $\g_{\mathbb R}$, with representation space $V$. 
A pure quantum state is represented by a normalized vector $\ket{\psi}\in V$. 
For any generator $X\in\g_{\mathbb R}$, the corresponding representation $\pi(X)$ is anti-Hermitian (see Lem.~\ref{lem:Anti_Hermitian}), and its expectation value in the state $\psi$ is defined as
\begin{align}
    \langle X\rangle_{\psi}\coloneqq
    \bra{\psi}\pi(X)\ket{\psi}.
\end{align}
Since $\pi(X)^\dagger=-\pi(X)$, this expectation value is purely imaginary.
For notational simplicity, we suppress the representation map $\pi$ inside expectation values: whenever an element $X\in\g_{\mathbb R}$ appears between state vectors, it is understood as its representation $\pi(X)$. 
This convention will be used throughout and should cause no ambiguity.
The variance associated with the compact real Lie algebra element $X$ is therefore defined by taking the positive quadratic uncertainty of the anti-Hermitian generator.

\begin{mydef}{Variance for Anti-Hermitian Operators}{Variance_Anti_Hermitian_Operators}
For an anti-Hermitian operator $X\in\g_{\mathbb R}$, its variance in the quantum state $\psi$ is defined by 
\begin{align}\label{eq:non_hermitian_variance}
    \Delta_{\psi}^2(X)
    \coloneqq
    \langle X^\dagger X\rangle_{\psi}
    -
    \langle X^\dagger\rangle_{\psi}
    \langle X\rangle_{\psi}.
\end{align}
\end{mydef}

A key property supporting its interpretation as a measure of quantum uncertainty is its non-negativity.

\begin{mylem}{Non-Negativity}{Non_Negativity}
The variance introduced in Def.~\ref{def:Variance_Anti_Hermitian_Operators} is non-negative for any anti-Hermitian operator $X\in\g_{\mathbb R}$ and any quantum state $\psi$:
\begin{align}
    \Delta_{\psi}^2(X)\geqslant0.
\end{align}
\end{mylem}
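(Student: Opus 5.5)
The plan is to write the variance as the squared norm of a single vector. Set $\ket{\phi}\coloneqq \big(\pi(X)-\langle X\rangle_\psi\big)\ket{\psi}$ and show that $\Delta_\psi^2(X)=\langle\phi|\phi\rangle\geqslant0$, which is immediate from positivity of the Hilbert-space inner product on $V$. The argument does not actually need anti-Hermiticity: it holds for any operator $X$, which matches the covariance formulation of the main text.

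The key steps, in order, are these.

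\emph{Step 1.} Note that $\langle X^\dagger\rangle_\psi=\bra{\psi}\pi(X)^\dagger\ket{\psi}=\overline{\langle X\rangle_\psi}$. Hence the subtracted term equals $|\langle X\rangle_\psi|^2$.

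\emph{Step 2.} Expand the norm of $\ket{\phi}$, writing $c\coloneqq\langle X\rangle_\psi$ and using $\langle\psi|\psi\rangle=1$:
\begin{align}
\langle\phi|\phi\rangle
=\langle X^\dagger X\rangle_\psi-\bar c\,c-\bar c\,c+|c|^2
=\langle X^\dagger X\rangle_\psi-\langle X^\dagger\rangle_\psi\langle X\rangle_\psi .
\end{align}
The two cross terms both equal $-|c|^2$, because $\bra{\psi}\pi(X)^\dagger\ket{\psi}=\bar c$ and $\bra{\psi}\pi(X)\ket{\psi}=c$.

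\emph{Step 3.} Conclude that $\Delta_\psi^2(X)=\|\phi\|^2\geqslant0$. Equality holds exactly when $\ket{\psi}$ is an eigenvector of $\pi(X)$.

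As an alternative route, one can use the Cauchy--Schwarz inequality: $|\bra{\psi}\pi(X)\ket{\psi}|^2\leqslant\|\pi(X)\psi\|^2\,\|\psi\|^2=\langle X^\dagger X\rangle_\psi$.

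There is no real obstacle. The only point needing care is the convention of suppressing $\pi$: the adjoint must be taken as $\pi(X)^\dagger$ with respect to the Hilbert-space inner product of $V$. For anti-Hermitian $X$ one may additionally note that $\pi(X)^\dagger\pi(X)=-\pi(X)^2$ and that $c$ is purely imaginary.
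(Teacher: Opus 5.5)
Your proposal is correct and takes the same approach as the paper, which also rewrites $\Delta_\psi^2(X)$ as $\bigl\|\bigl(\pi(X)-\langle X\rangle_\psi\bigr)\ket{\psi}\bigr\|^2\geqslant0$ and never uses anti-Hermiticity. Your explicit expansion of the cross terms and the equality case (when $\ket{\psi}$ is an eigenvector of $\pi(X)$) are details the paper does not spell out.
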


\begin{proof}
We begin the proof by rewriting the variance defined in Def.~\ref{def:Variance_Anti_Hermitian_Operators} as follows 
\begin{align}
    \Delta_{\psi}^2(X)
    &=
    \langle X^\dagger X\rangle_{\psi}-\langle X^\dagger\rangle_{\psi}\langle X\rangle_{\psi}\\
    &=
    \bra{\psi}\bigl(\pi(X)-\langle X\rangle_{\psi}\bigr)^\dagger\bigl(\pi(X)-\langle X\rangle_{\psi}\bigr)\ket{\psi}\\
    &=
    \left\|\bigl(\pi(X)-\langle X\rangle_{\psi}\bigr)\ket{\psi}\right\|^2\\
    &\geqslant0.
\end{align}
The variance $\Delta_{\psi}^2(X)$ is therefore non-negative, as claimed.
\end{proof}

Another useful property clarifies why this generalized variance for anti-Hermitian operators is experimentally accessible.

\begin{mylem}{Variance Equivalence Under Hermitianization}{Variance_Equivalence_Under_Hermitianization}
For any anti-Hermitian generator $\pi(X)$, multiplication by $\mathrm{i}$ produces the corresponding Hermitian observable $\mathrm{i}\pi(X)$.
The generalized variance (see Def.~\ref{def:Variance_Anti_Hermitian_Operators}) of $\pi(X)$ therefore coincides exactly with the ordinary quantum variance of the Hermitian observable $\mathrm{i}\pi(X)$; namely
\begin{align}
    \Delta_{\psi}^2(X)=
    \Delta_{\psi}^2(\pi(X))=
    \Delta_{\psi}^2(\mathrm{i}\pi(X))=
    \Delta_{\psi}^2(\mathrm{i}X).
\end{align}
\end{mylem}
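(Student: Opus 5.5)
The plan is to compare both sides of the identity term by term, using only the definition of the generalized variance in Def.~\ref{def:Variance_Anti_Hermitian_Operators} and the relation $\pi(X)^\dagger=-\pi(X)$. The first two equalities, $\Delta_\psi^2(X)=\Delta_\psi^2(\pi(X))$ and $\Delta_\psi^2(\mathrm{i}\pi(X))=\Delta_\psi^2(\mathrm{i}X)$, are just the notational convention stated above. Inside expectation values an algebra element stands for its represented operator. Since $\pi$ is real-linear on $\g_{\mathbb R}$ and extends complex-linearly to $\g$, we have $\pi(\mathrm{i}X)=\mathrm{i}\pi(X)$. So the only real content is the middle equality.

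For that equality, write $A\coloneqq\pi(X)$, so that $A^\dagger=-A$, and set $B\coloneqq\mathrm{i}A$. Then $B^\dagger=-\mathrm{i}A^\dagger=\mathrm{i}A=B$, so $B$ is Hermitian and its ordinary variance $\langle B^2\rangle_\psi-\langle B\rangle_\psi^2$ coincides with the generalized one $\langle B^\dagger B\rangle_\psi-\langle B^\dagger\rangle_\psi\langle B\rangle_\psi$. The cleanest route is the factorization already used in the proof of Lem.~\ref{lem:Non_Negativity}:
\begin{align}
\Delta_\psi^2(A)=\bigl\|(A-\langle A\rangle_\psi)\ket{\psi}\bigr\|^2 .
\end{align}
For $B$ we have $\langle B\rangle_\psi=\mathrm{i}\langle A\rangle_\psi$, so $(B-\langle B\rangle_\psi)\ket{\psi}=\mathrm{i}(A-\langle A\rangle_\psi)\ket{\psi}$. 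Multiplying a vector by the phase $\mathrm{i}$ does not change its norm, so the two variances are equal.

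As a cross-check, the explicit expansion gives the same result. It yields $\langle A^\dagger A\rangle_\psi=-\langle A^2\rangle_\psi=\langle B^2\rangle_\psi$. It also yields $\langle A^\dagger\rangle_\psi\langle A\rangle_\psi=-\langle A\rangle_\psi^2=\langle B\rangle_\psi^2$, using $\langle A\rangle_\psi\in\mathrm{i}\mathbb R$, so that $\langle B\rangle_\psi$ is real.

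There is no genuine obstacle here. The only point needing care is the sign bookkeeping: the generalized variance uses $A^\dagger$ rather than $A$, and this exactly compensates the factor $\mathrm{i}^2=-1$.
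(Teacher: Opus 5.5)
Your proof is correct and is essentially the argument the paper has in mind: the paper states this lemma without a separate proof, treating it as immediate from the factorization $\Delta_\psi^2(A)=\bigl\|(A-\langle A\rangle_\psi)\ket{\psi}\bigr\|^2$ used in the proof of the Non-Negativity lemma and from the phase invariance $\Delta_\psi^2(e^{\mathrm{i}\theta}X)=\Delta_\psi^2(X)$ that it invokes later, and this is exactly your identity $(B-\langle B\rangle_\psi)\ket{\psi}=\mathrm{i}(A-\langle A\rangle_\psi)\ket{\psi}$. One wording slip: the notational equalities you set aside are the outer two in the chain, not the first two.
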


Consequently, the Lie algebraic uncertainty relations introduced below have a direct physical interpretation: 
they are equivalent to uncertainty relations for the Hermitian observables obtained by Hermitianizing the corresponding anti-Hermitian generators $\{H_i,E_\alpha,F_\alpha\}$ (see Def.~\ref{def:Anti_Hermitian_Basis}).

\begin{mydef}{Lie Algebraic Quantum Uncertainty}{Lie_Algebraic_Uncertainty_Relations}
Let $\g_{\mathbb R}$ be a compact real semisimple Lie algebra (see Eq.~\eqref{eq:complexification}) equipped with an anti-Hermitian basis $\{H_i,E_\alpha,F_\alpha\}$ (see Def.~\ref{def:Anti_Hermitian_Basis}), and let $\ket{\psi}$ be a quantum state in a unitary representation space $V$ of $\pi:\g_{\mathbb R}\to\mathfrak{gl}(V)$. 
We fix the normalization of the basis by requiring that, for any $Q_a, Q_b\in\{H_i,E_\alpha,F_\alpha\}$, $(Q_a,Q_b)=2\delta_{ab}$.
Here $(\cdot,\cdot)$ is the inner product defined in Def.~\ref{def:Inner_product_alg}.
The Lie algebra quantum uncertainty associated with $\g_{\mathbb R}$ is defined as follows
\begin{align}
\label{eq:QUncertainty_Lie_Algebra}
    \Delta_{\psi}^2(\g) \coloneqq \frac12\Biggl(\sum_{i=1}^r\Delta_{\psi}^2(H_i)
    +\sum_{\alpha\in\Phi^+}
    \bigl(\Delta_{\psi}^2(E_\alpha)
    +\Delta_{\psi}^2(F_\alpha)\bigr)\Biggr).
\end{align}
Here, $r$ denotes the dimension of the Cartan subalgebra $\mathfrak h$, equivalently the rank of $\g_{\mathbb R}$, i.e., $r=\dim_{\mathbb C}{\mathfrak h}=\mathrm{rank} \g$, and $\Phi^+$ denotes a fixed choice of positive roots. 
The variances are not properties of the abstract Lie algebra alone. They are computed only after a representation $\pi$ has been chosen, and hence depend on the concrete action of the generators on the space $V$.
This formulation contains, as its simplest instance, $\mathfrak{sl}(2,\mathbb C)$, the complexification of $\mathfrak{su}(2)$ (see Eq.~\eqref{eq:Complexification_su_2}), where it recovers the conventional uncertainty relations of angular momentum.
When no ambiguity can arise, we will also write the uncertainty $\Delta_{\psi}^2(\g)$ as $\Delta_{\psi}^2(\g_{\mathbb R})$, to emphasize its association with the compact real form;
for example, $\Delta_{\psi}^2(\mathfrak{sl}(2,\mathbb C))=\Delta_{\psi}^2(\mathfrak{su}(2))$.
\end{mydef}

To connect Lie algebraic uncertainty (see Def.~\ref{def:Lie_Algebraic_Uncertainty_Relations}) with representation-theoretic invariants, we introduce the quadratic Casimir element of the compact real form $\g_{\mathbb R}$, denoted by $C_2$.
Let
\begin{align}\label{eq:Set_X_a}
\{X_a\}
\coloneqq
\{
H_i,E_\alpha,F_\alpha 
\,|\,
i=1,\ldots,r,\ \alpha\in\Phi^+
\},
\end{align}
be a basis of $\g_{\mathbb R}$, and let $\{X^a\}$ denote its dual with respect to the normalized form $\kappa(\cdot,\cdot)$ (see Eq.~\eqref{eq:normal_Killing}), defined by
\begin{align}\label{eq:Dual_Basis}
    \kappa(X_a,X^b)=\delta_{ab}.
\end{align}
Because  $\kappa$ is negative definite on $\g_{\mathbb R}$, we use throughout the positive definite inner product $(x,y)=-\kappa(x,y)$, as specified in Def.~\ref{def:Inner_product_alg}.
The unit normalization in Eq.~\eqref{eq:Dual_Basis} is introduced solely to define the $\kappa$-dual basis entering $C_2$. 
It is therefore distinct from the normalization used for the generators in the Lie algebraic uncertainty relation, for which the normalization is fixed to 2 throughout. 
With these conventions in place, the quadratic Casimir element $C_2$ takes the following form.

\begin{mydef}{Quadratic Casimir Element}{Quadratic_Casimir_Element}
The quadratic Casimir element $C_2$ associated with the chosen generators $\{X_a\}$ of $\g_{\mathbb R}$ and the normalized invariant bilinear form $\kappa$ in Eq.~\eqref{eq:normal_Killing} is defined by
\begin{align}\label{eq:Quadratic_Casimir_Element}
    C_2\coloneqq \sum_a X_a X^a,
\end{align}
where $\{X^a\}$ denotes the dual basis determined by Eq.~\eqref{eq:Dual_Basis}.
\end{mydef}

The basis independence of the Casimir element $C_2$, established below, will be central to its explicit construction.

\begin{mylem}{Basis Independence of the Quadratic Casimir Element}{Basis_Independence_Casimir_Element}
The construction of the quadratic Casimir element $C_2$ defined in Eq.~\eqref{eq:Quadratic_Casimir_Element} is independent of the choice of basis $\{X_a\}$ of $\g_{\mathbb R}$.
\end{mylem}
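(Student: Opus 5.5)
The approach is a direct change-of-basis computation. Let $\{Y_b\}$ be a second basis of $\g_{\mathbb R}$, related to $\{X_a\}$ by an invertible real matrix $N$ via $Y_b=\sum_a N_{ba}X_a$. Let $\{Y^b\}$ denote its $\kappa$-dual basis. We will show that $\sum_b Y_bY^b=\sum_a X_aX^a$ as elements of the universal enveloping algebra. Equivalently, the identity holds as operators in any representation, since products are taken after applying $\pi$.

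\textbf{Step 1: the dual basis transforms by the inverse transpose.} Write $Y^b=\sum_c M_{bc}X^c$ for some matrix $M$. The defining relation $\kappa(Y_b,Y^d)=\delta_{bd}$ becomes
\begin{align}
\sum_{a,c}N_{ba}M_{dc}\,\kappa(X_a,X^c)=\sum_a N_{ba}M_{da}=(NM^{\T})_{bd}=\delta_{bd}.
\end{align}
Hence $M=(N^{-1})^{\T}$. The dual basis exists and is unique because $\kappa$ is nondegenerate on $\g_{\mathbb R}$; indeed it is negative definite there, as noted before Def.~\ref{def:Inner_product_alg}.

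\textbf{Step 2: substitute into the Casimir element.} We compute
\begin{align}
\sum_b Y_bY^b=\sum_{b,a,c}N_{ba}(N^{-1})_{cb}\,X_aX^c=\sum_{a,c}(N^{-1}N)_{ca}\,X_aX^c=\sum_a X_aX^a.
\end{align}
The product $X_aX^c$ is bilinear, and the scalar coefficients commute past the operators, so this rearrangement is valid.

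An equivalent, coordinate-free way to say this is that $C_2$ is the image of the canonical element of $\g_{\mathbb R}\otimes\g_{\mathbb R}^*\cong\operatorname{End}(\g_{\mathbb R})$ corresponding to the identity map. The isomorphism $\g_{\mathbb R}^*\cong\g_{\mathbb R}$ supplied by $\kappa$ identifies this element with $\sum_a X_a\otimes X^a$, and multiplication then gives $C_2$. Since the identity map does not depend on any basis, neither does $C_2$.

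I do not expect a genuine obstacle here; the argument is routine linear algebra. The only points needing care are bookkeeping. First, the index placement in Step 1 must be handled correctly so that $N^{-1}$ rather than $N$ appears. Second, the basis change should be taken real, so that all bases stay inside $\g_{\mathbb R}$. The computation is purely bilinear, so it extends verbatim to complex bases of $\g$ if needed, for instance the Chevalley basis.
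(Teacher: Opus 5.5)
Your proof is correct. Step 1 gives $M=(N^{-1})^{\T}$ with the right index placement, Step 2's contraction is valid, and the coordinate-free version via the identity element of $\g_{\mathbb R}\otimes\g_{\mathbb R}^*$ is the standard argument. The paper itself states this lemma without proof, treating it as standard, so there is no competing route to compare against.

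Your closing remark about complex bases matters for how the paper uses the lemma. The paper later evaluates $C_2$ in the Chevalley basis (Thm.~\ref{thm:Alternative_Form_Casimir_Element}). That basis is not a basis of $\g_{\mathbb R}$, and it is related to $\{H_i,E_\alpha,F_\alpha\}$ by a complex matrix; for example, $e_\alpha=(\mathrm{i}E_\alpha-F_\alpha)/\bigl(2\sqrt{\langle\alpha,\alpha\rangle/2}\bigr)$. So the lemma as literally stated, for bases of $\g_{\mathbb R}$, does not cover that step. Your complex-bilinear version, with $\kappa$ taken complex-bilinear on $\g$ and $N\in\mathrm{GL}_n(\mathbb C)$, is what actually justifies it.

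Your argument uses only bilinearity and nondegeneracy of $\kappa$; negative-definiteness and invariance play no role. Invariance is what makes $C_2$ central, and the paper invokes that separately when applying Schur's lemma.
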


We now specialize to the anti-Hermitian basis
$\{H_i,E_\alpha,F_\alpha\}$ of the compact real Lie algebra $\g_{\mathbb R}$, introduced in Def.~\ref{def:Anti_Hermitian_Basis}.
Consistent with the normalization adopted for the Lie algebraic uncertainty relation, these generators form an orthogonal basis with squared norm 2 under the positive definite invariant inner product $(\cdot, \cdot)$, as established in Eqs.~\eqref{eq:Inner},~\eqref{eq:Inner2}, and~\eqref{eq:Inner3}.
Since $(x,y)=-\kappa(x,y)$, their normalization with respect to $\kappa$ is
\begin{align}
\kappa(H_i,H_j)&=-2\delta_{ij},\\
\kappa(E_\alpha,E_\beta)&=-2\delta_{\alpha\beta},\\
\kappa(F_\alpha,F_\beta)&=-2\delta_{\alpha\beta}.
\end{align}
The associated dual generators therefore take the particularly simple form
\begin{align}\label{eq:Killing_dual_generators}
H^i
=
-\frac12H_i,
\quad
E^\alpha
=
-\frac12E_\alpha,
\quad
F^\alpha
=
-\frac12F_\alpha.
\end{align}
Substitution into the quadratic Casimir element then gives
\begin{align}
    C_2&
    =\sum_{i=1}^rH_i\cdot H^i+\sum_{\alpha\in \Phi^{+}}\left(E_\alpha \cdot E^{\alpha}+F_{\alpha}\cdot F^{\alpha}\right)\\
    &=\sum_{i=1}^rH_i\cdot \left(-\frac12H_i\right)+\sum_{\alpha\in \Phi^{+}}\left(E_\alpha \cdot \left(-\frac12E_\alpha\right)+F_{\alpha}\cdot \left(-\frac12F_\alpha\right)\right)\\
    &=-\frac12 \left( \sum_{i=1}^r H_i^2
    +\sum_{\alpha\in\Phi^+} \left(E_\alpha^2
    + F_\alpha^2 \right)\right),
\end{align}
The resulting expression directly relates the quadratic Casimir element to the complete set of generators entering the Lie algebraic uncertainty relation, leading to the following theorem.

\begin{mythm}{Explicit Form of the Quadratic Casimir Element}{Explicit_Form_Casimir_Element}
The quadratic Casimir element $C_2$ defined in Eq.~\eqref{eq:Quadratic_Casimir_Element}, using the normalized form $\kappa$ of Eq.~\eqref{eq:normal_Killing}, takes the form

\begin{align}\label{eq:Quadratic_Casimir_Element_Explicit}
    C_2
    =-\frac12 \Biggl( \sum_{i=1}^r H_i^2
    +\sum_{\alpha\in\Phi^+} \bigl(E_\alpha^2
    + F_\alpha^2 \bigr)\Biggr),
\end{align}
where $\{H_i,E_\alpha,F_\alpha\}$ denotes the anti-Hermitian basis of $\g_{\mathbb R}$ (see Def.~\ref{def:Anti_Hermitian_Basis}).
\end{mythm}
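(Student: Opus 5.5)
The plan is to combine Lem.~\ref{lem:Basis_Independence_Casimir_Element} with the explicit dual basis in Eq.~\eqref{eq:Killing_dual_generators}. Since the Casimir element $C_2=\sum_a X_aX^a$ does not depend on which basis $\{X_a\}$ of $\g_{\mathbb R}$ is chosen, I may evaluate it in the anti-Hermitian basis $\{H_i,E_\alpha,F_\alpha\}$ of Def.~\ref{def:Anti_Hermitian_Basis}. The remaining work is to justify the dual basis and then substitute.

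\textbf{Step 1: orthogonality relations.} These come from Eqs.~\eqref{eq:Inner}--\eqref{eq:Inner3}, namely $(Q_a,Q_b)=2\delta_{ab}$ for all $Q_a,Q_b\in\{H_i,E_\alpha,F_\alpha\}$, with all mixed products vanishing. For the Cartan part this is the computation in Eq.~\eqref{eq:Cartan_Ha_orthonormal_proof}, which uses $UGU^{\T}=2\1_r$.

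For the root generators, I expand $E_\alpha$ and $F_\alpha$ in $e_\alpha,f_\alpha$ and use three standard facts.
\begin{itemize}
\item By invariance of $\kappa$, $\kappa(\g_\alpha,\g_\beta)=0$ unless $\alpha+\beta=0$, and $\kappa(\mathfrak h,\g_\alpha)=0$.
\item From $[e_\alpha,f_\alpha]=h_\alpha$, invariance gives $\kappa(e_\alpha,f_\alpha)=2/\langle\alpha,\alpha\rangle$, since $\kappa(h,[e_\alpha,f_\alpha])=\alpha(h)\kappa(e_\alpha,f_\alpha)$ and $h_\alpha=2t_\alpha/\langle\alpha,\alpha\rangle$.
\item By the previous two facts, $\kappa(e_\alpha,e_\alpha)=\kappa(f_\alpha,f_\alpha)=0$.
\end{itemize}
Substituting gives
\[
\kappa(E_\alpha,E_\alpha)=-\tfrac{\langle\alpha,\alpha\rangle}{2}\cdot 2\kappa(e_\alpha,f_\alpha)=-2.
\]
The same computation gives $\kappa(F_\alpha,F_\alpha)=-2$, and the cross term $\kappa(E_\alpha,F_\alpha)$ vanishes because it is proportional to $\kappa(e_\alpha,e_\alpha)-\kappa(f_\alpha,f_\alpha)=0$.

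\textbf{Step 2: the dual basis.} From Step 1 we have $\kappa(X_a,X_b)=-2\delta_{ab}$ in this basis. The $\kappa$-dual of Eq.~\eqref{eq:Dual_Basis} is therefore $X^a=-\tfrac12 X_a$, which is exactly Eq.~\eqref{eq:Killing_dual_generators}.

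\textbf{Step 3: substitution.} Inserting the dual basis into Eq.~\eqref{eq:Quadratic_Casimir_Element} gives
\[
C_2=\sum_a X_a\bigl(-\tfrac12X_a\bigr)=-\tfrac12\sum_a X_a^2.
\]
Splitting the sum over the Cartan and root sectors yields Eq.~\eqref{eq:Quadratic_Casimir_Element_Explicit}.

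\textbf{Main obstacle.} I expect the only delicate point to be the normalization in Step 1, specifically $\kappa(e_\alpha,f_\alpha)=2/\langle\alpha,\alpha\rangle$. It depends on the Chevalley convention $[e_\alpha,f_\alpha]=h_\alpha$ and on using the normalized $\kappa$ consistently in both the coroot definition \eqref{eq:coroots} and the factor $\sqrt{\langle\alpha,\alpha\rangle/2}$ in Def.~\ref{def:Anti_Hermitian_Basis}. I would verify it by pairing both sides of $[e_\alpha,f_\alpha]=h_\alpha$ against an arbitrary $h\in\mathfrak h$ and using Lem.~\ref{lem:Killing_Form_Identification}. Once this holds, the remainder of the argument is formal.
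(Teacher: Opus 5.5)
Your proposal is correct and follows the paper's argument. The paper likewise reads off $\kappa(X_a,X_b)=-2\delta_{ab}$ from Eqs.~\eqref{eq:Inner}--\eqref{eq:Inner3}, takes the dual basis $X^a=-\tfrac12 X_a$ of Eq.~\eqref{eq:Killing_dual_generators}, and substitutes into $C_2=\sum_a X_aX^a$; your Step~1 adds the root-sector normalization check via $\kappa(e_\alpha,f_\alpha)=2/\langle\alpha,\alpha\rangle$, which the paper only asserts at this point and derives later in Subsec.~\ref{subsec:Casimir_Eigenvalue}, and your appeal to basis independence is harmless but unnecessary, since $C_2$ is already defined with the anti-Hermitian basis in Eq.~\eqref{eq:Set_X_a}.
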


In the associated Hermitian observable basis
$\{\mathrm{i}H_i,\, \mathrm{i}E_\alpha,\, \mathrm{i}F_\alpha\}$, this becomes
\begin{align}\label{eq:Quadratic_Casimir_Element_hermitian}
C_2
=
\frac12
\left[
\sum_{i=1}^{r}\left(\mathrm{i}H_i\right)^2
+
\sum_{\alpha\in\Phi^+}
\left(
\left(\mathrm{i}E_\alpha\right)^2+\left(\mathrm{i}F_\alpha\right)^2
\right)
\right].
\end{align}

By the definition of the variance for anti-Hermitian operators
(see Def.~\ref{def:Variance_Anti_Hermitian_Operators}) and the $\kappa$-form
definition of the quadratic Casimir element $C_2$
(see Def.~\ref{def:Quadratic_Casimir_Element}), a  straightforward calculation then shows that the total quantum uncertainty $\Delta_{\psi}^2(\g)$ (see Def.~\ref{def:Lie_Algebraic_Uncertainty_Relations}) associated with the Lie algebra $\g_{\mathbb R}$ can be expressed in terms of $C_2$ as follows.

\begin{mycor}{Quantum Uncertainty in Terms of the Casimir Element}{Quantum_Uncertainty_Casimir_Element}
The quantum uncertainty (see Def.~\ref{def:Lie_Algebraic_Uncertainty_Relations}) associated with the Lie algebra $\g_{\mathbb R}$ can be equivalently expressed as
\begin{align}\label{eq:casimir_variance}
    \Delta_{\psi}^2(\g)
    =\langle C_2\rangle_{\psi}
    + \frac12\Biggl(\sum_{i=1}^r\langle H_i\rangle_{\psi}^2
    +\sum_{\alpha\in\Phi^+} \bigl(\langle E_\alpha\rangle_{\psi}^2
    +\langle F_\alpha\rangle_{\psi}^2
    \bigr)
    \Biggr),
\end{align}
where $C_2$ denotes the Casimir element defined in Eq.~\eqref{eq:Quadratic_Casimir_Element}, whose explicit form is given in Eq.~\eqref{eq:Quadratic_Casimir_Element_Explicit}.
\end{mycor}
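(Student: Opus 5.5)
The plan is to expand each variance in Eq.~\eqref{eq:QUncertainty_Lie_Algebra} using Def.~\ref{def:Variance_Anti_Hermitian_Operators}, and then to recognize the quadratic part as the Casimir expectation via Thm.~\ref{thm:Explicit_Form_Casimir_Element}.

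\textbf{Step 1: simplify a single variance.} For any generator $X\in\{H_i,E_\alpha,F_\alpha\}$, Lem.~\ref{lem:Anti_Hermitian} gives $\pi(X)^\dagger=-\pi(X)$. This yields $\langle X^\dagger X\rangle_\psi=-\langle X^2\rangle_\psi$ and $\langle X^\dagger\rangle_\psi=-\langle X\rangle_\psi$. Hence
\begin{align}
\Delta_\psi^2(X)=-\langle X^2\rangle_\psi+\langle X\rangle_\psi^2 .
\end{align}
Here $\langle X\rangle_\psi$ is purely imaginary, so $\langle X\rangle_\psi^2\leqslant 0$. This sign is consistent with the formula, and I will track it rather than replace the term by a modulus.

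\textbf{Step 2: sum over the basis.} Summing Step 1 over the basis and multiplying by the factor $\tfrac12$ gives
\begin{align}
\Delta_\psi^2(\g)=-\frac12\Bigl\langle \sum_i H_i^2+\sum_{\alpha\in\Phi^+}(E_\alpha^2+F_\alpha^2)\Bigr\rangle_\psi+\frac12\Bigl(\sum_i\langle H_i\rangle_\psi^2+\sum_{\alpha\in\Phi^+}(\langle E_\alpha\rangle_\psi^2+\langle F_\alpha\rangle_\psi^2)\Bigr).
\end{align}
By linearity of the expectation value and Eq.~\eqref{eq:Quadratic_Casimir_Element_Explicit}, the first bracket is exactly $\langle C_2\rangle_\psi$. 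This gives Eq.~\eqref{eq:casimir_variance}.

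\textbf{Main concern.} The delicate point is bookkeeping of signs and normalizations, not any conceptual step. Three things must agree: the normalization $(Q_a,Q_b)=2\delta_{ab}$ used in the definition, the $\kappa$-dual generators $X^a=-\tfrac12X_a$ from Eq.~\eqref{eq:Killing_dual_generators}, and the prefactor $\tfrac12$ in Eq.~\eqref{eq:QUncertainty_Lie_Algebra}. These must combine so that the coefficient of $\langle C_2\rangle_\psi$ is exactly one.

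As a check, I will verify the result on $\mathfrak{su}(2)$ in the spin-$\tfrac12$ representation, where $C_2=\tfrac12\cdot 3\cdot\1$. For $\ket{\tfrac12,\tfrac12}$ we have $\langle H\rangle^2=(-\mathrm i)^2=-1$ and the other expectations vanish. The formula then gives $\tfrac32-\tfrac12=1$. Directly, the Pauli variances sum to $0+1+1=2$, and half of this is $1$, so the two computations agree.
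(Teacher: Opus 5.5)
Your proof is correct and is essentially the paper's own ``straightforward calculation''. You use $X^\dagger=-X$ to write each variance as $-\langle X^2\rangle_\psi+\langle X\rangle_\psi^2$, then identify the $\tfrac12$-weighted quadratic part with $\langle C_2\rangle_\psi$ through Eq.~\eqref{eq:Quadratic_Casimir_Element_Explicit}; citing that explicit form already settles your normalization concern, and your spin-$\tfrac12$ check ($\tfrac32-\tfrac12=1$) is consistent.
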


We conclude this subsection by noting that the Lie algebraic quantum uncertainty (see Def.~\ref{def:Lie_Algebraic_Uncertainty_Relations}) has so far been formulated with respect to an orthogonal anti-Hermitian basis $\{H_i,E_\alpha,F_\alpha\}$ (see Def.~\ref{def:Anti_Hermitian_Basis}) of the compact real form $\g_{\mathbb R}$. 
Lemma~\ref{lem:Variance_Equivalence_Under_Hermitianization} establishes the equivalent formulation for the associated Hermitian observables $\{\mathrm{i}H_i, \mathrm{i}E_\alpha, \mathrm{i}F_\alpha\}$. 
This construction does not, however, apply directly to a Chevalley basis $\{ h_i, e_\alpha, f_\alpha\}$ (see Def.~\ref{def:Chevalley_Basis}) of $\g$, because the Chevalley basis need not be orthogonal for the normalized form $\kappa(\cdot,\cdot)$ in Eq.~\eqref{eq:normal_Killing}. 
Subsec.~\ref{subsec:Three_Bases_One_Uncertainty} expresses the same uncertainty in those coordinates.


\subsection{Casimir Eigenvalue}
\label{subsec:Casimir_Eigenvalue}

Although the Casimir element has a nontrivial algebraic construction, its role becomes remarkably simple in an irreducible highest weight representation:
it acts as a scalar.
This subsection identifies that scalar explicitly and prepares the ground for the uncertainty bounds developed in this work.

Let $V(\Lambda)$ be an irreducible highest weight representation (see Subsec.~\ref{subsec:Highest_Weight_Representations}) with highest weight vector $\ket{\psi_\Lambda}$. 
Then, for any element of the Cartan subalgebra $h_i\in\mathfrak h$, one has
\begin{align}
    h_i \ket{\psi_\Lambda}
    =
    \Lambda(h_i)\ket{\psi_\Lambda},
    \quad
    i\in\{1,\ldots,r\}.
\end{align}
For every positive root $\alpha\in\Phi^+$, the corresponding raising operator $e_\alpha$ (see Def.~\ref{def:Chevalley_Basis}) annihilates the highest weight vector $\ket{\psi_\Lambda}$.
\begin{align}
    e_\alpha \ket{\psi_\Lambda}=0,
    \quad
    \alpha\in\Phi^+.
\end{align}

For the Chevalley generators
$\{h_i,e_\alpha,f_\alpha\}$ (see Def.~\ref{def:Chevalley_Basis}),
the generators are generally not mutually orthogonal for the normalized invariant form $\kappa(\cdot,\cdot)$ in Eq.~\eqref{eq:normal_Killing}.
Indeed, using
 $[e_\alpha,f_\alpha]=h_\alpha$,
we have
\begin{align}
    \kappa(h_\alpha,h)
    =
    \kappa([e_\alpha,f_\alpha],h)
    =
    \kappa(e_\alpha,[f_\alpha,h]).
\end{align}
Using
\begin{align}
    [h,f_\alpha]
    =
    -\alpha(h)f_\alpha,
\end{align}
or equivalently
\begin{align}
    [f_\alpha,h]
    =
    \alpha(h)f_\alpha,
\end{align}
we obtain
\begin{align}
    \kappa(h_\alpha,h)
    =
    \alpha(h)\kappa(e_\alpha,f_\alpha).
\label{eq:ha_h_relation_1}
\end{align}

On the other hand, according to the definition of the coroot in Eq.~\eqref{eq:coroots}, i.e., $h_\alpha=2t_\alpha/\langle\alpha,\alpha\rangle$, the pairing $\kappa(h_\alpha,h)$ can be rewritten as
\begin{align}
    \kappa(h_\alpha,h)
    =
    \frac{2}{\langle\alpha,\alpha\rangle}
    \kappa(t_\alpha,h)
    =
    \frac{2}{\langle\alpha,\alpha\rangle}
    \alpha(h),
\label{eq:ha_h_relation_2}
\end{align}
where the last equality follows from Eq.~\eqref{eq:roots_action_KF}.

Comparing Eq.~\eqref{eq:ha_h_relation_1} and Eq.~\eqref{eq:ha_h_relation_2}, and choosing $h\in\mathfrak h$ with $\alpha(h)\neq0$, gives
\begin{align}
    \kappa(e_\alpha,f_\alpha)
    =
    \frac{2}{\langle\alpha,\alpha\rangle}.
\end{align}
Thus the $\kappa$-dual of $e_\alpha$ is
\begin{align}
    e^\alpha
    =
    \frac{\langle\alpha,\alpha\rangle}{2}    f_\alpha.
\end{align}

By Lem.~\ref{lem:Basis_Independence_Casimir_Element}, the quadratic Casimir element $C_2$, defined in Def.~\ref{def:Quadratic_Casimir_Element}, is basis independent. 
Evaluating it in the Chevalley basis of Def.~\ref{def:Chevalley_Basis} yields

\begin{mythm}{Alternative Form of the Quadratic Casimir Element}{Alternative_Form_Casimir_Element}
Let $\g$ be a complex semisimple Lie algebra of rank $r$, with Cartan
subalgebra $\mathfrak h$, positive root system $\Phi^+$, and normalized invariant form $\kappa$ (Eq.~\eqref{eq:normal_Killing}). 
For the Chevalley basis $\{h_i, e_{\alpha}, f_{\alpha}\}$ of Def.~\ref{def:Chevalley_Basis}, the quadratic Casimir element takes the form
\begin{align}
    C_2
    =
    \sum_{i=1}^r h_ih^i
    +
    \sum_{\alpha\in\Phi^+} \frac{
    \langle\alpha,\alpha\rangle}{2}
    \bigl(e_\alpha f_\alpha+f_\alpha e_\alpha\bigr),
\end{align}
where $\{h^1,\dots,h^r\}$ is the $\kappa$-dual basis of $\{h_1,\dots,h_r\}$. 

\end{mythm}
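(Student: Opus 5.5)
The plan is to evaluate the basis-independent definition of $C_2$ (Def.~\ref{def:Quadratic_Casimir_Element}, with basis independence from Lem.~\ref{lem:Basis_Independence_Casimir_Element}) directly in the Chevalley basis $\{h_i,e_\alpha,f_\alpha\}$. The main task is to identify the $\kappa$-dual basis explicitly.

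First I will establish the orthogonality relations among the Chevalley generators. For roots $\beta,\gamma$ with $\beta+\gamma\neq0$, the root spaces $\g_\beta$ and $\g_\gamma$ are $\kappa$-orthogonal. Also, $\g_\beta$ is orthogonal to $\mathfrak h$. Both facts follow from invariance of $\kappa$: for $h\in\mathfrak h$,
\[
(\beta+\gamma)(h)\,\kappa(x,y)=\kappa([h,x],y)+\kappa(x,[h,y])=0 .
\]
Choosing $h$ with $(\beta+\gamma)(h)\neq0$ gives $\kappa(x,y)=0$. Consequently the only nonzero pairings are $\kappa(h_i,h_j)$ and $\kappa(e_\alpha,f_\alpha)$; in particular $\kappa(e_\alpha,e_\beta)=\kappa(f_\alpha,f_\beta)=0$, and $\kappa(e_\alpha,f_\beta)=0$ for $\alpha\neq\beta$.

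Next I will use the computation just preceding the theorem, $\kappa(e_\alpha,f_\alpha)=2/\langle\alpha,\alpha\rangle$. This gives the dual vectors
\[
e^\alpha=\tfrac{\langle\alpha,\alpha\rangle}{2}f_\alpha,\qquad f^\alpha=\tfrac{\langle\alpha,\alpha\rangle}{2}e_\alpha .
\]
On the Cartan part, the dual $\{h^i\}$ is exactly the $\kappa$-dual basis of Def.~\ref{def:Dual_Basis_Cartan_Subalgebra}. It lies in $\mathfrak h$, so by the orthogonality above it pairs to zero with every root vector. These vectors therefore constitute the full $\kappa$-dual basis of the Chevalley basis.

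Substituting into $C_2=\sum_a X_aX^a$ then yields
\[
\sum_i h_ih^i+\sum_{\alpha\in\Phi^+}\tfrac{\langle\alpha,\alpha\rangle}{2}\left(e_\alpha f_\alpha+f_\alpha e_\alpha\right),
\]
which is the claimed form. The one subtlety, and the step I expect to be the main obstacle, is that Def.~\ref{def:Quadratic_Casimir_Element} is phrased with a basis of $\g_{\mathbb R}$, whereas the Chevalley basis is a basis of the complexification $\g$. I will handle this by noting that the expression $\sum_a X_aX^a$ is invariant under complex-linear changes of basis of $\g$, because $\kappa$ extends complex-bilinearly. Concretely, if $Y=NX$, then the dual basis transforms with $(N^{-1})^{\T}$, and the contraction $\sum_a Y_aY^a$ is unchanged. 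This is the same argument as Lem.~\ref{lem:Basis_Independence_Casimir_Element}, carried out over $\mathbb C$. Since the anti-Hermitian basis of Def.~\ref{def:Anti_Hermitian_Basis} is itself a complex basis of $\g$ related to the Chevalley basis by an invertible matrix, the two evaluations coincide.
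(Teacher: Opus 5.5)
Your proposal is correct and follows the same route as the paper: use $\kappa(e_\alpha,f_\alpha)=2/\langle\alpha,\alpha\rangle$ to identify the $\kappa$-dual basis $\{h^i,\tfrac{\langle\alpha,\alpha\rangle}{2}f_\alpha,\tfrac{\langle\alpha,\alpha\rangle}{2}e_\alpha\}$, then substitute into the basis-independent expression $\sum_a X_aX^a$. Your explicit root-space orthogonality check and your remark that basis independence must be applied to complex-linear changes of basis of $\g$ (the Chevalley basis is not a basis of $\g_{\mathbb R}$) make precise two steps the paper leaves implicit.
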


We now evaluate the quadratic Casimir element $C_2$
on the highest weight vector $\ket{\psi_\Lambda}\in V_\Lambda$.
Since $C_2$ belongs to the center of the universal enveloping algebra, Schur’s lemma implies that it acts as a scalar on the irreducible representation $V(\Lambda)$.
It is therefore enough to compute this scalar on the highest weight state.
Separating $C_2$ into Cartan and root contributions, the Cartan part acts as 
\begin{align}
    \sum_{i=1}^r h_i h^i \ket{\psi_\Lambda}
    =
    \sum_{i=1}^r \Lambda(h^i)\Lambda(h_i)\ket{\psi_\Lambda}
    =
    \langle\Lambda,\Lambda\rangle\ket{\psi_\Lambda},
\end{align}
where the final equality follows from Lem.~\ref{lem:Dual_Basis_Contraction}. This completely determines the Cartan contribution $\sum_{i=1}^r h_ih^i$ to the quadratic Casimir element $C_2$ (see Thm.~\ref{thm:Alternative_Form_Casimir_Element}).

For the root contribution, the highest weight state is annihilated by all positive root generators, i.e., 
\begin{align}
    e_\alpha \ket{\psi_\Lambda}=0.
\end{align}
Hence, $f_\alpha e_\alpha \ket{\psi_\Lambda}=0$.
To evaluate the action of $e_\alpha f_\alpha$ on the highest weight state $\ket{\psi_\Lambda}$, we use the commutation relation $[e_\alpha,f_\alpha]=h_\alpha$ together with the following identity
\begin{align}
    e_\alpha f_\alpha \ket{\psi_\Lambda}
    =
    \bigl([e_\alpha,f_\alpha]+f_\alpha e_\alpha\bigr)
    \ket{\psi_\Lambda}
    =
    h_\alpha \ket{\psi_\Lambda}
    =
    \Lambda(h_\alpha)\ket{\psi_\Lambda}.
\end{align}
This leads to
\begin{align}
\sum_{\alpha\in\Phi^+}\frac{\langle\alpha,\alpha\rangle}{2}
\bigl(e_\alpha f_\alpha+f_\alpha e_\alpha\bigr)
\ket{\psi_\Lambda}
=
\left(\sum_{\alpha\in\Phi^+}\frac{\langle\alpha,\alpha\rangle}{2}\Lambda(h_\alpha)\right)
\ket{\psi_\Lambda}.
\end{align}
Using the identification $\Lambda(h_\alpha)=2\langle\Lambda,\alpha\rangle/\langle\alpha,\alpha\rangle$ and the Weyl vector $\delta=\sum_{\alpha\in\Phi^+}\alpha/2$ (see Def.~\ref{def:Weyl_Vector}),
we obtain
\begin{align}
    \sum_{\alpha\in\Phi^+} \frac{  \langle\alpha,\alpha\rangle}{2   }\Lambda(h_\alpha)
    = 
    \sum_{\alpha\in\Phi^+}\langle\Lambda,\alpha\rangle
    =
    2\langle\Lambda,\delta\rangle.
\end{align}
Combining the Cartan and root contributions, the quadratic Casimir element $C_2$ acts on the highest weight state $\ket{\psi_\Lambda}$ as
\begin{align}
    C_2 \ket{\psi_\Lambda}
    =
    \bigl(\langle\Lambda,\Lambda\rangle
    + 2\langle\Lambda,\delta\rangle\bigr)
    \ket{\psi_\Lambda}
    = 
    \langle\Lambda,\Lambda+2\delta\rangle \ket{\psi_\Lambda}.
\end{align}
The result can be summarized as follows.

\begin{mythm}{Value of the Quadratic Casimir Element}{Value_Casimir_Element}
Let $V(\Lambda)$ be an irreducible highest weight representation of a semisimple Lie algebra $\g$, with highest weight $\Lambda$. 
The quadratic Casimir element $C_2$ (see Eq.~\eqref{eq:Quadratic_Casimir_Element}) acts on $V(\Lambda)$ by scalar multiplication; that is
\begin{align}
    C_2
    =
    \langle\Lambda,\Lambda+2\delta\rangle\1_{V(\Lambda)},
\end{align}
where $\delta=\sum_{\alpha\in\Phi^+}\alpha/2$ is the Weyl vector (see Def.~\ref{def:Weyl_Vector}). 
Consequently, for any quantum state $\ket{\psi}\in V(\Lambda)$, the expectation value of $C_2$ is
\begin{align}
\label{eq:exp_Casimir}
    \langle C_2\rangle_{\psi}
    =
    \langle\Lambda,\Lambda+2\delta\rangle.
\end{align}
\end{mythm}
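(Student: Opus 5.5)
The plan is to prove the statement in two stages: first show that $C_2$ is a scalar on $V(\Lambda)$, then identify that scalar by evaluating $C_2$ on the highest weight vector $\ket{\psi_\Lambda}$. The expectation-value formula Eq.~\eqref{eq:exp_Casimir} then follows at once, since for any normalized $\ket{\psi}\in V(\Lambda)$ we have $\langle C_2\rangle_\psi=c\,\langle\psi|\psi\rangle=c$.

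\emph{Scalar action.} I will show that $\pi(C_2)$ commutes with $\pi(y)$ for every $y\in\g$. Since $C_2=\sum_a X_aX^a$ and $[y,X_aX^a]=[y,X_a]X^a+X_a[y,X^a]$, I will:
\begin{itemize}
\item expand $[y,X_a]=\sum_b c_{ab}X_b$ and $[y,X^a]=\sum_b d_{ab}X^b$;
\item use invariance of $\kappa$ (the complex-bilinear counterpart of Lem.~\ref{lem:Invariance_Inner_Product}) together with $\kappa(X_a,X^b)=\delta_{ab}$ to get $d_{ab}=-c_{ba}$;
\item conclude that the two contributions cancel after relabelling indices, so $[y,C_2]=0$.
\end{itemize}
By Lem.~\ref{lem:Basis_Independence_Casimir_Element}, this holds in whichever basis is convenient. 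Because $V(\Lambda)$ is irreducible and finite dimensional over $\mathbb C$, Schur's lemma then gives $\pi(C_2)=c\,\1_{V(\Lambda)}$ for some $c\in\mathbb C$.

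\emph{Identifying the constant.} I will evaluate $C_2$ in the Chevalley form of Thm.~\ref{thm:Alternative_Form_Casimir_Element} on $\ket{\psi_\Lambda}$.
\begin{itemize}
\item The Cartan term gives $\sum_i\Lambda(h_i)\Lambda(h^i)=\langle\Lambda,\Lambda\rangle$ by Lem.~\ref{lem:Dual_Basis_Contraction}.
\item In the root term, $f_\alpha e_\alpha\ket{\psi_\Lambda}=0$ because $e_\alpha\ket{\psi_\Lambda}=0$.
\item Also $e_\alpha f_\alpha\ket{\psi_\Lambda}=h_\alpha\ket{\psi_\Lambda}=\Lambda(h_\alpha)\ket{\psi_\Lambda}$.
\item By Eq.~\eqref{eq:root_coroot_pairing}, $\tfrac{\langle\alpha,\alpha\rangle}{2}\Lambda(h_\alpha)=\langle\Lambda,\alpha\rangle$. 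Summing over $\Phi^+$ gives $2\langle\Lambda,\delta\rangle$ by Def.~\ref{def:Weyl_Vector}.
\end{itemize}
Adding the two contributions yields $c=\langle\Lambda,\Lambda+2\delta\rangle$.

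The main obstacle is the Chevalley-form expression of Thm.~\ref{thm:Alternative_Form_Casimir_Element}, specifically the $\kappa$-duality pairing. Using $[e_\alpha,f_\alpha]=h_\alpha$ and invariance, one gets $\kappa(e_\alpha,f_\alpha)=2/\langle\alpha,\alpha\rangle$, so the dual of $e_\alpha$ is $\tfrac{\langle\alpha,\alpha\rangle}{2}f_\alpha$ and the dual of $f_\alpha$ is $\tfrac{\langle\alpha,\alpha\rangle}{2}e_\alpha$. The Cartan–root and root–root pairings $\kappa(\g_\alpha,\g_\beta)$ vanish unless $\alpha+\beta=0$; this follows from invariance under $\operatorname{ad}_h$. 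This orthogonality is what makes the Chevalley expression exact, and it is the step I would check most carefully. Everything else reduces to the short computation above.
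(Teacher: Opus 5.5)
Your proposal is correct and follows essentially the same route as the paper: Schur's lemma reduces the problem to the highest weight vector, and the Chevalley form of $C_2$ (with $e^\alpha=\tfrac{\langle\alpha,\alpha\rangle}{2}f_\alpha$) yields $\langle\Lambda,\Lambda\rangle+2\langle\Lambda,\delta\rangle$ there. The only difference is that you explicitly prove the centrality of $C_2$ and the orthogonality $\kappa(\g_\alpha,\g_\beta)=0$ for $\alpha+\beta\neq0$ from the invariance of $\kappa$, whereas the paper simply asserts or uses these facts.
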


Although the computation is most transparent in the Chevalley basis (see Def.~\ref{def:Chevalley_Basis}), the resulting eigenvalue is intrinsic.
The Chevalley basis (see Thm.~\ref{thm:Alternative_Form_Casimir_Element}) and anti-Hermitian (see Thm.~\ref{thm:Explicit_Form_Casimir_Element}) representations are simply two descriptions of the same quadratic Casimir element $C_2$. 
Accordingly, the scalar $\langle\Lambda,\Lambda+2\delta\rangle$ is independent of the chosen generators.


\subsection{Quantum Uncertainty in Weight Spaces}\label{subsec:Quantum_Uncertainty_Weight_Spaces}

Given a highest weight representation $V(\Lambda)$ of the Lie algebra $\g$, the representation decomposes into weight spaces
\begin{align}\label{eq:weight_space_decomposition}
    V(\Lambda)
    =
    \bigoplus_{\mu\in\Omega(\Lambda)}V_{\Lambda,\mu},
\end{align}
where $\Omega(\Lambda)\subset\mathfrak h^*$ denotes the set of weights appearing in the representation $V(\Lambda)$.
This subsection develops uncertainty relations for general highest weight representations, with particular emphasis on the structure of states belonging to a fixed weight space $V_{\Lambda,\mu}$.

For any state $\ket{\psi}$ in the representation space $V(\Lambda)$, Cor.~\ref{cor:Quantum_Uncertainty_Casimir_Element} gives
\begin{align}
\label{eq:decomposition}
    \Delta_{\psi}^2(\g)
    = 
    \langle C_2\rangle_{\psi}
    + 
    \frac12 
    \left( \sum_{i=1}^r\langle H_i\rangle_{\psi}^2
    + \sum_{\alpha\in\Phi^+}\bigl(\langle E_\alpha\rangle_{\psi}^2
    + \langle F_\alpha\rangle_{\psi}^2\bigr)\right).
\end{align}
The root sector contribution can be rewritten directly in the Chevalley basis (see Def.~\ref{def:Chevalley_Basis}). 
Using
\begin{align}
    E_\alpha
    =
    -\mathrm{i}\sqrt{\frac{\langle \alpha,\alpha\rangle}{2}}(e_\alpha+f_\alpha),\qquad
    F_\alpha
    =
    -\sqrt{\frac{\langle \alpha,\alpha\rangle}{2}}(e_\alpha-f_\alpha),
\end{align}
together with $e_\alpha^\dagger=f_\alpha$, define 
\begin{align}
    z_\alpha
    \coloneqq
    \langle e_\alpha\rangle_{\psi}.
\end{align}
Then
\begin{align}
    \langle f_\alpha\rangle_{\psi}
    =
    z_\alpha^*,
\end{align}
so that the expectation values of the generators are encoded by the real and imaginary parts of $z_\alpha$. 
Thus,
\begin{align}
\label{eq:EF_to_e}
\langle E_\alpha\rangle_{\psi}
    &= -\mathrm{i}\sqrt{\frac{\langle \alpha,\alpha\rangle}{2}}(z_\alpha+z_\alpha^*)
    =
   - 2\mathrm{i}\sqrt{\frac{\langle \alpha,\alpha\rangle}{2}}~\mathrm{Re}(z_\alpha),\\
\langle F_\alpha\rangle_{\psi}
    &=
   -\sqrt{\frac{\langle \alpha,\alpha\rangle}{2}}(z_\alpha-z_\alpha^*)
    =
    -2\mathrm{i}\sqrt{\frac{\langle \alpha,\alpha\rangle}{2}}~\mathrm{Im}(z_\alpha).
\end{align}
It follows that
\begin{align}\label{eq:Sum_E_F_alpha}
    \langle E_\alpha\rangle_{\psi}^2
    +\langle F_\alpha\rangle_{\psi}^2
    =
    -2\langle \alpha,\alpha\rangle\bigl| \langle e_\alpha\rangle_{\psi} \bigr|^2.
\end{align}

Substituting Eq.~\eqref{eq:Sum_E_F_alpha} into Eq.~\eqref{eq:decomposition} gives the following identity.

\begin{mylem}{Lie Algebraic Uncertainty Identity}{Lie_Algebraic_Uncertainty_Identity}
Let $V(\Lambda)$ be an irreducible highest weight representation (see Subsec.~\ref{subsec:Highest_Weight_Representations}) of Lie algebra $\g$. 
Then every state $\ket{\psi}\in V(\Lambda)$ satisfies the Lie algebraic uncertainty relation
\begin{align}
\label{eq:decomp_chevalley1}
    \Delta^2_{\psi}(\g)
    = 
    \underbrace{\langle C_2\rangle_{\psi}}_{\text{State-Independent}}
    + 
    \underbrace{\frac12 \sum_{i=1}^r \langle H_i\rangle_{\psi}^2
    - \sum_{\alpha\in\Phi^+} \langle \alpha,\alpha\rangle\bigl| \langle e_\alpha\rangle_{\psi}\bigr|^2}_{\text{State-Dependent}},
\end{align}
where $C_2$ is the quadratic Casimir element (see Def.~\ref{def:Quadratic_Casimir_Element}).
\end{mylem}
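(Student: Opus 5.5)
The identity follows by substituting Eq.~\eqref{eq:Sum_E_F_alpha} into Cor.~\ref{cor:Quantum_Uncertainty_Casimir_Element}. I would first re-derive that corollary directly, since it is the source of every sign. By Def.~\ref{def:Variance_Anti_Hermitian_Operators}, for anti-Hermitian $X$ we have $X^\dagger=-X$ and $\langle X^\dagger\rangle_\psi=-\langle X\rangle_\psi$. Hence
\begin{align}
\Delta_\psi^2(X)=-\langle X^2\rangle_\psi+\langle X\rangle_\psi^2 .
\end{align}
Summing over the basis $\{H_i,E_\alpha,F_\alpha\}$ with the prefactor $\tfrac12$ of Eq.~\eqref{eq:QUncertainty_Lie_Algebra} gives two pieces. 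The quadratic piece $-\tfrac12\sum\langle X_a^2\rangle_\psi$ is exactly $\langle C_2\rangle_\psi$ by Thm.~\ref{thm:Explicit_Form_Casimir_Element}. The remaining piece is $\tfrac12\sum\langle X_a\rangle_\psi^2$, which reproduces Eq.~\eqref{eq:casimir_variance}.

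Next I would split the squared-expectation sum into its Cartan and root parts.

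\textbf{Cartan part.} I keep $\tfrac12\sum_i\langle H_i\rangle_\psi^2$ as written. Each $\langle H_i\rangle_\psi$ is purely imaginary, so this term is actually nonpositive, but the statement leaves it in terms of $\langle H_i\rangle_\psi^2$, so no rewriting is required.

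\textbf{Root part.} For each $\alpha\in\Phi^+$, I use Def.~\ref{def:Anti_Hermitian_Basis} together with $e_\alpha^\dagger=f_\alpha$. This gives $\langle f_\alpha\rangle_\psi=\overline{\langle e_\alpha\rangle_\psi}=z_\alpha^*$. Then:
\begin{itemize}
\item $\langle E_\alpha\rangle_\psi=-\mathrm{i}\sqrt{\langle\alpha,\alpha\rangle/2}\,(z_\alpha+z_\alpha^*)$;
\item $\langle F_\alpha\rangle_\psi=-\sqrt{\langle\alpha,\alpha\rangle/2}\,(z_\alpha-z_\alpha^*)$.
\end{itemize}
Squaring and adding gives
\begin{align}
-\tfrac{\langle\alpha,\alpha\rangle}{2}\bigl[(z_\alpha+z_\alpha^*)^2-(z_\alpha-z_\alpha^*)^2\bigr]=-2\langle\alpha,\alpha\rangle\,|z_\alpha|^2,
\end{align}
since $(z+\bar z)^2-(z-\bar z)^2=4z\bar z$. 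This is Eq.~\eqref{eq:Sum_E_F_alpha}. Multiplying by $\tfrac12$ and summing over $\Phi^+$ yields the term $-\sum_{\alpha\in\Phi^+}\langle\alpha,\alpha\rangle\,|\langle e_\alpha\rangle_\psi|^2$.

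Collecting the three contributions gives Eq.~\eqref{eq:decomp_chevalley1}. Irreducibility is not needed for the identity itself. It only justifies calling $\langle C_2\rangle_\psi$ state-independent, via Thm.~\ref{thm:Value_Casimir_Element}.

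The only real obstacle is sign bookkeeping. Squares of purely imaginary expectations are negative, and the cross-term cancellation in $(z+\bar z)^2-(z-\bar z)^2$ must be checked, because a dropped $\mathrm{i}$ would flip the sign of the root term. I would sanity-check the result on $\mathfrak{su}(2)$ with $\ket{j,j}$. There $z_\alpha=0$ and $\langle H\rangle=-2\mathrm{i}j$, so the identity gives $2j(j+1)-2j^2=2j$, matching the value stated in the main text.
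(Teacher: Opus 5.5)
Your proposal is correct and follows the paper's route: starting from Cor.~\ref{cor:Quantum_Uncertainty_Casimir_Element}, which you re-derive from the anti-Hermitian variance and Thm.~\ref{thm:Explicit_Form_Casimir_Element}, you rewrite the root-sector squared expectations through $z_\alpha=\langle e_\alpha\rangle_\psi$ and $\langle f_\alpha\rangle_\psi=z_\alpha^*$ to obtain Eq.~\eqref{eq:Sum_E_F_alpha}, then substitute. Your sign bookkeeping and the $\mathfrak{su}(2)$ check are right, and so is your remark that irreducibility matters only for calling $\langle C_2\rangle_\psi$ state-independent.
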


Equation~\eqref{eq:decomp_chevalley1} decomposes the Lie algebraic variance into an intrinsic, state-independent Casimir term and state-dependent contributions from the Cartan and positive root sectors. 
This separation exposes the representation-theoretic structure of the quantum uncertainty and provides the basis for deriving the uncertainty bound.

We now restrict to a fixed weight space. 
Let $\ket{\psi_{\mu}}\in V_{\Lambda,\mu}$ denote a simultaneous eigenstate of the Cartan generators, satisfying
\begin{align}
\label{eq:weight_state}
    H_i\ket{\psi_{\mu}}
    = 
    \mu(H_i)\ket{\psi_{\mu}},
    \quad
    i\in\{1,\ldots,r\}.
\end{align}
The analysis simplifies because each root generator connects distinct weight spaces. 
For any positive root $\alpha\in\Phi^+$ and any $h\in\mathfrak h$, the commutation relation $[h,e_\alpha]=\alpha(h)e_\alpha$ gives
\begin{align}
h(e_\alpha\ket{\psi_\mu})
&=
e_\alpha h\ket{\psi_\mu}
+
[h,e_\alpha] \ket{\psi_\mu}\\
&=
(\mu(h)+\alpha(h))e_\alpha \, \ket{\psi_\mu}.
\end{align}
Hence, whenever $e_\alpha\ket{\psi_\mu}\neq0$, the resulting state belongs to the shifted weight space
\begin{align}
    e_\alpha\ket{\psi_\mu}
    \in 
    V_{\Lambda,\mu+\alpha}.
\end{align}
Because distinct weight spaces are orthogonal in a unitary representation, $e_\alpha|\psi_{\mu}\rangle$ is orthogonal to $|\psi_{\mu}\rangle$.
It follows that
\begin{align}
\label{eq:weight_vanish}
    \langle\psi_{\mu}|e_\alpha|\psi_{\mu}\rangle=0,
    \quad
    \forall\,\alpha\in\Phi^+.
\end{align}
Thus, all positive root expectation values, i.e., $\langle e_\alpha\rangle_{\psi_{\mu}}$, vanish on weight states.
Substituting Eq.~\eqref{eq:weight_vanish} into the Lie algebraic quantum uncertainty decomposition in Eq.~\eqref{eq:decomp_chevalley1} yields

\begin{mylem}{Lie Algebraic Uncertainty Identity in Weight Spaces}{Lie_Algebraic_Uncertainty_Identity_Weight}
Let $V_{\Lambda,\mu}$ be a weight space of the Lie algebra $\g$ (see Thm.~\ref{thm:Weight_Space_Decomposition}). 
The Lie algebraic quantum uncertainty of any state $\ket{\psi_{\mu}}\in V_{\Lambda,\mu}$ separates into a weight-independent Casimir contribution and weight-dependent Cartan terms determined by $\mu$,
\begin{align}
\label{eq:decomp_chevalley2}
    \Delta^2_{\psi_{\mu}}(\g)
    = 
    \underbrace{\langle C_2\rangle_{\psi_{\mu}}}_{\text{Weight-Independent}}
    + 
    \underbrace{\frac12 \sum_{i=1}^r \mu(H_i)^2}_{\text{Weight-Dependent}},
\end{align}
Here $C_2$ denotes the quadratic Casimir element of $\g$, as defined in Def.~\ref{def:Quadratic_Casimir_Element}.
\end{mylem}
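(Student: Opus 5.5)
The plan is to specialize the Lie Algebraic Uncertainty Identity (Lem.~\ref{lem:Lie_Algebraic_Uncertainty_Identity}) to a weight state, using the two facts already recorded before this statement. First, for $\ket{\psi_\mu}\in V_{\Lambda,\mu}$ and any $\alpha\in\Phi^+$, the vector $e_\alpha\ket{\psi_\mu}$ lies in $V_{\Lambda,\mu+\alpha}$. In a unitary representation distinct weight spaces are orthogonal, since the Cartan generators act by Hermitian (hence normal) operators with distinct eigenvalue functionals. Therefore $\langle e_\alpha\rangle_{\psi_\mu}=0$, as in Eq.~\eqref{eq:weight_vanish}, and the whole root-sector term $-\sum_{\alpha\in\Phi^+}\langle\alpha,\alpha\rangle|\langle e_\alpha\rangle_{\psi_\mu}|^2$ in Eq.~\eqref{eq:decomp_chevalley1} drops out.

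Second, I will evaluate the Cartan expectations. Since $\ket{\psi_\mu}$ is a simultaneous eigenvector, $\langle H_i\rangle_{\psi_\mu}=\mu(H_i)$, where $\mu$ is extended complex-linearly to $H_i=-\mathrm{i}\sum_j U_{ij}h_j$. This gives $\mu(H_i)=-\mathrm{i}\sum_jU_{ij}\mu(h_j)$, which is purely imaginary, so $\mu(H_i)^2=-\bigl(\sum_jU_{ij}\mu(h_j)\bigr)^2$. This yields Lem.~\ref{lem:Lie_Algebraic_Uncertainty_Identity_Weight}. It remains to identify $\tfrac12\sum_i\mu(H_i)^2$ with $-\langle\mu,\mu\rangle$.

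For that identification I will use Lem.~\ref{lem:Dual_Basis_Contraction} with the basis $\{H_i\}$ of $\mathfrak h$. Because $\kappa(H_i,H_j)=-2\delta_{ij}$, the $\kappa$-dual basis is $H^i=-\tfrac12 H_i$, as in Eq.~\eqref{eq:Killing_dual_generators}. Then
\begin{align}
\langle\mu,\mu\rangle=\sum_i\mu(H_i)\mu(H^i)=-\tfrac12\sum_i\mu(H_i)^2,\notag
\end{align}
so $\tfrac12\sum_i\mu(H_i)^2=-\langle\mu,\mu\rangle$. Lemma~\ref{lem:Dual_Basis_Contraction} is stated for any basis of $\mathfrak h$ and its $\kappa$-dual, and $\kappa$ is complex bilinear, so using the complex basis $\{H_i\}$ is legitimate. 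This gives $\Delta^2_{\psi_\mu}(\g)=\langle C_2\rangle_{\psi_\mu}-\langle\mu,\mu\rangle$. Since weights are real on the real span of the coroots, $\langle\mu,\mu\rangle$ is the genuine squared norm in $E$.

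Finally, I must connect $\Delta^2_\psi(\g)$ to $\mU_\psi(\g)$ of Eq.~\eqref{eq:MT_Lie_Uncertainty}. For the anti-Hermitian basis the Gram matrix is $2\1$, so $\mathrm G^{-1}=\tfrac12\1$. Then $\Tr[\mathrm G^{-1}\mathrm{Cov}_\psi]$ is exactly $\tfrac12\sum_a\Delta^2_\psi(X_a)$, matching Def.~\ref{def:Lie_Algebraic_Uncertainty_Relations}. Lem.~\ref{lem:Basis_Invariance} then removes any basis dependence. I will also note that $\langle C_2\rangle_{\psi_\mu}=\langle\Lambda,\Lambda+2\delta\rangle$ by Thm.~\ref{thm:Value_Casimir_Element}, so the right-hand side depends only on $\Lambda$ and $\mu$. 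This is what makes the uncertainty constant on $V_{\Lambda,\mu}$ even when $\dim V_{\Lambda,\mu}>1$.

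The main obstacle is the sign and normalization bookkeeping in the Cartan term. One has to keep track of the factor $-\mathrm{i}$ in $H_i$, the factor $2$ in $(H_i,H_j)=2\delta_{ij}$, and the sign of $\kappa$ on the compact form. I will sanity-check against $\mathfrak{su}(2)$. There $H=-\mathrm{i}h$, and on $\ket{j,m}$ we have $\mu(H)=-2\mathrm{i}m$, so $\tfrac12\mu(H)^2=-2m^2$. The normalized form gives $\langle\alpha,\alpha\rangle=2$ and $\mu=m\alpha$, so $\langle\mu,\mu\rangle=2m^2$, which agrees with the angular-momentum example in the main text.
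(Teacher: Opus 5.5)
Your argument is correct and follows the paper's proof: you substitute $\langle e_\alpha\rangle_{\psi_\mu}=0$, which holds because $V_{\Lambda,\mu}$ and $V_{\Lambda,\mu+\alpha}$ are orthogonal, together with $\langle H_i\rangle_{\psi_\mu}=\mu(H_i)$ into Eq.~\eqref{eq:decomp_chevalley1}. Your further steps go beyond this lemma but are also correct: the reduction to $\langle C_2\rangle_{\psi_\mu}-\langle\mu,\mu\rangle$ via the $\kappa$-dual basis $H^i=-\tfrac12H_i$, and the identification with the Gram-matrix form, reproduce the paper's next theorem by the same route.
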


We choose the anti-Hermitian Cartan generators
$\{H_i\}_{i=1}^r$
to be orthogonal with respect to the compact inner product
$(x,y)=-\kappa(x,y)$,
with normalization
$(H_i,H_j) = 2\delta_{ij}$.
Equivalently,
\begin{align}
    \kappa(H_i,H_j)
    =
    -2\delta_{ij}.
\end{align}
It follows that the $\kappa$-dual basis is
\begin{align}
    H^i
    =
    -\frac12 H_i.
\end{align}
Then 
\begin{align}
    \kappa(H_i,H^j)
    =
    \kappa\left(H_i,-\frac12 H_j\right)
    =
    \delta_{ij}.
\end{align}

Applying Lem.~\ref{lem:Dual_Basis_Contraction} to the Cartan basis $\{H_i\}_{i=1}^r$
and its Killing-dual basis $\{H^i\}_{i=1}^r$, we obtain
\begin{align}
    \langle\mu,\mu\rangle
    =
    \sum_{i=1}^r
    \mu(H_i)\mu(H^i),
\end{align}
and hence
\begin{align}
       \frac12 \sum_{i=1}^r \mu(H_i)^2 
    =
    -\sum_{i=1}^r \mu(H_i)\mu\!\left(-\frac12H_i\right) 
    = 
    -
    \sum_{i=1}^r
    \mu(H_i)\mu(H^i)
    =
     -\langle\mu,\mu\rangle.
\end{align}
Therefore, for any quantum state $\ket{\psi_\mu}\in V_{\Lambda,\mu}$, the Lie algebraic uncertainty relation reduces to

\begin{mythm}{Lie Algebraic Uncertainty Identity in Weight Spaces}{Lie_Algebraic_Uncertainty_Identity_Weight}
Let $V_{\Lambda,\mu}$ be a weight space of the irreducible highest weight representation $V(\Lambda)$ of the Lie algebra $\g$, as in Thm.~\ref{thm:Weight_Space_Decomposition}.
Then, for any state $\ket{\psi_{\mu}}\in V_{\Lambda,\mu}$, the Lie algebraic quantum uncertainty reduces to a Casimir contribution and a weight dependent term determined by the inner product of $\mu$,
\begin{align}\label{eq:decomp_weight_final}
    \Delta_{\psi_{\mu}}^2(\g)
    = 
    \underbrace{\langle C_2\rangle_{\psi_\mu}}_{\text{Weight-Independent}}
    - 
    \underbrace{\langle\mu,\mu\rangle\, ,}_{\text{Weight-Dependent}}
\end{align}
where $C_2$ is the quadratic Casimir element of $\g$, defined in Def.~\ref{def:Quadratic_Casimir_Element}.
\end{mythm}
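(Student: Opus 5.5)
The identity follows by combining the weight-space form of the uncertainty (Lem.~\ref{lem:Lie_Algebraic_Uncertainty_Identity_Weight}) with the dual-basis contraction of Lem.~\ref{lem:Dual_Basis_Contraction}, applied to the orthogonal anti-Hermitian Cartan generators.

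\textbf{Step 1: reduce to the Cartan sector.} Fix $\ket{\psi_\mu}\in V_{\Lambda,\mu}$. By Lem.~\ref{lem:Lie_Algebraic_Uncertainty_Identity} we have
\begin{align}
\Delta^2_{\psi_\mu}(\g)=\langle C_2\rangle_{\psi_\mu}+\frac12\sum_{i=1}^r\langle H_i\rangle_{\psi_\mu}^2-\sum_{\alpha\in\Phi^+}\langle\alpha,\alpha\rangle\,|\langle e_\alpha\rangle_{\psi_\mu}|^2 .
\end{align}
Since $e_\alpha\ket{\psi_\mu}\in V_{\Lambda,\mu+\alpha}$, and distinct weight spaces are orthogonal in a unitary representation, every root expectation $\langle e_\alpha\rangle_{\psi_\mu}$ vanishes. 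This is Eq.~\eqref{eq:weight_vanish}, so the root sector drops out.

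\textbf{Step 2: evaluate the Cartan expectations.} The state is a simultaneous eigenvector of the Cartan generators. Hence $\langle H_i\rangle_{\psi_\mu}=\mu(H_i)$ exactly, using only normalization of $\ket{\psi_\mu}$, regardless of which vector in the (possibly degenerate) weight space is chosen. This already shows that the uncertainty depends on $\ket{\psi_\mu}$ only through $\mu$, which is the conservation statement.

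\textbf{Step 3: identify the Cartan sum with $-\langle\mu,\mu\rangle$.} The normalization $\kappa(H_i,H_j)=-2\delta_{ij}$ gives the $\kappa$-dual basis $H^i=-\tfrac12H_i$. Lem.~\ref{lem:Dual_Basis_Contraction} then yields
\begin{align}
\langle\mu,\mu\rangle=\sum_i\mu(H_i)\mu(H^i)=-\tfrac12\sum_i\mu(H_i)^2 .
\end{align}
Therefore $\tfrac12\sum_i\mu(H_i)^2=-\langle\mu,\mu\rangle$, and substituting this gives Eq.~\eqref{eq:decomp_weight_final}.

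\textbf{Main obstacle.} The step needing care is the sign and reality bookkeeping. Because $H_i$ is anti-Hermitian, $\mu(H_i)$ is purely imaginary, so $\mu(H_i)^2\le0$. Lem.~\ref{lem:Dual_Basis_Contraction} is stated for $\kappa$-dual bases of $\mathfrak h$ and applies to the complex basis $\{H_i\}$ by bilinearity. The positivity of $\langle\mu,\mu\rangle$ then comes from $\mu$ lying in the real span $E$. I would check this explicitly through $H_i=-\mathrm{i}\sum_jU_{ij}h_j$: real $\mu(h_j)$ and $UGU^{\T}=2\1_r$ give $-\tfrac12\sum_i\mu(H_i)^2=\tfrac12\sum_i(U\vec m)_i^2\ge0$, where $\vec m_j=\mu(h_j)$. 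Finally, Thm.~\ref{thm:Value_Casimir_Element} shows that $\langle C_2\rangle_{\psi_\mu}=\langle\Lambda,\Lambda+2\delta\rangle$ is itself state independent, which completes the constancy claim.
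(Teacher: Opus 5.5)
Your proposal is correct and follows the paper's argument essentially step for step: the root expectations $\langle e_\alpha\rangle_{\psi_\mu}$ vanish because distinct weight spaces are orthogonal, and the remaining Cartan term $\tfrac12\sum_i\mu(H_i)^2$ is converted to $-\langle\mu,\mu\rangle$ using the $\kappa$-dual basis $H^i=-\tfrac12H_i$ and Lem.~\ref{lem:Dual_Basis_Contraction}. Your explicit check via $G^{-1}=\tfrac12U^{\T}U$ correctly confirms the signs and the positivity of $\langle\mu,\mu\rangle$, although the paper's proof does not need it.
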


Every weight $\mu$ appearing in the weight space decomposition of $V(\Lambda)$ obeys the inequality $\langle\mu,\mu\rangle\leqslant\langle\Lambda,\Lambda\rangle$ (see Lem.~\ref{lem:Extremal_Characterization_Highest_Weight}), with equality if and only if $\mu$ lies in the Weyl group orbit of the highest weight $\Lambda$; 
equivalently, $\Lambda$ is the unique dominant weight in the Weyl orbit $W\Lambda$. 
Combining this bound with the weight space uncertainty identity in Eq.~\eqref{eq:decomp_weight_final} and the Casimir eigenvalue relation in Eq.~\eqref{eq:exp_Casimir} gives the following uncertainty bound for weight states.

\begin{mycor}{Lie Algebraic Uncertainty Bound in Weight Spaces}{Weight_Bound}
Let $V_{\Lambda, \mu}\subset V(\Lambda)$ be a weight space of an irreducible highest weight representation with highest weight $\Lambda$
(see Thm.~\ref{thm:Weight_Space_Decomposition}).
For every state $\ket{\psi_{\mu}}\in V_{\Lambda,\mu}$,
\begin{align}
    \Delta_{\psi_{\mu}}^2(\g)
    \geqslant
    \underbrace{2\langle\Lambda,\delta\rangle \, .}_{\text{Weight-Independent}}
\end{align}
Equality holds if and only if the weight $\mu$ lies in the Weyl group orbit of the highest weight $\Lambda$.
\end{mycor}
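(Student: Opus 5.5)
The plan is to combine the weight-space identity of Thm.~\ref{thm:Lie_Algebraic_Uncertainty_Identity_Weight} with the Casimir eigenvalue of Thm.~\ref{thm:Value_Casimir_Element}, and then apply the norm inequality of Lem.~\ref{lem:Extremal_Characterization_Highest_Weight}.

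First, fix $\ket{\psi_\mu}\in V_{\Lambda,\mu}$, normalized. Since $C_2$ acts as the scalar $\langle\Lambda,\Lambda+2\delta\rangle$ on all of $V(\Lambda)$, Eq.~\eqref{eq:exp_Casimir} gives $\langle C_2\rangle_{\psi_\mu}=\langle\Lambda,\Lambda\rangle+2\langle\Lambda,\delta\rangle$. Substituting this into Eq.~\eqref{eq:decomp_weight_final} yields the exact expression
\begin{align}
\Delta_{\psi_\mu}^2(\g)=2\langle\Lambda,\delta\rangle+\bigl(\langle\Lambda,\Lambda\rangle-\langle\mu,\mu\rangle\bigr).
\end{align}
Here I will note that all weights of a finite-dimensional representation lie in the real span $E$ of the roots. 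The form $\langle\cdot,\cdot\rangle$ is positive definite there, so the bracketed difference is a genuine real quantity.

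Second, Lem.~\ref{lem:Extremal_Characterization_Highest_Weight} gives $\langle\mu,\mu\rangle\leqslant\langle\Lambda,\Lambda\rangle$ for every weight $\mu$ of $V(\Lambda)$. The bracket is therefore nonnegative, which proves the inequality. The equality case follows from the same lemma: the bracket vanishes exactly when $\langle\mu,\mu\rangle=\langle\Lambda,\Lambda\rangle$, which happens precisely when $\mu\in W\Lambda$. Because the expression above is an identity and not merely an inequality, both directions of the ``if and only if'' hold. Moreover, equality holds for every state in such an extremal weight space, not just for one particular state.

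The only step with real content is the extremal characterization, which the paper takes from the literature. If it had to be justified, I would argue as follows. Pick $w\in W$ with $w\mu$ dominant; the Weyl group preserves both the norm and the weight set, so I may assume $\mu$ is dominant. Then $\Lambda-\mu=\sum_i k_i\alpha_i$ with $k_i\geqslant0$. This gives
\begin{align}
\langle\Lambda,\Lambda\rangle-\langle\mu,\mu\rangle=\langle\Lambda-\mu,\Lambda\rangle+\langle\Lambda-\mu,\mu\rangle\geqslant0,
\end{align}
since dominant weights pair nonnegatively with the simple roots. For the equality case, both terms must vanish, and I would then need to show that this forces $\mu=\Lambda$. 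I expect this step to be the most delicate. The route I would take is to use the strict positivity of $\langle\Lambda-\mu,\Lambda-\mu\rangle$ when $\mu\neq\Lambda$, by writing
\begin{align}
\langle\Lambda,\Lambda\rangle-\langle\mu,\mu\rangle=\langle\Lambda-\mu,\Lambda-\mu\rangle+2\langle\Lambda-\mu,\mu\rangle,
\end{align}
where the second term is nonnegative because $\mu$ is dominant.
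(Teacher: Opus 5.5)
Your proposal is correct and follows the paper's argument exactly: substitute the Casimir eigenvalue $\langle\Lambda,\Lambda+2\delta\rangle$ into the weight-space identity $\Delta_{\psi_\mu}^2(\g)=\langle C_2\rangle_{\psi_\mu}-\langle\mu,\mu\rangle$, then invoke the extremal norm characterization of the highest weight for both the inequality and the equality case. The paper only cites that characterization; your sketch of it (reduce to dominant $\mu$ via the Weyl group, then use $\langle\Lambda,\Lambda\rangle-\langle\mu,\mu\rangle=\langle\Lambda-\mu,\Lambda-\mu\rangle+2\langle\Lambda-\mu,\mu\rangle$, where both terms are nonnegative and the first is strictly positive unless $\mu=\Lambda$) is sound, so the step you flagged as delicate in fact closes.
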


A careful reader may observe that Cor.~\ref{cor:Weight_Bound} already provides an uncertainty relation with a state-independent lower bound. 
This naturally raises the question of whether the bound $2\langle\Lambda,\delta\rangle$ derived for individual weight spaces $V_{\Lambda, \mu}$ can be promoted directly to the full representation space $V(\Lambda)$. 
In general, the weight-space argument alone does not establish this extension.
Cor.~\ref{cor:Weight_Bound} applies only to states within a single weight space. 
If $\psi_1 \in V_{\Lambda, \mu_1}$ and $\psi_2 \in V_{\Lambda, \mu_2}$ with $\mu_1\neq\mu_2$, then any nontrivial superposition of the two is not itself a weight state and lies in neither weight space individually. 
Thus, the weight space bound alone does not establish a state-independent Lie algebraic uncertainty relation on the full representation space, or equivalently, for all quantum states. 
We therefore provide an explicit moment element argument for its extension to all states, complementing the general bound discussed in~\cite{deGuise2018}.

\subsection{Quantum Uncertainty Beyond Weight Spaces}\label{subsec:Quantum_Uncertainty_Beyond_Weight_Spaces}
The uncertainty relations derived in Subsec.~\ref{subsec:Quantum_Uncertainty_Weight_Spaces} apply directly to states supported within a single weight space $V_{\Lambda, \mu}$, where the Cartan generators are diagonal and the Lie algebraic uncertainty reduces to a simply expression. 
A generic state in an irreducible representation, however, may coherently combine different weight sectors, so the weight space argument no longer applies directly to the expectation value terms in the variance.
To obtain a bound on the full representation space, we introduce the moment element $M_\psi\in\g_{\mathbb R}$ (see Def.~\ref{def:Moment_Element}), which packages the expectation values of all Lie algebra generators into a single geometric object. 
Its norm is universally bounded by the norm of the highest weight $\Lambda$, imposing a representation-theoretic limit on the maximal value attainable by any state in $V(\Lambda)$. 
This highest weight constraint turns the variance decomposition into a state-independent Lie algebraic uncertainty bound for arbitrary quantum states.

Let $\g_{\mathbb R}$ be a compact real Lie algebra with anti-Hermitian generators $\{H_i,E_\alpha,F_\alpha\}$ as introduced in Def.~\ref{def:Anti_Hermitian_Basis}. 
We choose this basis such that, for any two generators $Q_a, Q_b\in \{H_i,E_\alpha,F_\alpha\}$,
\begin{align}\label{eq:normalization_Q}
    (Q_a,Q_b)=2\delta_{ab} .
\end{align}
Here $(\cdot,\cdot)$ denotes the compact inner product introduced in Def.~\ref{def:Inner_product_alg}, with equivalent formulations given in Eqs.~\eqref{eq:Inner}-\eqref{eq:Inner3}.
For a quantum state $\ket{\psi}\in V(\Lambda)$, the expectation values $ \langle Q_a\rangle_\psi$ are purely imaginary, since the generators $Q_a$ are anti-Hermitian. 
Hence
\begin{align}\label{eq:m_a_psi}
    m_a(\psi)
    \coloneqq
    \langle \mathrm{i}Q_a\rangle_\psi
    =
    \mathrm{i}\langle Q_a\rangle_\psi
\end{align}
are real numbers.
These expectation values define the moment element.

\begin{mydef}{Moment Element on $\g_{\mathbb R}$}{Moment_Element}
Let $\ket{\psi}\in V(\Lambda)$ be a quantum state, and let $\{Q_a\}_{a}=\{H_i,E_\alpha,F_\alpha\}$ denote the anti-Hermitian generator basis (see Def.~\ref{def:Anti_Hermitian_Basis}) of the compact real Lie algebra $\g_{\mathbb R}$.
Using the real coefficients $m_a(\psi)$ defined in Eq.~\eqref{eq:m_a_psi}, we define the associated moment element by
\begin{align}
    M_{\psi} \coloneqq  \frac12 \sum_a m_a(\psi)\,Q_a \in \g_{\mathbb R} .
\end{align}
\end{mydef}

The normalization in Eq.~\eqref{eq:normalization_Q} fixes the norm of $M_{\psi}$ as follows
\begin{align}
\label{eq:moment_norm}
    \|M_{\psi}\|^2
    =
    (M_{\psi}, M_{\psi})
    =
    \frac12 \sum_a m_a(\psi)^2
    =
    -\frac12 \sum_a\langle Q_a\rangle^2_{\psi} \,.
\end{align}
Equivalently, resolving the sum into Cartan and root sectors gives
\begin{align}
    \|M_{\psi}\|^2
    = 
    -\frac12 \Biggl[\sum_{i=1}^r \langle H_i\rangle_{\psi}^2
    + 
    \sum_{\alpha\in\Phi^+} \Bigl( \langle E_\alpha\rangle_{\psi}^2
    +
    \langle F_\alpha\rangle_{\psi}^2\Bigr) \Biggr] .
\end{align}
The Casimir eigenvalue of $C_2$ turns the Lie algebraic uncertainty relation in Eq.~\eqref{eq:casimir_variance} into the equivalent form

\begin{mylem}{Quantum Uncertainty in Terms of the Moment Element}{Quantum_Uncertainty_Moment_Element}
For the compact real Lie algebra $\g_{\mathbb R}$, the Lie algebraic quantum uncertainty of Def.~\ref{def:Lie_Algebraic_Uncertainty_Relations} admits the following equivalent expression
\begin{align}
\label{eq:variance_moment}
    \Delta_{\psi}^2(\g)
    = \langle\Lambda,\Lambda+2\delta\rangle
    - \|M_{\psi}\|^2 .
\end{align}
Here $M_{\psi}$ is the moment element associated with the state $\psi$ (see Def.~\ref{def:Moment_Element}), and $\delta$ denotes the Weyl vector (see Def.~\ref{def:Weyl_Vector}).
This identity holds for every quantum state $\ket{\psi}\in V(\Lambda)$.
\end{mylem}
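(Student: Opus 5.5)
This is a direct algebraic identity. The plan is to start from Cor.~\ref{cor:Quantum_Uncertainty_Casimir_Element}, which expresses $\Delta_\psi^2(\g)$ as $\langle C_2\rangle_\psi$ plus one half of the sum of squared expectation values of the anti-Hermitian generators. We then identify each of the two pieces separately.

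\emph{Casimir term.} Since $\ket{\psi}\in V(\Lambda)$ is normalized and $V(\Lambda)$ is irreducible, Thm.~\ref{thm:Value_Casimir_Element} gives $C_2=\langle\Lambda,\Lambda+2\delta\rangle\1_{V(\Lambda)}$. Hence $\langle C_2\rangle_\psi=\langle\Lambda,\Lambda+2\delta\rangle$ for every state, not only for weight states. The only care needed is to confirm that the $C_2$ of Cor.~\ref{cor:Quantum_Uncertainty_Casimir_Element} is built in the anti-Hermitian basis, while the one in Thm.~\ref{thm:Value_Casimir_Element} is evaluated in the Chevalley basis. By Lem.~\ref{lem:Basis_Independence_Casimir_Element}, together with the fact that both are defined using the same normalized form $\kappa$, these are the same element, and its complex-linear extension acts on $V(\Lambda)$ as that scalar.

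\emph{Expectation term.} Write $\{Q_a\}=\{H_i,E_\alpha,F_\alpha\}$. The second term in Eq.~\eqref{eq:casimir_variance} is $\tfrac12\sum_a\langle Q_a\rangle_\psi^2$. By Eq.~\eqref{eq:moment_norm} this equals $-\|M_\psi\|^2$. To justify Eq.~\eqref{eq:moment_norm}, expand $(M_\psi,M_\psi)=\tfrac14\sum_{a,b}m_a m_b (Q_a,Q_b)$ and use the orthogonality normalization $(Q_a,Q_b)=2\delta_{ab}$ from Eq.~\eqref{eq:normalization_Q}. This gives $\tfrac12\sum_a m_a^2$. Because each $Q_a$ is anti-Hermitian, $\langle Q_a\rangle_\psi$ is purely imaginary and $m_a=\mathrm{i}\langle Q_a\rangle_\psi$ is real, so $m_a^2=-\langle Q_a\rangle_\psi^2$. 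Substituting both identifications into Cor.~\ref{cor:Quantum_Uncertainty_Casimir_Element} yields Eq.~\eqref{eq:variance_moment}.

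The main obstacle is bookkeeping rather than any conceptual step. We need the factor $\tfrac12$ in the definition of $M_\psi$ to match the normalization 2 of the basis, and the sign conventions must line up: $(\cdot,\cdot)=-\kappa$, the dual basis $X^a=-\tfrac12X_a$, and purely imaginary expectations. As a sanity check, I would verify the resulting formula on $\mathfrak{su}(2)$ at $\ket{j,j}$. There it gives $2j(j+1)-2j^2=2j$, consistent with Cor.~\ref{cor:Weight_Bound}.
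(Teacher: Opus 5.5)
Your proposal is correct and matches the paper's own argument. The paper also substitutes the scalar value $\langle C_2\rangle_\psi=\langle\Lambda,\Lambda+2\delta\rangle$ from Thm.~\ref{thm:Value_Casimir_Element} into Eq.~\eqref{eq:casimir_variance}, and rewrites $\tfrac12\sum_a\langle Q_a\rangle_\psi^2=-\|M_\psi\|^2$ via Eq.~\eqref{eq:moment_norm}, using $(Q_a,Q_b)=2\delta_{ab}$ and the reality of $m_a=\mathrm{i}\langle Q_a\rangle_\psi$ exactly as you describe (your remark that the Chevalley and anti-Hermitian forms give the same $C_2$ is the point the paper makes right after Thm.~\ref{thm:Value_Casimir_Element}).
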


Thus, the passage from weight-space uncertainty relations (see Cor.~\ref{cor:Weight_Bound}) to a bound valid for arbitrary states is reduced to the moment bound below.

\begin{mylem}{Moment-Norm Bound}{Moment_Norm_Bound}
Let $V(\Lambda)$ be a finite-dimensional highest weight representation of $\g$, with highest weight $\Lambda$.
Then, for every state $\ket{\psi}\in V(\Lambda)$, the associated moment element $M_{\psi}$ (see Def.~\ref{def:Moment_Element}) satisfies
\begin{align}
\label{eq:moment_norm_bound}
    \|M_{\psi}\|^2
      \leqslant
    \langle\Lambda,\Lambda\rangle .
\end{align}
\end{mylem}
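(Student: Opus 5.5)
The approach is to conjugate the moment element into the Cartan subalgebra with the compact group and then compare it with the weights of the representation. The moment element $M_\psi\in\g_{\mathbb R}$ behaves equivariantly under the compact group $K$ generated by $\g_{\mathbb R}$ in the unitary representation. For $k\in K$, the invariance of the inner product (Lem.~\ref{lem:Invariance_Inner_Product}, integrated to the group level) and the orthonormality of $\{Q_a/\sqrt2\}$ should give $M_{\pi(k)\psi}=\mathrm{Ad}_k M_\psi$. Indeed, $M_\psi$ is characterized intrinsically by
\begin{align}
(M_\psi,X)=\mathrm{i}\langle X\rangle_\psi \qquad \text{for all } X\in\g_{\mathbb R},
\end{align}
which follows directly from Def.~\ref{def:Moment_Element} and the normalization $(Q_a,Q_b)=2\delta_{ab}$. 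In particular $\|M_\psi\|$ is invariant under $\psi\mapsto\pi(k)\psi$.

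The first step is the standard fact that every element of a compact Lie algebra is $\mathrm{Ad}_K$-conjugate into the compact Cartan subalgebra $\mathfrak t=\mathrm{span}_{\mathbb R}\{H_i\}$. Choose $k$ with $\mathrm{Ad}_k M_\psi\in\mathfrak t$ and replace $\psi$ by $\phi=\pi(k)\psi$. Then $\|M_\psi\|=\|M_\phi\|$, and $M_\phi$ has no root components, so $\langle E_\alpha\rangle_\phi=\langle F_\alpha\rangle_\phi=0$. By Eq.~\eqref{eq:moment_norm}, this leaves
\begin{align}
\|M_\phi\|^2=-\tfrac12\sum_i\langle H_i\rangle_\phi^2 .
\end{align}

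The second step treats the Cartan expectations. Expand $\phi=\sum_\mu c_\mu\psi_\mu$ over weight spaces, with $p_\mu=|c_\mu|^2$. Since distinct weight spaces are orthogonal, $\langle H_i\rangle_\phi=\sum_\mu p_\mu\,\mu(H_i)$, which is the $\mu(H_i)$-coordinate of the convex combination $\bar\mu=\sum_\mu p_\mu\mu$ in the real weight space. Repeating the computation that precedes Thm.~\ref{thm:Lie_Algebraic_Uncertainty_Identity_Weight} (the dual basis $H^i=-\tfrac12H_i$ together with Lem.~\ref{lem:Dual_Basis_Contraction}) gives $-\tfrac12\sum_i\langle H_i\rangle_\phi^2=\langle\bar\mu,\bar\mu\rangle$. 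Here $\mu(H_i)$ is purely imaginary, which supplies the needed sign. Since $\mu\mapsto\langle\mu,\mu\rangle$ is convex, $\langle\bar\mu,\bar\mu\rangle\le\sum_\mu p_\mu\langle\mu,\mu\rangle$. Lem.~\ref{lem:Extremal_Characterization_Highest_Weight} then bounds the right-hand side by $\langle\Lambda,\Lambda\rangle$, which proves Eq.~\eqref{eq:moment_norm_bound}. Equality holds for the highest weight state, consistent with Thm.~\ref{thm:Simple_Bound}.

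The main obstacle is the equivariance and conjugation step. It requires integrating the unitary representation of $\g_{\mathbb R}$ to a compact group, checking that $\mathrm{Ad}$ preserves $(\cdot,\cdot)$, and invoking the maximal-torus theorem, none of which the paper states explicitly. If that route proves awkward, the fallback is a direct argument. For fixed unit $X\in\g_{\mathbb R}$, $\mathrm{i}\langle X\rangle_\psi$ is bounded by the largest eigenvalue of $\mathrm{i}\pi(X)$. This equals $\max_\mu|\mu(\mathrm{i}X')|$ for a Cartan conjugate $X'$ of $X$, and by Cauchy–Schwarz it is at most $\|\Lambda\|\,\|X\|$. Taking $X=M_\psi/\|M_\psi\|$ then gives $\|M_\psi\|\le\|\Lambda\|$, but this fallback still needs the same conjugacy fact for $X$.
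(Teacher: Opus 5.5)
Your proof is correct. It differs from the paper's in \emph{what} gets conjugated into $\mathfrak t$, and your fallback is essentially the paper's argument.

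\textbf{The paper's route.} It writes $\|M_\psi\|$ as the dual norm $\frac{1}{\sqrt2}\sup_{(X,X)=2}|\langle \mathrm{i}X\rangle_\psi|$. It then conjugates each test element, $X=\mathrm{Ad}_gA$ with $A\in\mathfrak t$, using the simply connected compact group of $\g_{\mathbb R}$. It bounds $|\langle \mathrm{i}A\rangle_\phi|\leqslant\max_\mu|\mu(A)|$ spectrally, and closes with Cauchy--Schwarz and $\langle\mu,\mu\rangle\leqslant\langle\Lambda,\Lambda\rangle$.

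\textbf{Your route.} You conjugate $M_\psi$ itself. The equivariance $M_{\pi(k)\psi}=\mathrm{Ad}_kM_\psi$ does follow from $(M_\psi,X)=\mathrm{i}\langle X\rangle_\psi$ and $\mathrm{Ad}$-invariance. It removes all root components at once. After that, $\|M_\phi\|^2=\langle\bar\mu,\bar\mu\rangle$ is an exact identity with $\bar\mu$ in the convex hull of the weights, and Jensen replaces the spectral bound plus Cauchy--Schwarz.

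Both routes rest on the same maximal-torus conjugacy. The step you flag as your main obstacle is therefore exactly what the paper also assumes. Take $K$ to be the simply connected compact group with Lie algebra $\g_{\mathbb R}$, rather than the image group inside the unitary group of $V(\Lambda)$. This also avoids extra bookkeeping if $\pi$ is not faithful on some simple ideal in the semisimple case.

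\textbf{What each approach buys.} The paper's version gives a per-direction operator-norm bound $|\langle \mathrm{i}X\rangle_\psi|\leqslant\|\Lambda\|\,\|X\|$ without invoking equivariance. Yours is the moment-polytope picture, and it yields more. The map $\mu\mapsto\langle\mu,\mu\rangle$ is strictly convex, and equality in the extremal-weight lemma forces $\mu\in W\Lambda$. Hence $\|M_\psi\|^2=\langle\Lambda,\Lambda\rangle$ holds only if $\phi$ is a single extremal weight vector, i.e.\ only on the compact-group orbit of the highest-weight ray. This characterizes all minimizers of the uncertainty, whereas the paper only shows that these states attain the bound.
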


\begin{proof}
For an arbitrary element
\begin{align}
    X=\sum_a x_a Q_a\in \g_{\mathbb R} ,
\end{align}
its inner product (see Def.~\ref{def:Inner_product_alg}) with the moment element $M_{\psi}$ is
\begin{align}
    (M_{\psi},X)
    &=\left(\frac12\sum_a m_a(\psi) Q_a,\sum_b x_b Q_b \right) \\
    &= \sum_a x_a m_a(\psi)\\
    &= \sum_a x_a\langle \mathrm{i} Q_a\rangle_\psi\\
    &=\langle \mathrm{i}X\rangle_{\psi} .
\end{align}
Thus, with respect to the inner product $(\cdot,\cdot)$ in Def.~\ref{def:Inner_product_alg}, $M_{\psi}$ represents the real linear functional
\begin{align}
     M_{\psi}: X\mapsto \langle \mathrm{i}X\rangle_\psi
\end{align}
on $\g_{\mathbb R}$. 
This allows us to express the vector norm $\|M_{\psi}\|$ as the norm of the linear functional represented by $M_{\psi}$ as
(we still use the same symbol)
\begin{align}
\label{eq:moment_dual_norm}
    \|M_{\psi}\|
    \coloneqq
    \frac{1}{\sqrt2}
    \sup_{(X,X)=2}
    |\langle \mathrm{i}X\rangle_{\psi}| ,
\end{align}
where the supremum is over normalized anti-Hermitian operators $X\in\g_{\mathbb R}$ with $(X,X)=2$.
Physically, $\sqrt{2}\|M_{\psi}\|$ represents the largest abolute expectation value attainable by the quantum state $\psi$ over all observables $\mathrm{i}X$ in the Lie algebra subject to $(X,X)=2$.

It remains to bound $|\langle \mathrm{i}X\rangle_\psi|$ for elements $X\in\g_{\mathbb R}$ with $(X,X)=2$. 
Let $G$ be the connected, simply connected compact Lie group with Lie algebra $\mathfrak g_{\mathbb R}$, and let $\mathfrak t=\operatorname{span}_{\mathbb R}\{H_i\}=-\mathrm{i}\mathfrak h_{\mathbb R}$ be its compact Cartan subalgebra. 
Every  $X\in \g_{\mathbb R}$ is conjugate under $G$ to some $A\in \mathfrak{t}$:  there exist $g\in G$ and $A\in \mathfrak{t}$ such that 
\begin{align}\label{eq:Conjugation}
    X=\mathrm{Ad}_g A=gAg^{-1} .
\end{align}
The adjoint invariance of the inner product $(\cdot,\cdot)$ ensures that the conjugation in Eq.~\eqref{eq:Conjugation} preserves the norm, namely
\begin{align}\label{eq:A_X_2}
    (A,A)=(X,X)=2 .
\end{align}
It therefore suffices to bound the expectation of normalized elements of $\mathfrak{t}=-\mathrm{i}\mathfrak{h}_{\mathbb R}$ uniformly over all states in $V(\Lambda)$.
  
In the representation $V(\Lambda)$, the adjoint conjugation $X=\mathrm{Ad}_g A$ is implemented by the corresponding unitary action of the compact group:
\begin{align}\label{eq:pi_A_X_commutative}
    \pi(X)
    =
    U_g\,\pi(A)\,U_g^\dagger .
\end{align}
where $U_g$ denotes the unitary operator representing $g$ on $V(\Lambda)$. 
This equivalence (see Eq.~\eqref{eq:pi_A_X_commutative}) is captured by the following commutative diagram:
\begin{equation}
\begin{tikzcd}[row sep=huge, column sep=huge]
V(\Lambda) \arrow[r, "\pi(A)"] \arrow[d, "U_g"'] & 
V(\Lambda) \arrow[d, "U_g"] \\
V(\Lambda) \arrow[r, "\pi(X)"'] & 
V(\Lambda)
\end{tikzcd}
\end{equation}
Hence $\pi(X)$ and $\pi(A)$ are unitarily equivalent and, in particular, have the same spectrum.
Equivalently, by introducing the transformed state
\begin{align}
    \ket{\phi}
    \coloneqq
    U_g^\dagger \ket{\psi},
\end{align}
we obtain
\begin{align}
    \langle \mathrm{i}X\rangle_\psi
    =
    \bra{\psi}\mathrm{i}\pi(X)\ket{\psi}
    =
    \bra{\phi}\mathrm{i}\pi(A)\ket{\phi}
    =
    \langle \mathrm{i}A\rangle_\phi .
\end{align}
We can therefore use the Cartan spectrum to bound the expectation in the transformed state $\phi$. Since $A\in \mathfrak{t}$, it acts on $V_{\Lambda,\mu}$ by $\mu(A)$, and $\mathrm{i}A$ has eigenvalue $\mathrm{i}\mu(A)$. 
The spectral bound on expectation values gives
\begin{align}
\label{eq:spectral_bound}
    |\langle \mathrm{i}X\rangle_\psi|
    =
    |\langle \mathrm{i}A\rangle_\phi|
    \leqslant
    \max_{\mu\in\Omega(\Lambda)}
    |\mu(A)| .
\end{align}
Write
\begin{align}
A=\sum_{i=1}^r a_iH_i
\end{align}
in a Cartan basis normalized as $(H_i,H_j)=2\delta_{ij}$.
The constraint $(A,A)=2$ (see Eq.~\eqref{eq:A_X_2}) is then equivalent to
\begin{align}
    \sum_{i=1}^r a_i^2=1.
\end{align}
Therefore, for any weight $\mu$,
\begin{align} \label{eq:maximal_mu}
    |\mu(A)|^2
    =
    \left| \sum_{i=1}^r a_i\mu(H_i)\right|^2
    \leqslant
    \sum_{i=1}^r |\mu(H_i)|^2 .
\end{align}
The last step is the Cauchy-Schwarz inequality.
The right-hand side of Eq.~\eqref{eq:maximal_mu} is twice the squared weight norm.   
For the anti-Hermitian Cartan generators,
\begin{align}
    \frac12\sum_{i=1}^r |\mu(H_i)|^2 
    &=
    \frac12\sum_{i=1}^r \overline{\mu(H_i)}\,\mu(H_i) \\
    &=
    \frac12 \sum_{i=1}^r \mu(H_i^\dagger)\mu(H_i) \\
    &=
    -\frac12\sum_{i=1}^r \mu(H_i)\mu(H_i) \\
    &=
    \sum_{i=1}^r\mu(H_i)\mu(H^i) \\
    &=
    \langle\mu,\mu\rangle .
\end{align}
Hence
\begin{align}
    |\mu(A)|
      \leqslant 
    \sqrt{2\langle\mu,\mu\rangle}\, .
\end{align}
For an irreducible highest weight representation $V(\Lambda)$, every weight $\mu$ satisfies $\langle\mu,\mu\rangle\leqslant\langle\Lambda,\Lambda\rangle$.
Combining these gives the expectation value bound
\begin{align}
    |\langle \mathrm{i}X\rangle_\psi| 
    =
    |\langle \mathrm{i}A\rangle_\phi|
      \leqslant 
    \sqrt{2\langle\Lambda,\Lambda\rangle} .
\end{align}
Taking the supremum over all normalized Lie algebra elements $X$ in Eq.~\eqref{eq:moment_dual_norm} therefore yields
\begin{align}
    \|M_\psi\|
      \leqslant 
    \sqrt{\langle\Lambda,\Lambda\rangle} \,,
\end{align}
or equivalently,
\begin{align}
    \|M_\psi\|^2
      \leqslant 
    \langle\Lambda,\Lambda\rangle\, .
\end{align}
This establishes the desired bound on the moment element $M_{\psi}$ and completes the proof.
\end{proof}

We now extend the uncertainty relations derived in Subsec.~\ref{subsec:Quantum_Uncertainty_Weight_Spaces} for individual weight spaces $V_{\Lambda,\mu}$ to the full representation space $V(\Lambda)$, thereby establishing a state-independent bound valid for all quantum states.

\begin{mythm}{Lie Algebraic Uncertainty Relations}{Lie_Algebraic_Uncertainty_Relations}
Let $V(\Lambda)$ be an irreducible highest weight representation of the compact semisimple Lie algebra $\g_{\mathbb R}$.
The Lie algebraic uncertainty is bounded from below across the entire representation space: 
for every quantum state $\ket{\psi}\in V(\Lambda)$ 
\begin{align}
\label{eq:general_uncertainty_bound}
    \Delta_\psi^2(\g)
     \geqslant 2\langle\Lambda,\delta\rangle \,,
\end{align}
where $\delta$ denotes the Weyl vector (see Def.~\ref{def:Weyl_Vector}), defined as half the sum of all positive roots.
\end{mythm}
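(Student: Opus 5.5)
The proof will combine the two results just established, Lem.~\ref{lem:Quantum_Uncertainty_Moment_Element} and Lem.~\ref{lem:Moment_Norm_Bound}. No new estimate is needed beyond the moment-norm bound.

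Fix an arbitrary normalized state $\ket{\psi}\in V(\Lambda)$. First, I invoke Lem.~\ref{lem:Quantum_Uncertainty_Moment_Element}. It rests on Cor.~\ref{cor:Quantum_Uncertainty_Casimir_Element}, on the scalar action of the Casimir from Thm.~\ref{thm:Value_Casimir_Element}, and on the norm identity Eq.~\eqref{eq:moment_norm}. It gives
\begin{align}
    \Delta_\psi^2(\g)=\langle\Lambda,\Lambda+2\delta\rangle-\|M_\psi\|^2 .
\end{align}
Along the way I will check that the sign conventions match. The expectation values $\langle Q_a\rangle_\psi$ are purely imaginary, so $-\tfrac12\sum_a\langle Q_a\rangle_\psi^2=\tfrac12\sum_a m_a(\psi)^2\geqslant0$. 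Hence the correction term in Cor.~\ref{cor:Quantum_Uncertainty_Casimir_Element} is exactly $-\|M_\psi\|^2$.

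Second, I apply Lem.~\ref{lem:Moment_Norm_Bound}, which gives $\|M_\psi\|^2\leqslant\langle\Lambda,\Lambda\rangle$. Substituting, and using bilinearity $\langle\Lambda,\Lambda+2\delta\rangle=\langle\Lambda,\Lambda\rangle+2\langle\Lambda,\delta\rangle$, yields $\Delta_\psi^2(\g)\geqslant2\langle\Lambda,\delta\rangle$ for every $\psi$. This is Eq.~\eqref{eq:general_uncertainty_bound}.

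For completeness I would also note that the bound is attained. On the highest-weight vector $\ket{\psi_\Lambda}$, Cor.~\ref{cor:Weight_Bound} with $\mu=\Lambda$ gives equality. Equivalently, $\|M_{\psi_\Lambda}\|^2=\langle\Lambda,\Lambda\rangle$ by Thm.~\ref{thm:Lie_Algebraic_Uncertainty_Identity_Weight}, because the root expectations vanish there by Eq.~\eqref{eq:weight_vanish}. For semisimple $\g$ the same argument applies verbatim, since $\kappa$ and the Weyl vector decompose over the simple ideals.

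The only substantive obstacle lies inside Lem.~\ref{lem:Moment_Norm_Bound}, which I take as given. It passes from a general $X\in\g_{\mathbb R}$ to a Cartan element via $\mathrm{Ad}$-conjugation, and then uses Lem.~\ref{lem:Extremal_Characterization_Highest_Weight} to bound every weight norm by $\langle\Lambda,\Lambda\rangle$. Once that lemma is accepted, the theorem is a two-line consequence. The step I would double-check is the dual-norm identification $\|M_\psi\|=\tfrac{1}{\sqrt2}\sup_{(X,X)=2}|\langle\mathrm{i}X\rangle_\psi|$, which follows from Cauchy--Schwarz with the normalization $(Q_a,Q_b)=2\delta_{ab}$.
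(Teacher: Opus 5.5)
Your proposal is correct and is the same as the paper's proof: substitute the moment-norm bound $\|M_\psi\|^2\leqslant\langle\Lambda,\Lambda\rangle$ of Lem.~\ref{lem:Moment_Norm_Bound} into the identity $\Delta_\psi^2(\g)=\langle\Lambda,\Lambda+2\delta\rangle-\|M_\psi\|^2$ of Lem.~\ref{lem:Quantum_Uncertainty_Moment_Element}. Your remarks on the sign convention, the dual-norm identification and attainment at the highest-weight vector are accurate, but the statement does not require them.
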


\begin{proof}
Substituting the moment-norm estimate of Eq.~\eqref{eq:moment_norm_bound} into the uncertainty identity Eq.~\eqref{eq:variance_moment}, one finds
\begin{align}
    \Delta_\psi^2(\g)
    = 
    \langle\Lambda,\Lambda+2\delta\rangle
    - \|M_{\psi}\|^2 \geqslant
    \langle\Lambda,\Lambda+2\delta\rangle
    - \langle\Lambda,\Lambda\rangle
    =
    2\langle\Lambda,\delta\rangle .
\end{align}
The bound is independent of $\psi$, and therefore holds uniformly over the full representation space $V(\Lambda)$, completing the proof.
\end{proof}
The same lower bound holds for mixed states.
For $\rho=\sum_jp_j|\psi_j\rangle\langle\psi_j|$, variance concavity gives $\Delta_\rho^2(\g)\geqslant\sum_jp_j\Delta_{\psi_j}^2(\g)\geqslant2\langle\Lambda,\delta\rangle$. 
Here expectations and variances are evaluated using $\Tr[\rho\, \cdot]$.

Before closing this subsection, we show how the Lie algebraic uncertainty relation of Def.~\ref{def:Lie_Algebraic_Uncertainty_Relations} recovers the familiar angular momentum uncertainty relation as its simplest instance.
Consider the Lie algebra of type $A_1$, namely $\mathfrak{sl}(2,\mathbb C)$; 
see Tab.~\ref{tab:Simple_Lie_Algebra}.
Its compact real form is $\mathfrak{su}(2)$, generated by the anti-Hermitian operators $\{T_x, T_y, T_z\}$ introduced in Eq.~\eqref{eq:T_xyz}:
\begin{align}
    T_x=-\frac{\mathrm{i}}{2}\sigma_x=-\mathrm{i} J_x ,
    \quad
    T_y=-\frac{\mathrm{i}}{2}\sigma_y=-\mathrm{i} J_y ,
    \quad
    T_z=-\frac{\mathrm{i}}{2}\sigma_z=-\mathrm{i} J_z .
\end{align}
These generators follow the standard physical normalization of angular momentum operators. 
In the present work, however, the Lie algebraic quantum uncertainty (see Def.~\ref{def:Lie_Algebraic_Uncertainty_Relations}) is defined using generators normalized by the inner product $(\cdot, \cdot)$ (see Def.~\ref{def:Inner_product_alg}).
We therefore introduce the rescaled anti-Hermitian generators $\{Q_x, Q_y, Q_z\}$:
\begin{align}\label{eq:su2_anti_hermitian_generators}
    Q_x\coloneqq 2T_x=-2\mathrm{i}J_x ,
    \quad
    Q_y\coloneqq 2T_y=-2\mathrm{i}J_y ,
    \quad
    Q_z\coloneqq 2T_z=-2\mathrm{i}J_z ,
\end{align}
which satisfy the normalization condition $(Q_a,Q_b)=2\delta_{ab}$ in Eq.~\eqref{eq:normalization_Q}.
Under this normalization, the general Lie algebraic quantum uncertainty of Def.~\ref{def:Lie_Algebraic_Uncertainty_Relations} reduces exactly to the familiar angular momentum uncertainty relation,
\begin{align}
    \Delta_{\psi}^2(\mathfrak{su}(2))
    &=\frac{1}{2}\left(\Delta_{\psi}^2(Q_z)
    +
   \Delta_{\psi}^2(Q_x)
    +
    \Delta_{\psi}^2(Q_y)\right)  \\
    &=
    2\left(
    \underbrace{\Delta_{\psi}^2(-\mathrm{i} J_z)}_{\text{Cartan Sector}}
    +
    \underbrace{\Delta_{\psi}^2(-\mathrm{i} J_x)
    +
    \Delta_{\psi}^2(-\mathrm{i} J_y)}_{\text{Root Sector}}
    \right)    \label{eq:su_2_anti_Hermitian_basis}\\
    &=
   2\left(
    \Delta_{\psi}^2(\mathrm{i}\cdot(-\mathrm{i} J_z))
    +
    \Delta_{\psi}^2(\mathrm{i}\cdot(-\mathrm{i} J_x))
    +
    \Delta_{\psi}^2(\mathrm{i}\cdot(-\mathrm{i} J_y))
    \right)\label{eq:su_2_Hermitian_basis}\\
    &=
   2\left(
    \Delta_{\psi}^2(J_z)
    +
    \Delta_{\psi}^2(J_x)
    +
    \Delta_{\psi}^2(J_y)
    \right) ,
\end{align}
where, in going from the anti-Hermitian formulation in Eq.~\eqref{eq:su_2_anti_Hermitian_basis} to the Hermitian observable formulation in Eq.~\eqref{eq:su_2_Hermitian_basis}, we have used the Hermitianization identity established in Lem.~\ref{lem:Variance_Equivalence_Under_Hermitianization}.
This analysis shows that, up to the scalar factor $2$, angular momentum uncertainty in finite-dimensional quantum systems is determined by the Lie algebraic structure of $\mathfrak{sl}(2,\mathbb C)$.

We further illustrate the $\mathfrak{sl}(2,\mathbb C)$ case by evaluating explicitly the right-hand side of Eq.~\eqref{eq:variance_moment}, namely $\langle\Lambda,\Lambda+2\delta\rangle-\|M_{\psi}\|^2$.
This calculation exposes the precise decomposition of the Lie algebraic quantum uncertainty into a Casimir term, fixed by the representation, and a moment term, fixed by the state.
Recall that $\mathfrak{sl}(2,\mathbb C)$ has a single positive root, $\Phi^+=\{\alpha\}$, as in Eq.~\eqref{eq:sl_2_root_system}, with the standard normalization $\langle \alpha, \alpha\rangle=2$ introduced in Def.~\ref{def:Inner_Product_Root_Space}. 
Its Weyl vector is therefore $\delta=\alpha/2$, according to Def.~\ref{def:Weyl_Vector}. 
For the spin-$j$ representation $V_j$ of $\mathfrak{su}(2)$, the highest weight is
\begin{align}
    \Lambda=j\alpha.
\end{align} 
Accordingly, the Casimir term entering Eq.~\eqref{eq:variance_moment} takes the explicit form
\begin{align}
    \langle C_2\rangle_{\psi}
    =
    \langle\Lambda,\Lambda+2\delta\rangle
    =
    \left\langle j\alpha,j\alpha+\alpha \right\rangle
    =
    j(j+1)\langle\alpha,\alpha\rangle
    = 
    2j(j+1).
\end{align}
Thus, within the spin-$j$ representation, the Casimir term is fixed entirely by the representation label $j$ and is independent of the particular quantum state $\ket{\psi}$.

Before turning to the general moment contribution $\|M_{\psi}\|^2$ (see Eq.~\eqref{eq:variance_moment}), we first consider the case in which the relevant weight space is fixed (see Thm.~\ref{thm:Lie_Algebraic_Uncertainty_Identity_Weight}). 
Physically, this corresponds to knowing the magnetic quantum number of the state.
Let $|\psi_{\mu_m}\rangle$ be a state supported on the weight space $V_{j,m}$ defined in Eq.~\eqref{eq:V_j_m}, with $m\in\{-j,-j+1,\ldots,j\}$.
The associated weight is then $\mu_m=m\alpha$, and thus we have
\begin{align}
    \langle\mu_m,\mu_m\rangle
    =
    m^2 \langle\alpha,\alpha\rangle
    =
    2m^2.
\end{align}
At fixed weight, the total uncertainty is determined by an exact identity: 
the quantum uncertainty of the three angular momentum components is completely determined by the spin-$j$ and the magnetic quantum number $m$ (see Thm.~\ref{thm:Lie_Algebraic_Uncertainty_Identity_Weight}), as follows

\begin{mycor}{Uncertainty Identity for Spin-$j$ Uncertainty at Fixed Magnetic Quantum Number}{Conservation_Law_Magnetic_Quantum_Number}
In the spin-$j$ representation of $\mathfrak{su}(2)$, the total quantum uncertainty of a weight state $|\psi_{\mu_m}\rangle\in V_{j,m}$ is conserved at fixed magnetic quantum number $m$, and Eq.~\eqref{eq:decomp_weight_final} reduces to 
\begin{align}
    \Delta^2_{\psi_{\mu_m}}(\mathfrak{su}(2))
    =
    \underbrace{2j(j+1)}_{\text{Highest-Weight-Dependent}}
    -
    \underbrace{2m^2}_{\text{Weight-Dependent}},
\end{align}
Equivalently, in terms of the physical angular momentum observables, one obtains 
\begin{align}\label{eq:magnetic_uncertainty}
     \Delta_{\psi_{\mu_m}}^2(J_x) + \Delta_{\psi_{\mu_m}}^2(J_y) + \Delta_{\psi_{\mu_m}}^2(J_z)
    =
    \frac{1}{2}\Delta^2_{\psi_{\mu_m}}(\mathfrak{su}(2))
    =
    \underbrace{j(j+1)}_{\text{Highest-Weight-Dependent}}
    -
    \underbrace{m^2}_{\text{Weight-Dependent}}.
\end{align}
Thus, within each fixed weight sector, the total angular momentum uncertainty is fixed by the magnetic quantum number $m$. 
It attains its minimum at the extremal weights $m=\pm j$.
\end{mycor}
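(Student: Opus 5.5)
The corollary is the $\mathfrak{su}(2)$ specialization of Thm.~\ref{thm:Lie_Algebraic_Uncertainty_Identity_Weight}, so I would argue by direct substitution. There are three steps.

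\textbf{Step 1: fix the data.} For $\mathfrak{sl}(2,\mathbb C)$ we have $\Phi^+=\{\alpha\}$ with $\langle\alpha,\alpha\rangle=2$ and $\delta=\alpha/2$. The spin-$j$ representation has highest weight $\Lambda=j\alpha$. This is consistent with the Chevalley label $\Lambda(h)=2j$, since Eq.~\eqref{eq:root_coroot_pairing} gives $(j\alpha)(h_\alpha)=2j$. A state in $V_{j,m}$ satisfies $h\ket{j,m}=2m\ket{j,m}$, so its weight is $\mu_m=m\alpha$.

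\textbf{Step 2: evaluate the two terms.} By Thm.~\ref{thm:Value_Casimir_Element},
\begin{align}
\langle C_2\rangle_{\psi_{\mu_m}}=\langle j\alpha,(j+1)\alpha\rangle=2j(j+1).
\end{align}
Bilinearity of the form in Def.~\ref{def:Inner_Product_Root_Space} gives $\langle\mu_m,\mu_m\rangle=2m^2$. Substituting both into Eq.~\eqref{eq:decomp_weight_final} yields the first identity,
\begin{align}
\Delta^2_{\psi_{\mu_m}}(\mathfrak{su}(2))=2j(j+1)-2m^2 .
\end{align}
It is independent of the particular vector in $V_{j,m}$; here $V_{j,m}$ is one-dimensional, so the state is $\ket{j,m}$ up to phase.

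\textbf{Step 3: pass to the physical observables.} I would use the reduction already carried out above the corollary, where the normalized generators $Q_a=-2\mathrm{i}J_a$ give $\Delta^2_\psi(\mathfrak{su}(2))=2\sum_a\Delta^2_\psi(J_a)$. This relies on Lem.~\ref{lem:Variance_Equivalence_Under_Hermitianization}. One must check that $\{Q_x,Q_y,Q_z\}$ is an admissible basis in Def.~\ref{def:Lie_Algebraic_Uncertainty_Relations}, that is, orthogonal with $(Q_a,Q_a)=2$. This holds because $\{Q_z,Q_x,Q_y\}$ coincides, up to sign, with the anti-Hermitian basis $\{H,E_\alpha,F_\alpha\}$ constructed for $\mathfrak{su}(2)$ earlier, and sign changes do not affect variances. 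Dividing the identity from Step 2 by two gives $\sum_a\Delta^2(J_a)=j(j+1)-m^2$.

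\textbf{Minimum and expected obstacle.} The right-hand side is minimized over $m\in\{-j,\ldots,j\}$ exactly at $m=\pm j$, which agrees with Cor.~\ref{cor:Weight_Bound}. The only potential obstacle is normalization bookkeeping: confirming that $\langle\alpha,\alpha\rangle=2$ under the normalized form $\kappa$ with $h^\vee=2$, and that the weight labels match.

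As a sanity check, I would compute directly. For $\ket{j,m}$, $\Delta^2 J_z=0$ and $\Delta^2J_x+\Delta^2J_y=\langle J^2-J_z^2\rangle=j(j+1)-m^2$, which matches the result.
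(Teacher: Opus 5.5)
Your proposal is correct and follows essentially the same route as the paper. You fix $\Lambda=j\alpha$, $\delta=\alpha/2$, $\langle\alpha,\alpha\rangle=2$ and $\mu_m=m\alpha$, substitute $\langle C_2\rangle=2j(j+1)$ and $\langle\mu_m,\mu_m\rangle=2m^2$ into Eq.~\eqref{eq:decomp_weight_final}, and then pass to $\sum_a\Delta^2(J_a)$ via $Q_a=-2\mathrm{i}J_a$ and Lem.~\ref{lem:Variance_Equivalence_Under_Hermitianization}; your admissibility check even holds exactly rather than up to sign ($H=Q_z$, $E_\alpha=Q_x$, $F_\alpha=Q_y$), and your direct check $\langle J^2-J_z^2\rangle=j(j+1)-m^2$ is a useful independent confirmation.
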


Returning to the identity in Eq.~\eqref{eq:variance_moment}, all state dependence is isolated in the moment element $M_{\psi}$ introduced in Def.~\ref{def:Moment_Element}. 
Its norm is characterized by the dual formulation in Eq.~\eqref{eq:moment_dual_norm}. To estimate this norm, we expand an arbitrary element of $\mathfrak{su}(2)$ in the normalized generator basis $\{Q_a\}_{a\in\{x,y,z\}}$ as
\begin{align}
    X
    =
    \sum_{k=x,y,z}
    x_kQ_k,
\end{align}
with $\sum_k x_k^2=1$.
Since $\mathrm{i}Q_z=2J_z$, the spectrum of $\mathrm{i}Q_z$ is $\{-2j,-2j+2,\ldots,2j-2,2j\}$.
For every state $\psi$, we now have
\begin{align}
    |\langle \mathrm{i}X\rangle_\psi|
    \leqslant
    2j,
\end{align}
and hence
\begin{align}
    \|M_\psi\|
    =
    \frac1{\sqrt2} \sup_{(X,X)=2}
    |\langle \mathrm{i}X\rangle_\psi|
    \leqslant
    \frac1{\sqrt2}\cdot  2j=\sqrt2 j.
\end{align}
Consequently,
\begin{align}
    \|M_\psi\|^2
    \leqslant
    2j^2=\langle\Lambda,\Lambda\rangle.
\end{align}
In this case, the right-hand side of Eq.~\eqref{eq:variance_moment} evaluates explicitly as
\begin{align}
    \langle\Lambda,\Lambda+2\delta\rangle
    - \|M_{\psi}\|^2
    \geqslant
    2j(j+1)-2j^2
    =
    2j.
\end{align}
This recovers the angular momentum uncertainty relation within the Lie algebraic framework.

\begin{mycor}{Angular Momentum Uncertainty Relation}{Angular_Momentum_Uncertainty_Relation}
In the spin-$j$ representation of $\mathfrak{su}(2)$, Eq.~\eqref{eq:variance_moment} gives the Lie algebraic uncertainty of a quantum state $\ket{\psi}\in V_{j}$ as
\begin{align}
    \Delta^2_{\psi}(\mathfrak{su}(2))
    \geqslant
    2j\,,
\end{align}
Equivalently, when expressed in terms of the physical angular momentum, this becomes 
\begin{align}\label{eq:angular_momentum_uncertainty}
     \Delta_{\psi}^2(J_x) + 
     \Delta_{\psi}^2(J_y) + 
     \Delta_{\psi}^2(J_z)
    =
    \frac{1}{2}\Delta^2_{\psi}(\mathfrak{su}(2))
    \geqslant
    j\,.
\end{align}
Thus, the optimal lower bound on the total angular momentum uncertainty is completely determined by the spin quantum number $j$.

\end{mycor}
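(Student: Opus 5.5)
This corollary is a direct specialization of Thm.~\ref{thm:Lie_Algebraic_Uncertainty_Relations}, so I will substitute the $A_1$ data into the general bound rather than argue afresh. For $\mathfrak{sl}(2,\mathbb C)$ there is a single positive root $\alpha$ with $\langle\alpha,\alpha\rangle=2$. The Weyl vector is $\delta=\alpha/2$, and the spin-$j$ representation has highest weight $\Lambda=j\alpha$. The general bound $2\langle\Lambda,\delta\rangle$ then evaluates to $2\langle j\alpha,\alpha/2\rangle=j\langle\alpha,\alpha\rangle=2j$, which gives $\Delta^2_\psi(\mathfrak{su}(2))\geqslant 2j$. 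The same value also follows from the explicit computation just above: the Casimir term is $\langle\Lambda,\Lambda+2\delta\rangle=2j(j+1)$, and the moment estimate gives $\|M_\psi\|^2\leqslant 2j^2$. I would present this route via Lem.~\ref{lem:Quantum_Uncertainty_Moment_Element}, since it keeps the spin-$j$ spectrum of $\mathrm{i}Q_z=2J_z$ visible.

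The second step converts the result into angular-momentum variances. With the rescaled generators $Q_k=-2\mathrm{i}J_k$ of Eq.~\eqref{eq:su2_anti_hermitian_generators}, Def.~\ref{def:Lie_Algebraic_Uncertainty_Relations} gives $\Delta^2_\psi(\mathfrak{su}(2))=\tfrac12\sum_k\Delta^2_\psi(Q_k)$. Lem.~\ref{lem:Variance_Equivalence_Under_Hermitianization} turns each term into $\Delta^2_\psi(2J_k)=4\Delta^2_\psi(J_k)$. The sum is therefore $2\sum_k\Delta^2_\psi(J_k)$, and dividing the bound by $2$ yields $\sum_k\Delta^2_\psi(J_k)\geqslant j$.

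The main point requiring care is the word ``optimal'': the lower bound must be attained. For this I take $\ket{\psi}=\ket{j,j}$, a weight state with $\mu=j\alpha$, which lies in the Weyl orbit of $\Lambda$. By Cor.~\ref{cor:Conservation_Law_Magnetic_Quantum_Number}, its uncertainty is $2j(j+1)-2j^2=2j$, so equality holds. Equivalently, $\|M_\psi\|^2=2j^2$ there, because $\langle 2J_z\rangle=2j$ saturates the spectral bound. The remaining bookkeeping is the normalization check $(Q_a,Q_b)=2\delta_{ab}$, i.e.\ that $\langle\alpha,\alpha\rangle=2$ matches the normalized Killing form with $h^\vee=2$; this is already established in the text, so I will simply cite it.
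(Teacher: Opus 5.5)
Your proposal is correct and follows essentially the paper's route. The paper evaluates $\langle\Lambda,\Lambda+2\delta\rangle=2j(j+1)$, bounds $\|M_\psi\|^2\leqslant 2j^2$ through the spectrum of $\mathrm{i}Q_z=2J_z$, and rewrites the uncertainty as $2\sum_k\Delta^2_\psi(J_k)$ via $Q_k=-2\mathrm{i}J_k$ and the Hermitianization lemma, exactly as you do; your explicit attainment check at $\ket{j,j}$ using the fixed-weight identity is a useful addition, since the paper leaves optimality implicit in the preceding weight-space corollary.
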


Up to this point, we have seen that our general framework for Lie algebraic uncertainty relations, i.e., Eq.~\eqref{eq:general_uncertainty_bound}, recovers the angular momentum uncertainty relation (see Eq.~\eqref{eq:angular_momentum_uncertainty}) as the special case associated with representations of $\mathfrak{su}(2)$. 
Moreover, when the magnetic quantum number $m$ is specified, or equivalently the weight within the representation, the total uncertainty sum becomes a constant determined solely by the total spin and the magnetic quantum number (see Eq.~\eqref{eq:magnetic_uncertainty}). 

To gain further intuition and to illustrate how the general result extends beyond this familiar setting, we now turn to the next fundamental example: 
$\mathfrak{su}(3)$ (see Tab.~\ref{tab:Simple_Lie_Algebra}). This algebra occupies a central place in particle physics, both in the flavor classification of hadrons and in the color symmetry underlying the modern description of strong interactions.
Since $\mathfrak{su}(3)$ has rank two,  the real span of its weights is two-dimensional, with basis  $\omega_1$ and $\omega_2$ (see Def.~\ref{def:Fundamental_Weights}). 
Thus, for any irreducible representation, the highest weight $\Lambda$ can be expressed as
\begin{align}\label{eq:su_3_Highest_Weight}
    \Lambda=
    \lambda_1\omega_1+\lambda_2\omega_2.
\end{align}
Likewise, any weight $\mu$ appearing in this representation admits an expansion in the same basis,
\begin{align}\label{eq:su_3_Weight}
    \mu =m_1\omega_1+m_2\omega_2.
\end{align}

The discussion below uses the notion of fundamental weights, which is introduced formally only later in Def.~\ref{def:Fundamental_Weights} of Subsec.~\ref{subsec:Dynkin_Cartan}.
We nevertheless present the analysis at this point because the optimal bound for the $\mathfrak{su}(3)$ Lie algebraic uncertainty relation is most naturally understood as the next step beyond the $\mathfrak{su}(2)$ angular momentum case (see Cors.~\ref{cor:Conservation_Law_Magnetic_Quantum_Number} and~\ref{cor:Angular_Momentum_Uncertainty_Relation}).
The derivation itself does not require a detailed representation-theoretic understanding of fundamental weights. 
At this stage, they may be regarded simply as a convenient coordinate system for labeling highest weights. 
Their conceptual role will be developed in the later subsection, and readers may find it useful to revisit the present analysis after that discussion.

At fixed representation and weight, the Lie algebraic uncertainty is determined by the weight-space identity. 
This is summarized in the following corollary.

\begin{mycor}{Uncertainty Identity for $\mathfrak{su}(3)$ Uncertainty in Weight Spaces}{su3_Weight_Space_Uncertainty}
Let $V(\Lambda)$ be an irreducible $\mathfrak{su}(3)$ representation with highest weight $\Lambda$, and let $\mu$ denote a weight. 
With the expansions of $\Lambda$ and $\mu$ in the fundamental weight basis specified in Eqs.~\eqref{eq:su_3_Highest_Weight} and~\eqref{eq:su_3_Weight}, the corresponding Lie algebraic uncertainty relation on $V_{\Lambda, \mu}$ space reads
\begin{align}
\label{eq:su3_weight_conservation}
    \Delta_{\psi_\mu}^2(\mathfrak{su}(3))
    =
    \underbrace{
    \frac{2}{3}
    \left(\lambda_1^2+\lambda_1\lambda_2+\lambda_2^2 \right)
     +
    2(\lambda_1+\lambda_2)
    }_{\text{Highest-Weight-Dependent}}
    -
    \underbrace{
    \frac{2}{3} \left(m_1^2+m_1m_2+m_2^2\right)
    }_{\text{Weight-Dependent}}.
\end{align}
Thus, for a fixed highest weight $\Lambda$, the Lie algebraic uncertainty in each $\mathfrak{su}(3)$ weight sector $V_{\Lambda, \mu}$ is specified entirely by the two weight labels $m_1$ and $m_2$ (see Eq.~\eqref{eq:su_3_Weight}).
\end{mycor}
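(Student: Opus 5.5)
Specializing Thm.~\ref{thm:Lie_Algebraic_Uncertainty_Identity_Weight} to $\mathfrak{su}(3)$ gives $\Delta_{\psi_\mu}^2(\mathfrak{su}(3))=\langle\Lambda,\Lambda+2\delta\rangle-\langle\mu,\mu\rangle$. Thm.~\ref{thm:Value_Casimir_Element} has already replaced $\langle C_2\rangle_{\psi_\mu}$ by $\langle\Lambda,\Lambda+2\delta\rangle$. The proof therefore reduces to computing two quadratic forms on the weight lattice of $A_2$ in the fundamental-weight coordinates of Eqs.~\eqref{eq:su_3_Highest_Weight} and~\eqref{eq:su_3_Weight}.

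First I will fix the geometry. Both simple roots of $A_2$ are long, so $\langle\alpha_1,\alpha_1\rangle=\langle\alpha_2,\alpha_2\rangle=2$ and $\langle\alpha_1,\alpha_2\rangle=-1$ in the normalization of Eq.~\eqref{eq:normal_Killing}. The duality $\langle\omega_i,\alpha_j^\vee\rangle=\delta_{ij}$ with $\alpha_j^\vee=\alpha_j$ gives $\alpha_1=2\omega_1-\omega_2$ and $\alpha_2=-\omega_1+2\omega_2$. Inverting the Cartan matrix then yields the Gram matrix of the fundamental weights: $\langle\omega_1,\omega_1\rangle=\langle\omega_2,\omega_2\rangle=2/3$ and $\langle\omega_1,\omega_2\rangle=1/3$, which agrees with the values quoted in the main text.

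It follows that $\langle\mu,\mu\rangle=\tfrac23(m_1^2+m_1m_2+m_2^2)$, and the same formula gives $\langle\Lambda,\Lambda\rangle=\tfrac23(\lambda_1^2+\lambda_1\lambda_2+\lambda_2^2)$. For the Weyl-vector term, $\delta=\tfrac12(\alpha_1+\alpha_2+(\alpha_1+\alpha_2))=\alpha_1+\alpha_2=\omega_1+\omega_2$. Hence $2\langle\Lambda,\delta\rangle=2\bigl(\lambda_1(\tfrac23+\tfrac13)+\lambda_2(\tfrac13+\tfrac23)\bigr)=2(\lambda_1+\lambda_2)$. Substituting these expressions gives Eq.~\eqref{eq:su3_weight_conservation}.

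The identity that $\delta$ equals the sum of the fundamental weights can be cross-checked through $\langle\delta,\alpha_i^\vee\rangle=1$. The only real care needed is the normalization: long roots must have squared length $2$ under the normalized form. As a check, the defining representation $\Lambda=\omega_1$ with $\mu=\omega_1$ gives $\tfrac23+2-\tfrac23=2$, matching the value $2$ stated in the main text. With that confirmed, the remaining steps are routine algebra.
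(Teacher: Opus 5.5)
Your proposal is correct and matches the paper's proof. Both reduce the claim, via the weight-space identity and the Casimir eigenvalue $\langle\Lambda,\Lambda+2\delta\rangle$, to the $A_2$ fundamental-weight Gram matrix ($\langle\omega_i,\omega_i\rangle=2/3$, $\langle\omega_1,\omega_2\rangle=1/3$) together with $\delta=\alpha_1+\alpha_2=\omega_1+\omega_2$, and your computations of $\langle\mu,\mu\rangle$ and $2\langle\Lambda,\delta\rangle=2(\lambda_1+\lambda_2)$ are right; the paper writes out $\omega_1=\tfrac23\alpha_1+\tfrac13\alpha_2$ and $\omega_2=\tfrac13\alpha_1+\tfrac23\alpha_2$ explicitly, which is the same Cartan-matrix inversion you describe.
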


\begin{proof}
In the case of $\mathfrak{su}(3)$, the associated root system is of type $A_2$, with the corresponding Dynkin diagram shown below. 
\begin{equation}\label{eq:DynkinAtwonumber}
    \raisebox{-2.8ex}{\includegraphics[scale=1]{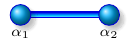}}\, .
\end{equation}
Denote the two simple roots by $\alpha_1$ and $\alpha_2$, as shown in Eq.~\eqref{eq:DynkinAtwonumber}, and choose the normalization 
\begin{align}\label{eq:}
    \langle\alpha_1,\alpha_1\rangle
    =
    \langle\alpha_2,\alpha_2\rangle
    =
    2.
\end{align}
With this convention, the inner product between distinct simple roots is given by
\begin{align}
    \langle\alpha_1,\alpha_2\rangle
    =
    -1.
\end{align}

The fundamental weights (see Def.~\ref{def:Fundamental_Weights}) are defined as the basis dual to the simple coroots. 
For $\mathfrak{su}(3)$, they are denoted by $\omega_1$, $\omega_2$, and satisfy the duality relations
\begin{align}
    \frac{
    2\langle\omega_i,\alpha_j\rangle
    }{
    \langle\alpha_j,\alpha_j\rangle
    }
    =
    \delta_{ij},
    \quad
    i,j\in\{1,2\}.
\end{align}
Under the normalization
\begin{align}
    \langle\alpha_j,\alpha_j\rangle=2,
\end{align}
the defining duality relations reduce to
\begin{align}
    \langle\omega_i,\alpha_j\rangle
    =
    \delta_{ij}.
\end{align}
To express the simple roots in the basis of fundamental weights, write
\begin{align}
    \alpha_1
    =
    a_{11}\omega_1+a_{12}\omega_2.
\end{align}
Pairing both sides with $\alpha_1$ and $\alpha_2$ then gives
\begin{align}
    a_{11}
    &=
    \langle\alpha_1,\alpha_1\rangle
    =
    2,\\
    a_{12}
    &=
    \langle\alpha_1,\alpha_2\rangle
    =
    -1.
\end{align}
Thus the first simple root $\alpha_1$ (see Eq.~\eqref{eq:DynkinAtwonumber}) has the expansion
\begin{align}
\label{eq:alpha1}
\alpha_1=2\omega_1-\omega_2,
\end{align}
in the basis of fundamental weights $\omega_1$ and $\omega_2$.

Likewise, expanding the second simple root $\alpha_2$ (see Eq.~\eqref{eq:DynkinAtwonumber}) as
\begin{align}
    \alpha_2
    =
    a_{21}\omega_1+a_{22}\omega_2,
\end{align}
and pairing with the two simple roots gives
\begin{align}
    a_{21}
    &=
    \langle\alpha_2,\alpha_1\rangle
    =-1,
    \\
    a_{22}
    &=
    \langle\alpha_2,\alpha_2\rangle
    =2.
\end{align}
Thus the second simple root $\alpha_2$ (see Eq.~\eqref{eq:DynkinAtwonumber}) has the expansion
\begin{align}
\label{eq:alpha2}
\alpha_2=-\omega_1+2\omega_2.
\end{align}

For $\mathfrak{su}(3)$, the positive roots are
\begin{align}
    \Phi^+
    =
    \{
    \alpha_1,
    \alpha_2,
    \alpha_1+\alpha_2
    \}.
\end{align}
The Weyl vector $\delta$ (see Def.~\ref{def:Weyl_Vector}) is therefore given by the half-sum of positive roots,
\begin{align}
    \delta
    =
    \frac12
    \sum_{\alpha\in\Phi^+}\alpha
    =
    \frac12
    \left(
    \alpha_1+\alpha_2+\left(\alpha_1+\alpha_2\right)
    \right)
    =
    \alpha_1+\alpha_2.
\end{align}
Using the expansions of the simple roots in the fundamental weights (see Eqs.~\eqref{eq:alpha1} and~\eqref{eq:alpha2}),
\begin{align}
\label{eq:alpha3}
    \alpha_1+\alpha_2
    =
    (2\omega_1-\omega_2)+(-\omega_1+2\omega_2)
    =
    \omega_1+\omega_2.
\end{align}
Hence the Weyl vector $\delta$ takes the simple form
\begin{align}
    \delta
    =
    \omega_1+\omega_2.
\end{align}

It remains to evaluate the inner products between the fundamental weights. 
The relations in Eqs.~\eqref{eq:alpha1} and \eqref{eq:alpha2} may be inverted to express the fundamental weights in the simple root basis:
\begin{align}
    \omega_1
    &=
    \frac23\alpha_1+\frac13\alpha_2,
    \\
    \omega_2
    &=
    \frac13\alpha_1+\frac23\alpha_2.
\end{align}
Using the root inner products, one obtains
\begin{align}
    \langle\omega_1,\omega_1\rangle
    =
    \langle\omega_2,\omega_2\rangle
    =
    \frac23,
\end{align}
and
\begin{align}
    \langle\omega_1,\omega_2\rangle
    =
    \frac13.
\end{align}

With these ingredients in place, the uncertainty identity in Thm.~\ref{thm:Lie_Algebraic_Uncertainty_Identity_Weight} can now be evaluated explicitly. 
The relevant quantity is $\langle C_2\rangle_{\psi_\mu}-\langle\mu,\mu\rangle$ (see Eq.~\eqref{eq:decomp_weight_final}).
The first term is fixed by the quadratic Casimir element $C_2$ of $\g$. Here $\g=\mathfrak{sl}(3,\mathbb C)$ with compact real form $\mathfrak{su}(3)$ (see Def.~\ref{def:Quadratic_Casimir_Element}). 
Let the highest weight $\Lambda$ of the irreducible representation be
\begin{align}
    \Lambda
    =
    \lambda_1\omega_1+\lambda_2\omega_2.
\end{align}
Then the quadratic Casimir element $C_2$ acts on $V(\Lambda)$ as multiplication by the scalar
\begin{align}
\label{eq:su3_casimir_value}
    \langle C_2\rangle\bigg|_{V(\Lambda)}
    =
    \langle\Lambda,\Lambda\rangle
    +
    2\langle\Lambda,\delta\rangle
    =
    \frac23
    \left(
    \lambda_1^2+\lambda_1\lambda_2+\lambda_2^2
    \right)
    +
    2(\lambda_1+\lambda_2).
\end{align}
Similarly, for a weight
\begin{align}
    \mu
    =
    m_1\omega_1+m_2\omega_2,
\end{align}
the norm is
\begin{align}
\label{eq:su3_weight_norm_explicit}
    \langle\mu,\mu\rangle
    =
    \frac23
    \left(
    m_1^2+m_1m_2+m_2^2
    \right).
\end{align}
Substituting Eqs.~\eqref{eq:su3_casimir_value}
and~\eqref{eq:su3_weight_norm_explicit} into Thm.~\ref{thm:Lie_Algebraic_Uncertainty_Identity_Weight} yields Eq.~\eqref{eq:su3_weight_conservation}, completing the proof.
\end{proof}

The corollary can now be evaluated in several standard classes of Lie algebraic representations. 
The first example is the defining representation (see Subsec.~\ref{subsec:Lie_Algebra}) of $\mathfrak{su}(3)$. 
In this representation, the highest weight is
\begin{align}
    \Lambda=\omega_1,
\end{align}
or equivalently
\begin{align}
    \lambda_1=1,\quad
    \lambda_2=0.
\end{align}
Substitution into Eq.~\eqref{eq:su3_casimir_value} gives
\begin{align}
    \langle C_2\rangle\bigg|_{V(\omega_1)}
    =
    \frac23+2
    =
    \frac83.
\end{align}

The weights of this representation are obtained by descending from the highest weight along the simple root directions. 
As illustrated below,
\begin{equation}\label{eq:WeightAtwoDef}
    \raisebox{-4ex}{\includegraphics[scale=1]{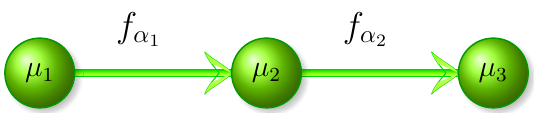}}\, ,
\end{equation}
starting from the highest weight node $\mu_1= \omega_1$ at the left end of the diagram, successive subtractions of the simple roots generate the full chain of weights:
\begin{align}
\mu_1&= \omega_1,
\\
\mu_2&=-\omega_1+\omega_2,
\\
\mu_3&= -\omega_2.
\end{align}
Therefore, in the fundamental weight coordinates, $\mu=m_1\omega_1+m_2\omega_2$,
the three weights of the defining representation are
\begin{align}
    &\mu_1: \quad (m_1,m_2)=(1,0),\\
    &\mu_2: \quad (m_1,m_2)=(-1,1),\\
    &\mu_3: \quad (m_1,m_2)=(0,-1).
\end{align}
For each of these weights, one has
\begin{align}
    m_1^2+m_1m_2+m_2^2
    =
    1.
\end{align}
Hence their norms are identical:
\begin{align}
    \langle\mu_i,\mu_i\rangle
    =
    \frac23,
    \quad
    \forall
    i\in\{1,2,3\}.
\end{align}
Substituting these values into Eq.~\eqref{eq:su3_weight_conservation} gives
\begin{align}
    \Delta_{\psi_{\mu_i}}^2(\mathfrak{su}(3))
    =
    \frac83-\frac23
    =
    2,
    \quad
    \forall
    i\in\{1,2,3\}.
\end{align}
Thus, all three weight-basis states $\psi_{\mu_i}$ in the defining representation of $\mathfrak{su}(3)$ carry the same Lie algebraic uncertainty. 
Equivalently, the fundamental qutrit representation exhibits a uniform uncertainty value across its entire weight basis.

\begin{mycor}{Uncertainty Identity in the Defining Representation of $\mathfrak{su}(3)$}{LAU_Conservation_def_su3}
For the defining representation of $\mathfrak{su}(3)$, whose weights $\mu_1$, $\mu_2$, $\mu_3$ are given in Eq.~\eqref{eq:WeightAtwoDef}, the Lie algebraic uncertainty relation restricted to each weight space $V_{\Lambda, \mu_i}$ takes the form
\begin{align}
\label{eq:LAU_Conservation_def_su3}
    \Delta_{\psi_{\mu_1}}^2(\mathfrak{su}(3))
    =
    \Delta_{\psi_{\mu_2}}^2(\mathfrak{su}(3))
    =
    \Delta_{\psi_{\mu_3}}^2(\mathfrak{su}(3))
    =
    2.
\end{align}
Here, the Lie algebraic uncertainty is conserved across the full weight basis state of the defining representation; 
it is independent of the weight label $i$.
\end{mycor}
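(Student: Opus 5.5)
This corollary is a direct specialization of Cor.~\ref{cor:su3_Weight_Space_Uncertainty}, and I would prove it by substituting the defining representation data into Eq.~\eqref{eq:su3_weight_conservation}. The defining representation of $\mathfrak{su}(3)$ on $\mathbb C^3$ has highest weight $\Lambda=\omega_1$, so $(\lambda_1,\lambda_2)=(1,0)$. Evaluating the highest-weight-dependent term of Eq.~\eqref{eq:su3_weight_conservation} at these labels reproduces the Casimir eigenvalue $\langle C_2\rangle|_{V(\omega_1)}=\tfrac23+2=\tfrac83$. This agrees with Thm.~\ref{thm:Value_Casimir_Element} and Eq.~\eqref{eq:su3_casimir_value}. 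Because this term depends only on $\Lambda$, it is the same for all three weight spaces.

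Next I would pin down the full weight set of $V(\omega_1)$. Starting from $\mu_1=\omega_1$, I subtract simple roots using the expansions $\alpha_1=2\omega_1-\omega_2$ and $\alpha_2=-\omega_1+2\omega_2$ from Eqs.~\eqref{eq:alpha1}--\eqref{eq:alpha2}. This gives $\mu_2=\mu_1-\alpha_1=-\omega_1+\omega_2$ and $\mu_3=\mu_2-\alpha_2=-\omega_2$. These are all the weights, since the representation is three-dimensional and each weight space is one-dimensional. The check is concrete: in the Chevalley basis of Eq.~\eqref{eq:sl3_chevalley_basis}, the standard basis vectors $e_1,e_2,e_3$ carry $(h_1,h_2)$-eigenvalues $(1,0)$, $(-1,1)$ and $(0,-1)$, which are exactly the Dynkin labels $(m_1,m_2)$. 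This step, which guarantees that no weight is missed and that the three labels are correct, is the only place needing care. The explicit diagonal form of $h_1,h_2$ settles it directly.

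Finally, I would evaluate the weight-dependent term. For each of the three label pairs, $m_1^2+m_1m_2+m_2^2=1$, so Eq.~\eqref{eq:su3_weight_norm_explicit} gives $\langle\mu_i,\mu_i\rangle=\tfrac23$ for all $i$. Substituting into Eq.~\eqref{eq:su3_weight_conservation}, or directly into Thm.~\ref{thm:Lie_Algebraic_Uncertainty_Identity_Weight}, yields $\Delta^2_{\psi_{\mu_i}}(\mathfrak{su}(3))=\tfrac83-\tfrac23=2$ for every normalized $\ket{\psi_{\mu_i}}\in V_{\Lambda,\mu_i}$.

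There is also a structural reason the three values coincide. The weights form a single Weyl orbit $W\omega_1$, so Lem.~\ref{lem:Extremal_Characterization_Highest_Weight} forces equal norms, and Cor.~\ref{cor:Weight_Bound} gives the value $2\langle\Lambda,\delta\rangle=2\langle\omega_1,\omega_1+\omega_2\rangle=2$. This serves as a consistency check. As an independent sanity check, I would compute $\tfrac12\sum_j\Delta^2(G_j)$ on $e_1$ directly with the Gell-Mann matrices to confirm the normalization.
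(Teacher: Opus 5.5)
Your proposal is correct and follows the paper's own argument essentially step by step. You set $\Lambda=\omega_1$ in Eq.~\eqref{eq:su3_weight_conservation} to get the Casimir value $8/3$, generate $\mu_1=\omega_1$, $\mu_2=-\omega_1+\omega_2$, $\mu_3=-\omega_2$ by successive subtraction of simple roots, note $m_1^2+m_1m_2+m_2^2=1$ so that each $\langle\mu_i,\mu_i\rangle=2/3$, and conclude $8/3-2/3=2$; your extra checks (the explicit $(h_1,h_2)$ eigenvalues, the single Weyl orbit giving $2\langle\Lambda,\delta\rangle=2$, and the Gell-Mann variance sum on $e_1$) are all consistent with the paper and only reinforce it.
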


Parallel to the angular momentum uncertainty relation obtained in Cor.~\ref{cor:Angular_Momentum_Uncertainty_Relation}, 
the $\mathfrak{su}(3)$ case also admits a representation-level uncertainty relation that goes beyond the weight space analysis. 
In particular, when specialized to the defining representation, Thm.~\ref{thm:Lie_Algebraic_Uncertainty_Relations} yields the following statement.

\begin{mycor}{Lie Algebraic Uncertainty in the Defining Representation of $\mathfrak{su}(3)$}{LAU_def_su3}
Consider the defining representation $V(\Lambda)$ of $\mathfrak{su}(3)$, whose highest weight is $\Lambda=\omega_1$.
For $\mathfrak{su}(3)$, the Weyl vector is $\delta=\omega_1+\omega_2$.
Then the Lie algebraic uncertainty is bounded from below over the entire representation space $V(\omega_1)$. 
Namely, for every normalized quantum state $\ket{\psi}\in V(\omega_1)$,
\begin{align}
\label{eq:LAU_def_su3}
    \Delta_{\psi}^2(\mathfrak{su}(3))
    \geqslant
    2.
\end{align}
\end{mycor}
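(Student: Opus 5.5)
The corollary is the specialization of Thm.~\ref{thm:Lie_Algebraic_Uncertainty_Relations} to $\mathfrak{su}(3)$ with $\Lambda=\omega_1$. So the plan is to evaluate the right-hand side $2\langle\Lambda,\delta\rangle$ with the $A_2$ data already computed in the proof of Cor.~\ref{cor:su3_Weight_Space_Uncertainty}. Concretely, I will use $\langle\alpha_i,\alpha_i\rangle=2$ and $\langle\alpha_1,\alpha_2\rangle=-1$. The Weyl vector is $\delta=\alpha_1+\alpha_2=\omega_1+\omega_2$, and the fundamental-weight Gram entries are $\langle\omega_1,\omega_1\rangle=\langle\omega_2,\omega_2\rangle=2/3$ and $\langle\omega_1,\omega_2\rangle=1/3$.

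The single computation is then
\begin{align}
2\langle\omega_1,\delta\rangle=2\bigl(\langle\omega_1,\omega_1\rangle+\langle\omega_1,\omega_2\rangle\bigr)=2\left(\tfrac23+\tfrac13\right)=2 .
\end{align}
A shorter route is $\langle\omega_1,\alpha_j\rangle=\delta_{1j}$ for simply-laced normalization, so $\langle\omega_1,\alpha_1+\alpha_2\rangle=1$. As a consistency check, the lower bound agrees with $\langle C_2\rangle-\langle\Lambda,\Lambda\rangle=8/3-2/3=2$, which comes from Eq.~\eqref{eq:su3_casimir_value} together with Lem.~\ref{lem:Quantum_Uncertainty_Moment_Element} and Lem.~\ref{lem:Moment_Norm_Bound}. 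Substituting into Eq.~\eqref{eq:general_uncertainty_bound} gives $\Delta_\psi^2(\mathfrak{su}(3))\geqslant 2$ for every normalized $\ket{\psi}\in V(\omega_1)$.

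There is no real obstacle, since the theorem already covers arbitrary states through the moment-norm argument. The only point to check is that the defining qutrit representation is indeed the irreducible $V(\omega_1)$. This follows from the weight chain in Eq.~\eqref{eq:WeightAtwoDef}: the three one-dimensional weight spaces are connected by $f_{\alpha_1}$ and $f_{\alpha_2}$, and $E_{11}$ is annihilated by all $e_\alpha$. It is also worth remarking that the bound is attained by any weight state, by Cor.~\ref{cor:LAU_Conservation_def_su3}, so it is optimal.
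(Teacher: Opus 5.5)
Your proposal is correct and follows the paper's own proof: both specialize Thm.~\ref{thm:Lie_Algebraic_Uncertainty_Relations} to $\Lambda=\omega_1$ and evaluate $2\langle\omega_1,\omega_1+\omega_2\rangle=2\left(\tfrac23+\tfrac13\right)=2$. One small slip: the vector annihilated by all $e_\alpha$ is the basis vector $\ket{0}$, not the matrix unit $E_{11}$.
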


\begin{proof}
In this case, the optimal lower bound takes the form
\begin{align}
    2\langle\Lambda,\delta\rangle
    &=
    2\langle\omega_1,\omega_1+\omega_2\rangle\\
    &=
    2\left(\langle\omega_1,\omega_1\rangle+\langle\omega_1,\omega_2\rangle\right)\\
    &=
    2\left(\frac{2}{3}+\frac{1}{3}\right)\\
    &=
    2.
\end{align}
This establishes the desired result.
\end{proof}
In the defining representation (see Eq.~\eqref{eq:def_rep}), all three weight states attain
the optimal value $2$.
In fact, every  pure state attains this value, since $SU(3)$ acts transitively on pure state rays.

The adjoint representation (see Eq.~\eqref{eq:ad_rep}) of $\mathfrak{su}(3)$ provides the next example. 
In this representation, the highest weight is the highest root,
\begin{align}
    \Lambda
    =
    \alpha_1+\alpha_2
    =
    \omega_1+\omega_2,
\end{align}
namely,
\begin{align}
    \lambda_1=\lambda_2=1.
\end{align}
Substituting these into Eq.~\eqref{eq:su3_casimir_value} yields
\begin{align}\label{eq:Casimir_su3_adj}
    \langle C_2\rangle\bigg|_{V(\omega_1+\omega_2)}
    =
    \frac23(1+1+1)+4
    =
    6.
\end{align}

The quadratic Casimir element $C_2$ of $\mathfrak{su}(3)$ above fixes the representation-dependent contribution to the uncertainty relation. 
It remains to identify the individual weight sectors and evaluate the corresponding weight norms. 
For the adjoint representation of $\mathfrak{su}(3)$, the weight diagram is particularly simple: 
the non-zero weights are precisely the six roots, while the zero weight occurs with multiplicity two. 
This structure is shown below, where each arrow denotes the action of a lowering operator. 
Acting with $f_{\alpha}$ on a vector of weight $\mu$ moves it to the weight sector of weight $\mu-\alpha$. 
\begin{equation}\label{eq:WeightAtwoAdj}
     \raisebox{-36ex}{\includegraphics[scale=1]{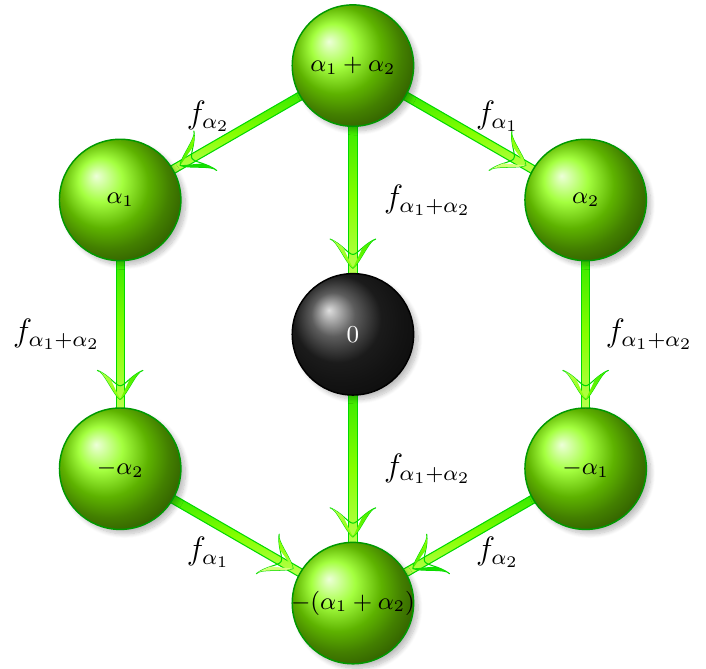}} \, .
\end{equation}
Starting from the highest weight
\begin{align}
    \Lambda=\alpha_1+\alpha_2,
\end{align}
the successive lowering operations generate the full weight diagram.
In summary, the adjoint representation of $\mathfrak{su}(3)$ has the weight space decomposition
\begin{align}
V(\omega_1+\omega_2)
=
\left(
\bigoplus_{\mu\in\Phi}
V_{\omega_1+\omega_2,\mu}
\right)
\oplus
V_{\omega_1+\omega_2,0}
=
\left(
\bigoplus_{\mu\in\Phi}
V_{\omega_1+\omega_2,\mu}
\right)
\oplus
\mathfrak h,
\end{align}
where the non-zero weights are precisely the roots
\begin{align}
\Phi
=
\{
\pm\alpha_1,
\pm\alpha_2,
\pm(\alpha_1+\alpha_2)
\}.
\end{align}

Here $\mathfrak h$ denotes the Cartan subalgebra of $\mathfrak{sl}(3,\mathbb C)$, identified with the zero-weight space of the complexified adjoint representation, which is two-dimensional:
\begin{align}
  \dim_{\mathbb C}\mathfrak h=2.
\end{align}
All remaining weight spaces, corresponding to the non-zero roots, are one-dimensional, namely
\begin{align}
    \dim V_{\omega_1+\omega_2,\mu}=1,\quad\forall\mu\in\Phi.
\end{align}

Any weight $\mu$ can be expressed in the basis of fundamental weights. 
For $\mathfrak{su}(3)$, this means writing $\mu=m_1\omega_1+m_2\omega_2$, where the coefficients $m_1$ and $m_2$ are obtained by expanding the corresponding root or weight in the fundamental weight basis, as shown below.
\begin{align}
    \Lambda=\alpha_1+\alpha_2=\omega_1+\omega_2
    \quad&\Rightarrow\quad
    (m_1,m_2)=(1,1),\\
    \alpha_1=2\omega_1-\omega_2
    \quad&\Rightarrow\quad
    (m_1,m_2)=(2,-1),\\
    \alpha_2=-\omega_1+2\omega_2
    \quad&\Rightarrow\quad
    (m_1,m_2)=(-1,2),\\
    -(\alpha_1+\alpha_2)=-(\omega_1+\omega_2)
    \quad&\Rightarrow\quad
    (m_1,m_2)=(-1,-1),\\
    -\alpha_1=-2\omega_1+\omega_2
    \quad&\Rightarrow\quad
    (m_1,m_2)=(-2,1),\\
    -\alpha_2=\omega_1-2\omega_2
    \quad&\Rightarrow\quad
    (m_1,m_2)=(1,-2),\\
    0=0\,\omega_1+0\,\omega_2
    \quad&\Rightarrow\quad
    (m_1,m_2)=(0,0).
\end{align}

It is now straightforward to verify that every non-zero weight $\mu$ of the adjoint representation satisfies
\begin{align}
    m_1^2+m_1m_2+m_2^2
    =
    3.
\end{align}
Equivalently,
\begin{align}
\langle\mu,\mu\rangle
=
\frac{2}{3}
\left(
m_1^2+m_1m_2+m_2^2
\right)
=
2,
\quad
\text{for}\,\,\mu\neq 0.
\end{align}
Since the quadratic Casimir element $C_2$ takes the value $6$ (see Eq.~\eqref{eq:Casimir_su3_adj}) on the adjoint representation, the Lie algebraic uncertainty in any non-zero weight sector is therefore (see Thm.~\ref{thm:Lie_Algebraic_Uncertainty_Identity_Weight})
\begin{align}
    \Delta_{\psi_\mu}^2(\mathfrak{su}(3))
    =
    \langle C_2\rangle_{\psi_\mu}
    -
    \langle\mu,\mu\rangle
    =
    6-2
    =
    4.
\end{align}
The preceding calculation yields the following corollary for the Lie algebraic uncertainty in every non-zero weight sector $V_{\omega_1+\omega_2,\mu}$.

\begin{mycor}{Uncertainty Identity in the Adjoint Representation of $\mathfrak{su}(3)$ ($\mu\neq0$)}{LAU_Conservation_adj_su3_n0}
Consider the adjoint representation $V(\omega_1+\omega_2)$ of $\mathfrak{su}(3)$, whose non-zero weights are the six roots $\mu\in\Phi=\{\pm\alpha_1, \pm\alpha_2, \pm(\alpha_1+\alpha_2)\}$. 
Then, for every non-zero weight sector $V_{\omega_1+\omega_2,\mu}$ with $\mu\neq0$, the Lie algebraic uncertainty takes the constant value
\begin{align}
    \Delta_{\psi_\mu}^2(\mathfrak{su}(3))=4.
\end{align}
Hence, the Lie algebraic uncertainty is conserved across all non-zero weight sectors of the adjoint representation; 
it is independent of the choice of weight $\mu\in\Phi$.
\end{mycor}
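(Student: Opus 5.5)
The plan is to specialize Thm.~\ref{thm:Lie_Algebraic_Uncertainty_Identity_Weight}, which gives $\Delta_{\psi_\mu}^2(\g)=\langle C_2\rangle_{\psi_\mu}-\langle\mu,\mu\rangle$ for any $\ket{\psi_\mu}\in V_{\Lambda,\mu}$, to the adjoint representation $V(\omega_1+\omega_2)$ of $\mathfrak{su}(3)$. By Thm.~\ref{thm:Value_Casimir_Element}, $C_2$ acts on $V(\Lambda)$ as the scalar $\langle\Lambda,\Lambda+2\delta\rangle$. So the first term is independent of the state. Using Eq.~\eqref{eq:su3_casimir_value} with $\lambda_1=\lambda_2=1$, it equals $6$, as already recorded in Eq.~\eqref{eq:Casimir_su3_adj}.

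Next I identify the nonzero weights of the adjoint representation. They are exactly the roots $\Phi=\{\pm\alpha_1,\pm\alpha_2,\pm(\alpha_1+\alpha_2)\}$, since $\mathrm{ad}_h e_\alpha=\alpha(h)e_\alpha$ and the zero-weight space is $\mathfrak h$. Every root of $A_2$ is long, so the normalization of $\kappa$ in Eq.~\eqref{eq:normal_Killing} gives $\langle\mu,\mu\rangle=2$. I will cross-check this with the fundamental-weight coordinates $(m_1,m_2)$ listed above: each satisfies $m_1^2+m_1m_2+m_2^2=3$, and Eq.~\eqref{eq:su3_weight_norm_explicit} then gives $\tfrac23\cdot 3=2$.

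Substituting into Cor.~\ref{cor:su3_Weight_Space_Uncertainty}, equivalently Thm.~\ref{thm:Lie_Algebraic_Uncertainty_Identity_Weight}, yields $\Delta^2_{\psi_\mu}(\mathfrak{su}(3))=6-2=4$ for every $\mu\in\Phi$. This value is independent of $\mu$.

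The only point needing care is that the weight-space identity holds for every state in $V_{\Lambda,\mu}$, not merely for a chosen basis vector. This is guaranteed by Eq.~\eqref{eq:weight_vanish}: all $\langle e_\alpha\rangle$ vanish on any weight state, because the operator $e_\alpha$ maps the state into an orthogonal weight space. The root-space sectors here are one-dimensional anyway, so there is no further obstacle. The remaining work is just the arithmetic verifying $m_1^2+m_1m_2+m_2^2=3$ for the six roots.
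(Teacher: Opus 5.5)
Your proposal is correct and follows the paper's argument. It uses the Casimir scalar $\langle\Lambda,\Lambda+2\delta\rangle=6$ at $\lambda_1=\lambda_2=1$, the norm $\langle\mu,\mu\rangle=\tfrac23(m_1^2+m_1m_2+m_2^2)=2$ for each of the six roots, and then $6-2=4$ from the weight-space uncertainty identity; your remark that all $A_2$ roots are long simply reaches the same norm without the paper's coordinate-by-coordinate check.
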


The zero-weight sector of the adjoint representation requires a separate treatment.
This sector is identified with the Cartan subalgebra $\mathfrak h$.
In fundamental weight coordinates, this corresponds to $m_1=m_2=0$, so that
\begin{align}
\langle\mu,\mu\rangle=0.
\end{align}
Together with the adjoint quadratic Casimir value $6$, this gives the Lie algebraic uncertainty in the zero-weight sector $\mathfrak h$ as
\begin{align}
    \Delta_{\psi_0}^2(\mathfrak{su}(3))=6.
\end{align}
This leads to the following corollary, which complements Cor.~\ref{cor:LAU_Conservation_adj_su3_n0} by isolating the contribution from the Cartan, or zero-weight, sector.

\begin{mycor}{Uncertainty Identity in the Adjoint Representation of $\mathfrak{su}(3)$ ($\mu=0$)}{LAU_Conservation_adj_su3_0}
Consider the adjoint representation $V(\omega_1+\omega_2)$ of $\mathfrak{su}(3)$.
Its zero-weight space is the Cartan subalgebra, i.e., $V_{\omega_1+\omega_2,0}=\mathfrak h$.
Then, for every state $\psi_0\in\mathfrak h$, the Lie algebraic uncertainty takes the value
\begin{align}
    \Delta_{\psi_0}^2(\mathfrak{su}(3))=6.
\end{align}
\end{mycor}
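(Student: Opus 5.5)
The statement to prove is Cor.~\ref{cor:LAU_Conservation_adj_su3_0}: every zero-weight state of the adjoint representation of $\mathfrak{su}(3)$ has uncertainty $6$. I will obtain it directly from the weight-space identity of Thm.~\ref{thm:Lie_Algebraic_Uncertainty_Identity_Weight}, specialized to $\Lambda=\omega_1+\omega_2$ and $\mu=0$.

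First, I identify the zero-weight space. In the adjoint representation, $\pi(h)=\mathrm{ad}_h$. By the root space decomposition of Def.~\ref{def:Roots}, the simultaneous kernel of $\{\mathrm{ad}_h\}_{h\in\mathfrak h}$ is exactly $\mathfrak h$. Hence $V_{\omega_1+\omega_2,0}=\mathfrak h$, and any normalized $\psi_0\in\mathfrak h$, including any superposition of the two Cartan directions, is a weight state with $\mu=0$.

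Next, I apply the identity
\begin{align}
\Delta_{\psi_0}^2(\mathfrak{su}(3))=\langle C_2\rangle_{\psi_0}-\langle 0,0\rangle .
\end{align}
This requires no further work, because its proof only uses $\langle\psi_\mu|e_\alpha|\psi_\mu\rangle=0$. That holds for any vector in a single weight space, regardless of the multiplicity of that space. The multiplicity-two issue, which I expected to be the only possible obstacle, is therefore harmless: orthogonality of distinct weight spaces (here $V_0$ versus $V_{\pm\alpha}$) kills every root expectation for all $\psi_0\in\mathfrak h$. The Cartan expectations $\langle H_i\rangle_{\psi_0}$ also vanish, since each $H_i$ acts by $\mu(H_i)=0$.

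It remains to evaluate the Casimir term. By Thm.~\ref{thm:Value_Casimir_Element}, $C_2$ acts on $V(\omega_1+\omega_2)$ as the scalar $\langle\Lambda,\Lambda+2\delta\rangle$. Eq.~\eqref{eq:Casimir_su3_adj} gives this as
\begin{align}
\tfrac23(1+1+1)+2(1+1)=6,
\end{align}
using $\langle\omega_i,\omega_j\rangle$ from the proof of Cor.~\ref{cor:su3_Weight_Space_Uncertainty} and $\delta=\omega_1+\omega_2$. Since $\langle0,0\rangle=0$, the result is $\Delta^2_{\psi_0}=6$ for every $\psi_0\in\mathfrak h$.

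As a sanity check, I would verify the value directly on, say, $\psi_0\propto H_1$ in the adjoint action. One computes $\sum_a\|\mathrm{ad}_{Q_a}\psi_0\|^2/2$ with the inner product inherited from the faithful reference form, and confirms it equals $6$. This confirms the normalization conventions but is not logically needed.
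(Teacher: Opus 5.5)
Your proposal is correct and follows the paper's route: the paper likewise applies the weight-space identity $\Delta^2_{\psi_\mu}(\mathfrak{su}(3))=\langle C_2\rangle_{\psi_\mu}-\langle\mu,\mu\rangle$ with $\mu=0$ and the adjoint Casimir value $6$ from Eq.~\eqref{eq:Casimir_su3_adj}. Your remark that the multiplicity two of the zero-weight space is harmless, because orthogonality of weight spaces kills every root expectation for any $\psi_0\in\mathfrak h$, makes explicit a point the paper leaves implicit.
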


We close the discussion of the adjoint representation of $\mathfrak{su}(3)$ by moving beyond the weight space analysis (see Thm.~\ref{thm:Lie_Algebraic_Uncertainty_Identity_Weight}) and exploring the Lie algebraic uncertainty on the full representation space (see Thm.~\ref{thm:Lie_Algebraic_Uncertainty_Relations}).

\begin{mycor}{Lie Algebraic Uncertainty in the Adjoint Representation of $\mathfrak{su}(3)$}{LAU_adj_su3}
Let $V(\Lambda)$ be the adjoint representation of $\mathfrak{su}(3)$, with highest weight $\Lambda=\omega_1+\omega_2$ and Weyl vector $\delta=\omega_1+\omega_2$.
The Lie algebraic uncertainty is bounded from below on the entire representation space $V(\omega_1+\omega_2)$. 
That is, for every state $\ket{\psi}\in V(\omega_1+\omega_2)$
\begin{align}
\label{eq:LAU_adj_su3}
    \Delta_{\psi}^2(\mathfrak{su}(3))
    \geqslant
    4.
\end{align}
\end{mycor}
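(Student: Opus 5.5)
The bound in Cor.~\ref{cor:LAU_adj_su3} is a direct specialization of Thm.~\ref{thm:Lie_Algebraic_Uncertainty_Relations}. That theorem gives $\Delta_\psi^2(\g)\geqslant 2\langle\Lambda,\delta\rangle$ for every state in $V(\Lambda)$, so the plan has three parts: evaluate this pairing for $\Lambda=\delta=\omega_1+\omega_2$, confirm the answer against the Casimir data already computed, and then show the value $4$ is attained.

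First, I expand using bilinearity and the fundamental-weight inner products from the proof of Cor.~\ref{cor:su3_Weight_Space_Uncertainty}, namely $\langle\omega_1,\omega_1\rangle=\langle\omega_2,\omega_2\rangle=2/3$ and $\langle\omega_1,\omega_2\rangle=1/3$. This gives
\begin{align}
2\langle\omega_1+\omega_2,\omega_1+\omega_2\rangle=2\left(\tfrac23+\tfrac23+2\cdot\tfrac13\right)=4.
\end{align}

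Second, as a consistency check, I use Lem.~\ref{lem:Quantum_Uncertainty_Moment_Element} and Lem.~\ref{lem:Moment_Norm_Bound} together. They give $\Delta_\psi^2\geqslant \langle C_2\rangle-\langle\Lambda,\Lambda\rangle$. By Eq.~\eqref{eq:Casimir_su3_adj} the Casimir value is $6$, and $\langle\Lambda,\Lambda\rangle=2$ because $\Lambda$ is the highest root, a long root of squared length two. Hence the bound is $6-2=4$, which agrees with the first step.

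Third, for sharpness, Cor.~\ref{cor:LAU_Conservation_adj_su3_n0} shows that every nonzero-weight state has uncertainty exactly $4$. In particular the highest-weight state $\psi_\Lambda$, spanning the root space $\g_{\alpha_1+\alpha_2}$, attains the bound, so $4$ is the minimum over $V(\omega_1+\omega_2)$.

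The only point needing care is that Thm.~\ref{thm:Lie_Algebraic_Uncertainty_Relations} covers arbitrary superpositions across weight spaces, such as mixtures of the zero-weight sector $\mathfrak h$ with the root sectors. This is already handled inside the proof of Lem.~\ref{lem:Moment_Norm_Bound} by the conjugation argument. So no real obstacle remains: the statement is a numerical instance of the general theorem. Mixed states follow from concavity of the variance, as remarked after that theorem.
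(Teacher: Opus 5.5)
Your proposal is correct and matches the paper's proof, which is exactly your first step: specialize Thm.~\ref{thm:Lie_Algebraic_Uncertainty_Relations} and evaluate $2\langle\omega_1+\omega_2,\omega_1+\omega_2\rangle=2\left(\tfrac23+\tfrac13+\tfrac13+\tfrac23\right)=4$. Your Casimir cross-check ($6-2=4$) and the attainment by nonzero-weight states via Cor.~\ref{cor:LAU_Conservation_adj_su3_n0} are valid additions; the paper mentions attainment only in the text after the proof, not in the proof itself.
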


\begin{proof}
For the adjoint representation, the optimal lower bound is obtained as
\begin{align}
    2\langle\Lambda,\delta\rangle
    &=
    2\langle\omega_1+\omega_2,\omega_1+\omega_2\rangle \\
    &=
    2\Bigl(
        \langle\omega_1,\omega_1\rangle
        +
        \langle\omega_1,\omega_2\rangle
        +
        \langle\omega_2,\omega_1\rangle
        +
        \langle\omega_2,\omega_2\rangle
    \Bigr) \\
    &=
    2\left(
        \frac{2}{3}
        +
        \frac{1}{3}
        +
        \frac{1}{3}
        +
        \frac{2}{3}
    \right) \\
    &=
    4.
\end{align}
Hence the desired state-independent lower bound follows.
\end{proof}

The preceding analysis reveals that the optimal lower bound (see Cor.~\ref{cor:LAU_adj_su3}) is saturated on every non-zero weight space (see Cor.~\ref{cor:LAU_Conservation_adj_su3_n0}), whereas the Cartan subalgebra exhibits a strictly larger uncertainty (see Cor.~\ref{cor:LAU_Conservation_adj_su3_0}).
The roles of $\mathfrak{su}(2)$ in angular momentum (see Cors.~\ref{cor:Conservation_Law_Magnetic_Quantum_Number} and~\ref{cor:Angular_Momentum_Uncertainty_Relation}) and $\mathfrak{su}(3)$ in particle physics (see Cor.~\ref{cor:su3_Weight_Space_Uncertainty}) are generalized by $\mathfrak{su}(n)$ to arbitrary $n$-level quantum systems. 
After multiplication by $\mathrm{i}$, the generators of $\mathfrak{su}(n)$ form a complete basis for traceless Hermitian observables. 
Indeed, any Hermitian observable $O$ on a $n$-dimensional Hilbert space admits the decomposition
\begin{align}
    O
    =
    \frac{\Tr[O]}{n}\1
    +
    \left(O-\frac{\Tr[O]}{n}\1\right),
\end{align}
where the second term is traceless. 
The identity component shifts all measurement outcomes by the same constant and therefore does not affect commutators, incompatibility, dynamics, uncertainty, or information carried by relative outcomes. 
The physically relevant part of an observable can therefore be identified with its traceless component.
In the defining representation on $\mathbb C^n$, the
$\mathfrak{su}(n)$ relation constrains the total variance of a normalized complete basis of traceless Hermitian observables.  
In this sense, it extends angular momentum uncertainty from spin systems to general finite dimensional quantum mechanics.

         
\subsection{Generalized Quantum Uncertainty}
\label{subsec:Three_Bases_One_Uncertainty}

Optimal uncertainty bounds quantify the incompatibility of observables and, more generally, operators. 
An intrinsic description of this incompatibility should depend on their algebraic structure and representation, independently of the operator basis used to express it. 
The Lie algebraic uncertainty relation in Eq.~\eqref{eq:QUncertainty_Lie_Algebra} does not yet make this independence explicit. 
For anti-Hermitian generators $\{H_i,E_\alpha,F_\alpha\}$ (see Def.~\ref{def:Anti_Hermitian_Basis}), the uncertainty defined in Eq.~\eqref{eq:QUncertainty_Lie_Algebra} is unchanged upon multiplication by $i$, which converts them into Hermitian observables. 
Expressing the same algebra in a Chevalley basis (see Def.~\ref{def:Chevalley_Basis}), however, can yield a different optimal bound. 
This apparent discrepancy arises because the definition assigns equal weight to basis directions whose normalizations may differ. 
Resolving it requires a formulation that accounts for the geometry of the operator space. 
We therefore introduce a metric-based generalization that makes uncertainty independent of the choice and normalization of the operator basis.

Before introducing our generalized uncertainty relations for operators, we first explain how the formulation for anti-Hermitian generators extends to observables, then illustrate the basis dependence discussed above with an explicit example. 
We begin with the passage from generators to observables.
The quantum uncertainty in Eq.~\eqref{eq:QUncertainty_Lie_Algebra} is defined in terms of the orthogonal compact generators introduced in Def.~\ref{def:Anti_Hermitian_Basis}.
Denote this basis by $\{Q_a\}_{a=1}^{n}$, with $n=\dim{\mathbb C}\g$. 
In a unitary representation $\pi$, each $\pi(Q_a)$ is anti-Hermitian, so multiplication by $i$ yields the physical observable $i\pi(Q_a)$. 
The invariance of the variance under this conversion, established in Lem.~\ref{lem:Variance_Equivalence_Under_Hermitianization}, then gives
\begin{align}\label{eq:physical_variance_sum}
    \Delta_\psi^2(\g)
    =\frac12\sum_{a=1}^{n}
    \Delta_\psi^2\bigl(\mathrm{i}\pi(Q_a)\bigr).
\end{align}
The factor $1/2$ accounts for the squared norm of $2$ shared by all orthogonal generators in the chosen metric.

\begin{table}[htbp]
\centering
\scriptsize
\begin{tblr}{
      width = \textwidth,
      colspec = {
        Q[c,m,wd=0.08\textwidth]|
        X[c,m]||
        Q[c,m,wd=0.16\textwidth]|
        X[c,m]||
        Q[c,m,wd=0.16\textwidth]|
        X[c,m,wd=0.16\textwidth]
      },
      row{1} = {bg=gray!50,font=\bfseries},
      row{2} = {bg=gray!20,font=\bfseries},
      hlines,
      vlines,
      cells = {m,c},
      cell{3-10}{2} = {bg=mRed},
      cell{3-10}{4} = {bg=mGreen},
      cell{3-10}{6} = {bg=mBlue},
}
\SetCell[c=6]{m,c}
\textbf{Bases for $\mathfrak{sl}(3,\mathbb C)$, $\mathfrak{su}(3)$, and $\mathrm{i}\mathfrak{su}(3)$}
& & & & & \\

Chevalley
& Matrix
& Anti-Hermitian
& Matrix
& Observable
& Matrix \\

$h_1$
&
$\begin{pmatrix}
1&0&0\\
0&-1&0\\
0&0&0
\end{pmatrix}$
&
$H_1=-\mathrm{i}h_1$
&
$\begin{pmatrix}
-\mathrm{i}&0&0\\
0&\mathrm{i}&0\\
0&0&0
\end{pmatrix}$
&
$\mathrm{i}H_1=h_1$
&
$\begin{pmatrix}
1&0&0\\
0&-1&0\\
0&0&0
\end{pmatrix}
=\lambda_3$
\\

$h_2$
&
$\begin{pmatrix}
0&0&0\\
0&1&0\\
0&0&-1
\end{pmatrix}$
&
$H_2=-\dfrac{\mathrm{i}}{\sqrt3}(h_1+2h_2)$
&
$-\dfrac{\mathrm{i}}{\sqrt3}
\begin{pmatrix}
1&0&0\\
0&1&0\\
0&0&-2
\end{pmatrix}$
&
$\mathrm{i}H_2=\dfrac{1}{\sqrt3}(h_1+2h_2)$
&
$\dfrac{1}{\sqrt3}
\begin{pmatrix}
1&0&0\\
0&1&0\\
0&0&-2
\end{pmatrix}
=\lambda_8$
\\

$e_{\alpha_1}$
&
$\begin{pmatrix}
0&1&0\\
0&0&0\\
0&0&0
\end{pmatrix}$
&
$E_{\alpha_1}=-\mathrm{i}(e_{\alpha_1}+f_{\alpha_1})$
&
$\begin{pmatrix}
0&-\mathrm{i}&0\\
-\mathrm{i}&0&0\\
0&0&0
\end{pmatrix}$
&
$\mathrm{i}E_{\alpha_1}=e_{\alpha_1}+f_{\alpha_1}$
&
$\begin{pmatrix}
0&1&0\\
1&0&0\\
0&0&0
\end{pmatrix}
=\lambda_1$
\\

$f_{\alpha_1}$
&
$\begin{pmatrix}
0&0&0\\
1&0&0\\
0&0&0
\end{pmatrix}$
&
$F_{\alpha_1}=-(e_{\alpha_1}-f_{\alpha_1})$
&
$\begin{pmatrix}
0&-1&0\\
1&0&0\\
0&0&0
\end{pmatrix}$
&
$\mathrm{i}F_{\alpha_1}=-\mathrm{i}(e_{\alpha_1}-f_{\alpha_1})$
&
$\begin{pmatrix}
0&-\mathrm{i}&0\\
\mathrm{i}&0&0\\
0&0&0
\end{pmatrix}
=\lambda_2$
\\

$e_{\alpha_2}$
&
$\begin{pmatrix}
0&0&0\\
0&0&1\\
0&0&0
\end{pmatrix}$
&
$E_{\alpha_2}=-\mathrm{i}(e_{\alpha_2}+f_{\alpha_2})$
&
$\begin{pmatrix}
0&0&0\\
0&0&-\mathrm{i}\\
0&-\mathrm{i}&0
\end{pmatrix}$
&
$\mathrm{i}E_{\alpha_2}=e_{\alpha_2}+f_{\alpha_2}$
&
$\begin{pmatrix}
0&0&0\\
0&0&1\\
0&1&0
\end{pmatrix}
=\lambda_6$
\\

$f_{\alpha_2}$
&
$\begin{pmatrix}
0&0&0\\
0&0&0\\
0&1&0
\end{pmatrix}$
&
$F_{\alpha_2}=-(e_{\alpha_2}-f_{\alpha_2})$
&
$\begin{pmatrix}
0&0&0\\
0&0&-1\\
0&1&0
\end{pmatrix}$
&
$\mathrm{i}F_{\alpha_2}=-\mathrm{i}(e_{\alpha_2}-f_{\alpha_2})$
&
$\begin{pmatrix}
0&0&0\\
0&0&-\mathrm{i}\\
0&\mathrm{i}&0
\end{pmatrix}
=\lambda_7$
\\

$e_{\alpha_1+\alpha_2}$
&
$\begin{pmatrix}
0&0&1\\
0&0&0\\
0&0&0
\end{pmatrix}$
&
$E_{\alpha_1+\alpha_2}=-\mathrm{i}(e_{\alpha_1+\alpha_2}+f_{\alpha_1+\alpha_2})$
&
$\begin{pmatrix}
0&0&-\mathrm{i}\\
0&0&0\\
-\mathrm{i}&0&0
\end{pmatrix}$
&
$\mathrm{i}E_{\alpha_1+\alpha_2}=e_{\alpha_1+\alpha_2}+f_{\alpha_1+\alpha_2}$
&
$\begin{pmatrix}
0&0&1\\
0&0&0\\
1&0&0
\end{pmatrix}
=\lambda_4$
\\

$f_{\alpha_1+\alpha_2}$
&
$\begin{pmatrix}
0&0&0\\
0&0&0\\
1&0&0
\end{pmatrix}$
&
$F_{\alpha_1+\alpha_2}=-(e_{\alpha_1+\alpha_2}-f_{\alpha_1+\alpha_2})$
&
$\begin{pmatrix}
0&0&-1\\
0&0&0\\
1&0&0
\end{pmatrix}$
&
$\mathrm{i}F_{\alpha_1+\alpha_2}=-\mathrm{i}(e_{\alpha_1+\alpha_2}-f_{\alpha_1+\alpha_2})$
&
$\begin{pmatrix}
0&0&-\mathrm{i}\\
0&0&0\\
\mathrm{i}&0&0
\end{pmatrix}
=\lambda_5$
\\

\end{tblr}
\caption{
\textbf{Operator Bases for $\mathfrak{sl}(3,\mathbb C)$, $\mathfrak{su}(3)$, and $\mathrm{i}\mathfrak{su}(3)$.}
The Chevalley basis comprises two Cartan generators, $\{h_1,h_2\}$, and six root operators associated with the positive roots $\Phi^+=\{\alpha_1,\alpha_2,\alpha_1+\alpha_2\}$ and their negatives. 
The anti-Hermitian generators span the compact real form $\mathfrak{su}(3)$, with orthogonal Cartan generators $H_1=-\mathrm{i}h_1$ and $H_2=-\mathrm{i}(h_1+2h_2)/\sqrt3$. 
Multiplication by $i$ yields the corresponding Hermitian observables, recovering the eight Gell-Mann matrices $\{\lambda_1,\ldots,\lambda_8\}$. 
Both bases are orthogonal under the Hilbert-Schmidt inner product in the defining representation, with each generator having squared norm $2$.
}
\label{tab:su3_three_generator_bases}
\end{table}

The basis dependence is already apparent in the defining representation of $\mathfrak{sl}(3,\mathbb C)$. 
Let $\alpha_1,\alpha_2$ be the simple roots and $\Phi^+=\{\alpha_1,\alpha_2,\alpha_1+\alpha_2\}$ the positive roots. 
We compare the Chevalley and orthogonal anti-Hermitian bases listed in Tab.~\ref{tab:su3_three_generator_bases}, using the same prefactor $1/2$ in both variance sums. Each generator is identified with its matrix in the defining representation.
For an arbitrary normalized pure state $\ket{\psi}=(a,b,c)^{\T}$, define
\begin{align}
p_1\coloneqq|a|^2,\quad
p_2\coloneqq|b|^2,\quad
p_3\coloneqq|c|^2,\quad
p_1+p_2+p_3=1.
\end{align}
The Cartan generators in the Chevalley basis,
$h_1=\operatorname{diag}(1,-1,0)$ and
$h_2=\operatorname{diag}(0,1,-1)$, satisfy
\begin{align}
\langle h_1\rangle_\psi=p_1-p_2,
\quad
\langle h_1^2\rangle_\psi=p_1+p_2,
\\
\langle h_2\rangle_\psi=p_2-p_3,
\quad
\langle h_2^2\rangle_\psi=p_2+p_3.
\end{align}
Their uncertainty is therefore
\begin{align}
\Delta_\psi^2(h_1)+\Delta_\psi^2(h_2)
=
1+p_2-(p_1-p_2)^2-(p_2-p_3)^2.
\end{align}
The root generators are
$e_{\alpha_1}=E_{12}$,
$e_{\alpha_2}=E_{23}$,
$e_{\alpha_1+\alpha_2}=E_{13}$ and
$f_\alpha=e_\alpha^\dagger$, where
$E_{ij}=|i\rangle\langle j|$.
With $\psi_i\coloneqq \langle i|\psi\rangle$, we have
\begin{align}
\langle E_{ij}\rangle_\psi
=\overline{\psi_i}\psi_j,
\end{align}
and
\begin{align}
\langle E_{ij}^\dagger E_{ij}\rangle_\psi=p_j,
\end{align}
so that $\Delta_\psi^2(E_{ij})=p_j(1-p_i)$. 
The six root generators thus contribute
\begin{align}
\sum_{\alpha\in\Phi^+}
\bigl[\Delta_\psi^2(e_\alpha)+\Delta_\psi^2(f_\alpha)\bigr]
=
\sum_{i\ne j}p_j(1-p_i)
=
1+\sum_{i=1}^{3}p_i^2.
\end{align}
Combining the Cartan and root contributions yields
\begin{align}\label{eq:sl3_chevalley_variance_sum}
\Delta_{\mathrm{def}}^2(\mathfrak{sl}(3,\mathbb C))\coloneqq\frac12\Biggl[
\Delta_\psi^2(h_1)+\Delta_\psi^2(h_2)
+\sum_{\alpha\in\Phi^+}
\bigl[\Delta_\psi^2(e_\alpha)+\Delta_\psi^2(f_\alpha)\bigr]
\Biggr]
=
1+\frac32p_2(1-p_2)
\geqslant1,
\end{align}
which is optimal. 
Equality holds if and only if $p_2=0$ or $p_2=1$.
The bound can be achived, for example, by preparing state $\ket{1}$.
Thus, the quantum uncertainty in the Chevalley basis is
\begin{align}
    \min_{\psi}\Delta_{\mathrm{def}}^2(\mathfrak{sl}(3,\mathbb C))=1.
\end{align}

For the orthogonal anti-Hermitian basis, the generators are
\begin{align}
H_1=-ih_1,\quad
H_2=-\frac{i}{\sqrt3}(h_1+2h_2), \quad
E_\alpha=-i(e_\alpha+f_\alpha), \quad
F_\alpha=f_\alpha-e_\alpha.
\end{align}
Since $H_2=-i\operatorname{diag}(1,1,-2)/\sqrt3$, its variance is $3p_3(1-p_3)$. 
Together with $\Delta_\psi^2(H_1)=\Delta_\psi^2(h_1)$, this gives the Cartan uncertainty
\begin{align}
\Delta_\psi^2(H_1)+\Delta_\psi^2(H_2)
=
p_1+p_2-(p_1-p_2)^2+3p_3(1-p_3)
=
2\left(1-\sum_{i=1}^{3}p_i^2\right).
\end{align}
For each positive root, the cross terms cancel between the two variances, yielding
\begin{align}
\Delta_\psi^2(E_\alpha)+\Delta_\psi^2(F_\alpha)
=
2\left(\Delta_\psi^2(e_\alpha)+\Delta_\psi^2(f_\alpha)\right).
\end{align}
The total root uncertainty is therefore
\begin{align}
\sum_{\alpha\in\Phi^+}
\left(\Delta_\psi^2(E_\alpha)+\Delta_\psi^2(F_\alpha)\right)
=
2\left(1+\sum_{i=1}^{3}p_i^2\right).
\end{align}
The state dependence cancels upon combining the Cartan and root contributions:
\begin{align}\label{eq:sl3_anti_hermitian_variance_sum}
\Delta_{\mathrm{def}}^2(\mathfrak{su}(3))
 \coloneqq
 \frac12\left(
\Delta_\psi^2(H_1)+\Delta_\psi^2(H_2)
+\sum_{\alpha\in\Phi^+}
\left(\Delta_\psi^2(E_\alpha)+\Delta_\psi^2(F_\alpha)\right)
\right)
=2.
\end{align}
This identity holds for every pure state in the defining representation.
By Eq.~\eqref{eq:physical_variance_sum}, the same value is obtained for the corresponding Hermitian observables. Thus, the quantum uncertainty is constant over pure states, and its optimal lower bound is
\begin{align}
\min_{\psi}\Delta_{\mathrm{def}}^2(\mathfrak{su}(3))=2.
\end{align}

Thus, despite realizing the same algebra in the same representation, the Chevalley and orthogonal anti-Hermitian bases yield different optimal uncertainty bounds:
\begin{align}
    \min_{\psi}\Delta_{\mathrm{def}}^2(\mathfrak{sl}(3,\mathbb C))
    <
    \min_{\psi}\Delta_{\mathrm{def}}^2(\mathfrak{su}(3)).
\end{align}

The preceding example shows that optimal uncertainty bounds can depend on the operator basis even when the underlying algebra and its representation remain unchanged. 
This raises a natural question: 
can uncertainty be defined intrinsically, independently of the basis used to express it? 
We answer this question by combining the Gram matrix, which encodes the geometry of the operator basis, with the covariance matrix, which describes uncertainty in a given quantum state. 
We begin by defining the covariance matrix for a set of operators.

\begin{mydef}{Covariance Matrix}{Covariance_Matrix}
Let $\mathbf X\coloneqq\{X_i\}_{i}$ be a family of operators on space $V$, not necessarily Hermitian. 
For a normalized state $\ket{\psi}$, their covariance is
\begin{align}\label{eq:general_nonhermitian_covariance_definition}
    \mathrm{Cov}_\psi(X_i,X_j)
    \coloneqq
    \langle X_i^\dagger X_j\rangle_\psi
    -\langle X_i^\dagger\rangle_\psi\langle X_j\rangle_\psi,
\end{align}
with the corresponding covariance matrix
\begin{align}\label{eq:general_nonhermitian_covariance_matrix}
    \mathrm{Cov}_\psi(\mathbf X)
    \coloneqq
    \left(\mathrm{Cov}_\psi(X_i,X_j)\right)_{i,j}.
\end{align}
\end{mydef}

The diagonal entries of $\mathrm{Cov}_\psi(\mathbf X)$ recover the operator variance defined in Eq.~\eqref{eq:non_hermitian_variance}:
\begin{align}\label{eq:general_nonhermitian_variance_diagonal}
    \mathrm{Cov}_\psi(X_i,X_i)=\Delta_\psi^2(X_i),
\end{align}
while its off-diagonal entries capture the correlations between their uncertainties.
More generally, the covariance matrix determines the uncertainty of every linear combination of the operators. 
For any state $\ket{\phi}$,
\begin{align}\label{eq:b5_linear_combination_variance}
    \bra{\phi}\mathrm{Cov}_\psi(\mathbf X)\ket{\phi}
    =\Delta_\psi^2\!\left(\sum_{i}\phi_iX_i\right)\geqslant0.
\end{align}
Thus, $\mathrm{Cov}_\psi(\mathbf X)$ is Hermitian and positive semidefinite.

To define the Gram matrix $\mathrm{G}(\mathbf X)$ of an operator basis $\mathbf X$, we first equip the Lie algebra $\g$ with a positive-definite Hermitian inner product. 
The normalized Killing form $\kappa$ in Eq.~\eqref{eq:normal_Killing} is complex bilinear and therefore does not itself provide such an inner product. 
The chosen compact real form $\g_{\mathbb R}$ supplies the additional structure needed for this construction.
On $\g_{\mathbb R}$, the form $\kappa$ is negative definite, so $(X,Y)_{\g_{\mathbb R}}\coloneqq-\kappa(X,Y), X,Y\in\g_{\mathbb R}$ defines a positive-definite real inner product (see Eq.\eqref{eq:Inner_product_algebra}). 
To extend it to $\g$, recall that every $x\in\g$ can be written uniquely as $x=X+\mathrm{i}Y$, with $X,Y\in\g_{\mathbb R}$. 
The compact real form therefore determines an antilinear involution
\begin{align}
    \tau(X+\mathrm{i}Y)=X-\mathrm{i}Y,
\end{align}
which, in the Chevalley basis, satisfies $\tau(h_i)=-h_i$ and $\tau(e_\alpha)=-f_\alpha$, as in Eq.~\eqref{eq:tau_conjugation}. We then define
\begin{align}\label{eq:inner_product_g}
    (x,y)_{\g}
    \coloneqq-\kappa(\tau(x),y),\quad \text{for}\,\,x,y\in\g.
\end{align}
The resulting form is Hermitian, conjugate linear in its first argument and linear in its second. It is positive definite because
\begin{align}\label{eq:g_Inner_Prodduct}
    (x,x)_{\g}
    =-\kappa(X,X)-\kappa(Y,Y)>0\quad
    \text{for}\,\,x=X+\mathrm{i}Y\ne0,
\end{align}
and it reduces to $-\kappa$ on $\g_{\mathbb R}$.
Here, $(\cdot,\cdot)_{\g}$ denotes a unified inner product on $\g$, whose restrictions to $\g_{\mathbb R}$ and $\mathrm{i}\g_{\mathbb R}$ recover their respective inner products.

With the inner product $(\cdot,\cdot)_{\g}$ on $\g$ established, we can now define the Gram matrix of a collection of elements of $\g$. 
Although our focus is on Lie algebras, this definition applies to any inner product space.

\begin{mydef}{Gram Matrix}{Gram_Matrix_Generator_Basis}
For a family of operators $\mathbf X\coloneqq\{X_i\}_{i}$ in $\g$, we define the associated Gram matrix $\mathrm{G}(\mathbf X)$ by
\begin{align}\label{eq:Gram_matirx}
    \mathrm{G}(\mathbf X)
    \coloneqq
    \left((X_i,X_j)_{\g}\right)_{i, j}.
\end{align}
\end{mydef}

When $\mathbf X$ forms a basis of the Lie algebra $\g$, its Gram matrix $\mathrm{G}(\mathbf X)$ is positive semidefinite, allowing us to formulate the generalized uncertainty relation as follows.

\begin{mydef}{Generalized Lie Algebraic Quantum Uncertainty}{Generalized_Lie_Algebraic_Uncertainty_Relations}
Let $\g$ be a Lie algebra with basis $\mathbf X$. We define the corresponding quantum uncertainty by
\begin{align}\label{eq:GUR_Lie}
    \Delta_\psi^2(\g)
    \coloneqq\Tr[\mathrm{G}(\mathbf X)^{-1}\mathrm{Cov}_\psi(\mathbf X)].
\end{align}
\end{mydef}

If $\mathbf X$ forms an orthogonal basis with respect to $(\cdot,\cdot)_{\g}$ (see Eq.~\eqref{eq:inner_product_g}); 
that is
\begin{align}
    (X_i,X_j)_{\g}=c_i\delta_{ij},
\end{align}
then its Gram matrix is diagonal
\begin{align}
\mathrm{G}(\mathbf X)=
\begin{pmatrix}
c_1 & 0 & \cdots & 0\\
0 & c_2 & \cdots & 0\\
\vdots & \vdots & \ddots & \vdots\\
0 & 0 & \cdots & c_n
\end{pmatrix},
\end{align}
whose inverse takes the form
\begin{align}
\mathrm{G}(\mathbf X)^{-1}
&=
\begin{pmatrix}
c_1^{-1} & 0 & \cdots & 0\\
0 & c_2^{-1} & \cdots & 0\\
\vdots & \vdots & \ddots & \vdots\\
0 & 0 & \cdots & c_n^{-1}
\end{pmatrix}.
\end{align}
In this case, the uncertainty reduces to a weighted sum of variances,
\begin{align}
    \Delta_\psi^2(\g)
    =\sum_{i}\frac{\Delta_\psi^2(X_i)}{c_i}.
\end{align}
If all basis elements have equal norm, so that $c_1=\cdots=c_n=c$, this recovers the standard sum of variances up to an overall factor:
\begin{align}
    \Delta_\psi^2(\g)
    =\frac{1}{c}\sum_{i}\Delta_\psi^2(X_i).
\end{align}

Although we define quantum uncertainty using a basis $\mathbf X$, the notation on the left-hand side of Eq.~\eqref{eq:GUR_Lie} omits any reference to this choice. 
This reflects the basis independence of the definition, established by the following theorem.

\begin{mythm}{Basis Independence of Generalized Lie Algebraic Quantum Uncertainty}{Basis_Independence_Orthogonal_Representation}
The quantum uncertainty $\Delta_\psi^2(\g)$ defined in Eq.~\eqref{eq:GUR_Lie} is independent of the choice of basis $\mathbf X$ of the Lie algebra $\g$.
\end{mythm}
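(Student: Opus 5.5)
The plan is to track how the Gram matrix and the covariance matrix transform under a change of basis, and show that the two transformations cancel inside the trace in Eq.~\eqref{eq:GUR_Lie}. Let $\mathbf X=\{X_i\}_{i=1}^n$ and $\mathbf Y=\{Y_k\}_{k=1}^n$ be two bases of $\g$ related by $Y_k=\sum_i N_{ki}X_i$ with $N\in\mathrm{GL}_n(\mathbb C)$, i.e.\ $\mathbf Y=N\mathbf X$ as in Lem.~\ref{lem:Basis_Invariance}. Throughout, each $X_i$ is identified with its representing operator $\pi(X_i)$, extended complex-linearly from $\g_{\mathbb R}$ to $\g$, and the inner product $(\cdot,\cdot)_{\g}$ of Eq.~\eqref{eq:inner_product_g} is kept fixed.

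\textbf{Step 1: the Gram matrix.} Since $(\cdot,\cdot)_{\g}$ is conjugate-linear in its first argument and linear in its second,
\begin{align}
(Y_k,Y_l)_{\g}=\sum_{i,j}\overline{N_{ki}}\,N_{lj}\,(X_i,X_j)_{\g},
\end{align}
so that $\mathrm G(\mathbf Y)=\overline{N}\,\mathrm G(\mathbf X)\,N^{\T}$. Setting $P\coloneqq N^{\T}$, this reads $\mathrm G(\mathbf Y)=P^\dagger\mathrm G(\mathbf X)P$. By Eq.~\eqref{eq:g_Inner_Prodduct}, $\mathrm G(\mathbf X)$ is positive definite, hence invertible, and $\mathrm G(\mathbf Y)$ is invertible as well.

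\textbf{Step 2: the covariance matrix.} Adjoints are conjugate-linear, so $Y_k^\dagger=\sum_i\overline{N_{ki}}X_i^\dagger$. Both terms in Eq.~\eqref{eq:general_nonhermitian_covariance_definition} are sesquilinear in $(X_i,X_j)$: the first is $\langle X_i^\dagger X_j\rangle_\psi$, and the second is $\overline{\langle X_i\rangle_\psi}\,\langle X_j\rangle_\psi$, using $\langle X_i^\dagger\rangle_\psi=\overline{\langle X_i\rangle_\psi}$. Hence the covariance transforms in the same way as the Gram matrix:
\begin{align}
\mathrm{Cov}_\psi(\mathbf Y)=P^\dagger\,\mathrm{Cov}_\psi(\mathbf X)\,P .
\end{align}

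\textbf{Step 3: cancellation inside the trace.} Combining the two steps and using cyclicity of the trace,
\begin{align}
\Tr\!\left[\mathrm G(\mathbf Y)^{-1}\mathrm{Cov}_\psi(\mathbf Y)\right]
=\Tr\!\left[P^{-1}\mathrm G(\mathbf X)^{-1}(P^\dagger)^{-1}P^\dagger\,\mathrm{Cov}_\psi(\mathbf X)\,P\right]
=\Tr\!\left[\mathrm G(\mathbf X)^{-1}\mathrm{Cov}_\psi(\mathbf X)\right].
\end{align}
The same computation applies to mixed states, since expectations of the form $\Tr[\rho\,\cdot]$ are still sesquilinear in the operators.

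\textbf{Main obstacle.} The only delicate point is bookkeeping. Both $\mathrm G$ and $\mathrm{Cov}_\psi$ must be conjugate-linear in the \emph{same} slot, namely the first index, so that they pick up the same $P^\dagger(\cdot)P$ congruence rather than mismatched transposes or conjugates that would not cancel. This requires checking two things. First, that the $\tau$-twisted form in Eq.~\eqref{eq:inner_product_g} is conjugate-linear in $x$, which holds because $\tau$ is antilinear and $\kappa$ is bilinear. Second, that $\pi$ extended to $\g$ satisfies $\pi(cX)^\dagger=\bar c\,\pi(X)^\dagger$.

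As a sanity check, the orthonormal case $\mathrm G=2\1$ reduces Eq.~\eqref{eq:GUR_Lie} to Eq.~\eqref{eq:QUncertainty_Lie_Algebra}. Consequently every basis (Chevalley, anti-Hermitian, or observable) yields the quantity of Def.~\ref{def:Lie_Algebraic_Uncertainty_Relations}.
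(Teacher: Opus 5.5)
Your proposal is correct and follows the paper's own proof: both use the sesquilinearity of $(\cdot,\cdot)_{\g}$ and of $\mathrm{Cov}_\psi$ to get the common congruence $\overline{N}(\cdot)N^{\T}$ (your $P^\dagger(\cdot)P$ with $P=N^{\T}$). It follows that $\mathrm G(\mathbf Y)^{-1}\mathrm{Cov}_\psi(\mathbf Y)$ is similar to $\mathrm G(\mathbf X)^{-1}\mathrm{Cov}_\psi(\mathbf X)$, so the trace is unchanged; your check that both forms are conjugate-linear in the same slot spells out what the paper states in one line.
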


\begin{proof}
The inner product and covariance are both sesquilinear, with conjugate linearity in the first argument. 
Thus, under an arbitrary change of basis $\mathbf Y=M\mathbf X$, where $M\in\mathrm{GL}_n(\mathbb C)$, we have
\begin{align}\label{eq:b5_covariance_basis_change}
    \mathrm{G}(\mathbf Y)&=\overline M \mathrm{G}(\mathbf X)M^{\T},
\end{align}
and
\begin{align}
    \mathrm{Cov}_\psi(\mathbf Y)&=\overline M\mathrm{Cov}_\psi(\mathbf X)M^{\T},
\end{align}
with $\overline M$ denoting entrywise complex conjugation. 
This implies
\begin{align}
    \mathrm{G}(\mathbf Y)^{-1}\mathrm{Cov}_\psi(\mathbf Y)
    =(M^{\T})^{-1}
    \mathrm{G}(\mathbf X)^{-1}\mathrm{Cov}_\psi(\mathbf X)M^{\T},
\end{align}
which leads to the same trace:
\begin{align}\label{eq:intrinsic_uncertainty_basis_independence}
    \Tr\!\left[\mathrm{G}(\mathbf Y)^{-1}\mathrm{Cov}_\psi(\mathbf Y)\right]
    =\Tr\!\left[\mathrm{G}(\mathbf X)^{-1}\mathrm{Cov}_\psi(\mathbf X)\right].
\end{align}
Hence $\Delta_\psi^2(\g)$ is independent of the chosen basis.
\end{proof}

By Thm.~\ref{thm:Basis_Independence_Orthogonal_Representation}, the three bases considered so far -- the Chevalley basis, the anti-Hermitian basis and the observable basis -- yield the same quantum uncertainty. 
To gain further insight, we derive its explicit form in each basis, highlighting complementary aspects of the underlying algebraic structure.
We begin with the Chevalley basis 
\begin{align}\label{eq:GUR_Chevalley}
    \mathbf X_{\mathrm{Ch}}
    \coloneqq
    \left\{h_i, e_\alpha, f_\alpha\right\}_{i, \alpha}
\end{align}
whose Gram matrix is given by
\begin{align}
    \mathrm{G}(\mathbf X_{\mathrm{Ch}})
    =
    \mathrm{G}_{\mathfrak h}
    \oplus
    \begin{pmatrix}
    \frac{2}{\langle\alpha,\alpha\rangle} & 0 & \cdots & 0\\
    0 & \frac{2}{\langle\alpha,\alpha\rangle} & \cdots & 0\\
    \vdots & \vdots & \ddots & \vdots\\
    0 & 0 & \cdots & \frac{2}{\langle\alpha,\alpha\rangle}
    \end{pmatrix}
    \oplus
    \begin{pmatrix}
    \frac{2}{\langle\alpha,\alpha\rangle} & 0 & \cdots & 0\\
    0 & \frac{2}{\langle\alpha,\alpha\rangle} & \cdots & 0\\
    \vdots & \vdots & \ddots & \vdots\\
    0 & 0 & \cdots & \frac{2}{\langle\alpha,\alpha\rangle}
    \end{pmatrix}.
\end{align}
The corresponding covariance matrix takes the block form
\begin{align}\label{eq:Covariance_Chevallley}
        \mathrm{Cov}_\psi
        (\mathbf X_{\mathrm{Ch}})
        =\begin{pmatrix}
            \mathrm{Cov}_\psi(\mathbf h)&K_\psi\\
            K_\psi^\dagger&\mathrm{Cov}_\psi(\mathbf e\cup \mathbf f)
        \end{pmatrix},
\end{align}
where $\mathbf h\coloneqq\{h_i\}_{i}$, $\mathbf e\coloneqq\{e_{\alpha}\}_{\alpha\in\Phi^+}$ and $\mathbf f\coloneqq\{f_{\alpha}\}_{\alpha\in\Phi^+}$.
Here $K_\psi$ collects the cross-covariance terms between the Cartan and root sectors; 
its explicit form is not needed to evaluate the uncertainty (see Def.~\ref{def:Generalized_Lie_Algebraic_Uncertainty_Relations}). 
In the Chevalley basis $\mathbf X_{\mathrm{Ch}}$ (see Eq.~\eqref{eq:GUR_Chevalley}), the generalized quantum uncertainty therefore takes the form
\begin{align}
    \Delta_\psi^2(\g)
    \coloneqq
    &\Tr\,[
    \mathrm{G}(\mathbf X_{\mathrm{Ch}})^{-1}
    \mathrm{Cov}_\psi(\mathbf X_{\mathrm{Ch}})]\\
    =
    &\Tr\,[
    \mathrm{G}_{\mathfrak h}^{-1}
    \oplus
    \begin{pmatrix}
    \frac{\langle\alpha,\alpha\rangle}{2} & 0 & \cdots & 0\\
    0 & \frac{\langle\alpha,\alpha\rangle}{2} & \cdots & 0\\
    \vdots & \vdots & \ddots & \vdots\\
    0 & 0 & \cdots & \frac{\langle\alpha,\alpha\rangle}{2}
    \end{pmatrix}
    \oplus
    \begin{pmatrix}
    \frac{\langle\alpha,\alpha\rangle}{2} & 0 & \cdots & 0\\
    0 & \frac{\langle\alpha,\alpha\rangle}{2} & \cdots & 0\\
    \vdots & \vdots & \ddots & \vdots\\
    0 & 0 & \cdots & \frac{\langle\alpha,\alpha\rangle}{2}
    \end{pmatrix}
    \cdot
    \mathrm{Cov}_\psi(\mathbf X_{\mathrm{Ch}})]\\
    =
    &
    \Tr\,[\mathrm{G}_{\mathfrak h}^{-1}\mathrm{Cov}_\psi(\mathbf h)]
    +
    \sum_{\alpha\in\Phi^+}\frac{\langle\alpha,\alpha\rangle}{2}
    \left(\Delta_\psi^2(e_\alpha)+\Delta_\psi^2(f_\alpha)\right).
\end{align}

Combining these expressions, we obtain the generalized quantum uncertainty in the Chevalley basis:

\begin{mycor}{Generalized Lie Algebraic Quantum Uncertainty in the Chevalley Basis}{GUR_Chevalley}
In the Chevalley basis $\mathbf X_{\mathrm{Ch}}$ (see Eq.~\eqref{eq:GUR_Chevalley}), the quantum uncertainty $\Delta_\psi^2(\g)$ defined in Eq.~\eqref{eq:GUR_Lie} is given by
\begin{align}\label{eq:chevalley_uncertainty_explicit}
    \Delta_\psi^2(\g)
    =
    \sum_{i,j=1}^{r}(\mathrm{G}_{\mathfrak h}^{-1})_{ij}\mathrm{Cov}_\psi(h_i,h_j)
    +\sum_{\alpha\in\Phi^+}\frac{\langle\alpha,\alpha\rangle}{2}
    \left(\Delta_\psi^2(e_\alpha)+\Delta_\psi^2(f_\alpha)\right).
\end{align}
\end{mycor}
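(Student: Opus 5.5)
The corollary follows by computing the Gram matrix $\mathrm G(\mathbf X_{\mathrm{Ch}})$ explicitly from the Hermitian inner product $(x,y)_{\g}=-\kappa(\tau(x),y)$ of Eq.~\eqref{eq:inner_product_g}. We then show that it is block diagonal, with a Cartan block and a diagonal root block. Once that is done, the trace in Eq.~\eqref{eq:GUR_Lie} only sees the diagonal blocks of the covariance matrix in Eq.~\eqref{eq:Covariance_Chevallley}, and the cross block $K_\psi$ drops out.

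\emph{Orthogonality of root spaces.} The key input is the standard fact that $\kappa(\g_\alpha,\g_\beta)=0$ unless $\alpha+\beta=0$, and that $\kappa(\mathfrak h,\g_\alpha)=0$ for every root $\alpha$. I would prove this directly from invariance of $\kappa$. For $h\in\mathfrak h$, $x\in\g_\alpha$ and $y\in\g_\beta$, invariance gives
\begin{align}
0=\kappa([h,x],y)+\kappa(x,[h,y])=(\alpha+\beta)(h)\,\kappa(x,y).
\end{align}
Choosing $h$ with $(\alpha+\beta)(h)\neq0$ forces $\kappa(x,y)=0$. The same computation with $x\in\mathfrak h$, where the weight is $0$, gives $\kappa(\mathfrak h,\g_\alpha)=0$. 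This orthogonality is the one step needing an actual argument; everything afterwards is bookkeeping.

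\emph{Entries of the Gram matrix.} Using $\tau(h_i)=-h_i$ and $\tau(e_\alpha)=-f_\alpha$ from Eq.~\eqref{eq:tau_conjugation}, together with $\tau(f_\alpha)=-e_\alpha$ (since $\tau$ is an involution), the entries are:
\begin{align}
(h_i,h_j)_{\g}&=\kappa(h_i,h_j)=G_{ij}, \\
(e_\alpha,e_\beta)_{\g}&=\kappa(f_\alpha,e_\beta), \\
(f_\alpha,f_\beta)_{\g}&=\kappa(e_\alpha,f_\beta), \\
(e_\alpha,f_\beta)_{\g}&=\kappa(f_\alpha,f_\beta)=0, \\
(h_i,e_\alpha)_{\g}&=\kappa(h_i,e_\alpha)=0, \\
(h_i,f_\alpha)_{\g}&=\kappa(h_i,f_\alpha)=0.
\end{align}
The zeros follow because $-\alpha-\beta\neq0$ for positive roots, and because $\kappa(\mathfrak h,\g_{\pm\alpha})=0$. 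For $\alpha\neq\beta$ the entries $\kappa(f_\alpha,e_\beta)$ and $\kappa(e_\alpha,f_\beta)$ also vanish by orthogonality. For $\alpha=\beta$ the value is $\kappa(e_\alpha,f_\alpha)=2/\langle\alpha,\alpha\rangle$, already computed in Subsec.~\ref{subsec:Casimir_Eigenvalue}. Hence
\begin{align}
\mathrm G(\mathbf X_{\mathrm{Ch}})=\mathrm G_{\mathfrak h}\oplus D\oplus D, \qquad D=\mathrm{diag}\bigl(2/\langle\alpha,\alpha\rangle\bigr)_{\alpha\in\Phi^+},
\end{align}
where $\mathrm G_{\mathfrak h}$ is the Cartan Gram matrix $G$ of Def.~\ref{def:Cartan_Gram_Matrix}, which is invertible.

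\emph{Evaluating the trace.} The inverse is $\mathrm G_{\mathfrak h}^{-1}\oplus D^{-1}\oplus D^{-1}$, with $D^{-1}=\mathrm{diag}(\langle\alpha,\alpha\rangle/2)_{\alpha\in\Phi^+}$. For a block-diagonal matrix $B$ and an arbitrary block matrix $C$ with the same partition, $\Tr[BC]=\sum_k\Tr[B_{kk}C_{kk}]$, so the off-diagonal block $K_\psi$ never contributes. The Cartan block gives $\sum_{i,j}(\mathrm G_{\mathfrak h}^{-1})_{ij}\mathrm{Cov}_\psi(h_j,h_i)$. This equals the stated expression $\sum_{i,j}(\mathrm G_{\mathfrak h}^{-1})_{ij}\mathrm{Cov}_\psi(h_i,h_j)$ because $\mathrm G_{\mathfrak h}^{-1}$ is real symmetric. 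Since $D^{-1}$ is diagonal, the two root blocks pick out only the diagonal covariance entries $\mathrm{Cov}_\psi(e_\alpha,e_\alpha)=\Delta_\psi^2(e_\alpha)$ and $\mathrm{Cov}_\psi(f_\alpha,f_\alpha)=\Delta_\psi^2(f_\alpha)$, each weighted by $\langle\alpha,\alpha\rangle/2$. Adding the three contributions gives Eq.~\eqref{eq:chevalley_uncertainty_explicit}.
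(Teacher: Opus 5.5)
Your proposal is correct and follows the paper's route: write $\mathrm{G}(\mathbf X_{\mathrm{Ch}})=\mathrm{G}_{\mathfrak h}\oplus D\oplus D$, invert it blockwise, and note that the trace sees only the diagonal covariance blocks, so the cross terms in $K_\psi$ drop out. The paper simply asserts this block structure of $\mathrm{G}(\mathbf X_{\mathrm{Ch}})$, whereas you derive it from $\kappa(\g_\alpha,\g_\beta)=0$ for $\alpha+\beta\neq0$ and the conjugation $\tau$, and you also make explicit the index swap that uses the symmetry of $\mathrm{G}_{\mathfrak h}^{-1}$.
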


The anti-Hermitian $\mathbf X_{\mathrm{aH}}$ (see Def.~\ref{def:Anti_Hermitian_Basis}) and observable $\mathbf X_{\mathrm{ob}}$ bases,
\begin{align}
    \mathbf X_{\mathrm{aH}}
    &\coloneqq
    \left\{H_i, E_\alpha, F_\alpha\right\}_{i, \alpha},\label{eq:GUR_anti_Hermitian}\\
    \mathbf X_{\mathrm{ob}}
    &\coloneqq
    \left\{\mathrm{i}H_i, \mathrm{i}E_\alpha, \mathrm{i}F_\alpha\right\}_{i, \alpha},\label{eq:GUR_observable}
\end{align}
share the Gram matrix 
\begin{align}
    \mathrm{G}(\mathbf X_{\mathrm{aH}})=
    \mathrm{G}(\mathbf X_{\mathrm{ob}})=
    2\1,
\end{align}
with inverse 
\begin{align}
    \mathrm{G}(\mathbf X_{\mathrm{aH}})^{-1}=
    \mathrm{G}(\mathbf X_{\mathrm{ob}})^{-1}=
    \frac{1}{2}\1.
\end{align}
For the anti-Hermitian basis, the generalized quantum uncertainty therefore reduces to

\begin{mycor}{Generalized Lie Algebraic Quantum Uncertainty in the Anti-Hermitian Basis}{GUR_anti_Hermitian}
In the observable basis $\mathbf X_{\mathrm{aH}}$ specified in Eq.~\eqref{eq:GUR_anti_Hermitian}, the quantum uncertainty of Eq.~\eqref{eq:GUR_Lie} can be expressed as
\begin{align}\label{eq:anti_hermitian_uncertainty}
    \Delta_\psi^2(\g_{\mathbb R})
    =\frac12\sum_{i=1}^{r}\Delta_\psi^2(H_i)
    +\frac12\sum_{\alpha\in\Phi^+}
    \left(\Delta_\psi^2(E_\alpha)+\Delta_\psi^2(F_\alpha)\right).
\end{align}
\end{mycor}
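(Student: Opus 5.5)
This is immediate from the Gram-matrix form in Eq.~\eqref{eq:GUR_Lie}: I would just evaluate that trace in the basis $\mathbf X_{\mathrm{aH}}$ of Eq.~\eqref{eq:GUR_anti_Hermitian}. By Thm.~\ref{thm:Basis_Independence_Orthogonal_Representation}, this choice of basis costs nothing.

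\emph{Step 1: the Gram matrix is $2\1$.} Each $H_i,E_\alpha,F_\alpha$ lies in $\g_{\mathbb R}$, so the antilinear involution acts trivially, $\tau(Q_a)=Q_a$. Hence $(Q_a,Q_b)_{\g}=-\kappa(Q_a,Q_b)=(Q_a,Q_b)$, and by Eqs.~\eqref{eq:Inner}--\eqref{eq:Inner3} this equals $2\delta_{ab}$, with all mixed inner products vanishing. Therefore $\mathrm G(\mathbf X_{\mathrm{aH}})=2\1$ and $\mathrm G(\mathbf X_{\mathrm{aH}})^{-1}=\tfrac12\1$.

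\emph{Step 2: reduce the trace to variances.} With a scalar inverse Gram matrix, only the diagonal of the covariance matrix survives:
\begin{align}
\Tr\bigl[\tfrac12\1\,\mathrm{Cov}_\psi(\mathbf X_{\mathrm{aH}})\bigr]=\tfrac12\sum_a\mathrm{Cov}_\psi(Q_a,Q_a).
\end{align}
By Eq.~\eqref{eq:general_nonhermitian_variance_diagonal}, each diagonal entry is the generalized variance $\Delta_\psi^2(Q_a)$ of Def.~\ref{def:Variance_Anti_Hermitian_Operators}.

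\emph{Step 3: regroup.} Splitting the sum over $a$ into the Cartan sector $\{H_i\}_{i=1}^r$ and the root sector $\{E_\alpha,F_\alpha\}_{\alpha\in\Phi^+}$ gives exactly Eq.~\eqref{eq:anti_hermitian_uncertainty}. This also agrees with Def.~\ref{def:Lie_Algebraic_Uncertainty_Relations}, which justifies the notation $\Delta_\psi^2(\g_{\mathbb R})$.

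There is no real obstacle. The only point needing care is Step 1: the Hermitian extension $(\cdot,\cdot)_{\g}$ has to restrict to $-\kappa$ on $\g_{\mathbb R}$, which follows from $\tau|_{\g_{\mathbb R}}=\mathrm{id}$. Once that is checked, the orthonormality relations Eqs.~\eqref{eq:Inner}--\eqref{eq:Inner3} apply directly.
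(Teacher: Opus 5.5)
Your proposal is correct and follows the paper's own argument. The paper likewise observes that $\mathrm{G}(\mathbf X_{\mathrm{aH}})=2\1$, so $\mathrm{G}(\mathbf X_{\mathrm{aH}})^{-1}=\frac12\1$, and the trace in Eq.~\eqref{eq:GUR_Lie} then reduces to half the sum of the diagonal covariances, which are the variances. Your explicit check that $\tau$ fixes $\g_{\mathbb R}$, so that $(\cdot,\cdot)_{\g}$ restricts to $-\kappa$ there, is something the paper states just after Eq.~\eqref{eq:g_Inner_Prodduct} rather than re-deriving; the appeal to basis independence is harmless but not needed.
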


Finally, multiplying each anti-Hermitian basis element by $i$ yields the corresponding observable basis and leaves the quantum uncertainty unchanged.

\begin{mycor}{Generalized Lie Algebraic Quantum Uncertainty in the Observable Basis}{GUR_observable}
In the observable basis $\mathbf X_{\mathrm{ob}}$ defined in Eq.~\eqref{eq:GUR_observable}, the quantum uncertainty of Eq.~\eqref{eq:GUR_Lie} reduces to
\begin{align}\label{eq:observable_uncertainty}
    \Delta_\psi^2(i\g_{\mathbb R})
    =
    \frac12\sum_{i=1}^{r}
    \Delta_\psi^2(\mathrm{i}H_i)
    +\frac12\sum_{\alpha\in\Phi^+}
    \left(
    \Delta_\psi^2(\mathrm{i}E_\alpha)+
    \Delta_\psi^2(\mathrm{i}F_\alpha)\right).
\end{align}
\end{mycor}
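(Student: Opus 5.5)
The plan is to evaluate the trace in Definition~\ref{def:Generalized_Lie_Algebraic_Uncertainty_Relations} directly in the observable basis $\mathbf X_{\mathrm{ob}}=\{\mathrm{i}H_i,\mathrm{i}E_\alpha,\mathrm{i}F_\alpha\}$. Two ingredients are needed: the Gram matrix $\mathrm G(\mathbf X_{\mathrm{ob}})$ and the diagonal of $\mathrm{Cov}_\psi(\mathbf X_{\mathrm{ob}})$. As a cross-check, I would also derive the result from Corollary~\ref{cor:GUR_anti_Hermitian} via Theorem~\ref{thm:Basis_Independence_Orthogonal_Representation}.

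\emph{Step 1: the Gram matrix.} For $X,Y\in\g_{\mathbb R}$ the antilinear involution gives $\tau(\mathrm{i}X)=-\mathrm{i}X$. Substituting into Eq.~\eqref{eq:inner_product_g},
\begin{align}
(\mathrm{i}X,\mathrm{i}Y)_{\g}=-\kappa(-\mathrm{i}X,\mathrm{i}Y)=-\kappa(X,Y)=(X,Y).
\end{align}
Hence multiplication by $\mathrm{i}$ is an isometry from $\g_{\mathbb R}$ onto $\mathrm{i}\g_{\mathbb R}$ for $(\cdot,\cdot)_{\g}$. The orthogonality relations \eqref{eq:Inner}--\eqref{eq:Inner3}, with all mixed products vanishing, then give $\mathrm G(\mathbf X_{\mathrm{ob}})=2\1$. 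Therefore $\mathrm G(\mathbf X_{\mathrm{ob}})^{-1}=\tfrac12\1$.

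\emph{Step 2: the trace.} Since the inverse Gram matrix is scalar, $\Tr[\tfrac12\1\cdot\mathrm{Cov}_\psi(\mathbf X_{\mathrm{ob}})]$ only involves the diagonal entries of the covariance matrix. By Eq.~\eqref{eq:general_nonhermitian_variance_diagonal}, these diagonal entries are the variances $\Delta_\psi^2(\mathrm{i}H_i)$, $\Delta_\psi^2(\mathrm{i}E_\alpha)$ and $\Delta_\psi^2(\mathrm{i}F_\alpha)$. Summing them with the prefactor $\tfrac12$ gives Eq.~\eqref{eq:observable_uncertainty}.

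\emph{Step 3: consistency check.} The basis change $\mathbf X_{\mathrm{ob}}=M\mathbf X_{\mathrm{aH}}$ has $M=\mathrm{i}\1$. Theorem~\ref{thm:Basis_Independence_Orthogonal_Representation} then shows the two basis expressions coincide. Termwise, $\mathrm{Cov}_\psi(\mathrm{i}X,\mathrm{i}X)=\overline{\mathrm{i}}\,\mathrm{i}\,\Delta_\psi^2(X)=\Delta_\psi^2(X)$, which is Lemma~\ref{lem:Variance_Equivalence_Under_Hermitianization}. This matches Eq.~\eqref{eq:observable_uncertainty} with Eq.~\eqref{eq:anti_hermitian_uncertainty}.

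The only delicate point is Step 1. One must track the antilinearity of $\tau$ and the conjugate-linearity of $(\cdot,\cdot)_{\g}$ in its first slot, so that the two factors of $\mathrm{i}$ cancel to $+1$ rather than $-1$. Once the Gram matrix is confirmed to be $2\1$, the remainder is a direct substitution.
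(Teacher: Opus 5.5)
Your proposal is correct and follows the paper's route: the paper likewise asserts $\mathrm G(\mathbf X_{\mathrm{ob}})=2\1$, so that $\mathrm G(\mathbf X_{\mathrm{ob}})^{-1}=\tfrac12\1$, substitutes this into the trace, and links the result to the anti-Hermitian expression through the invariance of the variance under multiplication by $\mathrm{i}$. Your explicit check that $(\mathrm{i}X,\mathrm{i}Y)_{\g}=(X,Y)$, using the antilinear involution $\tau$, fills in a step the paper states without computation.
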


Although the equivalence of Eqs.~\eqref{eq:anti_hermitian_uncertainty} and \eqref{eq:observable_uncertainty} is immediate, their agreement with Eq.~\eqref{eq:chevalley_uncertainty_explicit} is less apparent. 
Theorem \ref{thm:Basis_Independence_Orthogonal_Representation} establishes that all three expressions describe the same quantum uncertainty in different operator bases, as summarized in the following corollary.

\begin{mycor}{Equivalence of Uncertainty in Chevalley, Anti-Hermitian and Observable Bases}{Equivalence_Basis_Representations}
For a Lie algebra $\g$ equipped with an inner product $(\cdot,\cdot)_{\g}$, the quantum uncertainty of a state $\psi$, defined in Eq.~\eqref{eq:inner_product_g}, admits equivalent expressions in the Chevalley, anti-Hermitian and observable bases:
\begin{align}\label{eq:three_formulations_equivalence}
        \Delta_\psi^2(\g)=\Delta_\psi^2(\g_{\mathbb R})=\Delta_\psi^2(\mathrm{i}\g_{\mathbb R}).
    \end{align}
\end{mycor}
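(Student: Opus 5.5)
The corollary follows from Thm.~\ref{thm:Basis_Independence_Orthogonal_Representation}, once we check that the three bases are related by invertible linear maps inside the same complex space $\g$ carrying the fixed inner product $(\cdot,\cdot)_{\g}$ of Eq.~\eqref{eq:inner_product_g}.

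\emph{Step 1: relate the three bases.} By Def.~\ref{def:Anti_Hermitian_Basis}, $\mathbf X_{\mathrm{aH}}=N\mathbf X_{\mathrm{Ch}}$. On the Cartan block, $N$ acts as $-\mathrm{i}U$, which is invertible since $U\in\mathrm{GL}_r(\mathbb R)$. For each positive root $\alpha$, it acts on the pair $(e_\alpha,f_\alpha)$ by the $2\times2$ matrix
\begin{align}
\sqrt{\tfrac{\langle\alpha,\alpha\rangle}{2}}\begin{pmatrix}-\mathrm{i}&-\mathrm{i}\\-1&1\end{pmatrix},
\end{align}
whose determinant is $-\mathrm{i}\langle\alpha,\alpha\rangle\neq0$. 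Hence $N$ is block diagonal with invertible blocks, so $N\in\mathrm{GL}_n(\mathbb C)$. Similarly, $\mathbf X_{\mathrm{ob}}=(\mathrm{i}\1)\mathbf X_{\mathrm{aH}}$. Since $\g_{\mathbb R}$ spans $\g$ over $\mathbb C$, all three families are $\mathbb C$-bases of $\g$.

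\emph{Step 2: apply Thm.~\ref{thm:Basis_Independence_Orthogonal_Representation}.} The Gram matrices are computed with the single Hermitian form $(\cdot,\cdot)_{\g}$. The covariance matrices are computed in the same representation and state, with each element understood as its complex-linear extension $\pi(x)$. Under these conditions the theorem gives $\Tr[\mathrm G(\mathbf X)^{-1}\mathrm{Cov}_\psi(\mathbf X)]$ equal for $\mathbf X_{\mathrm{Ch}}$, $\mathbf X_{\mathrm{aH}}$ and $\mathbf X_{\mathrm{ob}}$.

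\emph{Step 3: match the labels.} The basis-$\mathbf X_{\mathrm{aH}}$ expression is exactly Cor.~\ref{cor:GUR_anti_Hermitian}, and the basis-$\mathbf X_{\mathrm{ob}}$ expression is Cor.~\ref{cor:GUR_observable}. These were labelled $\Delta_\psi^2(\g_{\mathbb R})$ and $\Delta_\psi^2(\mathrm{i}\g_{\mathbb R})$, so the chain of equalities in Eq.~\eqref{eq:three_formulations_equivalence} follows. As a consistency check, Lem.~\ref{lem:Variance_Equivalence_Under_Hermitianization} gives the second equality directly.

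\emph{Main obstacle.} The delicate point is that the covariance must transform as $\overline M\,\mathrm{Cov}\,M^{\T}$ under complex basis changes. This requires $\pi$ to be complex-linear on $\g$ and requires the adjoint to be the Hilbert-space adjoint, so that $\pi(cx)^\dagger=\bar c\,\pi(x)^\dagger$. This in turn uses $\pi(x)^\dagger=\pi(-\tau(x))$, which is consistent with $e_\alpha^\dagger=f_\alpha$, $h_i^\dagger=h_i$ and Eq.~\eqref{eq:tau_conjugation}. The Gram matrix must transform in the same way, which holds because $(\cdot,\cdot)_{\g}$ is conjugate linear in its first slot. I would verify this compatibility explicitly on the $\mathfrak{sl}(2,\mathbb C)$ example before invoking the theorem.
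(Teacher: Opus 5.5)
Your proposal is correct and follows the paper's own route: the paper obtains the corollary directly from Thm.~\ref{thm:Basis_Independence_Orthogonal_Representation}, and you additionally write out the invertible change-of-basis matrices that the paper leaves implicit, namely the Cartan block $-\mathrm{i}U$, the root blocks with determinant $-\mathrm{i}\langle\alpha,\alpha\rangle$, and the scalar $\mathrm{i}\1$. One small point: the covariance law $\overline M\,\mathrm{Cov}_\psi\,M^{\T}$ needs only complex-linearity of $\pi$ and antilinearity of the Hilbert-space adjoint, so the identity $\pi(x)^\dagger=-\pi(\tau(x))$ serves as a consistency check rather than a prerequisite.
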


Since all three bases yield the same uncertainty, the optimal bound (see Thm.~\ref{thm:Lie_Algebraic_Uncertainty_Relations}) established in the anti-Hermitian basis for an irreducible highest weight representation holds for all three expressions. 
Although derived for pure states, the bound remains valid and optimal for mixed states: 
concavity of the uncertainty preserves the lower bound under mixing, while a pure state still attains it.

We conclude this subsection with the defining representations $\pi_{\mathrm{def}}$ (see Eq.~\eqref{eq:def_rep}) of $\mathfrak{sl}(2, \mathbb C)$ and $\mathfrak{sl}(3, \mathbb C)$ (see Tab.~\ref{tab:su3_three_generator_bases}), illustrating how algebraic structure determines quantum uncertainty (see Def.~\ref{def:Generalized_Lie_Algebraic_Uncertainty_Relations}) and how its three equivalent expressions arise. 
For these type-A examples, the chosen inner product $(\cdot,\cdot)_{\g}$ (see Eq.~\eqref{eq:inner_product_g}) coincides with the Hilbert-Schmidt inner product, allowing the Gram matrices $\mathrm{G}(\mathbf X)$ to be computed directly from the matrix generators. 
Since all roots have squared length $2$, Eqs.~\eqref{eq:anti_hermitian_root_generators_E} and~\eqref{eq:anti_hermitian_root_generators_F} reduce to $E_\alpha=-\mathrm{i}(e_\alpha+f_\alpha)$ and
$F_\alpha=-(e_\alpha-f_\alpha)$, respectively. 

\begin{table}[htbp]
\centering
\scriptsize
\begin{tblr}{
      width = \textwidth,
      colspec = {
        Q[c,m,wd=0.08\textwidth]|
        X[c,m]||
        Q[c,m,wd=0.15\textwidth]|
        X[c,m]||
        Q[c,m,wd=0.15\textwidth]|
        X[c,m,wd=0.15\textwidth]
      },
      row{1} = {bg=gray!50,font=\bfseries},
      row{2} = {bg=gray!20,font=\bfseries},
      hlines,
      vlines,
      cells = {m,c},
      cell{3-5}{2} = {bg=mRed},
      cell{3-5}{4} = {bg=mGreen},
      cell{3-5}{6} = {bg=mBlue},
}

\SetCell[c=6]{m,c}
\textbf{Bases for $\mathfrak{sl}(2,\mathbb C)$, $\mathfrak{su}(2)$, and $\mathrm{i}\mathfrak{su}(2)$}
&
&
&
&
&
\\

Chevalley 
&
Matrix
&
Anti-Hermitian 
&
Matrix
&
Observable
&
Matrix
\\

$h$
&
$\begin{pmatrix}
1&0\\
0&-1
\end{pmatrix}$
&
$H=-\mathrm{i}h$
&
$\begin{pmatrix}
-\mathrm{i}&0\\
0&\mathrm{i}
\end{pmatrix}$
&
$\mathrm{i}H=h$
&
$\begin{pmatrix}
1&0\\
0&-1
\end{pmatrix}
=\sigma_z=2J_z$
\\

$e_\alpha$
&
$\begin{pmatrix}
0&1\\
0&0
\end{pmatrix}$
&
$E_\alpha=-\mathrm{i}(e_\alpha+f_\alpha)$
&
$\begin{pmatrix}
0&-\mathrm{i}\\
-\mathrm{i}&0
\end{pmatrix}$
&
$\mathrm{i}E_\alpha=e_\alpha+f_\alpha$
&
$\begin{pmatrix}
0&1\\
1&0
\end{pmatrix}
=\sigma_x=2J_x$
\\

$f_{\alpha}$
&
$\begin{pmatrix}
0&0\\
1&0
\end{pmatrix}$
&
$F_\alpha=-(e_\alpha-f_\alpha)$
&
$\begin{pmatrix}
0&-1\\
1&0
\end{pmatrix}$
&
$\mathrm{i}F_\alpha=-\mathrm{i}(e_\alpha-f_\alpha)$
&
$\begin{pmatrix}
0&-\mathrm{i}\\
\mathrm{i}&0
\end{pmatrix}
=\sigma_y=2J_y$
\\

\end{tblr}

\caption{
\textbf{Operator Bases for $\mathfrak{sl}(2,\mathbb C)$, $\mathfrak{su}(2)$, and $\mathrm{i}\mathfrak{su}(2)$.}
The Chevalley basis is $\mathbf{X}_{\mathrm{Ch}}=\{h,e_\alpha,f_\alpha\}$. 
The anti-Hermitian generators $H=-\mathrm{i}h$, $E_\alpha=-\mathrm{i}(e_\alpha+f_\alpha)$ and $F_\alpha=-(e_\alpha-f_\alpha)$, i.e., $\mathbf X_{\mathrm{aH}}=\{H, E_\alpha, F_\alpha\}$, span the compact real form $\mathfrak{su}(2)$ and are mutually orthogonal under the Hilbert-Schmidt inner product, each with squared norm $2$.
Multiplication by $\mathrm{i}$ gives the Hermitian observable basis
$\mathbf{X}_{\mathrm{ob}}=\{\mathrm{i}H,\mathrm{i}E_\alpha,\mathrm{i}F_\alpha\}=\{\sigma_z,\sigma_x,\sigma_y\}$.
}
\label{tab:su2_three_generator_bases}
\end{table}

For $\mathfrak{sl}(2,\mathbb C)$ and its compact real form $\mathfrak{su}(2)$, let $\alpha$ denote the unique positive root.
The Chevalley, anti-Hermitian and observable bases are listed in Table~\ref{tab:su2_three_generator_bases}.
In this case, the Cartan Gram matrix is
\begin{align}
    \mathrm{G}_{\mathfrak h}=
    (\Tr[h^2])=
    (2),
\end{align}
with inverse
\begin{align}
    \mathrm{G}_{\mathfrak h}^{-1}=
    (\frac{1}{2}).
\end{align}
Since $\langle\alpha,\alpha\rangle=2$, the quantum uncertainty (see Cor.~\ref{cor:GUR_Chevalley}) reduces to
\begin{align}\label{eq:su2_chevalley_ur}
\Delta_\psi^2(\mathfrak{sl}(2,\mathbb C))
= 
\frac{1}{2}\Delta_\psi^2(h) + \Delta_\psi^2(e_\alpha) + \Delta_\psi^2(f_\alpha)
= 
\frac{1}{2}
\Delta_\psi^2
\begin{pmatrix} 1&0\\0&-1 \end{pmatrix}
+ 
\Delta_\psi^2
\begin{pmatrix} 0&1\\0&0 \end{pmatrix}
+ 
\Delta_\psi^2
\begin{pmatrix} 0&0\\1&0 \end{pmatrix}.
\end{align}
For the anti-Hermitian basis $\mathbf{X}_{\mathrm{ob}}$, the generators satisfy $(X_i,X_j)_{\g}=2\delta_{ij}$ for any $X_i$ and $X_j\in\mathbf{X}_{\mathrm{ob}}$, so the corresponding Gram matrix is given by
\begin{align}
    \mathrm{G}(\mathbf X_{\mathrm{aH}})
    =
    2\1_{3}.
\end{align}
The quantum uncertainty (see Cor.~\ref{cor:GUR_anti_Hermitian}) is consequently given by
\begin{align}\label{eq:su2_anti_hermitian_uncertainty}
\Delta_\psi^2(\mathfrak{su}(2))
= 
\frac{1}{2} 
\left( 
\Delta_\psi^2(H) + \Delta_\psi^2(E_\alpha) + \Delta_\psi^2(F_\alpha) 
\right) 
= 
\frac{1}{2} 
\left(
\Delta_\psi^2\begin{pmatrix} -\mathrm{i}&0\\0&\mathrm{i} \end{pmatrix}
+ \Delta_\psi^2\begin{pmatrix} 0&-\mathrm{i}\\-\mathrm{i}&0 \end{pmatrix}
+ \Delta_\psi^2\begin{pmatrix} 0&-1\\1&0 \end{pmatrix} 
\right).
\end{align}
Multiplying the generators by $\mathrm{i}$ gives the observable basis $\mathbf{X}_{\mathrm{ob}}$.
In this basis, the quantum uncertainty (see Cor.~\ref{cor:GUR_observable}) is
\begin{align}\label{eq:su2_hermitian_uncertainty}
\Delta_\psi^2(\mathrm{i}\mathfrak{su}(2))
= 
\frac{1}{2} 
\left( 
\Delta_\psi^2(\mathrm{i}H) + \Delta_\psi^2(\mathrm{i}E_\alpha) + \Delta_\psi^2(\mathrm{i}F_\alpha) 
\right)
= 
\frac{1}{2} 
\left( 
\Delta_\psi^2
\begin{pmatrix} 1&0\\0&-1 \end{pmatrix}
+ \Delta_\psi^2
\begin{pmatrix} 0&1\\1&0 \end{pmatrix}
+ \Delta_\psi^2
\begin{pmatrix} 0&-\mathrm{i}\\\mathrm{i}&0 \end{pmatrix} \right).
\end{align}
The parallelogram identity
\begin{align}
    \Delta_\psi^2 (X+Y)+\Delta_\psi^2(X-Y)
    =
    2\Delta_\psi^2(X)+2\Delta_\psi^2(Y),
\end{align}
together with phase invariance
\begin{align}
    \Delta_\psi^2(e^{\mathrm{i}\theta}X)
    =
    \Delta_\psi^2(X),
\end{align}
implies
\begin{align}
    \Delta_\psi^2(e_\alpha)+\Delta_\psi^2(f_{\alpha})
    =
    \frac{1}{2}
    \left(
    \Delta_\psi^2(E_\alpha)+
    \Delta_\psi^2(F_{\alpha})
    \right).
\end{align}
Since $H=-\mathrm{i}h$, these identities establish the equivalence of Eqs.~\eqref{eq:su2_chevalley_ur}, \eqref{eq:su2_anti_hermitian_uncertainty} and \eqref{eq:su2_hermitian_uncertainty}, explicitly confirming Cor.~\ref{cor:Equivalence_Basis_Representations},
\begin{equation}
\Delta_\psi^2(\mathfrak{sl}(2,\mathbb C))
=
\Delta_\psi^2(\mathfrak{su}(2))
=
\Delta_\psi^2(\mathrm{i}\mathfrak{su}(2)).
\end{equation}

The Lie algebra $\mathfrak{sl}(3,\mathbb C)$ and its compact real form $\mathfrak{su}(3)$ have rank 2 and dimension 8. 
The three positive roots are ordered as $\alpha_1,\alpha_2,\alpha_3=\alpha_1+\alpha_2$, giving $|\Phi^{+}|=3$. 
Table~\ref{tab:su3_three_generator_bases} lists the three bases in the defining representation on $\mathbb C^3$.
For the Chevalley basis $\mathbf{X}_{\mathrm{Ch}}$, evaluating $(G_{\mathfrak h})_{ij}=\Tr[h_i h_j]$ gives the Cartan Gram matrix
\begin{align}
    \mathrm{G}_{\mathfrak h}=
    \begin{pmatrix} 
    2 & -1 \\ -1 & 2 
    \end{pmatrix},
\end{align}
with inverse
\begin{align}
    \mathrm{G}_{\mathfrak{h}}^{-1} 
    = 
    \frac{1}{3} 
    \begin{pmatrix} 
    2 & 1 \\ 1 & 2 
    \end{pmatrix}.
\end{align}
The inverse Gram matrix is then given by
\begin{align}
    \mathrm{G}(\mathbf X_{\mathrm{Ch}})^{-1}
    =
    \frac{1}{3} 
    \begin{pmatrix} 
    2 & 1 \\ 1 & 2 
    \end{pmatrix}
    \oplus\1_6.
\end{align}
Substituting this into Eq.~\eqref{eq:GUR_Lie} yields the quantum uncertainty
\begin{align}\label{eq:su3_chevalley_uncertainty}
\Delta_\psi^2(\mathfrak{sl}(3,\mathbb C))
&= 
\sum_{i,j=1}^{2} (\mathrm{G}_{\mathfrak{h}}^{-1})_{ij} \mathrm{Cov}_\psi(h_i,h_j) 
+ 
\sum_{\alpha\in\Phi^+} 
\left( 
\Delta_\psi^2(e_\alpha) + \Delta_\psi^2(f_\alpha) 
\right) \\
&= 
\frac{2}{3}\left(
\Delta_\psi^2(h_1) 
+ 
\mathrm{Cov}_\psi(h_1,h_2) 
+ 
\Delta_\psi^2(h_2)
\right) 
+ 
\sum_{\alpha\in\Phi^+} 
\left( 
\Delta_\psi^2(e_\alpha) + \Delta_\psi^2(f_\alpha) 
\right).
\end{align}
Since $h_1$ and $h_2$ are commuting Hermitian operators, their covariance is real and satisfies $\mathrm{Cov}\psi(h_1,h_2)=\mathrm{Cov}\psi(h_2,h_1)$.

The corresponding compact real form, $\mathfrak{su}(3)$, has the anti-Hermitian basis 
\begin{align}
    \mathbf X_{\mathrm{aH}}=\{H_1, H_2, E_{\alpha_1}, F_{\alpha_1}, E_{\alpha_2}, F_{\alpha_2}, E_{\alpha_1+\alpha_2}, F_{\alpha_1+\alpha_2}\},
\end{align}
with Gram matrix
\begin{align}
    \mathrm{G}(\mathbf X_{\mathrm{aH}})
    =
    2\1_8.
\end{align}
Corollary~\ref{cor:GUR_anti_Hermitian} therefore gives
\begin{align}
\label{eq:su3_anti_hermitian_uncertainty}
\Delta_\psi^2(\mathfrak{su}(3))
&= 
\frac{1}{2} 
\left( 
\Delta_\psi^2(H_1) + \Delta_\psi^2(H_2) + \sum_{\alpha\in\Phi^+} 
\left( 
\Delta_\psi^2(E_\alpha) + \Delta_\psi^2(F_\alpha) 
\right) 
\right).
\end{align}

The observable basis $\mathbf X_{\mathrm{ob}}=\mathrm{i}\mathbf X_{\mathrm{aH}}$ coincides, up to ordering, with the eight Gell-Mann matrices ${\lambda_a}_{a=1}^{8}$. 
The quantum uncertainty (see Cor.~\ref{cor:GUR_observable}) is therefore
\begin{align}\label{eq:su3_hermitian_uncertainty}
\Delta_\psi^2(\mathrm{i}\mathfrak{su}(3)) = \frac{1}{2} \sum_{a=1}^{8} \Delta_\psi^2(\lambda_a).
\end{align}

The Cartan sector illustrates explicitly how the inverse Gram matrix accounts for the non-orthogonality of the Chevalley basis. 
Using $H_1=-\mathrm{i}h_1$ and $H_2=-\mathrm{i}(h_1+2h_2)/\sqrt3$, phase invariance and expansion of the variance give
\begin{align}
\Delta_\psi^2(e_\alpha)+\Delta_\psi^2(f_\alpha)
=
\frac12
\left(
\Delta_\psi^2(E_\alpha)+\Delta_\psi^2(F_\alpha)
\right)
=
\frac12
\left(
\Delta_\psi^2(\mathrm{i}E_\alpha)+\Delta_\psi^2(\mathrm{i}F_\alpha)
\right).
\end{align}
recovering the Cartan contribution in Eq.~\eqref{eq:su3_chevalley_uncertainty}.
For each positive root $\alpha\in\Phi^{+}$, the definitions $E_\alpha=-\mathrm{i}(e_\alpha+f_\alpha)$ and $F_\alpha=-(e_\alpha-f_\alpha)$, together with the parallelogram identity and phase invariance, yield
\begin{equation}
\Delta_\psi^2(e_\alpha) + \Delta_\psi^2(f_\alpha)
= \frac{1}{2}\left( \Delta_\psi^2(E_\alpha) + \Delta_\psi^2(F_\alpha) \right)
= \frac{1}{2}\left( \Delta_\psi^2(\mathrm{i}E_\alpha) + \Delta_\psi^2(\mathrm{i}F_\alpha) \right).
\end{equation}
Combining the Cartan and root parts therefore confirms that the three realizations give the same quantum uncertainty:
\begin{equation}\label{eq:su3_three_realizations_same_uncertainty}
\Delta_\psi^2(\mathfrak{sl}(3,\mathbb C)) = \Delta_\psi^2(\mathfrak{su}(3)) = \Delta_\psi^2(\mathrm{i}\mathfrak{su}(3)).
\end{equation}

Theorem~\ref{cor:Equivalence_Basis_Representations}, illustrated explicitly for $\mathfrak{sl}(2,\mathbb C)$ and $\mathfrak{sl}(3,\mathbb C)$, establishes that the three bases yield the same quantum uncertainty for a fixed representation, state and inner product $(\cdot,\cdot)_{\g}$ (see Eq.~\eqref{eq:inner_product_g}):
$\Delta_\psi^2(\mathfrak{g}) 
= \Delta_\psi^2(\mathfrak{g}_{\mathbb R}) 
= \Delta_\psi^2(\mathrm{i}\mathfrak{g}_{\mathbb R})$ (see Cor.~\ref{cor:Equivalence_Basis_Representations}).
The inverse Cartan Gram matrix and root length weights in the Chevalley basis account for the geometry of its generators. 
In the orthogonal anti-Hermitian and Hermitian bases, this normalization appears as an overall factor of $1/2$, since each generator has squared norm $2$.
The resulting uncertainty is thus intrinsic to the represented operator space equipped with the inner product $(\cdot,\cdot)_{\g}$ and evaluated in the given state. 
It is independent of the coordinates used to describe that space, but may change with the representation, state or metric. The following subsections derive its optimal bound in terms of Dynkin coefficients and the Cartan matrix.


\subsection{Dynkin Coefficients and Cartan Matrix}
\label{subsec:Dynkin_Cartan}

The general bound $2\langle\Lambda,\delta\rangle$ for Lie algebraic uncertainty relations has now been established (see Thm.~\ref{thm:Lie_Algebraic_Uncertainty_Relations}). 
Its explicit evaluation, however, still requires reconstructing the Weyl vector $\delta$ (see Def.~\ref{def:Weyl_Vector}) and computing the inner product $\langle\Lambda,\delta\rangle$ in an abstract root space formulation. 
It is therefore useful to develop an alternative approach that expresses the bound directly in coordinates adapted to the highest weight representation. 
This is particularly important for non-simply-laced Lie algebras, where simple roots have different lengths and the associated root length factors must be retained explicitly.
For this purpose, we introduce two additional concepts: 
Dynkin coefficients and the Cartan matrix. 
Together, they provide a complete coordinate description of highest weights and allow the uncertainty bound to be evaluated in an explicit form. 
We begin with the fundamental weights, which provide the natural basis for defining Dynkin coefficients and are closely tied to the Cartan matrix. 
Formally, the fundamental weights are defined as follows.

\begin{mydef}{Fundamental Weights}{Fundamental_Weights}
For the simple roots $\Pi=\{\alpha_1,\ldots,\alpha_r\}$ (see Eq.~\eqref{eq:simple_roots_set}), the fundamental weights $\Omega\coloneqq\{\omega_i\}_{i=1}^r$ are the basis of $\mathfrak{h}^*$ dual to the simple coroots. They satisfy
\begin{align}
\label{eq:fundamental_weights}
   \omega_i(h_{\alpha_j})
   =
   \frac{2\langle \omega_i,\alpha_j\rangle}
        {\langle \alpha_j,\alpha_j\rangle}
   =
   \delta_{ij},
   \quad 1\leqslant i,j\leqslant r,
\end{align}
where $h_{\alpha_j}$ is the coroot associated with the simple root $\alpha_j\in\Pi$. 
\end{mydef}

The fundamental weights therefore form the basis dual to the simple coroots and provide the natural coordinate system for highest weights. In this basis, every dominant integral highest weight can be expressed as a linear combination of fundamental weights, with the corresponding coefficients known as the Dynkin coefficients.

\begin{mydef}{Dynkin Coefficients}{Dynkin_Coefficients}
Consider a finite-dimensional irreducible highest weight representation $V(\Lambda)$ of a complex semisimple Lie algebra $\g$, with highest weight $\Lambda$. 
Once a system of simple roots $\Pi=\{\alpha_i\}_{i=1}^r$ (see Eq.~\eqref{eq:simple_roots_set}) has been fixed, the associated fundamental weights $\Omega=\{\omega_i\}_{i=1}^r$ form the basis dual to the simple coroots $\{h_{\alpha_i}\}_{i=1}^r$, as in Eq.~\eqref{eq:fundamental_weights}. 
They therefore provide the coordinates for highest weights. 
In this basis, $\Lambda$ admits the expansion
\begin{align}
\label{eq:highest_weight_expansion}
    \Lambda
    =\sum_{i=1}^r
    \lambda_i\omega_i,
\end{align}
The coefficients $\{\lambda_i\}_{i=1}^{r}$ are non-negative integers and are called the Dynkin coefficients of the representation $V(\Lambda)$, and they are given by
\begin{align}
\label{eq:dynkin_coefficients}
    \lambda_i
    =\Lambda(h_{\alpha_i})
    =\frac{2\langle\Lambda,\alpha_i\rangle}{\langle\alpha_i,\alpha_i\rangle }.
\end{align}
\end{mydef}

We now turn to the second central concept of this subsection: 
the Cartan matrix. 
Formally, it is defined as follows.

\begin{mydef}{Cartan Matrix}{Cartan_Matrix}
A choice of simple roots $\Pi=\{\alpha_i\}_{i=1}^r$ (see Eq.~\eqref{eq:simple_roots_set}) determines the corresponding simple coroots $\{h_{\alpha_j}\}_{j=1}^r$. 
The Cartan matrix is the matrix $C=(C_{ij})_{1\leqslant i,j\leqslant r}$ defined by
\begin{align}
    C\coloneqq(C_{ij})_{1\leqslant i,j\leqslant r},
\end{align}
with the coefficients
\begin{align}
\label{eq:Cartan_matrix}
    C_{ij} 
    \coloneqq
    \alpha_i(h_{\alpha_j})
    =
    \frac{2\langle\alpha_i,\alpha_j\rangle}{\langle\alpha_j,\alpha_j\rangle}.
\end{align}
In this convention, the $j$-th column of $C$ gives the evaluations of all simple roots on the simple coroot $h_{\alpha_j}$.
\end{mydef}

This subsection introduced Dynkin coefficients (see Def.~\ref{def:Dynkin_Coefficients}) for highest weight representations and the Cartan matrix (see Def.~\ref{def:Cartan_Matrix}) of root system. 
Together, these concepts convert the bound $2\langle\Lambda,\delta\rangle$ into intrinsic Lie-theoretic quantities determined entirely by the highest weight and the underlying root system. 
The  subsection~\ref{subsubsec:exceptional_Formula_Bound} uses this coordinate formulation to derive an explicit Cartan matrix expression for the uncertainty lower bound on compact semisimple Lie algebras.


\subsection{Dynkin Diagram}
\label{subsec:Dynkin_Diagram}

A central objective of this work is to connect the algebraic structure of the observables with the optimal lower bound in the corresponding Lie algebraic uncertainty relation. 
Dynkin diagrams provide the natural language for this connection: 
they encode the root system structure of a simple Lie algebra in a compact graphical form, while retaining precisely the information needed to determine the optimal bound. 
Although these diagrams were introduced in Subsec.~\ref{subsec:Classification_Lie}, the connection between Dynkin diagrams and Cartan matrices has not yet been made explicit. 
We now develop this correspondence in both directions. 
This prepares the ground for the next section, where the optimal uncertainty bounds can be read directly from the Dynkin diagrams.

An $r$-node Dynkin diagram determines an $r\times r$ matrix
$C=(C_{ij})_{1\leqslant i,j\leqslant r}$ (see Def.~\ref{def:Cartan_Matrix}), with one row and one column assigned to each simple root (see Def.~\ref{def:Simple_Roots}). 
The diagonal entries are fixed as
\begin{align}
C_{ii}=2.
\end{align}
The off-diagonal entries encode the pairwise relations between distinct simple roots and, for finite-dimensional simple Lie algebras, take values in
\begin{align}
    C_{ij}\in\{0,-1,-2,-3\},
    \quad
    \forall 1\leqslant i,j\leqslant r,\quad i\neq j.
\end{align}
These entries are read directly from the bonds between the corresponding nodes. 
If two nodes $i$ and $j$ are disconnected, then
\begin{align}
C_{ij}=C_{ji}=0,
\end{align}
whereas a single bond gives
\begin{align}
C_{ij}=C_{ji}=-1.
\end{align}

The simplest example is provided by the type-A Lie algebra (see Tab.~\ref{tab:Simple_Lie_Algebra}). 
For $A_3$, the Dynkin diagram is a linear chain of 3 nodes,
\begin{align}
\raisebox{-0.8ex}{\includegraphics[scale=1]{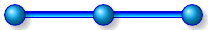}} \, .
\end{align}
so the corresponding Cartan matrix (see Def.~\ref{def:Cartan_Matrix}) takes the form
\begin{align}
C
=
\begin{pmatrix}
 2 & -1 &  0 \\
-1 &  2 & -1 \\
 0 & -1 &  2
\end{pmatrix}.
\end{align}
Thus the graphical data of the Dynkin diagram are faithfully translated into matrix form: 
its nodes determine the dimension of the Cartan matrix, and its bonds encode the off-diagonal entries describing the pairwise relations among simple roots.

For multiple bonds, the bond multiplicity is determined by the product of the two corresponding off-diagonal entries:
\begin{align}
C_{ij}C_{ji}
=
b_{ij}\in\{2,3\},
\end{align}
where $b_{ij}=2$ corresponds to a double bond and $b_{ij}=3$ to a triple bond. 
The arrow points toward the short root. 
With $C_{ij}=2\langle\alpha_i,\alpha_j\rangle/\langle\alpha_j,\alpha_j\rangle$, an arrow from a long-root node $i$ toward a short-root node $j$ means 
\begin{align}
|C_{ij}|>|C_{ji}|,
\end{align}
so that
\begin{align}
(C_{ij},C_{ji})=(-2,-1)
\end{align}
for a double bond, and
\begin{align}
(C_{ij},C_{ji})=(-3,-1)
\end{align}
for a triple bond. 
In this way, the full graphical content of the Dynkin diagram is translated into the Cartan matrix.

A representative non-simplylaced example is provided by the type-B series. 
Consider, in particular, $B_3$. 
We label the three nodes of its Dynkin diagram from left to right by the simple roots $\alpha_1,\alpha_2,\alpha_3$, so that
\begin{align}
 \raisebox{-1.8ex}{\includegraphics[scale=1]{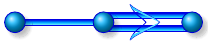}} \, .
\end{align}
The double bond joins the nodes associated with $\alpha_2$ and $\alpha_3$, and its arrow points toward the shorter root, $\alpha_3$. 
Following the convention above, this graphical asymmetry is translated into the Cartan matrix (see Def.~\ref{def:Cartan_Matrix})
\begin{align}
C
=
\begin{pmatrix}
2 & -1 & 0 \\
-1 & 2 & -2 \\
0 & -1 & 2
\end{pmatrix}.
\end{align}
Thus the unequal root lengths encoded by the double bond appear algebraically as the asymmetric pair
$C_{23}=-2$ and $C_{32}=-1$.

This dictionary between Dynkin diagrams (see Tab.~\ref{tab:Simple_Lie_Algebra}) and Cartan matrices (see Def.~\ref{def:Cartan_Matrix}) completes the bridge between graphical root system data and algebraic structure. 
It shows that the information encoded by the simple roots -- their connectivity, relative lengths, and bond multiplicities -- can be translated directly into the Cartan matrix. 
This prepares the ground for Sec.~\ref{sec:Simple_Lie_Algebraic_Quantum_Uncertainty}, where the same data will be used to read off optimal Lie algebraic uncertainty bounds directly from Dynkin diagrams. 
In this way, the geometry of root systems becomes an explicit organizing principle for quantum incompatibility and uncertainty.


\subsection{Explicit Formula for the Bound}
\label{subsubsec:exceptional_Formula_Bound}
The uncertainty bound $2\langle\Lambda,\delta\rangle$ obtained in Thm.~\ref{thm:Lie_Algebraic_Uncertainty_Relations} has a simple form, but its algebraic compactness hides the ingredients needed for direct evaluation. 
To compute the bound explicitly, the highest weight $\Lambda$ and the root system $\Phi$ must be expressed in a common coordinate framework.
This is precisely what the fundamental weights and the Cartan matrix provide. 
In this subsection, we turn the optimal lower bound into an explicit Cartan matrix formula.
Writing $\Lambda=\sum_{i=1}^r\lambda_i\omega_i$, the optimal lower bound is determined entirely by the Dynkin coefficients $\lambda_i$ (see Def.~\ref{def:Dynkin_Coefficients}), the inverse Cartan matrix $C^{-1}$ (see Def.~\ref{def:Cartan_Matrix}), and the simple root lengths (see Def.~\ref{def:Simple_Roots}). 
The abstract quantity $2\langle\Lambda,\delta\rangle$ is therefore converted into a concrete expression read directly from the Lie-theoretic structure of the representation. 
The result is as follows.

\begin{mythm}{Explicit Bounds for Lie Algebraic Uncertainty}{Explicit_Uncertainty_Bound}
Let $V(\Lambda)$ be an irreducible highest weight representation (see Subsec.~\ref{subsec:Highest_Weight_Representations}) of a complex semisimple Lie algebra $\g$ of rank $r$, with compact real form $\g_{\mathbb R}$. 
Write the highest weight in the fundamental weight basis as $\Lambda=\sum_{i=1}^r\lambda_i\omega_i$ (see Def.~\ref{def:Fundamental_Weights}), where $\lambda_i\in\mathbb Z_{\geqslant 0}$ are the Dynkin coefficients (see Def.~\ref{def:Dynkin_Coefficients}) of the representation. 
Let $C$ be the Cartan matrix (see Def.~\ref{def:Cartan_Matrix}) of the  simple roots $\Pi=\{\alpha_i\}_{i=1}^r$ (see Eq.~\eqref{eq:simple_roots_set}), and let $C^{-1}$ denote its inverse. 
Then, for every quantum state $\ket{\psi}\in V(\Lambda)$, the Lie algebraic uncertainty satisfies
\begin{align}
\label{eq:final_bound}
    \Delta^2_{\psi}(\g)
    \geqslant \sum_{i,j=1}^r 
    \lambda_i C^{-1}_{ij} \langle\alpha_j,\alpha_j\rangle.
\end{align}
The bound is optimal and is attained by a highest-weight state (see Def.~\ref{def:Highest_Weight_Vector}) and its compact-group orbit.
Here $C^{-1}_{ij}$ denotes the $(i,j)$-entry of inverse Cartan matrix $C^{-1}$.
\end{mythm}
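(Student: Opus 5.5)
The plan is to reduce the statement to Thm.~\ref{thm:Lie_Algebraic_Uncertainty_Relations}, which already gives $\Delta^2_\psi(\g)\geqslant 2\langle\Lambda,\delta\rangle$ for every $\ket{\psi}\in V(\Lambda)$. What remains is a purely root-theoretic identity, $2\langle\Lambda,\delta\rangle=\sum_{i,j}\lambda_i C^{-1}_{ij}\langle\alpha_j,\alpha_j\rangle$, together with an attainment argument.

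For the identity, the first step is the standard fact that the Weyl vector is the sum of the fundamental weights, $\delta=\sum_j\omega_j$. Equivalently, $\delta(h_{\alpha_j})=1$ for every simple coroot. I would prove this via the simple reflection $s_j$, which permutes $\Phi^+\setminus\{\alpha_j\}$ and sends $\alpha_j\mapsto-\alpha_j$. This gives $s_j\delta=\delta-\alpha_j$. Since also $s_j\delta=\delta-\delta(h_{\alpha_j})\alpha_j$, comparing the two yields $\delta(h_{\alpha_j})=1$.

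Next I would express each fundamental weight in the simple-root basis. Write $\alpha_i=\sum_k C_{ik}\omega_k$; this holds because pairing with $h_{\alpha_k}$ returns $C_{ik}$ by Def.~\ref{def:Cartan_Matrix} and Eq.~\eqref{eq:fundamental_weights}. Inverting gives $\omega_i=\sum_j C^{-1}_{ij}\alpha_j$. Then
\begin{align}
\langle\omega_i,\delta\rangle=\sum_j C^{-1}_{ij}\langle\alpha_j,\delta\rangle=\sum_j C^{-1}_{ij}\frac{\langle\alpha_j,\alpha_j\rangle}{2}\,\delta(h_{\alpha_j})=\frac12\sum_j C^{-1}_{ij}\langle\alpha_j,\alpha_j\rangle ,
\end{align}
using Eq.~\eqref{eq:root_coroot_pairing} in the middle step. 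Multiplying by $2\lambda_i$ and summing over $i$ gives exactly the right-hand side of Eq.~\eqref{eq:final_bound}.

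For optimality, take the highest-weight vector $\ket{\psi_\Lambda}$, which is a weight state of weight $\Lambda$. Thm.~\ref{thm:Lie_Algebraic_Uncertainty_Identity_Weight} together with Thm.~\ref{thm:Value_Casimir_Element} gives $\Delta^2_{\psi_\Lambda}(\g)=\langle\Lambda,\Lambda+2\delta\rangle-\langle\Lambda,\Lambda\rangle=2\langle\Lambda,\delta\rangle$, so the bound is attained. For the compact-group orbit $U_g\ket{\psi_\Lambda}$, I would argue that $\Delta^2_\psi(\g)$ is $\mathrm{Ad}$-invariant. The covariance matrix in an orthonormal basis transforms by the orthogonal matrix $\mathrm{Ad}_g$, and the Gram matrix is invariant, so the trace in Def.~\ref{def:Generalized_Lie_Algebraic_Uncertainty_Relations} is unchanged; this is essentially Thm.~\ref{thm:Basis_Independence_Orthogonal_Representation} applied to the basis $\mathrm{Ad}_g\mathbf X$.

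The step I expect to need the most care is $\delta(h_{\alpha_j})=1$, specifically the claim that $s_j$ permutes $\Phi^+\setminus\{\alpha_j\}$. I would justify this by noting that any positive $\beta\neq\alpha_j$ has a positive coefficient on some simple root other than $\alpha_j$, and $s_j$ leaves that coefficient unchanged, so $s_j\beta$ stays positive. The remaining steps are bookkeeping with the index convention $C_{ij}=\alpha_i(h_{\alpha_j})$. That convention must be tracked carefully so that $C^{-1}$ appears in the order $\lambda_i C^{-1}_{ij}\langle\alpha_j,\alpha_j\rangle$ rather than transposed, which matters for the non-simply-laced types.
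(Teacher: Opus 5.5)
Your proposal is correct and follows essentially the same route as the paper. You reduce to the state-independent bound $2\langle\Lambda,\delta\rangle$, write $\delta=\sum_j\omega_j$, invert $\alpha_i=\sum_k C_{ik}\omega_k$ to get $\omega_i=\sum_j C^{-1}_{ij}\alpha_j$, and evaluate the pairing with the root--coroot relation. The differences are minor. You prove $\delta(h_{\alpha_j})=1$ via simple reflections, whereas the paper only recalls it. You pair $\omega_i$ with $\delta$ directly instead of first computing $\langle\omega_i,\omega_j\rangle=\tfrac12 C^{-1}_{ij}\langle\alpha_j,\alpha_j\rangle$. You also spell out the attainment and orbit-invariance argument, which the paper's proof leaves to earlier results.
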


\begin{proof}
A structural identity now becomes useful.
The Weyl vector (see Def.~\ref{def:Weyl_Vector}) has an especially simple form in the fundamental-weight basis: 
it is the sum of all fundamental weights.
\begin{align}
\label{eq:delta_sum}
    \delta
    =\sum_{i=1}^r \omega_i.
\end{align}
To see this, recall that the Weyl vector satisfies
\begin{align}
    \delta(h_{\alpha_j})
    =1,
\end{align}
for every simple coroot $h_{\alpha_j}$. 
On the other hand, the fundamental weights are defined by (see Eq.~\eqref{eq:fundamental_weights})
\begin{align}
    \omega_i(h_{\alpha_j})
    =\delta_{ij}.
\end{align}
Therefore,
\begin{align}
    \left(\sum_{i=1}^r\omega_i \right) (h_{\alpha_j})
    =
    \sum_{i=1}^r \delta_{ij}
    =1.
\end{align}
for every $j\in\{1,\ldots,r\}$. 
Since the simple coroots form a basis of the Cartan subalgebra $\mathfrak h$, the two functionals have identical values on a basis and hence coincide in $\mathfrak h^*$. 
Thus Eq.~\eqref{eq:delta_sum} follows.

Substituting the fundamental weight expansions of $\delta$ and $\Lambda$ from Eqs.~\eqref{eq:delta_sum} and \eqref{eq:highest_weight_expansion}, respectively, gives
\begin{align}
\label{eq:Lambda_delta_expand}
    2\langle\Lambda,\delta\rangle
    =2\sum_{i,j=1}^r\lambda_i\langle\omega_i,\omega_j\rangle .
\end{align}
It remains to express the inner products of fundamental weights, i.e., $\langle\omega_i,\omega_j\rangle$, in terms of the
Cartan matrix and the simple root lengths $\langle\alpha_j,\alpha_j\rangle$. 
Since the fundamental weights
$\{\omega_i\}_{i=1}^r$ form the basis dual to the simple coroots, each simple
root can be expanded as
\begin{align}
\label{eq:root_weight_relation}
    \alpha_i
    = \sum_{j=1}^r C_{ij}\omega_j ,
\end{align}
where $C_{ij}=\alpha_i(h_{\alpha_j})$ forms the Cartan matrix. 
Inverting this relation gives
\begin{align}
    \omega_i =\sum_{k=1}^r C^{-1}_{ik}\alpha_k.
\end{align}
Taking the inner product with $\omega_j$ and using Eq.~\eqref{eq:fundamental_weights}, we obtain 
\begin{align}
\label{eq:fundamental_inner}
    \langle\omega_i,\omega_j\rangle
    &=
    \sum_{k=1}^r C^{-1}_{ik} \langle\alpha_k,\omega_j\rangle \\
    &=
    \frac12 \sum_{k=1}^r C^{-1}_{ik} \langle\alpha_k,\alpha_k\rangle \delta_{jk} \\
    &=
    \frac12 C^{-1}_{ij} \langle\alpha_j,\alpha_j\rangle.
\end{align}
Equation~\eqref{eq:Lambda_delta_expand} therefore becomes 
\begin{align}
    2\langle\Lambda,\delta\rangle
    =\sum_{i,j=1}^r \lambda_i C^{-1}_{ij} \langle\alpha_j,\alpha_j\rangle,
\end{align}
this gives the claimed Cartan matrix expression for the lower bound and completes the proof.
\end{proof}

To put the optimal lower bound in its most compact form, we introduce two pieces of notation. 
Let
\begin{align}\label{eq:Dynkin_Vector}
    \vec{\lambda}
    \coloneqq
    \begin{pmatrix}
    \lambda_1 \\
    \lambda_2 \\
    \vdots \\
    \lambda_r
    \end{pmatrix}
\end{align}
be the Dynkin coefficients vector of the highest weight (see Def.~\ref{def:Dynkin_Coefficients}), and let
\begin{align}\label{eq:Roots_Vector}
    \vec{\alpha}
    \coloneqq
    \begin{pmatrix}
    \langle\alpha_1,\alpha_1\rangle \\
    \langle\alpha_2,\alpha_2\rangle \\
    \vdots \\
    \langle\alpha_r,\alpha_r\rangle
    \end{pmatrix}
    =
    \begin{pmatrix}
    \|\alpha_1\|^2 \\
    \|\alpha_2\|^2 \\
    \vdots \\
    \|\alpha_r\|^2
    \end{pmatrix}
\end{align}
collect the squared lengths of the simple roots $\Pi=\{\alpha_i\}_{i=1}^r$ (see Eq.~\eqref{eq:simple_roots_set}).
Then the optimal Lie algebraic uncertainty relation takes the compact form

\begin{mycor}{Optimal Lie Algebraic Uncertainty Relations}{Optimal_Lie_Algebraic_Uncertainty_Relations}
For an irreducible highest weight representation $V(\Lambda)$ of a compact semisimple Lie algebra $\g_{\mathbb R}$ and its complexification $\g$ of rank $r$, the optimal state-independent lower bound on the Lie algebraic uncertainty is given by
\begin{align}\label{eq:Lie_Algebraic_UR_Optimal}
    \Delta^2_{\psi}(\g)
    \geqslant
    \vec{\lambda}^{\,\T}\cdot 
    C^{-1}\cdot
    \vec{\alpha},
\end{align}
where $\vec{\lambda}$ (see Eq.~\eqref{eq:Dynkin_Vector}) is the Dynkin coefficient vector, $\vec{\alpha}$ (see Eq.~\eqref{eq:Roots_Vector}) collects the squared simple root lengths, $C^{-1}$ denotes the inverse Cartan matrix (see Def.~\ref{def:Cartan_Matrix}).
\end{mycor}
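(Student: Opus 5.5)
The corollary is a repackaging of Thm.~\ref{thm:Explicit_Uncertainty_Bound} combined with Thm.~\ref{thm:Lie_Algebraic_Uncertainty_Relations}, so the plan is to pass through the intrinsic bound $2\langle\Lambda,\delta\rangle$. First, by Thm.~\ref{thm:Basis_Independence_Orthogonal_Representation} and Cor.~\ref{cor:Equivalence_Basis_Representations}, I may compute $\Delta^2_\psi(\g)$ in the orthonormal anti-Hermitian basis. There Lem.~\ref{lem:Quantum_Uncertainty_Moment_Element} gives $\Delta^2_\psi(\g)=\langle\Lambda,\Lambda+2\delta\rangle-\|M_\psi\|^2$. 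Lem.~\ref{lem:Moment_Norm_Bound} then yields $\Delta^2_\psi(\g)\geqslant 2\langle\Lambda,\delta\rangle$ for every pure state, and concavity of the variance extends this to mixed states.

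Second, I rewrite $2\langle\Lambda,\delta\rangle$ in coordinates. I use $\delta=\sum_j\omega_j$ and $\Lambda=\sum_i\lambda_i\omega_i$. Inverting $\alpha_i=\sum_j C_{ij}\omega_j$ gives $\langle\omega_i,\omega_j\rangle=\tfrac12 C^{-1}_{ij}\langle\alpha_j,\alpha_j\rangle$. Hence $2\langle\Lambda,\delta\rangle=\sum_{i,j}\lambda_i C^{-1}_{ij}\langle\alpha_j,\alpha_j\rangle=\vec\lambda^{\,\T}C^{-1}\vec\alpha$, which is exactly the computation already carried out in the proof of Thm.~\ref{thm:Explicit_Uncertainty_Bound}.

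Third, for optimality I evaluate on the highest-weight vector $\ket{\psi_\Lambda}$, which lies in the weight space $V_{\Lambda,\Lambda}$. By Thm.~\ref{thm:Lie_Algebraic_Uncertainty_Identity_Weight} and Thm.~\ref{thm:Value_Casimir_Element}, its uncertainty equals $\langle\Lambda,\Lambda+2\delta\rangle-\langle\Lambda,\Lambda\rangle=2\langle\Lambda,\delta\rangle$, so the bound is attained. Unitary invariance under $U_g$ then gives equality on the whole compact-group orbit.

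The main obstacle is the semisimple, non-simple case, which the corollary covers. There $V(\Lambda)$ is a tensor product, and $\kappa$ is normalized separately on each simple ideal. Here I would take $C$ block diagonal and $\vec\alpha$ built from the per-ideal normalizations. I would check that the Casimir and Weyl-vector computations split additively over ideals, since distinct ideals are $\kappa$-orthogonal and $\delta$ is the sum of the ideals' Weyl vectors. The moment-norm argument then goes through unchanged, because maximal tori and Weyl orbits factor. Every other step is a direct citation of earlier results.
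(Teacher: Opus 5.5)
Your proposal is correct and follows the paper's route: the corollary is Thm.~\ref{thm:Explicit_Uncertainty_Bound} rewritten in vector form, and its proof combines the bound $2\langle\Lambda,\delta\rangle$ of Thm.~\ref{thm:Lie_Algebraic_Uncertainty_Relations} with $\delta=\sum_i\omega_i$ and $\langle\omega_i,\omega_j\rangle=\tfrac12 C^{-1}_{ij}\langle\alpha_j,\alpha_j\rangle$, exactly as in your second step. Your attainment check at $\mu=\Lambda$ via Thm.~\ref{thm:Lie_Algebraic_Uncertainty_Identity_Weight} and your semisimple remarks only spell out what the paper asserts without proof or treats separately (factorwise, in Thm.~\ref{thm:semisimple_LAU_bound}).
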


As a concrete illustration, we evaluate our results for $\mathfrak{su}(3)$ in its representation on $\mathbb C^3$. 
In this representation, the highest weight, Dynkin coefficients, Cartan matrix, and simple root lengths are all explicit. 
The lower bound can therefore be computed directly. 
We denote the computational basis by 
\begin{align}
\ket{0}
=
\begin{pmatrix}
1\\0\\0
\end{pmatrix},
\quad
\ket{1}
=
\begin{pmatrix}
0\\1\\0
\end{pmatrix},
\quad
\ket{2}
=
\begin{pmatrix}
0\\0\\1
\end{pmatrix}.
\end{align}
For the complexified Lie algebra
$\mathfrak{sl}(3,\mathbb C)$,
its Cartan subalgebra $\mathfrak{h}$ is the subspace of traceless diagonal matrices, given by
\begin{align}
    \mathfrak{h} 
    = 
    \left\{ \begin{pmatrix} t_1 & 0 & 0 \\ 0 & t_2 & 0 \\ 0 & 0 & t_3 \end{pmatrix} \;\middle|\; 
    t_1, t_2, t_3 \in \mathbb{C}, \,
    \sum_{i=1}^3 t_i = 0 \right\}
    =
    \mathrm{span}\{h_1,h_2\},
\end{align}
where the Cartan generators $h_1$ and $h_2$ are
\begin{align}
    h_1
    \coloneqq
    \begin{pmatrix}
        1&0&0\\
        0&-1&0\\
        0&0&0
    \end{pmatrix},
    \quad
    h_2
    \coloneqq
    \begin{pmatrix}
        0&0&0\\
        0&1&0\\
        0&0&-1
    \end{pmatrix}.
\end{align}

Let $f_i\in\mathfrak h^*$ denote the functional on the Cartan subalgebra $\mathfrak h$, defined by
\begin{align}
    f_i(h)
    =
    \bra{i-1}h\ket{i-1}=t_i,
    \quad
    \forall h\in\mathfrak h.
\end{align}
where $i=1,2,3$.
Consider the upper-triangular matrices
\begin{align}
    E_{12}
    =
    \begin{pmatrix}
        0&1&0\\
        0&0&0\\
        0&0&0
    \end{pmatrix},
    \quad
    E_{23}
    =
    \begin{pmatrix}
        0&0&0\\
        0&0&1\\
        0&0&0
    \end{pmatrix}, \quad
   E_{13}
    =
    \begin{pmatrix}
        0&0&1\\
        0&0&0\\
        0&0&0
    \end{pmatrix}.
\end{align}
Their commutators with an arbitrary Cartan element $h=\mathrm{diag}\{t_1,t_2,t_3\}\in \mathfrak h$ are
\begin{align}
    [h,E_{12}]
   & =
    (t_1-t_2)E_{12}
    =
    (f_1-f_2)(h)E_{12}\\
     [h,E_{23}]
   & =
    (t_2-t_3)E_{23}
    =
    (f_2-f_3)(h)E_{23}\\
     [h,E_{13}]
    & =
    (t_1-t_3)E_{13}
    =
    (f_1-f_3)(h)E_{13}.
\end{align}
In general, for the matrix $E_{ij}$, whose $(i,j)$-entry is $1$ and whose remaining entries are all zero, one has
\begin{align}
    [h,E_{ij}]=(f_i-f_j)(h)E_{ij}.
\end{align}
Hence the root associated with matrix $E_{ij}$ is $f_i-f_j$.
We now choose the positive simple roots to be
\begin{align}
    \alpha_1
    \coloneqq
    f_1-f_2,
    \quad
    \alpha_2
    \coloneqq
    f_2-f_3.
\end{align}
With this convention, the remaining positive root is
\begin{align}
    \alpha_3
    \coloneqq
    \alpha_1+\alpha_2
    =
    f_1-f_3.
\end{align}
Thus the full root system of $\mathfrak{sl}(3,\mathbb C)$ is
\begin{align}
    \Phi
    =
    \{
    \pm\alpha_1,
    \pm\alpha_2,
    \pm\alpha_3
    \}.
\end{align}
The corresponding positive root system is
\begin{align}
    \Phi^+
    =
    \{
    \alpha_1,\alpha_2,\alpha_3
    \}
    =
    \{
    f_1-f_2,\,
    f_2-f_3,\,
    f_1-f_3
    \}.
\end{align}
In this choice of positive roots, the simple roots (see Def.~\ref{def:Simple_Roots}) are precisely
\begin{align}
    \Pi
    =
    \{\alpha_1,\alpha_2\}.
\end{align}

Returning to the root space decomposition (see Eq.~\eqref{eq:Root_Space_Decomposition}), we can write the Chevalley basis (see Def.~\ref{def:Chevalley_Basis}) of $\mathfrak{sl}(3,\mathbb C)$ explicitly as follows.
\begin{equation}
\left\{
\begin{gathered}
    h_{\alpha_1}\coloneqq h_1,\quad h_{\alpha_2}\coloneqq h_2,\\
    e_{\alpha_1}\coloneqq E_{12},\quad e_{\alpha_2}\coloneqq E_{23}, \quad e_{\alpha_3}\coloneqq E_{13},\\
    f_{\alpha_1}\coloneqq E_{21},\quad f_{\alpha_2}\coloneqq E_{32}, \quad f_{\alpha_3}\coloneqq E_{31}
\end{gathered}
\right\}
\end{equation}
A direct computation then verifies that these generators satisfy the commutation relations.
\begin{align}
    [e_{\alpha_1},f_{\alpha_1}]
    =
    [E_{12},E_{21}]
    =
    E_{11}-E_{22}
    =
    h_1
    =
    h_{\alpha_1}, \\
    [e_{\alpha_2},f_{\alpha_2}]
    =
    [E_{23},E_{32}]
    =
    E_{22}-E_{33}
    =
    h_2
    =
    h_{\alpha_2}.
\end{align}

We next identify the weight space decomposition of the representation (see Thm.~\ref{thm:Weight_Space_Decomposition}). 
For $i\in\{1,2,3\}$, let $V_{\mu_i}$ denote the one-dimensional subspace spanned by the computational basis vector $\ket{i-1}$; 
that is
\begin{align}
    V_{\mu_i}
    \coloneqq
    \mathrm{span}\{\ket{i-1}\}.
\end{align}
Acting with the Cartan generator $h_1$ and $h_2$ on $V_{\mu_1}$ gives
\begin{align}
    h_1\ket{0}=\ket{0},
    \quad
    h_2\ket{0}=0.
\end{align}
Hence the corresponding weight $\mu_1\in\mathfrak h^*$ is characterized by
\begin{align}
    \mu_1(h_1)=1,
    \quad
    \mu_1(h_2)=0.
\end{align}
Since the functional $f_1$ satisfies the same relations, we have
\begin{align}
    \mu_1=f_1.
\end{align}
Repeating the same calculation for the remaining computational basis vectors gives
\begin{align}
    \mu_i=f_i,
    \quad
    \forall i\in\{1,2,3\}.
\end{align}

Therefore the weight space decomposition of the representation is
\begin{align}
    \mathbb C^3=
    \underbrace{V_{f_1}}_{\text{spanned by $\ket{0}$}}
    \oplus 
    \underbrace{V_{f_2}}_{\text{spanned by $\ket{1}$}}
    \oplus 
    \underbrace{V_{f_3}}_{\text{spanned by $\ket{2}$}}.
\end{align}
The positive root generators annihilate $\ket{0}$,
\begin{align}
    e_{\alpha_1}\ket{0}
    =
    e_{\alpha_2}\ket{0}
    =
    e_{\alpha_3}\ket{0}
    =
    0.
\end{align}
Thus $\ket{0}$ is the highest weight vector of the representation, and its weight $f_1$ is the highest weight
\begin{align}
    \Lambda=f_1.
\end{align}
We now determine the fundamental weights in this explicit realization. 

By definition, the fundamental weights satisfy $\omega_i(h_{\alpha_j})=\delta_{ij}$ (see Eq.~\eqref{eq:fundamental_weights}). 
Thus, for the simple roots $\alpha_1, \alpha_2$, the first fundamental weight $\omega_1$ is characterized by
\begin{align}
    \omega_1(h_1)=1,
    \quad
    \omega_1(h_2)=0.
\end{align}
These are precisely the values of the functional $f_1$ on the simple coroots.
Hence, we obtain 
\begin{align}
    \omega_1=f_1.
\end{align}
The same calculation determines the remaining fundamental weight:
\begin{align}
    \omega_2=-f_3.
\end{align}
Collecting the results, the fundamental weights of $\mathfrak{sl}(3,\mathbb C)$ in this convention are
\begin{align}
    \{\omega_1
    =
    f_1,\,
    \omega_2
    =
    -f_3\}.
\end{align}
From the identification of the fundamental weights, the highest weight $\Lambda$ of the representation is precisely the first fundamental weight $\omega_1$, i.e., $\Lambda=\omega_1$, hence the Dynkin coefficient vector (see Eq.~\eqref{eq:Dynkin_Vector}) of the highest weight is
\begin{align}\label{eq:sl_3_Dynkin_Vector}
    \vec{\lambda}
    =
    \begin{pmatrix}
    1 \\
    0 
    \end{pmatrix}.
\end{align}

We proceed to the Cartan matrix $C$, whose entries are $C_{ij}=\alpha_i(h_{\alpha_j})$  (see Eq.~\eqref{eq:Cartan_matrix}).
For $\mathfrak{sl}(3,\mathbb C)$, the simple roots satisfy
\begin{align}
    &C_{11}=
    \alpha_1(h_{\alpha_1})=
    (f_1-f_2)(h_1)=1+1=2,\\
    &C_{12}=
    \alpha_1(h_{\alpha_2})=
    (f_1-f_2)(h_2)=0-1=-1,\\
    &C_{21}=
    \alpha_2(h_{\alpha_1})=
    (f_2-f_3)(h_1)=-1-0=-1,\\
    &C_{22}=
    \alpha_2(h_{\alpha_2})=
    (f_2-f_3)(h_2)=1+1=2.
\end{align}
Therefore, the Cartan matrix is
\begin{align}
    C
    =
    \begin{pmatrix}
        2&-1\\
        -1&2
    \end{pmatrix},
\end{align}
with inverse
\begin{align}\label{eq:sl_3_Inverse_Cartan_Matrix}
    C^{-1}
    =
    \frac13
    \begin{pmatrix}
        2&1\\
        1&2
    \end{pmatrix}.
\end{align}

The final ingredient is the root vector $\vec{\alpha}$ of squared simple-root lengths (see Eq.~\eqref{eq:Roots_Vector}).
For $\mathfrak{sl}(3,\mathbb C)$ with the fixed normalized invariant form, each simple root has squared length $2$:
\begin{align}
    \langle\alpha_i,\alpha_i\rangle=2.
\end{align}
Thus, the vector $\vec{\alpha}$ of squared simple-root lengths is 
\begin{align}\label{eq:sl_3_Root_Vector}
    \vec{\alpha}
    =
    \begin{pmatrix}
    2 \\
    2 
    \end{pmatrix}.
\end{align}

With the Dynkin coefficient vector $\vec{\lambda}$ (see Eq.~\eqref{eq:sl_3_Dynkin_Vector}), inverse Cartan matrix $C^{-1}$ (see Eq.~\eqref{eq:sl_3_Inverse_Cartan_Matrix}), and squared-length vector $\vec{\alpha}$ (see Eq.~\eqref{eq:sl_3_Root_Vector}) now fixed explicitly, the optimal Lie Algebraic uncertainty relation (see Eq.~\eqref{eq:Lie_Algebraic_UR_Optimal}) follows by direct substitution
\begin{align}\label{eq:sl_3_UR_Optimal}
    \Delta^2_{\psi}(\mathfrak{sl}(3,\mathbb C))=
    \Delta^2_{\psi}(\mathfrak{su}(3))\geqslant
    \vec{\lambda}^{\,\T}\cdot 
    C^{-1}\cdot
    \vec{\alpha}=
    \begin{pmatrix} 1 & 0 \end{pmatrix}
    \cdot
    \frac13
    \begin{pmatrix}
        2&1\\
        1&2
    \end{pmatrix}
    \cdot
    \begin{pmatrix}
    2 \\
    2 
    \end{pmatrix}=2.
\end{align}


\section{Simple Lie Algebraic Quantum Uncertainty}
\label{sec:Simple_Lie_Algebraic_Quantum_Uncertainty}

In this section, we further simplify the optimal Lie algebraic uncertainty bound by separating the contribution of quantum uncertainty into two distinct components. 
The first is representation-theoretic and is determined by the highest weight of the representation; 
the second is geometric and depends only on the root system of the underlying Lie algebra, independently of the representation. 
So the Dynkin labels $\vec{\lambda}$ encode the representation, while $C^{-1}\vec{\alpha}$ depends on the normalized roots data.
This separation is developed in Subsec.~\ref{subsec:Further_Reduction}. 
We then apply the general framework systematically to all simple Lie algebras, covering the classical series and the exceptional cases. 
In particular, the optimal uncertainty relations for $A_n$, $B_n$, $C_n$, $D_n$ are treated in Subsecs.~\ref{subsec:A_type}--\ref{subsec:D_type}, respectively, and $G_2$, $F_4$, $E_6$, $E_7$, and $E_8$ are derived in Subsec.~\ref{subsec:exceptional}.
We also illustrate the framework through representative examples of direct relevance to quantum theory.


\subsection{Further Reduction of the Uncertainty Bound}
\label{subsec:Further_Reduction}

Corollary~\ref{cor:Optimal_Lie_Algebraic_Uncertainty_Relations} of Subsec.~\ref{subsubsec:exceptional_Formula_Bound} gives the optimal uncertainty relation for operators that generate the underlying Lie algebra $\g$. 
We now specialize this general result to each simple Lie algebra and derive the corresponding exact bound.
Recall that the optimal lower bound $\vec{\lambda}^{\,\T}\cdot C^{-1}\cdot\vec{\alpha}$ (see Eq.~\eqref{eq:Lie_Algebraic_UR_Optimal}) can be expressed in terms of the Dynkin coefficient vector $\vec{\lambda}$ (Eq.~\eqref{eq:Dynkin_Vector}) of the highest weight $\Lambda$ (see Subsec.~\ref{subsec:Highest_Weight_Representations}), the inverse Cartan matrix $C^{-1}$ (see Def.~\ref{def:Cartan_Matrix}), and the squares simple-root length vector $\vec{\alpha}$ (see Eq.~\eqref{eq:Roots_Vector}). 
To simplify the notation, we absorb the latter two quantities into a single vector $\vec{\xi}$, i.e.,
\begin{align}\label{eq:xi}
    \vec{\xi}=
    \begin{pmatrix}
    \xi_1 \\
    \xi_2 \\
    \vdots \\
    \xi_r
    \end{pmatrix}
    \coloneqq C^{-1}\cdot\vec{\alpha},
\end{align}
where $\xi_i=\sum_{j=1}^{r}C^{-1}_{ij}\langle\alpha_j,\alpha_j\rangle$.
With this convention, the optimal uncertainty relation is determined by the Euclidean inner product $\langle\cdot,\cdot\rangle$ between $\vec{\lambda}$ and $\vec{\xi}$, namely

\begin{mycor}{Simplified Lie Algebraic Uncertainty Relations}{Simplified_Lie_Algebraic_Uncertainty_Relations}
For an irreducible highest weight representation $V(\Lambda)$ of a complex simple Lie algebra $\g$ of rank $r$, the optimal state-independent uncertainty bound admits the expression
\begin{align}\label{eq:Lie_Algebraic_UR_Simplified}
    \Delta^2_{\psi}(\g)
    \geqslant
    \vec{\lambda}^{\,\T}\cdot 
    \vec{\xi}=
    \langle\underbrace{\vec{\lambda}}_{\text{Highest Weight}},\underbrace{\vec{\xi}}_{\text{Root System}}\rangle=
    \sum_{i=1}^{r}\lambda_i\xi_i.
\end{align}
The vector $\vec{\lambda}$ collects the Dynkin coefficients of the highest weight $\Lambda$ (see Eq.~\eqref{eq:Dynkin_Vector}), while $\vec{\xi}=C^{-1}\vec{\alpha}$ contains the normalized root data (see Eq.~\eqref{eq:xi}).
Thus the bound depends on the highest weight, the root system, and the fixed metric. It is independent of the state within $V(\Lambda)$ and is attained on the highest weight group orbit.

\end{mycor}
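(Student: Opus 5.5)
The corollary is essentially a repackaging of Cor.~\ref{cor:Optimal_Lie_Algebraic_Uncertainty_Relations}, so the plan is to reduce it to that result plus the optimality argument of Thm.~\ref{thm:Explicit_Uncertainty_Bound}. First I substitute the definition $\vec{\xi}=C^{-1}\vec{\alpha}$ from Eq.~\eqref{eq:xi} into $\vec{\lambda}^{\,\T}C^{-1}\vec{\alpha}$. Associativity of matrix multiplication then gives $\vec{\lambda}^{\,\T}\vec{\xi}=\sum_i\lambda_i\xi_i$, with $\xi_i=\sum_j C^{-1}_{ij}\langle\alpha_j,\alpha_j\rangle$. This yields the inequality immediately. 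The separation into a ``highest weight'' factor and a ``root system'' factor is then a reading of where each ingredient comes from: $\vec{\lambda}$ depends only on $\Lambda$ via Eq.~\eqref{eq:dynkin_coefficients}, whereas $C$ and $\vec{\alpha}$ depend only on $\Pi$ and on the normalized form $\kappa$ of Eq.~\eqref{eq:normal_Killing}.

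Next I would record that the right-hand side equals $2\langle\Lambda,\delta\rangle$. This was shown in the proof of Thm.~\ref{thm:Explicit_Uncertainty_Bound} via $\delta=\sum_i\omega_i$ and $\langle\omega_i,\omega_j\rangle=\tfrac12 C^{-1}_{ij}\langle\alpha_j,\alpha_j\rangle$. State independence then follows from Thm.~\ref{thm:Lie_Algebraic_Uncertainty_Relations}, whose bound involves only $\Lambda$ and $\delta$.

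For optimality I take $\ket{\psi_\Lambda}$ as the highest weight vector. By Thm.~\ref{thm:Lie_Algebraic_Uncertainty_Identity_Weight} and Thm.~\ref{thm:Value_Casimir_Element}, its uncertainty is $\langle\Lambda,\Lambda+2\delta\rangle-\langle\Lambda,\Lambda\rangle=2\langle\Lambda,\delta\rangle$, so equality holds. To extend equality to the group orbit, I use the form $\langle\Lambda,\Lambda+2\delta\rangle-\|M_\psi\|^2$ from Lem.~\ref{lem:Quantum_Uncertainty_Moment_Element}. Under $\psi\mapsto U_g\psi$ the moment element transforms by $\mathrm{Ad}_g$, because $\langle \mathrm{i}X\rangle_{U_g\psi}=\langle \mathrm{i}\,\mathrm{Ad}_{g^{-1}}X\rangle_\psi$. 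Since $(\cdot,\cdot)$ is $\mathrm{Ad}$-invariant, $\|M_\psi\|$ is preserved, and so is the uncertainty.

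The main point needing care is the basis-independence bookkeeping: the bound must be seen to be the same for the quantity of Def.~\ref{def:Generalized_Lie_Algebraic_Uncertainty_Relations} as for the one in Def.~\ref{def:Lie_Algebraic_Uncertainty_Relations}. For this I invoke Thm.~\ref{thm:Basis_Independence_Orthogonal_Representation} together with Cor.~\ref{cor:Equivalence_Basis_Representations}. A second point is that $\vec{\xi}$ is independent of the choice of simple roots. Any two choices of $\Pi$ are related by a Weyl group element, which preserves the Cartan matrix up to relabeling and also preserves the root lengths, so $\vec{\lambda}^{\,\T}\vec{\xi}$, being $2\langle\Lambda,\delta\rangle$, is intrinsic.
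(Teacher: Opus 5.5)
Your proposal is correct and follows the paper's route: the corollary is Cor.~\ref{cor:Optimal_Lie_Algebraic_Uncertainty_Relations} rewritten with $\vec{\xi}=C^{-1}\vec{\alpha}$, with $\vec{\lambda}^{\,\T}\vec{\xi}=2\langle\Lambda,\delta\rangle$ taken from the proof of Thm.~\ref{thm:Explicit_Uncertainty_Bound} and attainment at the highest weight vector via $\langle\Lambda,\Lambda+2\delta\rangle-\langle\Lambda,\Lambda\rangle$. Your $\mathrm{Ad}$-equivariance argument for the group orbit and your Weyl-group remark on the choice of $\Pi$ make explicit what the paper only asserts; the paper treats just the reordering of $\Pi$, via a permutation matrix, in Thm.~\ref{thm:Simple_Root_Relabeling_Invariance}.
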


The preceding discussion has assumed a fixed ordering of the simple roots $\Pi$ (see Def.~\ref{def:Simple_Roots}), equivalently a fixed labeling of the Dynkin diagram nodes. 
Any dependence of the bound on this choice would make the framework unnecessarily redundant, since each relabeling would require a separate analysis. 
The key question is therefore whether the optimal uncertainty bound is invariant under changes of the simple root ordering. 
Fortunately, it is.

\begin{mythm}{Simple Root Relabeling Invariance}{Simple_Root_Relabeling_Invariance}
The optimal state-independent bound in the Lie algebraic uncertainty relation is independent of the chosen ordering of the simple roots and, equivalently, of the labeling convention used for the Dynkin diagram.
\end{mythm}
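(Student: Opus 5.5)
The plan is to show that the scalar $\vec{\lambda}^{\,\T}C^{-1}\vec{\alpha}$ is unchanged when the simple roots are reindexed by a permutation. The conceptual route is short. By the proof of Thm.~\ref{thm:Explicit_Uncertainty_Bound}, this scalar equals $2\langle\Lambda,\delta\rangle$, and neither $\Lambda$ nor $\delta$ depends on how the simple roots are labeled. The highest weight $\Lambda$ is fixed by the representation together with the choice of $\Phi^+$. The Weyl vector $\delta=\frac12\sum_{\alpha\in\Phi^+}\alpha$ (Def.~\ref{def:Weyl_Vector}) depends only on the set $\Phi^+$. A relabeling of Dynkin nodes permutes the list $\Pi$ but leaves $\Phi^+$ as a set untouched. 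Hence $2\langle\Lambda,\delta\rangle$ is unchanged. I will present this first and then add a matrix-level verification, so the reader can check the statement directly in the coordinates of Cor.~\ref{cor:Simplified_Lie_Algebraic_Uncertainty_Relations}.

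For the matrix check, let $\sigma$ be a permutation of $\{1,\dots,r\}$ with permutation matrix $P$, and set $\alpha'_i=\alpha_{\sigma(i)}$. The steps are as follows.
\begin{enumerate}
\item From Def.~\ref{def:Fundamental_Weights}, the dual-basis condition $\omega_i(h_{\alpha_j})=\delta_{ij}$ forces $\omega'_i=\omega_{\sigma(i)}$.
\item Consequently the Dynkin labels transform as $\lambda'_i=\Lambda(h_{\alpha'_i})=\lambda_{\sigma(i)}$, that is, $\vec{\lambda}'=P\vec{\lambda}$.
\item By the same reasoning, $\vec{\alpha}'=P\vec{\alpha}$.
\item From Def.~\ref{def:Cartan_Matrix}, $C'_{ij}=C_{\sigma(i)\sigma(j)}$, so $C'=PCP^{\T}$ and, since $P$ is orthogonal, $(C')^{-1}=PC^{-1}P^{\T}$.
\end{enumerate}
Substituting these gives
\[
\vec{\lambda}'^{\,\T}(C')^{-1}\vec{\alpha}'=\vec{\lambda}^{\,\T}P^{\T}PC^{-1}P^{\T}P\vec{\alpha}=\vec{\lambda}^{\,\T}C^{-1}\vec{\alpha},
\]
which is the claim. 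Equivalently, $\vec{\xi}'=P\vec{\xi}$, so the pairing in Eq.~\eqref{eq:Lie_Algebraic_UR_Simplified} is a sum over nodes that is invariant under relabeling.

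The main care point is the convention $C_{ij}=2\langle\alpha_i,\alpha_j\rangle/\langle\alpha_j,\alpha_j\rangle$, under which $C$ is not symmetric. One must confirm that conjugation by $P$ treats rows and columns consistently, so that $C'=PCP^{\T}$ rather than some transposed form. This follows immediately because $\sigma$ acts simultaneously on both indices of $C$ and on the index of $\vec{\alpha}$.

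The statement also says "equivalently, of the labeling convention used for the Dynkin diagram," and it may be meant to cover different choices of $\Phi^+$ as well. For that extension I would observe that any two positive systems are related by a Weyl group element $w$. Under this change, $\Lambda$ is replaced by the new dominant weight $w\Lambda$ and $\delta$ is replaced by $w\delta$. Because the inner product is Weyl-invariant, $\langle w\Lambda,w\delta\rangle=\langle\Lambda,\delta\rangle$, so the bound is unchanged in that case too.
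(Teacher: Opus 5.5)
Your proposal is correct, and its matrix-level half is exactly the paper's proof. The paper writes $C'=PCP^{\T}$ (it uses $P^{\dagger}$, which is the same thing for a real permutation matrix), together with $\vec{\alpha}'=P\vec{\alpha}$ and $\vec{\lambda}'=P\vec{\lambda}$. It deduces $\vec{\xi}'=P\vec{\xi}$ and concludes because $P^{\T}P$ is the identity. Your steps do two things the paper takes for granted: they justify $\vec{\lambda}'=P\vec{\lambda}$ via the dual-basis characterization of the fundamental weights, and they check the index bookkeeping for the non-symmetric $C$.

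Your first argument is genuinely different. The identity $\vec{\lambda}^{\,\T}C^{-1}\vec{\alpha}=2\langle\Lambda,\delta\rangle$ from the explicit-bound theorem holds in every labeling, and $\Lambda$ and $\delta$ depend only on the representation and the set $\Phi^+$. Invariance therefore follows with no coordinate computation at all. This route is more intrinsic, and it extends cleanly to a change of positive system through $\Lambda\mapsto w\Lambda$, $\delta\mapsto w\delta$ and Weyl invariance of $\langle\cdot,\cdot\rangle$; the paper's permutation argument does not address that case. The paper's computation, for its part, is a direct check in the tabulated coordinates $\vec{\lambda},\vec{\xi}$ that the later sections actually use.

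An even shorter intrinsic argument is also available. Since the bound is attained, it equals $\min_\psi\Delta^2_\psi(\g)$, a basis-independent quantity that never refers to simple roots.
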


\begin{proof}
Let $P$ denote the permutation matrix relabeling the simple roots. Under this relabeling, the Cartan matrix $C$ (see Def.~\ref{def:Cartan_Matrix}) and the vector $\vec{\alpha}$ (see Eq.~\eqref{eq:Roots_Vector}) of squared simple root lengths transform as
\begin{align}
    C'=P C P^{\dagger},
\end{align}
and 
\begin{align}
    \vec{\alpha'}
    =
    P\vec{\alpha}.
\end{align}
Consequently, the vector $\vec{\xi}$ (see Eq.~\eqref{eq:xi}) in the relabeled convention is given by
\begin{align}
    \vec{\xi'}
    =
    (C')^{-1}\vec{\alpha'}
    =
    (P C P^{\dagger})^{-1}P\vec{\alpha}
    =
    P C^{-1} P^{\dagger}P\vec{\alpha}
    =
    P\vec{\xi}.
\end{align}

The Dynkin coefficient vector $\vec{\lambda}$ (see Eq.~\eqref{eq:Dynkin_Vector}) is relabeled by the same permutation
\begin{align}
    \vec{\lambda'}
    =
    P\vec{\lambda}.
\end{align}
Hence, in the relabeled convention, the inner product between $\vec{\lambda'}$ and $\vec{\xi'}$ (see Eq.~\eqref{eq:Lie_Algebraic_UR_Simplified}) becomes
\begin{align}
    \langle\vec{\lambda'},\vec{\xi'}\rangle
    =
    (\vec{\lambda'})^{\T}\vec{\xi'}
    =
    (P\vec{\lambda})^{\T}
    (P\vec{\xi})
    =
    \vec{\lambda}^{\, \T}
    P^{\T}P
    \vec{\xi}
    =
    \vec{\lambda}^{\,\T}\vec{\xi}
    =
    \langle\vec{\lambda},\vec{\xi}\rangle.
\end{align}
Thus, the optimal value of the Lie algebraic uncertainty lower bound is invariant under simple roots relabeling.
\end{proof}

Tab.~\ref{tab:bounds} collects the explicit components $\xi_i$ of the vector $\vec{\xi}$ defined in Eq.~\eqref{eq:xi} for each simple Lie algebra. 
The corresponding derivations are presented in the following subsections.

\begin{table}[h]
    \centering
    \begin{tblr}{
      width = \textwidth,
      colspec = {l || c | X[l]},
      row{1} = {bg=gray!50, font=\bfseries, c}, 
      row{2} = {bg=gray!20, font=\bfseries, c},
      hlines,
      vlines,
      cells = {m},
      cell{3-6}{1-3} = {bg=mGreen},
      cell{7-11}{1-3} = {bg=mRed},
    }
      \SetCell[c=3]{c} Explicit Forms of $\vec{\xi}$ for Simple Lie Algebras & & \\
      Category & Lie Algebra & Components $\xi_i$ \\
      
      \SetCell[r=4]{c} Classical 
      & $A_n$ & $\xi_i=i(n+1-i)$,\quad $1 \leqslant i \leqslant n$. \\
      & $B_n\,(n\geqslant 2)$ & $\xi_i=i(2n-i)$,\quad$1 \leqslant i\leqslant n-1$, \quad with \quad $\xi_n=\dfrac{n^2}{2}$. \\
      & $C_n \,(n\geqslant 2) $ & $\xi_i=\dfrac{i(2n+1-i)}{2}$, \quad $1 \leqslant i\leqslant n-1$, \quad with \quad $\xi_n=\dfrac{n(n+1)}{2}$. \\
      & $D_n \,(n\geqslant 4) $& $\xi_i=i(2n-i-1)$, \quad $1 \leqslant i \leqslant n-2$, \quad with \quad $\xi_{n-1}=\xi_n=\dfrac{n(n-1)}{2}$. \\
     
      \SetCell[r=5]{c} Exceptional 
      & $G_2$ & $\xi_1=6$,\, $\xi_2=\dfrac{10}{3}$. \\
      & $F_4$ & $\xi_1=16$,\, $\xi_2=30$,\, $\xi_3=21$,\, $\xi_4=11$. \\
      & $E_6$ & $\xi_1=16$,\, $\xi_2=22$,\, $\xi_3=30$,\, $\xi_4=42$,\, $\xi_5=30$,\, $\xi_6=16$. \\
      & $E_7$ & $\xi_1=34$,\, $\xi_2=49$,\, $\xi_3=66$,\,  
      $\xi_4=96$,\, $\xi_5=75$,\, $\xi_6=52$,\, $\xi_7=27$. \\
      & $E_8$ & $\xi_1=92$,\, $\xi_2=136$,\, $\xi_3=182$,\, $\xi_4=270$,\, $\xi_5=220$,\, $\xi_6=168$,\, $\xi_7=114$,\, $\xi_8=58$. \\
    \end{tblr}
    \caption{\textbf{Explicit Components of $\vec{\xi}$ for Simple Lie Algebras.}
   Here, $\vec{\xi}$ (see Eq.~\eqref{eq:xi}) absorbs the Cartan matrix and root-length factors, with the simple roots ordered as specified in the following corresponding subsection and the normalization chosen such that long roots have squared length $2$.
  }
    \label{tab:bounds}
\end{table}


\subsection{Explicit Bounds for Type-A Lie Algebras}\label{subsec:A_type}

We begin with the first classical series, the type-A Lie algebras; 
see Tab.~\ref{tab:Simple_Lie_Algebra}. 
In this case, $A_n=\mathfrak{sl}(n+1,\mathbb C)$, with compact real form $\mathfrak{su}(n+1)$. 
The construction of $\vec{\xi}$ is governed by the Cartan matrix and the simple root lengths. We therefore start from the Cartan matrix of $A_n$. 
Its Dynkin diagram is a linear chain with $n$ nodes,
\begin{equation}
     \raisebox{-3.2ex}{\includegraphics[scale=1]{DK_A_n_No.pdf}}\, .
\end{equation}
This simply laced structure fixes the Cartan matrix (see Def.~\ref{def:Cartan_Matrix}) to be symmetric and tridiagonal. 
Explicitly, $C_{ii}=2$ for each simple root, $C_{i,i+1}=C_{i+1,i}=-1$ for adjacent roots, and $C_{ij}=0$ whenever $|i-j|>1$. 
Equivalently, the Cartan matrix is
\begin{align}
    C
    =
    \begin{pmatrix}
    2 & -1 & 0 & \cdots & 0\\
    -1 & 2 & -1 & \cdots & 0\\
    0 & -1 & 2 & \ddots & \vdots\\
    \vdots & \vdots & \ddots & \ddots & -1\\
    0 & 0 & \cdots & -1 & 2
    \end{pmatrix}.
\end{align}
Its inverse is given by
\begin{align}
    C^{-1}
    =
    \frac1{n+1}
    \begin{pmatrix}
     n & n-1 & n-2 & \cdots & 1\\
    n-1 & 2(n-1) & 2(n-2) & \cdots & 2\\
    n-2 & 2(n-2) & 3(n-2) & \cdots & 3\\
    \vdots & \vdots & \vdots & \ddots & \vdots\\
     1 & 2 & 3 & \cdots & n
\end{pmatrix}.
\end{align}
Equivalently, the $(i,j)$-entry of the inverse Cartan matrix $C^{-1}$ now reads

\begin{mylem}{Entries of $C^{-1}$ for Type-A Lie Algebras}{Inverse_Cartan_Matrix_A}
For the type-A Lie algebra $A_n$, the $(i,j)$-entry of the inverse Cartan matrix is
\begin{align}
\label{eq:An_inverse}
    C^{-1}_{ij}
    =
    \frac{\min\{i,j\}\,(n+1-\max\{i,j\})}{n+1}.
\end{align}
\end{mylem}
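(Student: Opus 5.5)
The plan is to verify directly that the claimed matrix $B$ with entries $B_{ij}=\min\{i,j\}(n+1-\max\{i,j\})/(n+1)$ satisfies $CB=\1_n$. Since $C$ is invertible (it is the Cartan matrix of a simple Lie algebra, with determinant $n+1$), this identifies $B=C^{-1}$. The row structure of $C$ is tridiagonal, so
\begin{align}
    (CB)_{ij}=-B_{i-1,j}+2B_{ij}-B_{i+1,j},
\end{align}
with the convention $B_{0,j}=B_{n+1,j}=0$. This convention is consistent with the formula itself: the factor $\min\{i,j\}$ vanishes at $i=0$, and the factor $n+1-\max\{i,j\}$ vanishes at $i=n+1$. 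So boundary rows need no separate treatment.

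The key observation is that, for fixed column $j$, the map $i\mapsto B_{ij}$ is piecewise linear in $i$ with a single kink at $i=j$. For $i\leqslant j$ we have $B_{ij}=i(n+1-j)/(n+1)$, which is linear with slope $(n+1-j)/(n+1)$. For $i\geqslant j$ we have $B_{ij}=j(n+1-i)/(n+1)$, which is linear with slope $-j/(n+1)$.

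I will split into two cases.

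\textbf{Case $i\neq j$.} The three indices $i-1,i,i+1$ all lie on the same linear piece, extended to the boundary values $0$ and $n+1$ where those vanish. Hence the second difference $-B_{i-1,j}+2B_{ij}-B_{i+1,j}$ is zero.

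\textbf{Case $i=j$.} The second difference equals the slope on the left minus the slope on the right:
\begin{align}
    \frac{n+1-j}{n+1}+\frac{j}{n+1}=1.
\end{align}
Hence $CB=\1_n$, and therefore $C^{-1}=B$ by uniqueness of the inverse.

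The only point requiring care is the boundary rows $i=1$ and $i=n$, where $C$ has only two nonzero entries. The zero-extension convention handles these uniformly, but I will check explicitly that the linear formulas evaluated at $i=0$ (left piece, valid since $0\leqslant j$) and at $i=n+1$ (right piece, valid since $n+1\geqslant j$) give zero. Both do.

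As a consistency check, the resulting matrix should match the explicit matrix displayed just before the lemma: its first row is $(n,n-1,\ldots,1)/(n+1)$ and its diagonal entries are $i(n+1-i)/(n+1)$.
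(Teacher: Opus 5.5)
Your argument is correct, and it supplies what the paper omits. The paper only displays the matrix $C^{-1}$, itself stated without derivation, and reads the entry formula off it, never checking that it inverts $C$.

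Your route verifies $CB=\1_n$ column by column. You treat $i\mapsto B_{ij}$ as a piecewise-linear function that vanishes at the ghost indices $i=0$ and $i=n+1$ and has a single kink at $i=j$. The tridiagonal second difference then vanishes off the diagonal and equals the slope jump $\frac{n+1-j}{n+1}+\frac{j}{n+1}=1$ on it. In other words, $B$ is the discrete Dirichlet Green's function of $C$.

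This gives a genuine proof and explains where the $\min\{i,j\}\,(n+1-\max\{i,j\})$ shape comes from. The same second-difference check also confirms the $B_n$, $C_n$ and $D_n$ inverses that the paper likewise only asserts, with just the last one or two rows of $C$ handled separately.

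One simplification: you need neither $\det C=n+1$ nor any prior invertibility of $C$. For square matrices, $CB=\1_n$ already forces $B=C^{-1}$.
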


In this case, all simple roots have squared length $2$, namely $\langle\alpha_i,\alpha_i\rangle =2$, so the simple root length vector takes the following simple form.

\begin{mylem}{Root Length Vector $\vec{\alpha}$ for Type-A Lie Algebras}{Root_Length_Vector_A}
For the type-A Lie algebra $A_n$, the squared simple root lengths form vector 
\begin{align}
\label{eq:An_Root_Length_Vector}
    \vec{\alpha}
    =
    \begin{pmatrix}
    2 \\
    2 \\
    \vdots \\
    2
    \end{pmatrix}.
\end{align}
\end{mylem}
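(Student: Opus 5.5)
The claim is that for $A_n$, every simple root has squared length $\langle\alpha_i,\alpha_i\rangle=2$ under the normalized form $\kappa$ of Eq.~\eqref{eq:normal_Killing}. Then $\vec{\alpha}=(2,\ldots,2)^{\T}$ by the definition in Eq.~\eqref{eq:Roots_Vector}. The plan has two steps: first show that all roots of $A_n$ have the same length, then fix that common length using the normalization.

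The first step is equal lengths. The Dynkin diagram of $A_n$ is a chain of single bonds, so the Cartan matrix is symmetric: $C_{i,i+1}=C_{i+1,i}=-1$. By Def.~\ref{def:Cartan_Matrix}, we have
\[
C_{ij}\langle\alpha_j,\alpha_j\rangle=2\langle\alpha_i,\alpha_j\rangle=C_{ji}\langle\alpha_i,\alpha_i\rangle.
\]
For adjacent nodes this equation reads $-\langle\alpha_{i+1},\alpha_{i+1}\rangle=-\langle\alpha_i,\alpha_i\rangle$. Since the diagram is connected, I will induct along the chain to conclude that all simple roots have the same length.

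The second step fixes the common value. The normalization in Eq.~\eqref{eq:normal_Killing} assigns squared length $2$ to every long root. When all simple roots have equal length, each of them is long: every root is Weyl-conjugate to a simple root, so there are no shorter roots. Hence $\langle\alpha_i,\alpha_i\rangle=2$.

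As an independent check, I will verify the value directly in the defining realization of $\mathfrak{sl}(n+1,\mathbb C)$. There, $\mathrm{Tr}[\mathrm{ad}_x\mathrm{ad}_y]=2(n+1)\mathrm{Tr}[xy]$ and $h^\vee=n+1$ (Tab.~\ref{tab:Dual_Coxeter_Numbers}), so $\kappa(x,y)=\mathrm{Tr}[xy]$. The coroot $h_{\alpha_i}=E_{ii}-E_{i+1,i+1}$ satisfies $h_{\alpha_i}=2t_{\alpha_i}/\langle\alpha_i,\alpha_i\rangle$ (Eq.~\eqref{eq:coroots}). Using $\alpha_i(h_{\alpha_i})=2$ and Lem.~\ref{lem:Killing_Form_Identification}, we get
\[
\kappa(h_{\alpha_i},h_{\alpha_i})=\frac{4}{\langle\alpha_i,\alpha_i\rangle}.
\]
The left side equals $\mathrm{Tr}[h_{\alpha_i}^2]=2$, which gives $\langle\alpha_i,\alpha_i\rangle=2$. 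This matches the $\mathfrak{sl}(3,\mathbb C)$ Cartan Gram matrix computed earlier.

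The step most open to doubt is the claim $\mathrm{Tr}[\mathrm{ad}_x\mathrm{ad}_y]=2(n+1)\mathrm{Tr}[xy]$. I would confirm it by evaluating both sides on $x=y=h_{\alpha_1}$. On the adjoint side, the contributions come from the root spaces $E_{kl}$ with eigenvalue $(t_k-t_l)^2$, which sum to $4+2(n-1)+2(n-1)=4n+4$. On the trace side, $2(n+1)\cdot 2=4n+4$. The two agree.
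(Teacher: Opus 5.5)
Your Steps 1--2 are correct and match the justification the paper gives: it simply invokes the simply-laced structure of $A_n$ together with the convention that $\kappa$ assigns squared length $2$ to long roots. Your trace-form check is a useful independent confirmation of that normalization, which the paper only asserts; it has one arithmetic slip and one unstated step. The slip: $E_{12}$ and $E_{21}$ each contribute $(\pm2)^2=4$, so the adjoint sum is $8+2(n-1)+2(n-1)=4n+4$, whereas your written $4+2(n-1)+2(n-1)$ equals $4n$. The unstated step: evaluating both forms on a single element fixes $\mathrm{Tr}[\mathrm{ad}_x\mathrm{ad}_y]=2(n+1)\mathrm{Tr}[xy]$ only because invariant symmetric forms on the simple algebra $\mathfrak{sl}(n+1,\mathbb C)$ are unique up to scale, which you should say explicitly.
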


With the entrywise form of $C^{-1}$ (see Lem.~\ref{lem:Inverse_Cartan_Matrix_A}) and the simple root length vector $\vec{\alpha}$ (see Lem.~\ref{lem:Root_Length_Vector_A}) fixed, the vector $\vec{\xi}$ (see Eq.~\eqref{eq:xi}) can now be evaluated explicitly. 
For the type-A series, its components are given by the following theorem.

\begin{mythm}{Components of $\vec{\xi}$ for Type-A Lie Algebras}{Root_Length_Vector_A}
For the type-A Lie algebra $A_n$, the components of the vector $\vec{\xi}$ (see Eq.~\eqref{eq:xi}) are
\begin{align}
    \xi_i=i(n+1-i),\quad \forall i\in\{1, \ldots, n\}.
\end{align}
\end{mythm}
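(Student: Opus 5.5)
The plan is to compute $\xi_i=\sum_{j=1}^n C^{-1}_{ij}\langle\alpha_j,\alpha_j\rangle$ directly. By Lem.~\ref{lem:Root_Length_Vector_A} every entry of $\vec{\alpha}$ equals $2$, so the task reduces to evaluating twice the $i$-th row sum of $C^{-1}$:
\begin{align}
\xi_i = 2\sum_{j=1}^{n} C^{-1}_{ij}.
\end{align}
For the row sum I will use the closed form of Lem.~\ref{lem:Inverse_Cartan_Matrix_A}.

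I will split the sum at $j=i$. For $j\leqslant i$ the summand is $j(n+1-i)/(n+1)$, which contributes $(n+1-i)\,i(i+1)/\bigl(2(n+1)\bigr)$. For $j>i$ the summand is $i(n+1-j)/(n+1)$. Substituting $k=n+1-j$, which runs from $1$ to $n-i$, gives a contribution of $i\,(n-i)(n+1-i)/\bigl(2(n+1)\bigr)$.

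Adding the two pieces and factoring out $i(n+1-i)/\bigl(2(n+1)\bigr)$ leaves the bracket $(i+1)+(n-i)=n+1$. The row sum is therefore $i(n+1-i)/2$, and hence $\xi_i=i(n+1-i)$.

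As an independent check, I would verify that $C\vec{\xi}=\vec{\alpha}$. The vector $\xi_i=i(n+1-i)$ is a quadratic in $i$ with second difference $-2$, so $2\xi_i-\xi_{i-1}-\xi_{i+1}=2$. This also holds at the endpoints, because the natural extension satisfies $\xi_0=\xi_{n+1}=0$.

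The only potential obstacle is bookkeeping at the boundary indices ($i=n$, where the second sum is empty) and the diagonal term, which belongs to the first sum. The check $C\vec{\xi}=\vec{\alpha}$ avoids both issues entirely, so I may present that as the main proof instead.
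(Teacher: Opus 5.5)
Your row-sum computation is correct and is the same as the paper's proof: you take twice the $i$-th row sum of $C^{-1}$, split it at $j=i$ into two arithmetic series, and factor out $i(n+1-i)$ to get the bracket $n+1$. Your alternative check $C\vec{\xi}=\vec{\alpha}$ is also a complete proof on its own, using the constant second difference $-2$ of $i(n+1-i)$, the boundary values $\xi_0=\xi_{n+1}=0$, and the invertibility of $C$; it has the small advantage of not relying on the closed form of $C^{-1}$ in Lem.~\ref{lem:Inverse_Cartan_Matrix_A}, which the paper states without proof.
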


\begin{proof}
Substituting the entrywise formula for $C^{-1}$ in Eq.~\eqref{eq:An_inverse} and the simple root length vector $\vec{\alpha}$ in Eq.~\eqref{eq:An_Root_Length_Vector}, the $i$-th component $\xi_i$ of $\vec{\xi}$ is obtained from the corresponding row sum. 
Splitting the sum into the two regions $j\leqslant i$ and $j>i$, we find
\begin{align}
    \xi_i
    &=
    \sum_{j=1}^{n}
    C^{-1}_{ij}\langle \alpha_j,\alpha_j\rangle\\
    &=
    \frac2{n+1}
    \left(
    \underbrace{\left(\sum_{j=1}^{i}
    j(n+1-i)
    \right)}_{j\leqslant i}
    +
    \underbrace{\left(
    \sum_{j=i+1}^{n}
    i(n+1-j)
    \right)}_{j>i}
    \right)\\
    &=
    \frac2{n+1}
    \left(
    (n+1-i)\frac{i(i+1)}2
    +
    i\frac{(n-i)(n-i+1)}2
    \right)\\
    &=
    i(n+1-i).
\end{align}
This establishes the desired expression and completes the proof.
\end{proof}

The preceding analysis gives all the ingredients needed to express the optimal Lie algebraic uncertainty bound for type-A Lie algebras.

\begin{mycor}{Type-A Lie Algebraic Uncertainty Relations}{Type_A_Lie_UR}
For the type-A Lie algebra $A_n$ (see Tab.~\ref{tab:Simple_Lie_Algebra}), the optimal Lie algebraic uncertainty relation (see Def.~\ref{def:Lie_Algebraic_Uncertainty_Relations}) takes the explicit form
\begin{align}
    \Delta^2_{\psi}(\raisebox{-0.6ex}{\includegraphics[scale=0.8]{DK_A_n.pdf}})\coloneqq
    \Delta^2_{\psi}(\mathfrak{sl}(n+1,\mathbb C))=
    \Delta^2_{\psi}(\mathfrak{su}(n+1))\geqslant
    \sum_{i=1}^{n}i(n+1-i)\lambda_i.
\end{align}
The bound is state-independent. 
It depends only on the representation, through the highest weight $\Lambda$ (see Subsec.~\ref{subsec:Highest_Weight_Representations}), or equivalently its Dynkin coefficient vector $\vec{\lambda}$ (see Eq.~\eqref{eq:Dynkin_Vector}), and on the type-A root system geometry encoded in $\vec{\xi}$ (see Eq.~\eqref{eq:xi}). 
Its optimality follows directly from the general result established in Cor.~\ref{cor:Simplified_Lie_Algebraic_Uncertainty_Relations}.
\end{mycor}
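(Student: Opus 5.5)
The plan is to specialize Cor.~\ref{cor:Simplified_Lie_Algebraic_Uncertainty_Relations} to $\g=\mathfrak{sl}(n+1,\mathbb C)$, feeding it the type-A data computed just above. The corollary already gives the optimal, state-independent bound $\Delta^2_\psi(\g)\geqslant\vec{\lambda}^{\T}\vec{\xi}$, so the remaining work is to substitute $\xi_i=i(n+1-i)$ from Thm.~\ref{thm:Root_Length_Vector_A}. Here the Dynkin coefficients $\lambda_i$ are read in the chain ordering of the $A_n$ diagram. By Thm.~\ref{thm:Simple_Root_Relabeling_Invariance} this ordering is harmless, and the reversal symmetry $\xi_i=\xi_{n+1-i}$ shows it is consistent with either end of the chain.

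First, I will confirm the ingredients feeding into $\vec{\xi}$. The Cartan matrix is the tridiagonal $C$ with entries $2$ and $-1$. Its inverse has the closed form of Lem.~\ref{lem:Inverse_Cartan_Matrix_A}, which I would check by verifying $CC^{-1}=\1$ row by row. For interior rows,
\begin{align}
-C^{-1}_{i-1,j}+2C^{-1}_{ij}-C^{-1}_{i+1,j}=\delta_{ij}.
\end{align}
This holds because $\min\{i,j\}(n+1-\max\{i,j\})$ is piecewise linear in $i$ with a kink of total slope change $-(n+1)$ at $i=j$; the boundary rows are checked with the conventions $C^{-1}_{0j}=C^{-1}_{n+1,j}=0$. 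Second, with the normalization of Eq.~\eqref{eq:normal_Killing}, all roots of $A_n$ are long and have squared length $2$, giving Lem.~\ref{lem:Root_Length_Vector_A}. Third, the row sum in Thm.~\ref{thm:Root_Length_Vector_A} is an elementary arithmetic-series computation.

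Next comes the identification of the three expressions. The equality $\Delta^2_\psi(\mathfrak{sl}(n+1,\mathbb C))=\Delta^2_\psi(\mathfrak{su}(n+1))$ follows directly from Cor.~\ref{cor:Equivalence_Basis_Representations}, since $\mathfrak{su}(n+1)$ is the compact real form of $\mathfrak{sl}(n+1,\mathbb C)$ and the anti-Hermitian basis spans it. Optimality is inherited from Thm.~\ref{thm:Explicit_Uncertainty_Bound}. On a highest-weight vector all $\langle e_\alpha\rangle$ vanish, so Lem.~\ref{lem:Lie_Algebraic_Uncertainty_Identity_Weight} together with Thm.~\ref{thm:Value_Casimir_Element} gives exactly $\langle\Lambda,\Lambda+2\delta\rangle-\langle\Lambda,\Lambda\rangle=2\langle\Lambda,\delta\rangle=\vec{\lambda}^{\T}\vec{\xi}$.

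The only step requiring genuine care is the inverse-Cartan formula, in particular getting its boundary rows right; everything else is direct substitution into earlier results. As a sanity check I would verify $n=1$, which gives $\lambda_1=2j$ and bound $2j$, matching Cor.~\ref{cor:Angular_Momentum_Uncertainty_Relation}. I would also verify $n=2$ with $\vec{\lambda}=(1,0)^{\T}$, which gives $2$, matching Eq.~\eqref{eq:sl_3_UR_Optimal}.
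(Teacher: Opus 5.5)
Your proposal is correct and follows the paper's route: the corollary is obtained by substituting $\xi_i=i(n+1-i)$ from Thm.~\ref{thm:Root_Length_Vector_A} into Cor.~\ref{cor:Simplified_Lie_Algebraic_Uncertainty_Relations}, with the equality of the $\mathfrak{sl}(n+1,\mathbb C)$ and $\mathfrak{su}(n+1)$ expressions coming from basis independence and optimality from the highest-weight computation $2\langle\Lambda,\delta\rangle$. Your extra checks are also correct: the second-difference (kink) verification of Lem.~\ref{lem:Inverse_Cartan_Matrix_A}, which the paper only states, and the $n=1,2$ cases.
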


The rank dependence of the bound $\sum_{i=1}^{n}i(n+1-i)\lambda_i$ is made explicit in Tab.~\ref{tab:explicit_bounds}, where representative cases of the type-A series are listed.

\begin{table}[h]
\centering
\begin{tblr}{
      width = \textwidth,
      colspec = {c||c|X[c]},
      row{1} = {bg=gray!50, font=\bfseries}, 
      row{2} = {bg=gray!20, font=\bfseries},
      hlines,
      vlines,
      cells = {m},
      cell{3-10}{1} = {bg=gray!20},          
      cell{3-10}{2} = {bg=mRed},            
      cell{3-10}{3} = {bg=mGreen},           
}

\SetCell[c=3]{c}
\textbf{Type-A Lie Algebraic Uncertainty Relations\,\, \raisebox{-0.6ex}{\includegraphics[scale=0.8]{DK_A_n.pdf}}}
&
&
\\

Compact Lie algebra
&
Rank
&
Optimal Bound
\\

$\mathfrak{su}(2)$
&
$1$
&
$\lambda_1$
\\

$\mathfrak{su}(3)$
&
$2$
&
$2\lambda_1+2\lambda_2$
\\

$\mathfrak{su}(4)$
&
$3$
&
$3\lambda_1+4\lambda_2+3\lambda_3$
\\

$\mathfrak{su}(5)$
&
$4$
&
$4\lambda_1+6\lambda_2+6\lambda_3+4\lambda_4$
\\

$\mathfrak{su}(6)$
&
$5$
&
$5\lambda_1+8\lambda_2+9\lambda_3+8\lambda_4+5\lambda_5$
\\

$\mathfrak{su}(7)$
&
$6$
&
$\displaystyle
6\lambda_1+10\lambda_2+12\lambda_3
+12\lambda_4+10\lambda_5+6\lambda_6$
\\

$\vdots$
&
$\vdots$
&
$\vdots$
\\

$\mathfrak{su}(n+1)$
&
$n$
&
$\displaystyle
n\lambda_1
+2(n-1)\lambda_2
+3(n-2)\lambda_3
+\cdots
+3(n-2)\lambda_{n-2}
+2(n-1)\lambda_{n-1}
+n\lambda_n$
\\

\end{tblr}

\caption{
\textbf{Type-A Lie Algebraic Uncertainty Relations.}
For the compact Lie algebra $\mathfrak{su}(n+1)$ with Dynkin coefficient vector $\vec{\lambda}=(\lambda_1,\ldots,\lambda_n)^{\T}$, the state-independent optimal bound is
$\sum_{i=1}^{n} i(n+1-i)\lambda_i \,$.}
\label{tab:explicit_bounds}
\end{table}

Tab.~\ref{tab:explicit_bounds} expresses the uncertainty bounds in terms of Dynkin coefficient vector $\vec{\lambda}$ (see Eq.~\eqref{eq:Dynkin_Vector}). 
In physical applications, however, the representation is often specified more concretely, for example as the defining, adjoint, symmetric, or antisymmetric representation. To connect the general formula of Eq.~\eqref{eq:Lie_Algebraic_UR_Simplified} with these standard settings, Tab.~\ref{tab:An_representations} translates the general Lie algebraic uncertainty bound into explicit values for representations of $\mathfrak{su}(n+1)$. 
Once the Lie algebra and the representation are fixed, the corresponding optimal bound is obtained immediately.

\begin{table}[h]
\centering
\begin{tblr}{
      width = \textwidth,
      colspec = {X[c]||X[c]|X[c]|X[c]},
      row{1} = {bg=gray!50, font=\bfseries}, 
      row{2} = {bg=gray!20, font=\bfseries}, 
      hlines,
      vlines,
      cells = {m},                       
      cell{3-8}{1} = {bg=gray!20},          
      cell{3-8}{2} = {bg=mRed},            
      cell{3-8}{3} = {bg=mGreen},           
      cell{3-8}{4} = {bg=mBlue},            
}
      
\SetCell[c=4]{c} \textbf{Type-A Lie Algebraic Uncertainty Relations Across Representations} & & & \\
Representation & Highest Weight $\Lambda$ & Dynkin Vector $\vec{\lambda}^{\,\T}$ & Optimal Bound \\

Defining & $\omega_1$ & $(1,0,\dots,0)$ & $n$ \\

Dual Defining & $\omega_n$ & $(0,\dots,0,1)$ & $n$ \\

Adjoint & $\omega_1+\omega_n$ & $(1,0,\dots,0,1)$ & $2n$ \\

Second Symmetric Power & $2\omega_1$ & $(2,0,\dots,0)$ & $2n$ \\

Second Antisymmetric Power & $\omega_2$ & $(0,1,0,\dots,0)$ & $2(n-1)$ \\

Fundamental ($i$-th) & $\omega_i$ & $(0,\dots,1,\dots,0)$ & $i(n+1-i)$ \\

\end{tblr}
\caption{\textbf{Lie Algebraic Uncertainty Relations Across Representations of $\mathfrak{su}(n+1)$.} 
For several standard representations of $\mathfrak{su}(n+1)$, the table gives the highest weight $\Lambda$, the transpose of the Dynkin vector $\vec{\lambda}^{\,\T}$, and the corresponding optimal lower bound.
The Dynkin vector displayed in the ``Adjoint'' row applies
for $n\geqslant2$; for $A_1$, it is $(2)$.
The ``Second Antisymmetric Power'' row and Eq.~\eqref{eq:second} assume $n\geqslant2$; for $A_1$, $\bigwedge^2\mathbb C^2$ is the trivial
representation, with optimal uncertainty bound $0$.}
\label{tab:An_representations}
\end{table}

A fundamental representation is specified by a highest weight $\Lambda$ whose Dynkin coefficient vector $\vec{\lambda}$ (see Eq.~\eqref{eq:Dynkin_Vector}) contains exactly one nonzero entry, equal to 1. 
In Tab.~\ref{tab:An_representations}, for instance, the defining representation, the dual defining representation, and the second antisymmetric power representation are all fundamental representations, distinguished by the location of the nonzero Dynkin coefficient. 
Throughout this work, we refer to these cases as the first, last, and second fundamental representations, respectively.
\begin{align}
    \text{Defining}
    \quad\rightleftharpoons\quad
    &\vec{\lambda}^{\,\T}=(1,0,\dots,0)
    \quad\rightleftharpoons\quad
    \text{Fundamental (first)}\\
    \text{Dual Defining}
    \quad\rightleftharpoons\quad
    &\vec{\lambda}^{\,\T}=(0,0,\dots,1)
    \quad\rightleftharpoons\quad
    \text{Fundamental (last)}\\
    \text{Second Antisymmetric Power}
    \quad\rightleftharpoons\quad
    &\vec{\lambda}^{\,\T}=(0,1,\dots,0)
    \quad\rightleftharpoons\quad
    \text{Fundamental (second)}\label{eq:second}
\end{align}

The analysis above establishes the optimal type-A Lie algebraic uncertainty bound in closed form. 
The bound is fixed entirely by the highest weight of the representation and the simple root geometry of the underlying Lie algebra, and is therefore independent of any particular quantum state. 
This shows that the sharp uncertainty limit is an intrinsic algebraic constraint imposed by the representation itself. 
For $\mathfrak{su}(n+1)$, the formula gives explicit optimal bounds for the defining, adjoint, symmetric, antisymmetric, and fundamental representations, turning the abstract highest weight expression into concrete uncertainty limits for the representation spaces most commonly used in quantum theory.

To make the construction more transparent and to show how the general formula yields explicit uncertainty bounds, we consider the adjoint representation as a representative example. 
For a type-A Lie algebra of rank $n$, the Cartan subalgebra $\mathfrak h$ (see Def.~\ref{def:Roots}) has dimension $n$. 
We therefore have (see Tab.~\ref{tab:An_representations})
\begin{align}\label{eq:adj_A_n}
    \Delta^2_{\mathrm{ad}}(\raisebox{-0.6ex}{\includegraphics[scale=0.8]{DK_A_n.pdf}})
    \geqslant2n.
\end{align}
Here the subscript $\mathrm{ad}$ indicates that the corresponding representation is the adjoint representation. 
To make the discussion still more explicit, we illustrate below the first three cases, 
\begin{align}
    \Delta^2_{\mathrm{ad}}( \raisebox{-0.6ex}{\includegraphics[scale=0.8]{DK_B_1.pdf}} )
    &\geqslant2,\\
    \Delta^2_{\mathrm{ad}}(\raisebox{-0.8ex}{\includegraphics[scale=0.8]{DK_A_2.pdf}})
    &\geqslant4,\\
    \Delta^2_{\mathrm{ad}}( \raisebox{-0.8ex}{\includegraphics[scale=0.8]{DK_A_3.pdf}} )
    &\geqslant6.
\end{align}

Another important type-A example is the defining representation, in which the Lie algebra acts through its natural matrix realization. 
For a rank-$n$ Cartan subalgebra $\mathfrak h$ (see Def.~\ref{def:Roots}), we have (see Tab.~\ref{tab:An_representations})
\begin{align}\label{eq:def_A_n}
    \Delta^2_{\mathrm{def}}(\raisebox{-0.6ex}{\includegraphics[scale=0.8]{DK_A_n.pdf}})
    \geqslant n.
\end{align}
As in the preceding discussions, we illustrate the general construction with several low-rank examples.
\begin{align}
    \Delta^2_{\mathrm{def}}( \raisebox{-0.6ex}{\includegraphics[scale=0.8]{DK_B_1.pdf}} )
    &\geqslant1,\\
    \Delta^2_{\mathrm{def}}(\raisebox{-0.8ex}{\includegraphics[scale=0.8]{DK_A_2.pdf}})
    &\geqslant2,\\
    \Delta^2_{\mathrm{def}}( \raisebox{-0.8ex}{\includegraphics[scale=0.8]{DK_A_3.pdf}} )
    &\geqslant3.
\end{align}
This completes our analysis of the type-A series, showing how the general Lie algebraic uncertainty relation reduces to explicit, optimal bounds across its representations.


\subsection{Explicit Bounds for Type-B Lie Algebras}
\label{subsec:B_type}

We next consider the Type-B series, associated with the simple Lie algebra $B_n=\mathfrak{so}(2n+1,\mathbb C)$ and compact real form $\mathfrak{so}(2n+1)$. 
Unlike the simply laced Type-A series, Type-B contains simple roots of unequal length. 
This feature is reflected in its Dynkin diagram:
a linear chain whose final edge is a double bond. 
With the standard ordering, the terminal node corresponds to the short simple root. 
The Dynkin diagram takes the form

\begin{equation}\label{eq:DynkinBn_Diagram}
    \raisebox{-3.2ex}{\includegraphics[scale=1]{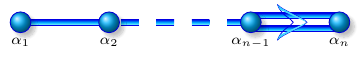}} \, .
\end{equation}

The analysis of type-B representations splits according to the rank. 
We first consider the generic regime $n\geqslant 2$, before treating the rank-one case $n=1$ separately.
With the ordering of Eq.~\eqref{eq:DynkinBn_Diagram}, the Cartan matrix of $B_n$ is tridiagonal but not symmetric.
Its diagonal entries are $C_{ii}=2$ for $1\leqslant i\leqslant n$.
Along the simply laced part of the chain, the nearest neighbor entries are $C_{i,(i+1)}=C_{(i+1),i}=-1$ for $1\leqslant i\leqslant n-2$.
The final double bond produces the asymmetric pair $C_{(n-1),n}=-2$ and $C_{n,(n-1)}=-1$ reflecting the fact that the terminal simple root $\alpha_n$ is short under the convention.
All other off-diagonal entries vanish. Equivalently, the Cartan matrix is
\begin{align}
C
=
\begin{pmatrix}
2 & -1 & 0 & \cdots & 0 & 0\\
-1 & 2 & -1 & \cdots & 0 & 0\\
0 & -1 & 2 & \ddots & \vdots & \vdots\\
\vdots & \vdots & \ddots & \ddots & -1 & 0\\
0 & 0 & \cdots & -1 & 2 & -2\\
0 & 0 & \cdots & 0 & -1 & 2
\end{pmatrix}.
\end{align}
The corresponding inverse is
\begin{align}
C^{-1}
=
\begin{pmatrix}
1 & 1 & 1 & \cdots & 1 & 1\\
1 & 2 & 2 & \cdots & 2 & 2\\
1 & 2 & 3 & \cdots & 3 & 3\\
\vdots & \vdots & \vdots & \ddots & \vdots & \vdots\\
1 & 2 & 3 & \cdots & n-1 & n-1\\
\dfrac12 & 1 & \dfrac32 & \cdots & \dfrac{n-1}{2} & \dfrac n2
\end{pmatrix}.
\end{align}
The entries of the inverse Cartan matrix are given explicitly by the following closed-form formula.

\begin{mylem}{Entries of $C^{-1}$ for Type-B Lie Algebras ($n\geqslant2$)}{Inverse_Cartan_Matrix_B}
For the type-B Lie algebra $B_n$ with $n\geqslant2$, the $(i,j)$-entry of the inverse Cartan matrix is
\begin{align}
\label{eq:Bn_inverse}
C^{-1}_{ij}
=
\begin{cases}
\min\{i,j\},
&
1\leqslant i,j\leqslant n-1,
\\[1.5ex]  
&
1\leqslant i\leqslant n-1,\quad\text{and}\quad j=n,
\\[1.5ex]  
\dfrac{j}{2},
&
i=n,\quad\text{and}\quad 1\leqslant j\leqslant n-1,
\\[2.5ex]  
\dfrac{n}{2},
&
i=j=n.
\end{cases}
\end{align}
\end{mylem}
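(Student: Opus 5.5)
Note first that the statement as printed has lost the value in its second case. The displayed matrix $C^{-1}$ shows that for $1\leqslant i\leqslant n-1$ and $j=n$ the entry is $i$, the same as $\min\{i,j\}$. The plan below proves the lemma with that reading.

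The plan is a direct verification rather than a derivation. Let $X$ be the matrix whose entries are given by Eq.~\eqref{eq:Bn_inverse}, with the second case equal to $i$. Since $C$ is square, it suffices to check $CX=\1_n$ entrywise, and then $X=C^{-1}$ follows by uniqueness of the inverse. The Cartan matrix is tridiagonal, so the $i$-th row of $CX$ is $2X_{i\cdot}-X_{(i-1)\cdot}-X_{(i+1)\cdot}$ for interior rows. The only deviations occur in row $n-1$, which carries the coefficient $-2$ on $X_{n\cdot}$, and in row $n$, which reads $-X_{(n-1)\cdot}+2X_{n\cdot}$. Row $1$ simply has no $X_{0\cdot}$ term.

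I would organize the check by rows of $CX$.

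For rows $1\leqslant i\leqslant n-2$, only rows $i-1,i,i+1\leqslant n-1$ of $X$ appear, and there $X_{kj}=\min\{k,j\}$ for every column, including $j=n$, because $\min\{k,n\}=k$. The function $k\mapsto\min\{k,j\}$ is linear in $k$ on each side of $j$ with a single kink at $k=j$. Hence its second difference $2\min\{i,j\}-\min\{i-1,j\}-\min\{i+1,j\}$ equals $\delta_{ij}$. The boundary row $i=1$ works because $\min\{0,j\}=0$.

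For row $n-1$, I must show $2X_{(n-1)j}-X_{(n-2)j}-2X_{nj}=\delta_{(n-1)j}$. For $j\leqslant n-1$ this is $2\min\{n-1,j\}-\min\{n-2,j\}-j$, which equals $0$ for $j\leqslant n-2$ and $1$ for $j=n-1$. For $j=n$ it is $2(n-1)-(n-2)-n=0$.

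For row $n$, I must show $-X_{(n-1)j}+2X_{nj}=\delta_{nj}$. For $j\leqslant n-1$ this is $-j+j=0$, and for $j=n$ it is $-(n-1)+n=1$.

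Every step is elementary, so the only real obstacle is bookkeeping at the double-bond corner: the asymmetric pair $C_{(n-1),n}=-2$ and $C_{n,(n-1)}=-1$. This is exactly where the factor $1/2$ in the last row of $C^{-1}$ comes from.

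As a structural cross-check, I would also note that $C^{-1}=D^{-1}SD$ for a symmetric matrix $S$ of $\min$-type with $D=\mathrm{diag}(\langle\alpha_j,\alpha_j\rangle)$. The roots $\alpha_1,\ldots,\alpha_{n-1}$ have squared length $2$ and $\alpha_n$ has squared length $1$, and this scaling explains the halving of the last row.
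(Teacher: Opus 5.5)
Your verification is correct, and your reading of the lost second case as $C^{-1}_{in}=i$ is the right one. It matches the displayed matrix, and it is what the paper uses in the proof of the type-B components of $\vec{\xi}$, where the $j=n$ contribution appears as $i\cdot 1$.

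The paper gives no argument for this lemma beyond displaying $C^{-1}$ and reading off its entries. Your row-by-row check of $CX=\1_n$ therefore supplies the verification the paper omits. That check consists of the second differences of $k\mapsto\min\{k,j\}$ on rows $1,\ldots,n-2$, followed by the two corner rows carrying $C_{(n-1),n}=-2$ and $C_{n,(n-1)}=-1$. It also covers $n=2$: the interior range is then empty and row $n-1$ is row $1$, which works under the convention $X_{0\cdot}=0$ you already adopt.

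Your structural cross-check, however, is misstated.
\begin{itemize}
\item A similarity $C^{-1}=D^{-1}SD$ with $D=\mathrm{diag}(2,\ldots,2,1)$ and $S$ symmetric would force $C^{-1}_{ni}/C^{-1}_{in}=D_{ii}^2/D_{nn}^2=4$. The actual ratio is $(i/2)/i=1/2$.
\item The correct statement is one-sided. Since $C_{ij}=2\langle\alpha_i,\alpha_j\rangle/\langle\alpha_j,\alpha_j\rangle$, one has $C=BD^{-1}$ with $B=(2\langle\alpha_i,\alpha_j\rangle)_{i,j}$ symmetric, hence $C^{-1}=DB^{-1}$.
\item For $B_n$ one checks $B^{-1}=\tfrac12(\min\{i,j\})_{i,j}$. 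Thus $C^{-1}=D\cdot\tfrac12(\min\{i,j\})_{i,j}$, which scales rows only and halves just the last one.
\item Equivalently, $C^{-1}D=(2\langle\omega_i,\omega_j\rangle)_{i,j}$ is symmetric. This agrees with the paper's identity $\langle\omega_i,\omega_j\rangle=\tfrac12 C^{-1}_{ij}\langle\alpha_j,\alpha_j\rangle$.
\end{itemize}
This remark plays no role in your argument, so the proof stands as written.
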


The Dynkin diagram determines the Cartan matrix and, once the standard normalization of the long roots is fixed, also fixes the squared lengths of the simple roots. 
For type-B, we use the convention that long roots have squared length $2$. 
Hence $\langle\alpha_i,\alpha_i\rangle=2$ for $1\leqslant i\leqslant n-1$, and $\langle\alpha_n,\alpha_n\rangle=1$, where the terminal root $\alpha_n$ is the unique short simple root.
These values define the type-B squared length vector $\vec{\alpha}$.

\begin{mylem}{Root Length Vector $\vec{\alpha}$ for Type-B Lie Algebras ($n\geqslant2$)}{Root_Length_Vector_B} 
For the type-B Lie algebra $B_n$ with $n\geqslant2$, the squared simple root lengths form the vector 
\begin{align}
\label{eq:Bn_Root_Length_Vector}
    \vec{\alpha}
    =
\begin{pmatrix}
2 \\
\vdots \\
2 \\
1
\end{pmatrix}.
\end{align}
\end{mylem}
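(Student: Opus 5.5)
The statement to be proved is the lemma giving $\vec{\alpha}=(2,\ldots,2,1)^{\T}$ for $B_n$, $n\geqslant2$. The plan is to verify directly that, under the normalization of Eq.~\eqref{eq:normal_Killing} (long roots of squared length $2$), the simple roots $\alpha_1,\ldots,\alpha_{n-1}$ are long and $\alpha_n$ is short with $\langle\alpha_n,\alpha_n\rangle=1$.

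First, I will use only the Cartan matrix already written down for the ordering of Eq.~\eqref{eq:DynkinBn_Diagram}. For $1\leqslant i\leqslant n-2$ we have $C_{i,i+1}=C_{i+1,i}=-1$. By Def.~\ref{def:Cartan_Matrix}, $C_{ij}/C_{ji}=\langle\alpha_i,\alpha_i\rangle/\langle\alpha_j,\alpha_j\rangle$ whenever $C_{ji}\neq0$. Hence $\alpha_1,\ldots,\alpha_{n-1}$ all share a common squared length $\ell$. The double bond gives $C_{n-1,n}=-2$ and $C_{n,n-1}=-1$, so $\langle\alpha_{n-1},\alpha_{n-1}\rangle/\langle\alpha_n,\alpha_n\rangle=2$. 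Thus $\langle\alpha_n,\alpha_n\rangle=\ell/2$.

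Next I need to fix $\ell$, which means showing that $\alpha_1$ (hence the whole chain $\alpha_1,\ldots,\alpha_{n-1}$) has the maximal root length in $\Phi$. Root systems of type $B_n$ contain at most two root lengths, and every root is $W$-conjugate to a simple root. Therefore the long roots are exactly the $W$-orbit of the longer simple roots, which are $\alpha_1,\ldots,\alpha_{n-1}$. Since Eq.~\eqref{eq:normal_Killing} normalizes long roots to squared length $2$, we get $\ell=2$ and $\langle\alpha_n,\alpha_n\rangle=1$, which is the vector in the statement.

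The main obstacle is justifying that the normalization $\kappa=\mathrm{Tr}[\mathrm{ad}\,\mathrm{ad}]/(2h^\vee)$ with $h^\vee=2n-1$ really makes long roots have squared length $2$. The paper asserts this after Eq.~\eqref{eq:normal_Killing}; I will take it as given. As a sanity check, I would also verify it concretely. In the standard realization $\alpha_i=\epsilon_i-\epsilon_{i+1}$, $\alpha_n=\epsilon_n$, the Killing form on $\mathfrak{so}(2n+1)$ is $(2n-1)\,\mathrm{Tr}_{\mathrm{def}}$. Hence $\kappa=\tfrac12\mathrm{Tr}_{\mathrm{def}}$, which induces $\langle\epsilon_i,\epsilon_j\rangle=\delta_{ij}$. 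This gives $\|\epsilon_i-\epsilon_{i+1}\|^2=2$ and $\|\epsilon_n\|^2=1$, consistent with the lemma.
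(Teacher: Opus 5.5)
Your proposal is correct and follows the same route as the paper: the paper reads the relative lengths off the $B_n$ Dynkin diagram (equivalently its Cartan matrix, with $\alpha_n$ the unique short simple root) and fixes the scale by the convention that long roots have squared length $2$. You justify each step more carefully than the paper does --- the ratio $C_{ij}/C_{ji}=\langle\alpha_i,\alpha_i\rangle/\langle\alpha_j,\alpha_j\rangle$, Weyl-group conjugacy to identify $\alpha_1,\ldots,\alpha_{n-1}$ as long, and the explicit check $\kappa=\tfrac12\mathrm{Tr}_{\mathrm{def}}$ giving $\langle\epsilon_i,\epsilon_j\rangle=\delta_{ij}$ --- but no further idea is needed.
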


Combining the entrywise formula for $C^{-1}$ in Lem.~\ref{lem:Inverse_Cartan_Matrix_B} with the simple root squared length vector in Lem.~\ref{lem:Root_Length_Vector_B} gives an explicit expression for $\vec{\xi}$. 
The resulting type-B components are stated in the following theorem.

\begin{mythm}{Components of $\vec{\xi}$ for Type-B Lie Algebras ($n\geqslant2$)}{Root_Length_Vector_B}
For the type-B Lie algebra $B_n$ with $n\geqslant2$, the components of the vector $\vec{\xi}$ (see Eq.~\eqref{eq:xi}) are
\begin{align}
\xi_i=
\begin{cases}
       i(2n-i),\quad  &1\leqslant i\leqslant  n-1,\\[2ex]
       \dfrac{n^2}{2},\quad &i=n.
\end{cases}
\end{align}
\end{mythm}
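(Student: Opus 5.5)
The computation uses the definition $\xi_i=\sum_{j=1}^n C^{-1}_{ij}\langle\alpha_j,\alpha_j\rangle$ from Eq.~\eqref{eq:xi}. I substitute the entrywise inverse Cartan matrix of Lem.~\ref{lem:Inverse_Cartan_Matrix_B} and the root length vector of Lem.~\ref{lem:Root_Length_Vector_B}, in which $\langle\alpha_j,\alpha_j\rangle=2$ for $j\leqslant n-1$ and $\langle\alpha_n,\alpha_n\rangle=1$. The row sum then splits naturally into the long-root block $j\leqslant n-1$, with weight $2$, and the single short-root column $j=n$, with weight $1$. I also split on whether $i\leqslant n-1$ or $i=n$.

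For $1\leqslant i\leqslant n-1$, I read the lemma (including its second case, whose value is elided in the statement) as $C^{-1}_{ij}=\min\{i,j\}$ for all $j\leqslant n-1$ and $C^{-1}_{in}=i$. This matches the displayed matrix, whose last column agrees with the $(n-1)$-th column in the first $n-1$ rows. The sum then becomes
\begin{align}
\xi_i=2\sum_{j=1}^{n-1}\min\{i,j\}+i
=2\Bigl(\frac{i(i+1)}{2}+i(n-1-i)\Bigr)+i,
\end{align}
obtained by splitting at $j\leqslant i$ and $j>i$ exactly as in the type-A proof. This simplifies to $i(i+1)+2i(n-1-i)+i=2in-i^2=i(2n-i)$.

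For $i=n$, the last row gives $C^{-1}_{nj}=j/2$ for $j\leqslant n-1$ and $C^{-1}_{nn}=n/2$. Hence
\begin{align}
\xi_n=2\sum_{j=1}^{n-1}\frac{j}{2}+\frac{n}{2}=\frac{n(n-1)}{2}+\frac{n}{2}=\frac{n^2}{2}.
\end{align}

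The only real obstacle is confirming the elided second case of Lem.~\ref{lem:Inverse_Cartan_Matrix_B}. I would check it directly by verifying $CC^{-1}=\mathbb 1$ on the last column. For a middle row $k$ with $2\leqslant k\leqslant n-2$, the entry is $-(k-1)+2k-(k+1)=0$. The first row gives $2\cdot1-2=0$, where $(C^{-1})_{1n}=1$ and $(C^{-1})_{2n}=2$. Row $n-1$ gives $-(n-2)+2(n-1)-2\cdot\frac n2=0$. Row $n$ gives $-(n-1)+2\cdot\frac n2=1$. This confirms $C^{-1}_{in}=i$, and the computation above is complete.

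Since $n\geqslant2$ guarantees that both blocks are nonempty and well defined, no separate low-rank case is needed here.
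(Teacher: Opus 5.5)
Your proposal is correct and follows the paper's proof exactly: the same row-sum $\xi_i=2\sum_{j=1}^{n-1}\min\{i,j\}+i$ split at $j\leqslant i$ and $j>i$ for the interior nodes, and the last row of $C^{-1}$ for $\xi_n=n^2/2$. Your extra check that the last column of $C^{-1}$ solves $Cx=e_n$ usefully confirms the value $C^{-1}_{in}=i$, which the paper's lemma leaves blank but its proof uses implicitly (for $n=2$ your first-row case coincides with row $n-1$, which your computation also covers).
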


\begin{proof}
For $1\leqslant i\leqslant n-1$, the entrywise expression in Eq.~\eqref{eq:Bn_inverse} and the simple root length give
\begin{align}
    \xi_i
    &=
    \left(\sum_{j=1}^{n-1}
    C^{-1}_{ij}
    \langle\alpha_j,\alpha_j\rangle\right)
    +
    \left(C^{-1}_{i n}
    \langle\alpha_n,\alpha_n\rangle\right)
    \\
    &=
    2\sum_{j=1}^{n-1}
    \min\{i,j\}
    +
    i\cdot1 .
\end{align}
Decomposing the sum according to the two regimes $j\leqslant i$ and $j>i$, we obtain
\begin{align}
    \xi_i
    &=
    \left(2
    \underbrace{ \sum_{j=1}^{i}j}_{j\leqslant i}
    \right)+
    \left(2\underbrace{\sum_{j=i+1}^{n-1}i}_{j>i}\right)
      +
   \underbrace{i}_{j=n}
    \\
    &=
    2
    \left(
    \frac{i(i+1)}2
    +
    i(n-1-i)
    \right)
    +
    i
    \\
    &=
    i(2n-i).
\end{align}
Thus, for the interior indices $1\leqslant i\leqslant n-1$, 
\begin{align}
    \xi_i=i(2n-i).
\end{align}

It remains to consider the terminal short root node $i=n$, for which we use the final row of $C^{-1}$.

\begin{align}
    \xi_n
    &=
    \left(\sum_{j=1}^{n-1}
    C^{-1}_{n j}
    \langle\alpha_j,\alpha_j\rangle\right)
    +
    \left(C^{-1}_{n n}
    \langle\alpha_n,\alpha_n\rangle\right)
    \\
    &=
    \sum_{j=1}^{n-1}
    \frac{j}{2}\cdot 2
    +
    \frac n2\cdot 1
    \\
    &=
    \frac{n(n-1)}2
    +
    \frac n2\\
    &=
    \frac{n^2}{2}.
\end{align}
This proves the stated expressions and completes the proof.
\end{proof}

With the closed-form coefficients $\xi_i$ for the type-B root system in hand, the abstract uncertainty bound becomes an explicit representation-theoretic inequality.
Substituting the vector $\vec{\xi}$ into Cor.~\ref{cor:Simplified_Lie_Algebraic_Uncertainty_Relations} yields the following type-B Lie-algebraic uncertainty relation.

\begin{mycor}{Type-B Lie Algebraic Uncertainty Relations ($n\geqslant2$)}{Type_B_Lie_UR_2}
For the type-B Lie algebra $B_n$ (see Tab.~\ref{tab:Simple_Lie_Algebra}) with $n\geqslant2$, the optimal Lie algebraic uncertainty relation (see Def.~\ref{def:Lie_Algebraic_Uncertainty_Relations}) takes the explicit form
\begin{align}
   \Delta^2_{\psi}( \raisebox{-1.4ex}{\includegraphics[scale=0.8]{DK_B_n.pdf}} )
   \coloneqq
   \Delta^2_{\psi}(\mathfrak{so}(2n+1,\mathbb C))
   =
   \Delta^2_{\psi}(\mathfrak{so}(2n+1))
   \geqslant
   \left(\sum_{i=1}^{n-1}i(2n-i)\lambda_i\right)+
   \frac{n^2}{2}\lambda_n\,.
\end{align}
The lower bound is state-independent and is fixed entirely by representation-theoretic structure: 
the highest weight $\Lambda$ (see Subsec.~\ref{subsec:Highest_Weight_Representations}), or equivalently its Dynkin coefficient vector $\vec{\lambda}$ (see Eq.\eqref{eq:Dynkin_Vector}), together with the type-B root system geometry encoded in $\vec{\xi}$ (see Eq.~\eqref{eq:xi}). 
Its optimality follows directly from the general result established in Cor.~\ref{cor:Simplified_Lie_Algebraic_Uncertainty_Relations}. 
\end{mycor}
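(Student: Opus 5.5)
The statement to be proved is Cor.~\ref{cor:Type_B_Lie_UR_2}. My plan is to obtain it by specializing Cor.~\ref{cor:Simplified_Lie_Algebraic_Uncertainty_Relations}, that is, the general bound $\Delta^2_\psi(\g)\geqslant\vec{\lambda}^{\,\T}\vec{\xi}$ with $\vec{\xi}=C^{-1}\vec{\alpha}$, to the root system of type $B_n$. I will label the simple roots as in Eq.~\eqref{eq:DynkinBn_Diagram}. By Thm.~\ref{thm:Simple_Root_Relabeling_Invariance} this choice costs nothing. The main work is therefore to confirm the ingredients of Thm.~\ref{thm:Root_Length_Vector_B} and then substitute.

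\textbf{Step 1: the inverse Cartan matrix.} I will verify the inverse directly by checking $CC^{-1}=\1$ with the claimed entries. (The case $1\leqslant i\leqslant n-1$, $j=n$ of Lem.~\ref{lem:Inverse_Cartan_Matrix_B} is left blank in the statement; the displayed matrix shows the value there is $\min\{i,j\}=i$.) Rows $1$ through $n-2$ of $C$ are the $A$-type stencil $(-1,2,-1)$. Applied to columns of the form $\min\{i,j\}$, this stencil gives $\delta_{ij}$ by the usual discrete second-difference argument.

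Row $n-1$ reads $(\ldots,-1,2,-2)$. Acting on column $j$ it gives $-(n-2\wedge j)+2(n-1\wedge j)-2\cdot\tfrac{j}{2}$ for $j\leqslant n-1$, and this equals $\delta_{j,n-1}$. Acting on column $n$ it gives $-(n-2)+2(n-1)-n=0$. Row $n$ reads $(\ldots,-1,2)$ and is checked the same way.

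An alternative route is to use Eq.~\eqref{eq:fundamental_inner}, $\langle\omega_i,\omega_j\rangle=\tfrac12 C^{-1}_{ij}\langle\alpha_j,\alpha_j\rangle$. In the orthonormal model $\alpha_i=\epsilon_i-\epsilon_{i+1}$, $\alpha_n=\epsilon_n$, the fundamental weights are $\omega_i=\epsilon_1+\cdots+\epsilon_i$ for $i<n$ and $\omega_n=\tfrac12(\epsilon_1+\cdots+\epsilon_n)$. Computing their inner products gives the same matrix as a cross-check.

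\textbf{Step 2: the root lengths.} With long roots normalized to squared length $2$, I need $\vec{\alpha}=(2,\ldots,2,1)^{\T}$. This follows from the Cartan-matrix asymmetry $C_{n-1,n}/C_{n,n-1}=\langle\alpha_n,\alpha_n\rangle^{-1}\langle\alpha_{n-1},\alpha_{n-1}\rangle\cdot\ldots$, i.e. the double bond forces the ratio of squared lengths to be $2$. I expect this step to be the main obstacle, because it is really a normalization issue rather than a computation. One must confirm that the form $\kappa$ of Eq.~\eqref{eq:normal_Killing}, scaled by $1/(2h^\vee)$ with $h^\vee=2n-1$, does give long roots squared length $2$ for $B_n$. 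I would check this in the vector representation: there $\kappa(x,y)=\tfrac12\Tr(xy)$ (the Dynkin index of the vector representation of $\mathfrak{so}(2n+1)$ is $2$). Computing $\langle\epsilon_1-\epsilon_2,\epsilon_1-\epsilon_2\rangle=2$ then confirms the normalization.

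\textbf{Step 3: summation and conclusion.} For $i\leqslant n-1$ I compute $\xi_i=2\sum_{j<n}\min(i,j)+i=i(i+1)+2i(n-1-i)+i=i(2n-i)$. For the last entry, $\xi_n=\sum_{j<n}j+\tfrac n2=\tfrac{n^2}{2}$. Inserting these into $\vec{\lambda}^{\,\T}\vec{\xi}$ gives the displayed bound.

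Optimality is inherited from Thm.~\ref{thm:Explicit_Uncertainty_Bound} and Thm.~\ref{thm:Lie_Algebraic_Uncertainty_Relations}. The highest-weight vector $\ket{\psi_\Lambda}$ has $\|M_\psi\|^2=\langle\Lambda,\Lambda\rangle$, because its moment element is the Cartan element dual to $\Lambda$, so equality holds there. The equality $\Delta^2(\mathfrak{so}(2n+1,\mathbb C))=\Delta^2(\mathfrak{so}(2n+1))$ is Cor.~\ref{cor:Equivalence_Basis_Representations}.

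As a sanity check I would confirm $n=2$ against type $C_2$ under the isomorphism $B_2\simeq C_2$. The $B_2$ formula gives $\xi=(3,2)$, and the $C_2$ formula gives $(2,3)$ with the node order swapped, so the two agree.
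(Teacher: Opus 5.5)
Your proposal is correct and follows the same route as the paper. You specialize $\vec{\lambda}^{\,\T}C^{-1}\vec{\alpha}$ using the type-B inverse Cartan matrix and $\vec{\alpha}=(2,\ldots,2,1)^{\T}$, then evaluate the row sums $\xi_i=i(2n-i)$ and $\xi_n=n^2/2$ exactly as the paper does when computing the components of $\vec{\xi}$; your direct check of $CC^{-1}$ (which also supplies the entry $C^{-1}_{in}=i$ left blank in the lemma) and your check of the long-root normalization via $\kappa(x,y)=\tfrac12\Tr(xy)$ in the vector representation verify points the paper only asserts, without changing the argument.
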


The discussion above covers the generic type-B case $n\geqslant 2$. 
We now consider the exceptional rank-one case $n=1$, for which the Cartan matrix $C$ (see Def.~\ref{def:Cartan_Matrix}) reduces to the $1\times 1$ matrix
\begin{align}
    C
    =
    \begin{pmatrix}
        2
    \end{pmatrix},
\end{align}
with inverse
\begin{align}
    C^{-1}
    =
    \begin{pmatrix}
        \frac{1}{2}
    \end{pmatrix}.
\end{align}
In this rank-one case, the squared root length vector $\vec{\alpha}$ takes the form
\begin{align}
    \vec{\alpha}
    =
    \begin{pmatrix}
    2 
    \end{pmatrix}.
\end{align}
Together with the inverse Cartan matrix above, Eq.~\eqref{eq:xi} gives 
\begin{align}
    \vec{\xi}=
    C^{-1}\cdot\vec{\alpha}=
    \begin{pmatrix}
        1
    \end{pmatrix},
\end{align}
Substituting the vector $\vec{\xi}$ into the general bound $\langle\vec{\lambda},\vec{\xi}\rangle$ (see Eq.~\eqref{eq:Lie_Algebraic_UR_Simplified}) yields the type-B Lie algebraic uncertainty relation for rank one:

\begin{mycor}{Type-B Lie Algebraic Uncertainty Relations ($n=1$)}{Type_B_Lie_UR_1}
For the type-B Lie algebra $B_n$ (see  Tab.~\ref{tab:Simple_Lie_Algebra}) with $n=1$, the optimal Lie algebraic uncertainty relation (see Def.~\ref{def:Lie_Algebraic_Uncertainty_Relations}) takes the explicit form 
\begin{align}
   \Delta^2_{\psi}( \raisebox{-0.6ex}{\includegraphics[scale=0.8]{DK_B_1.pdf}} )
   \coloneqq
   \Delta^2_{\psi}(\mathfrak{so}(3,\mathbb C))
   =
   \Delta^2_{\psi}(\mathfrak{so}(3))
   \geqslant
   \lambda_1\,.
\end{align}
The lower bound is state-independent and, in this rank-one case, is fixed entirely by the leading Dynkin coefficient $\lambda_1$ of the highest weight $\Lambda$ (see Subsec.~\ref{subsec:Highest_Weight_Representations}). 
\end{mycor}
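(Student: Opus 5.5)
The corollary is the rank-one specialization of Cor.~\ref{cor:Simplified_Lie_Algebraic_Uncertainty_Relations}, so the plan is simply to evaluate $\vec{\lambda}^{\,\T}\vec{\xi}$ for $B_1$. The rank-one data were fixed immediately before the statement: $C=(2)$, $C^{-1}=(1/2)$, and $\vec{\alpha}=(2)$. Here the unique simple root is long and has squared length $2$, as required by the normalization in Eq.~\eqref{eq:normal_Killing}. Consequently $\vec{\xi}=C^{-1}\vec{\alpha}=(1)$, and Eq.~\eqref{eq:Lie_Algebraic_UR_Simplified} gives $\Delta^2_\psi(\g)\geqslant\lambda_1$ for every $\ket{\psi}\in V(\Lambda)$.

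The one point needing care is the normalization. The general formula in Eq.~\eqref{eq:final_bound} holds in the $\kappa$-normalization, where long roots have squared length $2$. For $\mathfrak{so}(3,\mathbb C)\cong\mathfrak{sl}(2,\mathbb C)$ the dual Coxeter number is $2$, so we use $h^\vee=2$ rather than the formula $2n-1$ from Tab.~\ref{tab:Dual_Coxeter_Numbers}, which applies only for $n\geqslant2$. With this choice the root has squared length $2$ and $\vec{\alpha}=(2)$ is justified.

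As a consistency check, Cor.~\ref{cor:Angular_Momentum_Uncertainty_Relation} gives the bound $2j$ for spin $j$. Since $\lambda_1=\Lambda(h_\alpha)=2j$, this matches $\lambda_1$.

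The identities $\Delta^2_\psi(\mathfrak{so}(3,\mathbb C))=\Delta^2_\psi(\mathfrak{so}(3))$ follow from Cor.~\ref{cor:Equivalence_Basis_Representations}.

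Optimality comes from Thm.~\ref{thm:Explicit_Uncertainty_Bound}. For the highest-weight state, $\|M_\psi\|^2=\langle\Lambda,\Lambda\rangle$ because the expectation of the Cartan generator attains the top weight, so equality holds in Lem.~\ref{lem:Quantum_Uncertainty_Moment_Element}. No real obstacle remains beyond the normalization bookkeeping above.
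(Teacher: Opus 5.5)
Your proposal is correct and follows the paper's own argument: the paper likewise takes $C=(2)$, $C^{-1}=(1/2)$ and $\vec{\alpha}=(2)$, obtains $\vec{\xi}=(1)$, substitutes into Eq.~\eqref{eq:Lie_Algebraic_UR_Simplified}, and then checks the result against the spin-$j$ bound via $\lambda_1=2j$. Your remark that one must take $h^\vee=2$ rather than $2n-1=1$, so that the single root has squared length $2$, makes explicit a normalization choice that the paper leaves implicit.
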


The rank-one endpoint also confirms the consistency of the classification. 
The non-simply-laced structure that distinguishes type-B appears only for $n\geqslant 2$; when $n=1$, the Dynkin diagram contains a single node and coincides with the type-A rank-one diagram. 
With the Dynkin diagram matched, the type-B rank-one bound therefore reduces exactly to the type-A uncertainty relation
\begin{align}
    \min_{\psi}\Delta^2_{\psi}(\mathfrak{sl}(2,\mathbb C))=
    \min_{\psi}\Delta^2_{\psi}(\mathfrak{su}(2))=
    \min_{\psi}\Delta^2_{\psi}( \raisebox{-0.6ex}{\includegraphics[scale=0.8]{DK_B_1.pdf}} )=
    \min_{\psi}\Delta^2_{\psi}(\mathfrak{so}(3,\mathbb C))=
    \min_{\psi}\Delta^2_{\psi}(\mathfrak{so}(3))=
    \lambda_1.
\end{align}
This coincidence of the uncertainty bounds has a clear algebraic origin. 
In rank one, the compact real forms satisfy $\mathfrak{su}(2)\simeq\mathfrak{so}(3)$, so the two descriptions encode the same Lie algebra.
The resulting quantum uncertainty bound is therefore identical. 
This illustrates that isomorphic Lie algebras have the same optimal bound for corresponding representations, with consistently normalized invariant forms.
The spin-$j$ representation provides a direct physical realization of this rank-one result. 
Here the highest weight is $\Lambda=2j\omega_1$ and the Dynkin coefficient is $\lambda_1=2j$; equivalently, $\Lambda(J_z)=j$. 
The rank-one Lie algebraic bound therefore reduces exactly to the familiar angular momentum uncertainty relation, i.e.,
\begin{align}
    \Delta^2_{\psi}( \raisebox{-0.6ex}{\includegraphics[scale=0.8]{DK_B_1.pdf}} )=
    2\left(
    \Delta_{\psi}^2(J_x) + 
    \Delta_{\psi}^2(J_y) + 
    \Delta_{\psi}^2(J_z)
    \right)
    \geqslant2j,
\end{align}
recovering Cor.~\ref{cor:Angular_Momentum_Uncertainty_Relation}.
Tab.~\ref{tab:explicit_bounds_B} gives type-B uncertainty bounds explicitly in terms of the Dynkin coefficient vector $\vec{\lambda}$ (see Eq.~\eqref{eq:Dynkin_Vector}).

\begin{table}[h]
\centering
\begin{tblr}{
width = \textwidth,
colspec = {c||c|X[c]},
row{1} = {bg=gray!50, font=\bfseries}, 
row{2} = {bg=gray!20, font=\bfseries}, 
hlines,
vlines,
cells = {m},
cell{3-11}{1} = {bg=gray!20},         
cell{3-11}{2} = {bg=mRed},             
cell{3-11}{3} = {bg=mGreen},           
}

\SetCell[c=3]{c}
\textbf{Type-B Lie Algebraic Uncertainty Relations\,\,  \raisebox{-1.4ex}{\includegraphics[scale=0.8]{DK_B_n.pdf}} }
&
&
\\

Compact Lie algebra
&
Rank
&
Optimal Bound
\\
$\mathfrak{so}(3)$
&
$1$
&
$\lambda_1$
\\
$\mathfrak{so}(5)$
&
$2$
&
$3\lambda_1+2\lambda_2$
\\

$\mathfrak{so}(7)$
&
$3$
&
$5\lambda_1+8\lambda_2+\dfrac{9}{2}\lambda_3$
\\

$\mathfrak{so}(9)$
&
$4$
&
$7\lambda_1+12\lambda_2+15\lambda_3+8\lambda_4$
\\

$\mathfrak{so}(11)$
&
$5$
&
$9\lambda_1+16\lambda_2+21\lambda_3+24\lambda_4+\dfrac{25}{2}\lambda_5$
\\

$\mathfrak{so}(13)$
&
$6$
&
$11\lambda_1+20\lambda_2+27\lambda_3+32\lambda_4+35\lambda_5+18\lambda_6$
\\

$\mathfrak{so}(15)$
&
$7$
&
$\displaystyle
13\lambda_1+24\lambda_2+33\lambda_3
+40\lambda_4+45\lambda_5+48\lambda_6+\dfrac{49}{2}\lambda_7$
\\

$\vdots$
&
$\vdots$
&
$\vdots$
\\

$\mathfrak{so}(2n+1)$
&
$n$
&
$\displaystyle
(2n-1)\lambda_1
+2(2n-2)\lambda_2
+3(2n-3)\lambda_3
+\cdots
+(n-1)(n+1)\lambda_{n-1}
+\frac{n^2}{2}\lambda_n$
\\

\end{tblr}

\caption{
\textbf{Type-B Lie Algebraic Uncertainty Relations.}
For the compact Lie algebra $\mathfrak{so}(2n+1)$ with Dynkin coefficient vector $\vec{\lambda}=(\lambda_1,\ldots,\lambda_n)^{\T}$, the state-independent optimal bound is 
$(\sum_{i=1}^{n-1} i(2n-i)\lambda_i)+n^2\lambda_n/2$ for $n\geqslant 2$.
For $B_2$, the adjoint representation has highest weight $2\omega_2$ and Dynkin vector $(0,2)$, with optimal bound 4.}

\label{tab:explicit_bounds_B}
\end{table}

This completes the type-B Lie algebraic uncertainty bounds. 
The non-simply-laced structure of $B_n$ enters through the terminal short root, which separates the final Dynkin coefficient from the long root directions. 
The resulting uncertainty bound is state-independent and fixed entirely by the highest weight of the representation together with the geometry of the type-B root system. Tab.~\ref{tab:explicit_bounds_B} makes this structure explicit across ranks, showing how the abstract Lie algebraic result becomes a concrete bound for $\mathfrak{so}(2n+1)$.
Tab.~\ref{tab:Bn_representations} translates the type-B formula into explicit bounds for several representation spaces commonly encountered in physical applications.

\begin{table}[htbp]
\centering
\begin{tblr}{
width = \textwidth,
colspec = {X[c]||X[c]|X[c]|X[c]},
row{1} = {bg=gray!50, font=\bfseries},
row{2} = {bg=gray!20, font=\bfseries},
hlines,
vlines,
cells = {m},
cell{3-7}{1} = {bg=gray!20},
cell{3-7}{2} = {bg=mRed},
cell{3-7}{3} = {bg=mGreen},
cell{3-7}{4} = {bg=mBlue},
}

\SetCell[c=4]{c}
\textbf{Type-B Lie Algebraic Uncertainty Relations Across Representations}
&
&
&
\\

Representation
&
Highest Weight $\Lambda$
&
Dynkin Vector $\vec{\lambda}^{\,\T}$
&
Optimal Bound
\\

Defining
&
$\omega_1$
&
$(1,0,\dots,0)$
&
$2n-1$
\\

Adjoint
&
$\omega_2$ with $n\geqslant 3$
&
$(0,1,0,\dots,0)$
&
$4(n-1)$
\\

Spinor
&
$\omega_n$
&
$(0,\dots,0,1)$
&
$\dfrac{n^2}{2}$
\\

Traceless Symmetric Square
&
$2\omega_1$
&
$(2,0,\dots,0)$
&
$2(2n-1)$
\\

Fundamental ($i$-th)
&
$\omega_i$ with $1\leqslant i<n$
&
$(0,\dots,1,\dots,0)$
&
$i(2n-i)$
\\

\end{tblr}
\caption{\textbf{Lie Algebraic Uncertainty Relations Across Representations of $\mathfrak{so}(2n+1)$.}
For representative standard representations of $\mathfrak{so}(2n+1)$ with $n\geqslant 2$, the table collects the highest weight $\Lambda$, the Dynkin coefficient vector $\vec{\lambda}^{\,\T}$, and the resulting optimal lower bound.
For $n=2$, the adjoint representation has highest
weight $2\omega_2$ and Dynkin coefficient vector
$(0,2)$; its optimal lower bound remains $4(n-1)=4$.
The `` Traceless Symmetric Square'' row refers to
$\operatorname{Sym}^2_0 V$, the irreducible traceless
component of $\operatorname{Sym}^2 V$,
where $V$ is the defining representation.}
\label{tab:Bn_representations}
\end{table}

As in the type-A case discussed in Subsec.~\ref{subsec:A_type}, we use the adjoint representation as a representative example for the type-B series, illustrating how the general framework specializes to explicit optimal uncertainty bounds. 
We begin with the generic $B_n$ case, for which the Dynkin diagram and the optimal lower bound for the adjoint representation are summarized below (see Tab.~\ref{tab:Bn_representations})
\begin{align}\label{eq:adj_B_n}
    \Delta^2_{\mathrm{ad}}( \raisebox{-1.4ex}{\includegraphics[scale=0.8]{DK_B_n.pdf}} )
    \geqslant4(n-1).
\end{align}
Specializing to $n=3,4,5$, the optimal Lie algebraic uncertainty relations are given explicitly by
\begin{align}
    \Delta^2_{\mathrm{ad}}( \raisebox{-1.4ex}{\includegraphics[scale=0.8]{DK_B_3.pdf}} )
    &\geqslant8,\\
    \Delta^2_{\mathrm{ad}}( \raisebox{-1.4ex}{\includegraphics[scale=0.8]{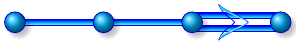}} )
    &\geqslant12,\\
    \Delta^2_{\mathrm{ad}}( \raisebox{-1.4ex}{\includegraphics[scale=0.8]{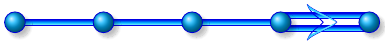}} )
    &\geqslant16.
\end{align}

We next consider the defining representation of type-B Lie algebras. 
For rank $n\geqslant 2$, the Lie algebraic uncertainty relation takes the form (see Tab.~\ref{tab:Bn_representations})
\begin{align}\label{eq:def_B_n}
    \Delta^2_{\mathrm{def}}( \raisebox{-1.4ex}{\includegraphics[scale=0.8]{DK_B_n.pdf}} )
    \geqslant 2n-1.
\end{align}
For the representative low-rank cases $n=3,4,5$, the optimal Lie algebraic uncertainty relations are given explicitly by
\begin{align}
    \Delta^2_{\mathrm{def}}( \raisebox{-1.4ex}{\includegraphics[scale=0.8]{DK_B_3.pdf}} )
    &\geqslant5,\\
    \Delta^2_{\mathrm{def}}( \raisebox{-1.4ex}{\includegraphics[scale=0.8]{DK_B_4.pdf}} )
    &\geqslant7,\\
    \Delta^2_{\mathrm{def}}( \raisebox{-1.4ex}{\includegraphics[scale=0.8]{DK_B_5.pdf}} )
    &\geqslant9.
\end{align}
This completes the type-B analysis, making explicit how the general Lie algebraic uncertainty framework specializes, representation by representation, into state-independent optimal bounds.


\subsection{Explicit Bounds for Type-C Lie Algebras}
\label{subsec:C_type}

We now turn our attention to the type-C series (see Tab.~\ref{tab:Simple_Lie_Algebra}). In this case, the complex Lie algebra is $C_n=\mathfrak{sp}(2n,\mathbb C)$, which possesses the compact real form $\mathfrak{sp}(n)$. 
For $n \geqslant 2$, the corresponding root system is non-simply laced, containing both long and short roots, much like the type-B series (see Eq.~\eqref{eq:DynkinBn_Diagram}).
However, the roles of the long and short roots are interchanged relative to $B_n$: 
here, the final simple root of the Dynkin diagram is long rather than short. 
This inversion of root lengths between $B_n$ and $C_n$ is reflected directly in the structure of the uncertainty relations. 
The Dynkin diagram for $C_n$ is given by
\begin{equation}\label{eq:DynkinCn_Diagram}
     \raisebox{-3.8ex}{\includegraphics[scale=1]{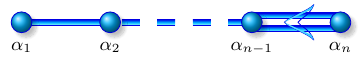}} \, .
\end{equation}
The low-rank examples of the $C_n$ series demonstrate a crucial consistency with the standard accidental isomorphisms of simple Lie algebras. 
It is instructive to examine this correspondence first through the lenses of Dynkin diagrams, Cartan matrices, root lengths, and uncertainty relations.

At rank one, $n=1$, the type-C series reduces to a single-node Dynkin diagram. 
The corresponding lower bound therefore agrees exactly with the $A_1$ and $B_1$ cases, in accordance with the standard low-rank identifications. 
The rank-one type-C uncertainty relation thus reads

\begin{mycor}{Type-C Lie Algebraic Uncertainty Relations ($n=1$)}{Type_C_Lie_UR_1}
For the type-C Lie algebra $C_n$ (see  Tab.~\ref{tab:Simple_Lie_Algebra}) with $n=1$, the optimal Lie algebraic uncertainty relation (see Def.~\ref{def:Lie_Algebraic_Uncertainty_Relations}) takes the explicit form 
\begin{align}
\label{eq:Type_C1}
    \Delta^2_{\psi}( \raisebox{-0.6ex}{\includegraphics[scale=0.8]{DK_B_1.pdf}} )=
    \underbrace{\Delta^2_{\psi}(\mathfrak{su}(2))}_{\text{Type-A}}=
    \underbrace{\Delta^2_{\psi}(\mathfrak{so}(3))}_{\text{Type-B}}=
    \underbrace{\Delta_\psi^2(\mathfrak{sp}(1))}_{\text{Type-C}}
    \geqslant
    \lambda_1
    .
\end{align}
At rank one, the state-independent lower bound is fixed solely by the Dynkin coefficient $\lambda_1$ of the highest weight $\Lambda$; see Subsec.~\ref{subsec:Highest_Weight_Representations}.
\end{mycor}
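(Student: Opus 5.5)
The statement to prove is Cor.~\ref{cor:Type_C_Lie_UR_1}: the rank-one type-C bound $\Delta^2_\psi\geqslant\lambda_1$. The plan is to treat it exactly as the rank-one type-B case was treated (Cor.~\ref{cor:Type_B_Lie_UR_1}). We specialize Cor.~\ref{cor:Simplified_Lie_Algebraic_Uncertainty_Relations} to $C_1$ by computing $\vec{\xi}=C^{-1}\vec{\alpha}$, and separately justify the chain of equalities between the $\mathfrak{su}(2)$, $\mathfrak{so}(3)$ and $\mathfrak{sp}(1)$ uncertainties.

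First we read off the data. The $C_1$ Dynkin diagram is a single node, so the Cartan matrix is $C=(2)$ and $C^{-1}=(1/2)$. With a single simple root $\alpha_1$, that root is by convention long, so $\langle\alpha_1,\alpha_1\rangle=2$ and $\vec{\alpha}=(2)$. Hence $\vec{\xi}=(1)$.

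By Cor.~\ref{cor:Simplified_Lie_Algebraic_Uncertainty_Relations}, which rests on Thm.~\ref{thm:Lie_Algebraic_Uncertainty_Relations} and Thm.~\ref{thm:Explicit_Uncertainty_Bound}, every state in $V(\lambda_1\omega_1)$ then satisfies $\Delta^2_\psi(\mathfrak{sp}(1))\geqslant\lambda_1$. This bound is attained by the highest weight vector: for that vector the moment element saturates Lem.~\ref{lem:Moment_Norm_Bound}, since $\langle\mathrm{i}H\rangle=\Lambda$-pairing and the root expectations vanish. The uncertainty is therefore $\langle\Lambda,\Lambda+2\delta\rangle-\langle\Lambda,\Lambda\rangle=2\langle\Lambda,\delta\rangle$.

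Second, for the equalities we use $\mathfrak{sp}(2,\mathbb C)=\mathfrak{sl}(2,\mathbb C)$. These coincide as matrix algebras, because $X^{\T}S+SX=0$ with $S=\bigl(\begin{smallmatrix}0&1\\-1&0\end{smallmatrix}\bigr)$ is exactly the condition $\Tr X=0$. The isomorphism $\mathfrak{so}(3,\mathbb C)\cong\mathfrak{sl}(2,\mathbb C)$ is the standard one given by the adjoint action. On the compact forms these become $\mathfrak{sp}(1)=\mathfrak{su}(2)\cong\mathfrak{so}(3)$.

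The key point is that an isomorphism $\phi$ of simple Lie algebras intertwines the normalized forms $\kappa$ of Eq.~\eqref{eq:normal_Killing}. This holds because the Killing form is intrinsic and the dual Coxeter number is the same, equal to $2$, for all three labels. The isomorphism can also be chosen to respect the compact real forms, so it intertwines $\tau$ and hence the inner product $(\cdot,\cdot)_{\g}$.

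Now transport a basis $\mathbf X$ through $\phi$ and compose the representation accordingly. The Gram matrix is unchanged, and so is the covariance matrix for a fixed state. By Thm.~\ref{thm:Basis_Independence_Orthogonal_Representation} the uncertainty is then the same number, and the Dynkin label $\lambda_1$ is preserved because $\phi$ maps the Cartan subalgebra and the simple coroot to their counterparts.

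The main subtlety is this normalization matching, that is, checking that the metric does not rescale under the identification. It is settled by $\langle\alpha_1,\alpha_1\rangle=2$ holding in every normalization used.
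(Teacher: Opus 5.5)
Your proposal is correct and takes essentially the paper's route: the paper also reads off $C=(2)$, $\vec{\alpha}=(2)$ and $\vec{\xi}=(1)$ from the single-node diagram, as it does for $B_1$, and attributes the chain of equalities to the rank-one identification $\mathfrak{sp}(1)=\mathfrak{su}(2)\simeq\mathfrak{so}(3)$. Your extra step of transporting a basis through an isomorphism that preserves $\kappa$ and $\tau$ spells out the normalization consistency that the paper only asserts; you correctly use the intrinsic value $h^\vee=2$ rather than the $B_n$ formula $2n-1$, which holds only for $n\geqslant2$.
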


For $n=2$, the type-C Dynkin diagram consists of two nodes connected by a double bond:
\begin{align}
     \raisebox{-3.2ex}{\includegraphics[scale=1]{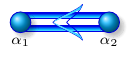}} \, .
\end{align}
In the standard type-C ordering, the first simple root $\alpha_1$ is short and the second root $\alpha_2$ is long; 
accordingly, the arrow points toward $\alpha_1$. 
With the long roots normalized to have squared length 2, the simple root lengths are $\langle\alpha_1,\alpha_1\rangle =1$ and 
$\langle\alpha_2,\alpha_2\rangle=2$. 
The Cartan matrix is therefore
\begin{align}
    C
    =
    \begin{pmatrix}
        2 & -1\\
        -2 & 2
    \end{pmatrix},
\end{align}
and its inverse is given by
\begin{align}
    C^{-1}
    =
    \begin{pmatrix}
        1 & \frac12\\
        1 & 1
    \end{pmatrix}.
\end{align}
The simple root squared lengths are therefore collected in the vector 
\begin{align}
    \vec{\alpha}=
    \begin{pmatrix}
        1\\
        2
    \end{pmatrix}.
\end{align}
Consequently, all root system information entering the uncertainty bound is encoded in the vector $\vec{\xi}$ (see Eq.~\eqref{eq:xi})
\begin{align}
    \vec{\xi}=
    C^{-1}\cdot\vec{\alpha}=
    \begin{pmatrix}
        1 & \frac{1}{2}\\
        1 & 1
    \end{pmatrix}
    \begin{pmatrix}
        1\\
        2
    \end{pmatrix}
    =
    \begin{pmatrix}
        2\\
        3
    \end{pmatrix}.
\end{align}
We therefore obtain the following type-$C_2$ uncertainty relation.

\begin{mycor}{Type-C Lie Algebraic Uncertainty Relations ($n=2$)}{Type_C_Lie_UR_2}
For the type-C Lie algebra $C_n$ (see  Tab.~\ref{tab:Simple_Lie_Algebra}) with $n=2$, the optimal Lie algebraic uncertainty relation (see Def.~\ref{def:Lie_Algebraic_Uncertainty_Relations}) takes the explicit form 
\begin{align}
\label{eq:Type_C2}
    \Delta^2_{\psi}( \raisebox{-1.4ex}{\includegraphics[scale=0.8]{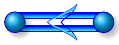}} )\coloneqq
    \Delta_\psi^2(\mathfrak{sp}(2))
    \geqslant
    2\lambda_1+3\lambda_2
    .
\end{align}
For the rank-two case, i.e., $n=2$, the state-independent lower bound is fixed by the two Dynkin coefficients $\lambda_1$ and $\lambda_2$ of the highest weight $\Lambda$; 
see Subsec.~\ref{subsec:Highest_Weight_Representations}.
\end{mycor}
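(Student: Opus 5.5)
This is a direct specialization of Cor.~\ref{cor:Simplified_Lie_Algebraic_Uncertainty_Relations}, which gives the optimal bound $\vec{\lambda}^{\,\T}\vec{\xi}$ with $\vec{\xi}=C^{-1}\vec{\alpha}$. So the plan is to verify the root data fed into that formula and then read off the result. First I fix the $C_2$ conventions. I take $\alpha_1$ short and $\alpha_2$ long, with long roots normalized to squared length $2$. Then $\langle\alpha_1,\alpha_1\rangle=1$ and $\langle\alpha_2,\alpha_2\rangle=2$.

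I will check these values by realizing the root system explicitly. Take $\alpha_1=\epsilon_1-\epsilon_2$ and $\alpha_2=2\epsilon_2$, with $\langle\epsilon_i,\epsilon_j\rangle=\tfrac12\delta_{ij}$. The long root $2\epsilon_2$ then has squared length $2$, as required, and this agrees with the normalization $\kappa$ of Eq.~\eqref{eq:normal_Killing} with $h^\vee=3$. The pairings are:
\begin{align}
\langle\alpha_1,\alpha_1\rangle=1,\qquad \langle\alpha_1,\alpha_2\rangle=-1 .
\end{align}
Def.~\ref{def:Cartan_Matrix} then gives the entries:
\begin{align}
C_{12}=\frac{2\langle\alpha_1,\alpha_2\rangle}{\langle\alpha_2,\alpha_2\rangle}=-1,\qquad C_{21}=\frac{2\langle\alpha_2,\alpha_1\rangle}{\langle\alpha_1,\alpha_1\rangle}=-2 .
\end{align}
This reproduces the displayed matrix $C$. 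Its determinant is $2$, so the inverse is the stated $C^{-1}$.

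Multiplying, $C^{-1}\vec{\alpha}=(1+1,\;1+2)^{\T}=(2,3)^{\T}$. Substituting into Eq.~\eqref{eq:Lie_Algebraic_UR_Simplified} gives the bound $2\lambda_1+3\lambda_2$.

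Optimality and attainment need no separate argument. They are inherited from Thm.~\ref{thm:Lie_Algebraic_Uncertainty_Relations} together with the saturation statement: for a highest-weight state $\|M_\psi\|^2=\langle\Lambda,\Lambda\rangle$, since its Cartan expectation equals $\Lambda$ and its root expectations vanish. Independence of the labeling follows from Thm.~\ref{thm:Simple_Root_Relabeling_Invariance}.

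As cross-checks, I expect the following. The adjoint representation has $\Lambda=2\omega_1$ (the highest root $2\epsilon_1$), which gives the bound $4=2h^\vee-2$. The defining representation $\mathbb C^4$ has $\Lambda=\omega_1$, which gives $2$. Both agree with the $C_n$ entries in Tab.~\ref{tab:Def_Rep} and Tab.~\ref{tab:Adj_Rep}. They also agree with the $B_2$ values $3\lambda_1+2\lambda_2$ after swapping the labels under $B_2\simeq C_2$ (the $B_2$ spinor corresponds to the $C_2$ defining representation, and both give $2$).

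The only real obstacle is bookkeeping of conventions: the direction of the arrow, which root is short, and the transposed convention $C_{ij}=\alpha_i(h_{\alpha_j})$. A slip in any of these would give $(3,2)$ instead of $(2,3)$. The explicit $\epsilon$-realization above settles all three.
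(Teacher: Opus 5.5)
Your proposal is correct and follows the paper's route: you specialize $\vec{\lambda}^{\,\T}C^{-1}\vec{\alpha}$ with $\alpha_1$ short and $\alpha_2$ long, giving $C_{12}=-1$, $C_{21}=-2$ and $\vec{\xi}=(2,3)^{\T}$, with optimality inherited from the general highest-weight theorem. Your $\epsilon$-realization and the adjoint, defining and $B_2$ cross-checks confirm the conventions, but they do not change the argument.
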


The type-$C_2$ result can be compared directly with the type-$B_2$ bound obtained in Cor.~\ref{cor:Type_B_Lie_UR_2}. 
Setting $n=2$ there gives the type-$B_2$ uncertainty relation
\begin{align}
\label{eq:Type_B2}
    \Delta^2_{\psi}( \raisebox{-1.4ex}{\includegraphics[scale=0.8]{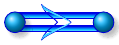}} )\coloneqq
    \Delta_\psi^2(\mathfrak{so}(5))
    \geqslant
    3\lambda_1+2\lambda_2
    .
\end{align}
The coefficient vector in this expression is the reverse of that obtained for type $C_2$ (see Eq.~\eqref{eq:Type_C2}). 
This reversal is not a discrepancy. 
It reflects the standard Dynkin diagram conventions for the two rank-two series: 
in type $B_2$ and type $C_2$, the long and short simple roots are assigned to opposite nodes.

Under the identification $\mathfrak{so}(5)\simeq\mathfrak{sp}(2)$,
corresponding representations have Dynkin labels related by 
$(\lambda_1^{(B)},\lambda_2^{(B)})
=(\lambda_2^{(C)},\lambda_1^{(C)})$.
With the normalization fixed above, Thm.~\ref{thm:Simple_Root_Relabeling_Invariance}
then implies that their optimal uncertainty bounds are identical:
\begin{align}\label{Type_BC_identity}
\min_{\psi}\Delta_\psi^2(\mathfrak{sp}(2))=
\min_{\psi}\Delta^2_{\psi}( \raisebox{-1.4ex}{\includegraphics[scale=0.8]{DK_C_2.pdf}} )=
\min_{\psi}\Delta^2_{\psi}( \raisebox{-1.4ex}{\includegraphics[scale=0.8]{DK_B_2.pdf}} )=
\min_{\psi}\Delta_\psi^2(\mathfrak{so}(5)).
\end{align}
The minima in Eq.~\eqref{Type_BC_identity} are taken over states in the corresponding representation spaces.
To make this correspondence explicit, let us consider the adjoint representation as a representative example. 
In this case, the optimal bound takes the form
\begin{align}
\min_{\psi}\Delta_{\mathrm{ad}}^2(\mathfrak{sp}(2))=
\min_{\psi}\Delta^2_{\mathrm{ad}}( \raisebox{-1.4ex}{\includegraphics[scale=0.8]{DK_C_2.pdf}} )=
\min_{\psi}\Delta^2_{\mathrm{ad}}( \raisebox{-1.4ex}{\includegraphics[scale=0.8]{DK_B_2.pdf}} )=
\min_{\psi}\Delta_{\mathrm{ad}}^2(\mathfrak{so}(5))=4.
\end{align}

For type $C_n$ with $n\geqslant 2$, the non-simply-laced root structure gives a tridiagonal Cartan matrix (see Def.~\ref{def:Cartan_Matrix}) that is no longer symmetric.
With the standard type-C ordering, the diagonal entries are $C_{ii}=2$ for all $i$.
Along the single-bond part of the chain, one has
\begin{align}
    C_{i,i+1}=C_{i+1,i}=-1,\quad 
    1\leqslant i\leqslant n-2.
\end{align}
The terminal double bond, connecting the short root $\alpha_{n-1}$ to the long root $\alpha_n$, gives the asymmetric entries
\begin{align}
    C_{n-1,n}=-1,
    \quad
    C_{n,n-1}=-2.
\end{align}
All remaining entries vanish, namely $C_{ij}=0$ whenever $|i-j|>1$.
Equivalently, the Cartan matrix takes the form
\begin{align}
\label{eq:Cn_Cartan_Matrix}
    C
    =
    \begin{pmatrix}
    2 & -1 & 0 & \cdots & 0 & 0\\
    -1 & 2 & -1 & \cdots & 0 & 0\\
    0 & -1 & 2 & \ddots & \vdots & \vdots\\
    \vdots & \vdots & \ddots & \ddots & -1 & 0\\
    0 & 0 & \cdots & -1 & 2 & -1\\
    0 & 0 & \cdots & 0 & -2 & 2
    \end{pmatrix}.
\end{align}
The asymmetric entries record the fact that, in the standard type-C ordering, the final node corresponds to the long simple root.
The inverse Cartan matrix is
\begin{align}
\label{eq:Cn_inverse_matrix}
     C^{-1}
    =
    \begin{pmatrix}
    1 & 1 & 1 & \cdots & 1 & \frac12\\
    1 & 2 & 2 & \cdots & 2 & 1\\
    1 & 2 & 3 & \cdots & 3 & \frac32\\
    \vdots & \vdots & \vdots & \ddots & \vdots & \vdots\\
    1 & 2 & 3 & \cdots & n-1 & \frac{n-1}{2}\\
    1 & 2 & 3 & \cdots & n-1 & \frac n2
    \end{pmatrix}.
\end{align}
Its entries are given by

\begin{mylem}{Entries of $C^{-1}$ for Type-C Lie Algebras ($n\geqslant2$)}{Entries_Cinverse_Type_C}
For the type-C Lie algebra $C_n$, the $(i,j)$-entry of the inverse Cartan matrix is
\begin{align}
\label{eq:Cn_inverse}
    C^{-1}_{ij}
    =
    \begin{cases}
    \min\{i,j\},
    &
    1\leqslant i,j\leqslant n-1,
    \\[1.5ex]
    \dfrac{i}{2},
    &
    1\leqslant i\leqslant n-1, \quad\text{and}\quad j=n, 
    \\[2.5ex]
    j,
    &
    i=n,\quad\text{and}\quad 1\leqslant j\leqslant n-1,
    \\[1.5ex]
    \dfrac{n}{2},
    &
    i=j=n.
    \end{cases}
\end{align}
\end{mylem}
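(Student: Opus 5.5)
The statement to prove is the entrywise formula for $C^{-1}$ in type $C_n$, Eq.~\eqref{eq:Cn_inverse}. The plan is direct verification rather than derivation. Let $D$ be the matrix defined by the right-hand side of Eq.~\eqref{eq:Cn_inverse}. It suffices to show that $CD=\1_n$, with $C$ the tridiagonal matrix of Eq.~\eqref{eq:Cn_Cartan_Matrix}. Since $C$ is square, a one-sided inverse is automatically two-sided, so this already gives $D=C^{-1}$.

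The computation goes column by column. Fix a column index $j$ and write $d_i\coloneqq D_{ij}$.

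\emph{Interior rows.} For $1\leqslant i\leqslant n-1$, row $i$ of $C$ has entries $-1,2,-1$ at positions $i-1,i,i+1$, where the $-1$ at position $0$ is absent when $i=1$. Row $i$ of $CD$ is therefore the discrete second difference $2d_i-d_{i-1}-d_{i+1}$, with the convention $d_0=0$.

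\emph{Last row.} Row $n$ of $C$ has entries $-2$ and $2$ at positions $n-1$ and $n$. So row $n$ of $CD$ equals $2d_n-2d_{n-1}$.

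\emph{Case $j\leqslant n-1$.} Here the column is $d_i=\min\{i,j\}$ for $i\leqslant n-1$ and $d_n=j$. The sequence $d_i$ is linear, equal to $i$, up to index $j$ and constant afterwards.
\begin{itemize}
\item Away from the kink at $i=j$, the second difference vanishes, and also for $i<j$ since $d_0=0$ keeps linearity.
\item At $i=j$ it equals $2j-(j-1)-j=1$.
\item The last row gives $2j-2j=0$, because $d_{n-1}=\min\{n-1,j\}=j$.
\item When $j=n-1$, the interior row $i=n-1$ uses $d_n=j$, giving $2j-(j-1)-j=1$, as required.
\end{itemize}

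\emph{Case $j=n$.} The column is $d_i=i/2$ for all $i$, including $d_n=n/2$. This sequence is linear with $d_0=0$, so every interior second difference vanishes. The last row gives $2\cdot\tfrac n2-2\cdot\tfrac{n-1}{2}=1$.

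Together the two cases give $CD=\1_n$, which proves Eq.~\eqref{eq:Cn_inverse}.

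There is no real obstacle. The only care needed is the boundary bookkeeping: the missing $-1$ in row $1$, the asymmetric entry $C_{n,n-1}=-2$ in the last row, and the overlap case $j=n-1$, where the kink sits next to the last row.

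As an alternative check, one can transpose the type-B result. The type-$C_n$ Cartan matrix is the transpose of the type-$B_n$ Cartan matrix, so $C^{-1}_{C_n}=(C^{-1}_{B_n})^{\T}$. Transposing Lem.~\ref{lem:Inverse_Cartan_Matrix_B} gives exactly Eq.~\eqref{eq:Cn_inverse}, provided one reads the unstated $(i,n)$ entry there as $C^{-1}_{in}=i$, consistent with the type-B matrix displayed in the text.
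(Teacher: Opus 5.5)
Your proof is correct. The paper gives no argument for this lemma beyond displaying the matrix in Eq.~\eqref{eq:Cn_inverse_matrix} and reading its entries off, so your column-by-column check that $CD=\1_n$ supplies the verification the paper leaves implicit. The pieces are all handled correctly: the second differences in rows $1,\ldots,n-1$ (with $d_0=0$), the last row $2d_n-2d_{n-1}$ coming from $C_{n,n-1}=-2$, and the overlap case $j=n-1$. The argument also covers $n=2$.

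Your alternative via transposition is sound as well, and it ties the lemma to type B. Under the convention $C_{ij}=2\langle\alpha_i,\alpha_j\rangle/\langle\alpha_j,\alpha_j\rangle$, exchanging which terminal simple root is short gives exactly $C_{C_n}=C_{B_n}^{\T}$. You are right that Lem.~\ref{lem:Inverse_Cartan_Matrix_B} as printed leaves the value of its second case blank. It must be $C^{-1}_{in}=i$, as in the displayed type-B matrix, for the transposition to reproduce Eq.~\eqref{eq:Cn_inverse}.

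The direct check is self-contained. The transpose route is shorter, but it inherits whatever justification the type-B lemma has, and the paper likewise only reads that lemma off a display.
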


For $n\geqslant 2$, we adopt the standard normalization in which long roots have squared length $2$.
In the type-C ordering, the first $n-1$ simple roots are short, whereas the final simple root is long.
Thus $\langle\alpha_j,\alpha_j\rangle=1$, for $1\leqslant j\leqslant n-1$, and $\langle\alpha_n,\alpha_n\rangle = 2$.
The corresponding $\vec{\alpha}$ has the following form.

\begin{mylem}{Root Length Vector $\vec{\alpha}$ for Type-C Lie Algebras ($n\geqslant2$)}{Root_Length_Vector_C} 
For the type-C Lie algebra $C_n$ with $n\geqslant2$, the squared simple root lengths form the vector 
\begin{align}
\label{eq:Cn_Root_Length_Vector}
    \vec{\alpha}
    =
    \begin{pmatrix}
1 \\
\vdots \\
1 \\
2
\end{pmatrix} .
\end{align}
\end{mylem}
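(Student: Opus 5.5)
The statement to prove is the last lemma shown: the root length vector for $C_n$ with $n\geqslant2$ is $\vec{\alpha}=(1,\ldots,1,2)^{\T}$.

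The plan is to read the squared lengths off the Cartan matrix in Eq.~\eqref{eq:Cn_Cartan_Matrix}. This uses only the definition $C_{ij}=2\langle\alpha_i,\alpha_j\rangle/\langle\alpha_j,\alpha_j\rangle$ from Def.~\ref{def:Cartan_Matrix} and the normalization fixed by Eq.~\eqref{eq:normal_Killing}, under which long roots have squared length $2$. Since $\langle\alpha_i,\alpha_j\rangle$ is symmetric, the definition gives the identity
\begin{align}
C_{ij}\,\langle\alpha_j,\alpha_j\rangle=C_{ji}\,\langle\alpha_i,\alpha_i\rangle .
\end{align}
For every connected pair of nodes this is a ratio relation between the two squared lengths.

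\textbf{Step 1 (the chain).} Apply the identity to adjacent nodes $i,i+1$ with $1\leqslant i\leqslant n-2$. There $C_{i,i+1}=C_{i+1,i}=-1$, so $\langle\alpha_i,\alpha_i\rangle=\langle\alpha_{i+1},\alpha_{i+1}\rangle$. Since the Dynkin diagram is connected, induction along the chain shows that $\alpha_1,\ldots,\alpha_{n-1}$ all share one squared length, which I call $s$.

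\textbf{Step 2 (the double bond).} Apply the identity to the pair $(n-1,n)$, where $C_{n-1,n}=-1$ and $C_{n,n-1}=-2$. This gives $-\langle\alpha_n,\alpha_n\rangle=-2\langle\alpha_{n-1},\alpha_{n-1}\rangle$, so $\langle\alpha_n,\alpha_n\rangle=2s$. Thus $\alpha_n$ is strictly longer than the other simple roots, matching the arrow convention of Subsec.~\ref{subsec:Dynkin_Diagram}.

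\textbf{Step 3 (normalization).} Fix the scale $s$. This is the one step that needs more than the Cartan matrix, since the Cartan matrix determines lengths only up to a common factor. Two facts are needed.
\begin{itemize}
\item[(i)] The long roots of $\Phi$ have the squared length of $\alpha_n$. Every root is Weyl-conjugate to a simple root, and an irreducible root system has at most two root lengths. Hence the maximal root length in $\Phi$ equals the larger simple-root length, which is $2s$.
\item[(ii)] The normalization of Eq.~\eqref{eq:normal_Killing} assigns squared length $2$ to long roots. I will take this as given, as the paper does. As a sanity check in the $C_n$ realization: the long root $2\varepsilon_n$ has squared length $2$ when $\langle\varepsilon_i,\varepsilon_j\rangle=\tfrac12\delta_{ij}$, which is consistent with $h^\vee=n+1$.
\end{itemize}
Then $2s=2$, so $s=1$, which gives $\langle\alpha_j,\alpha_j\rangle=1$ for $j\leqslant n-1$ and $\langle\alpha_n,\alpha_n\rangle=2$. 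This is Eq.~\eqref{eq:Cn_Root_Length_Vector}.

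I expect Step 3 to be the main obstacle. Identifying the long-root length among all of $\Phi$ with the simple-root length $2s$ requires citing the standard Weyl-orbit and two-lengths facts. Checking that the $\frac{1}{2h^\vee}$ factor really makes long roots have squared length $2$ requires either citing the normalization claim or doing the explicit $\varepsilon$-basis computation for $\mathfrak{sp}(2n,\mathbb C)$. I would state both as standard and give only the short check in (ii).
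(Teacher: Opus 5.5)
Your argument is correct and is essentially the paper's. The paper simply states that in the type-C ordering $\alpha_1,\ldots,\alpha_{n-1}$ are short and $\alpha_n$ is long, then imposes the convention that long roots have squared length $2$; you make that short/long assignment and the factor $2$ explicit via $C_{ij}\langle\alpha_j,\alpha_j\rangle=C_{ji}\langle\alpha_i,\alpha_i\rangle$ with $C_{n-1,n}=-1$, $C_{n,n-1}=-2$. Your Step~3 is not a real obstacle, since the paper takes the long-root normalization of Eq.~\eqref{eq:normal_Killing} as given, and your sanity check is right: $\kappa(X,Y)=\Tr[XY]$ on $\mathfrak{sp}(2n,\mathbb C)$ gives $\langle\varepsilon_i,\varepsilon_j\rangle=\tfrac12\delta_{ij}$.
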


The inverse Cartan matrix $C^{-1}$ in Lem.~\ref{lem:Entries_Cinverse_Type_C}, together with the squared root lengths in Lem.~\ref{lem:Root_Length_Vector_C}, gives an explicit expression for the components of $\vec{\xi}$.

\begin{mythm}{Components of $\vec{\xi}$ for Type-C Lie Algebras $(n\geqslant2)$}{Components_xi_C}
For the type-C Lie algebra $C_n$ with $n\geqslant2$, the components of the vector $\vec{\xi}$ (see Eq.~\eqref{eq:xi}) are
\begin{align}
     \xi_i
    =
    \begin{cases}
   \dfrac{i(2n+1-i)}{2}\,, &\quad 1\leqslant i\leqslant n-1,
    \\[2ex]
    \dfrac{n(n+1)}{2}, &\quad i=n \,.
    \end{cases}
\end{align}
\end{mythm}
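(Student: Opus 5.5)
The computation follows the same pattern as the proofs of Thm.~\ref{thm:Root_Length_Vector_A} and Thm.~\ref{thm:Root_Length_Vector_B}. I will compute $\xi_i=\sum_{j=1}^{n}C^{-1}_{ij}\langle\alpha_j,\alpha_j\rangle$ directly. The inputs are the entrywise formula for $C^{-1}$ in Lem.~\ref{lem:Entries_Cinverse_Type_C} and the squared-length vector $\vec{\alpha}=(1,\ldots,1,2)^{\T}$ of Lem.~\ref{lem:Root_Length_Vector_C}. In each row I split off the last column $j=n$, since that column carries both the long root weight $2$ and the halved entries $i/2$. The remaining columns $1\leqslant j\leqslant n-1$ all have weight $1$.

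For an interior row $1\leqslant i\leqslant n-1$, I will write
\begin{align}
\xi_i=\sum_{j=1}^{n-1}\min\{i,j\}\cdot 1+\frac{i}{2}\cdot 2 .
\end{align}
I then split the first sum into $j\leqslant i$, which gives $i(i+1)/2$, and $i<j\leqslant n-1$, which gives $i(n-1-i)$. Adding the terminal contribution $i$ yields
\begin{align}
\xi_i=\frac{i(i+1)}{2}+i(n-i)=\frac{i(2n+1-i)}{2}.
\end{align}

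For the terminal row $i=n$, the entries are $C^{-1}_{nj}=j$ for $j\leqslant n-1$ and $C^{-1}_{nn}=n/2$. This gives
\begin{align}
\xi_n=\sum_{j=1}^{n-1}j+\frac{n}{2}\cdot 2=\frac{n(n-1)}{2}+n=\frac{n(n+1)}{2}.
\end{align}

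The only step needing care is the bookkeeping of the asymmetric last row and column of $C^{-1}$ in Eq.~\eqref{eq:Cn_inverse_matrix}. Type C has the transposed pattern relative to type B: the factor $\tfrac12$ sits in the last column, not the last row. This is exactly what lets the long root weight $2$ cancel it. If one did not trust the matrix displayed in the lemma, the main obstacle would be justifying that entrywise inverse in the first place. I would verify it by checking $CC^{-1}=\1$ row by row against Eq.~\eqref{eq:Cn_Cartan_Matrix}: the generic tridiagonal rows reduce to the discrete second difference of $\min\{i,j\}$, and the last two rows must be checked separately using $C_{n-1,n}=-1$ and $C_{n,n-1}=-2$.

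As sanity checks, $n=2$ should reproduce $\vec{\xi}=(2,3)^{\T}$ from Cor.~\ref{cor:Type_C_Lie_UR_2}. The identity $\xi_i=\sum_j\lambda$-weighted $2\langle\omega_i,\delta\rangle$ should also agree with $2\langle\omega_1,\delta\rangle=n$ for the defining representation, as in Tab.~\ref{tab:Def_Rep}.
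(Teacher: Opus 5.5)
Your proposal is correct and follows the paper's proof essentially line for line. You split off the $j=n$ column, where the entries $i/2$ meet the long-root weight $2$, evaluate $\sum_{j=1}^{n-1}\min\{i,j\}$ by splitting at $j=i$, and treat the terminal row separately to get $n(n-1)/2+n$. Your optional check of $CC^{-1}=\1$ and the sanity checks are extras the paper omits; the garbled identity in your last sentence should simply read $\xi_i=2\langle\omega_i,\delta\rangle$.
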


\begin{proof}
For $1\leqslant i\leqslant n-1$, Eq.~\eqref{eq:Cn_inverse}, together with the root length vector $\vec{\alpha}$ in Eq.~\eqref{eq:Cn_Root_Length_Vector}, gives
\begin{align}
    \xi_i
    &=
    \sum_{j=1}^{n-1}
    C^{-1}_{ij}
    \langle\alpha_j,\alpha_j\rangle
    +
    C^{-1}_{i n}
    \langle\alpha_n,\alpha_n\rangle\\
    &=
    \sum_{j=1}^{n-1}
    \min\{i,j\}
    +
    \frac{i}{2}\cdot 2
    \\
    &=
    \sum_{j=1}^{n-1}
    \min\{i,j\}
    +
    i.
\end{align}
Evaluating the remaining sum by separating the contributions with $j\leqslant i$ from those with $j>i$ yields
\begin{align}
    \xi_i
    &=
    \left(
   \underbrace{  \sum_{j=1}^{i} j}_{j\leqslant i}
    +
   \underbrace{  \sum_{j=i+1}^{n-1} i}_{j>i}
    \right)
    +
    i
    \\
    &=
    \frac{i(i+1)}{2}
    +
    i(n-1-i)
    +
    i
    \\
    &=
    \frac{i(2n+1-i)}{2}.
\end{align}
For the final node, corresponding to the long simple root $\alpha_n$, we use the last row of $C^{-1}$:
\begin{align}
    \xi_n
    &=
    \sum_{j=1}^{n-1}
    C^{-1}_{n j}
    \langle\alpha_j,\alpha_j\rangle
    +
    C^{-1}_{n n}
    \langle\alpha_n,\alpha_n\rangle
    \\
    &=
    \sum_{j=1}^{n-1}
    j
    +
    \frac n2\cdot 2
    \\
    &=
    \frac{n(n-1)}{2}
    +
    n
    \\
    &=
    \frac{n(n+1)}{2}.
\end{align}
This completes the proof.
\end{proof}

The closed-form coefficients $\xi_i$ for the type-C root system turn the general uncertainty bound into an explicit inequality.
Substituting this vector into Cor.~\ref{cor:Simplified_Lie_Algebraic_Uncertainty_Relations} yields the type-C Lie algebraic uncertainty relation below.

\begin{mycor}{Type-C Lie Algebraic Uncertainty Relations $(n\geqslant2)$}{Type_C_Lie_UR}
For the type-C Lie algebra $C_n$ (see Tab.~\ref{tab:Simple_Lie_Algebra}) with $n\geqslant2$, the optimal Lie algebraic uncertainty relation (see Def.~\ref{def:Lie_Algebraic_Uncertainty_Relations}) takes the explicit form
\begin{align}
   \Delta^2_{\psi}( \raisebox{-1.4ex}{\includegraphics[scale=0.8]{DK_C_n.pdf}} )
   \coloneqq
   \Delta^2_{\psi}(\mathfrak{sp}(2n,\mathbb C))
   =
    \Delta^2_{\psi}(\mathfrak{sp}(n))
    \geqslant
    \left(\sum_{i=1}^{n-1}\frac{i(2n+1-i)}{2}\lambda_i\right)+
    \frac{n(n+1)}{2}\lambda_n\,.
\end{align}
The bound is state-independent and is fixed entirely by the representation, through the highest weight $\Lambda$; see Subsec.~\ref{subsec:Highest_Weight_Representations}.
Equivalently, it is determined by the Dynkin coefficient vector $\vec{\lambda}$; 
see Eq.~\eqref{eq:Dynkin_Vector}, together with the type-C root system geometry encoded in $\vec{\xi}$; 
see Eq.~\eqref{eq:xi}.
Its optimality follows directly from Cor.~\ref{cor:Simplified_Lie_Algebraic_Uncertainty_Relations}.
\end{mycor}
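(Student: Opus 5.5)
The statement is the Type-C corollary, and it follows by specializing Cor.~\ref{cor:Simplified_Lie_Algebraic_Uncertainty_Relations}, which already establishes both the bound $\vec{\lambda}^{\,\T}\vec{\xi}$ and its optimality, to $\g=\mathfrak{sp}(2n,\mathbb C)$. The plan has three parts: confirm the basis-independent uncertainty is the same in the complex and compact forms, substitute the Type-C data, and inherit optimality from the general result.

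For the first part, Thm.~\ref{thm:Basis_Independence_Orthogonal_Representation} together with Cor.~\ref{cor:Equivalence_Basis_Representations} shows that the uncertainty computed in a Chevalley basis of $\mathfrak{sp}(2n,\mathbb C)$ equals the one computed in the anti-Hermitian basis of its compact real form $\mathfrak{sp}(n)$. This justifies the chain of equalities $\Delta^2_\psi(\mathfrak{sp}(2n,\mathbb C))=\Delta^2_\psi(\mathfrak{sp}(n))$ appearing on the left-hand side.

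For the second part, I would apply Thm.~\ref{thm:Explicit_Uncertainty_Bound} in its compact form, Cor.~\ref{cor:Simplified_Lie_Algebraic_Uncertainty_Relations}, giving $\Delta^2_\psi(\g)\geqslant\sum_i\lambda_i\xi_i$ with $\vec\xi=C^{-1}\vec\alpha$. Into this I insert the Type-C components of $\vec\xi$ from Thm.~\ref{thm:Components_xi_C}:
\[
\xi_i=\tfrac{i(2n+1-i)}{2}\quad(1\leqslant i\leqslant n-1),\qquad \xi_n=\tfrac{n(n+1)}{2}.
\]
Splitting the sum into the interior indices and the terminal index then produces the displayed inequality.

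For the third part, optimality needs no separate argument. Cor.~\ref{cor:Simplified_Lie_Algebraic_Uncertainty_Relations}, through Thm.~\ref{thm:Lie_Algebraic_Uncertainty_Relations} and Lem.~\ref{lem:Moment_Norm_Bound}, shows that a highest-weight vector $\ket{\psi_\Lambda}$ attains $2\langle\Lambda,\delta\rangle$. The reason is that for this vector all $\langle e_\alpha\rangle$ vanish and the Cartan contribution equals $\langle\Lambda,\Lambda\rangle$, so Thm.~\ref{thm:Lie_Algebraic_Uncertainty_Identity_Weight} gives the bound exactly.

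The main obstacle is one of conventions rather than mathematics. The input to Thm.~\ref{thm:Components_xi_C} must use the same conventions as the statement: Cartan matrix entries $C_{ij}=2\langle\alpha_i,\alpha_j\rangle/\langle\alpha_j,\alpha_j\rangle$, long roots of squared length $2$, and $\alpha_n$ the long root. If these hold, Lem.~\ref{lem:Entries_Cinverse_Type_C} and Lem.~\ref{lem:Root_Length_Vector_C} are consistent. As a check I would verify $n=2$, where the formulas give $\xi_1=2$ and $\xi_2=3$, matching Cor.~\ref{cor:Type_C_Lie_UR_2}. I would also confirm the defining representation $\Lambda=\omega_1$, where $\xi_1=n$ matches the entry in Tab.~\ref{tab:Def_Rep}.
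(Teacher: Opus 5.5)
Your proposal is correct and matches the paper's argument: the corollary is obtained by substituting the Type-C components $\xi_i=i(2n+1-i)/2$ for $1\leqslant i\leqslant n-1$ and $\xi_n=n(n+1)/2$ from Thm.~\ref{thm:Components_xi_C} into Cor.~\ref{cor:Simplified_Lie_Algebraic_Uncertainty_Relations}, and optimality is inherited from the highest-weight state. Your convention check ($\alpha_n$ long, $C_{n-1,n}=-1$, $C_{n,n-1}=-2$) and your sanity checks at $n=2$ and $\Lambda=\omega_1$ agree with the paper's data.
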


The compact real form above makes the general type-C structure explicit.
To display how the coefficients grow with the rank, and to include the rank-one case, we summarize the corresponding bounds in Tab.~\ref{tab:explicit_bounds_C}.

\begin{table}[h]
\centering
\begin{tblr}{
width = \textwidth,
colspec = {c||c|X[c]},
row{1} = {bg=gray!50, font=\bfseries},
row{2} = {bg=gray!20, font=\bfseries},
hlines,
vlines,
cells = {m},
cell{3-10}{1} = {bg=gray!20},
cell{3-10}{2} = {bg=mRed},
cell{3-10}{3} = {bg=mGreen},
}

\SetCell[c=3]{c}
\textbf{Type-C Lie Algebraic Uncertainty Relations \quad  \raisebox{-1.4ex}{\includegraphics[scale=0.8]{DK_C_n.pdf}} }
&
&
\\

Compact Lie algebra
&
Rank
&
Optimal Bound
\\

$\mathfrak{sp}(1)$
&
$1$
&
$\lambda_1$
\\

$\mathfrak{sp}(2)$
&
$2$
&
$2\lambda_1+3\lambda_2$
\\

$\mathfrak{sp}(3)$
&
$3$
&
$3\lambda_1+5\lambda_2+6\lambda_3$
\\

$\mathfrak{sp}(4)$
&
$4$
&
$4\lambda_1+7\lambda_2+9\lambda_3+10\lambda_4$
\\

$\mathfrak{sp}(5)$
&
$5$
&
$5\lambda_1+9\lambda_2+12\lambda_3+14\lambda_4+15\lambda_5$
\\

$\mathfrak{sp}(6)$
&
$6$
&
$\displaystyle
6\lambda_1+11\lambda_2+15\lambda_3
+18\lambda_4+20\lambda_5+21\lambda_6$
\\

$\vdots$
&
$\vdots$
&
$\vdots$
\\

$\mathfrak{sp}(n)$
&
$n$
&
$\displaystyle
n\lambda_1
+(2n-1)\lambda_2
+3(n-1)\lambda_3
+\cdots
+\frac{(n-1)(n+2)}{2}\lambda_{n-1}
+\frac{n(n+1)}{2}\lambda_n$
\\

\end{tblr}

\caption{
\textbf{Type-C Lie Algebraic Uncertainty Relations.}
For the compact Lie algebra $\mathfrak{sp}(n)$ with Dynkin coefficient vector $\vec{\lambda}=(\lambda_1,\ldots,\lambda_n)^{\T}$, the state-independent optimal bound is
$\sum_{i=1}^{n-1}(i(2n+1-i)/2)\lambda_i+(n(n+1)/2)\lambda_n$ for $n\geqslant2$.
The rank-one case satisfies $C_1\simeq A_1$ and reproduces the $\mathfrak{su}(2)$ bound $\lambda_1$ (see Cor.~\ref{cor:Type_C_Lie_UR_1}).
}
\label{tab:explicit_bounds_C}
\end{table}

Tab.~\ref{tab:Cn_representations} summarizes representative highest weight representations of $\mathfrak{sp}(n)$ and the corresponding optimal uncertainty bounds.

\begin{table}[h]
\centering
\begin{tblr}{
width = \textwidth,
colspec = {X[c]||X[c]|X[c]|X[c]},
row{1} = {bg=gray!50,font=\bfseries},
row{2} = {bg=gray!20,font=\bfseries},
hlines,
vlines,
cells = {m},
cell{3-7}{1} = {bg=gray!20},
cell{3-7}{2} = {bg=mRed},
cell{3-7}{3} = {bg=mGreen},
cell{3-7}{4} = {bg=mBlue},
}

\SetCell[c=4]{c}
\textbf{Type-C Lie Algebraic Uncertainty Relations Across Representations}
&
&
&
\\

Representation
&
Highest Weight $\Lambda$
&
Dynkin Vector $\vec{\lambda}^{\,\T}$
&
Optimal Bound
\\

Defining
&
$\omega_1$
&
$(1,0,\dots,0)$
&
$n$
\\

Adjoint
&
$2\omega_1$
&
$(2,0,\dots,0)$
&
$2n$
\\

Fundamental (second) 
&
$\omega_2 \quad (n\geqslant 2)$
&
$(0,1,0,\dots,0)$
&
$2n-1$
\\

Fundamental (last)
&
$\omega_n$
&
$(0,\dots,0,1)$
&
$\dfrac{n(n+1)}{2}$
\\

Fundamental ($i$-th)
&
$\omega_i$ with $\ 3\leqslant i\leqslant n-1$
&
$(0,\dots,1,\dots,0)$
&
$\dfrac{i(2n+1-i)}{2}$
\\

\end{tblr}

\caption{
\textbf{Lie Algebraic Uncertainty Relations Across Representations of $\mathfrak{sp}(n)$.}
Representative standard representations of $\mathfrak{sp}(n)$, specified by the highest weight $\Lambda$ and Dynkin coefficient vector $\vec{\lambda}^{\,\T}$, together with the resulting optimal lower bound.}
\label{tab:Cn_representations}
\end{table}

Following the type-A and type-B analyses in Subsecs.~\ref{subsec:A_type} and~\ref{subsec:B_type}, respectively, we take the adjoint representation as a representative example for the type-C series. 
Compared with type $B_n$ (see Eq.~\eqref{eq:DynkinBn_Diagram}), the arrow in the Dynkin diagram is reversed (see Eq.~\eqref{eq:DynkinCn_Diagram}), reflecting the interchange of long and short simple roots. 
This reversal is directly mirrored in the structure of the resulting uncertainty relation.
We begin with the generic $C_n$ case, for which the Dynkin diagram and the optimal lower bound for the adjoint representation are summarized below; see Tab.~\ref{tab:Cn_representations}.
\begin{align}\label{eq:adj_C_n}
    \Delta^2_{\mathrm{ad}}( \raisebox{-1.4ex}{\includegraphics[scale=0.8]{DK_C_n.pdf}} )
    \geqslant2n.
\end{align}
Specializing to $n=3,4,5$, the optimal Lie algebraic uncertainty relations are given explicitly by
\begin{align}
    \Delta^2_{\mathrm{ad}}( \raisebox{-1.4ex}{\includegraphics[scale=0.8]{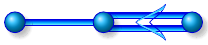}} )
    &\geqslant6,\\
    \Delta^2_{\mathrm{ad}}( \raisebox{-1.4ex}{\includegraphics[scale=0.8]{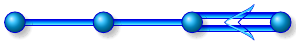}} )
    &\geqslant8,\\
    \Delta^2_{\mathrm{ad}}( \raisebox{-1.4ex}{\includegraphics[scale=0.8]{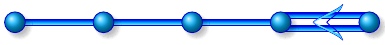}} )
    &\geqslant10.
\end{align}

Therefore, at the level of the adjoint representation, the type-A and type-C series yield identical optimal uncertainty bounds whenever they have the same rank, equivalently the same dimension of the Cartan subalgebra $\mathfrak h$. 
In diagrammatic terms, this correspondence can be expressed as
\begin{align}
    \min_{\psi}\Delta^2_{\mathrm{ad}}( \raisebox{-0.6ex}{\includegraphics[scale=0.8]{DK_A_n.pdf}} )=
    \min_{\psi}\Delta^2_{\mathrm{ad}}( \raisebox{-1.4ex}{\includegraphics[scale=0.8]{DK_C_n.pdf}} ).
\end{align}
For the cases $n=3,4,5$, this equivalently gives
\begin{align}
    \min_{\psi}\Delta^2_{\mathrm{ad}}( \raisebox{-0.8ex}{\includegraphics[scale=0.8]{DK_A_3.pdf}} )
    &=
    \min_{\psi}\Delta^2_{\mathrm{ad}}( \raisebox{-1.4ex}{\includegraphics[scale=0.8]{DK_C_3.pdf}} ),\\
    \min_{\psi}\Delta^2_{\mathrm{ad}}( \raisebox{-0.8ex}{\includegraphics[scale=0.8]{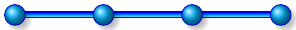}} )
    &=
    \min_{\psi}\Delta^2_{\mathrm{ad}}( \raisebox{-1.4ex}{\includegraphics[scale=0.8]{DK_C_4.pdf}} ),\\
    \min_{\psi}\Delta^2_{\mathrm{ad}}( \raisebox{-0.8ex}{\includegraphics[scale=0.8]{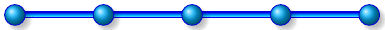}} )
    &=
    \min_{\psi}\Delta^2_{\mathrm{ad}}( \raisebox{-1.4ex}{\includegraphics[scale=0.8]{DK_C_5.pdf}} ).
\end{align}

For type-C algebras, the defining representation leads to a Lie algebraic uncertainty relation. 
At rank $n$, it is given by (see Tab.~\ref{tab:Cn_representations})
\begin{align}\label{eq:def_C_n}
    \Delta^2_{\mathrm{def}}( \raisebox{-1.4ex}{\includegraphics[scale=0.8]{DK_C_n.pdf}} )
    \geqslant n.
\end{align}
The structure of the bound is especially transparent at low rank. 
For $n=3,4,5$, the optimal Lie algebraic uncertainty relations read
\begin{align}
    \Delta^2_{\mathrm{def}}( \raisebox{-1.4ex}{\includegraphics[scale=0.8]{DK_C_3.pdf}} )
    &\geqslant3,\\
    \Delta^2_{\mathrm{def}}( \raisebox{-1.4ex}{\includegraphics[scale=0.8]{DK_C_4.pdf}} )
    &\geqslant4,\\
    \Delta^2_{\mathrm{def}}( \raisebox{-1.4ex}{\includegraphics[scale=0.8]{DK_C_5.pdf}} )
    &\geqslant5.
\end{align}

This mirrors the adjoint representation: 
for the defining representation, the type-A and type-C series yield the same optimal uncertainty bounds at equal rank, equivalently at equal Cartan subalgebra dimension. 
In Dynkin diagram notation, the correspondence reads
\begin{align}
    \min_{\psi}\Delta^2_{\mathrm{def}}(\raisebox{-0.6ex}{\includegraphics[scale=0.8]{DK_A_n.pdf}})=
    \min_{\psi}\Delta^2_{\mathrm{def}}( \raisebox{-1.4ex}{\includegraphics[scale=0.8]{DK_C_n.pdf}} ).
\end{align}
At ranks $n=3,4,5$, this correspondence takes the explicit form
\begin{align}
    \min_{\psi}\Delta^2_{\mathrm{def}}( \raisebox{-0.8ex}{\includegraphics[scale=0.8]{DK_A_3.pdf}} )
    &=
    \min_{\psi}\Delta^2_{\mathrm{def}}( \raisebox{-1.4ex}{\includegraphics[scale=0.8]{DK_C_3.pdf}} ),\\
    \min_{\psi}\Delta^2_{\mathrm{def}}( \raisebox{-0.8ex}{\includegraphics[scale=0.8]{DK_A_4.pdf}} )
    &=
    \min_{\psi}\Delta^2_{\mathrm{def}}( \raisebox{-1.4ex}{\includegraphics[scale=0.8]{DK_C_4.pdf}} ),\\
    \min_{\psi}\Delta^2_{\mathrm{def}}( \raisebox{-0.8ex}{\includegraphics[scale=0.8]{DK_A_5.pdf}} )
    &=
    \min_{\psi}\Delta^2_{\mathrm{def}}( \raisebox{-1.4ex}{\includegraphics[scale=0.8]{DK_C_5.pdf}} ).
\end{align}

The type-C analysis is thereby completed, making explicit how the general Lie algebraic uncertainty framework yields state-independent optimal bounds across its representations.


\subsection{Explicit Bounds for Type-D Lie Algebras}
\label{subsec:D_type}

We next consider the even-orthogonal series; 
see Tab.~\ref{tab:Simple_Lie_Algebra}.
Although the notation $\mathfrak{so}(2n,\mathbb C)$ is defined for all positive integers $n$, the simple type-D series starts only at $n\geqslant 4$.
The first three ranks are exceptional and are governed by the standard low-rank identifications.
First, $\mathfrak{so}(2,\mathbb C)$ is abelian and therefore lies outside the semisimple root system hierarchy.
At the next two ranks, one has
\begin{align}
D_2
=
\mathfrak{so}(4,\mathbb C)
\simeq
\mathfrak{sl}(2,\mathbb C)\oplus\mathfrak{sl}(2,\mathbb C)
=
A_1\oplus A_1,
\end{align}
and
\begin{align}
D_3
=
\mathfrak{so}(6,\mathbb C)
\simeq
\mathfrak{sl}(4,\mathbb C)
=
A_3.
\end{align}
For the compact real forms, these identifications become
\begin{align}
\mathfrak{so}(4)
\simeq
\mathfrak{su}(2)\oplus\mathfrak{su}(2),
\end{align}
and
\begin{align}
\mathfrak{so}(6)
\simeq
\mathfrak{su}(4).  
\end{align}
Thus $\mathfrak{so}(4)$ decomposes into two independent $A_1$ sectors, while $\mathfrak{so}(6)$ is governed by the $A_3$ uncertainty relation, up to the corresponding relabeling of simple roots in Dynkin diagram.
The genuinely new even-orthogonal hierarchy therefore begins with type-D for $n\geqslant 4$.

For the genuine type-D family, with $n\geqslant 4$, one has
$D_n=\mathfrak{so}(2n,\mathbb C)$, with compact real form $\mathfrak{so}(2n)$.
Unlike the non-simply-laced type-B (see Subsec.~\ref{subsec:B_type}) and type-C (see Subsec.~\ref{subsec:C_type}) series, type-D is simply laced; 
all simple roots therefore have the same squared length.
Its Dynkin diagram is a linear chain that splits at the $(n-2)$-th node into two terminal nodes, $\alpha_{n-1}$ and $\alpha_n$:
\begin{equation}
\label{eq:Dynkin_D}
 \raisebox{-8ex}{\includegraphics[scale=1]{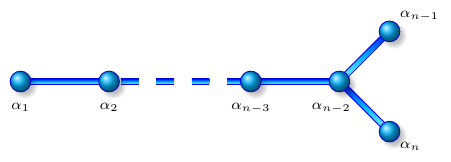}} \, .
\end{equation}

With the standard ordering of the type-D Dynkin diagram, the Cartan matrix (see Def.~\ref{def:Cartan_Matrix}) is
\begin{align}
    C
    =
    \begin{pmatrix}
    2 & -1 & 0 & \cdots & 0 & 0 & 0\\
    -1 & 2 & -1 & \cdots & 0 & 0 & 0\\
    0 & -1 & 2 & \ddots & \vdots & \vdots & \vdots\\
    \vdots & \vdots & \ddots & \ddots & -1 & 0 & 0\\
    0 & 0 & \cdots & -1 & 2 & -1 & -1\\
    0 & 0 & \cdots & 0 & -1 & 2 & 0\\
    0 & 0 & \cdots & 0 & -1 & 0 & 2
    \end{pmatrix},
\end{align}
and its inverse takes the form
\begin{align}
    C^{-1}
    =
    \begin{pmatrix}
    1 & 1 & 1 & \cdots & 1 & \dfrac12 & \dfrac12 \\[10pt]
    1 & 2 & 2 & \cdots & 2 & 1 & 1 \\[6pt]
    1 & 2 & 3 & \cdots & 3 & \dfrac32 & \dfrac32 \\[6pt]
    \vdots & \vdots & \vdots & \ddots & \vdots & \vdots & \vdots \\[6pt]
    1 & 2 & 3 & \cdots & n-2 & \dfrac{n-2}{2} & \dfrac{n-2}{2} \\[8pt]
    \dfrac12 & 1 & \dfrac32 & \cdots & \dfrac{n-2}{2} & \dfrac n4 & \dfrac{n-2}{4} \\[8pt]
    \dfrac12 & 1 & \dfrac32 & \cdots & \dfrac{n-2}{2} & \dfrac{n-2}{4} & \dfrac n4
    \end{pmatrix}.
\end{align}
From this expression, the entries of $C^{-1}$ can be read off directly.

\begin{mylem}{Entries of $C^{-1}$ for Type-D Lie Algebras ($n\geqslant4$)}{Entries_Cinverse_Type_D}
For the type-D Lie algebra $D_n$, the $(i,j)$-entry of the inverse Cartan matrix is
\begin{align}
\label{eq:Dn_inverse}
    C^{-1}_{ij}
    =
    \begin{cases}
    \min\{i,j\},
    &
    1\leqslant i,j\leqslant n-2,
    \\[1.5ex]
    \dfrac{i}{2},
    &
    1\leqslant i\leqslant n-2,
  \quad\text{and}\quad
    j\in\{n-1,n\},
    \\[1.5ex]
    \dfrac{j}{2},
    &
    i\in\{n-1,n\},
   \quad\text{and}\quad
    1\leqslant j\leqslant n-2,
    \\[1.5ex]
    \dfrac{n}{4},
    &
    i=j\in\{n-1,n\},
    \\[1.5ex]
    \dfrac{n-2}{4},
    &
     i,j\in\{n-1,n\} \quad\text{and}\quad i\ne j.
     \end{cases}
\end{align}
\end{mylem}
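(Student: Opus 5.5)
The lemma asks for the inverse of the $D_n$ Cartan matrix in closed form. Given a candidate inverse, it suffices to check one product, so I will verify $C\,B=\1_n$, where $B$ is the matrix defined entrywise by Eq.~\eqref{eq:Dn_inverse}. Because $C$ is symmetric ($D_n$ is simply laced) and $B$ is symmetric, this one-sided identity gives $B=C^{-1}$. The check runs row by row, using that each row of $C$ has at most four nonzero entries. For every row $i$ and column $j$, I compute
\[
(CB)_{ij}=2B_{ij}-\sum_{k\sim i}B_{kj},
\]
where $k\sim i$ runs over the neighbours of node $i$ in the Dynkin diagram of Eq.~\eqref{eq:Dynkin_D}.

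I organize the check by the position of row $i$. For $1\leqslant i\leqslant n-3$ the neighbours are $i\pm1$, with node $0$ read as absent. I split into three column cases:
\begin{itemize}
\item For $j\leqslant n-2$, $B_{kj}=\min\{k,j\}$, so the task is the standard fact that $k\mapsto\min\{k,j\}$ has discrete second difference $-\delta_{kj}$, including the boundary at $k=1$. This is the same computation behind the type-A and type-B inverses (Lems.~\ref{lem:Inverse_Cartan_Matrix_A} and \ref{lem:Inverse_Cartan_Matrix_B}).
\item For $j\in\{n-1,n\}$, $B_{kj}=k/2$ is linear in $k$. Its second difference vanishes, and the boundary term $2\cdot\tfrac12-\tfrac22=0$ at $i=1$ also vanishes.
\end{itemize}

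Rows $i=n-1$ and $i=n$ each have the single neighbour $n-2$, so $(CB)_{ij}=2B_{ij}-B_{n-2,j}$. For $j\leqslant n-2$ this is $2\cdot\tfrac j2-j=0$. For $j=i$ it is $2\cdot\tfrac n4-\tfrac{n-2}{2}=1$. For the other spinor column it is $2\cdot\tfrac{n-2}{4}-\tfrac{n-2}{2}=0$.

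The branch row $i=n-2$ is the step most likely to hide an off-by-one error, so I treat it with care. Its neighbours are $n-3$, $n-1$ and $n$. For $j\leqslant n-3$:
\[
2j-j-\tfrac j2-\tfrac j2=0.
\]
For $j=n-2$:
\[
2(n-2)-(n-3)-\tfrac{n-2}{2}-\tfrac{n-2}{2}=1.
\]
For $j=n-1$ (and symmetrically $j=n$):
\[
2\cdot\tfrac{n-2}{2}-\tfrac{n-3}{2}-\tfrac n4-\tfrac{n-2}{4}=0.
\]
The hypothesis $n\geqslant4$ enters exactly here: it guarantees that node $n-3\geqslant1$ exists, so the generic and branch cases do not collide. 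Together these cases give $CB=\1_n$, which proves Eq.~\eqref{eq:Dn_inverse}.

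Optionally, I would also cross-check the formula through $C^{-1}_{ij}=\langle\omega_i,\omega_j\rangle$, obtained from Eq.~\eqref{eq:fundamental_inner} with all $\langle\alpha_j,\alpha_j\rangle=2$. Using the orthonormal model $\omega_i=e_1+\dots+e_i$ and $\omega_{n-1,n}=\tfrac12(e_1+\dots+e_{n-1}\mp e_n)$, the pairings immediately reproduce $\min\{i,j\}$, $i/2$, $n/4$ and $(n-2)/4$.
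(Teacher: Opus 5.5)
Your proof is correct, but it takes a different route from the paper, which gives no derivation at all. The paper displays $C^{-1}$ as an explicit matrix and says its entries ``can be read off directly'', so the correctness of that displayed inverse is simply asserted. You instead certify the formula by checking $CB=\1_n$ row by row, and every case is right, including the branch row $i=n-2$ (for example $2\cdot\tfrac{n-2}{2}-\tfrac{n-3}{2}-\tfrac n4-\tfrac{n-2}{4}=0$). The paper's route buys brevity; yours gives a self-contained proof.

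Your optional cross-check through $C^{-1}_{ij}=\langle\omega_i,\omega_j\rangle$ is more than a check. In the orthonormal model $\omega_i=e_1+\dots+e_i$, $\omega_{n-1,n}=\tfrac12(e_1+\dots+e_{n-1}\mp e_n)$ it is an independent derivation of the formula. It also ties the lemma to the identity $\langle\omega_i,\omega_j\rangle=\tfrac12 C^{-1}_{ij}\langle\alpha_j,\alpha_j\rangle$ that the paper uses to evaluate $2\langle\Lambda,\delta\rangle$.

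Two minor remarks, neither affecting correctness:
\begin{itemize}
\item For square matrices, $CB=\1_n$ already forces $B=C^{-1}$, so the appeal to symmetry of $C$ and $B$ is unnecessary.
\item Your computation does not actually need $n\geqslant4$. If you read node $0$ as absent (with $B_{0j}=0$), as you already do in the generic rows, the same row checks go through for $n=3$, consistent with $D_3\simeq A_3$. The restriction $n\geqslant4$ is therefore a classification convention, not something your branch-row calculation requires.
\end{itemize}
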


The inverse Cartan matrix encodes the coupling between Dynkin coefficients, but the uncertainty bound $\langle\vec{\lambda},\vec{\xi}\rangle$ (see Eq.~\eqref{eq:Lie_Algebraic_UR_Simplified}) also depends on the information carried by the simple roots. 
For the type-D series, this second ingredient is especially simple: since the root system is simply laced, all simple roots have equal squared length under the standard normalization. Thus the vector of squared simple root lengths is fixed as follows.

\begin{mylem}{Root Length Vector $\vec{\alpha}$ for Type-D Lie Algebras ($n\geqslant4$)}{Root_Length_Vector_D}
For the type-D Lie algebra $D_n$ with $n\geqslant4$, the root system is simply laced, so all simple roots have the same squared length.
With the standard normalization in which long roots have squared length $2$, i.e., $\langle\alpha_i,\alpha_i\rangle=2$, the squared simple root lengths form the vector 
\begin{align}
\label{eq:Root_Length_Vector_D}
    \vec{\alpha}
    =\begin{pmatrix}
        2\\
        \vdots\\
        2
    \end{pmatrix}\,.
\end{align}
\end{mylem}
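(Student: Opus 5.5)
The statement to be justified is the last one displayed: for $D_n$ with $n\geqslant4$, $\langle\alpha_i,\alpha_i\rangle=2$ for every simple root, so $\vec{\alpha}=(2,\ldots,2)^{\T}$.

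The argument has two ingredients: the root system is simply laced, and the normalization of Eq.~\eqref{eq:normal_Killing} gives long roots squared length $2$.

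First I will read off equal lengths from the Cartan matrix displayed above. Every nonzero off-diagonal entry equals $-1$, and the matrix is symmetric: $C_{ij}=C_{ji}$ for all $i\neq j$. By Def.~\ref{def:Cartan_Matrix},
\begin{align}
C_{ij}\langle\alpha_j,\alpha_j\rangle=2\langle\alpha_i,\alpha_j\rangle=C_{ji}\langle\alpha_i,\alpha_i\rangle .
\end{align}
So whenever $C_{ij}=C_{ji}\neq0$, we get $\langle\alpha_i,\alpha_i\rangle=\langle\alpha_j,\alpha_j\rangle$. Nodes joined by a bond therefore have equal length.

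Next I will use connectivity. The $D_n$ Dynkin diagram in Eq.~\eqref{eq:Dynkin_D} is connected: the chain $\alpha_1,\ldots,\alpha_{n-2}$ with the two terminal nodes $\alpha_{n-1},\alpha_n$ both attached to $\alpha_{n-2}$. Propagating the equality along the edges shows that all simple roots share one squared length, say $\ell$.

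The remaining step is to identify $\ell$, and this is the only point needing care. Every root is $W$-conjugate to a simple root, and the Weyl group preserves $\langle\cdot,\cdot\rangle$. Hence every root in $\Phi$ has length $\ell$, and all roots count as long. The normalization in Eq.~\eqref{eq:normal_Killing}, with dual Coxeter number $h^\vee=2n-2$ from Tab.~\ref{tab:Dual_Coxeter_Numbers}, is stated to give every long root squared length $2$. Therefore $\ell=2$, which yields Eq.~\eqref{eq:Root_Length_Vector_D}.

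As an independent check, I would realize $D_n$ with roots $\pm e_i\pm e_j$ in the standard Euclidean coordinates. The Killing form satisfies $\kappa_{\mathrm{Kill}}=(2n-2)\cdot2\,\Tr$ in the defining representation, so $\kappa=\Tr_{\mathrm{def}}$. With this form each $\pm e_i\pm e_j$ has squared length $2$, confirming the value directly.
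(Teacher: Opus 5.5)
Your main argument is correct and follows the paper's own route. The paper justifies the lemma only by saying that $D_n$ is simply laced and that $\kappa$ is normalized so long roots have squared length $2$; you make the implicit steps explicit (symmetry of $C$ forcing equal lengths across each bond, connectivity of the diagram, and Weyl conjugacy of every root to a simple root, so that all roots are long).

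Your ``independent check'' at the end is wrong by a factor of two, and as written it contradicts the value it is meant to confirm. On $\mathfrak{so}(N,\mathbb C)$ the adjoint representation is $\Lambda^2V$, and $\mathrm{Tr}_{\Lambda^2V}(X^2)=(N-2)\,\mathrm{Tr}_V(X^2)$ for traceless $X$. Hence $\kappa_{\mathrm{Kill}}=(2n-2)\,\mathrm{Tr}_{\mathrm{def}}$, and with $h^\vee=2n-2$ the normalized form is $\kappa=\tfrac12\mathrm{Tr}_{\mathrm{def}}$, not $\mathrm{Tr}_{\mathrm{def}}$.

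This changes the outcome of the check. For Cartan elements $h=\mathrm{diag}(t_1,\ldots,t_n,-t_1,\ldots,-t_n)$ one has $\mathrm{Tr}_{\mathrm{def}}(hh')=2\sum_i t_it_i'$. With $\kappa=\tfrac12\mathrm{Tr}_{\mathrm{def}}$ the $e_i$ are orthonormal and $\pm e_i\pm e_j$ has squared length $2$. With your $\kappa=\mathrm{Tr}_{\mathrm{def}}$ you would instead get $\langle e_i,e_j\rangle=\tfrac12\delta_{ij}$ and squared root length $1$.

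The identification $\kappa=\mathrm{Tr}_{\mathrm{def}}$ holds for $\mathfrak{sl}(n+1,\mathbb C)$ but not for the orthogonal series. Correct it to $\tfrac12\mathrm{Tr}_{\mathrm{def}}$ or drop the check, since the main argument does not depend on it.
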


The entrywise expression for $C^{-1}$ in Lem.~\ref{lem:Entries_Cinverse_Type_D}, together with the root length vector in Eq.~\eqref{eq:Root_Length_Vector_D}, yields the components of $\vec{\xi}$.

\begin{mythm}{Components of $\vec{\xi}$ for Type-D Lie Algebras ($n\geqslant4$)}{Components_xi_D}
For the type-D Lie algebra $D_n$ with $n\geqslant4$, the components of the vector $\vec{\xi}$ (see Eq.~\eqref{eq:xi}) are
\begin{align}
\label{eq:xi_D}
    \xi_i
    =
   \begin{cases}
   i(2n-i-1),
    &\quad
    1\leqslant i\leqslant n-2, \\[2ex]
    \dfrac{n(n-1)}{2} &\quad i\in\{n-1, n\}.
    \end{cases}
\end{align}
\end{mythm}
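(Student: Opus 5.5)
The computation will mirror the type-A, B and C derivations. I will substitute the entrywise formula for $C^{-1}$ from Lem.~\ref{lem:Entries_Cinverse_Type_D} and the uniform root-length vector $\vec{\alpha}=(2,\ldots,2)^{\T}$ from Lem.~\ref{lem:Root_Length_Vector_D} into $\xi_i=\sum_{j=1}^{n}C^{-1}_{ij}\langle\alpha_j,\alpha_j\rangle$. Because every simple root has squared length $2$, this reduces to $\xi_i=2\sum_{j=1}^n C^{-1}_{ij}$, i.e.\ twice the $i$-th row sum of $C^{-1}$.

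The rows split into two cases.

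\textbf{Interior rows, $1\leqslant i\leqslant n-2$.} I will separate the chain part $j\leqslant n-2$ from the two spinor columns $j\in\{n-1,n\}$. The chain part contributes $\sum_{j=1}^{n-2}\min\{i,j\}=\tfrac{i(i+1)}{2}+i(n-2-i)$. The two spinor columns each contribute $i/2$, so together they give $i$. Doubling the total yields
\[
\xi_i=i(i+1)+2i(n-2-i)+2i=i(2n-i-1),
\]
which is the first case of Eq.~\eqref{eq:xi_D}.

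\textbf{Spinor rows, $i\in\{n-1,n\}$.} The chain columns contribute $\sum_{j=1}^{n-2} j/2=\tfrac{(n-2)(n-1)}{4}$. The spinor block contributes $\tfrac{n}{4}+\tfrac{n-2}{4}=\tfrac{n-1}{2}$. Doubling the total gives
\[
\xi_i=\tfrac{(n-1)(n-2)}{2}+(n-1)=\tfrac{n(n-1)}{2},
\]
and the result is the same for both nodes, as the diagram symmetry exchanging $\alpha_{n-1}$ and $\alpha_n$ requires.

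The only real obstacle is that Lem.~\ref{lem:Entries_Cinverse_Type_D} is taken as given. To be safe, I would confirm it by checking $CC^{-1}=\mathbb{1}$ on the branch rows. Alternatively, I would use the independent check $\xi_i=2\langle\omega_i,\delta\rangle\cdot 2/\langle\alpha_i,\alpha_i\rangle$, which here equals $2\langle\omega_i,\delta\rangle$. This can be computed in the orthonormal model $\omega_i=e_1+\cdots+e_i$, $\omega_{n-1},\omega_n=\tfrac12(e_1+\cdots+e_{n-1}\mp e_n)$, $\delta=\sum_k(n-k)e_k$. It gives $2\sum_{k\le i}(n-k)=i(2n-i-1)$ and $\sum_{k<n}(n-k)=n(n-1)/2$, matching both cases.
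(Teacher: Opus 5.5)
Your proposal is correct and follows the paper's proof essentially line for line: $\xi_i$ is twice the $i$-th row sum of $C^{-1}$, split into the chain columns and the two branch columns, with the same arithmetic in both cases and the symmetry $\xi_{n-1}=\xi_n$. One small slip in your optional cross-check: the general identity is $\xi_i=2\langle\omega_i,\delta\rangle$, which follows from $\langle\omega_i,\omega_j\rangle=\tfrac12 C^{-1}_{ij}\langle\alpha_j,\alpha_j\rangle$ and $\delta=\sum_j\omega_j$, with no extra factor $2/\langle\alpha_i,\alpha_i\rangle$; that factor is harmless for the simply laced $D_n$, but it would double $\xi_n$ for $B_n$.
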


\begin{proof}
For the nodes along the linear chain (see Eq.~\eqref{eq:Dynkin_D}), $1\leqslant i\leqslant n-2$, the entrywise form of $C^{-1}$ (see Lem.~\ref{lem:Entries_Cinverse_Type_D}) yields
\begin{align}
    \xi_i
    &=
    2\left(\sum_{j=1}^{n-2}\min\{i,j\}
    +
    \frac{i}{2}
    +
    \frac{i}{2}\right)\\
    &=
    2\left(  \underbrace{\sum_{j=1}^{i}j}_{j\leqslant i}
    +
    \underbrace{\sum_{j=i+1}^{n-2}i}_{i<j\leqslant n-2}
    +
    \underbrace{\frac{i}{2}}_{j=n-1}+\underbrace{\frac{i}{2}}_{j=n}\right)\\
    &=
    2\left(\frac{i(i+1)}{2}
    +
    i(n-2-i)
    +
    i\right)\\
    &=
    i(2n-i-1).
\end{align}

For the first terminal node, i.e., $i=n-1$, the corresponding component $\xi_{n-1}$ is obtained from 
\begin{align}
    \xi_{n-1}
    &=
    2\left(\underbrace{\sum_{j=1}^{n-2}\frac{j}{2}}_{j\leqslant n-2}
    +
    \underbrace{\frac n4}_{j=n-1}
    +
   \underbrace{ \frac{n-2}{4}}_{j=n}\right)\\
    &=
    \left(\sum_{j=1}^{n-2}j\right)
    +
    \frac n2
    +
    \frac{n-2}{2}\\
    &=
    \frac{(n-2)(n-1)}{2}
    +
    n-1\\
    &=
    \frac{n(n-1)}{2}.
\end{align}
By the symmetry between the two terminal nodes (see Eq.~\eqref{eq:Dynkin_D}), namely $\alpha_{n-1}$ and $\alpha_n$, one also obtains
\begin{align}
    \xi_n
    =
    \xi_{n-1}
    =
    \frac{n(n-1)}{2}.
\end{align}
This establishes the claimed expression.
\end{proof}

Having determined the vector $\vec{\xi}$, we are now ready to state the optimal type-D Lie algebraic uncertainty relation.

\begin{mycor}{Type-D Lie Algebraic Uncertainty Relations ($n\geqslant4$)}{Type_D_Lie_UR}
For the type-D Lie algebra $D_n$ (see Tab.~\ref{tab:Simple_Lie_Algebra}) with $n\geqslant4$, the optimal Lie algebraic uncertainty relation (see Def.~\ref{def:Lie_Algebraic_Uncertainty_Relations}) takes the explicit form
\begin{align}
   \Delta^2_{\psi}( \raisebox{-5.2ex}{\includegraphics[scale=0.8]{DK_D_n.pdf}} )
   &\coloneqq
   \Delta^2_{\psi}(\mathfrak{so}(2n,\mathbb C))
   =
    \Delta^2_{\psi}(\mathfrak{so}(2n))\\
   & \geqslant
    \left(\sum_{i=1}^{n-2}i(2n-i-1)\lambda_i\right)+\frac{n(n-1)}{2}(\lambda_{n-1}+\lambda_n)\,.
\end{align}
The bound is state independent and is fixed entirely by the representation, through the highest weight $\Lambda$ (see Subsec.~\ref{subsec:Highest_Weight_Representations}).
Equivalently, it is determined by the Dynkin coefficient vector $\vec{\lambda}$ (see Eq.~\eqref{eq:Dynkin_Vector}).
Its optimality follows directly from Cor.~\ref{cor:Simplified_Lie_Algebraic_Uncertainty_Relations}.
\end{mycor}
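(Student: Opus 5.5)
The statement to prove is Cor.~\ref{cor:Type_D_Lie_UR}. My plan is to treat it as a direct specialization of Cor.~\ref{cor:Simplified_Lie_Algebraic_Uncertainty_Relations} to the root system $D_n$. That corollary already gives the optimal, state-independent bound $\Delta^2_\psi(\g)\geqslant\vec{\lambda}^{\,\T}\vec{\xi}$, with $\vec{\xi}=C^{-1}\vec{\alpha}$, and it shows the bound is attained on the compact-group orbit of a highest-weight state. So the only type-specific input is the vector $\vec{\xi}$, which is supplied by Thm.~\ref{thm:Components_xi_D}. Substituting $\xi_i=i(2n-i-1)$ for $1\leqslant i\leqslant n-2$ and $\xi_{n-1}=\xi_n=n(n-1)/2$ into $\sum_i\lambda_i\xi_i$ then produces the displayed expression. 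In that expression the two terminal Dynkin coefficients are grouped as $\frac{n(n-1)}{2}(\lambda_{n-1}+\lambda_n)$.

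The second ingredient is the chain of equalities $\Delta^2_\psi(\mathfrak{so}(2n,\mathbb C))=\Delta^2_\psi(\mathfrak{so}(2n))$. I would obtain this from Cor.~\ref{cor:Equivalence_Basis_Representations}, which rests on the basis independence of Thm.~\ref{thm:Basis_Independence_Orthogonal_Representation}. The uncertainty computed in a Chevalley basis of $\mathfrak{so}(2n,\mathbb C)$ coincides with the one computed in the anti-Hermitian basis of the compact form $\mathfrak{so}(2n)$.

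Optimality carries over without change: Thm.~\ref{thm:Lie_Algebraic_Uncertainty_Relations} together with the moment bound Lem.~\ref{lem:Moment_Norm_Bound} is saturated by the highest-weight vector. At that vector $\|M_\psi\|^2=\langle\Lambda,\Lambda\rangle$, because the Cartan expectation values reproduce $\Lambda$ and all $\langle e_\alpha\rangle$ vanish, as in Lem.~\ref{lem:Lie_Algebraic_Uncertainty_Identity_Weight}. Two side remarks complete the argument. First, Thm.~\ref{thm:Simple_Root_Relabeling_Invariance} shows the result does not depend on how the two spinor nodes are labelled. Second, the restriction $n\geqslant4$ is exactly what makes $D_n$ simple; for smaller $n$ one falls back on the low-rank identifications discussed above.

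The only step carrying real content is checking the entries of $C^{-1}$ in Lem.~\ref{lem:Entries_Cinverse_Type_D}, since the row sums in Thm.~\ref{thm:Components_xi_D} are routine. I expect this to be the main, if mild, obstacle. I would verify $CC^{-1}=\1$ block by block, using the standard realization $\omega_i=e_1+\cdots+e_i$ for $i\leqslant n-2$ and $\omega_{n-1},\omega_n=\tfrac12(e_1+\cdots+e_{n-1}\mp e_n)$. Because the system is simply laced, $\langle\omega_i,\omega_j\rangle=C^{-1}_{ij}$ (by the identity $\langle\omega_i,\omega_j\rangle=\frac12C^{-1}_{ij}\langle\alpha_j,\alpha_j\rangle$ from the proof of Thm.~\ref{thm:Explicit_Uncertainty_Bound}), so these inner products should reproduce the claimed values $n/4$ and $(n-2)/4$ in the spinor block.

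As an independent cross-check I would compute $2\langle\omega_i,\delta\rangle$ with $\delta=\sum_k(n-k)e_k$. This gives $\sum_{k\leqslant i}2(n-k)=i(2n-i-1)$ for the chain nodes and $\sum_{k\leqslant n}(n-k)=n(n-1)/2$ for the spinor nodes, confirming $\vec{\xi}$. The adjoint value $\xi_2=4n-6$ matches Tab.~\ref{tab:Adj_Rep}.
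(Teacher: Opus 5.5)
Your proposal is correct and takes the paper's route: the corollary is Cor.~\ref{cor:Simplified_Lie_Algebraic_Uncertainty_Relations} with $\vec{\xi}$ read off from Thm.~\ref{thm:Components_xi_D}, and your two checks are both sound (the orthonormal-coordinate verification of $C^{-1}$, and the independent evaluation $\xi_i=2\langle\omega_i,\delta\rangle$ with $\delta=\sum_k(n-k)e_k$). One side remark should be corrected: $n\geqslant4$ is not what makes $D_n$ simple, since $D_3\simeq A_3$ is simple and the formula at $n=3$ reproduces the $A_3$ bound after relabeling, so the restriction only avoids duplicating lower-rank types.
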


Using the results above, we can now summarize the type-D uncertainty bounds in terms of the Dynkin coefficient vector $\vec{\lambda}$ (see Eq.~\eqref{eq:Dynkin_Vector}).
The resulting expressions are collected in Tab.~\ref{tab:explicit_bounds_D}.

\begin{table}[h]
\centering
\begin{tblr}{
width = \textwidth,
colspec = {c||c|X[c]},
row{1} = {bg=gray!50, font=\bfseries},
row{2} = {bg=gray!20, font=\bfseries},
hlines,
vlines,
cells = {m},
cell{3-13}{1} = {bg=gray!20},
cell{3-13}{2} = {bg=mRed},
cell{3-13}{3} = {bg=mGreen},
}

\SetCell[c=3]{c}
\textbf{Type-D Lie Algebraic Uncertainty Relations \quad  \raisebox{-3.2ex}{\includegraphics[scale=0.5]{DK_D_n.pdf}} }
&
&
\\

Compact Lie algebra
&
Rank
&
Optimal Bound
\\

$\mathfrak{so}(2)$
&
$1$
&
$0$
\\

$\mathfrak{so}(4)$
&
$2$
&
$\lambda_1+\lambda_2$
\\

$\mathfrak{so}(6)$
&
$3$
&
$4\lambda_1+3\lambda_2+3\lambda_3$
\\

$\mathfrak{so}(8)$
&
$4$
&
$6\lambda_1+10\lambda_2+6\lambda_3+6\lambda_4$
\\

$\mathfrak{so}(10)$
&
$5$
&
$8\lambda_1+14\lambda_2+18\lambda_3+10\lambda_4+10\lambda_5$
\\

$\mathfrak{so}(12)$
&
$6$
&
$10\lambda_1+18\lambda_2+24\lambda_3+28\lambda_4+15\lambda_5+15\lambda_6$
\\

$\mathfrak{so}(14)$
&
$7$
&
$12\lambda_1+22\lambda_2+30\lambda_3+36\lambda_4+40\lambda_5+21\lambda_6+21\lambda_7$
\\

$\mathfrak{so}(16)$
&
$8$
&
$\displaystyle
14\lambda_1+26\lambda_2+36\lambda_3+44\lambda_4
+50\lambda_5+54\lambda_6+28\lambda_7+28\lambda_8$
\\

$\mathfrak{so}(18)$
&
$9$
&
$\displaystyle
16\lambda_1+30\lambda_2+42\lambda_3+52\lambda_4
+60\lambda_5+66\lambda_6+70\lambda_7
+36\lambda_8+36\lambda_9$
\\

$\vdots$
&
$\vdots$
&
$\vdots$
\\

$\mathfrak{so}(2n)$
&
$n$
&
$\displaystyle
(2n-2)\lambda_1
+2(2n-3)\lambda_2
+3(2n-4)\lambda_3
+\cdots
+(n-2)(n+1)\lambda_{n-2}
+\frac{n(n-1)}{2}
\bigl(\lambda_{n-1}+\lambda_n\bigr)$
\\

\end{tblr}

\caption{
\textbf{Type-D Lie Algebraic Uncertainty Relations.}
The simple type-D series begins at $n=4$; 
the first three ranks are governed by low-rank degeneracies.
For $n=1$, $\mathfrak{so}(2)$ is abelian, the entry 0 is the minimum variance of a Hermitian generator normalized using a separately chosen positive inner product; the Killing-form normalization does not apply.
For $n=2$, $\mathfrak{so}(4)\simeq\mathfrak{su}(2)\oplus\mathfrak{su}(2)$; 
and for $n=3$, $\mathfrak{so}(6)\simeq\mathfrak{su}(4)$.
For $n\geqslant4$, the compact Lie algebra $\mathfrak{so}(2n)$, with Dynkin coefficient vector $\vec{\lambda}=(\lambda_1,\ldots,\lambda_n)^{\T}$, has the state-independent optimal bound
$(\sum_{i=1}^{n-2} i(2n-i-1)\lambda_i)+n(n-1)(\lambda_{n-1}+\lambda_n)/2$.}
\label{tab:explicit_bounds_D}
\end{table}

We have so far derived the optimal type-D uncertainty relation in its general representation-theoretic form.
We now illustrate this result in concrete representation examples.
Tab.~\ref{tab:Dn_representations} collects representative type-D representations, together with the corresponding optimal uncertainty bounds.

\begin{table}[htbp]
\centering
\begin{tblr}{
width = \textwidth,
colspec = {X[c]||X[c]|X[c]|X[c]},
row{1} = {bg=gray!50,font=\bfseries},
row{2} = {bg=gray!20,font=\bfseries},
hlines,
vlines,
cells = {m},
cell{3-8}{1} = {bg=gray!20},
cell{3-8}{2} = {bg=mRed},
cell{3-8}{3} = {bg=mGreen},
cell{3-8}{4} = {bg=mBlue},
}

\SetCell[c=4]{c}
\textbf{Type-D Lie Algebraic Uncertainty Relations Across Representations}
&
&
&
\\

Representation
&
Highest Weight $\Lambda$
&
Dynkin Vector $\vec{\lambda}^{\,\T}$
&
Optimal Bound
\\

Defining
&
$\omega_1$
&
$(1,0,0,\dots,0)$
&
$2n-2$
\\

Adjoint
&
$\omega_2$
&
$(0,1,0,\dots,0)$
&
$4n-6$
\\

First Half-Spinor
&
$\omega_{n-1}$
&
$(0,\dots,0,1,0)$
&
$\dfrac{n(n-1)}{2}$
\\

Second Half-Spinor
&
$\omega_n$
&
$(0,\dots,0,0,1)$
&
$\dfrac{n(n-1)}{2}$
\\

Traceless Symmetric Square
&
$2\omega_1$
&
$(2,0,0,\dots,0)$
&
$4n-4$
\\

General Exterior-Power
&
$\omega_i$ with $\ 1\leqslant i\leqslant n-2$
&
$(0,\dots,1,\dots,0)$
&
$i(2n-i-1)$
\\

\end{tblr}

\caption{
\textbf{Lie Algebraic Uncertainty Relations Across Representations of $\mathfrak{so}(2n)$.}
Representative standard representations of $\mathfrak{so}(2n)$ with $n\geqslant 4$, specified by their highest weight $\Lambda$ and Dynkin coefficient vector $\vec{\lambda}^{\,\T}$, together with the corresponding optimal lower bound.
}
\label{tab:Dn_representations}
\end{table}

Building on the analyses of the type-A, type-B, and type-C series, we now turn to the final classical family, the type-D series, taking the adjoint representation as a representative example. 
We begin with the generic $D_n$ case, for which the Dynkin diagram and the optimal lower bound for the adjoint representation are summarized below; see Tab.~\ref{tab:Dn_representations}.
\begin{align}\label{eq:adj_D_n}
    \Delta^2_{\mathrm{ad}}( \raisebox{-5.2ex}{\includegraphics[scale=0.8]{DK_D_n.pdf}} )
    \geqslant 4n-6.
\end{align}
For the cases $n=4,5,6$, the optimal Lie algebraic uncertainty relations can be written explicitly as follows
\begin{align}
    \Delta^2_{\mathrm{ad}}( \raisebox{-5.2ex}{\includegraphics[scale=0.8]{DK_D_4.pdf}} )
    &\geqslant 10,\\
    \Delta^2_{\mathrm{ad}}(\raisebox{-5.2ex}{\includegraphics[scale=0.8]{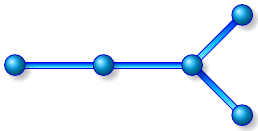}})
    &\geqslant 14,\\
    \Delta^2_{\mathrm{ad}}(\raisebox{-5.2ex}{\includegraphics[scale=0.8]{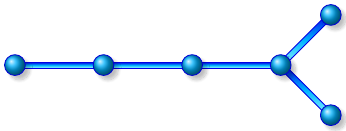}})
    &\geqslant 18.
\end{align}

Turning to the defining representation of type-D Lie algebras, we obtain the following rank-$n$ Lie algebraic uncertainty relation (see Tab.~\ref{tab:Dn_representations}):
\begin{align}\label{eq:def_D_n}
    \Delta^2_{\mathrm{def}}( \raisebox{-5.2ex}{\includegraphics[scale=0.8]{DK_D_n.pdf}} )
    \geqslant 2(n-1).
\end{align}
The low-rank cases $n=4,5,6$ make the structure of the optimal bounds explicit:
\begin{align}
    \Delta^2_{\mathrm{def}}( \raisebox{-5.2ex}{\includegraphics[scale=0.8]{DK_D_4.pdf}} )
    &\geqslant6,\\
    \Delta^2_{\mathrm{def}}( \raisebox{-5.2ex}{\includegraphics[scale=0.8]{DK_D_5.pdf}} )
    &\geqslant8,\\
    \Delta^2_{\mathrm{def}}( \raisebox{-5.2ex}{\includegraphics[scale=0.8]{DK_D_6.pdf}} )
    &\geqslant10.
\end{align}

Having analyzed the adjoint representation for each classical Lie algebra family, from type-A through type-D, we now collect the resulting optimal uncertainty bounds into a unified statement.

\begin{mythm}{Classical Lie Algebraic Uncertainty Relations for Adjoint Representations}{Classical_ad}
For rank $n\geqslant 4$, the optimal uncertainty bounds in the adjoint representation reveal a unified structure across the classical Lie algebra families, from type A through type D, as captured by the following correspondences.
\begin{align}
\underbrace{\min_{\psi}2\Delta^2_{\psi}(\mathfrak{su}(n+1))}_{\text{Type $\mathrm{A}_n$}}=
\underbrace{\min_{\psi}\Delta^2_{\psi}(\mathfrak{so}(2n+1))}_{\text{Type $\mathrm{B}_n$}}+4=
\underbrace{\min_{\psi}2\Delta^2_{\psi}(\mathfrak{sp}(n))}_{\text{Type $\mathrm{C}_n$}}=
\underbrace{\min_{\psi}\Delta^2_{\psi}(\mathfrak{so}(2n))}_{\text{Type $\mathrm{D}_n$}}+6=
4n.
\end{align}
\end{mythm}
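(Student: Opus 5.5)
The plan is to evaluate the optimal adjoint bound separately for each of the four classical families, and then check that the four combinations in the statement all equal $4n$.

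Since Cor.~\ref{cor:Simplified_Lie_Algebraic_Uncertainty_Relations} asserts that $\vec{\lambda}^{\,\T}\vec{\xi}$ is attained by a highest weight state, each $\min_\psi\Delta^2_\psi$ equals $\vec{\lambda}_{\mathrm{ad}}^{\,\T}\vec{\xi}$. Here $\vec{\lambda}_{\mathrm{ad}}$ is the Dynkin vector of the adjoint representation, whose highest weight is the highest root $\theta$. So the proof reduces to two tasks: identify $\vec{\lambda}_{\mathrm{ad}}$ in the paper's node ordering, and contract it with the $\vec{\xi}$ components already computed.

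\textbf{Step 1: the adjoint Dynkin labels.} I will compute $\lambda_i=2\langle\theta,\alpha_i\rangle/\langle\alpha_i,\alpha_i\rangle$ from the standard $\varepsilon$-basis realizations. In $A_n$, $\theta=\varepsilon_1-\varepsilon_{n+1}$. In $B_n$ and $D_n$, $\theta=\varepsilon_1+\varepsilon_2$. In $C_n$, $\theta=2\varepsilon_1$. In each case the simple roots are ordered as in the corresponding Dynkin diagram, with the short root last for $B_n$ and the long root last for $C_n$. This gives:
\begin{align}
A_n:\ \omega_1+\omega_n,\qquad B_n:\ \omega_2,\qquad C_n:\ 2\omega_1,\qquad D_n:\ \omega_2.
\end{align}
These agree with Tabs.~\ref{tab:An_representations}, \ref{tab:Bn_representations}, \ref{tab:Cn_representations} and~\ref{tab:Dn_representations}. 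The hypothesis $n\geqslant4$ guarantees that $\omega_2$ is an interior chain node for $B_n$ and $D_n$, so the generic formulas for $\xi_2$ apply.

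\textbf{Step 2: contraction with $\vec{\xi}$.} I will use the component theorems for the four types:
\begin{align}
A_n:&\ \xi_1+\xi_n=n+n=2n,\qquad B_n:\ \xi_2=2(2n-2)=4n-4,\\
C_n:&\ 2\xi_1=2\cdot\tfrac{2n}{2}=2n,\qquad D_n:\ \xi_2=2(2n-3)=4n-6.
\end{align}
Then $2\cdot 2n=(4n-4)+4=2\cdot 2n=(4n-6)+6=4n$, which is exactly the chain of equalities in the statement.

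\textbf{Cross-check and main obstacle.} As an independent check, I will use that for the highest root, with long roots normalized to squared length $2$, one has $2\langle\theta,\delta\rangle=2(h^\vee-1)$. The dual Coxeter numbers in Tab.~\ref{tab:Dual_Coxeter_Numbers} then give $2n$, $4n-4$, $2n$ and $4n-6$ for $A_n$, $B_n$, $C_n$ and $D_n$, matching Step 2. The main obstacle is Step 1, specifically the non-simply-laced cases. There the adjoint weight lands on different nodes: on $\omega_2$ for $B_n$ (a long root) but on $2\omega_1$ for $C_n$ (a short node, with coefficient $2$). An ordering slip here would silently change the answer, so I will verify these labels through the explicit $\varepsilon$-realizations rather than read them off the diagrams.
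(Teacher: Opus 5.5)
Your proposal is correct and follows essentially the paper's argument: the paper collects the adjoint bounds $\vec{\lambda}_{\mathrm{ad}}^{\,\T}\vec{\xi}=2n,\ 4(n-1),\ 2n,\ 4n-6$ for $A_n,B_n,C_n,D_n$ (with adjoint highest weights $\omega_1+\omega_n$, $\omega_2$, $2\omega_1$, $\omega_2$, each bound attained by a highest weight state) and observes that the four stated combinations all equal $4n$. Your dual Coxeter cross-check $2\langle\theta,\delta\rangle=2(h^\vee-1)$ is a useful independent confirmation that the paper does not spell out.
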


In the diagrammatic language of root systems, encoded by Dynkin diagrams, using the explicit bounds derived above Thm.~\ref{thm:Classical_ad}, we obtain the following low-rank identities in Dynkin diagram notation.
\begin{align}
&\underbrace{\min_{\psi}\Delta^2_{\mathrm{ad}}( \raisebox{-0.8ex}{\includegraphics[scale=0.8]{DK_A_3.pdf}} )}_{\text{Type $\mathrm{A}_3$}}=
\underbrace{\min_{\psi}\Delta^2_{\mathrm{ad}}( \raisebox{-1.4ex}{\includegraphics[scale=0.8]{DK_C_3.pdf}} )}_{\text{Type $\mathrm{C}_3$}}=
\underbrace{\min_{\psi}\Delta^2_{\mathrm{def}}( \raisebox{-5.2ex}{\includegraphics[scale=0.8]{DK_D_4.pdf}} )}_{\text{Type $\mathrm{D}_4$}}=6,\\
&\underbrace{\min_{\psi}\Delta^2_{\mathrm{ad}}( \raisebox{-0.8ex}{\includegraphics[scale=0.8]{DK_A_4.pdf}} )}_{\text{Type $\mathrm{A}_4$}}=
\underbrace{\min_{\psi}\Delta^2_{\mathrm{ad}}( \raisebox{-1.4ex}{\includegraphics[scale=0.8]{DK_B_3.pdf}} )}_{\text{Type $\mathrm{B}_3$}}=
\underbrace{\min_{\psi}\Delta^2_{\mathrm{ad}}( \raisebox{-1.4ex}{\includegraphics[scale=0.8]{DK_C_4.pdf}} )}_{\text{Type $\mathrm{C}_4$}}=8,\\
&\underbrace{\min_{\psi}\Delta^2_{\mathrm{ad}}( \raisebox{-0.8ex}{\includegraphics[scale=0.8]{DK_A_5.pdf}} )}_{\text{Type $\mathrm{A}_5$}}=
\underbrace{\min_{\psi}\Delta^2_{\mathrm{ad}}( \raisebox{-1.4ex}{\includegraphics[scale=0.8]{DK_C_5.pdf}} )}_{\text{Type $\mathrm{C}_5$}}=10.
\end{align}

Across the classical families of defining representations, types A-D (see Tab.~\ref{tab:Simple_Lie_Algebra}), the optimal quantum uncertainty bounds obey the following correspondence:

\begin{mythm}{Classical Lie Algebraic Uncertainty Relations for Defining Representations}{Classical_def}
For rank $n\geqslant 4$, the optimal Lie algebraic uncertainty bounds in the defining representation display a unified structure across the classical families of Lie algebras. 
This structure is captured by the following correspondences.
\begin{align}
\underbrace{\min_{\psi}2\Delta^2_{\psi}(\mathfrak{su}(n+1))}_{\text{Type $\mathrm{A}_n$}}=
\underbrace{\min_{\psi}\Delta^2_{\psi}(\mathfrak{so}(2n+1))}_{\text{Type $\mathrm{B}_n$}}+1=
\underbrace{\min_{\psi}2\Delta^2_{\psi}(\mathfrak{sp}(n))}_{\text{Type $\mathrm{C}_n$}}=
\underbrace{\min_{\psi}\Delta^2_{\psi}(\mathfrak{so}(2n))}_{\text{Type $\mathrm{D}_n$}}+2=
2n.
\end{align}
\end{mythm}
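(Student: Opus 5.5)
The plan is to reduce the four equalities to a single evaluation of the first component $\xi_1$ of $\vec{\xi}$ in each classical family, and then use the optimality part of Cor.~\ref{cor:Simplified_Lie_Algebraic_Uncertainty_Relations}. That corollary says the minimum over $\psi\in V(\Lambda)$ of $\Delta^2_\psi(\g)$ equals $\vec{\lambda}^{\,\T}\vec{\xi}$, attained by a highest-weight state. Each $\min_\psi$ in the statement is therefore an explicit number, and the theorem follows from arithmetic once each is identified.

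\textbf{Step 1: identify the Dynkin vector of the defining representation.} With the standard orderings used in Subsecs.~\ref{subsec:A_type}--\ref{subsec:D_type}, I claim the defining representation has highest weight $\omega_1$, i.e.\ $\vec{\lambda}=(1,0,\ldots,0)^{\T}$, in all four families.
\begin{itemize}
\item For $A_n$ this was verified explicitly for $\mathfrak{sl}(3,\mathbb C)$ in Subsec.~\ref{subsubsec:exceptional_Formula_Bound}. The same computation with $\Lambda=f_1$ and $h_{\alpha_j}=E_{jj}-E_{j+1,j+1}$ gives $\Lambda(h_{\alpha_j})=\delta_{1j}$.
\item For $B_n$, $C_n$ and $D_n$, I would use the standard $\varepsilon$-realization, with simple roots $\alpha_i=\varepsilon_i-\varepsilon_{i+1}$ for $i<n$ and last root(s) $\varepsilon_n$, $2\varepsilon_n$, or $\varepsilon_{n-1}\pm\varepsilon_n$ respectively. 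The defining representation has weights $\pm\varepsilon_i$ (together with $0$ for $B_n$), so its highest weight is $\varepsilon_1$. One checks $2\langle\varepsilon_1,\alpha_j\rangle/\langle\alpha_j,\alpha_j\rangle=\delta_{1j}$ against the Cartan matrices written in those subsections.
\end{itemize}
This agrees with the ``Defining'' rows of Tabs.~\ref{tab:An_representations}, \ref{tab:Bn_representations}, \ref{tab:Cn_representations} and \ref{tab:Dn_representations}.

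\textbf{Step 2: read off the optimal bounds.} With $\vec{\lambda}=(1,0,\ldots,0)^{\T}$ each bound equals $\xi_1$, which I take from Thms.~\ref{thm:Root_Length_Vector_A}, \ref{thm:Root_Length_Vector_B}, \ref{thm:Components_xi_C} and \ref{thm:Components_xi_D} at $i=1$:
\begin{itemize}
\item $A_n$: $\xi_1=1\cdot n=n$.
\item $B_n$: $\xi_1=2n-1$.
\item $C_n$: $\xi_1=\tfrac{1\cdot 2n}{2}=n$.
\item $D_n$: $\xi_1=2n-2$.
\end{itemize}
The hypothesis $n\geqslant4$ ensures $i=1$ lies in the interior range of each formula: $1\leqslant n-1$ for $B_n$ and $C_n$, and $1\leqslant n-2$ for $D_n$. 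It also ensures $D_n$ is the genuine simple family rather than one of the low-rank degeneracies.

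\textbf{Step 3: combine.} Then $2n=2n$, $(2n-1)+1=2n$, $2\cdot n=2n$ and $(2n-2)+2=2n$, which gives the chain of equalities.

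\textbf{Main obstacle.} The step I expect to need the most care is Step 1 for the non-simply-laced cases. I must confirm that, under the normalization (long roots of squared length $2$) and node ordering fixed in Subsecs.~\ref{subsec:B_type} and \ref{subsec:C_type}, the vector representation is $\omega_1$ rather than $\omega_n$ or a rescaling. I would settle this by checking $\varepsilon_1$ directly against the stated Cartan matrices: for $B_n$ the short root $\varepsilon_n$ sits at node $n$, and for $C_n$ the long root $2\varepsilon_n$ sits at node $n$. In both cases $\varepsilon_1$ pairs nontrivially only with $\alpha_1^\vee$. Everything else is direct substitution into results already established.
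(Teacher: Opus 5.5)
Your proposal is correct and is essentially the paper's own argument: the theorem is stated there as a summary of the defining-representation bounds $n$, $2n-1$, $n$ and $2n-2$, which are the values $\xi_1$ for $\vec{\lambda}=(1,0,\ldots,0)^{\T}$ recorded in Tabs.~\ref{tab:An_representations}, \ref{tab:Bn_representations}, \ref{tab:Cn_representations} and \ref{tab:Dn_representations}, combined by the same arithmetic. Your explicit $\varepsilon$-basis check that the vector representation has highest weight $\omega_1$ under the paper's node orderings supplies what the paper only asserts in those tables, and the rescaling worry you flag does not arise, since $2\langle\mu,\alpha_j\rangle/\langle\alpha_j,\alpha_j\rangle$ is unchanged when the form is rescaled.
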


Using Dynkin diagrams to encode the underlying root systems, we obtain the following identities in low rank.
\begin{align}
&\underbrace{\min_{\psi}\Delta^2_{\mathrm{def}}( \raisebox{-0.6ex}{\includegraphics[scale=0.8]{DK_B_1.pdf}} )}_{\text{Type $\mathrm{A}_1$}}=1,\\
&\underbrace{\min_{\psi}\Delta^2_{\mathrm{def}}( \raisebox{-0.8ex}{\includegraphics[scale=0.8]{DK_A_2.pdf}} )}_{\text{Type $\mathrm{A}_2$}}=
\underbrace{\min_{\psi}\Delta^2_{\mathrm{def}}( \raisebox{-1.4ex}{\includegraphics[scale=0.8]{DK_C_2.pdf}} )}_{\text{Type $\mathrm{C}_2$}}=2,\\
&\underbrace{\min_{\psi}\Delta^2_{\mathrm{def}}( \raisebox{-0.8ex}{\includegraphics[scale=0.8]{DK_A_3.pdf}} )}_{\text{Type $\mathrm{A}_3$}}=
\underbrace{\min_{\psi}\Delta^2_{\mathrm{def}}( \raisebox{-1.4ex}{\includegraphics[scale=0.8]{DK_B_2.pdf}} )}_{\text{Type $\mathrm{B}_2$}}=
\underbrace{\min_{\psi}\Delta^2_{\mathrm{def}}( \raisebox{-1.4ex}{\includegraphics[scale=0.8]{DK_C_3.pdf}} )}_{\text{Type $\mathrm{C}_3$}}=3,\\
&\underbrace{\min_{\psi}\Delta^2_{\mathrm{def}}( \raisebox{-0.8ex}{\includegraphics[scale=0.8]{DK_A_4.pdf}} )}_{\text{Type $\mathrm{A}_4$}}=
\underbrace{\min_{\psi}\Delta^2_{\mathrm{def}}( \raisebox{-1.4ex}{\includegraphics[scale=0.8]{DK_C_4.pdf}} )}_{\text{Type $\mathrm{C}_4$}}=4,\\
&\underbrace{\min_{\psi}\Delta^2_{\mathrm{def}}( \raisebox{-0.8ex}{\includegraphics[scale=0.8]{DK_A_5.pdf}} )}_{\text{Type $\mathrm{A}_5$}}=
\underbrace{\min_{\psi}\Delta^2_{\mathrm{def}}( \raisebox{-1.4ex}{\includegraphics[scale=0.8]{DK_C_5.pdf}} )}_{\text{Type $\mathrm{C}_5$}}=5.
\end{align}

Taken together, these results reveal further correspondences between different representations and classical Lie algebra families. 
For example, one obtains
\begin{align}
&\underbrace{\min_{\psi}\Delta^2_{\mathrm{ad}}( \raisebox{-0.6ex}{\includegraphics[scale=0.8]{DK_B_1.pdf}} )}_{\text{Type $\mathrm{A}_1$}}=
\underbrace{\min_{\psi}\Delta^2_{\mathrm{def}}( \raisebox{-0.8ex}{\includegraphics[scale=0.8]{DK_A_2.pdf}} )}_{\text{Type $\mathrm{A}_2$}}=
\underbrace{\min_{\psi}\Delta^2_{\mathrm{def}}( \raisebox{-1.4ex}{\includegraphics[scale=0.8]{DK_C_2.pdf}} )}_{\text{Type $\mathrm{C}_2$}}=2\\
&\underbrace{\min_{\psi}\Delta^2_{\mathrm{ad}}( \raisebox{-0.8ex}{\includegraphics[scale=0.8]{DK_A_2.pdf}} )}_{\text{Type $\mathrm{A}_2$}}=
\underbrace{\min_{\psi}\Delta^2_{\mathrm{def}}( \raisebox{-0.8ex}{\includegraphics[scale=0.8]{DK_A_4.pdf}} )}_{\text{Type $\mathrm{A}_4$}}=
\underbrace{\min_{\psi}\Delta^2_{\mathrm{def}}( \raisebox{-1.4ex}{\includegraphics[scale=0.8]{DK_C_4.pdf}} )}_{\text{Type $\mathrm{C}_4$}}=4,\\
&\underbrace{\min_{\psi}\Delta^2_{\mathrm{ad}}( \raisebox{-0.8ex}{\includegraphics[scale=0.8]{DK_A_3.pdf}} )}_{\text{Type $\mathrm{A}_3$}}=
\underbrace{\min_{\psi}\Delta^2_{\mathrm{ad}}( \raisebox{-1.4ex}{\includegraphics[scale=0.8]{DK_C_3.pdf}} )}_{\text{Type $\mathrm{C}_3$}}=
\underbrace{\min_{\psi}\Delta^2_{\mathrm{def}}( \raisebox{-5.2ex}{\includegraphics[scale=0.8]{DK_D_4.pdf}} )}_{\text{Type $\mathrm{D}_4$}}=6\\
&\underbrace{\min_{\psi}\Delta^2_{\mathrm{ad}}( \raisebox{-0.8ex}{\includegraphics[scale=0.8]{DK_A_4.pdf}} )}_{\text{Type $\mathrm{A}_4$}}=
\underbrace{\min_{\psi}\Delta^2_{\mathrm{ad}}( \raisebox{-1.4ex}{\includegraphics[scale=0.8]{DK_B_3.pdf}} )}_{\text{Type $\mathrm{B}_3$}}=
\underbrace{\min_{\psi}\Delta^2_{\mathrm{ad}}( \raisebox{-1.4ex}{\includegraphics[scale=0.8]{DK_C_4.pdf}} )}_{\text{Type $\mathrm{C}_4$}}\notag\\=
&\underbrace{\min_{\psi}\Delta^2_{\mathrm{def}}( \raisebox{-5.2ex}{\includegraphics[scale=0.8]{DK_D_5.pdf}} )}_{\text{Type $\mathrm{D}_5$}}=8.
\end{align}

The examples displayed here represent only a small part of what follows from the general framework. 
They are intended to serve as a guide for exploring further correspondences among classical Lie algebras, exceptional Lie algebras, and other representations.


\subsection{Explicit Bounds for Exceptional Lie Algebras}
\label{subsec:exceptional}

We finally turn to the exceptional Lie algebras.
In contrast to the classical families, these algebras occur as isolated types rather than members of infinite rank families. 
We evaluate their bounds case by case using their Cartan matrices
and simple-root lengths. 
We begin with the smallest exceptional algebra $G_2$, whose Dynkin diagram is
\begin{equation}
\label{eq:Dynkin_G2}
 \raisebox{-2.6ex}{\includegraphics[scale=1]{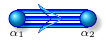}} \, .
\end{equation}
Here, we adopt the convention that $\alpha_1$ denotes the long simple root and $\alpha_2$ the short simple root. 
Accordingly, the multiple bond is always directed from the long root to the short root, with the short root placed at the terminal node of the Dynkin diagram in Eq.~\eqref{eq:Dynkin_G2}. 
Under the normalization, $\langle\alpha_1,\alpha_1\rangle=2$ and $\langle\alpha_2,\alpha_2\rangle=2/3$, the simple root inner product is
\begin{align}
    \langle\alpha_1,\alpha_2\rangle=-1.
\end{align}
With this convention, the Cartan matrix of $G_2$ takes the form
\begin{align}
C=
\begin{pmatrix}
 2 & -3 \\
-1 &  2
\end{pmatrix},
\end{align}
The corresponding inverse is given by
\begin{align}
C^{-1} = 
\begin{pmatrix}
2 & 3 \\
1 & 2
\end{pmatrix}.
\end{align}
For $G_2$, the vector of squared simple-root lengths is 
\begin{align}
    \vec{\alpha}=
    \begin{pmatrix}
        2\\
        \frac{2}{3}
    \end{pmatrix}.
\end{align}
Combining the inverse Cartan matrix and squared root lengths, the vector $\vec{\xi}$ is therefore
\begin{align}
    \vec{\xi}=
    C^{-1}\cdot\vec{\alpha}=
    \begin{pmatrix}
    2 & 3 \\
    1 & 2
    \end{pmatrix}
    \begin{pmatrix}
        2\\
        \frac{2}{3}
    \end{pmatrix}
    =
    \begin{pmatrix}
        6\\
        \frac{10}{3}
    \end{pmatrix}.
\end{align}

The optimal lower bound $\langle\vec{\lambda},\vec{\xi}\rangle$ (see Eq.~\eqref{eq:Lie_Algebraic_UR_Simplified}) is then obtained by pairing the Dynkin coefficient vector $\vec{\lambda}$ with $\vec{\xi}$. 
\begin{align}
    \langle\vec{\lambda},\vec{\xi}\rangle=
    \begin{pmatrix} 
    \lambda_1 & \lambda_2 
    \end{pmatrix}
    \begin{pmatrix}
        6\\
        \frac{10}{3}
    \end{pmatrix}
    =
    6\lambda_1+\frac{10}{3}\lambda_2.
\end{align}
This yields the uncertainty relation for $G_2$.

\begin{mycor}{Type-$\mathrm{G}_2$ Lie Algebraic Uncertainty Relations}{Type_G_Lie_UR}
For the exceptional Lie algebra $G_2$ (see Tab.~\ref{tab:Simple_Lie_Algebra}), the optimal Lie algebraic uncertainty relation (see Def.~\ref{def:Lie_Algebraic_Uncertainty_Relations}) takes the explicit form
\begin{align}
   \Delta^2_{\psi}( \raisebox{-1.4ex}{\includegraphics[scale=1]{DK_G_2.pdf}} )
   \coloneqq
   \Delta^2_{\psi}(\mathfrak{g}_2)
   \geqslant
   6\lambda_1+\frac{10}{3}\lambda_2\,.
\end{align}
The bound is state-independent and is fixed solely by the representation, encoded by the Dynkin coefficient vector $\vec{\lambda}$ (see Eq.~\eqref{eq:Dynkin_Vector}).
\end{mycor}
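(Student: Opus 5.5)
The statement to prove is the $G_2$ corollary: $\Delta^2_\psi(\mathfrak g_2)\geqslant 6\lambda_1+\tfrac{10}{3}\lambda_2$, optimal and state-independent. The plan is to specialize Cor.~\ref{cor:Simplified_Lie_Algebraic_Uncertainty_Relations}, which already gives $\Delta^2_\psi(\g)\geqslant\vec{\lambda}^{\,\T}\vec{\xi}$ with $\vec{\xi}=C^{-1}\vec{\alpha}$ for any complex simple $\g$. That bound is attained by a highest-weight vector, which follows from Thm.~\ref{thm:Lie_Algebraic_Uncertainty_Relations} together with Lem.~\ref{lem:Quantum_Uncertainty_Moment_Element}. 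The only remaining task is to compute $\vec{\xi}$ correctly for $G_2$ in the labeling fixed by Eq.~\eqref{eq:Dynkin_G2}.

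\textbf{Step 1: fix the root data.} Take $\alpha_1$ long with $\langle\alpha_1,\alpha_1\rangle=2$. This is the normalization produced by $\kappa$ in Eq.~\eqref{eq:normal_Killing}, and it uses $h^\vee=4$ from Tab.~\ref{tab:Dual_Coxeter_Numbers}. The $G_2$ length ratio is $3$, so the short root has $\langle\alpha_2,\alpha_2\rangle=2/3$. The Cartan integers must satisfy $C_{12}C_{21}=3$, with the larger entry on the long-root row, so $\langle\alpha_1,\alpha_2\rangle=-1$. Def.~\ref{def:Cartan_Matrix} then gives
\[
C=\begin{pmatrix}2&-3\\-1&2\end{pmatrix},
\qquad
C^{-1}=\begin{pmatrix}2&3\\1&2\end{pmatrix},
\]
since $\det C=1$.

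\textbf{Step 2: compute $\vec{\xi}$.} With $\vec{\alpha}=(2,2/3)^{\T}$,
\[
\vec{\xi}=C^{-1}\vec{\alpha}=\bigl(4+2,\;2+\tfrac43\bigr)^{\T}=\bigl(6,\tfrac{10}{3}\bigr)^{\T}.
\]
Pairing with $\vec{\lambda}$ gives the stated bound. By Thm.~\ref{thm:Simple_Root_Relabeling_Invariance}, the result does not depend on the labeling convention, provided $\vec{\lambda}$ is read in the same convention.

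\textbf{Step 3: cross-check.} Compute $2\langle\Lambda,\delta\rangle$ directly. By Eq.~\eqref{eq:fundamental_inner}, $\langle\omega_i,\omega_j\rangle=\tfrac12 C^{-1}_{ij}\langle\alpha_j,\alpha_j\rangle$, and by Eq.~\eqref{eq:delta_sum}, $\delta=\omega_1+\omega_2$. This reproduces $\xi_i=2\sum_j\langle\omega_i,\omega_j\rangle$.

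As a sanity check, the adjoint representation has $\Lambda=\omega_1$, the highest root, which is long. The bound is then $6$. This equals $2\langle\theta,\delta\rangle=2(h^\vee-1)$ for a long highest root with $\langle\theta,\theta\rangle=2$, which is the value in Tab.~\ref{tab:Adj_Rep}.

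\textbf{Main obstacle.} The only delicate point is the normalization convention. One must confirm that the normalized form $\kappa$ gives long roots squared length exactly $2$, so the short root has length $2/3$ rather than, say, $2$. One must also make sure the Cartan matrix orientation, $C_{ij}=2\langle\alpha_i,\alpha_j\rangle/\langle\alpha_j,\alpha_j\rangle$, is consistent with which node is long. I would check both through the Casimir/dual Coxeter identity $\langle\theta,\theta+2\delta\rangle=2h^\vee$, since the adjoint Casimir equals $1$ under the trace normalization of Eq.~\eqref{eq:normal_Killing}.

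Optimality and state-independence are inherited verbatim from the general theorem.
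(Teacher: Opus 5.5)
Your proposal is correct and follows the paper's own proof. You fix $\alpha_1$ long with $\langle\alpha_1,\alpha_1\rangle=2$ and $\alpha_2$ short with $\langle\alpha_2,\alpha_2\rangle=2/3$, write the same Cartan matrix and inverse, compute $\vec{\xi}=C^{-1}\vec{\alpha}=(6,10/3)^{\T}$, and pair it with $\vec{\lambda}$, inheriting optimality from the general corollary.

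One small slip in your side remark: the adjoint Casimir equals $1$ only for the unnormalized Killing form. With respect to $\kappa$ it equals $2h^\vee=8$, which is exactly why $\langle\theta,\theta+2\delta\rangle=2h^\vee$ is the right consistency check, giving $2\langle\theta,\delta\rangle=6$ as you found.
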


We next turn to the exceptional Lie algebra $F_4$, whose Dynkin diagram is
\begin{equation}
\label{eq:Dynkin_F4}
\raisebox{-3.2ex}{\includegraphics[scale=1]{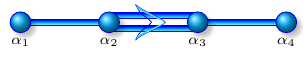}} \, .
\end{equation}
Here, we choose the ordering in which the two left nodes, $\alpha_1$ and $\alpha_2$, correspond to long simple roots, while the two right nodes, $\alpha_3$ and $\alpha_4$, correspond to short simple roots. 
This fixes the orientation of the multiple bond from the long roots toward the short roots, consistent with the convention adopted throughout this work. 
Under the normalization,
\begin{align}
    \langle\alpha_1,\alpha_1\rangle=
    \langle\alpha_2,\alpha_2\rangle=2,
\end{align}
and
\begin{align}
    \langle\alpha_3,\alpha_3\rangle=
    \langle\alpha_4,\alpha_4\rangle=1.
\end{align}
The simple root inner products are
\begin{align}
    \langle\alpha_1,\alpha_2\rangle&=-1,\\
    \langle\alpha_2,\alpha_3\rangle&=-1,\\
    \langle\alpha_3,\alpha_4\rangle&=-\frac{1}{2},
\end{align}
with all other pairs of distinct simple roots being orthogonal.
With this convention, the Cartan matrix of $F_4$ takes the form
\begin{align}
C=
\begin{pmatrix}
2 & -1 & 0 & 0 \\
-1 & 2 & -2 & 0 \\
0 & -1 & 2 & -1 \\
0 & 0 & -1 & 2
\end{pmatrix}.
\end{align}
The inverse Cartan matrix is therefore
\begin{align}
C^{-1}=
\begin{pmatrix}
2 & 3 & 4 & 2 \\
3 & 6 & 8 & 4 \\
2 & 4 & 6 & 3 \\
1 & 2 & 3 & 2
\end{pmatrix}.
\end{align}
In the same convention, the vector of squared simple-root lengths for $F_4$ is
\begin{align}
\vec{\alpha}=
\begin{pmatrix}
2 \\
2 \\
1 \\
1 
\end{pmatrix}.
\end{align}

The components of $\vec{\xi}$ follow directly by combining the entrywise expression for $C^{-1}$ with the root length vector $\vec{\alpha}$.
\begin{align}
    \vec{\xi}=
    C^{-1}\cdot\vec{\alpha}=
    \begin{pmatrix}
    2 & 3 & 4 & 2 \\
    3 & 6 & 8 & 4 \\
    2 & 4 & 6 & 3 \\
    1 & 2 & 3 & 2
    \end{pmatrix}
    \begin{pmatrix}
    2 \\
    2 \\
    1 \\
    1 
    \end{pmatrix}
    =
    \begin{pmatrix} 
    16 \\ 
    30 \\ 
    21 \\ 
    11 
    \end{pmatrix}.
\end{align}

The uncertainty bound is therefore fixed by a simple pairing between highest weight and root system geometry, yielding $\langle\vec{\lambda},\vec{\xi}\rangle$ as in Eq.~\eqref{eq:Lie_Algebraic_UR_Simplified}.
\begin{align}
    \langle\vec{\lambda},\vec{\xi}\rangle=
    \begin{pmatrix} 
    \lambda_1 & \lambda_2 & \lambda_3 & \lambda_4
    \end{pmatrix}
    \begin{pmatrix} 
    16 \\ 
    30 \\ 
    21 \\ 
    11 
    \end{pmatrix}
    =
    16\lambda_1+30\lambda_2+21\lambda_3+11\lambda_4.
\end{align}
This leads to the uncertainty relation for $F_4$.

\begin{mycor}{Type-$\mathrm{F}_4$ Lie Algebraic Uncertainty Relations}{Type_F_Lie_UR}
For the exceptional Lie algebra $F_4$ (see Tab.~\ref{tab:Simple_Lie_Algebra}), the optimal Lie algebraic uncertainty relation (see Def.~\ref{def:Lie_Algebraic_Uncertainty_Relations}) takes the explicit form
\begin{align}
   \Delta^2_{\psi}( \raisebox{-1.2ex}{\includegraphics[scale=0.75]{DK_F_4.pdf}})
   \coloneqq
   \Delta^2_{\psi}(\mathfrak{f}_4)
   \geqslant
   16\lambda_1+30\lambda_2+21\lambda_3+11\lambda_4\,.
\end{align}
The bound is state-independent and is fixed solely by the representation, encoded by the Dynkin coefficient vector $\vec{\lambda}$ (see Eq.~\eqref{eq:Dynkin_Vector}).
\end{mycor}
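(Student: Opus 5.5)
This corollary is a direct specialization of Cor.~\ref{cor:Simplified_Lie_Algebraic_Uncertainty_Relations}, so the plan is to reduce it to computing $\vec{\xi}=C^{-1}\vec{\alpha}$ for $F_4$ with the stated ordering and normalization. Optimality and attainment by the highest-weight state are inherited from Thm.~\ref{thm:Explicit_Uncertainty_Bound} and need no separate argument. The only genuine work is to confirm the root data and carry out the arithmetic.

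\textbf{Step 1: fix the root data.} I would first check that the Cartan matrix displayed above is consistent with the convention $C_{ij}=2\langle\alpha_i,\alpha_j\rangle/\langle\alpha_j,\alpha_j\rangle$ of Def.~\ref{def:Cartan_Matrix}, using the listed inner products. For example,
\begin{align}
C_{23}=\frac{2\langle\alpha_2,\alpha_3\rangle}{\langle\alpha_3,\alpha_3\rangle}=\frac{2(-1)}{1}=-2,
\qquad
C_{32}=\frac{2(-1)}{2}=-1,
\qquad
C_{34}=C_{43}=\frac{2(-1/2)}{1}=-1 .
\end{align}
These agree with the displayed matrix and with the arrow pointing from the long roots $\alpha_1,\alpha_2$ to the short roots $\alpha_3,\alpha_4$. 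Normalizing long roots to squared length $2$ is the convention fixed in Subsec.~\ref{subsec:Killing_form}, so $\vec{\alpha}=(2,2,1,1)^{\T}$.

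\textbf{Step 2: confirm the inverse and compute $\vec{\xi}$.} I would verify the stated $C^{-1}$ by checking $CC^{-1}=\1_4$ row by row, which involves only small integer arithmetic. Multiplying $C^{-1}$ by $\vec{\alpha}$ then gives
\begin{align}
\xi_1&=4+6+4+2=16, & \xi_2&=6+12+8+4=30,\\
\xi_3&=4+8+6+3=21, & \xi_4&=2+4+3+2=11 .
\end{align}
As a consistency check I would use Thm.~\ref{thm:Explicit_Uncertainty_Bound}, which gives $\xi_j=2\langle\omega_j,\delta\rangle$. For the adjoint representation $\Lambda=\omega_1$, the bound $16$ should equal $2\langle\theta,\delta\rangle=2(h^\vee-1)=2(9-1)=16$, using $h^\vee=9$ from Tab.~\ref{tab:Dual_Coxeter_Numbers}. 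It does, so the normalization is correct.

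\textbf{Step 3: assemble and note the main risk.} Substituting into Eq.~\eqref{eq:Lie_Algebraic_UR_Simplified} gives $\vec{\lambda}^{\T}\vec{\xi}=16\lambda_1+30\lambda_2+21\lambda_3+11\lambda_4$. Thm.~\ref{thm:Simple_Root_Relabeling_Invariance} guarantees that this value does not depend on the node labelling. The main obstacle is a bookkeeping one: the paper's $C$ is the transpose of Humphreys' convention, so one must use the matching $C^{-1}$ (not its transpose) against $\vec{\alpha}$. Using the transposed inverse would give, for instance, $\xi_1=2\cdot2+3\cdot2+2+1=13$, which would be wrong. The $h^\vee$ check in Step~2 is what catches such an error, so I would always run it.
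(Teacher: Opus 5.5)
Your proposal is correct and follows the paper's own argument: check $C$ against $C_{ij}=2\langle\alpha_i,\alpha_j\rangle/\langle\alpha_j,\alpha_j\rangle$, take $\vec{\alpha}=(2,2,1,1)^{\T}$, verify $C^{-1}$, and compute $\vec{\xi}=C^{-1}\vec{\alpha}=(16,30,21,11)^{\T}$, with optimality inherited from the general theorem. Your adjoint check $2(h^\vee-1)=16$ is a useful extra that the paper does not include.

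One side remark should be corrected, though it does not affect your computation. The convention $C_{ij}=2\langle\alpha_i,\alpha_j\rangle/\langle\alpha_j,\alpha_j\rangle$ is Humphreys' own, and his $F_4$ Cartan matrix is exactly the one displayed, so the transposed convention you warn about is Kac's rather than Humphreys'.
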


Beyond $G_2$ and $F_4$, the exceptional family also contains the three simply laced type-E algebras, namely $E_6$, $E_7$, $E_8$.
The first case is $E_6$. 
With the simple roots labeled by their corresponding nodes, its Dynkin diagram is given by
\begin{equation}
\label{eq:Dynkin_E6}
 \raisebox{-3.6ex}{\includegraphics[scale=1]{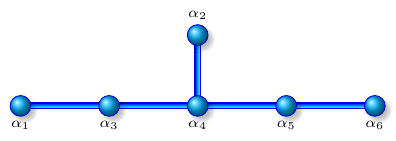}}\, .
\end{equation}

The $E_6$ root system is simply laced. 
Hence all simple roots have the same squared length, and under the standard normalization of long roots one has
\begin{align}
\langle \alpha_1,\alpha_1\rangle
=
\langle \alpha_2,\alpha_2\rangle
=
\langle \alpha_3,\alpha_3\rangle
=
\langle \alpha_4,\alpha_4\rangle
=
\langle \alpha_5,\alpha_5\rangle
=
\langle \alpha_6,\alpha_6\rangle
=
2.
\end{align}
The nonzero inner products between distinct simple roots are determined by the adjacency relations of the Dynkin diagram:
\begin{align}
\langle\alpha_1,\alpha_3\rangle
=
\langle\alpha_2,\alpha_4\rangle
=
\langle\alpha_3,\alpha_4\rangle
=
\langle\alpha_4,\alpha_5\rangle
=
\langle\alpha_5,\alpha_6\rangle
=
-1.
\end{align}
All other pairs of distinct simple roots are orthogonal, and therefore their inner products vanish.

With this convention, the Cartan matrix of $E_6$ is given by
\begin{align}
C
=
\begin{pmatrix}
2 & 0 & -1 & 0 & 0 & 0 \\
0 & 2 & 0 & -1 & 0 & 0 \\
-1 & 0 & 2 & -1 & 0 & 0 \\
0 & -1 & -1 & 2 & -1 & 0 \\
0 & 0 & 0 & -1 & 2 & -1 \\
0 & 0 & 0 & 0 & -1 & 2
\end{pmatrix}.
\end{align}
The corresponding inverse Cartan matrix is
\begin{align}\label{eq:cartan_E6}
C^{-1}
=
\begin{pmatrix}
\dfrac{4}{3} & 1 & \dfrac{5}{3} & 2 & \dfrac{4}{3} & \dfrac{2}{3} \\[8pt]
1 & 2 & 2 & 3 & 2 & 1 \\[8pt]
\dfrac{5}{3} & 2 & \dfrac{10}{3} & 4 & \dfrac{8}{3} & \dfrac{4}{3} \\[8pt]
2 & 3 & 4 & 6 & 4 & 2 \\[8pt]
\dfrac{4}{3} & 2 & \dfrac{8}{3} & 4 & \dfrac{10}{3} & \dfrac{5}{3} \\[8pt]
\dfrac{2}{3} & 1 & \dfrac{4}{3} & 2 & \dfrac{5}{3} & \dfrac{4}{3}
\end{pmatrix}.
\end{align}
Meanwhile, the vector of squared simple-root lengths of $E_6$ is
\begin{align}\label{eq:root_vector_E6}
\vec{\alpha}
=
\begin{pmatrix}
2\\
\vdots\\
2\
\end{pmatrix}.
\end{align}
Combining Eqs.~\eqref{eq:cartan_E6} and \eqref{eq:root_vector_E6} gives the representation-independent vector $\vec{\xi}$
\begin{align}
\vec{\xi}
=
C^{-1}\cdot\vec{\alpha}
&=
\begin{pmatrix}
\dfrac{4}{3} & 1 & \dfrac{5}{3} & 2 & \dfrac{4}{3} & \dfrac{2}{3} \\[8pt]
1 & 2 & 2 & 3 & 2 & 1 \\[8pt]
\dfrac{5}{3} & 2 & \dfrac{10}{3} & 4 & \dfrac{8}{3} & \dfrac{4}{3} \\[8pt]
2 & 3 & 4 & 6 & 4 & 2 \\[8pt]
\dfrac{4}{3} & 2 & \dfrac{8}{3} & 4 & \dfrac{10}{3} & \dfrac{5}{3} \\[8pt]
\dfrac{2}{3} & 1 & \dfrac{4}{3} & 2 & \dfrac{5}{3} & \dfrac{4}{3}
\end{pmatrix}
\begin{pmatrix}
2\\
2\\
\vdots\\
2\\
2
\end{pmatrix}
=
\begin{pmatrix}
16\\
22\\
30\\
42\\
30\\
16
\end{pmatrix}.
\end{align}

Thus, for a highest weight 
$\Lambda
=
\sum_{i=1}^{6}
\lambda_i\omega_i$,
the uncertainty lower bound $\langle\vec{\lambda},\vec{\xi}\rangle$ (see Eq.~\eqref{eq:Lie_Algebraic_UR_Simplified}) can be written explicitly as follows
\begin{align}
\langle\vec{\lambda},\vec{\xi}\rangle
&=
\begin{pmatrix}
\lambda_1 & \lambda_2 & \lambda_3 & \lambda_4 & \lambda_5 & \lambda_6
\end{pmatrix}
\begin{pmatrix}
16\\
22\\
30\\
42\\
30\\
16
\end{pmatrix}
=
16\lambda_1
+
22\lambda_2
+
30\lambda_3
+
42\lambda_4
+
30\lambda_5
+
16\lambda_6.
\end{align}
This gives the following uncertainty relation for $E_6$.

\begin{mycor}{Type-$\mathrm{E}_6$ Lie Algebraic Uncertainty Relations}{Type_E6_Lie_UR}
For the exceptional Lie algebra $E_6$ (see Tab.~\ref{tab:Simple_Lie_Algebra}), the optimal Lie algebraic uncertainty relation (see Def.~\ref{def:Lie_Algebraic_Uncertainty_Relations}) takes the explicit form
\begin{align}
\Delta^2_{\psi}( \raisebox{-0.6ex}{\includegraphics[scale=0.75]{DK_E_6.pdf}})
\coloneqq
\Delta^2_{\psi}(\mathfrak e_6)
\geqslant
16\lambda_1
+
22\lambda_2
+
30\lambda_3
+
42\lambda_4
+
30\lambda_5
+
16\lambda_6 .
\end{align}
The lower bound is independent of the state and is determined entirely by the Dynkin coefficient vector
$\vec{\lambda}$
of the highest weight representation.
\end{mycor}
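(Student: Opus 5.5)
The corollary for $E_6$ is a direct specialization of Cor.~\ref{cor:Simplified_Lie_Algebraic_Uncertainty_Relations}. Since that result already gives the optimal, state-independent bound $\vec{\lambda}^{\,\T}\vec{\xi}$ with $\vec{\xi}=C^{-1}\vec{\alpha}$ for any complex simple Lie algebra, the plan reduces to three tasks: fix the $E_6$ root data in the labeling of Eq.~\eqref{eq:Dynkin_E6}, compute $\vec{\xi}$, and pair it with $\vec{\lambda}$.

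The first step is the Cartan matrix $C$. It follows from the adjacency relations listed above: $\alpha_1$--$\alpha_3$--$\alpha_4$--$\alpha_5$--$\alpha_6$ form a chain, with $\alpha_2$ attached to $\alpha_4$. Because $E_6$ is simply laced, the normalization of Eq.~\eqref{eq:normal_Killing} gives every simple root squared length $2$, so $\vec{\alpha}=2(1,\dots,1)^{\T}$. Consequently $\xi_i=2\sum_j C^{-1}_{ij}$, which is twice the $i$-th row sum of the inverse Cartan matrix.

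The second step is to confirm the displayed $C^{-1}$ in Eq.~\eqref{eq:cartan_E6}. Rather than inverting a $6\times6$ matrix by hand, I will check that $C\cdot C^{-1}=\1_6$. Row by row, each check only involves the at most four nonzero entries of the corresponding row of $C$.

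The row sums can then be cross-checked without the full inverse. Write $\vec{s}=C^{-1}(1,\dots,1)^{\T}$; equivalently, $\vec{s}$ solves $C\vec{s}=(1,\dots,1)^{\T}$. The claimed solution is $\vec{s}=(8,11,15,21,15,8)^{\T}$, and the check is immediate:
\begin{align}
2\cdot8-15=1,\quad 2\cdot11-21=1,\quad 2\cdot15-8-21=1,\quad 2\cdot21-11-15-15=1,
\end{align}
and the last two rows follow by the diagram symmetry exchanging $\alpha_1\leftrightarrow\alpha_6$ and $\alpha_3\leftrightarrow\alpha_5$. This also matches the identity $\delta=\sum_i\omega_i$ from Eq.~\eqref{eq:delta_sum}: the coefficients $s_i$ are the coordinates of $\delta$ in the simple-root basis, and their total $78=\tfrac12\cdot72+\dots$ is consistent with $|\Phi^+|=36$. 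Doubling $\vec{s}$ gives $\vec{\xi}=(16,22,30,42,30,16)^{\T}$.

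The final step substitutes $\vec{\xi}$ into Eq.~\eqref{eq:Lie_Algebraic_UR_Simplified}. This yields the stated inequality. Optimality, and attainment on the highest-weight orbit, are inherited from Cor.~\ref{cor:Simplified_Lie_Algebraic_Uncertainty_Relations}, which in turn rests on Thm.~\ref{thm:Lie_Algebraic_Uncertainty_Relations} and Lem.~\ref{lem:Moment_Norm_Bound}. By Thm.~\ref{thm:Simple_Root_Relabeling_Invariance}, the result does not depend on the particular node labeling.

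The only step where an error could plausibly enter is the linear system $C\vec{s}=(1,\dots,1)^{\T}$, specifically keeping the attachment point of $\alpha_2$ consistent with the labeling. The row-by-row check above is the safeguard against that.
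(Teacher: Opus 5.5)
Your proposal is correct and follows the paper's route: write the $E_6$ Cartan matrix in the labeling of Eq.~\eqref{eq:Dynkin_E6}, form $\vec{\xi}=C^{-1}\vec{\alpha}$ with $\vec{\alpha}=(2,\dots,2)^{\T}$, pair it with $\vec{\lambda}$, and inherit optimality from Cor.~\ref{cor:Simplified_Lie_Algebraic_Uncertainty_Relations}. Solving $C\vec{s}=(1,\dots,1)^{\T}$ instead of displaying the full inverse is just a leaner way to get the same row sums, and your check of $\vec{s}=(8,11,15,21,15,8)^{\T}$ is right; the only blemish is the aside ``$78=\tfrac12\cdot72+\dots$'', which does not parse and is not needed (a valid version is $\sum_i s_i=\langle\delta,\delta\rangle=h^\vee\dim\g/12=78$ by Freudenthal--de Vries).
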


We next turn to the exceptional Lie algebra $E_7$. 
Adopting the Humphreys numbering convention~\cite{Humphreys1972}, its Dynkin diagram is given by
\begin{equation}
\label{eq:Dynkin_E7}
    \raisebox{-3.4ex}{\includegraphics[scale=1]{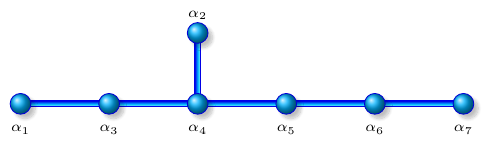}}\, .
\end{equation}

The $E_7$ root system is simply laced.
Consequently, all simple roots have the same length; 
with the standard normalization,
\begin{align}
\langle \alpha_1,\alpha_1\rangle
=
\langle \alpha_2,\alpha_2\rangle
=
\langle \alpha_3,\alpha_3\rangle
=
\langle \alpha_4,\alpha_4\rangle
=
\langle \alpha_5,\alpha_5\rangle
=
\langle \alpha_6,\alpha_6\rangle
=
\langle \alpha_7,\alpha_7\rangle
=
2.
\end{align}
The remaining inner products are fixed entirely by the adjacency structure of the Dynkin diagram. 
The only nonzero off-diagonal entries are
\begin{align}
    \langle\alpha_1,\alpha_3\rangle
    =
    \langle\alpha_2,\alpha_4\rangle
    =
    \langle\alpha_3,\alpha_4\rangle
    =
    \langle\alpha_4,\alpha_5\rangle
    =
    \langle\alpha_5,\alpha_6\rangle
    =
    \langle\alpha_6,\alpha_7\rangle
    =
    -1,
\end{align}
whereas all other pairs of distinct simple roots are orthogonal, and hence have vanishing inner product.

Accordingly, the Cartan matrix takes the form
\begin{align}
C
=
\begin{pmatrix}
2 & 0 & -1 & 0 & 0 & 0 & 0\\
0 & 2 & 0 & -1 & 0 & 0 & 0\\
-1 & 0 & 2 & -1 & 0 & 0 & 0\\
0 & -1 & -1 & 2 & -1 & 0 & 0\\
0 & 0 & 0 & -1 & 2 & -1 & 0\\
0 & 0 & 0 & 0 & -1 & 2 & -1\\
0 & 0 & 0 & 0 & 0 & -1 & 2
\end{pmatrix}.
\end{align}
Its inverse is
\begin{align}
C^{-1}
=
\begin{pmatrix}
2 & 2 & 3 & 4 & 3 & 2 & 1\\[8pt]
2 & \dfrac{7}{2} & 4 & 6 & \dfrac{9}{2} & 3 & \dfrac{3}{2}\\[8pt]
3 & 4 & 6 & 8 & 6 & 4 & 2\\[8pt]
4 & 6 & 8 & 12 & 9 & 6 & 3\\[6pt]
3 & \dfrac{9}{2} & 6 & 9 & \dfrac{15}{2} & 5 & \dfrac{5}{2}\\[8pt]
2 & 3 & 4 & 6 & 5 & 4 & 2\\[8pt]
1 & \dfrac{3}{2} & 2 & 3 & \dfrac{5}{2} & 2 & \dfrac{3}{2}
\end{pmatrix}.
\end{align}
Since $E_7$ is simply laced, the squared simple-root lengths form the vector 
\begin{align}
\vec{\alpha}
=
\begin{pmatrix}
2\\
2\\
\vdots\\
2\\
2
\end{pmatrix}.
\end{align}
Applying $C^{-1}$ to the squared-length vector $\vec{\alpha}$ gives the vector $\vec{\xi}$ entering the uncertainty bound
\begin{align}
    \vec{\xi}
    =
    C^{-1}\cdot\vec{\alpha}
    &=
    \begin{pmatrix}
    2 & 2 & 3 & 4 & 3 & 2 & 1\\[8pt]
    2 & \dfrac{7}{2} & 4 & 6 & \dfrac{9}{2} & 3 & \dfrac{3}{2}\\[8pt]
    3 & 4 & 6 & 8 & 6 & 4 & 2\\[8pt]
    4 & 6 & 8 & 12 & 9 & 6 & 3\\[6pt]
    3 & \dfrac{9}{2} & 6 & 9 & \dfrac{15}{2} & 5 & \dfrac{5}{2}\\[8pt]
    2 & 3 & 4 & 6 & 5 & 4 & 2\\[8pt]
    1 & \dfrac{3}{2} & 2 & 3 & \dfrac{5}{2} & 2 & \dfrac{3}{2}
    \end{pmatrix}
    \begin{pmatrix}
    2\\
    2\\
    \vdots\\
    2\\
    2
    \end{pmatrix}
    =
    \begin{pmatrix}
    34\\
    49\\
    66\\
    96\\
    75\\
    52\\
    27
    \end{pmatrix}.
\end{align}

For a representation with highest weight $\Lambda=\sum_{i=1}^{7}\lambda_i\omega_i$, the uncertainty lower bound $\langle\vec{\lambda},\vec{\xi}\rangle$ (see Eq.~\eqref{eq:Lie_Algebraic_UR_Simplified}) therefore becomes
\begin{align}
    \langle\vec{\lambda},\vec{\xi}\rangle
    &=
    \begin{pmatrix}
    \lambda_1 &
    \lambda_2 &
    \lambda_3 &
    \lambda_4 &
    \lambda_5 &
    \lambda_6 &
    \lambda_7
    \end{pmatrix}
    \begin{pmatrix}
    34\\
    49\\
    66\\
    96\\
    75\\
    52\\
    27
    \end{pmatrix}
    \nonumber\\
    &=
    34\lambda_1
    +
    49\lambda_2
    +
    66\lambda_3
    +
    96\lambda_4
    +
    75\lambda_5
    +
    52\lambda_6
    +
    27\lambda_7.
\end{align}
Thus, the root geometry of $E_7$ fixes the quantum uncertainty through the following explicit relation.

\begin{mycor}{Type-$\mathrm{E}_7$ Lie Algebraic Uncertainty Relations}{Type_E7_Lie_UR}
For the exceptional Lie algebra $E_7$ (see Tab.~\ref{tab:Simple_Lie_Algebra}), the optimal Lie algebraic uncertainty relation (see Def.~\ref{def:Lie_Algebraic_Uncertainty_Relations}) takes the explicit form
\begin{align}
   \Delta^2_{\psi}(\raisebox{-0.8ex}{\includegraphics[scale=0.8]{DK_E_7.pdf}})
   \coloneqq
   \Delta^2_{\psi}(\mathfrak e_7)
   \geqslant
   34\lambda_1
   +
   49\lambda_2
   +
   66\lambda_3
   +
   96\lambda_4
   +
   75\lambda_5
   +
   52\lambda_6
   +
   27\lambda_7 .
\end{align}
The lower bound is state-independent and depends only on the Dynkin coefficient vector
$\vec{\lambda}$
of the highest weight representation.
\end{mycor}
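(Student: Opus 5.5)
The statement immediately above is Cor.~\ref{cor:Type_E7_Lie_UR}, the $E_7$ specialization of Cor.~\ref{cor:Simplified_Lie_Algebraic_Uncertainty_Relations}. The plan is to invoke that general result, which gives $\Delta^2_\psi(\g)\geqslant\vec{\lambda}^{\T}\vec{\xi}$ with $\vec{\xi}=C^{-1}\vec{\alpha}$. Optimality and attainment by a highest-weight state are inherited from Thm.~\ref{thm:Explicit_Uncertainty_Bound}. By Thm.~\ref{thm:Simple_Root_Relabeling_Invariance}, the Humphreys labeling of Eq.~\eqref{eq:Dynkin_E7} may be fixed without loss of generality. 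The whole task is therefore to verify that $\vec{\xi}=(34,49,66,96,75,52,27)^{\T}$.

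First, since $E_7$ is simply laced and long roots have squared length $2$, we have $\vec{\alpha}=2(1,\ldots,1)^{\T}$. Hence $\xi_i=2\sum_j C^{-1}_{ij}$, and the components of $\vec{\xi}$ are twice the row sums of $C^{-1}$. To avoid trusting the displayed inverse blindly, I would use the identity from the proof of Thm.~\ref{thm:Explicit_Uncertainty_Bound}: $\xi_i=2\langle\omega_i,\delta\rangle$, and $\delta=\sum_k c_k\alpha_k$, where $c_k$ are the coefficients of the Weyl vector in the simple-root basis. Since $\langle\omega_i,\alpha_k\rangle=\delta_{ik}$ in the simply laced case, this gives $\xi_i=2c_i$. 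So I would compute $\delta=\tfrac12\sum_{\alpha\in\Phi^+}\alpha$ for $E_7$. Equivalently, $2\delta=\sum_{\alpha>0}\alpha$, whose simple-root coefficients are known to be $(34,49,66,96,75,52,27)$ in Bourbaki/Humphreys numbering. Here $c_i$ is the $i$-th row sum of $C^{-1}$, because $\delta=\sum_i\omega_i$ and $\omega_i=\sum_k C^{-1}_{ik}\alpha_k$ with $C$ symmetric.

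Concretely, I will check this by verifying $C\vec{c}=(1,\ldots,1)^{\T}$ with $\vec{c}=\tfrac12\vec{\xi}=(17,\tfrac{49}{2},33,48,\tfrac{75}{2},26,\tfrac{27}{2})$. This uses the Cartan matrix displayed above, and the checks are direct. Row 1 gives $34-33=1$. Row 2 gives $49-48=1$. Row 3 gives $66-17-48=1$. Row 4 gives $96-\tfrac{49}{2}-33-\tfrac{75}{2}=1$. Row 5 gives $75-48-26=1$. Row 6 gives $52-\tfrac{75}{2}-\tfrac{27}{2}=1$. Row 7 gives $27-26=1$. Since $C$ is invertible, this confirms the row sums without needing to recheck all 49 entries of $C^{-1}$. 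It follows that $\langle\vec{\lambda},\vec{\xi}\rangle=34\lambda_1+49\lambda_2+66\lambda_3+96\lambda_4+75\lambda_5+52\lambda_6+27\lambda_7$.

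The only genuine obstacle is bookkeeping. The adjacency used in $C$ must match the Humphreys diagram: $\alpha_2$ attaches to $\alpha_4$, and the chain is $1$–$3$–$4$–$5$–$6$–$7$. The normalization must also match, so that the normalized $\kappa$ of Eq.~\eqref{eq:normal_Killing} gives long roots squared length $2$ via the dual Coxeter number $h^\vee=18$. Once both are confirmed, the linear check above closes the argument, and state-independence follows because the bound depends only on $\vec{\lambda}$.
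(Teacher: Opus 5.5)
Your proposal is correct and follows the paper's route: specialize $\Delta^2_\psi(\g)\geqslant\vec{\lambda}^{\T}C^{-1}\vec{\alpha}$ to $E_7$ in Humphreys numbering, with $\vec{\alpha}=2(1,\ldots,1)^{\T}$. Your arithmetic checks out. The only difference is how $\vec{\xi}=(34,49,66,96,75,52,27)^{\T}$ is verified: the paper writes out $C^{-1}$ and multiplies it by $\vec{\alpha}$, while you check $C\bigl(\tfrac12\vec{\xi}\bigr)=(1,\ldots,1)^{\T}$ and identify $\tfrac12\vec{\xi}$ with the simple-root coefficients of $\delta$. Your check is lighter and avoids relying on the displayed inverse.
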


We finally turn to the exceptional Lie algebra $E_8$. 
Throughout, we adopt the Humphreys numbering convention, indicated by the numbered Dynkin diagram below.
\begin{equation}
\label{eq:Dynkin_E8}
    \raisebox{-3.8ex}{\includegraphics[scale=1]{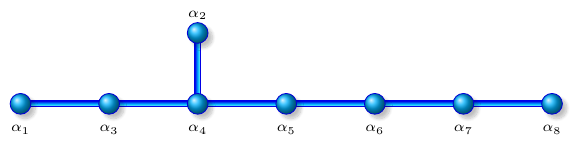}}\, .
\end{equation}

The $E_8$ root system is simply laced. 
Accordingly, with the standard normalization, all simple roots have squared length 
\begin{align}
\langle \alpha_1,\alpha_1\rangle
=
\langle \alpha_2,\alpha_2\rangle
=
\langle \alpha_3,\alpha_3\rangle
=
\langle \alpha_4,\alpha_4\rangle
=
\langle \alpha_5,\alpha_5\rangle
=
\langle \alpha_6,\alpha_6\rangle
=
\langle \alpha_7,\alpha_7\rangle
=
\langle \alpha_8,\alpha_8\rangle
=
2.
\end{align}
The off-diagonal structure is then completely determined by the Dynkin diagram: 
adjacent simple roots have inner product -1, while non-adjacent simple roots are orthogonal.
Thus the only nonzero off-diagonal inner products are
\begin{align}
    \langle\alpha_1,\alpha_3\rangle
    =
    \langle\alpha_2,\alpha_4\rangle
    =
    \langle\alpha_3,\alpha_4\rangle
    =
    \langle\alpha_4,\alpha_5\rangle
    =
    \langle\alpha_5,\alpha_6\rangle
    =
    \langle\alpha_6,\alpha_7\rangle
    =
    \langle\alpha_7,\alpha_8\rangle
    =
    -1,
\end{align}
with all other pairs of distinct simple roots having vanishing inner product.

With this ordering of the simple roots, the Cartan matrix is
\begin{align}
C
=
\begin{pmatrix}
2 & 0 & -1 & 0 & 0 & 0 & 0 & 0\\
0 & 2 & 0 & -1 & 0 & 0 & 0 & 0\\
-1 & 0 & 2 & -1 & 0 & 0 & 0 & 0\\
0 & -1 & -1 & 2 & -1 & 0 & 0 & 0\\
0 & 0 & 0 & -1 & 2 & -1 & 0 & 0\\
0 & 0 & 0 & 0 & -1 & 2 & -1 & 0\\
0 & 0 & 0 & 0 & 0 & -1 & 2 & -1\\
0 & 0 & 0 & 0 & 0 & 0 & -1 & 2
\end{pmatrix}.
\end{align}
whose inverse is
\begin{align}
C^{-1}
=
\begin{pmatrix}
4 & 5 & 7 & 10 & 8 & 6 & 4 & 2\\
5 & 8 & 10 & 15 & 12 & 9 & 6 & 3\\
7 & 10 & 14 & 20 & 16 & 12 & 8 & 4\\
10 & 15 & 20 & 30 & 24 & 18 & 12 & 6\\
8 & 12 & 16 & 24 & 20 & 15 & 10 & 5\\
6 & 9 & 12 & 18 & 15 & 12 & 8 & 4\\
4 & 6 & 8 & 12 & 10 & 8 & 6 & 3\\
2 & 3 & 4 & 6 & 5 & 4 & 3 & 2
\end{pmatrix}.
\end{align}
Because $E_8$ is simply laced, all simple roots have squared length 2. 
Thus the vector $\vec{\alpha}$ is
\begin{align}
    \vec{\alpha}
    =
    \begin{pmatrix}
    2\\
    2\\
    \vdots\\
    2\\
    2
    \end{pmatrix}.
\end{align}
The vector $\vec{\xi}$ that enters the expression for the uncertainty bound is therefore fixed by
\begin{align}
    \vec{\xi}
    =
    C^{-1}\cdot\vec{\alpha}
    &=
    \begin{pmatrix}
    4 & 5 & 7 & 10 & 8 & 6 & 4 & 2\\
    5 & 8 & 10 & 15 & 12 & 9 & 6 & 3\\
    7 & 10 & 14 & 20 & 16 & 12 & 8 & 4\\
    10 & 15 & 20 & 30 & 24 & 18 & 12 & 6\\
    8 & 12 & 16 & 24 & 20 & 15 & 10 & 5\\
    6 & 9 & 12 & 18 & 15 & 12 & 8 & 4\\
    4 & 6 & 8 & 12 & 10 & 8 & 6 & 3\\
    2 & 3 & 4 & 6 & 5 & 4 & 3 & 2
    \end{pmatrix}
    \begin{pmatrix}
    2\\
    2\\
    \vdots\\
    2\\
    2
    \end{pmatrix}
     =
    \begin{pmatrix}
    92\\
    136\\
    182\\
    270\\
    220\\
    168\\
    114\\
    58
    \end{pmatrix}.
\end{align}

A direct calculation then yields the optimal uncertainty bound for $E_8$; that is
\begin{align}
    \langle\vec{\lambda},\vec{\xi}\rangle
    &=
    \begin{pmatrix}
    \lambda_1 &
    \lambda_2 &
    \lambda_3 &
    \lambda_4 &
    \lambda_5 &
    \lambda_6 &
    \lambda_7 &
    \lambda_8
    \end{pmatrix}
    \begin{pmatrix}
    92\\
    136\\
    182\\
    270\\
    220\\
    168\\
    114\\
    58
    \end{pmatrix}
    \nonumber\\
    &=
    92\lambda_1
    +
    136\lambda_2
    +
    182\lambda_3
    +
    270\lambda_4
    +
    220\lambda_5
    +
    168\lambda_6
    +
    114\lambda_7
    +
    58\lambda_8 .
\end{align}
We state this result formally in the following corollary.

\begin{mycor}{Type-$\mathrm{E}_8$ Lie Algebraic Uncertainty Relations}{Type_E8_Lie_UR}
For the exceptional Lie algebra $E_8$ (see Tab.~\ref{tab:Simple_Lie_Algebra}), the optimal Lie algebraic uncertainty relation (see Def.~\ref{def:Lie_Algebraic_Uncertainty_Relations}) takes the explicit form
\begin{align}
  & \Delta^2_{\psi}( \raisebox{-0.8ex}{\includegraphics[scale=0.8]{DK_E_8.pdf}} )\nonumber\\
   \coloneqq
  & \Delta^2_{\psi}(\mathfrak e_8)
   \geqslant
   92\lambda_1
   +
   136\lambda_2
   +
   182\lambda_3
   +
   270\lambda_4
   +
   220\lambda_5
   +
   168\lambda_6
   +
   114\lambda_7
   +
   58\lambda_8 .
\end{align}
The bound is state-independent and depends only on the Dynkin coefficient vector $\vec{\lambda}$ specifying the highest weight representation.
\end{mycor}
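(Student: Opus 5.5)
The plan is to specialize Cor.~\ref{cor:Simplified_Lie_Algebraic_Uncertainty_Relations} to $E_8$, so the work reduces to computing the vector $\vec{\xi}=C^{-1}\vec{\alpha}$ for the Humphreys-ordered simple roots of Eq.~\eqref{eq:Dynkin_E8}. That corollary already provides the inequality $\Delta^2_\psi(\g)\geqslant\langle\vec{\lambda},\vec{\xi}\rangle$. It also provides optimality, through Thm.~\ref{thm:Explicit_Uncertainty_Bound}: the highest weight vector $\ket{\psi_\Lambda}$ has $\|M_{\psi_\Lambda}\|^2=\langle\Lambda,\Lambda\rangle$, so Lem.~\ref{lem:Quantum_Uncertainty_Moment_Element} gives equality. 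Nothing about states or representations has to be redone; the statement is purely a root-system computation.

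\textbf{Computing $\vec{\xi}$.} First, I read the Cartan matrix off the diagram: $C_{ii}=2$, $C_{ij}=-1$ for the adjacent pairs $(1,3),(2,4),(3,4),(4,5),(5,6),(6,7),(7,8)$, and $0$ otherwise. Since $E_8$ is simply laced, $\vec{\alpha}=2\,\mathbf 1$ by the normalization of Eq.~\eqref{eq:normal_Killing}, and therefore $\xi_i=2\sum_j C^{-1}_{ij}$. I would certify the displayed $C^{-1}$ by checking $CC^{-1}=\1_8$ column by column. This is routine, because each row of $C$ has at most four nonzero entries. Summing the rows and doubling then yields $(92,136,182,270,220,168,114,58)$. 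Pairing with $\vec{\lambda}$ gives the stated linear form, and Thm.~\ref{thm:Simple_Root_Relabeling_Invariance} guarantees that the result does not depend on the labeling convention.

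\textbf{Independent cross-check.} The only step where errors can realistically creep in is the arithmetic of the $8\times8$ inverse, so I would verify the final vector a second way, without $C^{-1}$ entry by entry. By Eq.~\eqref{eq:delta_sum} and $\omega_i=\sum_k C^{-1}_{ik}\alpha_k$, we have
\begin{align}
2\delta=\sum_k\Bigl(2\sum_i C^{-1}_{ik}\Bigr)\alpha_k .
\end{align}
Because $C$ is symmetric for a simply laced algebra, column sums equal row sums, so $\xi_k$ is exactly the $k$-th coefficient of $2\delta=\sum_{\alpha\in\Phi^+}\alpha$ in the simple-root basis. This coefficient vector is a standard datum of $E_8$; in Humphreys ordering it is $(92,136,182,270,220,168,114,58)$, which matches the computation above.

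As a sanity check, the adjoint representation has highest weight $\omega_8$. It therefore gives the bound $58=2h^\vee-2=2(30)-2$, consistent with $2\langle\theta,\delta\rangle=2(h^\vee-1)$ and with Tab.~\ref{tab:Adj_Rep}.
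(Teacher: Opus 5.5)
Your proposal is correct and follows essentially the same route as the paper: you read off the Humphreys-ordered Cartan matrix, invert it, and apply $C^{-1}$ to $\vec{\alpha}=2\,\mathbf{1}$ to get $\vec{\xi}=(92,136,182,270,220,168,114,58)$, then pair with $\vec{\lambda}$, with optimality inherited from attainment by a highest-weight state. Your two extra checks are valid additions that the paper does not make: identifying $\xi_k$ with the $k$-th simple-root coefficient of $\sum_{\alpha\in\Phi^+}\alpha$ (which uses symmetry of $C$ in the simply laced case), and confirming the adjoint value $58=2(h^\vee-1)$.
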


The resulting uncertainty relations for the exceptional Lie algebras are collected in Tab.~\ref{tab:exceptional_bounds}. 
For each exceptional type, the table gives the corresponding rank and the explicit optimal lower bound $\langle\vec{\lambda},\vec{\xi}\rangle$ (see Eq.~\eqref{eq:Lie_Algebraic_UR_Simplified}), making clear how the state-independent uncertainty bound is fixed by the Dynkin coefficient vector $\vec{\lambda}$ of the chosen highest weight representation.

\begin{table}[h]
\centering
\begin{tblr}{
      width = \textwidth,
      colspec = {c||c|X[c]},
      row{1} = {bg=gray!50, font=\bfseries},
      row{2} = {bg=gray!20, font=\bfseries},
      hlines,
      vlines,
      cells = {m},
      cell{3-7}{1} = {bg=gray!20},
      cell{3-7}{2} = {bg=mRed},
      cell{3-7}{3} = {bg=mGreen},
}

\SetCell[c=3]{c}
\textbf{Exceptional Lie Algebraic Uncertainty Relations}
&
&
\\

Exceptional Lie Algebra
&
Rank
&
Optimal Bound
\\

$G_2$
&
$2$
&
$\displaystyle
6\lambda_1+\frac{10}{3}\lambda_2$
\\

$F_4$
&
$4$
&
$\displaystyle
16\lambda_1+30\lambda_2+21\lambda_3+11\lambda_4$
\\

$E_6$
&
$6$
&
$\displaystyle
16\lambda_1+22\lambda_2+30\lambda_3
+42\lambda_4+30\lambda_5+16\lambda_6$
\\

$E_7$
&
$7$
&
$\displaystyle
34\lambda_1+49\lambda_2+66\lambda_3
+96\lambda_4+75\lambda_5+52\lambda_6+27\lambda_7$
\\

$E_8$
&
$8$
&
$\displaystyle
92\lambda_1+136\lambda_2+182\lambda_3+270\lambda_4
+220\lambda_5+168\lambda_6+114\lambda_7+58\lambda_8$
\\

\end{tblr}

\caption{
\textbf{Exceptional Lie Algebraic Uncertainty Relations.}
For each exceptional compact Lie algebra, the table reports the explicit state-independent optimal lower bound in terms of the Dynkin coefficient vector $\vec{\lambda}$.}
\label{tab:exceptional_bounds}
\end{table}

We next give representative examples of standard representations and their associated optimal uncertainty lower bounds. 
These examples illustrate how the general formula specializes to concrete highest weight sectors and how the resulting state-independent bounds are fixed by the corresponding Dynkin coefficient vectors.

\begin{table}[htbp]
\centering
\begin{tblr}{
      width = \textwidth,
      colspec = {Q[c,m]||Q[c,m]|Q[c,m]|X[c,m]|X[c,m]},
      row{1} = {bg=gray!50,font=\bfseries},
      row{2} = {bg=gray!20,font=\bfseries},
      hlines,
      vlines,
      cells = {m,c},
      cell{3}{2-5} = {bg=mRed},
      cell{4}{2-5} = {bg=mGreen},
      cell{5}{2-5} = {bg=mRed},
      cell{6}{2-5} = {bg=mGreen},
      cell{7-9}{2-5} = {bg=mRed},
      cell{8}{2-5} = {bg=mBlue},
      cell{9}{2-5} = {bg=mGreen},
      cell{10}{2-5} = {bg=mRed},
      cell{11}{2-5} = {bg=mGreen},
      cell{12}{2-5} = {bg=mOrange},
      cell{13}{2-5} = {bg=mGreen},
}
      
\SetCell[c=5]{m,c}
\textbf{Exceptional Lie Algebraic Uncertainty Relations Across Representations}
&
&
&
&
\\

Exceptional Algebra
&
Dynkin Vector $\vec{\lambda}^{\,\T}$ \& Rep.
&
Dimension
&
Highest Weight $\Lambda$
&
Optimal Bound
\\

\SetCell[r=2]{m,c}
$G_2$
 \raisebox{-1.1ex}{\includegraphics[scale=0.75]{DK_G_2.pdf}} 
&
$\begin{array}{c}
(0,1)\\[-1mm]
\scriptstyle \text{Defining / Minimal}
\end{array}$
&
$7$
&
$\omega_2$
&
$\dfrac{10}{3}$
\\

&
$\begin{array}{c}
(1,0)\\[-1mm]
\scriptstyle \text{Adjoint}
\end{array}$
&
$14$
&
$\omega_1$
&
$6$
\\

\SetCell[r=2]{m,c}
$F_4$
\raisebox{-1.2ex}{\includegraphics[scale=0.75]{DK_F_4.pdf}}
&
$\begin{array}{c}
(0,0,0,1)\\[-1mm]
\scriptstyle \text{Defining / Minimal}
\end{array}$
&
$26$
&
$\omega_4$
&
$11$
\\

&
$\begin{array}{c}
(1,0,0,0)\\[-1mm]
\scriptstyle \text{Adjoint}
\end{array}$
&
$52$
&
$\omega_1$
&
$16$
\\

\SetCell[r=3]{m,c}
$E_6$
\raisebox{-0.8ex}{\includegraphics[scale=0.75]{DK_E_6.pdf}}
&
$\begin{array}{c}
(1,0,0,0,0,0)\\[-1mm]
\scriptstyle \text{Defining / Minimal}
\end{array}$
&
$27$
&
$\omega_1$
&
$16$
\\
&
$\begin{array}{c}
(0,0,0,0,0,1)\\[-1mm]
\scriptstyle \text{Dual Defining / Dual Minimal}
\end{array}$
&
$27$
&
$\omega_6$
&
$16$
\\
&
$\begin{array}{c}
(0,1,0,0,0,0)\\[-1mm]
\scriptstyle \text{Adjoint}
\end{array}$
&
$78$
&
$\omega_2$
&
$22$
\\

\SetCell[r=2]{m,c}
$E_7$
\raisebox{-0.8ex}{\includegraphics[scale=0.7]{DK_E_7.pdf}}
&
$\begin{array}{c}
(0,0,0,0,0,0,1)\\[-1mm]
\scriptstyle \text{Defining / Minimal}
\end{array}$
&
$56$
&
$\omega_7$
&
$27$
\\

&
$\begin{array}{c}
(1,0,0,0,0,0,0)\\[-1mm]
\scriptstyle \text{Adjoint}
\end{array}$
&
$133$
&
$\omega_1$
&
$34$
\\
\SetCell[r=2]{m,c}
$E_8$
\raisebox{-0.8ex}{\includegraphics[scale=0.7]{DK_E_8.pdf}}
&
$\begin{array}{c}
(1,0,0,0,0,0,0,0)\\[-1mm]
\scriptstyle \text{Fundamental (first)}
\end{array}$
&
$3875$
&
$\omega_1$
&
$92$
\\

&
$\begin{array}{c}
(0,0,0,0,0,0,0,1)\\[-1mm]
\scriptstyle \text{Adjoint / Minimal}
\end{array}$
&
$248$
&
$\omega_8$
&
$58$
\\

\end{tblr}

\caption{
\textbf{Lie Algebraic Uncertainty Relations for Exceptional Cases.}
Representative irreducible representations of the exceptional compact simple Lie algebras. 
For each algebra, we list the minimal nontrivial representation, and the adjoint representation, merging entries whenever these representations coincide. 
For $G_2$, we use the convention that $\alpha_1$ is the long simple root and $\alpha_2$ is the short simple root; 
hence the $7$-dimensional defining/minimal representation has highest weight $\omega_2$, whereas the $14$-dimensional adjoint representation has highest weight $\omega_1$.
For $E_6$, the two $27$-dimensional representations with highest weights $\omega_1$ and $\omega_6$ are dual defining/minimal representations, while the $78$-dimensional adjoint representation has highest weight $\omega_2$.
For $E_8$, unlike the classical Lie algebras, there is no separate natural
low-dimensional defining representation. 
The $248$-dimensional adjoint representation
with highest weight $\omega_8$ is already the smallest nontrivial irreducible representation. 
We also list the first fundamental representation with highest
weight $\omega_1$, which has dimension $3875$.
}
\label{tab:exceptional_representations}
\end{table}

Before closing this subsection, we illustrate the general result through explicit examples, summarized in Tab.~\ref{tab:exceptional_representations}. 
As representative cases, we consider the adjoint representations of the exceptional compact Lie algebras. 
Their optimal state-independent uncertainty relations take the form

\begin{mythm}{Exceptional Lie Algebraic Uncertainty Relations for Adjoint Representations}{Exceptional_ad}
For the adjoint representations of exceptional Lie algebras, the optimal state-independent uncertainty bounds are given explicitly below.
\begin{align}
\underbrace{\Delta^2_{\mathrm{ad}}(\mathfrak{g}_2)}_{\text{Type $\mathrm{G}_2$}}
\geqslant6,\quad
\underbrace{\Delta^2_{\mathrm{ad}}(\mathfrak{f}_4)}_{\text{Type $\mathrm{F}_4$}}
\geqslant16,\quad
\underbrace{\Delta^2_{\mathrm{ad}}(\mathfrak{e}_6)}_{\text{Type $\mathrm{E}_6$}}
\geqslant22,\quad
\underbrace{\Delta^2_{\mathrm{ad}}(\mathfrak{e}_7)}_{\text{Type $\mathrm{E}_7$}}
\geqslant34,\quad
\underbrace{\Delta^2_{\mathrm{ad}}(\mathfrak{e}_8)}_{\text{Type $\mathrm{E}_8$}}
\geqslant58.
\end{align}
\end{mythm}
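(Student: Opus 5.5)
The statement is Thm.~\ref{thm:Exceptional_ad}, which collects adjoint-representation bounds for the five exceptional algebras. My plan is to obtain it as a direct specialization of Cor.~\ref{cor:Simplified_Lie_Algebraic_Uncertainty_Relations}, using the type-by-type vectors $\vec{\xi}$ already computed in Cors.~\ref{cor:Type_G_Lie_UR}--\ref{cor:Type_E8_Lie_UR}. The only new input is the Dynkin coefficient vector $\vec{\lambda}$ of the adjoint representation in each case. Since the bound is $\vec{\lambda}^{\,\T}\vec{\xi}$ and the adjoint highest weight turns out to be a single fundamental weight $\omega_k$, each bound reduces to reading off one component $\xi_k$.

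The first step is to identify the adjoint highest weight. It is the highest root $\theta$, since the weights of $\mathrm{ad}$ are the roots together with $0$, and $\theta$ is the unique root annihilated by every positive-root raising operator. Its Dynkin labels are $\lambda_i=2\langle\theta,\alpha_i\rangle/\langle\alpha_i,\alpha_i\rangle$. I would compute them by writing $\theta=\sum_j m_j\alpha_j$ with the standard highest-root coefficients in the orderings fixed above, and then applying $\vec{\lambda}=C^{\T}\vec{m}$ column by column.

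For $G_2$, with $\alpha_1$ long, $\theta=3\alpha_1+2\alpha_2$ gives $\vec{\lambda}=(1,0)$. For $F_4$, $\theta=2\alpha_1+3\alpha_2+4\alpha_3+2\alpha_4$ gives $(1,0,0,0)$. In the Humphreys numbering, $E_6$ has $\theta=\omega_2$, $E_7$ has $\theta=\omega_1$, and $E_8$ has $\theta=\omega_8$. These assignments agree with Tab.~\ref{tab:exceptional_representations}.

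As an independent check, for $\Lambda=\theta=\omega_k$ the bound equals $2\langle\theta,\delta\rangle$. Because $\langle\theta,\theta\rangle=2$, this is $2(h^\vee-1)$, using the Freudenthal--de Vries relation $\langle\theta,\theta+2\delta\rangle=2h^\vee$ in the normalization of Eq.~\eqref{eq:normal_Killing}. Tab.~\ref{tab:Dual_Coxeter_Numbers} then yields $6,16,22,34,58$. These match $\xi_1(G_2)=6$, $\xi_1(F_4)=16$, $\xi_2(E_6)=22$, $\xi_1(E_7)=34$, and $\xi_8(E_8)=58$.

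The second step is to conclude the inequality $\Delta^2_{\mathrm{ad}}\geqslant\xi_k$ from Cor.~\ref{cor:Simplified_Lie_Algebraic_Uncertainty_Relations}. Optimality is inherited from Thm.~\ref{thm:Explicit_Uncertainty_Bound}: the bound is attained on a highest-weight vector, which in the adjoint representation is the highest root vector $e_\theta$.

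The main obstacle is keeping the node conventions consistent between $\theta$'s Dynkin labels and the ordering used for $\vec{\xi}$, especially for the non-simply-laced $G_2$ and $F_4$, where the long/short assignment decides whether the adjoint representation sits at $\omega_1$ or at the other end of the diagram. To guard against this, I would use the $h^\vee$ cross-check above, which is independent of any labeling and so confirms each value regardless of convention.
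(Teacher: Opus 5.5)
Your plan is correct and follows the paper's own route: Thm.~\ref{thm:Exceptional_ad} is just $\vec{\lambda}^{\,\T}\vec{\xi}$ from Cors.~\ref{cor:Type_G_Lie_UR}--\ref{cor:Type_E8_Lie_UR}, evaluated at the adjoint Dynkin vectors listed in Tab.~\ref{tab:exceptional_representations}. Your $2(h^\vee-1)$ cross-check is a useful labeling-free confirmation that the paper does not spell out; note that the identity you use is the adjoint Casimir value $\langle\theta,\theta+2\delta\rangle=2h^\vee$, not the Freudenthal--de Vries strange formula.

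There is one slip to fix. In the paper's $G_2$ labeling ($\alpha_1$ long) the highest root is $\theta=2\alpha_1+3\alpha_2$, and $C^{\T}(2,3)^{\T}=(1,0)$ as you claim. Your expression $3\alpha_1+2\alpha_2$ is not a root in this labeling, and it would give $C^{\T}(3,2)^{\T}=(4,-5)$.
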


Expressed in the diagrammatic language of root systems, encoded by Dynkin diagrams, Thm.~\ref{thm:Exceptional_ad} leads to the following uncertainty relations.
\begin{align}
    &\textbf{Type $\mathrm{G}_2$}:\qquad
    \Delta^2_{\mathrm{ad}}( \raisebox{-1.4ex}{\includegraphics[scale=1]{DK_G_2.pdf}} )
    \geqslant6,\label{eq:adj_G_2}\\
    &\textbf{Type $\mathrm{F}_4$}:\qquad
    \Delta^2_{\mathrm{ad}}( \raisebox{-1.4ex}{\includegraphics[scale=0.8]{DK_F_4.pdf}})
    \geqslant16,\label{eq:adj_F_4}\\
    &\textbf{Type $\mathrm{E}_6$}:\qquad
    \Delta^2_{\mathrm{ad}}( \raisebox{-0.8ex}{\includegraphics[scale=0.8]{DK_E_6.pdf}})
    \geqslant22,\label{eq:adj_E_6}\\
    &\textbf{Type $\mathrm{E}_7$}:\qquad
    \Delta^2_{\mathrm{ad}}( \raisebox{-0.8ex}{\includegraphics[scale=0.8]{DK_E_7.pdf}})
    \geqslant34,\label{eq:adj_E_7}\\
    &\textbf{Type $\mathrm{E}_8$}:\qquad
    \Delta^2_{\mathrm{ad}}( \raisebox{-0.8ex}{\includegraphics[scale=0.8]{DK_E_8.pdf}})
    \geqslant58.\label{eq:adj_E_8}
\end{align}
Here the subscript $\mathrm{ad}$ indicates that the corresponding representation is the adjoint representation. 
The resulting lower bounds are optimal: 
each is attained by a highest weight state and therefore cannot be improved within the corresponding representation.

For exceptional Lie algebras (see Tab.~\ref{tab:Simple_Lie_Algebra}), the minimal representations give rise to the following optimal state-independent uncertainty bounds.

\begin{mythm}{Exceptional Lie Algebraic Uncertainty Relations for Minimal Representations}{Exceptional_min}
For exceptional Lie algebras, the minimal representations satisfy the following optimal state-independent uncertainty bounds.
\begin{align}
\underbrace{\Delta^2_{\mathrm{min}}(\mathfrak{g}_2)}_{\text{Type $\mathrm{G}_2$}}
\geqslant\frac{10}{3},\quad
\underbrace{\Delta^2_{\mathrm{min}}(\mathfrak{f}_4)}_{\text{Type $\mathrm{F}_4$}}
\geqslant11,\quad
\underbrace{\Delta^2_{\mathrm{min}}(\mathfrak{e}_6)}_{\text{Type $\mathrm{E}_6$}}
\geqslant16,\quad
\underbrace{\Delta^2_{\mathrm{min}}(\mathfrak{e}_7)}_{\text{Type $\mathrm{E}_7$}}
\geqslant27,\quad
\underbrace{\Delta^2_{\mathrm{min}}(\mathfrak{e}_8)}_{\text{Type $\mathrm{E}_8$}}
\geqslant58.
\end{align}
\end{mythm}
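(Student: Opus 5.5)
The plan is to read the statement off Cor.~\ref{cor:Simplified_Lie_Algebraic_Uncertainty_Relations}. For each exceptional type, the minimal nontrivial irreducible representation is a fundamental representation $V(\omega_k)$. Its Dynkin vector $\vec{\lambda}$ is therefore the $k$-th standard basis vector, so the optimal bound $\vec{\lambda}^{\,\T}\vec{\xi}$ collapses to the single component $\xi_k$ of the vectors computed in Subsec.~\ref{subsec:exceptional}. Tab.~\ref{tab:exceptional_representations} gives the identifications $k=2$ for $G_2$, $k=4$ for $F_4$, $k=1$ (or $6$) for $E_6$, $k=7$ for $E_7$, and $k=8$ for $E_8$, in the node conventions fixed in Eqs.~\eqref{eq:Dynkin_G2}, \eqref{eq:Dynkin_F4}, \eqref{eq:Dynkin_E6}, \eqref{eq:Dynkin_E7} and \eqref{eq:Dynkin_E8}. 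Substituting the corresponding components gives
\begin{align}
\xi_2^{(G_2)}=\tfrac{10}{3},\quad \xi_4^{(F_4)}=11,\quad \xi_1^{(E_6)}=\xi_6^{(E_6)}=16,\quad \xi_7^{(E_7)}=27,\quad \xi_8^{(E_8)}=58.
\end{align}
These are exactly the claimed bounds. The $E_8$ value coincides with Thm.~\ref{thm:Exceptional_ad}, since the minimal representation of $E_8$ is its adjoint representation.

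Optimality needs no new argument. By Thm.~\ref{thm:Explicit_Uncertainty_Bound}, equivalently by Cor.~\ref{cor:Weight_Bound} with $\mu=\Lambda$, the highest weight state of $V(\omega_k)$ gives $\Delta^2_{\psi_\Lambda}(\g)=\langle C_2\rangle-\langle\Lambda,\Lambda\rangle=2\langle\Lambda,\delta\rangle=\xi_k$. Hence each bound is attained within its representation.

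The main obstacle is the identification step. One must make sure that the node carrying the minimal representation, in the labeling used to compute $\vec{\xi}$, is really node $k$. This is delicate for $G_2$ and $F_4$: there the short and long roots sit at fixed ends of the diagram, and mislabeling would swap, for example, $10/3$ with $6$. I would check each identification in two independent ways.

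\begin{itemize}
\item \emph{Dimension check.} Evaluate the Weyl dimension formula at $\omega_k$ and confirm that it returns $7$, $26$, $27$, $56$ and $248$ respectively.
\item \emph{Quadratic form check.} Compute $\langle\omega_k,\omega_k\rangle=\tfrac12 C^{-1}_{kk}\langle\alpha_k,\alpha_k\rangle$, using Eq.~\eqref{eq:fundamental_inner} and the displayed matrices $C^{-1}$.
\end{itemize}

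The second check gives a useful consistency test in the minuscule cases $E_6$ and $E_7$, where every weight lies in the Weyl orbit of $\Lambda$. In those cases Thm.~\ref{thm:Lie_Algebraic_Uncertainty_Identity_Weight} says every weight state saturates the bound, so $\langle C_2\rangle-\langle\omega_k,\omega_k\rangle$ must equal $\xi_k$. For example, for $E_7$ one has $\langle\omega_7,\omega_7\rangle=3/2$, so $\langle C_2\rangle$ should equal $57/2$. Likewise, for the quasi-minuscule cases $G_2$, $F_4$ and $E_8$, the highest weight should be the highest short root, or the highest root for $E_8$. Its squared length should then be $2/3$, $1$ and $2$ respectively, matching $C^{-1}_{22}\cdot\tfrac13$ for $G_2$, $C^{-1}_{44}\cdot\tfrac12$ for $F_4$, and $C^{-1}_{88}$ for $E_8$. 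Once these checks pass, the theorem follows from the direct substitution above.
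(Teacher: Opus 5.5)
Your proposal is correct and is essentially the paper's argument. The paper likewise pairs the Dynkin vector of each minimal representation from Tab.~\ref{tab:exceptional_representations} with the precomputed $\vec{\xi}$ in $\vec{\lambda}^{\,\T}\vec{\xi}$, and gets optimality from saturation by a highest-weight state. Your dimension and norm checks are extra safeguards the paper does not carry out; they pass, e.g.\ $\langle\omega_4,\omega_4\rangle=1$ for $F_4$ and $\langle C_2\rangle=57/2$ for the $56$ of $E_7$, in agreement with the standard Dynkin indices.
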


In Dynkin diagram form, Thm.~\ref{thm:Exceptional_min} gives the following uncertainty relations.
\begin{align}
    &\textbf{Type $\mathrm{G}_2$}:\qquad
    \Delta^2_{\mathrm{min}}( \raisebox{-1.4ex}{\includegraphics[scale=1]{DK_G_2.pdf}} )
    \geqslant\frac{10}{3},\label{eq:min_G_2}\\
    &\textbf{Type $\mathrm{F}_4$}:\qquad
    \Delta^2_{\mathrm{min}}( \raisebox{-1.4ex}{\includegraphics[scale=0.8]{DK_F_4.pdf}})
    \geqslant11,\label{eq:min_F_4}\\
    &\textbf{Type $\mathrm{E}_6$}:\qquad
    \Delta^2_{\mathrm{min}}( \raisebox{-0.6ex}{\includegraphics[scale=0.75]{DK_E_6.pdf}})
    \geqslant16,\label{eq:min_E_6}\\
    &\textbf{Type $\mathrm{E}_7$}:\qquad
    \Delta^2_{\mathrm{min}}( \raisebox{-0.6ex}{\includegraphics[scale=0.75]{DK_E_7.pdf}})
    \geqslant27,\label{eq:min_E_7}\\
    &\textbf{Type $\mathrm{E}_8$}:\qquad
    \Delta^2_{\mathrm{min}}( \raisebox{-0.6ex}{\includegraphics[scale=0.75]{DK_E_8.pdf}})
    \geqslant58.\label{eq:min_E_8}
\end{align}
The subscript $\mathrm{min}$ indicates that the representation under consideration is the minimal representation. 
All the lower bounds displayed above are optimal, since each is attained by a highest-weight state and hence cannot be improved within the corresponding representation.

So far we have derived optimal Lie algebraic uncertainty relations across a range of representations. 
In particular, we have analyzed in detail the adjoint (see Eqs.~\eqref{eq:adj_A_n} ($\mathrm{A}_n$),~\eqref{eq:adj_B_n} ($\mathrm{B}_n$),~\eqref{eq:adj_C_n} ($\mathrm{C}_n$),~\eqref{eq:adj_D_n} ($\mathrm{D}_n$),~\eqref{eq:adj_G_2} ($\mathrm{G}_2$),~\eqref{eq:adj_F_4} ($\mathrm{F}_4$),~\eqref{eq:adj_E_6} ($\mathrm{E}_6$),~\eqref{eq:adj_E_7} ($\mathrm{E}_7$),~\eqref{eq:adj_E_8} ($\mathrm{E}_8$)), defining (see Eqs.~\eqref{eq:def_A_n} ($\mathrm{A}_n$),~\eqref{eq:def_B_n} ($\mathrm{B}_n$),~\eqref{eq:def_C_n} ($\mathrm{C}_n$),~\eqref{eq:def_D_n} ($\mathrm{D}_n$)), and minimal (see Eqs.~\eqref{eq:min_G_2} ($\mathrm{G}_2$),~\eqref{eq:min_F_4} ($\mathrm{F}_4$),~\eqref{eq:min_E_6} ($\mathrm{E}_6$),~\eqref{eq:min_E_7} ($\mathrm{E}_7$),~\eqref{eq:min_E_8} ($\mathrm{E}_8$)) representations. 
For completeness, and to provide a convenient reference for the reader, we collect in the following Tab.~\ref{tab:LAU_adj} the optimal uncertainty relations for the adjoint representation of every simple Lie algebra.

\begin{table}[h]
\centering
\begin{tblr}{
      width = \textwidth,
      colspec = {c||X[c]},
      row{1} = {bg=gray!50, font=\bfseries},
      row{2} = {bg=gray!20, font=\bfseries},
      hlines,
      vlines,
      cells = {m},
      cell{3-11}{1} = {bg=gray!20},
      cell{3-6}{2} = {bg=mGreen},
      cell{7-11}{2} = {bg=mRed},
}

\SetCell[c=2]{c} \textbf{Optimal Lie Algebraic Uncertainty Relations for Adjoint Representations}
& \\

Lie Algebra
&
Uncertainty Relation
\\

\textbf{Type $\mathrm{A}_n$} 
(see Eq.~\eqref{eq:adj_A_n})
&
$\Delta^2_{\mathrm{ad}}(\raisebox{-0.8ex}{\includegraphics[scale=0.8]{DK_A_n.pdf}})\geqslant2n$
\\

\textbf{Type $\mathrm{B}_n$}
(see Eq.~\eqref{eq:adj_B_n})
&
$\Delta^2_{\mathrm{ad}}( \raisebox{-1.5ex}{\includegraphics[scale=0.8]{DK_B_n.pdf}} )\geqslant4(n-1)$
\\

\textbf{Type $\mathrm{C}_n$}
(see Eq.~\eqref{eq:adj_C_n})
&
$\Delta^2_{\mathrm{ad}}( \raisebox{-1.4ex}{\includegraphics[scale=0.8]{DK_C_n.pdf}} )\geqslant2n$
\\

\textbf{Type $\mathrm{D}_n$}
(see Eq.~\eqref{eq:adj_D_n})
&
$\Delta^2_{\mathrm{ad}}( \raisebox{-5.8ex}{\includegraphics[scale=0.8]{DK_D_n.pdf}} )\geqslant 4n-6$
\\

\textbf{Type $\mathrm{G}_2$}
(see Eq.~\eqref{eq:adj_G_2})
&
$\Delta^2_{\mathrm{ad}}( \raisebox{-1.8ex}{\includegraphics[scale=1]{DK_G_2.pdf}} )\geqslant6$
\\

\textbf{Type $\mathrm{F}_4$}
(see Eq.~\eqref{eq:adj_F_4})
&
$\Delta^2_{\mathrm{ad}}( \raisebox{-1.6ex}{\includegraphics[scale=0.8]{DK_F_4.pdf}})\geqslant16$
\\

\textbf{Type $\mathrm{E}_6$}
(see Eq.~\eqref{eq:adj_E_6})
&
$\Delta^2_{\mathrm{ad}}( \raisebox{-0.8ex}{\includegraphics[scale=0.8]{DK_E_6.pdf}})\geqslant22$
\\

\textbf{Type $\mathrm{E}_7$}
(see Eq.~\eqref{eq:adj_E_7})
&
$\Delta^2_{\mathrm{ad}}( \raisebox{-0.8ex}{\includegraphics[scale=0.8]{DK_E_7.pdf}})\geqslant34$
\\

\textbf{Type $\mathrm{E}_8$}
(see Eq.~\eqref{eq:adj_E_8})
&
$\Delta^2_{\mathrm{ad}}( \raisebox{-0.8ex}{\includegraphics[scale=0.8]{DK_E_8.pdf}})\geqslant58$
\\
\end{tblr}
\caption{
\textbf{Optimal Adjoint Representation Uncertainty Relations for Simple Lie Algebras.}
The Dynkin diagram in each row identifies the corresponding root system, while the right-hand side gives the exact optimal lower bound for the adjoint representation. 
Here $\Delta^2_{\mathrm{ad}}$ denotes the Lie algebraic quantum uncertainty in the adjoint representation. 
All bounds are tight and are attained by highest weight states.
The classical ranges are $A_n (n\geqslant 1), B_n (n\geqslant 2), C_n (n\geqslant 1)$, and $D_n (n\geqslant 4)$; the rank-one $B_1$ case is covered by $A_1$.} 
\label{tab:LAU_adj}
\end{table} 

We have now completed the analysis of Lie algebraic uncertainty relations for all simple Lie algebras, covering both the four classical series ($\mathrm{A}_n$, $\mathrm{B}_n$, $\mathrm{C}_n$, and $\mathrm{D}_n$) and the five exceptional cases ($\mathrm{G}_2$, $\mathrm{F}_4$, $\mathrm{E}_6$, $\mathrm{E}_7$, and $\mathrm{E}_8$). 
Across these families, the general framework reduces the derivation of state-independent optimal bounds to intrinsic representation-theoretic and root system structure, thereby placing the classical and exceptional cases on the same footing. 
The resulting collection of bounds provides a complete structural picture of optimal uncertainty relations for simple Lie algebras and illustrates how representation theory and quantum uncertainty are tied together through the underlying Dynkin diagram.


\section{Applications of Lie Algebraic Uncertainty Relations}
\label{sec:entanglement}

In Sec.~\ref{subsec:semisimple_lie_algebraic_uncertainty},
we make explicit the factorwise structure of the semisimple uncertainty bound established in
Sec.~\ref{sec:Lie_Algebraic_Quantum_Uncertainty}.
Each simple factor contribution is evaluated on the corresponding state, with its lower bound given by the formulas in
Sec.~\ref{sec:Simple_Lie_Algebraic_Quantum_Uncertainty}.
We then show that the same structure leads to a multipartite entanglement criterion based on local uncertainty violations; see
Sec.~\ref{subsec:Entanglement_Local_Uncertainty_Violations}.


\subsection{Semisimple Lie Algebraic Uncertainty}
\label{subsec:semisimple_lie_algebraic_uncertainty}

We now extend the uncertainty bounds derived for simple Lie algebras to the
semisimple algebras.
Let
\begin{align}\label{eq:Semi_Simple}
\mathfrak g
=
\mathfrak g_1\oplus\cdots\oplus\mathfrak g_s
\end{align}
be a finite-dimensional complex semisimple Lie algebra decomposed into $s$ simple ideals.
Its compact real form decomposes accordingly as
\begin{align}\label{eq:semisimple_compact_real_form}
\mathfrak g_{\mathbb R}
=
\mathfrak g_{1,\mathbb R}
\oplus\cdots\oplus
\mathfrak g_{s,\mathbb R}.
\end{align}

In Eq.~\eqref{eq:semisimple_inner_product_direct_sum}, each factor carries the normalized compact inner product fixed in Eq.~\eqref{eq:normal_Killing}, so that long roots have squared length $2$.
We equip $\g_{\mathbb R}$ with the orthogonal direct sum of the positive invariant inner products on its simple factors. Thus, for
$X=X_1+\cdots+X_s, Y=Y_1+\cdots+Y_s$,
where $X_k,Y_k\in\g_{k,\mathbb R}$, we set
\begin{align}\label{eq:semisimple_inner_product_direct_sum}
(X,Y)_{\g_{\mathbb R}}
\coloneqq
\sum_{k=1}^{s}
(X_k,Y_k)_{\g_{k,\mathbb R}}.
\end{align}
Consequently,
\begin{align}
(X_j,Y_k)_{\g_{\mathbb R}}
=
0,
\end{align}
whenever $j\neq k$. Here,
$X_j\in\g_{j,\mathbb R}$ and
$Y_k\in\g_{k,\mathbb R}$.
Hence distinct simple ideals are mutually orthogonal.
Let $V(\Lambda)$ be a finite-dimensional irreducible representation of $\g$ (see Eq.~\eqref{eq:Semi_Simple}). 
The direct-sum decomposition of $\g$ induces the
tensor product decomposition
\begin{align}\label{eq:semisimple_irrep_tensor_product_uncertainty}
V(\Lambda)
\cong
V_1(\Lambda_1)\otimes\cdots\otimes V_s(\Lambda_s),
\end{align}
where $V_k(\Lambda_k)$ is the irreducible highest-weight representation of
$\mathfrak g_k$ with highest weight $\Lambda_k$. Accordingly,
\begin{align}
\Lambda
=
\Lambda_1\oplus\cdots\oplus\Lambda_s.
\end{align}
If $\delta_k$ denotes the Weyl vector of $\mathfrak g_k$, then
\begin{align}
\delta
=
\delta_1\oplus\cdots\oplus\delta_s.
\end{align}
The induced inner product on the weight space of $\mathfrak{g}$ is additive:
\begin{align}\label{eq:semisimple_inner_product_additive_uncertainty}
\langle\Lambda,\delta\rangle
=
\sum_{k=1}^{s}
\langle\Lambda_k,\delta_k\rangle_k,
\end{align}
where $\langle\cdot,\cdot\rangle_k$ denotes the invariant inner product on the weight space of $\mathfrak g_k$.
For each $\mathfrak g_k$, let
$\{Q_{k,a}\}_{a=1}^{\dim\mathfrak g_k}$ be the normalized anti-Hermitian basis of $\mathfrak g_{k,\mathbb R}$ constructed in
Def.~\ref{def:Anti_Hermitian_Basis}, so that
\begin{align}\label{eq:semisimple_factor_generator_normalization}
(Q_{k,a},Q_{k,b})_{\mathfrak g_{k,\mathbb R}}
=
2\delta_{ab}.
\end{align}
In particular, the Cartan generators contained in this basis are the
orthogonalized anti-Hermitian Cartan generators of $\mathfrak g_{k,\mathbb R}$.
Under the representation Eq.~\eqref{eq:semisimple_irrep_tensor_product_uncertainty},
the embedded generator $\widetilde Q_{k,a}$ is represented by
\begin{align}\label{eq:embedded_simple_factor_generator}
\widetilde Q_{k,a}
\coloneqq
\1_1\otimes\cdots\otimes \1_{k-1}
\otimes Q_{k,a}
\otimes \1_{k+1}\otimes\cdots\otimes \1_s,
\end{align}
where $\1_j$ denotes the identity operator on $V_j(\Lambda_j)$.
Thus $\widetilde Q_{k,a}$ acts nontrivially only on $V_k(\Lambda_k)$.

The direct-sum structure of $\mathfrak g_{\mathbb R}$ separates its generators into mutually orthogonal sectors, one for each simple ideal.
Moreover, each sector acts nontrivially only on the corresponding factor of the tensor product representation. 
Accordingly, the total uncertainty of Def.~\ref{def:Lie_Algebraic_Uncertainty_Relations} has the following factorwise form. 
For anti-Hermitian generators, we have
\begin{align}
    \Delta^2_{\rho}(\widetilde{Q}_{k,a})=\Delta^2_{\rho}(\mathrm{i}\widetilde{Q}_{k,a}).
\end{align}

For a simple Lie algebra, we define the quantum uncertainty by the following expression, consistent with the definition in the preceding sections:

\begin{mydef}{Simple Lie Algebraic Uncertainty Relations}{Simple_Lie_UR}
For a quantum state $\rho$ on $V_k(\Lambda_k)$, we define the quantum uncertainty associated with the semisimple Lie algebra $\g_k$ by (see Def.~\ref{def:Lie_Algebraic_Uncertainty_Relations})
\begin{align}\label{eq:simple_factor_uncertainty_contribution}
\Delta_\rho^2(\mathfrak g_{k,\mathbb R})
\coloneqq
\frac12
\sum_{a=1}^{\dim\mathfrak g_k}
\Delta_\rho^2\!\left(\widetilde Q_{k,a}\right),
\end{align}
where $\widetilde Q_{k,a}$ is the embedded generator defined in Eq.~\eqref{eq:embedded_simple_factor_generator}.
\end{mydef}

The quantum uncertainties associated with the simple components then define the uncertainty of the semisimple Lie algebra $\g$ (see Eq.~\eqref{eq:Semi_Simple}).

\begin{mydef}{Semisimple Lie Algebraic Uncertainty Relations}{Semisimple_Lie_UR}
For a semisimple Lie algebra $\g$ with the decomposition in Eq.~\eqref{eq:Semi_Simple}, we define its quantum uncertainty as the sum of the uncertainties associated with its simple components
\begin{align}\label{eq:semisimple_uncertainty_definition}
    \Delta_\rho^2(\mathfrak g_{\mathbb R})
    \coloneqq
    \frac12
    \sum_{k=1}^{s}\sum_{a=1}^{\dim\mathfrak g_k}
    \Delta_\rho^2\!\left(\widetilde Q_{k,a}\right)
    =
    \sum_{k=1}^{s}
    \Delta_\rho^2(\mathfrak g_{k,\mathbb R}),
\end{align}
where each component uncertainty is defined in Def.~\ref{def:Simple_Lie_UR}.
\end{mydef}

Equation~\eqref{eq:semisimple_uncertainty_definition} reduces the semisimple uncertainty relation to the corresponding uncertainty optimization on each simple part. 
Importantly, no product-state assumption is required. 
Even when $\rho$ is entangled across the representation factors, the contribution of $\mathfrak g_{k,\mathbb R}$ depends only on the reduced state on $V_k(\Lambda_k)$. 
The simple-factor bounds may therefore be applied independently.

\begin{mythm}{Semisimple Lie Algebraic Uncertainty Relations}
{semisimple_LAU_bound}
Let
$\mathfrak g
=
\mathfrak g_1\oplus\cdots\oplus\mathfrak g_s$
be a complex semisimple Lie algebra, and let
$V(\Lambda)
\cong
V_1(\Lambda_1)\otimes\cdots\otimes V_s(\Lambda_s)$
be an irreducible representation of $\mathfrak g$, with highest weight
$\Lambda
=
\Lambda_1\oplus\cdots\oplus\Lambda_s$.
Suppose that, for each $k=1,\ldots,s$, every quantum state $\sigma_k$ on
$V_k(\Lambda_k)$ satisfies the simple-factor uncertainty bound
\begin{align}\label{eq:simple_factor_assumed_uncertainty_bound}
\Delta_{\sigma_k}^2(\mathfrak g_{k,\mathbb R})
\geqslant
\mB_k(\Lambda_k).
\end{align}
Then every quantum state $\rho$ on $V(\Lambda)$ satisfies
\begin{align}\label{eq:semisimple_LAU_bound_componentwise}
\Delta_\rho^2(\mathfrak g_{\mathbb R})
\geqslant
\sum_{k=1}^{s}
\mB_k(\Lambda_k).
\end{align}

In particular, applying the simple Lie algebraic uncertainty bound $\mB_k(\Lambda_k)=2\langle\Lambda_k,\delta_k\rangle_k$ (see Thm.~\ref{thm:Lie_Algebraic_Uncertainty_Relations}),
where $\delta_k$ is the Weyl vector of $\mathfrak g_k$, gives $\Delta_\rho^2(\mathfrak g_{\mathbb R})\geqslant2\sum_{k=1}^{s}\langle\Lambda_k,\delta_k\rangle_k=2\langle\Lambda,\delta\rangle$, where $\delta=\delta_1\oplus\cdots\oplus\delta_s$.
\end{mythm}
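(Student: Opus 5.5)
The plan is to reduce each simple-factor contribution in Eq.~\eqref{eq:semisimple_uncertainty_definition} to a quantity that depends only on a reduced state, and then apply the assumed bound \eqref{eq:simple_factor_assumed_uncertainty_bound} factor by factor. Fix $\rho$ on $V(\Lambda)\cong\bigotimes_k V_k(\Lambda_k)$, and let $\rho_k\coloneqq\Tr_{\neq k}\rho$ be its reduced state on $V_k(\Lambda_k)$.

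\emph{Step 1 (locality of variances).} By Eq.~\eqref{eq:embedded_simple_factor_generator}, $\widetilde Q_{k,a}$ acts as $Q_{k,a}$ on the $k$-th factor and as the identity elsewhere. Hence $\Tr[\rho\,\widetilde Q_{k,a}]=\Tr[\rho_k Q_{k,a}]$. Likewise $\widetilde Q_{k,a}^\dagger\widetilde Q_{k,a}=\1\otimes\cdots\otimes Q_{k,a}^\dagger Q_{k,a}\otimes\cdots\otimes\1$, so its expectation in $\rho$ equals $\Tr[\rho_k Q_{k,a}^\dagger Q_{k,a}]$. Using Def.~\ref{def:Variance_Anti_Hermitian_Operators} with $\Tr[\rho\,\cdot]$, this gives $\Delta^2_\rho(\widetilde Q_{k,a})=\Delta^2_{\rho_k}(Q_{k,a})$.

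\emph{Step 2 (summing and applying the bounds).} Summing over $a$ with the factor $\tfrac12$ gives $\Delta^2_\rho(\g_{k,\mathbb R})=\Delta^2_{\rho_k}(\g_{k,\mathbb R})$. The hypothesis is stated for every quantum state $\sigma_k$ on $V_k(\Lambda_k)$, and $\rho_k$ is such a state, possibly mixed. If the hypothesis is read only for pure states, it still covers $\rho_k$ by the concavity remark after Thm.~\ref{thm:Lie_Algebraic_Uncertainty_Relations}. Therefore each term is at least $\mB_k(\Lambda_k)$, and summing over $k$ in Eq.~\eqref{eq:semisimple_uncertainty_definition} yields Eq.~\eqref{eq:semisimple_LAU_bound_componentwise}. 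No product-state assumption enters, because only the marginals appear. This reduction to marginals is the only conceptual point and is the step I expect to need the most care, in particular checking that both terms of the variance, not just the mean, are local.

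\emph{Step 3 (the specialization).} Setting $\mB_k=2\langle\Lambda_k,\delta_k\rangle_k$, as supplied by Thm.~\ref{thm:Lie_Algebraic_Uncertainty_Relations} applied to $\g_k$ and $V_k(\Lambda_k)$, the sum becomes $2\sum_k\langle\Lambda_k,\delta_k\rangle_k$. This equals $2\langle\Lambda,\delta\rangle$ by the additivity \eqref{eq:semisimple_inner_product_additive_uncertainty}, since the positive roots of $\g$ are the disjoint union of those of the $\g_k$ and so $\delta=\oplus_k\delta_k$.

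Optimality is not part of the statement. If needed, it follows by taking $\rho=\bigotimes_k|\psi_{\Lambda_k}\rangle\langle\psi_{\Lambda_k}|$, a product of highest-weight states, whose marginals each saturate their factor bound.
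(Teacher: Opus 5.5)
Your proposal is correct and follows essentially the same route as the paper: you localize each variance $\Delta_\rho^2(\widetilde Q_{k,a})$ to the reduced state $\rho_k$, apply the factor bound (extended to mixed marginals by concavity), sum over $k$, and obtain the specialization from the additivity $\langle\Lambda,\delta\rangle=\sum_k\langle\Lambda_k,\delta_k\rangle_k$. The paper compresses the localization into a single line; you check explicitly that both the second moment and the mean of $\widetilde Q_{k,a}$ are local, which is the same argument spelled out in more detail.
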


\begin{proof}
For each $k$, let 
$\rho_k
\coloneqq
\Tr_{\widehat{k}} \left(\rho\right)$
be the reduced state on $V_k(\Lambda_k)$. 
Since $\widetilde Q_{k,a}$ acts nontrivially only on the $k$-th tensor factor,
\begin{align}
\Delta_\rho^2\!\left(\widetilde Q_{k,a}\right)
=
\Delta_{\rho_k}^2(Q_{k,a}).
\end{align}
Consequently,
\begin{align}
\Delta_\rho^2(\mathfrak g_{k,\mathbb R})
=
\Delta_{\rho_k}^2(\mathfrak g_{k,\mathbb R})
\geqslant
\mB_k(\Lambda_k).
\end{align}
Summing over $k$ 
proves 
Eq.~\eqref{eq:semisimple_LAU_bound_componentwise}.
Variance concavity extends 
$\mB_k(\Lambda_k)
=
2\langle\Lambda_k,\delta_k\rangle_k$ to mixed marginals. 
The additive weight metric (see Eq.~\eqref{eq:Inner_root}) gives $2\sum_{k=1}^{s}\langle\Lambda_k,\delta_k\rangle_k$, attained by a product of highest weight states.
\end{proof}

Theorem~\ref{thm:semisimple_LAU_bound} gives the lower bound in an intrinsic highest-weight form. 
For explicit calculations, each simple-factor contribution may be read directly from the Cartan matrix (see Def.~\ref{def:Cartan_Matrix}) and Dynkin coefficients (see Def.~\ref{def:Dynkin_Coefficients}) formulas established for simple Lie algebras in Eq.~\eqref{eq:Lie_Algebraic_UR_Simplified}.

\begin{mycor}{Simplified Semisimple Lie Algebraic Uncertainty Relations}
{semisimple_Dynkin_coordinate_bound}
Let $\g=\bigoplus_{k=1}^{s}\mathfrak g_k$ be a complex semisimple Lie algebra with compact real form $\g_{\mathbb R}$, and let $V(\Lambda)$ be an irreducible module with highest weight $\Lambda=\bigoplus_{k=1}^{s}\Lambda_k$.
For each simple factor $\mathfrak g_k$, let $C^{(k)}$ denote its Cartan matrix, $\vec{\lambda}^{(k)}$ the Dynkin-label vector of $\Lambda_k$, and $\vec{\xi}^{(k)}$ the corresponding coefficient vector determined by $C^{(k)}$, as tabulated in Tab.~\ref{tab:bounds} and~\ref{tab:exceptional_representations}.
Then Cor.~\ref{cor:Simplified_Lie_Algebraic_Uncertainty_Relations} gives $\mB_k(\Lambda_k)=(\vec{\lambda}^{(k)})^{\T}\vec{\xi}^{(k)}$, and every state $\rho$ on $V(\Lambda)$ satisfies
\begin{align}\label{eq:semisimple_Dynkin_coordinate_bound}
\Delta_\rho^2(\mathfrak g_{\mathbb R})
\geqslant
\sum_{k=1}^{s}
\bigl(\vec{\lambda}^{(k)}\bigr)^{\T}
\vec{\xi}^{(k)}.
\end{align}
Defining the vectors, $\vec{\lambda}\coloneqq\vec{\lambda}^{(1)}\oplus\cdots\oplus\vec{\lambda}^{(s)}$ and $\vec{\xi}\coloneqq\vec{\xi}^{(1)}\oplus\cdots\oplus\vec{\xi}^{(s)}$, we can equivalently write
\begin{align}\label{eq:semisimple_bound_lambda_xi}
\Delta_\rho^2(\mathfrak g_{\mathbb R})
\geqslant
\vec{\lambda}^{\,\T}\vec{\xi}  .
\end{align}
\end{mycor}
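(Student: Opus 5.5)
The plan is to reduce Cor.~\ref{cor:semisimple_Dynkin_coordinate_bound} to Thm.~\ref{thm:semisimple_LAU_bound} by identifying each abstract factor bound $\mB_k(\Lambda_k)$ with the explicit Dynkin-coordinate expression. The first step is to fix, for each simple ideal $\g_k$, its own Cartan subalgebra $\mathfrak h_k$, simple roots $\Pi^{(k)}$, Cartan matrix $C^{(k)}$, and normalized form $\kappa$ from Eq.~\eqref{eq:normal_Killing} (long roots of squared length $2$ on each factor). Then check that the inner product on $\g_{\mathbb R}$ in Eq.~\eqref{eq:semisimple_inner_product_direct_sum} restricts on each factor to exactly the form used in Sec.~\ref{sec:Simple_Lie_Algebraic_Quantum_Uncertainty}. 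This matters because the vectors $\vec{\xi}^{(k)}$ in Tab.~\ref{tab:bounds} were computed under that normalization.

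Next, apply Thm.~\ref{thm:Lie_Algebraic_Uncertainty_Relations} to each $\g_k$ acting on $V_k(\Lambda_k)$. This gives $\mB_k(\Lambda_k)=2\langle\Lambda_k,\delta_k\rangle_k$ for pure states, and concavity extends it to mixed states, as remarked after that theorem. By Thm.~\ref{thm:Explicit_Uncertainty_Bound} and Cor.~\ref{cor:Simplified_Lie_Algebraic_Uncertainty_Relations}, this quantity equals $(\vec{\lambda}^{(k)})^{\T}C^{(k)-1}\vec{\alpha}^{(k)}=(\vec{\lambda}^{(k)})^{\T}\vec{\xi}^{(k)}$. The ingredients are the identity $\delta_k=\sum_i\omega^{(k)}_i$ and the formula $\langle\omega_i,\omega_j\rangle=\tfrac12 C^{-1}_{ij}\langle\alpha_j,\alpha_j\rangle$, both applied within the single factor.

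Substituting these values of $\mB_k$ into Eq.~\eqref{eq:semisimple_LAU_bound_componentwise} of Thm.~\ref{thm:semisimple_LAU_bound} yields Eq.~\eqref{eq:semisimple_Dynkin_coordinate_bound}. The theorem already handles entangled $\rho$ through the reduced states $\rho_k$. The block form Eq.~\eqref{eq:semisimple_bound_lambda_xi} then follows because the standard inner product of the concatenated vectors $\vec{\lambda}=\bigoplus_k\vec{\lambda}^{(k)}$ and $\vec{\xi}=\bigoplus_k\vec{\xi}^{(k)}$ splits as a sum over blocks. Equivalently, the Cartan matrix of $\g$ is the block-diagonal matrix $\bigoplus_k C^{(k)}$, so its inverse is also block diagonal.

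The main obstacle is the normalization bookkeeping in the first step. The paper defines $\kappa$ factorwise with $h^\vee_s$, and a direct-sum Killing form would weight the factors differently. I would first verify that $\langle\cdot,\cdot\rangle$ on $\mathfrak h^*=\bigoplus_k\mathfrak h_k^*$ is the orthogonal sum of the factor forms $\langle\cdot,\cdot\rangle_k$, via Lem.~\ref{lem:Killing_Form_Identification} applied blockwise with $t_\alpha\in\mathfrak h_k$ for $\alpha\in\mathfrak h_k^*$. Once that holds, Eq.~\eqref{eq:semisimple_inner_product_additive_uncertainty} holds and every tabulated $\vec{\xi}^{(k)}$ applies unchanged.

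For optimality, the product state of highest-weight vectors $\ket{\psi_{\Lambda_1}}\otimes\cdots\otimes\ket{\psi_{\Lambda_s}}$ has marginals that each saturate their factor bound, so the bound is attained.
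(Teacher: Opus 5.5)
Your proposal is correct and follows essentially the paper's route: the paper obtains this corollary by inserting the simple-factor values $\mB_k(\Lambda_k)=2\langle\Lambda_k,\delta_k\rangle_k=(\vec{\lambda}^{(k)})^{\T}\vec{\xi}^{(k)}$ from Cor.~\ref{cor:Simplified_Lie_Algebraic_Uncertainty_Relations} into Thm.~\ref{thm:semisimple_LAU_bound}, then concatenating the block vectors. Your checks on the factorwise normalization of $\kappa$ and on the orthogonal splitting of $\langle\cdot,\cdot\rangle$ over $\mathfrak h^*=\bigoplus_k\mathfrak h_k^*$ make explicit what the paper assumes, and your optimality remark is correct, although it goes beyond what this corollary asserts.
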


The uncertainty bound for a semisimple Lie algebra is obtained by adding the contributions of its simple factors, listed in Tabs.~\ref{tab:bounds}
and~\ref{tab:exceptional_representations}.
To illustrate this construction, consider compact semisimple algebras with two simple factors of type-A. 
Let $V_k$ and $W_k$ carry the defining and adjoint representations of $\mathfrak{su}(k+1)$, respectively. 
All states below are assumed to be normalized.

In the defining representation of each factor, every state $\psi\in V_1\otimes V_2$ satisfies
\begin{align}
 \Delta_\psi^2
\left(\mathfrak{su}(2)\oplus\mathfrak{su}(3)\right)
&\geqslant
\underbrace{\min_{\substack{\phi_1\in V_1 }}\Delta^2_{\mathrm{def}}( \raisebox{-0.6ex}{\includegraphics[scale=0.8]{DK_B_1.pdf}} )}_{\text{Type $\mathrm{A}_1$}}
+
\underbrace{\min_{\substack{\phi_1\in V_2}}\Delta^2_{\mathrm{def}}( \raisebox{-0.8ex}{\includegraphics[scale=0.8]{DK_A_2.pdf}} )}_{\text{Type $\mathrm{A}_2$}}
=1+2=3.
\end{align}
The same construction applied to $\mathfrak{su}(3)\oplus\mathfrak{su}(4)$ gives
\begin{align}
 \Delta_{\psi}^2
\left(\mathfrak{su}(3)\oplus\mathfrak{su}(4)\right)
&\geqslant
\underbrace{\min_{\substack{\phi_1\in V_2}}\Delta^2_{\mathrm{def}}( \raisebox{-0.8ex}{\includegraphics[scale=0.8]{DK_A_2.pdf}} )}_{\text{Type $\mathrm{A}_2$}}
+
\underbrace{\min_{\substack{\phi_1\in V_3}}\Delta^2_{\mathrm{def}}( \raisebox{-0.8ex}{\includegraphics[scale=0.8]{DK_A_3.pdf}} )}_{\text{Type $\mathrm{A}_3$}}
=2+3=5.
\end{align}
Passing to the adjoint representations changes the contribution of each simple factor. 
On $W_1\otimes W_2$, the bound becomes
\begin{align}
 \Delta_\psi^2
\left(\mathfrak{su}(2)\oplus\mathfrak{su}(3)\right)
&\geqslant
\underbrace{\min_{\substack{\phi_1\in W_1}}\Delta^2_{\mathrm{ad}}( \raisebox{-0.6ex}{\includegraphics[scale=0.8]{DK_B_1.pdf}} )}_{\text{Type $\mathrm{A}_1$}}
+
\underbrace{\min_{\substack{\phi_1\in W_2}}\Delta^2_{\mathrm{ad}}( \raisebox{-0.8ex}{\includegraphics[scale=0.8]{DK_A_2.pdf}} )}_{\text{Type $\mathrm{A}_2$}}
=2+4=6.
\end{align}
Replacing the second factor by $\mathfrak{su}(4)$ yields
\begin{align}
 \Delta_\psi^2
\left(\mathfrak{su}(2)\oplus\mathfrak{su}(4)\right)
&\geqslant
\underbrace{\min_{\substack{\phi_1\in W_1}}\Delta^2_{\mathrm{ad}}( \raisebox{-0.6ex}{\includegraphics[scale=0.8]{DK_B_1.pdf}} )}_{\text{Type $\mathrm{A}_1$}}
+
\underbrace{\min_{\substack{\phi_1\in W_3}}\Delta^2_{\mathrm{ad}}( \raisebox{-0.8ex}{\includegraphics[scale=0.8]{DK_A_3.pdf}} )}_{\text{Type $\mathrm{A}_3$}}
=2+6=8.
\end{align}
The construction also accommodates different representations of the two simple factors. 
Combining the defining representation of $\mathfrak{su}(2)$ with the adjoint representation of $\mathfrak{su}(3)$ gives, 
\begin{align}
 \Delta_\psi^2
\left(\mathfrak{su}(2)\oplus\mathfrak{su}(3)\right)
&\geqslant
\underbrace{\min_{\substack{\phi_1\in V_1}}\Delta^2_{\mathrm{def}}( \raisebox{-0.6ex}{\includegraphics[scale=0.8]{DK_B_1.pdf}} )}_{\text{Type $\mathrm{A}_1$}}
+
\underbrace{\min_{\substack{\phi_2\in W_2}}\Delta^2_{\mathrm{ad}}( \raisebox{-0.8ex}{\includegraphics[scale=0.8]{DK_A_2.pdf}} )}_{\text{Type $\mathrm{A}_2$}}
=1+4=5.
\end{align}
Interchanging the representation choices yields, 
\begin{align}
 \Delta_\psi^2
\left(\mathfrak{su}(2)\oplus\mathfrak{su}(3)\right)
&\geqslant
\underbrace{\min_{\substack{\phi_1\in W_1}}\Delta^2_{\mathrm{ad}}( \raisebox{-0.6ex}{\includegraphics[scale=0.8]{DK_B_1.pdf}} )}_{\text{Type $\mathrm{A}_1$}}
+
\underbrace{\min_{\substack{\phi_2\in V_2}}\Delta^2_{\mathrm{def}}( \raisebox{-0.8ex}{\includegraphics[scale=0.8]{DK_A_2.pdf}} )}_{\text{Type $\mathrm{A}_2$}}
=2+2=4.
\end{align}
Likewise, combining the defining representation of $\mathfrak{su}(3)$ with the adjoint representation of $\mathfrak{su}(4)$ gives, for every normalized state $\psi\in V_2\otimes W_3$ 
\begin{align}
 \Delta_\psi^2
\left(\mathfrak{su}(3)\oplus\mathfrak{su}(4)\right)
&\geqslant
\underbrace{\min_{\substack{\phi_1\in V_2}}\Delta^2_{\mathrm{def}}( \raisebox{-0.8ex}{\includegraphics[scale=0.8]{DK_A_2.pdf}} )}_{\text{Type $\mathrm{A}_2$}}
+
\underbrace{\min_{\substack{\phi_2\in W_3}}\Delta^2_{\mathrm{ad}}( \raisebox{-0.8ex}{\includegraphics[scale=0.8]{DK_A_3.pdf}} )}_{\text{Type $\mathrm{A}_3$}}
=2+6=8.
\end{align}
With the representation choices reversed, the bound becomes 
\begin{align}
 \Delta_\psi^2
\left(\mathfrak{su}(3)\oplus\mathfrak{su}(4)\right)
&\geqslant
\underbrace{\min_{\substack{\phi_1\in W_2}}\Delta^2_{\mathrm{ad}}( \raisebox{-0.8ex}{\includegraphics[scale=0.8]{DK_A_2.pdf}} )}_{\text{Type $\mathrm{A}_2$}}
+
\underbrace{\min_{\substack{\phi_2\in V_3}}\Delta^2_{\mathrm{def}}( \raisebox{-0.8ex}{\includegraphics[scale=0.8]{DK_A_3.pdf}} )}_{\text{Type $\mathrm{A}_3$}}
=4+3=7.
\end{align}

The same construction extends to simple factors from different Lie algebras. 
Consider the compact semisimple algebra,
\begin{align}
    \mathfrak{su}(3)\oplus\mathfrak{so}(8)\oplus\mathfrak e_6,
\end{align}
whose factors are of types $A_2$, $D_4$ and $E_6$, respectively. 
Let $V_1$ and $V_2$ carry the defining representations of $\mathfrak{su}(3)$ and $\mathfrak{so}(8)$, and let $V_3$ carry a fixed $27$-dimensional fundamental representation of $\mathfrak e_6$. 
The corresponding adjoint representation spaces are denoted by $W_1$, $W_2$ and $W_3$. 
Throughout these examples, the subscripts $\mathrm{def}$ and $\mathrm{ad}$ distinguish the representations carried by $V_k$ and $W_k$, with $\mathrm{def}$ also used for the chosen fundamental representation of $\mathfrak e_6$. 
Each minimum is taken independently over normalized states in the indicated factor space; 
the resulting bounds apply to all normalized states in the full tensor product, including entangled states.

On $V_1\otimes V_2\otimes V_3$, the three contributions add to give
\begin{align}
 \Delta_\psi^2
\left(\mathfrak{su}(3)\oplus\mathfrak{so}(8)\oplus \mathfrak{e}_6\right)
&\geqslant
\underbrace{\min_{\substack{\phi_1\in V_1}}\Delta^2_{\mathrm{def}}( \raisebox{-0.5ex}{\includegraphics[scale=0.6]{DK_A_2.pdf}} )}_{\text{Type $\mathrm{A}_2$}}
+
\underbrace{\min_{\substack{\phi_2\in V_2}}\Delta^2_{\mathrm{def}}( \raisebox{-3.8ex}{\includegraphics[scale=0.6]{DK_D_4.pdf}} )}_{\text{Type $\mathrm{D}_4$}}
+
\underbrace{\min_{\substack{\phi_3\in V_3}}\Delta^2_{\mathrm{def}}(\raisebox{-0.6ex}{\includegraphics[scale=0.75]{DK_E_6.pdf}})}_{\text{Type $\mathrm{E}_6$}}\notag\\
&=2+6+16=24.
\end{align}
Passing to the adjoint representation of each factor yields
\begin{align}
\Delta_\psi^2
\left(
\mathfrak{su}(3)\oplus\mathfrak{so}(8)\oplus\mathfrak{e}_6
\right)
&\geqslant
\underbrace{
\min_{\substack{\phi_1\in W_1}}
\Delta_{\mathrm{ad}}^2
\bigl( \raisebox{-0.5ex}{\includegraphics[scale=0.6]{DK_A_2.pdf}} \bigr)
}_{\text{Type $\mathrm{A}_2$}}
+
\underbrace{
\min_{\substack{\phi_2\in W_2}}
\Delta_{\mathrm{ad}}^2
\bigl( \raisebox{-3.8ex}{\includegraphics[scale=0.6]{DK_D_4.pdf}} \bigr)
}_{\text{Type $\mathrm{D}_4$}}
+
\underbrace{
\min_{\substack{\phi_3\in W_3}}
\Delta_{\mathrm{ad}}^2
\bigl(\raisebox{-0.6ex}{\includegraphics[scale=0.75]{DK_E_6.pdf}}\bigr)
}_{\text{Type $\mathrm{E}_6$}}
\notag\\
&=4+10+22=36.
\end{align}

The representation of each factor can also be chosen independently. 
Retaining $V_1$ while using the adjoint representations of the other two factors gives
\begin{align}
\label{eq:A2_D4_E6_mixed_bound}
\Delta_\psi^2
\left(
\mathfrak{su}(3)\oplus\mathfrak{so}(8)\oplus\mathfrak e_6
\right)
&\geqslant
\underbrace{
\min_{\substack{\phi_1\in V_1}}
\Delta_{\mathrm{def}}^2
\bigl( \raisebox{-0.5ex}{\includegraphics[scale=0.6]{DK_A_2.pdf}} \bigr)
}_{\text{Type $A_2$}}
+
\underbrace{
\min_{\substack{\phi_2\in W_2}}
\Delta_{\mathrm{ad}}^2
\bigl( \raisebox{-3.8ex}{\includegraphics[scale=0.6]{DK_D_4.pdf}} \bigr)
}_{\text{Type $D_4$}}
+
\underbrace{
\min_{\substack{\phi_3\in W_3}}
\Delta_{\mathrm{ad}}^2
\bigl( \raisebox{-0.6ex}{\includegraphics[scale=0.75]{DK_E_6.pdf}} \bigr)
}_{\text{Type $E_6$}}
\notag\\
&=2+10+22=34.
\end{align}

Alternatively, choosing the adjoint representation for only the first factor gives
\begin{align}
\label{eq:A2_D4_E6_ad_def_def_bound}
\Delta_\psi^2
\left(
\mathfrak{su}(3)\oplus\mathfrak{so}(8)\oplus\mathfrak e_6
\right)
&\geqslant
\underbrace{
\min_{\substack{\phi_1\in W_1}}
\Delta_{\mathrm{ad}}^2
\bigl( \raisebox{-0.5ex}{\includegraphics[scale=0.6]{DK_A_2.pdf}} \bigr)
}_{\text{Type $A_2$}}
+
\underbrace{
\min_{\substack{\phi_2\in V_2}}
\Delta_{\mathrm{def}}^2
\bigl( \raisebox{-3.8ex}{\includegraphics[scale=0.6]{DK_D_4.pdf}} \bigr)
}_{\text{Type $D_4$}}
+
\underbrace{
\min_{\substack{\phi_3\in V_3}}
\Delta_{\mathrm{def}}^2
\bigl(\raisebox{-0.6ex}{\includegraphics[scale=0.75]{DK_E_6.pdf}}\bigr)
}_{\text{Type $E_6$}}
\notag\\
&=4+6+16=26.
\end{align}
Placing the adjoint representation on the second factor instead leads to
\begin{align}
\label{eq:A2_D4_E6_def_ad_def_bound}
\Delta_\psi^2
\left(
\mathfrak{su}(3)\oplus\mathfrak{so}(8)\oplus\mathfrak e_6
\right)
&\geqslant
\underbrace{
\min_{\substack{\phi_1\in V_1}}
\Delta_{\mathrm{def}}^2
\bigl( \raisebox{-0.5ex}{\includegraphics[scale=0.6]{DK_A_2.pdf}} \bigr)
}_{\text{Type $A_2$}}
+
\underbrace{
\min_{\substack{\phi_2\in W_2}}
\Delta_{\mathrm{ad}}^2
\bigl( \raisebox{-3.8ex}{\includegraphics[scale=0.6]{DK_D_4.pdf}} \bigr)
}_{\text{Type $D_4$}}
+
\underbrace{
\min_{\substack{\phi_3\in V_3}}
\Delta_{\mathrm{def}}^2
\bigl(\raisebox{-0.6ex}{\includegraphics[scale=0.75]{DK_E_6.pdf}}\bigr)
}_{\text{Type $E_6$}}
\notag\\
&=2+10+16=28.
\end{align}
Finally, choosing it for the exceptional factor yields
\begin{align}
\label{eq:A2_D4_E6_def_def_ad_bound}
\Delta_\psi^2
\left(
\mathfrak{su}(3)\oplus\mathfrak{so}(8)\oplus\mathfrak e_6
\right)
&\geqslant
\underbrace{
\min_{\substack{\phi_1\in V_1}}
\Delta_{\mathrm{def}}^2
\bigl( \raisebox{-0.5ex}{\includegraphics[scale=0.6]{DK_A_2.pdf}} \bigr)
}_{\text{Type $A_2$}}
+
\underbrace{
\min_{\substack{\phi_2\in V_2}}
\Delta_{\mathrm{def}}^2
\bigl( \raisebox{-3.8ex}{\includegraphics[scale=0.6]{DK_D_4.pdf}} \bigr)
}_{\text{Type $D_4$}}
+
\underbrace{
\min_{\substack{\phi_3\in W_3}}
\Delta_{\mathrm{ad}}^2
\bigl( \raisebox{-0.6ex}{\includegraphics[scale=0.75]{DK_E_6.pdf}}\bigr)
}_{\text{Type $E_6$}}
\notag\\
&=2+6+22=30.
\end{align}
Each bound is sharp and is attained by a tensor product of highest weight states of the corresponding factor representations. 
Concavity of the variance extends these bounds to arbitrary density operators on the same representation spaces.
These examples illustrate how the simple factors and their representations determine the minimum quantum uncertainty through additive contributions evaluated on the reduced states. 
The next subsection uses these local uncertainty bounds to derive an entanglement criterion for multipartite systems.


\subsection{Entanglement Detection via Violations of Local Uncertainty}
\label{subsec:Entanglement_Local_Uncertainty_Violations}

A central application of variance-based uncertainty relations is entanglement detection through the violation of local uncertainty bounds~\cite{PhysRevA.68.032103}. 
The underlying mechanism is simple and rests on two structural properties of variance. 
The first is concavity under classical probabilistic mixing: 
statistical uncertainty cannot be reduced below the average uncertainty of the constituent states.
Thus, for any decomposition $\rho=\sum_i p_i\rho_i$ and any observable $\mO$, one has
\begin{align}
\Delta_\rho^2(\mO)
\geqslant
\sum_i p_i\Delta_{\rho_i}^2(\mO).
\end{align}
The second property is the additivity of local variances on product states. 
For a bipartite product state $\rho\coloneqq\sigma\otimes\tau$ and a sum of local observables
\begin{align}
    \mO
    \coloneqq 
    A\otimes \1
    +
    \1\otimes B ,
\end{align}
the covariance term vanishes, and hence
\begin{align}
    \Delta_{\rho}^2(\mO)
    =
    \Delta_{\sigma}^2(A)
    +
    \Delta_{\tau}^2(B).
\end{align}

These two facts immediately lead to an entanglement criterion.
Let $\{A_k\}$ and $\{B_k\}$ be sets of local observables satisfying local uncertainty relations
\begin{align}
    \sum_k \Delta_{\rho}^2(A_k)\geqslant \mB_1,
\end{align}
and
\begin{align}
    \sum_k \Delta_{\rho}^2(B_k)\geqslant \mB_2,
\end{align}
for all local states. 
Then every separable state $\rho\coloneqq\sum_i p_i\, \sigma_i\otimes\tau_i$ obeys
\begin{align}
\sum_k
\Delta_\rho^2(A_k\otimes \1+\1\otimes B_k)
&\geqslant
\sum_{k, i} p_i\Delta_{\sigma_i\otimes\tau_i}^2
(A_k\otimes \1+\1\otimes B_k)\\
&=
\sum_{i} p_i
\left(
\sum_k\left(\Delta_{\sigma_i}^2(A_k)\right)
+
\sum_k\left(\Delta_{\tau_i}^2(B_k)\right)
\right)\\
&\geqslant
\sum_i p_i(\mB_1+\mB_2)\\
&=
\mB_1+\mB_2 .
\end{align}
Consequently, any violation of this local uncertainty bound certifies that the state is entangled.

We now combine the local uncertainty framework with the Lie algebraic uncertainty relations developed in this work to obtain an entanglement witness for many-body systems.
Let $V(\Lambda)$ be the representation space of the compact real Lie algebra $\g_{\mathbb R}$, and let $V(\Lambda)^{\otimes N}$ denote the corresponding $N$-system Hilbert space. 
For any single-system operator $X\in\mathrm{End}(V)$, we write $X^{(k)}$, with $k\in\{1,\ldots,N\}$, for its embedding into the $k$-th system:
\begin{align}
    X^{(k)}
    \coloneqq
    \1^{\otimes(k-1)}\otimes X \otimes \1^{\otimes(N-k)},
\end{align}
where $\1$ denotes the identity operator on $V(\Lambda)$.
To formulate the many-body witness, we lift each single-system Lie algebraic generator to the $N$-particle Hilbert space and form its collective counterpart by summing over all tensor factors.

\begin{mydef}{Collective Lie Algebraic Operators}{Collective_Operators}
Let $V(\Lambda)^{\otimes N}$ be the $N$-particle Hilbert space, and let $\{H_i,E_\alpha,F_\alpha\}$ (see Def.~\ref{def:Anti_Hermitian_Basis}) denote the anti-Hermitian single-particle generators of $\g_{\mathbb R}$ acting on $V(\Lambda)$. 
The corresponding collective operators are obtained by summing their local embeddings over all tensor factors:
\begin{align}
    \mathfrak{H}_i
    \coloneqq
    \sum_{k=1}^N H_i^{(k)}, \quad
    \mathfrak E_\alpha
    \coloneqq
    \sum_{k=1}^N E_\alpha^{(k)}, \quad
    \mathfrak F_\alpha
    \coloneqq
    \sum_{k=1}^N F_\alpha^{(k)} ,
\end{align}
with $i\in\{1,\ldots,r\}$ and $\alpha\in\Phi^+$. 
Here each sum is an operator on $\mathcal H\coloneqq V(\Lambda)^{\otimes N}$.
\end{mydef}

These collective operators generate the diagonal action of $\mathfrak g_{\mathbb R}$ on $V(\Lambda)^{\otimes N}$.
Their total collective uncertainty in a state $\rho$ is

\begin{align}
\label{eq:total_variance_coll}
\Delta_{\text{Many-Body}}^2(\g)
\coloneqq
\frac{1}{2}\left(\sum_{i=1}^{r}\Delta_{\rho}^2(\mathfrak H_i)
+
\sum_{\alpha\in\Phi^+}\left(\Delta_{\rho}^2(\mathfrak E_\alpha)
+
\Delta_{\rho}^2(\mathfrak F_\alpha)\right)\right),
\end{align}
where $r=\mathrm{rank}(\mathfrak g)$.
For any separable state on $\mH$, this quantity satisfies the following many-body uncertainty bound.

\begin{mythm}{Lie Algebraic Uncertainty Bounds for Separable Many-Body States}{LAU_Sep_Many_Body}
Any fully separable state $\rho\coloneqq\sum_i p_i\rho_i^1\otimes\cdots\otimes\rho^N_i$ on $\mH=V(\Lambda)^{\otimes N}$, with each site carrying the fixed representation $V(\Lambda)$ of $\g$, must satisfy the many-body Lie algebraic uncertainty bound
\begin{align}
\label{eq:many_body_sep}
    \Delta_{\text{Many-Body}}^2(\g)
    \geqslant N\, \langle \vec{\lambda}, \vec{\xi}\rangle.
\end{align}
Here $\vec{\lambda}$ denotes the Dynkin coefficient vector of the highest weight $\Lambda$, as defined in Eq.~\eqref{eq:Dynkin_Vector}, while $\vec{\xi}$ is the vector introduced in Eq.~\eqref{eq:xi}. 
Any violation rules out full separability and therefore certifies entanglement.
\end{mythm}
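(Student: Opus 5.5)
The plan is to combine the two structural facts recalled just before the statement: concavity of the variance under mixing, and additivity of variances for sums of local operators on product states. Together with the single-site optimal bound of Cor.~\ref{cor:Simplified_Lie_Algebraic_Uncertainty_Relations}, these should give the claim. Each collective generator is a sum of local terms, for instance $\mathfrak H_i=\sum_k H_i^{(k)}$. The quantity in Eq.~\eqref{eq:total_variance_coll} is therefore $\frac12$ times a sum of variances of such local sums. This is exactly the setting of the local uncertainty relation argument, with $N$ parties instead of two.

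\textbf{Step 1: reduction to product states.} Write $\rho=\sum_i p_i\rho_i^1\otimes\cdots\otimes\rho_i^N$. Concavity of each variance (Lem.~\ref{lem:Non_Negativity} style argument, or directly $\Delta^2_\rho(X)\geqslant\sum_ip_i\Delta^2_{\rho_i}(X)$, which holds for non-Hermitian $X$ as well) gives
$\Delta^2_{\text{Many-Body}}(\g)[\rho]\geqslant\sum_ip_i\,\Delta^2_{\text{Many-Body}}(\g)[\rho_i^1\otimes\cdots\otimes\rho_i^N]$.
It therefore suffices to treat a product state $\sigma=\sigma^1\otimes\cdots\otimes\sigma^N$.

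\textbf{Step 2: additivity on product states.} For a product state and an anti-Hermitian $X$ acting on $V(\Lambda)$, expand $\Delta^2_\sigma\bigl(\sum_kX^{(k)}\bigr)$ using Def.~\ref{def:Variance_Anti_Hermitian_Operators}. The cross terms are $\langle X^{(k)\dagger}X^{(l)}\rangle_\sigma-\langle X^{(k)\dagger}\rangle_\sigma\langle X^{(l)}\rangle_\sigma$ with $k\neq l$, and they vanish because the expectation factorizes. Hence $\Delta^2_\sigma\bigl(\sum_kX^{(k)}\bigr)=\sum_k\Delta^2_{\sigma^k}(X)$. Applying this to every $H_i,E_\alpha,F_\alpha$ and regrouping gives
$\Delta^2_{\text{Many-Body}}(\g)[\sigma]=\sum_{k=1}^N\Delta^2_{\sigma^k}(\g)$,
where each summand is the single-site Lie algebraic uncertainty of Def.~\ref{def:Lie_Algebraic_Uncertainty_Relations}.

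\textbf{Step 3: single-site bound.} Each $\sigma^k$ is a (possibly mixed) state on the irreducible module $V(\Lambda)$. By Thm.~\ref{thm:Lie_Algebraic_Uncertainty_Relations} together with the mixed-state remark after it, we have $\Delta^2_{\sigma^k}(\g)\geqslant2\langle\Lambda,\delta\rangle$. Thm.~\ref{thm:Explicit_Uncertainty_Bound} and Eq.~\eqref{eq:xi} rewrite this as $\langle\vec\lambda,\vec\xi\rangle$. Summing over $k$ and then averaging over $i$ yields $N\langle\vec\lambda,\vec\xi\rangle$, and the contrapositive is the entanglement criterion.

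The main point needing care is Step 1 for mixed local factors: concavity must be applied to the generalized variance of anti-Hermitian operators. This holds because $\Delta^2(X)=\langle X^\dagger X\rangle-|\langle X\rangle|^2$ has a linear first term and a convex subtracted term. Everything else is bookkeeping. Optionally, one can remark that the bound is attained by $\ket{\psi_\Lambda}^{\otimes N}$, since Step 2 then gives equality with $N$ times the single-site minimum.
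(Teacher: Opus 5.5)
Your proposal is correct and follows the same route as the paper. The paper omits the proof as a ``direct many-body extension of the bipartite argument'' (concavity of the variance under mixing, vanishing cross-covariances of local terms on product states, and the single-site bound $2\langle\Lambda,\delta\rangle=\langle\vec{\lambda},\vec{\xi}\rangle$), and your Steps 1--3 carry out exactly that extension, with the optional attainment by $\ket{\psi_\Lambda}^{\otimes N}$ as a correct extra.
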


The proof of Thm.~\ref{thm:LAU_Sep_Many_Body} is a direct many-body extension of the bipartite argument; the details are omitted.

We now turn to an application of Lie algebraic uncertainty relations (see Eq.~\eqref{eq:many_body_sep}) as witnesses of bound entanglement. 
In particular, we consider PPT bound entangled states, whose entanglement is not detected by the positive-partial-transpose (PPT) criterion.
As a concrete example, we use the Tiles unextendible product basis (UPB) in $\mathbb{C}^3 \otimes \mathbb{C}^3$ systems~\cite{BennettUPB,DiVincenzoUPB}.
Throughout, $\{\ket{0},\ket{1},\ket{2}\}$ denotes the computational basis of $\mathbb C^3$, and the associated product vectors are represented using the diagrammatic convention below.

\begin{align}
    \ket{\phi_1} 
    \coloneqq \quad
    \vcenter{\hbox{\includegraphics[scale=1]{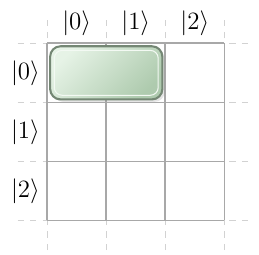}}}
    =
    \ket{0}\otimes \frac{\ket{0}-\ket{1}}{\sqrt2}.
\end{align}
The equation above illustrates our diagrammatic convention for representing product states in a bipartite $\mathbb{C}^3 \otimes \mathbb{C}^3$ quantum system. 
In this $3 \times 3$ grid layout, the rows index the computational basis states of the first subsystem ($\{|0\rangle, |1\rangle, |2\rangle\}$) from top to bottom, while the columns index the computational basis states of the second subsystem ($\{|0\rangle, |1\rangle, |2\rangle\}$) from left to right. 
A localized tile spanning multiple squares visually encodes a superposition state formed by those grid intersections; 
specifically, a green-tinted tile signifies a normalized subtraction state, meaning that the horizontal tile covering the tensor of row $|0\rangle$ with columns $|0\rangle$ and $|1\rangle$ explicitly represents the product vector $|\phi_1\rangle = |0\rangle \otimes (|0\rangle - |1\rangle)/\sqrt{2}$.

The product vectors $\phi_2$, $\phi_3$, and $\phi_4$ are introduced analogously.
In particular, $\phi_2$ is obtained by fixing the $B$ subsystem to $\ket{2}$, while allowing the $A$ subsystem to occupy the two-dimensional subspace spanned by $\ket{0}$ and $\ket{1}$. 
The relative minus sign gives the antisymmetric superposition $(\ket{0}-\ket{1})/\sqrt{2}$.
Diagrammatically, this is represented by the vertical shaded block in the $\ket{2}$ column, spanning the $\ket{0}$ and $\ket{1}$ rows.

\begin{align}
    \ket{\phi_2} 
    \coloneqq 
    \quad
    \vcenter{\hbox{\includegraphics[scale=1]{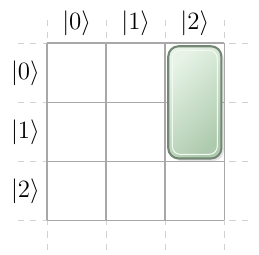}}}
    =
    \frac{\ket{0}-\ket{1}}{\sqrt2}
    \otimes\ket{2}.
\end{align}

The next vector is obtained by the complementary construction: 
instead of fixing the $B$ subsystem and superposing two neighbouring levels of $A$, we now fix the $A$ subsystem and place the relative-minus superposition on $B$. 
This rotates the shaded block from a vertical segment to a horizontal one on the grid, leading to the product vector $\ket{\phi_3}$.

\begin{align}
    \ket{\phi_3} 
    \coloneqq 
    \quad
    \vcenter{\hbox{\includegraphics[scale=1]{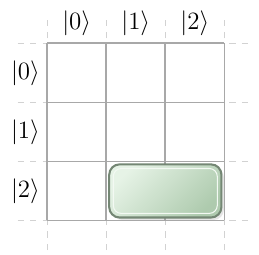}}}
    =
    \ket{2}\otimes
    \frac{\ket{1}-\ket{2}}{\sqrt2}.
\end{align}

The construction then continues by shifting the support to the left edge of the grid. 
While $\ket{\phi_3}$ fixes the $A$ subsystem to $\ket{2}$ and places the relative-minus superposition on the $B$ subsystem, the next vector $\ket{\phi_4}$ again fixes the $B$ subsystem, now to $\ket{0}$, and lets the $A$ subsystem occupy the antisymmetric superposition of $\ket{1}$ and $\ket{2}$. 
Diagrammatically, this turns the horizontal block in the bottom row into a vertical block in the $\ket{0}$ column:

\begin{align}
    \ket{\phi_4} 
    \coloneqq  
    \quad
    \vcenter{\hbox{\includegraphics[scale=1]{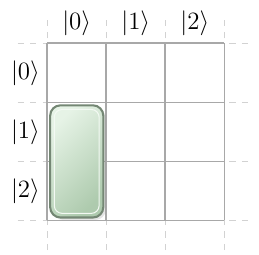}}}
    =
    \frac{\ket{1}-\ket{2}}{\sqrt2}\otimes
    \ket{0}.
\end{align}

Together, the four product vectors $\{\phi_1, \phi_2, \phi_3, \phi_4\}$ form an interlocking pattern on the $3\times 3$ grid. 
Each shaded block represents the two computational-basis components entering the corresponding antisymmetric superposition:
horizontal blocks indicate that the $A$ subsystem is fixed while the $B$ subsystem is superposed, whereas vertical blocks indicate that the $B$ subsystem is fixed while the $A$ subsystem is superposed. 
Arranged in this alternating way, the four states occupy the boundary of the grid in a mutually orthogonal, tile-like configuration, as summarized in the diagram below.
\begin{align}
     \raisebox{-13ex}{\includegraphics[scale=1]{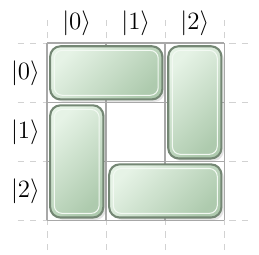}} \, .
\end{align}

The first four product states do not yet form a UPB, as their orthogonal complement still contains product vectors. 
The addition of the fifth state $\ket{\phi_5}$ removes this possibility and completes the unextendible product basis.
\begin{align}
    \ket{\phi_5} 
    \coloneqq 
    \quad
    \vcenter{\hbox{\includegraphics[scale=1]{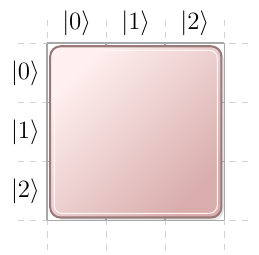}}}
    =
    \frac{\ket{0}+\ket{1}+\ket{2}}{\sqrt3}
    \otimes
    \frac{\ket{0}+\ket{1}+\ket{2}}{\sqrt3} .
\end{align}
Complementing the subtraction rule, the color red in this diagrammatic convention signifies an addition in superposition. 
A red-tinted tile indicates a uniform, positive superposition spanning across its covered basis states. 
As demonstrated by the representation of $\ket{\phi_5}$, when the red tile completely encompasses the entire $3 \times 3$ grid, it geometrically encodes the fully symmetric, normalized addition state vector $(\ket{0}+\ket{1}+\ket{2})\otimes(\ket{0}+\ket{1}+\ket{2})/3$.

Having defined the UPB, we now let $\mS$ denote the subspace generated by its five product vectors:
\begin{align}
\mS
\coloneqq \mathrm{span}\{\ket{\phi_1},\ldots,\ket{\phi_5}\}
=
\mathrm{span}\left\{ \raisebox{-13ex}{\includegraphics[scale=1]{PUB_Phi_1-4.pdf}}, \raisebox{-13ex}{\includegraphics[scale=1]{PUB_Phi_5.pdf}} \right\}.
\end{align}
This subspace sits inside the bipartite Hilbert space $\mH\coloneqq\mathbb C^3\otimes\mathbb C^3$.
The orthogonal projector onto this subspace $\mS$ is
\begin{align}\label{eq:Pi}
\Pi
\coloneqq \sum_{i=1}^{5} \ketbra{\phi_i}{\phi_i}.
\end{align}
Orthonormality immediately implies
\begin{align}
\Pi^2
=
\sum_{i,j=1}^{5}
\ketbra{\phi_i}{\phi_i}
\ketbra{\phi_j}{\phi_j}
=
\sum_{i=1}^{5}
\ketbra{\phi_i}{\phi_i}
=
\Pi,
\end{align}
and
\begin{align}
\Pi^\dagger
=
\Pi.
\end{align}
Thus $\Pi$ is an orthogonal projection, with
\begin{align}
\mathrm{rank}(\Pi)
=
\dim\mS
=
5.
\end{align}

The complementary projector
\begin{align}
\Pi^{^\perp}
\coloneqq
\1-\Pi,
\end{align}
therefore projects onto the four-dimensional subspace
\begin{align}
\mS^\perp
\coloneqq
\left\{
\ket{\psi}\in\mathcal H
\,|\,
\braket{\psi}{\phi_i}=0,\ i=1,\ldots,5
\right\},
\end{align}
and has rank
\begin{align}
\mathrm{rank}(\1-\Pi)
=
9-5
=
4.
\end{align}
This gives the normalized state
\begin{align}\label{eq:UPB_state}
\rho_{\mathrm{UPB}}
\coloneqq
\frac{1}{4}\left(\1-\Pi\right).
\end{align}
We now turn to the state of interest
\begin{align}\label{eq:noisy_UPB_state1}
\rho_p
=
(1-p)\rho_{\mathrm{UPB}}
+
p\frac{\1}{9}.
\end{align}
Since $\Pi^{\T_B}=\Pi$, both $\rho_{\mathrm{UPB}}$ and $\rho_p$ are invariant under partial transposition $\T_{B}$ on $B$. 
Thus $\rho_p^{T_B}=\rho_p\geqslant0$ for every $p\in[0,1]$, and the PPT criterion cannot detect entanglement anywhere in this family.

To detect entanglement in $\rho_p$ (see Eq.~\eqref{eq:noisy_UPB_state1}), we apply Thm.~\ref{thm:LAU_Sep_Many_Body} with $\g_{\mathbb R}=\mathfrak{su}(3)$ and its complexification $\g=\mathfrak{sl}(3,\mathbb C)$. 
We use the Gell-Mann matrices $\mathbf X_{\mathrm{ob}}=\{\lambda_a\}_{a=1}^{8}$ defined in Eqs.~\eqref{eq:Gell_Cartan}-\eqref{eq:su3_hermitian_chevalley_basis}, which are Hermitian, traceless and satisfy
\begin{align}\label{eq:Gell_matrices_norm}
    \Tr[\lambda_a\lambda_b]=2\delta_{ab}.
\end{align}
Their Gram matrix $\mathrm{G}(\mathbf X_{\mathrm{ob}})$ is therefore 
\begin{align}
    \mathrm{G}(\mathbf X_{\mathrm{Ch}})=2\1_8,
\end{align}
with inverse given by
\begin{align}
    \mathrm{G}(\mathbf X_{\mathrm{Ch}})^{-1}=\frac{1}{2}\1_8.
\end{align}
This fixes the overall factor of $1/2$ in the corresponding Lie algebraic uncertainty relation (see Eq.~\eqref{eq:observable_uncertainty}).
Throughout this subsection, $\lambda_a$ denotes a Gell-Mann matrices, while $\vec\lambda$ denotes the Dynkin coefficients vector (see Eq.~\eqref{eq:Dynkin_Vector}).

As listed in Tab.~\ref{tab:An_representations} of Sec.~\ref{subsec:A_type}, the defining representation of the type-A Lie algebra $\mathfrak{su}(3)$ has Dynkin coefficients vector 
\begin{align}
    \vec{\lambda}=
    \begin{pmatrix}
    1\\
    0
    \end{pmatrix}
\end{align}
Every single qutrit state $\phi$ therefore obeys the local uncertainty relation (see Tab.~\ref{tab:explicit_bounds})
\begin{align}\label{eq:single_qutrit_uncertainty_bound_transition}
\frac12 \sum_{a=1}^{8}
\Delta_\phi^2(\lambda_a)
=\Delta^2_{\mathrm{def}}( \raisebox{-0.8ex}{\includegraphics[scale=0.8]{DK_A_2.pdf}} )
\geqslant
2.
\end{align}

To construct an entanglement criterion from this local bound, we first consider the collective observables
\begin{align}
\lambda_a\otimes \1+\1\otimes\lambda_a,
\quad
a\in\{1,\ldots,8\}.
\end{align}
For entanglement detection, however, this fixed pairing need not be optimal: variance depends on the relative orientation of the two local observable frames.
We therefore introduce
\begin{align}
\widetilde{\lambda}_a
\coloneqq
\sum_{b=1}^{8}
O_{ab}\lambda_b.
\end{align}
where
\begin{align}
\mathrm{O}(8)
=\left\{O\in \mathrm{M}_8(\mathbb R)  \middle|
O^{\T}O
=OO^{\T}
=\1_8
\right\}.
\end{align}
Orthogonality preserves the Hilbert--Schmidt normalization 
\begin{align}\label{eq:invariance_inner}
\Tr[\widetilde{\lambda}_a\widetilde{\lambda}_b]
=
2\delta_{ab},
\end{align}
as well as the sum of invariance.

The dependence on $O$ therefore enters entirely through the cross-covariance contribution, allowing us to select the pairing that minimizes the collective uncertainty.

\begin{mydef}{Optimized Collective $\mathfrak{su}(3)$ Uncertainty}{optimized_collective_su3_uncertainty}

For a two-qutrit state $\rho$ on $\mathbb C^3\otimes\mathbb C^3$, we define the optimized collective $\mathfrak{su}(3)$ uncertainty as

\begin{align}\label{eq:optimized_collective_uncertainty}
\mathcal U_{\mathrm{coll}}^{\mathrm{opt}}(\rho)
\coloneqq
\frac12 \min_{O\in \mathrm{O}(8)} \sum_{a=1}^{8} \Delta_\rho^2  \left(\lambda_a\otimes \1
+\1\otimes \sum_{b=1}^{8} O_{ab}\lambda_b \right) .
\end{align}
\end{mydef}

\noindent
Optimization over $\mathrm{O}(8)$ extends the choice of local observable bases beyond those obtainable by $\mathrm{SU}(3)$ conjugation, while preserving the local total variances and their uncertainty bounds.
Although the resulting collective observables need not close
under commutation, the separable state bound follows solely
from the local uncertainty relations and the concavity of variance.
It therefore holds uniformly for all $O\in \mathrm{O}(8)$ and remains
valid after minimization.
A violation of this optimized uncertainty bound thus certifies entanglement.

\begin{mylem}{Entanglement Criterion from Optimized Collective Uncertainty}{qutrit_entangle_threshold}
Every separable two-qutrit state $\rho_{\mathrm{sep}}$ satisfies
\begin{align}\label{eq:qutrit_separable_threshold}
\mathcal U_{\mathrm{coll}}^{\mathrm{opt}}(\rho_{\mathrm{sep}})
\geqslant 4,
\end{align}
where $\mathcal U_{\mathrm{coll}}^{\mathrm{opt}}$ is defined in Eq.~\eqref{eq:optimized_collective_uncertainty}.
Violation of this bound detects entanglement.
\end{mylem}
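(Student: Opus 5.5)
The plan is to reduce the claim to the local $\mathfrak{su}(3)$ bound in Eq.~\eqref{eq:single_qutrit_uncertainty_bound_transition}, together with concavity of variance and additivity on product states. Fix an arbitrary $O\in\mathrm{O}(8)$ and set $\widetilde\lambda_a=\sum_b O_{ab}\lambda_b$. We prove that
\begin{align}
\frac12\sum_{a=1}^{8}\Delta_{\rho_{\mathrm{sep}}}^2\bigl(\lambda_a\otimes\1+\1\otimes\widetilde\lambda_a\bigr)\geqslant 4
\end{align}
holds uniformly in $O$. Since the bound does not depend on $O$, it survives the minimization in Eq.~\eqref{eq:optimized_collective_uncertainty}, and Eq.~\eqref{eq:qutrit_separable_threshold} follows.

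\textbf{Step 1 (product states).} Take $\rho=\sigma\otimes\tau$. The cross term $\langle\lambda_a\otimes\widetilde\lambda_a\rangle-\langle\lambda_a\rangle_\sigma\langle\widetilde\lambda_a\rangle_\tau$ vanishes, so each collective variance splits as $\Delta_\sigma^2(\lambda_a)+\Delta_\tau^2(\widetilde\lambda_a)$.

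The first sum is bounded using Eq.~\eqref{eq:single_qutrit_uncertainty_bound_transition}, which gives $\tfrac12\sum_a\Delta_\sigma^2(\lambda_a)\geqslant2$.

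For the second sum, the key point is that $\{\widetilde\lambda_a\}$ is again an orthogonal basis of $\mathrm{i}\,\mathfrak{su}(3)$ with Gram matrix $2\1_8$, by Eq.~\eqref{eq:invariance_inner}. Two routes give the same conclusion:
\begin{itemize}
\item By Thm.~\ref{thm:Basis_Independence_Orthogonal_Representation}, $\tfrac12\sum_a\Delta_\tau^2(\widetilde\lambda_a)=\Tr[\mathrm G^{-1}\mathrm{Cov}_\tau]$ equals the same Lie algebraic uncertainty as for the Gell-Mann basis.
\item Directly, $\sum_a\Delta_\tau^2(\widetilde\lambda_a)=\Tr[O\,\mathrm{Cov}_\tau(\mathbf X_{\mathrm{ob}})\,O^{\T}]=\sum_a\Delta_\tau^2(\lambda_a)$.
\end{itemize}
Either way this sum is also at least $2$, so the product-state total is at least $4$.

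\textbf{Step 2 (mixtures).} For $\rho_{\mathrm{sep}}=\sum_i p_i\,\sigma_i\otimes\tau_i$, apply concavity of the variance to each collective observable, $\Delta_\rho^2(\mO)\geqslant\sum_ip_i\Delta_{\sigma_i\otimes\tau_i}^2(\mO)$, and sum over $a$. Step 1 then gives the bound $\sum_ip_i\cdot4=4$.

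The local bound for mixed marginals also follows from concavity, as already noted after Thm.~\ref{thm:Lie_Algebraic_Uncertainty_Relations}. Alternatively, we can work only with pure product decompositions, which avoids the issue. The contrapositive gives the detection statement.

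The only step needing care is the invariance of the local total variance under the real orthogonal rotation $O$. The rotation acts on Hermitian observables with real coefficients, so the covariance matrix transforms by congruence with $O$, and its trace is unchanged. This is routine, so there is no serious obstacle in this proof.
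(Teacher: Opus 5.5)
Your proposal is correct and follows the paper's proof essentially step for step. You fix $O\in\mathrm{O}(8)$, use the invariance of the local variance sum under the orthogonal frame change to apply the local bound $2$ on each side, split the variance on product states, pass to separable mixtures by concavity, and then minimize over $O$. Your explicit congruence argument $\mathrm{Cov}_\tau(\widetilde{\mathbf X})=O\,\mathrm{Cov}_\tau(\mathbf X_{\mathrm{ob}})\,O^{\T}$ simply fills in the invariance step, which the paper asserts without detail.
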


\begin{proof}
For a fixed $O\in \mathrm{O}(8)$, write
\begin{align}
A_a(O)
\coloneqq\lambda_a\otimes \1 + \1\otimes\widetilde{\lambda}_a .
\end{align}
An orthogonal change of frame leaves the sum of local variances unchanged. Hence, for any single-qutrit state $\phi$ 
\begin{align}
\frac12\sum_{a=1}^{8}
\Delta_\phi^2(\widetilde{\lambda}_a)
=
\frac12\sum_{a=1}^{8}
\Delta_\phi^2(\lambda_a)
=\Delta^2_{\mathrm{def}}( \raisebox{-0.8ex}{\includegraphics[scale=0.8]{DK_A_2.pdf}} )
\geqslant2,
\end{align}
by Eq.~\eqref{eq:single_qutrit_uncertainty_bound_transition}.
Since local variances add on product states and variance is
concave under mixing, every separable state $\rho_{\mathrm{sep}}$
satisfies, for each fixed $O\in \mathrm{O}(8)$,
\begin{align}
\frac12\sum_{a=1}^{8}
\Delta_{\rho_{\mathrm{sep}}}^2
\left(
\lambda_a\otimes\1
+
\1\otimes\widetilde{\lambda}_a
\right)
=\Delta^2_{\mathrm{def}}( \raisebox{-0.8ex}{\includegraphics[scale=0.8]{DK_A_2.pdf}} )+\Delta^2_{\mathrm{def}}( \raisebox{-0.8ex}{\includegraphics[scale=0.8]{DK_A_2.pdf}} )
\geqslant2+2=4.
\end{align}
Minimizing over $\mathrm{O}(8)$ proves the claim.
\end{proof}

To apply this criterion to the noisy UPB states, we first express the sum of the eight Gell-Mann variances for an arbitrary single-qutrit
state $\phi$ in terms of its purity.
The eight generators $\{\lambda_a\}_{a=1}^{8}$ defined in Eqs.~\eqref{eq:Gell_Cartan}-\eqref{eq:su3_hermitian_chevalley_basis}, satisfy
\begin{align}
 \Tr [\lambda_a]=0,
 \quad
 \Tr [\lambda_a\lambda_b]=2\delta_{ab},
\end{align}
together with the quadratic identity
\begin{align}\label{eq:sum_Gell_mann}
 \sum_{a=1}^{8}\lambda_a^2=\frac{16}{3}\1_3.
\end{align}
Any single qutrit state admits the expansion
\begin{align}
 \phi=\frac{\1_3}{3}+\frac12\sum_{a=1}^{8}r_a\lambda_a,
\end{align}
with $r_a=\Tr [\phi\lambda_a]$.
Using tracelessness and trace orthogonality, we obtain
\begin{align}\label{eq:upb_bloch_purity}
 \Tr [\phi^2]
 =\frac19\Tr[\1_3]
   +\frac13\sum_{a=1}^{8}r_a\Tr [\lambda_a]
   +\frac14\sum_{a,b=1}^{8}r_ar_b \Tr [\lambda_a\lambda_b]
=\frac13+\frac12\sum_{a=1}^{8}r_a^2.
\end{align}
Combining Eqs.~\eqref{eq:sum_Gell_mann} and \eqref{eq:upb_bloch_purity} yields
\begin{align}\label{eq:single_qutrit_uncertainty_bound_transition2}
 \frac12\sum_{a=1}^{8}\Delta_\phi^2(\lambda_a)
 =\frac12\Tr [\phi\sum_{a=1}^{8}\lambda_a^2]
   -\frac12\sum_{a=1}^{8}r_a^2
 =\frac83-\left(\Tr [\phi^2]-\frac13\right)
 =3-\Tr [\phi^2],
\end{align}
Thus, the sum of the Gell-Mann variances depends only on the purity
of the state. 
Orthogonal changes of generator basis preserve this variance sum, so
\begin{align} \label{eq:upb_transformed_local_bound}
 \frac12\sum_{a=1}^{8}\Delta_\phi^2(\widetilde\lambda_a)
 =3-\Tr[\phi^2].
\end{align}

For a two-qutrit state $\rho_{AB}$, let $\rho_A\coloneqq\Tr_B[\rho_{AB}]$ and $\rho_B\coloneqq\Tr_A[\rho_{AB}]$ denote its reduced states. 
We collect the local Gell-Mann observables into the two lists
\begin{align}
    \mathbf X^A_{\mathrm{ob}}&\coloneqq\{\lambda_1\otimes\1,\ldots,\lambda_8\otimes\1\},\\
    \mathbf X^B_{\mathrm{ob}}&\coloneqq\{\1\otimes\lambda_1,\ldots,\1\otimes\lambda_8\}.
\end{align}
The real covariance matrix associated with the combined list $(\mathbf X^A_{\mathrm{ob}},\mathbf X^B_{\mathrm{ob}})$ contains an upper-right $8\times8$ block that describes the cross covariances between the two subsystems.
We denote this block by $\mathrm{Cov}_{\rho_{AB}}$, with entries
\begin{align}\label{eq:su3_covariance_matrix}
   (\mathrm{Cov}_{\rho_{AB}})_{ab}
    \coloneqq
    \Tr \left[\rho_{AB}\,\lambda_a\otimes\lambda_b\right]
    -
    \Tr [\rho_A\lambda_a]\,
    \Tr [\rho_B\lambda_b].
\end{align}

Expanding the square of $A_a(O)$ gives
\begin{align}
 A_a(O)^2
 =\lambda_a^2\otimes \1_3
  +\1_3\otimes\widetilde\lambda_a^2
  +2\lambda_a\otimes\widetilde\lambda_a.
\end{align}
Using
$\langle A_a(O)\rangle_{\rho_{AB}}
=\Tr [\rho_A\lambda_a]+\Tr [\rho_B\widetilde\lambda_a]$
and subtracting the squared mean, we obtain
\begin{align}\label{eq:upb_individual_collective_variance}
 \Delta_{\rho_{AB}}^2(A_a(O))
 &=\Delta_{\rho_A}^2(\lambda_a)
   +\Delta_{\rho_B}^2(\widetilde\lambda_a)
   +2\left(\Tr [\rho_{AB}\,\lambda_a\otimes\widetilde\lambda_a]
   -\Tr [\rho_A\lambda_a]\Tr [\rho_B\widetilde\lambda_a]\right)\\
 &=\Delta_{\rho_A}^2(\lambda_a)
   +\Delta_{\rho_B}^2(\widetilde\lambda_a)
   +2\sum_{b=1}^{8}O_{ab}(\mathrm{Cov}_{\rho_{AB}})_{ab}.
\end{align}
Applying Eqs.~\eqref{eq:single_qutrit_uncertainty_bound_transition2}
and~\eqref{eq:upb_transformed_local_bound}, we find that
\begin{align}\label{eq:upb_collective_purity_identity}
 \mathcal U_{\mathrm{coll}}(\rho_{AB};O)
 &=\frac12\sum_{a=1}^{8}\Delta_{\rho_A}^2(\lambda_a)
   +\frac12\sum_{a=1}^{8}\Delta_{\rho_B}^2(\widetilde\lambda_a)
   +\sum_{a,b=1}^{8}O_{ab}(\mathrm{Cov}_{\rho_{AB}})_{ab} \\[4pt]
 &=\bigl(3-\Tr [\rho_A^2]\bigr)
   +\bigl(3-\Tr [\rho_B^2]\bigr)
   +\Tr \left[O^{\T}\mathrm{Cov}_{\rho_{AB}}\right] \\[6pt]
 &=6-\Tr [\rho_A^2]-\Tr[\rho_B^2]
   +\Tr \left[O^{\T}\mathrm{Cov}_{\rho_{AB}}\right].
\end{align}
Here 
\begin{align}
\Tr[O^{\T}\mathrm{Cov}_{\rho_{AB}}]
=\sum_{a,b}O_{ab}(\mathrm{Cov}_{\rho_{AB}})_{ab}.
\end{align}
Thus, $O$ enters only through the cross covariance term.
To minimize this term, consider a real singular-value decomposition
\begin{align}
\mathrm{Cov}_{\rho_{AB}}=L\Sigma R^{\T},
\quad L,R\in \mathrm{O}(8), 
\end{align}
with $\Sigma=\mathrm{diag}(s_1,\ldots,s_8)$.
Here, $s_a\geqslant0$ are the singular values of $\mathrm{Cov}_{\rho_{AB}}$.
The trace norm (Schatten $1$-norm) is therefore 
\begin{align}\label{eq:normal_covriance}
 \|\mathrm{Cov}_{\rho_{AB}}\|_1
 \coloneqq
 \Tr [\sqrt{(\mathrm{Cov}_{\rho_{AB}})^{\T} \mathrm{Cov}_{\rho_{AB}}} ]
 =\Tr [\Sigma]
 =\sum_{a=1}^{8}s_a.
\end{align}
Setting $Q=L^{\T}OR\in \mathrm{O}(8)$, we obtain
\begin{align}\label{eq:trace_OC}
 \Tr[O^{\T}\mathrm{Cov}_{\rho_{AB}}]
 =\Tr[Q^{\T}\Sigma]
 =\sum_{a=1}^{8}s_aQ_{aa}
 \geqslant-\sum_{a=1}^{8}s_a.
\end{align}
Equality is attained by $Q=-\1_8$, or equivalently by $O_*=-LR^{\T}$. 
By Eqs.~\eqref{eq:normal_covriance} and \eqref{eq:trace_OC}, we have 
\begin{align}
 \min_{O\in \mathrm{O}(8)}\Tr[O^{\T}\mathrm{Cov}_{\rho_{AB}}]
 =-\|\mathrm{Cov}_{\rho_{AB}}\|_1.
\end{align}
The optimized uncertainty is therefore
\begin{align}\label{eq:optimized_collective_uncertainty2}
\mathcal U_{\mathrm{coll}}^{\mathrm{opt}}(\rho)
 =6-\Tr [\rho_A^2]-\Tr[\rho_B^2]-\|\mathrm{Cov}_{\rho_{AB}}\|_1.
\end{align}
Combining Eqs.~\eqref{eq:single_qutrit_uncertainty_bound_transition2}
and~\eqref{eq:upb_transformed_local_bound} yields the sufficient entanglement condition
\begin{align}\label{eq:general_covariance_criterion}
 \|\mathrm{Cov}_{\rho_{AB}}\|_1>2-\Tr[\rho_A^2]-\Tr[\rho_B^2].
\end{align}
In the Gell-Mann normalization used here, the corresponding separable-state bound is precisely the trace-norm consequence
(Proposition~3) of the covariance-matrix criterion in~\cite{Guhne2007Covariance}.
Thus, our optimized collective uncertainty relation provides an uncertainty-based derivation of this covariance-based entanglement condition.
Moreover, Eq.~\eqref{eq:general_covariance_criterion} implies the centered-state realignment condition in~\cite{Zhang2008},
\begin{align}
    \|\mathrm{Cov}^{AB}_\rho\|_1
    >
    2\sqrt{\bigl(1-\Tr[\rho_A^2]\bigr)\bigl(1-\Tr [\rho_B^2]\bigr)}.
\end{align}

We next apply this criterion to the noisy two-qutrit state
constructed from the Tiles UPB.
Using  $\rho_{\mathrm{UPB}}=\frac14(\1_9-\Pi)$ (see Eq.~\eqref{eq:noisy_UPB_state1}), we have
\begin{align}
\rho_p
=\frac{9-5p}{36}\1_9-\frac{1-p}{4}\Pi,
\qquad 0\leqslant p\leqslant1.
\end{align}

\begin{figure}[t]
    \centering
    \includegraphics[width=0.85\columnwidth]{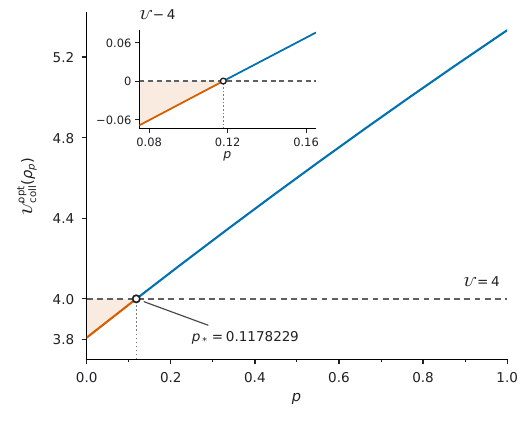}
    \caption{
    \textbf{Entanglement Detection through Lie Algebraic Uncertainty.}
    Optimized collective uncertainty $\mU_{\mathrm{coll}}^{\mathrm{opt}}(\rho_p)$ for the noisy two-qutrit UPB state $\rho_p$ as a function of the noise parameter $p$. 
    The horizontal dashed line marks the separability bound of $4$. 
    The orange part of the line lies below this bound, detecting entanglement for $0\leqslant p<p_*$, where $p_*\simeq0.1178229$ is the detection threshold marked by an open circle.
    The inset shows $\mathcal U_{\mathrm{coll}}^{\mathrm{opt}}(\rho_p)-4$ near the threshold.
    }
    \label{fig:noisy_UPB_collective_uncertainty}
\end{figure}

To evaluate its optimized collective uncertainty, we first write the projector $\Pi$ in Eq.~\eqref{eq:Pi} explicitly as a sum of tensor products of local matrices:
\begin{align}
\Pi
=&
\frac12
\begin{pmatrix}
1&0&0\\
0&0&0\\
0&0&0
\end{pmatrix}
\otimes
\begin{pmatrix}
1&-1&0\\
-1&1&0\\
0&0&0
\end{pmatrix}
+\frac12
\begin{pmatrix}
1&-1&0\\
-1&1&0\\
0&0&0
\end{pmatrix}
\otimes
\begin{pmatrix}
0&0&0\\
0&0&0\\
0&0&1
\end{pmatrix}
+\frac12
\begin{pmatrix}
0&0&0\\
0&0&0\\
0&0&1
\end{pmatrix}
\otimes
\begin{pmatrix}
0&0&0\\
0&1&-1\\
0&-1&1
\end{pmatrix}
\nonumber\\
+&\frac12
\begin{pmatrix}
0&0&0\\
0&1&-1\\
0&-1&1
\end{pmatrix}
\otimes
\begin{pmatrix}
1&0&0\\
0&0&0\\
0&0&0
\end{pmatrix}
+\frac19
\begin{pmatrix}
1&1&1\\
1&1&1\\
1&1&1
\end{pmatrix}
\otimes
\begin{pmatrix}
1&1&1\\
1&1&1\\
1&1&1
\end{pmatrix}.
\label{eq:tiles_upb_projector}
\end{align}
The noisy reduced states of $\rho_p$ are therefore
\begin{align}
 \rho_{p,A}=\rho_{p,B}
 =(1-p)\rho_{\mathrm{UPB},A}+\frac p3\1_3
 =\frac1{24}
 \begin{pmatrix}
  7+p&1-p&-2(1-p)\\
  1-p&10-2p&1-p\\
  -2(1-p)&1-p&7+p
 \end{pmatrix}.
\end{align}
Their common purity becomes
\begin{align}
 \Tr[\rho_{p,A}^2]=\Tr[\rho_{p,B}^2]
 =(1-p)^2\frac{35}{96}+\frac{2p(1-p)}{3}+\frac{p^2}{3} 
 =\frac13+\frac{(1-p)^2}{32}.
 \label{eq:upb_noisy_purity}
\end{align}
Writing $q=1-p$, the cross covariance matrix $\mathrm{Cov}^{AB}_{\rho_p}$ in the standard Gell-Mann ordering $(\lambda_1,\ldots,\lambda_8)$ (see the last column of Tab.~\ref{tab:su3_three_generator_bases}) is
\begingroup
\setlength{\arraycolsep}{3pt}
\begin{align}
    \mathrm{Cov}^{AB}_{\rho_p}
    =
    \frac{q}{576}
    \begin{pmatrix}
        -64-4q & 0 & 6q & -64+8q & 0 & -64-4q & 0
        & -(96+2q)\sqrt3\\[4pt]
        0 & 0 & 0 & 0 & 0 & 0 & 0 & 0\\[4pt]
        144+6q & 0 & 72-9q & -12q & 0 & 6q & 0
        & (-24+3q)\sqrt3\\[4pt]
        -64+8q & 0 & -12q & -64-16q & 0 & -64+8q & 0
        & 4q\sqrt3\\[4pt]
        0 & 0 & 0 & 0 & 0 & 0 & 0 & 0\\[4pt]
        -64-4q & 0 & 144+6q & -64+8q & 0 & -64-4q & 0
        & (48-2q)\sqrt3\\[4pt]
        0 & 0 & 0 & 0 & 0 & 0 & 0 & 0\\[4pt]
        (48-2q)\sqrt3 & 0 & (-24+3q)\sqrt3 & 4q\sqrt3
        & 0 & -(96+2q)\sqrt3 & 0 & 24-3q
    \end{pmatrix}.
\end{align}
\endgroup
The singular values of the cross covariance matrix  $\mathrm{Cov}^{AB}_{\rho_p}$, in nonincreasing order, are
\begin{align}
s_1
    &=\frac{q}{144}
    \left[8+3q-2D\cos\!\left(\theta+\frac{2\pi}{3}\right)\right],\\
s_2&=s_3
    =\frac{q}{2\sqrt2},\\
s_4
    &=\frac{q}{144}\left[-8-3q+2D\cos\theta\right],\\
s_5
    &=\frac{q}{144}\left[8+3q-2D\cos\!\left(\theta-\frac{2\pi}{3}\right)
    \right],\\
s_6&=s_7=s_8=0,
\end{align}
where 
 \begin{align}
    D\coloneqq\sqrt{736+9q^2}, \quad
    \theta\coloneqq
    \frac13\arccos\!\left[
        \frac{-27q^3+5616q-1664}
        {(736+9q^2)^{3/2}}
    \right].
\end{align}
Summing the singular values gives 
\begin{align}
    \left\|\mathrm{Cov}^{AB}_{\rho_p}\right\|_1
    =
    \frac{q}{\sqrt2}
    +
    \frac{q}{144}
    \left[8+3q+4D\cos\theta\right].
\end{align}
Substituting the reduced state purities and the covariance trace norm of $\rho_p$ into Eq.~\eqref{eq:optimized_collective_uncertainty2}
gives
\begin{align}
\mathcal U_{\mathrm{coll}}^{\mathrm{opt}}(\rho_p)
=
\frac{16}{3}
-\frac{1-p}{\sqrt2}
-\frac{1-p}{36}
\left[
5-3p+\sqrt{736+9(1-p)^2}\cos\theta(p)
\right],
\end{align}
where
\begin{align}
\theta(p)
\coloneqq
\frac13\arccos\!\left[
\frac{-27(1-p)^3+5616(1-p)-1664}
{\left[736+9(1-p)^2\right]^{3/2}}
\right].
\end{align}
The optimized collective uncertainty violates the separability bound for $0\leqslant p<p_*$, detecting entanglement in the noisy Tiles UPB state (see Fig.~\ref{fig:noisy_UPB_collective_uncertainty}). 
For $p\geqslant p_*$, no violation occurs, although the state may remain entangled.

Lie algebraic uncertainty bounds thus provide a direct route to entanglement detection through collective observables. 
For the tensor product representations considered here, the total Lie algebraic uncertainty decomposes into local contributions evaluated on the reduced states, with a sharp lower bound given by the sum of the individual subsystem bounds (see Thm.~\ref{thm:LAU_Sep_Many_Body}).
These local bounds determine a collective uncertainty threshold for fully separable states.
Optimizing the pairing of local observables through orthogonal basis transformations preserves this threshold and can strengthen entanglement detection, as demonstrated by the noisy two-qutrit UPB state (see Eq.~\eqref{eq:noisy_UPB_state1}), whose entanglement is detected despite its positive partial transpose.

\end{document}